\newcommand{\mr}{\mathrm}
\newcommand{\mc}{\mathcal}
\newcommand{\dd}{\partial}
\newcommand{\HH}{\mathcal{H}}
\newcommand{\ra}{\rightarrow}
\newcommand{\xra}{\xrightarrow}
\newcommand{\hra}{\hookrightarrow}
\newcommand{\CC}{\mathbb{C}}
\newcommand{\DD}{\mathcal{D}}
\newcommand{\FF}{\mathcal{F}}
\newcommand{\LL}{\mathcal{L}}
\newcommand{\bt}{\bullet}
\newcommand{\g}{\mathfrak{g}}
\newcommand{\PP}{\mathcal{P}}
\newcommand{\tr}{\mathrm{tr}\,}
\newcommand{\ZZ}{\mathbb{Z}}
\newcommand{\RR}{\mathbb{R}}
\newcommand{\Conn}{\mathrm{Conn}}
\newcommand{\GGauge}{\mathrm{Gauge}}
\newcommand{\FlatConn}{\mathrm{FlatConn}}
\newcommand{\MM}{\mathcal{M}}
\newcommand{\s}{\mathsf}
\newcommand{\til}{\widetilde}
\newcommand{\raslash}{\ra\!\!\!\!\!\!\!\slash\;\;}
\newcommand{\xraslash}[1]{\xrightarrow{#1}\!\!\!\!\!\!\!\!\!\slash\;\;\;}
\newcommand{\Sym}{\mathrm{Sym}}
\newcommand{\simra}{\stackrel{\sim}{\rightarrow}}
\newcommand{\match}{\mathrm{Matchings}}
\newcommand{\lan}{\left\langle}
\newcommand{\ran}{\right\rangle}
\newcommand{\pf}{\mathrm{pf}}
\newcommand{\proj}{\twoheadrightarrow}
\newcommand{\OO}{\mathcal{O}}
\newcommand{\even}{\mathrm{even}}
\newcommand{\odd}{\mathrm{odd}}
\newcommand{\NN}{\mathcal{N}}
\newcommand{\Ber}{\mathrm{Ber}}
\newcommand{\BER}{\mathrm{BER}}
\newcommand{\ola}{\overleftarrow}
\newcommand{\ora}{\overrightarrow}
\newcommand{\HDens}{\mathrm{Dens}^{\frac12}}
\newcommand{\BRST}{\mathrm{BRST}}
\newcommand{\BV}{\mathrm{BV}}
\newcommand{\gh}{\mathrm{gh}}
\def\acts{\mathrel{\reflectbox{$\righttoleftarrow$}}}
\newcommand{\diag}{\mathsf{diag}}
\theoremstyle{definition}
\newtheorem{remark}{Remark}[section]
\newtheorem{example}[remark]{Example}
\newtheorem{lemma}[remark]{Lemma}
\newtheorem{theorem}[remark]{Theorem}
\newtheorem{definition}[remark]{Definition}
\newtheorem{corollary}[remark]{Corollary}
\newtheorem{assumption}[remark]{Assumption}
\begin{document}
\title[ 
BV formalism and applications
]{Lectures on Batalin-Vilkovisky formalism and its applications in topological quantum field theory}

\begin{abstract}
Lecture notes for the course ``Batalin-Vilkovisky formalism and applications in topological quantum field theory''  given at the University of Notre Dame in the Fall 2016 for a mathematical audience. In these lectures we give a slow introduction to the perturbative path integral for gauge theories in Batalin-Vilkovisky formalism and the associated mathematical concepts.
\end{abstract}

\author{Pavel Mnev}
\address{
University of Notre Dame, Notre Dame, Indiana 46556, USA
}
\address{
St. Petersburg Department of V. A. Steklov Institute of Mathematics of the Russian Academy of Sciences, Fontanka 27, St. Petersburg, 191023 Russia}
\email{
pmnev @nd.edu
}

\thanks{The author acknowledges partial support of RFBR Grant No. 17-01-00283a.}

\date{\today}
\maketitle

\setcounter{tocdepth}{3} 
\tableofcontents

\allowdisplaybreaks
\section*{Preface}
The Batalin-Vilkovisky (``BV'') formalism arose in the end of 1970's/beginning of 1980's as a tool of mathematical physics designed to define the path integral for gauge theories. Since then the construction turned out to be very useful for applications in algebraic topology -- invariants of 3-manifolds and knots, Chas-Sullivan string topology, operations on rational cohomology of CW complexes. Another spectacular application of the BV formalism is Kontsevich's deformation quantization of Poisson manifolds. The general direction these applications go is via applying the BV formalism to define the path integral for particular models of topological field theory and then finding an appropriate interpretation for the value of the path integral (and proving the desired properties).

These  lectures 
were given at the University of Notre Dame in the Fall 2016 for graduate mathematical audience; a previous iteration of this course was given in the Fall 2014 in the Max Planck Institute for Mathematics, Bonn, jointly with the University of Bonn. The aim of the course was to 
give an introduction, oriented towards mathematical audience and not requiring any prior physics background, to the perturbative path integral for gauge theories (in particular, topological field theories) 
in Batalin-Vilkovisky formalism, 
and some of its 
applications.
To elucidate the picture, we were mostly focusing on finite-dimensional models for gauge systems and path integrals, while giving comments on what has to be amended in the infinite-dimensional case relevant to local field theory. Our motivating examples included the Alexandrov-Kontsevich-Schwarz-Zaboronsky sigma models; the perturbative expansion for Chern-Simons invariants of $3$-manifolds, given in terms of integrals over configurations of points on the manifold; the $BF$ theory on cellular decompositions of manifolds. 

\subsection*{Acknowledgements}
These lectures were strongly influenced by numerous inspiring discussions with Alberto S. Cattaneo, Andrei Losev and Nicolai Reshetikhin.
I am also most grateful for questions and enthusiasm to the audience of the course when it was given in the University of Notre Dame in the Fall 2016 and to the audience of its predecessor in Max Planck Institute for Mathematics in Bonn and University of Bonn in the Fall 2014.

\marginpar{\LARGE{Lecture 1, 08/24/2016.}}
\section{Introduction/motivation}

Idea of locality (in the interpretation of Atiyah-Segal): a quantum field theory (QFT) assigns some values (``partition functions'') to manifolds. It can be evaluated on manifolds and satisfies a gluing/cutting property. So, a manifold can be chopped into simple (small) pieces, then the QFT can be evaluated on those pieces and then assembled to the value of the QFT on the entire manifold.\footnote{An alternative way to describe locality in quantum field theory is provided by the language of factorization algebras \cite{CostelloGwilliam}. There, instead of cobordisms, one evaluates the theory on open subsets of the spacetime and instead of cutting an $n$-manifold into submanifolds with boundary, one considers open covers (subject to certain condition -- so-called Weiss covers).}
\subsection{Atiyah's axioms of topological quantum field theory}
An $n$-dimensional topological quantum field theory (TQFT) is the following set of data.
\begin{itemize}
	\item To a \textbf{closed  $(n-1)$-dimensional manifold} $\Sigma$, the TQFT assiociates a vector space $\mathcal{H}_\Sigma$ over $\mathbb{C}$ -- the ``space of states''.
	\item To an \textbf{$n$-manifold $M$ with  boundary} split into in- and out-parts, $\partial M= \bar{\Sigma}_\mr{in}\sqcup \Sigma_\mr{out}$ (bar refers to reversing the orientation on the in-boundary), the TQFT associates a $\mathbb{C}$-linear map $Z_M:\; \HH_{\Sigma_\mr{in}}\ra \HH_{\Sigma_\mr{out}}$ -- the ``partition function''.\footnote{Another possible name for $Z_M$ is the ``evolution operator''.}
	$$\vcenter{\hbox{\input{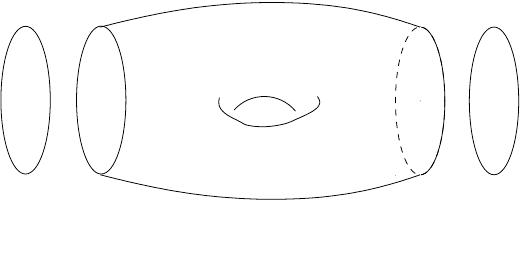tex_t}}}$$
	We call such $M$ a \textit{cobordism} between $\Sigma_\mr{in}$ and $\Sigma_\mr{out}$, and we denote $$\Sigma_\mr{in}\xra{M}\Sigma_\mr{out}$$
	\item \textbf{Diffeomorphisms of closed $(n-1)$-manifolds} act on spaces of states: to $\phi: \Sigma\ra\Sigma'$ a diffeomorphism, the TQFT associates an isomorphism $\rho(\phi): \HH_\Sigma\ra \HH_{\Sigma'}$ (in the way compatible with composition of diffeomorphisms). For $\phi$ orientation-preserving, $\rho(\phi)$ is $\CC$-linear; for $\phi$ orientation-reversing, $\rho(\phi)$ is $\CC$-anti-linear.
\end{itemize}

This set of data should satisfy the following axioms:
\begin{itemize}
	\item \textbf{Multiplicativity}: disjoint unions are mapped to tensor products. Explicitly,
	$$\HH_{\Sigma\sqcup \Sigma'}=\HH_\Sigma\otimes \HH_{\Sigma'},\qquad Z_{M\sqcup M'}=Z_M\otimes Z_{M'}$$
	\item \textbf{Gluing}: given two cobordisms $\Sigma_1\xra{M'} \Sigma_2$ and $\Sigma_2\xra{M''} \Sigma_3$, with out-boundary of the first one coinciding with the in-boundary of the second one,
	$$\vcenter{\hbox{\input{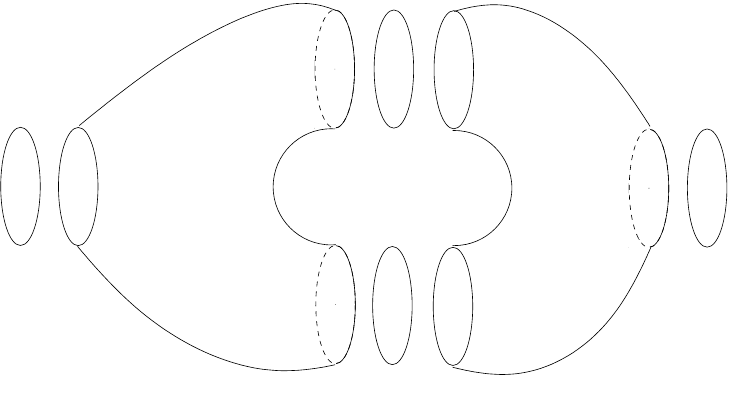tex_t}}}$$ 
	we can \textit{glue} (or ``concatenate'') them over $\Sigma_2$ to a new cobordism $M:=M'\cup_{\Sigma_2} M''$, going as $\Sigma_1\xra{M} \Sigma_3$.
	$$\vcenter{\hbox{\input{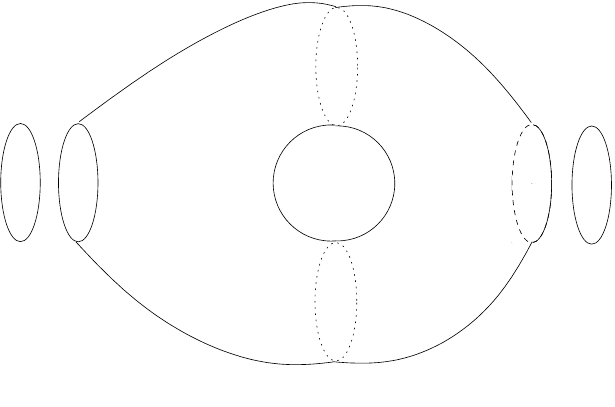tex_t}}}$$ 
	Then the partition function for $M$ is the \textit{composition} of partition functions for $M'$ and $M''$ as linear maps:
	$$\boxed{Z_M=Z_{M''}\circ Z_{M'}}:\quad \HH_{\Sigma_1}\ra \HH_{\Sigma_3}$$
	\item \textbf{Normalization}: 
	\begin{itemize}
	\item For $\varnothing$ the empty $(n-1)$-manifold, $$\HH_\varnothing=\CC$$
	\item For $\Sigma$ a closed $(n-1)$-manifold, the partion function for the \textit{cylinder} $\Sigma\xra{\Sigma\times [0,1]} \Sigma$ is the identity on $\HH_\Sigma$.
	\end{itemize}
	\item For $\phi: M\ra M'$ a diffeomorphism between two cobordisms, denote $\phi|_\mr{in}$, $\phi|_\mr{out}$ the restrictions of $\phi$ to the in- and out-boundary. We have a commutative diagram
	$$\begin{CD}
	\HH_{\Sigma_\mr{in}} @>Z_M>> \HH_{\Sigma_\mr{out}} \\
	@V\rho(\phi|_\mr{in})VV @VV\rho(\phi|_\mr{out})V \\
	 \HH_{\Sigma'_\mr{in}} @>>Z_{M'}> \HH_{\Sigma'_\mr{out}} 
	\end{CD}$$
\end{itemize}

\begin{remark}
	Atiyah's TQFT is a functor of symmetric monoidal categories, $\mr{Cob}_n\ra \mr{Vect}_\CC$,
	where the structure is as follows:
	
	\begin{tabular}{l||l|l}
		& $\mr{Cob}_n$ & $\mr{Vect}_\CC$ \\
		\hline
		objects & closed $(n-1)$-manifolds & vector spaces$/\CC$ \\
		morphisms & cobordisms  $\Sigma_\mr{in}\xra{M} \Sigma_\mr{out}$& linear maps \\
		composition & gluing & composition of maps \\
		identity morphism & cylinder $\Sigma\xra{\Sigma\times [0,1]}\Sigma$ & identity map $\mr{id}: V\ra V$ \\
		monoidal product & disjoint union $\sqcup$ & tensor product $\otimes$ \\
		monoidal unit & $\varnothing$ & $\CC$
		\end{tabular}
\end{remark}

\begin{remark}
	A \emph{closed} $n$-manifold $M$ can be viewed as a cobordism from $\varnothing$ to $\varnothing$, thus $Z_M: \CC\ra \CC$ is a multiplication by some number $z\in\CC$. By abuse of notations, we denote $Z_M:=z\in\CC$. Thus, with this convention, the partition function for a closed $n$-manifold is a complex number, invariant under diffeomorphisms and compatible with gluing-cutting. E.g., for $n=2$, we can cut any closed surface into disks and pairs of pants
	$$\vcenter{\hbox{\includegraphics[scale=1.5]{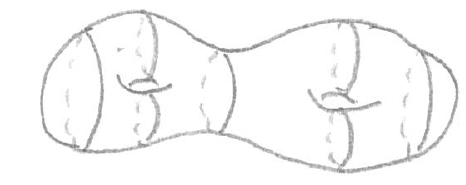}}}$$
	Thus, $Z$ for any surface can be calculated from the gluing axiom, provided that $Z$ is known for a disk and for a pair of pants. 
\end{remark}

\begin{remark}
	In Segal's approach to (not necessarily topological) quantum field theory, one allows manifolds to carry a local geometric structure (of the type depending on the particular QFT): Riemannian metric, conformal structure, complex structure, framing, local system,\ldots Atiyah's axioms above have to be modified slightly to accommodate for the geometric structure.
\end{remark}

\begin{example}[Quantum mechanics]\label{l1_exQM}
Consider the 1-dimensional Segal's QFT with geometric structure the Riemannian metric on 1-cobordisms. Objects are points with $+$ orientation, assigned a vector space $\HH$ and points with $-$ orientation, assigned the dual space $\HH^*$. Consider an interval of length $t>0$ (our partition functions depend on a metric on the interval considered modulo diffeomorphisms, thus only on the length), $I_t=[0,t]$. Denote $Z(t):=Z_{I_t} \in \mr{End}(\HH)$. By the gluing axiom (from considering the gluing $[0,t_1]\cup_{\{t_1\}} [t_1,t_1+t_2]=[0,t_1+t_2]$), we have the semi-group law $Z(t_1+t_2)=Z(t_2)\circ Z(t_1)$. It implies in turn that 
\begin{equation}\label{l1eq1}
Z(t)=Z(\frac{t}{N})^N
\end{equation} for $N$ an arbitrarily large integer. Assume that for $\tau$ small, we have $Z(\tau)=\mr{id}-\frac{i}{\hbar}\hat{H}\cdot \tau+O(\tau^2)$, for $\hat{H}\in \mr{End}(\HH)$ some operator. Then (\ref{l1eq1}) implies that
$$\boxed{Z(t)=\exp\left(-\frac{i}{\hbar}\hat{H}t\right)}$$
This system is the quantum mechanics, with $Z(t)$ the \textit{evolution operator} in time $t$ and $\hat{H}$ the Schr\"odinger operator (or \textit{quantum Hamiltonian}), describing the infinitesimal evolution of the system. 

E.g. the choice $\HH=L^2(X)$ for $X$ a Riemannian manifold and $\hat{H}=-\frac{\hbar^2}{2m}\Delta_X + U(x)\cdot$ would correspond to the quantum particle of mass $m$ moving on the manifold $X$ in the force field with potential $U$. In this case $Z(t): \psi(x)\mapsto \int_{X\ni y}dy\; Z(t;x,y)\psi(y) $ is the integral operator whose integral kernel $Z(t;x,y)$ is intepreted as the propagation amplitude of the particle from position $y$ to position $x$ in time $t$.
\end{example}

\subsection{The idea of path integral construction of quantum field theory}

\subsubsection{Classical field theory data}
We start by fixing the data of \textit{classical field theory} on an $n$-manifold:
\begin{itemize}
	\item A space of fields $F_M=\Gamma(M,\mathbb{F}_M)$ -- a space of sections of some sheaf $\mathbb{F}_M$ over $M$. Typical examples of $F_M$ are:
	\begin{itemize}
		\item $C^\infty(M)$
		\item Space of connections on a principal $G$-bundle $\mathcal{P}$ over $M$. (This exaple is typical for some of \textit{gauge theories} e.g. Chern-Simons theory, Yang-Mills theory,\ldots)
		\item Mapping space $\mr{Map}(M,N)$ with $N$ some fixed target manifold. This is typical for so-called \textit{sigma models}.
	\end{itemize}
	\item The \textit{action functional} $S_M:\; F_M\ra \mathbb{R}$ of form
	$$S_M(\phi)=\int_M L(\phi,\dd \phi,\dd^2 \phi,\ldots)$$
	where $L$ is \textit{the Lagrangian density} -- a density on $M$ depending on the value of the field $\phi\in F_M$ and its derivatives (up to fixed finite order) at the point of integration on $M$. Variational problem of extremization of $S$  (i.e. the critical point equation $\delta S=0$) leads to Euler-Lagrange PDE on $\phi$.
\end{itemize}

\begin{example}[Free massive scalar field]
	Let $(M,g)$ be a Riemannian manifold, we set $F_M=C^\infty(M)\,\ni \phi$ with the action
	$$S_M(\phi)=\int_M \left(\frac12 \langle d\phi , d\phi \rangle_{g^{-1}} + \frac{m^2}{2}\phi^2\right) d\mr{vol}$$
	Here $m\geq 0$ is a parameter of the theory -- the \textit{mass}; 
	$d\mr{vol}$ is the Riemannian volume element on $M$. The associated Euler-Lagrange equation on $\phi$ is: $(\Delta+m^2)\phi=0$.
\end{example}

\subsubsection{Idea of path integral quantization}
The idea of quantization is then to construct the partition function for $M$ a closed manifold as
\begin{equation}\label{l1PI}
Z_M(\hbar):=``\int_{F_M}\DD\phi\; e^{\frac{i}{\hbar}S_M(\phi)} \;\;"
\end{equation}
Here $\hbar$ is a parameter of the quantization (morally, $\hbar$ measures the ``distance to classical theory''); $\DD \phi$ is a symbol for a reference measure on the space $F_M$. Integral (\ref{l1PI}) is problematic to define directly as a measure-theoretic integral, however it can be defined as an asymptotic series in $\hbar\ra 0$, as we will discuss in a moment. So far, r.h.s. of (\ref{l1PI}) is a heuristic expression which is to be made mathematical sense of.

Consider $M$ with boundary $\Sigma$. Denote $B_\Sigma$ the set of boundary values of fields on $M$; we have a map of evaluation of the field at the boundary (or pullback by the inclusion $\Sigma\hra M$) $F_M\ra B_\Sigma$ sending $\phi\mapsto \phi|_\dd$. For the space of states on $\Sigma$, we set $\HH_\Sigma:=\mr{Fun}_\CC(B_\Sigma)$ -- complex-valued functions on $B_\Sigma$. For the partition function $Z_M$, we set
\begin{equation}
Z_M(\phi_\Sigma;\hbar):=\int_{\phi\in F_M\, \mr{s.t.}\, \phi|_\dd=\phi_\Sigma} \DD \phi\; e^{\frac{i}{\hbar}S_M(\phi)}
\end{equation}
This path integral gives us a function on $B_\Sigma\ni \phi_\Sigma$ and thus a vector in $Z_M(-;\hbar)\in\HH_\Sigma$.

\subsubsection{Heuristic argument for gluing}
Let a closed (for simplicity) $n$-manifold $M$ be cut by a codimension 1 submanifold $\Sigma$ into two $M'$ and $M''$, i.e. $M=M'\cup_\Sigma M''$. Then the integral (\ref{l1PI}) can be performed in steps:
\begin{enumerate}[(i)]
	\item Fix $\phi_\Sigma$ on $\Sigma$.
	\item Integrate over fields on $M'$ with boundary condition $\phi_\Sigma$ on $\Sigma$.
	\item Integrate over fields on $M''$ with boundary condition $\phi_\Sigma$ on $\Sigma$.
	\item Integrate over $\phi_\Sigma\in B_\Sigma$.
\end{enumerate}

This yields $$Z_M=\int_{B_\Sigma\ni \phi_\Sigma} \DD \phi_\Sigma\; Z_{M'}(\phi_\Sigma)\cdot Z_{M''}(\phi_\Sigma)$$
One can recognize in this formula the Atiyah-Segal gluing axiom: $M'$ and $M''$ yield two vectors in $\HH_\Sigma$ which are paired in $\HH_\Sigma$ to a number -- the partition function for the whole manifold.

\subsubsection{How to define path integrals?}
Let us first look at finite-dimensional \textit{oscillating} integrals: consider $X$ a compact manifold with $\mu$ a fixed volume form and $f\in C^\infty(X)$ a function. The asymptotics, as $\hbar\ra 0$, of the integral 
$$\int_X \mu \; e^{\frac{i}{\hbar}f(x)}$$
is given by the \textit{stationary phase formula}\footnote{See e.g. \cite{Etingof,Reshetikhin} }
\begin{equation*}
\int_X \mu \; e^{\frac{i}{\hbar}f(x)} \underset{\hbar\ra 0}{\sim}
\sum_{x_0\in \{\mr{crit.\, points\,of\,} f\}} e^{\frac{i}{\hbar}f(x_0)} |\det f''(x_0)|^{-\frac12} e^{\frac{\pi i}{4}\mr{sign}f''(x_0)}(2\pi\hbar)^{\frac{\dim X}{2}}
\end{equation*}
The rough idea here is that the rapid oscillations of the integrand cancel out except in the neighborhood  of critical points $x_0$ of $f$ (i.e. points with $df(x_0)=0$), which are the ``stationary phase points'' for the integrand -- points around which oscillations slow down.

This formula can be improved to accommodate corrections in powers of $\hbar$:
\begin{multline}\label{l1statphase}
\int_X \mu \; e^{\frac{i}{\hbar}f(x)} \underset{\hbar\ra 0}{\sim}
\sum_{x_0\in \{\mr{crit.\, points\,of\,} f\}} e^{\frac{i}{\hbar}f(x_0)} |\det f''(x_0)|^{-\frac12} e^{\frac{\pi i}{4}\mr{sign}f''(x_0)}(2\pi\hbar)^{\frac{\dim X}{2}}\cdot\\
\cdot \sum_{\Gamma} \hbar^{-\chi(\Gamma)}\Phi_\Gamma
\end{multline}
where $\Gamma$ ranges over graphs with vertices of valence $\geq 3$ (possibly disconnected, including $\Gamma=\varnothing$); $\chi(\Gamma)\leq 0$ is the Euler characteristic of the graph. Graphs $\Gamma$ are called the \textbf{Feynman diagrams}. Assume that $\Gamma$ has $E$ edges and $V$ vertices. We decorate all half-edges of $\Gamma$ with labels $i_1,\ldots,i_{2E}$ each of which can take values $1,2,\ldots,p:=\dim X$. The weight of the graph $\Gamma$,  $\Phi_\Gamma$, is defined as follows.
\begin{itemize}
	\item We assign to every edge $e$ consisting of half-edges $h_1,h_2$ the decoration $f''(x_0)^{-1}_{i_{h_1}i_{h_2}}$ -- the matrix element of the inverse Hessian given by the labels of the constituent half-edges.
	\item We assign to every vertex $v$ of valence $k$ with adjacent half-edges $h_1,\ldots,h_k$ the decoration $\dd_{i_{h_1}}\cdots \dd_{i_{h_k}}f(x_0)$ -- a $k$-th partial derivative of $f$ at the critical point.
	\item We take the product of all the decorations above and sum over all possible values of labels on the half-edges. $\Phi_\Gamma$ is this sum times the factor $\frac{{i}^{E+V}}{|\mr{Aut}(\Gamma)|}$ with $\mr{Aut}(\Gamma)$ the automorphism group of the graph.
\end{itemize}
I.e., we have
$$\Phi_\Gamma:=\frac{{i}^{E+V}}{|\mr{Aut}(\Gamma)|}\cdot \sum_{i_1,\ldots,i_{2E}\in \{1,\ldots,p\}} \prod_{\mr{edges\,}e=(h_1 h_2)}  f''(x_0)^{-1}_{i_{h_1}i_{h_2}} \cdot\prod_{\mr{vertices\,} v}\dd_{i_{h_1}}\cdots \dd_{i_{h_{\mr{val}(v)}}}f(x_0)$$

\begin{example}
	Consider the ``theta graph'' $$\vcenter{\hbox{\begin{picture}(0,0)%
\includegraphics{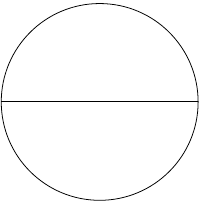}%
\end{picture}%
%
%
\setlength{\unitlength}{1973sp}%
\begingroup\makeatletter\ifx\SetFigFont\undefined%
\gdef\SetFigFont#1#2#3#4#5{%
  \reset@font\fontsize{#1}{#2pt}%
  \fontfamily{#3}\fontseries{#4}\fontshape{#5}%
  \selectfont}%
\fi\endgroup%
\begin{picture}(1914,1908)(2351,-2713)
\end{picture}%
}}$$
	(Note that its Euler characterestic is $-1$, hence it enters in (\ref{l1statphase}) in the order $\hbar^1$.) For its weight, we obtain
	$$\Phi\left(\;\vcenter{\hbox{\input{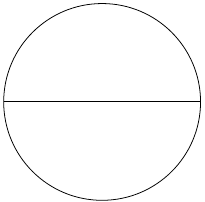tex_t}}}\;\;\right)=\frac{i^{3+2}}{12}\cdot\sum_{i,j,k,l,m,n\in \{1,\ldots,p\}} f''(x_0)^{-1}_{il}f''(x_0)^{-1}_{jm}f''(x_0)^{-1}_{kn} f'''(x_0)_{ijk}f'''(x_0)_{lmn}$$
\end{example}

Stationary phase formula (\ref{l1statphase}) replaces, in the asymptotics $\hbar\ra 0$, a measure-theoretic integral on the l.h.s. with the purely algebraic expression on the r.h.s., involving only values of derivatives of $f$ at the critical points $x_0$.

The idea then is to define the path integral (\ref{l1PI}) by formally applying the stationary phase formula, as the r.h.s. of (\ref{l1statphase}), i.e. as a series in $\hbar$ with coefficients given by weights of Feynman diagrams.

We expect that if we started with a classical field theory with $S_M$ invariant under diffeomorphisms of $M$, the partition functions $Z_M$ coming out of the path integral quantization procedure yield manifold invariants and arrange into a TQFT.

\textbf{Problem:} Stationary phase formula requires critical points of $f$ to be \textit{isolated} (more precisely, we need the Hessian of $f$ at critical points to be non-degenerate). However, diffeomorphism invariant classical field theories are \textit{gauge theories}, i.e. there is a tangential distribution $\mc{E}$ on  $F_M$ which preserves the action $S_M$ (in some examples, $\mc{E}$ corresponds to an action of a group $\mc{G}$ -- the \textit{gauge group} -- on $F_M$). Thus, critical points of $S_M$ come in $\mc{E}$-orbits and therefore are not isolated. Put another way, the Hessian of $S_M$ is degenerate in the direction of $\mc{E}$. So, the stationary phase formula cannot be applied to the path integral (\ref{l1PI}) in the case of a gauge theory.

The cure for this problem comes from using the Batalin-Vilkovisky construction.

\subsubsection{Towards Batalin-Vilkovisky (BV) formalism}

Batalin-Vilkovisky construction \cite{BV1,BV2} replaces the classical field theory package $F, S$ with a new package consisting of:
\begin{itemize}
	\item A $\mathbb{Z}$-graded supermanifold $\FF$ (the ``space of BV fields'') endowed with odd-symplectic structure $\omega$ of internal degree $-1$.
	\item A function $S_{BV}$ on $\FF$ -- the ``master action'', satisfying the ``master equation''
	$$\{S_{BV},S_{BV}\}=0$$
	In particular, this implies that the corresponding Hamiltonian vector field $Q=\{S_{BV},\bt\}$ is \textit{cohomological}, i.e. satisfies $Q^2=0$. Thus, $Q$  endows $C^\infty(\FF)$ with the structure of a cochain complex. In other words, $(\FF,Q)$ is a \textit{differential graded (dg) manifold}. 
\end{itemize}

The idea is then to replace 
\begin{equation}\label{l1BV_replacement}
\int_{F} e^{\frac{i}{\hbar}S} \ra \int_{\LL\subset \FF} e^{\frac{i}{\hbar}S_{BV}}
\end{equation}
with $\LL\subset \FF$ a Lagrangian submanifold w.r.t. the odd-symplectic structure $\omega$.

The integral on the l.h.s. of (\ref{l1BV_replacement}) is ill-defined (by means of stationary phase formula) in the case of a gauge theory whereas the integral on the r.h.s. is well-defined, for a good choice of Lagrangian submanifold $\LL\subset \FF$ and moreover is invariant under deformations of $\LL$.

\begin{remark}\footnote{See \cite{Stasheff}.}
Space $\FF$ is constructed, roughly speaking, as Spec of a two-sided resolution of $C^\infty(F)$ construted out of
\begin{itemize}
	\item Chevalley-Eilenberg resolution for the subspace of gauge-invariant functions of fields $C^\infty(F)^\mc{G}$ and
	\item Koszul-Tate resolution for functions on the space of solutions of Euler-Lagrange equations $C^\infty(EL\subset F)$.
\end{itemize}
So, coordinates on $\FF$ of nonzero degree arise as either Chevalley-Eilenberg generators (in positive degree) or Koszul-Tate generators (in negative degree). In particular, this is the reason why $\FF$ has to be a supermanifold (since C-E and K-T generators anti-commute).
\end{remark}

\begin{remark}
In the case of a gauge field theory, one could try to remedy the problem of degenerate critical points in the path integral by passing to the integral over the quotient, $\int_F\ra \int_{F/\mc{G}}$. The latter may indeed have nondegenerate critical points. But the issue is then that we know how to make sense of Feynman diagrams for the path integral over the space of sections of a sheaf over $M$, but the quotient $F/\mc{G}$ would not be of this type. In this sense, one may think of the r.h.s. of (\ref{l1BV_replacement}) as a resolution of the integral over a quotient $F/\mc{G}$ by an integral over a locally free object -- the space of sections of a sheaf over $M$.
\end{remark}

\begin{remark}
	There are finite-dimensional cases when l.h.s. of (\ref{l1BV_replacement}) exists as a measure-theoretic integral (despite having non-isolated critical points). Then, under certain assumptions, 
	one has a comparison theorem that l.h.s. and r.h.s. of (\ref{l1BV_replacement}) coincide. We will return to this when talking about Faddeev-Popov construction and how it embeds into BV.
\end{remark}

\subsection{Tentative program of the course}\marginpar{\LARGE{Lecture 2, 08/29/2016.}}
\begin{itemize}
	\item Classical Chern-Simons theory.
	\item Feynman diagrams (in the context of finite-dimensional integrals) 
	\cite{Etingof,Reshetikhin}.
	\begin{itemize}
		\item Stationary phase formula.
		\item Wick's lemma for moments of a Gaussian integral. Perturbed Gaussian integral.
		\item Berezin integral over an odd vector space \cite{Losev}.
		Feynman diagrams for integrals over a super vector space.
	\end{itemize}
	\item Introduction to BV formalism:
	\begin{itemize}
		\item ($\mathbb{Z}$-graded) supergeometry: odd-symplectic geometry (after \cite{Schwarz}), dg manifolds (partly after \cite{AKSZ}), integration on supermanifolds.
		\item BV Laplacian, classical and quantum master equation (CME and QME).
		\item $\frac12$-densities on odd-symplectic manifolds, BV integrals, fiber BV integral as a pushforward of solutions of quantum master equation (\cite{DiscrBF,CM}).
		\item BV as a solution to the problem of gauge-fixing: Faddeev-Popov construction, BRST (as a homological algebra interpretation of Faddeev-Popov), BV (as a ``doubling'' of BRST). Reference: e.g. \cite{DiscrBF}.
	\end{itemize}
	\item AKSZ (Alexandrov-Kontsevich-Schwarz-Zaboronsky) construction \cite{AKSZ}.
\end{itemize}

\textbf{Applications:}
\begin{enumerate}[(I)]
	\item A topological quantum field theory (not in Atiyah sense, but in the sense of compatibility with cellular subdivisions/aggregations) on CW complexes $X$ -- cellular non-abelian $BF$ theory \cite{DiscrBF,CMRcell}. 
	$$\vcenter{\hbox{\input{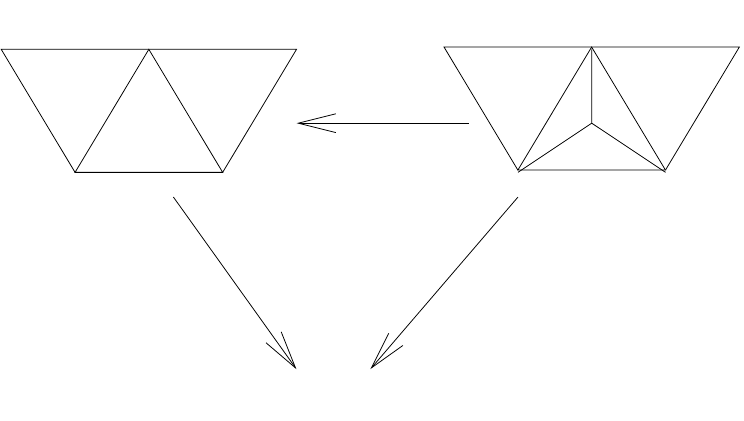tex_t}}}$$ 
	Here a CW complex $X$ gets assigned a BV package -- a space of fields comprised of cellular cochains and chains twisted by a $G$-local system $E$, $\FF_X=C^\bt(X,E)\oplus C_\bt(X,E^*)$ (with certain homological degree shifts which we omitted here); $G$ is a fixed Lie group -- the structure group of the theory. $\FF_X$ carries a natural odd-symplectic structure (coming from pairing chains with cochains). The action is given as a sum, over cells $e\subset X$ of all dimensions, of certain universal local building blocks $\bar{S}_e$ depending only on combinatorial type of the cell and on values of fields restricted to the cell. 
	
	One calculates certain invariant $\psi(X)$ of $X$ by pushing forward the BV package to the (cellular) cohomology of $X$, via a \textit{finite-dimensional} fiber BV integral. If $X'$ is a cellular subdivision of $X$ (then we say that $X$ is an ``aggregation'' of $X'$), the pushforward of the BV package on $X'$ to $X$ yields back the package on $X$, and for the invariant one has $\psi(X')=\psi(X)$. 
More precisely, one gets a simple-homotopy invariant of CW complexes.
	
We will also discuss here:
	\begin{itemize}
		\item Solutions of the QME vs. infinity algebras  (relevant case for this model: unimodular $L_\infty$ algebras). Fiber BV integral as homotopy transfer of infinity algebras. Feynman diagrams from homological perturbation theory.
		\item 	Relation to rational homotopy type, to formal geometry (neighborhoods of singularities) of the moduli space $\mathcal{M}_{X,G}$ of local systems on $X$, 
		to behavior of the $R$-torsion near the singularities of $\mathcal{M}_{X,G}$.
	\end{itemize}
	\item Perturbative Chern-Simons theory (after Axelrod-Singer \cite{AS1,AS2}). Perturbative invariants of 3-manifolds $M$ given in terms of integrals over Fulton-MacPherson-Axelrod-Singer compactifications of configuration spaces of $n$  distinct points on $M$.
	\item \label{l2_III}\marginpar{\textbf{A posteriori comment:} we did not have time to discuss applications (\ref{l2_III}) and (\ref{l2_IV}) in the end.}
	Kontsevich's deformation quantization of Poisson manifolds $(M,\pi)$ \cite{Kontsevich_Feynman}, partly following \cite{CF}. Here the problem is to costruct a family (parameterized by $\hbar$) of associative non-commutative deformations of the pointwise product on $C^\infty(M)$, of the form
	\begin{equation}\label{l2_star}
	f*_\hbar g(x)=f\cdot g(x)-\frac{i\hbar}{2}\{f,g\}_\pi+\sum_{n\geq 2} (i\hbar)^n B_n(f,g)(x)	\end{equation}
	where $B_n$ are some bi-differential operators (of some order depending on $n$). The idea of the construction (following \cite{CF}) is to write the star-product as as path integral representing certain expectation value for a 2-dimensional topological field theory (the \textit{Poisson sigma model}) on a disk $D$, with two observables placed on the boundary, at points $0$ and $1$:
	\begin{equation}\label{l2_star_via_PI}
	f*_\hbar g(x_0)=\int_{X(\infty)=x_0, \eta_{\dd D}=0} \DD X\, \DD\eta\quad e^{\frac{i}{\hbar}S_{PSM}(X,\eta)} f(X(0))\cdot g(X(1))
	\end{equation}
	Here the fields $X,\eta$ are the base and fiber components of a bundle map
	$$\begin{CD}
	TD @>\eta >> T^*M\\
	@VVV @VVV\\
	D @>>X> M
	\end{CD}$$
	and the action is: $S_{PSM}=\int_D \langle \eta,dX \rangle+\frac12 \langle X^*\pi, \eta\wedge \eta \rangle$. This action possesses a rather complicated gauge symmetry (given by a non-integrable distribution on the space of fields) and one needs BV to make sense of the integral (\ref{l2_star_via_PI}). The final result is the explicit construction of operators $B_n$ in (\ref{l2_star}) in terms of integrals over compactified configuration spaces of points on the 2-disk $D$.
	\item \label{l2_IV} BV formalism for field theories on manifolds with boundary, with Atiyah-Segal's gluing/cutting -- ``BV-BFV formalism'' \cite{CMR,CMRpert} (a very short survey in \cite{CMR_ICMP}). Examples:
	\begin{itemize}
		\item Non-abelian $BF$ theory on cobordisms endowed with CW decomposition \cite{CMRcell}.
		\item AKSZ theories on manifolds with boundary.
	\end{itemize}
\end{enumerate}

\section{Classical Chern-Simons theory}

\subsection{Chern-Simons theory on a closed $3$-manifold}
Let, for simplicity, $G=SU(2)$ (we will comment on generalization to other Lie groups later) and let $M$ be a closed oriented 3-manifold. Let $\PP$ be the trivial $G$-bundle over $M$.

\subsubsection{Fields} We define the space of fields to be the space of principal connections on $\PP$. 
Since $\PP$ is trivial, we can use the trivialization to identify connections with $\g$-valued $1$-forms on $M$  (by pulling back the connection 1-form $\mc{A}\in \Omega^1(\PP,\g)$ on the total space of $\PP$ to $M$ by the trivializing section $\sigma:M\ra \PP$). Here $\g$ is the Lie algebra of $G$, i.e. in our case $\g=\mathfrak{su}(2)$. So, we have $F_M=\mr{Conn}_{M,G}\simeq \Omega^1(M,\g)$.

\subsubsection{Action} We define the action functional on $F_M$ as
$$S_{CS}(A):=\int_M \tr \frac12 A\wedge dA + \frac13 A\wedge A\wedge A$$
with $A\in \Omega^1(M,\g)$ a connection $1$-form in fundamental representation of $\mathfrak{su}(2)$. 
\begin{remark}
It can be instructive to rewrite the action as $\int_M \tr \frac12 A\wedge dA + \frac16 A\wedge [A, A]$ where $[,]$ is the (super-)Lie bracket on the differential graded Lie algebra of $\g$-valued forms, $\Omega^\bt(M,\g)$; here $[A,A]$ is simply $A\wedge A+A\wedge A$. But this rewriting exhibits denominators $1/2!$, $1/3!$ and suggests that there might be some ``homotopy Chern-Simons'' action associated to infinity algebras where higher terms would appear, which is indeed correct \cite{CM}.
\end{remark}

\subsubsection{Euler-Lagrange equation}
Let us calculate the variation of the action:
$$\delta S_{CS}=\int_{M}\tr \frac12 \delta A\wedge dA +\frac12 A\wedge d\delta A+ \delta A\wedge A\wedge A = \int_{M}\tr \delta A\wedge (\underbrace{dA+A\wedge A}_{\mr{curvature}\; F_A})$$
Here in the second equality we used integration by parts to remove $d$ from $\delta A$. Note that the coefficient of $\delta A$ in the final expression is the curvature $2$-form of the connection $A$, $F_A=dA+A\wedge A=A+\frac12 [A,A]\in \Omega^2(M,\g)$. Thus, the Euler-Lagrange equation $\delta S_{CS}=0$ (the critical point equation for $S_{CS}$) reads
$$\boxed{F_A=0}$$
-- flatness condition on the connection.

\subsubsection{Gauge symmetry}\label{sss: CS gauge invariance}
For any group-valued  map $g:M\ra G$ and a connection $A\in \Omega^1(M,\g)$, we define the \textit{gauge transformation} as mapping
\begin{equation}\label{l2_gauge_transf}
A \quad \mapsto \quad \boxed{A^g:=g^{-1}A g+g^{-1}dg}
\end{equation}
This defines a (right) action of the gauge group $\mr{Gauge}_{M,G}=\mr{Map}(M,G)$ on $F_M=\mr{Conn}_{M,G}$.

One can understand the transformation formula (\ref{l2_gauge_transf}) as the effect of a change of trivialization of the principal bundle $\PP$: assume that the connection $1$-form on total space $\mc{A}\in \Omega^1(\PP,\g)$ is fixed but we are given two different trivializations $\sigma,\sigma': M\ra \PP$ with $\sigma'=\sigma\cdot g$. Then, the corresponding $1$-forms on the base, $A_\sigma=\sigma^* \mc{A}$ and $A_{\sigma'}=(\sigma')^* \mc{A}$ are related by (\ref{l2_gauge_transf}).

Alternatively, one can interpret (\ref{l2_gauge_transf}) as the action of a bundle automorphism 
\begin{equation}\label{l2_S_gauge_transf}
\begin{CD} \PP @>{\cdot g}>{\simeq}> \PP \\ @VVV @VVV \\ M @= M \end{CD}
\end{equation}
on a connection.

Note that $A^g$ is flat iff $A$ is flat.

Chern-Simons action changes under the gauge transformation (\ref{l2_gauge_transf}) as
$$S_{CS}(A^g)-S_{CS}(A)=-\frac16 \int_M\tr (g^{-1}dg)^{\wedge 3}$$ 
where $(g^{-1}dg)^{\wedge 3}=(g^{-1}dg)\wedge (g^{-1}dg)\wedge (g^{-1}dg)$ is a 3-form on $M$ with coefficients in matrices (endomorphisms of the space where $\g$ is represented).

Recall that for $G\subset U(N)$ 
a simple 
compact group, one has the \textit{Cartan $3$-form}
$$\theta=-\frac{1}{24 \pi^2}\tr\,(g^{-1}dg)^{\wedge 3} \in \Omega^3(G) $$
-- a closed $G$-invariant form on $G$ with integral periods representing the generator of $H^3(G,\mathbb{Z})\simeq \ZZ$. In particular, for $G=SU(2)$, $\theta$ is the volume form on $SU(2)$ viewed as the 3-sphere, normalized to have total volume $1$.

Therefore, (\ref{l2_S_gauge_transf}) implies the following
\begin{lemma}[Gauge (in)dependence of Chern-Simons action]\label{l2_lem}
$$\frac{1}{4\pi^2}\left(S_{CS}(A^g)-S_{CS}(A)\right)=\int_M g^*\theta = \langle [M], g^*[\theta] \rangle\quad \in \ZZ$$
\end{lemma}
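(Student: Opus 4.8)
The plan is to assemble the lemma from ingredients that are essentially already on the table: the explicit variation formula $S_{CS}(A^g)-S_{CS}(A)=-\tfrac{1}{6}\int_M \tr(g^{-1}dg)^{\wedge 3}$, the definition of the Cartan $3$-form $\theta=-\tfrac{1}{24\pi^2}\tr(g^{-1}dg)^{\wedge 3}$, and the standard fact that $\theta$ represents the integral generator of $H^3(G,\ZZ)$. First I would record the elementary algebraic identity: dividing the variation formula by $4\pi^2$ gives
\[
\frac{1}{4\pi^2}\bigl(S_{CS}(A^g)-S_{CS}(A)\bigr)=-\frac{1}{24\pi^2}\int_M \tr(g^{-1}dg)^{\wedge 3}=\int_M g^*\theta,
\]
where the last equality is just the observation that the Maurer-Cartan $3$-form $-\tfrac{1}{24\pi^2}\tr(g^{-1}dg)^{\wedge 3}$ on $G$ pulls back under $g\colon M\to G$ to the integrand on the left (pullback commutes with wedge and with $\tr$, and $g^*(g^{-1}dg)$ is the pullback of the left-invariant Maurer-Cartan form). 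Then $\int_M g^*\theta=\langle [M], g^*[\theta]\rangle$ is the definition of integration of a closed form against the fundamental class, using that $\theta$ is closed so $g^*\theta$ is closed and defines a class $g^*[\theta]\in H^3(M;\RR)$.

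The only genuine content is the final claim that this number is an integer. For this I would invoke that $[\theta]\in H^3(G;\RR)$ is the image of a generator of $H^3(G;\ZZ)\cong\ZZ$ under the change-of-coefficients map — this is exactly the cited normalization property of the Cartan form, and for $G=SU(2)\cong S^3$ it is the statement that $\theta$ is the unit-volume form, whose class generates $H^3(S^3;\ZZ)$. Naturality of the cap/pairing then gives $\langle[M],g^*[\theta]\rangle=\langle g_*[M],[\theta]\rangle$, and since $g_*[M]\in H_3(G;\ZZ)$ while $[\theta]$ is an integral class, the pairing lands in $\ZZ$. Concretely, $H_3(SU(2);\ZZ)\cong\ZZ$ is generated by the fundamental class $[SU(2)]$, so $g_*[M]=d\cdot[SU(2)]$ for some integer $d$ (the "degree" of $g$), and the pairing equals $d$.

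I expect the main obstacle — really the only point requiring care rather than bookkeeping — to be justifying the integrality of the period $\int_{S^3}\theta$, i.e. that the Cartan $3$-form is normalized to have unit integral over $SU(2)$ and that its class is the integral generator. For $SU(2)$ this is a direct computation: identify $SU(2)$ with the unit quaternions / $S^3$, compute $\tr(g^{-1}dg)^{\wedge 3}$ in, say, Euler-angle-type coordinates, and check that $-\tfrac{1}{24\pi^2}\int_{SU(2)}\tr(g^{-1}dg)^{\wedge 3}=1$; the normalizing constant $24\pi^2$ is precisely engineered so this holds. For general simple compact $G$ one instead cites the classical result (Bott, Cartan) that $-\tfrac{1}{24\pi^2}\tr(g^{-1}dg)^{\wedge 3}$ generates $H^3(G;\ZZ)$, which reduces to the $SU(2)$ case via a suitable embedding $SU(2)\hookrightarrow G$ inducing an isomorphism on $H^3$. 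Everything else — closedness of $\theta$, bi-invariance, compatibility of pullback with the pairing — is routine and I would not dwell on it.
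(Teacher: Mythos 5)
Your proposal is correct and follows exactly the route the paper takes (implicitly, since the lemma is presented as an immediate consequence of the displayed variation formula $S_{CS}(A^g)-S_{CS}(A)=-\tfrac16\int_M\tr(g^{-1}dg)^{\wedge 3}$ together with the stated normalization and integrality of the Cartan form $\theta$). Your extra care about why the period is an integer --- pushing forward to $\langle g_*[M],[\theta]\rangle$ and identifying it with the degree of $g$ for $G=SU(2)$ --- just fills in details the paper leaves to the cited classical facts.
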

Note that, for $G=SU(2)$, the r.h.s. is simply the degree of the map $g: M\ra SU(2)\sim S^3$.

Thus, $S_{CS}(A)$ is invariant under infinitesimal gauge transformations; more precisely, it is invariant under $\mr{Gauge}_{M,G}^0\subset \mr{Gauge}_{M,G}$ -- the connected component of trivial transformation $g=1$ in $\mr{Gauge}_{M,G}$. However, under a general gauge transformation $S_{CS}(A)$ can change by an integer multiple of $4\pi^2$.

\marginpar{\LARGE{Lecture 3, 08/31/2016.}}
Introduce a function 
\begin{equation}\label{l3_psi_k}
\boxed{\psi_k(A):=e^{\frac{ik}{2\pi}S_{CS}(A)}}
\end{equation}
with $k\in \ZZ$ a parameter -- the ``level'' of Chern-Simons theory. By Lemma \ref{l2_lem}, $\psi_k$ is a $\GGauge_{M,G}$-invariant function on $F_M=\Conn_{M,G}$. In particular, we can regard $\psi_k$ as a function on the quotient $\Conn_{M,G}/\GGauge_{M,G}$.

\subsubsection{Chern-Simons invariant on the moduli space of flat connections}
Restriction of the function $\psi_k$ to flat connections yields a \textit{locally constant} function on the quotient 
$$\MM_{M,G}=\FlatConn_{M,G}/\GGauge_{M,G}=\frac{\{A\in \Omega^1(M,\g)\;\;\mr{s.t.}\;\; dA+\frac12[A,A]=0\}}{A\sim g^{-1}Ag+g^{-1}dg\quad\quad \forall g:M\ra G}$$
-- the \textit{moduli space of flat connections}. The locally constant property of $\psi_k$ on the moduli space follows immediately from the fact that flat connections solve the Euler-Lagrange equation $\delta S_{CS}=0$. 

Recall that $\MM_{M,G}$ can be identified\footnote{The identification goes via mapping a flat connection $A$ to a map associating to based loops $\gamma$ on $M$ the holonomy of $A$ around $\gamma$. 
Flatness of $A$ implies that this map on loops descends to homotopy classes of loops and implies the group homomorphism property of the map. Final quotient by $G$ corresponds to quotienting out the changes of trivialization of the fiber of $\PP$ over the base point.
} 
with $\mr{Hom}(\pi_1(M),G)/G$ -- the space of group homomorphisms $\pi_1(M)\ra G$, modulo action of $G$ on such homomorphisms by conjugation on the target $G$.\footnote{The identification $\MM_{M,G}\simeq \mr{Hom}(\pi_1(M),G)/G$ is true for $M$ of arbitrary dimension, if one allows flat connections in all -- possibly non-trivial -- $G$-bundles over $M$. Thus, $\MM_{M,G}$ is in fact the moduli space of flat bundles, rather than just flat connections in a trivial bundle.}

Moduli space $\MM_{M,G}$ is typically disconnected and $\psi_k$ can take different values on different connected components.

\begin{example} Take $G=SU(2)$ and take $M$ to be a \textit{lens space}:
$$M=L(p,q):=\frac{\{(z_1,z_2)\in \CC^2\;\; \mr{s.t.}\;\; |z_1|^2+|z_2|^2=1\}}{(z_1,z_2)\sim (\zeta\cdot z_1,\zeta^q\cdot z_2)}\sim S^3/\ZZ_p$$
where $\zeta=e^{\frac{2\pi i}{p}}$ the $p$-th root of unity; we assume that $(p,q)$ are coprime (otherwise $L(p,q)$ is not a smooth manifold).
\end{example}

The moduli space $\MM_{M,G}$ is the space of elements of order $p$ in $SU(2)$ considered modulo conjugation. Thus, $\MM_{M,G}$ consists of $\left[\frac{p+1}{2}\right]$ isolated points corresponding to classes of flat connections $[A]_0,\ldots, [A]_{\left[\frac{p-1}{2}\right]}$ where class $[A]_r$ has the holonomy around the loop $\gamma$, representing the generator of $\pi_1(M)=\ZZ_p$, of the form 
$$\mr{hol}_{\gamma}[A]_r=\left(\begin{array}{ll} e^{\frac{2\pi i r}{p}} & 0 \\ 0 & e^{-\frac{2\pi i r}{p}} \end{array} \right)\in SU(2)$$
We consider consider $r$ as defined $\bmod\; p$, and moreover $r$ and $-r$ correspond to conjugate elements in $SU(2)$. Therefore choices $r\in \{0,1,\ldots,\left[\frac{p-1}{2}\right] \}$ do indeed exhaust all distinct points of $\MM_{M,G}$.

The value of the function $\psi_k$ (\ref{l3_psi_k}) on the point $[A]_r\in \MM_{M,G}$ is:
$$\psi_k([A]_r)=e^{\frac{2\pi ik q^* r^2}{p}}$$
(This is the result of a non-trivial calculation.)
Here $q^*$ is the residue mod $p$ reciprocal to $q$, i.e. defined by $q^*q=1\bmod\; p$. In particular, the set of values of $\psi_k$ on $\MM_{M,G}$ distinguishes between non-homotopic lens spaces, e.g.  distinguishes between $L(5,1)$ and $L(5,2)$.

\subsubsection{Remark: more general $G$}
We can allow $G$ to be any connected, simply-connected, simple, compact Lie group (e.g. $G=SU(N)$) without having to change anything. 

We can also allow $G$ to be semi-simple, $G=G_1\times\cdots \times G_n$ with $G_n$ the simple factors -- the corresponding Chern-Simons theory is effectively a collection of $n$ mutually non-interacting Chern-Simons theories for groups $G_1,\ldots,G_n$. In this case we can introduce independent levels $k_1,\ldots,k_n\in\ZZ$ for different factors.

The assumption that $\pi_0(G)$ and $\pi_1(G)$ are trivial is crucial. By a result of W. Browder, 1961, $\pi_2(G)$ is trivial for any finite-dimensional Lie group (in fact, even for any finite-dimensional $H$-space). Thus, under our assumptions $G$ is $2$-connected and the classifying space $BG$ is $3$-connected. Therefore, for $M$ of dimension $\leq 3$, $[M,BG]=*$ -- all classifying maps are homotopically trivial. Thus \textit{a $G$-bundle $\PP$ over $M$ has to be trivial}. And then we can globally identify connections in $\PP$ with $\g$-valued $1$-forms and can make sense of Chern-Simons action. However, if either $\pi_0(G)$ or $\pi_1(G)$ is nontrivial, then there can exist non-trivial $G$-bundles (and one has to allow connections in all possible $G$-bundles as valid fields for the theory, if one wants ultimately to construct a field theory compatible with gluing/cutting). In this case special techniques are needed to construct $S_{CS}$ (e.g. by defining the action on patches where the bundle is trivialized and then gluing the patches while taking into account the corrections arising from the change of trivialization on overlaps). In particular, for $G=U(1)$,  $S_{CS}$ is constructed in terms of  Deligne cohomology.


\subsubsection{Relation to the second Chern class}
We assume again that $G=SU(2)$ (or, more generally, any simply-connected subgroup of $U(N)$).

\textbf{Fact:} any closed oriented $3$-manifold $M$ is \textit{null-cobordant}, i.e. there exists a $4$-manifold $N$ with boundary $\dd N=M$.

As before, let $\PP$ be the trivial $G$-bundle over $M$ and let $\Tilde\PP$ be the trivial $G$-bundle over $N$

\begin{lemma}\label{l3_lem}
Let $A\in \Omega^1(M,\g)$ be a connection in $\PP$ and $a\in \Omega^1(N,\g)$ its extension to a connection in $\Tilde\PP$ (i.e. the pullback by the inclusion of the boundary $\iota:M\hookrightarrow N$ is $a|_M:=\iota^*a=A$). Then we have
\begin{equation}
S_{CS}(A)=\frac12\int_N \tr  F_a\wedge F_a
\end{equation}
where $F_a=da+\frac12 [a,a]\in \Omega^2(N,\g)$ is the curvature of $a$.
\end{lemma}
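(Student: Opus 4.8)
The plan is to invoke the standard Chern--Simons transgression identity: on the $4$-manifold $N$ the Chern--Weil $4$-form $\tr(F_a\wedge F_a)$ is exact, with a primitive equal to twice the Chern--Simons $3$-form of $a$; the lemma is then immediate from Stokes' theorem. Conceptually this is exactly the statement that $\tfrac{1}{8\pi^2}\tr(F_a\wedge F_a)$ is the Chern--Weil representative of the second Chern class of the $G$-bundle on $N$, which is why the identity belongs in this subsection; but the proof needs only transgression and Stokes.

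\emph{Step 1: the transgression identity on $N$.} I would first establish, on all of $N$,
\[ \tr(F_a\wedge F_a)\;=\;d\,\tr\!\left(a\wedge da+\tfrac23\,a\wedge a\wedge a\right). \]
This is a direct computation. Expanding $F_a=da+a\wedge a$ gives $\tr(F_a\wedge F_a)=\tr(da\wedge da)+2\,\tr(a\wedge a\wedge da)+\tr((a\wedge a)\wedge(a\wedge a))$, where the two cross terms were combined using graded cyclicity of the trace; the last term vanishes because cycling the leftmost degree-$1$ factor $a$ past the remaining degree-$3$ form $a\wedge a\wedge a$ produces a sign $(-1)^{3}=-1$, so $\tr(a^{\wedge4})=-\tr(a^{\wedge4})=0$. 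On the right-hand side, $d^2=0$ gives $d\,\tr(a\wedge da)=\tr(da\wedge da)$, while expanding $d(a\wedge a\wedge a)$ into its three terms and again using graded cyclicity of the trace (the middle term $a\wedge da\wedge a$ picks up a sign and matches the outer ones) yields $d\,\tr(a^{\wedge3})=3\,\tr(a\wedge a\wedge da)$. Matching the two sides gives the identity.

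\emph{Step 2: Stokes' theorem and restriction to the boundary.} Integrating over $N$ and using $\partial N=M$,
\[ \frac12\int_N\tr(F_a\wedge F_a)\;=\;\frac12\int_{M}\iota^*\,\tr\!\left(a\wedge da+\tfrac23\,a\wedge a\wedge a\right), \]
where $\iota:M\hra N$ is the inclusion of the boundary. Since $\iota^*a=A$ by hypothesis and pullback commutes with $d$, so $\iota^*(da)=dA$, the boundary integrand equals $\tr\!\left(A\wedge dA+\tfrac23 A\wedge A\wedge A\right)$; distributing the prefactor $\tfrac12$ turns the right-hand side into $\int_M\tr\!\left(\tfrac12 A\wedge dA+\tfrac13 A\wedge A\wedge A\right)=S_{CS}(A)$, as claimed.

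The argument is entirely routine; the only places demanding a little care are the sign bookkeeping in the graded-cyclicity manipulations of the trace in Step 1, and ensuring the boundary orientation on $M=\partial N$ used in Stokes is the one fixed in the statement (it only affects an overall sign, which is absorbed into the convention). I expect no essential obstacle.
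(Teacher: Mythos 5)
Your proof is correct and follows essentially the same route as the paper's: expand $\tr(F_a\wedge F_a)$, show it equals the exterior derivative of (twice) the Chern--Simons $3$-form using graded cyclicity of the trace and the vanishing of $\tr\,a^{\wedge 4}$, then apply Stokes' theorem and restrict to the boundary. The sign bookkeeping in your Step 1 checks out.
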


\begin{proof}
Indeed, we have
\begin{equation}\label{l3_e1}
d\, \tr(\frac12 a\wedge da+\frac13 a\wedge a\wedge a)=\tr(\frac12 da\wedge da +da\wedge a\wedge a)
\end{equation}
and 
\begin{equation}\label{l3_e2}
\tr \frac12 F_a\wedge F_a= \tr \frac12(da+a\wedge a)\wedge (da+a\wedge a)=\tr(\frac12 da\wedge da+da\wedge a\wedge a+a\wedge a\wedge a\wedge a)
\end{equation}
Note that the last term on the r.h.s. vanishes under trace: $\tr a^{\wedge 4}=\tr a\wedge a^{\wedge 3}=-\tr a^{\wedge 3}\wedge a=-\tr a^{\wedge 4}$, hence $\tr a^{\wedge 4}=0$. Thus, (\ref{l3_e1})=(\ref{l3_e2}) and the statement follows by Stokes' theorem.
\end{proof}

Let $N_+$, $N_-$ be two copies of $N$ (with $N_-$ carrying the opposite orientation). Let $\bar N=N_+\cup_M N_-$ be the closed $4$-manifold obtained by gluing $N_+$ and $N_-$ along $M$. 

Fix $g: M\ra G$ and construct a (generally, non-trivial) $G$-bundle $\bar\PP_g$ over $\bar{N}$ which is trivial over $N_+$ and $N_-$ and has transition function $g$ on the tubular neighborhood of $M\subset \bar{N}$.

Let $A$ be some connection on $M$; let $a_+$ be its extension over $N_+$ and let $a_-$ be an extension of the \textit{gauge transformed} connection $A^g=g^{-1}A g+g^{-1}dg$ over $N_-$. The pair $(a_+,a_-)$ defines a connection $\bar{a}$ in $\bar\PP_g$.

By Lemma \ref{l3_lem}, we have
\begin{multline}\label{l3_e3}
\frac{1}{8\pi^2}\int_{\bar{N}}\tr F_{\bar{a}}\wedge F_{\bar{a}}=\frac{1}{8\pi^2}\int_{N_+\cup N_-}\tr F_{\bar{a}}\wedge F_{\bar{a}}=\frac{1}{8\pi^2} \left(\int_N \tr F_{a_+}\wedge F_{a_+}- \int_N \tr F_{a_-}\wedge F_{a_-} \right)\\
=\frac{1}{4\pi^2}(S_{CS}(A)-S_{CS}(A^g))
\end{multline}

\textbf{Input from Chern-Weil theory.} Recall that for $\s{P}$ a $G$-bundle over $\s{M}$ (with $\s{M}$ of arbitrary dimension  and with $G$ a subgroup of $U(N)$), for $\s{A}$ an arbitrary connection in $\s{P}$, the closed $4$-form
\begin{equation}\label{l3_Chern-Weil_4-form}
\frac{1}{8\pi^2}\tr F_{\s{A}}\wedge F_{\s{A}} \in \Omega^4(\s{M})_\mr{closed}
\end{equation}
represents the image of the second Chern class of $\s{P}$,\footnote{More precisely, this is the second Chern class of the associated vector bundle $\s{P}\times_G \CC^N$.} $c_2(\s{P})\in H^4(M,\ZZ)$ in de Rham cohomology $H^4(M,\mathbb{R})$. In particular, $4$-form (\ref{l3_Chern-Weil_4-form}) has integral periods independent of $\s{A}$.

We conclude that the gauge transformation property of Chern-Simons action can be expressed in terms of characteristic classes for $G$-bundles on $4$-manifolds as follows.
\begin{lemma}
$$
\frac{1}{4\pi^2}(S_{CS}(A^g)-S_{CS}(A))=\langle [\bar{N}], c_2(\bar{P}_g) \rangle\quad \in\ZZ
$$
\end{lemma}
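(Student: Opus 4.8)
The plan is to combine the already-established identity (\ref{l3_e3}) with the Chern--Weil description of the second Chern class recalled just above the statement, so that almost nothing new has to be done.

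First I would note that (\ref{l3_e3}), obtained by applying Lemma \ref{l3_lem} to the two copies $N_\pm$ and using that $N_-$ carries the reversed orientation, already yields
$$\frac{1}{4\pi^2}\big(S_{CS}(A)-S_{CS}(A^g)\big)=\frac{1}{8\pi^2}\int_{\bar N}\tr F_{\bar a}\wedge F_{\bar a},$$
where $\bar a=(a_+,a_-)$ is the connection in $\bar\PP_g$ glued from the extension $a_+$ of $A$ over $N_+$ and the extension $a_-$ of $A^g$ over $N_-$. Hence the only remaining content is to identify the right-hand side with $\langle[\bar N],c_2(\bar\PP_g)\rangle$ (and to fix the overall sign, which is pinned down by the orientation convention on $\bar N$ and by the choice of which copy carries the gauge-transformed connection).

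Next I would invoke the Chern--Weil input: for any connection $\s{A}$ in a $G$-bundle $\s{P}$ over a manifold $\s{M}$ (with $G\subset U(N)$), the closed $4$-form $\frac{1}{8\pi^2}\tr F_{\s{A}}\wedge F_{\s{A}}$ of (\ref{l3_Chern-Weil_4-form}) represents the image of $c_2(\s{P})\in H^4(\s{M},\ZZ)$ in de Rham cohomology, independently of $\s{A}$. Applying this with $\s{M}=\bar N$, $\s{P}=\bar\PP_g$, $\s{A}=\bar a$, and pairing the resulting class with the fundamental class $[\bar N]$ of the closed oriented $4$-manifold $\bar N$, one gets
$$\frac{1}{8\pi^2}\int_{\bar N}\tr F_{\bar a}\wedge F_{\bar a}=\langle[\bar N],c_2(\bar\PP_g)\rangle.$$
Since $c_2(\bar\PP_g)$ is an integral class, this pairing lies in $\ZZ$, which simultaneously establishes the integrality claim and explains a posteriori why the auxiliary data ($N$, the extensions $a_\pm$, the tubular neighborhood used to build $\bar\PP_g$) are irrelevant: the left-hand side depends only on $A$ and $g$ on $M$, while the right-hand side depends only on the isomorphism class of $\bar\PP_g$.

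I do not expect any substantive obstacle: the real work was Lemma \ref{l3_lem} and the assembly of (\ref{l3_e3}), plus the classical Chern--Weil statement. The only point demanding care is the bookkeeping of signs and orientations --- checking that ``$\bar N=N_+\cup_M N_-$ with $N_-$ reversed'' together with ``$a_-$ extends $A^g$'' produces exactly $\langle[\bar N],c_2(\bar\PP_g)\rangle$ rather than its negative; should the naive convention give the opposite sign, one reverses the orientation of $\bar N$ (equivalently, swaps the roles of $N_\pm$), which is the natural normalization making this lemma consistent with Lemma \ref{l2_lem}, i.e. with $\langle[M],g^*[\theta]\rangle=\langle[\bar N],c_2(\bar\PP_g)\rangle$.
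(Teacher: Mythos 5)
Your proposal is correct and is exactly the argument the paper intends: the lemma is stated as an immediate consequence of the identity (\ref{l3_e3}) together with the Chern--Weil fact that $\frac{1}{8\pi^2}\tr F_{\bar{a}}\wedge F_{\bar{a}}$ represents $c_2(\bar\PP_g)$ and hence has integral pairing with $[\bar N]$. Your remark about the residual sign is also apt, since (\ref{l3_e3}) as written produces $S_{CS}(A)-S_{CS}(A^g)$ while the lemma states the opposite difference, a discrepancy absorbed precisely by the orientation convention on $\bar N$ that you identify.
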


\subsection{Chern-Simons theory on manifolds with boundary}

Let now $M$ be an oriented 3-manifold with boundary $\dd M=\Sigma$ - a closed surface, or several closed surfaces.

As in the case of $M$ closed, fields are connections on $M$ and the action is unchanged, $S_{CS}(A)=\int_M\tr \frac12 A\wedge dA+\frac16 A\wedge [A,A]$.

\subsubsection{Phase space}
We define the \textit{phase space} $\Phi_\Sigma$ associated to the boundary $\Sigma$ as the space of pullbacks of fields (connections) on $M$ to the boundary. Thus, $\Phi_\Sigma=\Conn_{\Sigma,G}$ -- connections on $\Sigma$,  and we have a natural projection from fields on $M$ to the boundary phase space 
\begin{equation}\label{l3_pi}
\begin{CD} F_M=\Conn_{M,G} \\ @V{\pi=\iota^*}VV \\ \Phi_\Sigma=\Conn_{\Sigma,G} \end{CD}
\end{equation}
-- the pullback by the inclusion of the boundary $\iota: \Sigma\hookrightarrow M$.

\subsubsection{$\delta S$, Euler-Lagrange equations}
Let us calculate $\delta S$. Now we will interpret $\delta$ as the exterior derivative on the space of fields, i.e. $\delta S \in \Omega^1(F_M)$ is a $1$-form on fields and one can contract it with a tangent vector $v\in T_A F_M\simeq \Omega^1(M,\g)$ to produce a number. This is a (marginally) different interpretation from $\delta$ as a variation in variational calculus; the computations are the same but sign conventions are affected as now we treat $\delta$ as an odd operator.

Note that now we have two de Rham differentials: $d$ -- the ``geometric'' de Rham operator on $M$ (or $\Sigma$) and the ``field'' de Rham operator $\delta$ on $F_M$ (resp. $\Phi_\Sigma$).

The computation is as follows:
\begin{multline}\label{l3_deltaS}
\delta S= \int_M\tr\left( -\frac12 \delta A\wedge d A -\frac12 A\wedge d\delta A-\frac12 \delta A\wedge [A,A]\right)\\
=\underbrace{-\int_M\tr \delta A\wedge F_A}_{\mr{``bulk\; term"}}+\underbrace{\int_\Sigma \tr\frac12 A|_\Sigma\wedge \delta A|_\Sigma}_{\mr{``boundary\; term"}}
\end{multline}
Here we used Stokes' theorem to remove $d$ from $\delta A$, and, unlike in the computation for $M$ closed, a boundary term appeared as a result.

Euler-Lagrange equation read off from the first term in the r.h.s. of (\ref{l3_deltaS}) -- the equation that $\langle \delta S,v\rangle=0$ for a field variation $v\in \Omega(M,\g)$ \textit{supported away from the boundary} -- is 
\begin{equation}\label{l3_EL}
F_A=0
\end{equation}
-- the flatness equation, as for $M$ closed.

\subsubsection{Noether $1$-form, symplectic structure on the phase space}
We interpret the boundary term in the r.h.s. of (\ref{l3_deltaS}) as $\pi^*\alpha_\Sigma$ -- the pullback by the projection (\ref{l3_pi}) of the \textit{Noether $1$-form}  on the phase space $\alpha_\Sigma\in \Omega^1(\Phi_\Sigma)$ defined as
$$\alpha_\Sigma = \int_\Sigma \tr \frac12 A_\Sigma\wedge \delta A_\Sigma$$
I.e., for $A_\Sigma\in \Conn_{\Sigma,G}$ a fixed connection on the boundary and for $v\in T_{A_\Sigma}\Phi_\Sigma\simeq \Omega^1(\Sigma,\g)$ a tangent vector (``a variation of boundary field''), we have 
$$\iota_v \alpha_\Sigma = -\int_\Sigma \tr \frac12 A_\Sigma\wedge v\quad \in \RR$$ (symbol $\iota_v$ stands for the contraction with a vector or vector field).

The exterior derivative of the $1$-form $\alpha_\Sigma$ yields a $2$-form 
\begin{equation}
\omega_\Sigma:=\delta \alpha_\Sigma=\int_\Sigma \tr \frac12 \delta A_\Sigma \wedge \delta A_\Sigma\quad \in \Omega^2(\Phi_\Sigma)
\end{equation}
In particular, for $u,v\in T_{A_\Sigma}\Phi_\Sigma\simeq \Omega^1(M,\g)$ a pair of tangent vectors, we have
$$\iota_u \iota_v\omega_\Sigma = \int_\Sigma \tr\; u\wedge v\quad \in\RR$$

The $2$-form $\omega_\Sigma$ is closed by construction. Also, it is weakly non-degenerate (in the sense that the induced sharp-map $\omega^\#: T\Phi_\Sigma \ra T^* \Phi_\Sigma$ is \textit{injective}). Thus, $\omega_\Sigma$ defines a symplectic structure on $\Phi_\Sigma$, viewed as an infinite-dimensional (Fr\'echet) manifold.

\subsubsection{``Cauchy subspace''}
We define the \textit{Cauchy\footnote{Or ``constraint'' or ``coisotropic'' (see below).} subspace} $C_\Sigma\subset \Phi_\Sigma$ as the subspace of fields on the boundary which can be extended to a neighborhood of the boundary, $\Sigma\times [0,\epsilon)\subset M$, as solutions to Euler-Lagrange equations.\footnote{Thus, a ``Cauchy subspace'' -- space of valid (in the sense of guaranteeing existence of a solution) initial data on $\Sigma\times\{0\}$ for the Cauchy problem for Euler-Lagrange equations on $\Sigma\times [0,\epsilon)$.}

For Chern-Simons theory, this means that $C_\Sigma$ is comprised of connections on $\Sigma$ which can be extended to flat connections on $\Sigma\times [0,\epsilon )$. Thus, $C_\Sigma=\FlatConn_{\Sigma,G}\subset \Conn_{\Sigma,G}$ is simply the space of all flat connections on $\Sigma$. 

\marginpar{\LARGE{Lecture 4, 09/05/2016.}}
Recall that, a vector subspace $U$ of a symplectic vector space $(V,\omega)$ is called
\begin{itemize}
\item \textit{isotropic} if $U\subset U^\perp$, with $U^\perp=\{w\in V\;\mr{s.t.}\;  \omega(w,u)=0\;\forall u\in U\}$ -- the symplectic orthogonal complement of $U$ (equivalently, $U\subset (V,\omega)$ is isotropic if $\omega$ vanishes on pairs of vectors from $U$);
\item \textit{coisotropic} if $U^\perp\subset U$;
\item \textit{Lagrangian} if $U=U^\perp$.
\end{itemize}
Similarly, a submanifold $N\subset (\Phi,\omega)$ of a symplectic manifold is \textit{isotropic/coisotropic/Lagrangian} if, for any point $x\in N$, the tangent space $T_x N$ is a isotropic/coisotropic/Lagrangian subspace in $(T_x \Phi,\omega_x)$.

Recall that, for $C\subset (\Phi,\omega)$ a coisotropic submanifold, the \textit{characteristic distribution} is defined as 
$(T C)^\perp \subset TC$ -- a subbundle of the tangent bundle of $C$ assigning to $x\in C$ a subspace $(T_x C)^\perp$ in $T_x C$. This distribution is integrable (by Frobenius theorem and $d\omega=0$) and thus induces a foliation of $C$ by the leaves of charactersitic foliation. We denote $\underline{C}$ the corresponding space of leaves (the ``coisotropic reduction'' of $C$). The reduction $\underline{C}$ inherits a symplectic structure $\underline\omega$ characterized by $p^*\underline\omega=\omega|_C$ where $p: C\ra\underline{C}$ is the quotient map.\footnote{Put another way, forgetting about the ambient symplectic manifold, $(C,\omega|_C)$ is itself a \textit{pre-symplectic manifold}, i.e. one equipped with a \textit{pre-symplectic structure} -- a closed $2$-form which can be degenerate but its kernel is required to be a subbundle of the tangent bundle $TC$ (in particular, is required to have constant rank). From this point of view, $\underline{C}$ is the space of leaves of the kernel of pre-symplectic structure $\ker \omega|_C\subset TC$.}

\begin{lemma}\begin{enumerate}[(i)]
\item \label{l4_lm1_i} The submanifold $C_\Sigma\subset \Phi_\Sigma$ is coisotropic.
\item \label{l4_lm1_ii} The characteristic distribution $(TC_\Sigma)^\perp$ on $C_\Sigma$ is given by infinitesimal gauge transformations.
\end{enumerate}
\end{lemma}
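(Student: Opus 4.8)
The plan is to compute everything at the level of tangent spaces at a fixed flat connection $A \in C_\Sigma = \FlatConn_{\Sigma,G}$, using the explicit form of $\omega_\Sigma$. First I would identify $T_A \Phi_\Sigma \simeq \Omega^1(\Sigma,\g)$ and recall from the computation of $\omega_\Sigma$ that for $u,v \in \Omega^1(\Sigma,\g)$ one has $\iota_u\iota_v\omega_\Sigma = \int_\Sigma \tr\, u\wedge v$. The key observation is that the flatness equation $F_A = dA + \frac12[A,A] = 0$ is preserved to first order exactly by those $v$ satisfying the linearized equation $d_A v := dv + [A,v] = 0$; hence $T_A C_\Sigma = \ker\big(d_A : \Omega^1(\Sigma,\g) \to \Omega^2(\Sigma,\g)\big)$, i.e. the space of $d_A$-closed $\g$-valued $1$-forms. (One should check that the ``extends to a flat connection on $\Sigma\times[0,\epsilon)$'' condition really does cut out precisely $\ker d_A$ at the tangent level; this is a small Cauchy-Kovalevskaya / formal power series argument in the normal coordinate, where the connection component transverse to $\Sigma$ plays the role of the free initial datum and the remaining flatness equations become evolution equations in $\epsilon$.)

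For part (\ref{l4_lm1_i}), I would show $(T_A C_\Sigma)^\perp \subset T_A C_\Sigma$. A vector $w\in\Omega^1(\Sigma,\g)$ lies in $(T_A C_\Sigma)^\perp$ iff $\int_\Sigma \tr\, w\wedge v = 0$ for all $d_A$-closed $v$. Pairing with $v = d_A \xi$ for arbitrary $\xi\in\Omega^0(\Sigma,\g)$ (which is $d_A$-closed since $d_A^2 = 0$ by flatness of $A$) and integrating by parts, $\int_\Sigma \tr\, w\wedge d_A\xi = \pm\int_\Sigma \tr\, (d_A w)\wedge \xi$ (no boundary term as $\Sigma$ is closed), so the vanishing for all $\xi$ forces $d_A w = 0$, i.e. $w \in T_A C_\Sigma$. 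This is precisely coisotropy. Part (\ref{l4_lm1_ii}) then comes from the same computation read in reverse: the infinitesimal gauge transformation of $A$ generated by $\xi: \Sigma\to\g$ is $\delta_\xi A = d_A\xi$ (differentiating $A\mapsto A^g$ at $g=1$), and these span exactly the image of $d_A: \Omega^0 \to \Omega^1$. By the argument above, $\mathrm{im}(d_A)\subset (T_A C_\Sigma)^\perp$; for the reverse inclusion I would invoke Hodge theory for the twisted de Rham complex $(\Omega^\bt(\Sigma,\g), d_A)$ on the closed surface $\Sigma$: nondegeneracy of the pairing $\int_\Sigma\tr\,\cdot\wedge\cdot$ on $\Omega^1$ together with the Hodge decomposition $\ker d_A = \mathrm{im}(d_A)\oplus \HH^1$ and Poincaré duality identifying the annihilator of $\ker d_A$ inside $\ker d_A$ with $\mathrm{im}(d_A)$. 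Thus $(T_A C_\Sigma)^\perp = \mathrm{im}(d_A) = \{$infinitesimal gauge transformations$\}$.

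The main obstacle, and the step deserving the most care, is the tangent-space description of $C_\Sigma$: justifying that $T_A C_\Sigma = \ker d_A$ rather than something smaller, i.e. that every $d_A$-closed $1$-form on $\Sigma$ is genuinely the restriction of a first-order variation of a flat connection near the boundary. In finite dimensions this would be an implicit function theorem argument; here, in the Fréchet setting, one argues instead by constructing the extension order-by-order in the collar parameter (the obstruction to continuing at each order lands in $\Omega^2(\Sigma,\g)$ and is killed because one is free to choose the normal component of the connection), which is why the ``formal neighborhood'' phrasing in the definition of $C_\Sigma$ is convenient. A secondary subtlety is the appeal to Hodge theory / Poincaré duality for the twisted complex to get the reverse inclusion in (\ref{l4_lm1_ii}); alternatively one can sidestep harmonic representatives and argue directly that any $w$ pairing trivially with all of $\ker d_A$ must, after subtracting a suitable $d_A\xi$, pair trivially with all closed forms and hence be $d_A$-exact, using that the intersection pairing on $H^1(\Sigma,\g)$ is nondegenerate.
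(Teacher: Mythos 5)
Your proof is correct, and it reaches the same two identifications the paper uses ($T_{A_\Sigma}C_\Sigma=\ker d_{A_\Sigma}$ on $\Omega^1(\Sigma,\g)$ and $(T_{A_\Sigma}C_\Sigma)^\perp=\mathrm{im}\,d_{A_\Sigma}$), but by a somewhat different route for the orthogonal complement. The paper computes $(T_{A_\Sigma}C_\Sigma)^\perp$ in a single Hodge-theoretic step: it fixes a metric, substitutes $\beta=*\gamma$ to convert the symplectic pairing into the positive-definite Hodge inner product, and reads off from the Hodge decomposition that the orthogonal complement of the $d_{A_\Sigma}$-closed forms is the $d^*_{A_\Sigma}$-exact ones, whence $(T_{A_\Sigma}C_\Sigma)^\perp=*\,\Omega^1_{d^*\text{-exact}}=\Omega^1_{d_{A_\Sigma}\text{-exact}}$; both (i) and (ii) then drop out at once. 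You instead split the work: coisotropy follows from a metric-free integration-by-parts argument (testing only against the exact forms $d_{A_\Sigma}\xi$ forces $d_{A_\Sigma}w=0$), which is arguably more elementary than the paper's part (i) and is in the same spirit as the paper's later Stokes argument for isotropy of $L_{M,\Sigma}$; the reverse inclusion $(T_{A_\Sigma}C_\Sigma)^\perp\subset \mathrm{im}\,d_{A_\Sigma}$ is then supplied by nondegeneracy of the Poincar\'e pairing on the twisted $H^1$ (for which one should note that nondegeneracy of $\tr$ on $\g$ is being used, as it is implicitly throughout the paper). Your extra care about whether $T_{A_\Sigma}C_\Sigma$ really is all of $\ker d_{A_\Sigma}$ is welcome but can be short-circuited: the paper has already observed that $C_\Sigma=\FlatConn_{\Sigma,G}$ is simply the space of all flat connections on $\Sigma$ (any flat connection on $\Sigma$ extends to the collar by pullback), so one only needs the standard linearization of the flatness equation on $\Sigma$ itself.
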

\begin{proof}
Fix $A_\Sigma\in C_\Sigma$ a flat connection on $\Sigma$. The tangent space $T_{A_\Sigma}C_\Sigma$ is the space of first order deformations of $A_\Sigma$ as a flat connection. For the curvature of a generic small deformation of $A_\Sigma$, we have $F_{A_\Sigma+\epsilon\cdot \alpha}=\epsilon\cdot \underbrace{d_{A_\Sigma}\alpha}_{=:\ d\alpha+[A_\Sigma,\alpha]} +O(\epsilon^2)$ for a deformation $\alpha \in \Omega^1(\Sigma,\g)$ and $\epsilon\ra 0$ a small deformation parameter. Hence, 
$$T_{A_\Sigma}C_\Sigma=\{\alpha\in \Omega^1(\Sigma,\g)\;\;\mr{s.t.}\;\; d_{A_\Sigma}\alpha=0\}= \Omega^1(\Sigma,\g)_{d_{A_\Sigma}\mr{-closed}}$$
Let us calculate the symplectic orthogonal:
\begin{equation}\label{l4_e1}
(T_{A_\Sigma}C_\Sigma)^\perp=\{\beta\in \Omega^1(\Sigma,\g)\;\; \mr{s.t.} \;\;\int_\Sigma \tr\alpha\wedge \beta = 0\quad  \forall \alpha \in \Omega^1(\Sigma,\g)_{d_{A_\Sigma}\mr{-closed}}\}
\end{equation}
Let us put a metric on $\Sigma$ and let $*$ be the corresponding Hodge star operator. We then continue (\ref{l4_e1}) making a change $\beta=*\gamma$:
\begin{equation}
(T_{A_\Sigma}C_\Sigma)^\perp= *\{\gamma \in \Omega^1(\Sigma,\g)\;\; \mr{s.t.} \;\;(\alpha,\gamma)= 0\quad  \forall \alpha \in \Omega^1(\Sigma,\g)_{d_{A_\Sigma}\mr{-closed}}\}
\end{equation}
where $(\alpha,\gamma)=\int_\Sigma \tr \alpha\wedge \gamma$ is the positive definite Hodge inner product on $\Omega^\bt(\Sigma,\g)$. By Hodge decomposition theorem, we have
$$\Omega^\bt(\Sigma,\g)=\underbrace{\Omega^\bt(\Sigma,\g)_{d_{A_\Sigma}\mr{-exact}}\oplus \Omega^\bt(\Sigma,\g)_\mr{harmonic}}_{\Omega^\bt(\Sigma,\g)_{d_{A_\Sigma}\mr{-closed}}} \oplus\Omega^\bt(\Sigma,\g)_{d^*_{A_\Sigma}\mr{-exact}}$$
Thus, the orthogonal complement of $\Omega^1(\Sigma,\g)_{d_{A_\Sigma}\mr{-closed}}$ w.r.t. Hodge inner product is $\Omega^1(\Sigma,\g)_{d^*_{A_\Sigma}\mr{-exact}}$. Therefore,
\begin{equation}\label{l4_e2}
(T_{A_\Sigma}C_\Sigma)^\perp=*\left(\Omega^\bt(\Sigma,\g)_{d^*_{A_\Sigma}\mr{-exact}}\right)=\Omega^1(\Sigma,\g)_{d_{A_\Sigma}\mr{-exact}}
\end{equation}
Since exact forms are a subspace of closed forms,  we have $(T_{A_\Sigma}C_\Sigma)^\perp\subset T_{A_\Sigma}C_\Sigma$ which proves item (\ref{l4_lm1_i}) -- coisotropicity of $C_\Sigma$.

Infinitesimal gauge transformations are the action of the Lie algebra $\mr{gauge}_{\Sigma}=\mr{Lie}(\GGauge_{\Sigma})\simeq \mr{Map}(\Sigma,\g)$ by vector fields on $\Conn_{\Sigma}$; this infinitesimal action arises from considering the action of a path of gauge transformations, $g_t\in \GGauge_{\Sigma}$ with $t\in [0,\epsilon)$, starting at $g_{t=0}=1$, on a connection $A_\Sigma$ and taking the derivative at $t=0$. Thus the gauge transformation formula $$g\in \GGauge_{\Sigma}\quad \mapsto\quad  (A_\Sigma\mapsto A_\Sigma^g=g^{-1}A_\Sigma g+g^{-1}dg)\in \mr{Diff}(\Conn_{\Sigma})$$ implies that infinitesimal gauge transformations are given by
\begin{equation}\label{l4_inf_gt}
\gamma\in \mr{gauge}_{\Sigma}\quad \mapsto \quad (A_\Sigma\mapsto \underbrace{d_{A_\Sigma} \gamma}_{\in T_{A_\Sigma}\Conn_\Sigma}) \in \mathfrak{X}(\Conn_\Sigma) 
\end{equation}

Note that, fixing $A_\Sigma$ and varying $\gamma$ in (\ref{l4_inf_gt}), we obtain the subspace 
$$\{d_{A_\Sigma}\gamma \;|\; \gamma\in \Omega^0(\Sigma,\g)\}\subset T_{A_\Sigma}\Conn_\Sigma$$ which coincides with the value  (\ref{l4_e2}) of the characteristic distribution on $C_\Sigma$ at $A_\Sigma\in C_\Sigma$. This proves item (\ref{l4_lm1_ii}).
\end{proof}

\subsubsection{$L_{M,\Sigma}$}
Let $EL_M=\FlatConn_M$ be the space of solutions of Euler-Lagrange equation on $M$ -- the space of flat connections, and let 
$L_{M,\Sigma}:=\pi(EL_M)\subset \Phi_\Sigma$ be the set of boundary values of flat connections on $M$. Since a solution of E-L equation on $M$ is in particular a solution of E-L equation on the neighborhood of $\Sigma$, we have $$L_{M,\Sigma}\subset C_\Sigma \subset \Phi_\Sigma$$

\begin{remark}[\textbf{Aside on the evolution relation in classical mechanics.}] Consider a classical mechanical system in Hamiltonian formalism as a 1-dimensional field theory on an interval. It assigns to a point with $+$ orientation a phase space $\Phi$ (a symplectic manifold $(\Phi,\omega)$) and to a point with $-$ orientation the same space with the opposite sign of symplectic structure, $\bar\Phi$ (i.e. $(\Phi,-\omega)$). To an interval $[t_0,t_1]$ it assigns $L=\pi(EL_{[t_0,t_1]})\subset \bar\Phi\times\Phi$; $L$ consists of pairs of (initial state, final state) related by time evolution of the system from time $t_0$ to time $t_1$. In the case of a non-degenerate classical system, any point in $\Phi_{t_0}$ defines a unique solution for the Cauchy problem for E-L equation and evaluating it at $t=t_1$ we obtain an evolution map $U_{[t_0,t_1]}:\; \Phi_{t_0} \ra \Phi_{t_1}$ which is a symplectomorphism (since the equations of motion are Hamiltonian), and then $L=\mr{graph}\; U_{[t_0,t_1]}$. Being a graph of a symplectomorphism, $L\subset \bar\Phi_{t_0}\times \Phi_{t_1}$ is Lagrangian. One can think of $L$ as a set-theoretic relation between $\Phi_{t_0}$ and $\Phi_{t_1}$ with additional Lagrangian property (such relations are called ``canonical relations''). Since $L$ encodes the time evolution of the system (or ``dynamics''), it deserves a name of the ``evolution relation''  or ``dynamic relation''.
\end{remark}

Now we are back to Chern-Simons.
\begin{lemma}\label{l4_lm2}
$L_{M,\Sigma}\subset \Phi_\Sigma$ is isotropic.
\end{lemma}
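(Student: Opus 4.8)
The plan is to show that the symplectic form $\omega_\Sigma$ vanishes on pairs of tangent vectors to $L_{M,\Sigma}$, using Stokes' theorem on the bulk manifold $M$ together with the fact that flat connections solve the Euler--Lagrange equations. First I would fix a flat connection $A$ on $M$ and identify the tangent space $T_A EL_M$: as in the proof of the previous lemma (but now on $M$ rather than on $\Sigma$), the first-order deformations of $A$ that remain flat are $d_A$-closed forms, i.e. $T_A EL_M = \{\alpha\in\Omega^1(M,\g)\;\mr{s.t.}\; d_A\alpha = 0\}$. Under the projection $\pi=\iota^*$ of \eqref{l3_pi}, a tangent vector $\alpha\in T_A EL_M$ maps to $\alpha|_\Sigma = \iota^*\alpha \in T_{A|_\Sigma}\Phi_\Sigma$. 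So a generic pair of tangent vectors to $L_{M,\Sigma}$ at a point $\pi(A)$ can be written as $(u,v) = (\alpha|_\Sigma, \beta|_\Sigma)$ with $\alpha,\beta\in\Omega^1(M,\g)$ both $d_A$-closed.

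Next I would compute the pairing $\iota_u\iota_v\omega_\Sigma = \int_\Sigma \tr\, u\wedge v = \int_\Sigma \tr\, \alpha|_\Sigma\wedge\beta|_\Sigma$ and rewrite the boundary integral as a bulk integral via Stokes: $\int_\Sigma \tr\,\iota^*(\alpha\wedge\beta) = \int_M d\,\tr(\alpha\wedge\beta)$. The key identity is that $d\,\tr(\alpha\wedge\beta) = \tr(d_A\alpha\wedge\beta) - \tr(\alpha\wedge d_A\beta)$, since the connection terms cancel: $\tr([A,\alpha]\wedge\beta) - \tr(\alpha\wedge[A,\beta]) = 0$ by the ad-invariance (cyclicity) of the trace together with the appropriate sign bookkeeping for $1$-forms. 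Because $\alpha$ and $\beta$ are both $d_A$-closed, both terms on the right vanish, hence $\int_\Sigma\tr\,\alpha|_\Sigma\wedge\beta|_\Sigma = 0$, so $\omega_\Sigma$ restricted to $L_{M,\Sigma}$ is zero and $L_{M,\Sigma}$ is isotropic.

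One technical caveat I would address: since $L_{M,\Sigma}$ need not be a smooth submanifold (the map $\pi|_{EL_M}$ can have singularities, and flatness imposes the kind of quadratic constraints that produce non-smooth moduli), the statement ``$L_{M,\Sigma}$ is isotropic'' should be read at the level of tangent cones, or restricted to the smooth locus; the computation above is exactly the verification that $\omega_\Sigma$ annihilates $d\pi(T_A EL_M)$ for every flat $A$, which is the content one actually wants. I expect the main obstacle to be precisely this bookkeeping of what ``tangent space to $L_{M,\Sigma}$'' means, plus getting the signs right in the covariant Leibniz identity for $\tr(\alpha\wedge\beta)$ when $\alpha,\beta$ are odd-degree forms; the analytic heart of the argument — Stokes plus Hodge-free vanishing of $d_A$-closed contributions — is short and essentially the same manipulation that appeared in the proof of Lemma~\ref{l3_lem}.
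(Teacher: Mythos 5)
Your proof is correct and follows essentially the same route as the paper: you extend tangent vectors of $L_{M,\Sigma}$ to $d_{\tilde A}$-closed $1$-forms on the bulk, apply Stokes' theorem, and use the vanishing of $\tr[\tilde A,\bullet]$ to trade $d$ for $d_{\tilde A}$ under the trace. The smoothness caveat you add is a sensible refinement but does not change the argument, which matches the paper's proof of Lemma~\ref{l4_lm2}.
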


\begin{proof}
Let $A_\Sigma\in L_{M,\Sigma}$ be the boundary value of a flat connection $\Tilde A$ on $M$. The tangent space to $L_{M,\Sigma}$ is $$T_{A_\Sigma}L_{M,\Sigma}=\{\alpha\in \Omega^{1}(\Sigma,\g)\;\;\mr{s.t.}\;\;\alpha=\Tilde\alpha|_\Sigma\;\mr{for\;some}\;\Tilde\alpha\in \Omega^1(\Sigma,\g)_{d_{\Tilde A}\mr{-closed}}\}$$
Thus, for $\alpha,\beta\in T_{A_\Sigma}L_{M,\Sigma}$, we have
$$\omega_\Sigma(\alpha,\beta)=\int_\Sigma\tr \alpha\wedge\beta \underset{\mr{Stokes'}}{=}\int_M\tr (d_{\Tilde A}\Tilde\alpha\wedge \Tilde\beta-\Tilde\alpha\wedge d_{\Tilde A}\Tilde\beta) =0 $$
(Note that replacing $d\ra d_{\Tilde A}$ under trace is an innocent operation, as $\tr [\Tilde A,\bullet]=0$.) Thus, $\omega_\Sigma$ vanishes on $L_{M,\Sigma}$, which is the isotropic property.
\end{proof}

\subsubsection{Reduction of the boundary structure by gauge transformations}
Let $\underline{C}_\Sigma=C_\Sigma/\GGauge_\Sigma$ be the coisotropic reduction of $C_\Sigma$ (by definition, this is the space of leaves of characterisitc distribution on $C_\Sigma$) -- the space of classes of flat connections on $\Sigma$ module gauge transformations. Thus, $$\underline{C}_\Sigma=\MM_\Sigma\simeq \mr{Hom}(\pi_1(\Sigma),G)/G$$ is the moduli space of flat connections on $\Sigma$.

Note that the tangent space to the moduli space is
$$T_{[A_\Sigma]}\MM_\Sigma=\frac{T_{A_\Sigma}C_\Sigma}{(T_{A_\Sigma}C_\Sigma)^\perp}=\frac{\Omega^1(\Sigma,\g)_{d_{A_\Sigma}\mr{-closed}}}{\Omega^1(\Sigma,\g)_{d_{A_\Sigma}\mr{-exact}}}=H^1_{d_{A_\Sigma}}(\Sigma,\g)$$
-- the twisted (by a flat connection $A_\Sigma$) first de Rham cohomology.

Symplectic structure $\underline\omega_\Sigma$ on $\MM_\Sigma$ (the Atiyah-Bott symplectic structure) is:
$$\underline{\omega}_\Sigma([\alpha],[\beta])=\int_\Sigma \tr \alpha\wedge \beta$$
-- the standard Poincar\'e duality pairing (with coefficients in a local system determinaed by $A_\Sigma$), $\langle,\rangle_\Sigma:\;H^1_{d_{A_\Sigma}}\otimes H^1_{d_{A_\Sigma}}\ra\RR$.

Let $\underline{L}_{M,\Sigma}=L_{M,\Sigma}/\GGauge_\Sigma\subset \underline{C}_\Sigma$ be the reduction of the evolution relation by gauge symmetry, i.e. $\underline{L}_{M,\Sigma}$ is the space of gauge classes of connections on $\Sigma$ which can be extended as flat connection over all $M$.

\subsubsection{Lagrangian property of $L_{M,\Sigma}$}

\begin{lemma}\label{l4_lm3}
Submanifold $\underline{L}_{M,\Sigma}\subset \MM_\Sigma$ is Lagrangian.
\end{lemma}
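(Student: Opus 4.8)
The plan is to combine the isotropy of $L_{M,\Sigma}$ (Lemma \ref{l4_lm2}), which passes to the reduction $\underline L_{M,\Sigma}\subset \MM_\Sigma$ and shows it is isotropic there, with a dimension count showing that $\underline L_{M,\Sigma}$ has exactly half the dimension of $\MM_\Sigma$ at a generic point. Since an isotropic subspace of a symplectic vector space is Lagrangian precisely when its dimension equals half that of the ambient space, this suffices. So the whole proof reduces to: (a) observe that $\omega_\Sigma|_{L_{M,\Sigma}}=0$ already implies $\underline\omega_\Sigma|_{\underline L_{M,\Sigma}}=0$ since $\underline\omega_\Sigma$ is the descent of $\omega_\Sigma$; and (b) identify the tangent space $T_{[A_\Sigma]}\underline L_{M,\Sigma}$ inside $T_{[A_\Sigma]}\MM_\Sigma=H^1_{d_{\Tilde A}}(\Sigma,\g)$ and compute its dimension.

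For step (b) I would fix a flat connection $\Tilde A$ on $M$ restricting to $A_\Sigma$ on $\Sigma$, and describe $T_{[A_\Sigma]}\underline L_{M,\Sigma}$ as the image of the restriction map $H^1_{d_{\Tilde A}}(M,\g)\to H^1_{d_{\Tilde A}}(\Sigma,\g)$, i.e. the image of $\iota^*$ on twisted de Rham cohomology. Then invoke Poincaré--Lefschetz duality for the pair $(M,\Sigma)$ with coefficients in the local system $\g_{\Tilde A}$ (using the $\mr{Ad}$-invariant pairing $\tr$ on $\g$, which makes the local system self-dual). The long exact sequence of the pair $(M,\partial M)$ together with duality $H^k(M;\g)\cong H^{3-k}(M,\Sigma;\g)^*$ shows that the image of $H^1(M;\g)\to H^1(\Sigma;\g)$ is a Lagrangian subspace of $H^1(\Sigma;\g)$ with respect to the cup-product/integration pairing $\underline\omega_\Sigma$. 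Concretely, this is the standard fact that, for $\partial M=\Sigma$, $\mr{im}\big(H^1(M;\g)\to H^1(\Sigma;\g)\big)$ is its own annihilator under Poincaré duality on $\Sigma$; the rank of this image is $\tfrac12\dim H^1(\Sigma;\g)$ because the image and the kernel of the connecting map are exchanged by duality.

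Alternatively, and perhaps more in the spirit of the preceding Hodge-theoretic arguments, one can avoid quoting Poincaré--Lefschetz duality abstractly: put a metric on $M$ adapted near $\Sigma$, use the Hodge decomposition of $\Omega^\bt(M,\g)$ with appropriate (absolute/relative) boundary conditions, and show directly that the restriction map on harmonic representatives has image of dimension $\tfrac12\dim H^1_{d_{A_\Sigma}}(\Sigma,\g)$, with the complementary half coming from relative classes (those harmonic forms satisfying the dual boundary condition), which restrict to the symplectic-orthogonal half. This mirrors exactly the Hodge argument used in the proof of the coisotropicity lemma above.

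The main obstacle is step (b): isotropy is essentially free, but establishing that the dimension is \emph{exactly} half — i.e. that $L_{M,\Sigma}$ is not merely isotropic but maximal isotropic — genuinely uses the topology of the cobordism (Poincaré--Lefschetz duality, or equivalently the ``half lives, half dies'' phenomenon for $H^1$ of the boundary of a compact manifold). One should also be slightly careful about smoothness: $\MM_\Sigma$ and $\underline L_{M,\Sigma}$ may be singular, so the statement is really about the smooth locus, or about tangent cones; I would state the argument at a point $[A_\Sigma]$ where the relevant cohomology jumps do not occur (e.g. an irreducible flat connection extending irreducibly over $M$), and remark that this is the generic situation, matching the convention elsewhere in these notes that moduli spaces are treated formally/pointwise.
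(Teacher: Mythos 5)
Your proposal is correct and follows essentially the same route as the paper: the paper also identifies $T_{[A_\Sigma]}\underline{L}_{M,\Sigma}$ with the image of $H^1_{d_{\Tilde A}}(M;\g)\ra H^1_{d_{A_\Sigma}}(\Sigma;\g)$ and uses the long exact sequence of the pair $(M,\Sigma)$ together with the non-degenerate Lefschetz pairing to show this image equals its own symplectic annihilator — exactly the ``half lives, half dies'' fact you invoke. The only cosmetic difference is that the paper computes $\mr{im}(\Pi)^\perp=\ker\varkappa=\mr{im}(\Pi)$ directly (via a Stokes'-theorem identity relating $\omega_\Sigma$ to the Lefschetz pairing) rather than splitting the argument into isotropy plus a dimension count.
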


\begin{proof}
Fix some $A_\Sigma\in L_{M,\Sigma}$ with $\Tilde A$ a flat extension into $M$. Then the tangent space $T_{[A_\Sigma]}\underline{L}_{M,\Sigma}=\frac{\{\alpha\in \Omega^1(\Sigma,\g)\;\; \mr{s.t.}\;\exists\; \Tilde\alpha\in \Omega^1(M,\g)_{d_{\Tilde A}\mr{-closed}}\;\;\mr{with}\; \alpha=\Tilde\alpha|_\Sigma\}}{\Omega^1(\Sigma,\g)_{d_{A_\Sigma}\mr{-exact}}}=\mr{im}(\Pi)$
the image of the map $\Pi$ in the long exact sequence of cohomology of the pair $(M,\Sigma)$:
\begin{equation}\label{l4_LES}
\cdots \ra  H^1_{d_{\Tilde A}} (M;\g)\xra{\Pi} H^1_{d_{A_\Sigma}}(\Sigma;\g)\xra{\varkappa} H^2_{d_{\Tilde A}}(M,\Sigma;\g) \ra\cdots
\end{equation}
Let us calculate the symplectic complement $\mr{im}(\Pi)^\perp$ in $H^1(\Sigma)$:
\begin{equation}\label{l4_perp}
\mr{im}(\Pi)^\perp=\{[\alpha]\in H^1(\Sigma)\;\;\mr{s.t.}\; \langle [\alpha],\Pi[\Tilde\beta] \rangle_\Sigma=0\;\forall [\Tilde\beta]\in H^1(M) \}
\end{equation}
Note that 
$$\langle [\alpha],\Pi[\Tilde\beta] \rangle_\Sigma = \int_\Sigma\tr \alpha\wedge \Tilde\beta|_\Sigma\underset{\mr{Stokes'}}{=} \int_M d\,\tr \Tilde\alpha \wedge \Tilde \beta =
 \int_M \tr  d_{\Tilde A}\Tilde \alpha \wedge \Tilde \beta-\Tilde\alpha\wedge \underbrace{d_{\Tilde A} \Tilde\beta}_0 
$$
Here $\Tilde\alpha$ is an arbitrary (not necessarily closed) extension of the closed $1$-from $\alpha$ into the bulk of $M$. Note that $d_{\Tilde A}\Tilde \alpha$ is a closed $2$-form on $M$ vanishing on $\Sigma$. The class $[d_{\Tilde A}\Tilde \alpha]$ in relative cohomology $H^2(M,\Sigma)$ is $\varkappa[\alpha]$, by construction of the connecting homomorphism $\varkappa$. Thus, we have
$\langle [\alpha],\Pi[\Tilde\beta] \rangle_\Sigma=\langle \varkappa [\alpha], [\Tilde \beta] \rangle_M$
where $\langle,\rangle_M: H^1(M)\otimes H^2(M,\Sigma)\ra \RR$ is the Lefschetz pairing between relative and absolute cohomology. We then continue the calculation (\ref{l4_perp}):
\begin{multline*}
\mr{im}(\Pi)^\perp=\{[\alpha]\in H^1(\Sigma)\;\;\mr{s.t.}\; \langle \varkappa[\alpha],[\Tilde\beta] \rangle_\Sigma=0\;\forall [\Tilde\beta]\in H^1(M) \}\\
=\{[\alpha]\in H^1(\Sigma)\;\;\mr{s.t.}\;  \varkappa[\alpha]=0\}=\ker\varkappa=\mr{im}(\Pi)
\end{multline*}
Here we used non-degeneracy of the Lefschetz pairing and, in the last step, used exactness of the sequence (\ref{l4_LES}). This finishes the proof that $\underline{L}_{M,\Sigma}\subset \MM_\Sigma$ is Lagrangian. 
\end{proof}

A corollary of this is the following.
\begin{theorem}
Submanifold $L_{M,\Sigma}\subset \Phi_\Sigma$ is Lagrangian.
\end{theorem}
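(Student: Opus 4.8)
The plan is to deduce the Lagrangian property of $L_{M,\Sigma}\subset \Phi_\Sigma$ from the already-established facts, namely that $L_{M,\Sigma}$ is isotropic (Lemma \ref{l4_lm2}), that $C_\Sigma$ is coisotropic with characteristic distribution given by infinitesimal gauge transformations (Lemma \ref{l4_lm1_i}, (\ref{l4_lm1_ii})), that $L_{M,\Sigma}\subset C_\Sigma$, and that the reduction $\underline{L}_{M,\Sigma}\subset\MM_\Sigma=\underline{C}_\Sigma$ is Lagrangian (Lemma \ref{l4_lm3}). The conceptual point is that $L_{M,\Sigma}$ is saturated by the characteristic foliation of $C_\Sigma$ (its fibers over $\underline{L}_{M,\Sigma}$ are whole gauge orbits, because $L_{M,\Sigma}$ is gauge-invariant), so one expects its isotropic complement inside $T\Phi_\Sigma$ to be computed ``one floor up'' from the computation for $\underline{L}_{M,\Sigma}$ in $\MM_\Sigma$.

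Concretely, I would argue pointwise. Fix $A_\Sigma\in L_{M,\Sigma}$, the restriction of a flat connection $\Tilde A$ on $M$, and work in $V:=T_{A_\Sigma}\Phi_\Sigma\simeq\Omega^1(\Sigma,\g)$ with the (weakly) symplectic form $\omega_\Sigma(\alpha,\beta)=\int_\Sigma\tr\,\alpha\wedge\beta$. Write $W:=T_{A_\Sigma}C_\Sigma=\Omega^1(\Sigma,\g)_{d_{A_\Sigma}\text{-closed}}$, $K:=(T_{A_\Sigma}C_\Sigma)^\perp=\Omega^1(\Sigma,\g)_{d_{A_\Sigma}\text{-exact}}$ (the characteristic/gauge directions), and $\ell:=T_{A_\Sigma}L_{M,\Sigma}$. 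From Lemma \ref{l4_lm2}, $\ell\subset\ell^\perp$. The reduced space is $W/K\simeq H^1_{d_{A_\Sigma}}(\Sigma,\g)=T_{[A_\Sigma]}\MM_\Sigma$, and the reduced tangent space of $\underline{L}_{M,\Sigma}$ is $\bar\ell:=\ell/(\ell\cap K)$; since $L_{M,\Sigma}$ is gauge-invariant and contains whole gauge orbits, in fact $K\subset\ell$, so $\bar\ell=\ell/K$. Lemma \ref{l4_lm3} gives $\bar\ell^\perp=\bar\ell$ with respect to the reduced form $\underline\omega_\Sigma$. The first step is the standard symplectic-reduction bookkeeping: for a coisotropic $W$ with $K=W^\perp$, and any subspace $\ell$ with $K\subseteq\ell\subseteq W$, one has $\ell^\perp=p^{-1}\big((\ell/K)^{\perp_{\underline\omega}}\big)$ where $p:W\to W/K$ is the projection — equivalently $\ell^{\perp}\subseteq W$ and $\ell^\perp/K=(\ell/K)^\perp$. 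Applying this with Lemma \ref{l4_lm3} yields $\ell^\perp/K=\bar\ell^\perp=\bar\ell=\ell/K$, and since $K\subseteq\ell$ and $K\subseteq\ell^\perp$ (the latter because $K=W^\perp\subseteq\ell^\perp$ as $\ell\subseteq W$), this forces $\ell^\perp=\ell$, i.e. $L_{M,\Sigma}$ is Lagrangian at $A_\Sigma$.

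I would include the short verification of the reduction identity $\ell^\perp = p^{-1}((\ell/K)^{\perp_{\underline\omega}})$: the inclusion ``$\subseteq$'' uses that $\ell\supseteq K$ so $\ell^\perp\subseteq K^\perp=W$ (here $K^\perp=W$ because $W$ is coisotropic), hence $\ell^\perp$ descends; then $v\in\ell^\perp$ with image $\bar v\in W/K$ satisfies $\underline\omega_\Sigma(\bar v,\bar u)=\omega_\Sigma(v,u)=0$ for all $u\in\ell$, i.e. $\bar v\in\bar\ell^\perp$; conversely, if $v\in W$ has $\bar v\in\bar\ell^\perp$ then $\omega_\Sigma(v,u)=\underline\omega_\Sigma(\bar v,\bar u)=0$ for all $u\in\ell$, so $v\in\ell^\perp$. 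One subtlety worth a remark: in the infinite-dimensional Fréchet setting ``Lagrangian'' must be understood in the weak sense ($\ell=\ell^\perp$ as subspaces, not a topological-complement condition), and the Hodge-theoretic splitting used in Lemma \ref{l4_lm1_i} already licenses treating these orthogonals as in the finite-dimensional case; I would simply note that all the orthogonal-complement manipulations above are purely algebraic consequences of $\omega_\Sigma$ being weakly non-degenerate together with the established descriptions of $W$, $K$, $\ell$.

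The main obstacle is making sure the containment $K\subseteq\ell$ (equivalently $L_{M,\Sigma}\supset$ its own gauge orbits, so that $\underline{L}_{M,\Sigma}$ really is the honest reduction of $L_{M,\Sigma}$ and $T_{A_\Sigma}\underline{L}_{M,\Sigma}=\ell/K$) is justified: this follows because the gauge group $\GGauge_M$ acts on flat connections on $M$ and restriction to $\Sigma$ intertwines this with the $\GGauge_\Sigma$-action, so a gauge transformation on $\Sigma$ extending one on $M$ keeps $A_\Sigma$ inside $L_{M,\Sigma}$; one should check that infinitesimally every characteristic (i.e. $d_{A_\Sigma}$-exact) direction at $A_\Sigma$ is tangent to $L_{M,\Sigma}$, which amounts to extending $d_{A_\Sigma}\gamma$ for $\gamma\in\Omega^0(\Sigma,\g)$ to $d_{\Tilde A}\Tilde\gamma$ for some $\Tilde\gamma\in\Omega^0(M,\g)$ with $\Tilde\gamma|_\Sigma=\gamma$ — elementary, since $0$-forms extend. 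Everything else is the routine linear-symplectic reduction argument, so the proof is genuinely short once Lemmas \ref{l4_lm2} and \ref{l4_lm3} are in hand.
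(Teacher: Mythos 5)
Your proof is correct and follows essentially the same route as the paper's: both fix $A_\Sigma$, set up the chain of subspaces (gauge directions) $\subset T_{A_\Sigma}L_{M,\Sigma}\subset (T_{A_\Sigma}L_{M,\Sigma})^\perp\subset T_{A_\Sigma}C_\Sigma$, identify the symplectic orthogonal of the reduced tangent space with $\ell^\perp/K$, and combine Lemma \ref{l4_lm2} (isotropy) with Lemma \ref{l4_lm3} (Lagrangianity of the reduction) to conclude $\ell=\ell^\perp$. Your explicit verification that $K\subseteq\ell$ (by extending $0$-forms from $\Sigma$ into $M$) spells out an inclusion the paper asserts without comment, but this is a refinement, not a different argument.
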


\begin{proof}
Fix $A_\Sigma\in L_{M,\Sigma}$. Denote $\Theta:=T_{A_\Sigma}L_{M,\Sigma}$ and $V:=T_{A_\Sigma}\Phi_\Sigma$. We know by Lemma \ref{l4_lm2} that $\Theta$ is isotropic in $V$, i.e. $\Theta\subset \Theta^\perp$. Let also $U:=T_{A_\Sigma}C_\Sigma$ and $H:=U^\perp\subset U$. We have then a sequence of subspaces 
$$ H\subset \Theta\subset \Theta^\perp \subset U \subset V $$
Note that $\Lambda:=\Theta/H=T_{[A_\Sigma]}\underline L_{M,\Sigma}$
Note that 
\begin{multline*}
\Lambda^\perp = \{[v]\in U/H\;\; \mr{s.t.}\; \underline\omega([v],[\theta])=0\;\forall [\theta]\in \Theta/H \}\\
=\{v\in U\;\;\mr{s.t.}\; \omega(v,\theta)=0\;\forall \theta\in \Theta\}/H=\Theta^\perp/H
\end{multline*}
On the other hand, $\Lambda$ is the space we have proven to be a Lagrangian subspace $U/H=T_{[A_\Sigma]}\underline{C}_\Sigma$ in Lemma \ref{l4_lm3}. Thus 
$$\Theta/H=\Lambda=\Lambda^\perp=\Theta^\perp/H$$
which, in combination with $\Theta\subset \Theta^\perp$, proves $\Theta=\Theta^\perp$.
\end{proof}

\marginpar{\LARGE{Lecture 5, 09/07/2016.}}
\subsubsection{Behavior of $S_{CS}$ under gauge transformations, Wess-Zumino cocycle}
For a manifold $M$ with boundary $\Sigma$, Chern-Simons action changes w.r.t. gauge transformation of a connection in following way (result of a straightforward calculation):
\begin{equation}  \label{l5_e1}
S_{CS}(A^g)-S_{CS}(A)=\int_\Sigma\tr \frac12 g^{-1}A g\wedge g^{-1}dg \underbrace{-\int_M\tr \frac16 (g^{-1}dg)^{\wedge 3}}_{=:W_\Sigma(g)}  
\end{equation}
The last term here is called the \textit{Wess-Zumino term}.

\begin{lemma}
$W_\Sigma(g) \;\bmod 4\pi^2 \ZZ$ depends only on the restriction of $g$ to the boundary, $g|_\Sigma\in \GGauge_\Sigma$.
\end{lemma}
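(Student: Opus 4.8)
The plan is to show that the Wess-Zumino term $W_\Sigma(g)=-\int_M\tr\frac16(g^{-1}dg)^{\wedge 3}$, taken modulo $4\pi^2\ZZ$, is insensitive to how $g:M\to G$ is extended into the bulk once its boundary restriction $g|_\Sigma$ is fixed. So I would take two maps $g_0,g_1:M\to G$ with $g_0|_\Sigma=g_1|_\Sigma$ and prove that $W_\Sigma(g_0)-W_\Sigma(g_1)\in 4\pi^2\ZZ$.

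First I would recall the normalization from earlier in the excerpt: the Cartan $3$-form $\theta=-\frac1{24\pi^2}\tr(g^{-1}dg)^{\wedge 3}$ is a closed $G$-invariant form with integral periods generating $H^3(G,\ZZ)$, so that $W_\Sigma(g)=4\pi^2\int_M g^*\theta$. Hence the claim is equivalent to showing $\int_M g_0^*\theta-\int_M g_1^*\theta\in\ZZ$. Next I would build a closed $3$-manifold out of two copies of $M$: let $\widehat M=M\cup_\Sigma\bar M$ be the double of $M$ along its boundary (or, more simply, glue $M_+$ to $M_-$ where $M_-$ is $M$ with reversed orientation). Since $g_0$ and $g_1$ agree on $\Sigma$, the pair $(g_0$ on $M_+$, $g_1$ on $M_-)$ glues to a continuous — and after a standard smoothing near $\Sigma$, smooth — map $\widehat g:\widehat M\to G$. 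Then by additivity of the integral over the two pieces and the orientation reversal on $M_-$,
$$\int_M g_0^*\theta-\int_M g_1^*\theta=\int_{\widehat M}\widehat g^*\theta=\langle[\widehat M],\widehat g^*[\theta]\rangle.$$

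Since $[\theta]\in H^3(G,\ZZ)$ and $\widehat g^*$ preserves integrality of cohomology classes, $\widehat g^*[\theta]\in H^3(\widehat M,\ZZ)$, and pairing an integral class against the fundamental class of a closed oriented $3$-manifold gives an integer. Therefore $W_\Sigma(g_0)-W_\Sigma(g_1)=4\pi^2\cdot(\text{integer})$, which is exactly the assertion. (If one prefers to avoid the smoothing step, one can instead invoke that $\theta$ being closed makes $\int_{M}g^*\theta$ depend on $g$ only through the relative homotopy class rel $\Sigma$ together with the fact that $g_0,g_1$ differ by a map $S^3$-worth of bulk data, i.e. by an element of $\pi_3(G)=\ZZ$ on which $\int\theta$ evaluates to an integer; this is the same computation packaged differently.)

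The main obstacle is the smoothing/gluing technicality: $\widehat g$ is only guaranteed continuous across $\Sigma$, and to apply de Rham–style integration one wants it smooth. The fix is routine — $g_0|_\Sigma=g_1|_\Sigma$ lets one homotope both maps, rel a collar of $\Sigma$, so they are constant in the normal direction near $\Sigma$, making the glued map smooth without changing either $\int_M g_i^*\theta$; because $\theta$ is closed, such a homotopy changes each integral by the integral of $d\theta=0$ over the trace, hence not at all. Everything else — additivity over $\widehat M=M_+\cup_\Sigma M_-$, the sign from reversing orientation, and integrality of the pairing — is immediate given Lemma \ref{l2_lem} and the Chern-Weil input already established.
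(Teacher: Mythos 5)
Your proof is correct and follows essentially the same route as the paper: glue two copies of $M$ (one with reversed orientation) along $\Sigma$, combine the two bulk extensions into a single map $\widehat g$ on the resulting closed $3$-manifold, and invoke integrality of the periods of the Cartan $3$-form. The smoothing remark is a reasonable extra precaution that the paper's argument leaves implicit.
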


\begin{proof}
Let $M'$ be a second copy of $M$ and let $\til M=M\cup_\Sigma \overline{M'}$ be the closed $3$-manifold obtained by gluing $M$ and $M'$ along $\Sigma$. Let $g: M\ra G$ and $g': M'\ra G$ be two maps to the group which agree on $\Sigma$, $g|_\Sigma=g'|_\Sigma$. The pair $(g,g')$ determines a map $\til g: \til M\ra G$. We have
$$W_\Sigma(g)-W_\Sigma(g')=-\int_{\til M}\tr \frac16 (\til g^{-1} d\til g)^{\wedge 3} = 4\pi^2 \langle [M] \til g^*[\theta]\rangle \quad \in 4\pi^2\cdot \ZZ$$
where $[\theta]$ is the class of Cartan's $3$-form in $H^3(G)$.
\end{proof}

Denote 
$$c_\Sigma^k(A,g):=e^{\frac{ik}{2\pi}(\int_\Sigma\tr \frac12 g^{-1}A g\wedge g^{-1}dg + W_\Sigma(g))}$$
By the Lemma above, for $k\in \ZZ$, it this is a well-defined function of a pair $(A,g)\in \Conn_\Sigma\times \GGauge_\Sigma$.

In particular, (\ref{l5_e1}) can be rewritten as the gauge transformation rule for the (normalized) exponential of Chern-Simons action $\psi_k(A)=e^{\frac{ik}{2\pi}S_{CS}(A)}$   (which we introduced earlier in the closed case):
\begin{equation}\label{l5_e2}
\psi_k(A^g)=\psi_k(A)\cdot c_\Sigma^k(A|_\Sigma,g|_\Sigma)
\end{equation}

\begin{remark}
$c_\Sigma^k$ can be viewed as a $1$-cocycle in the cochain complex of the group $\GGauge_\Sigma$ acting on $\mr{Map}(\Conn_\Sigma,S^1)$.  Group cocycle property amounts to
$$ \left(g\circ c_\Sigma^k(A,h)\right)\cdot \left( c_\Sigma^k(A,gh)\right)^{-1}\cdot \left( c_\Sigma^k(A,g) \right) =1 $$
(here $\cdot$ refers to the product in abelian group $S^1$ and $g\circ \phi(A)=\phi(A^g)$ is the $\GGauge_\Sigma$ action on the module $\{\phi(A)\}=\mr{Map}(\Conn_\Sigma,S^1)$). This property in turn follows from (\ref{l5_e1}) by exponentiating the obvious relation
$$0=\left(S_{CS}(A^{gh})-S_{CS}(A^g)\right)-\left(S_{CS}(A^{gh})-S_{CS}(A)\right)+\left(S_{CS}(A^g)-S_{CS}(A)\right)$$
\end{remark}

\begin{remark}
The construction of $c_\Sigma^k$ from $\psi_k$ is similar to the \textit{transgression}
in the \textit{inflation-restriction} exact sequence in group cohomology:
$$ \cdots \ra H^{j}(\mc{G}/\mc{N}, \mc{A}^\mc{N} )\ra H^j(\mc{G},\mc{A}) \ra H^j(\mc{N},\mc{A})^{\mc{G}/\mc{N}}\xra{\mathbb{T}}  
H^{j+1}(\mc{G}/\mc{N}, \mc{A}^\mc{N} )\ra \cdots
$$
which holds for $\mc{G}$ a group, $\mc{N}\subset \mc{G}$ a normal subgroup and $\mc{A}$ a $\mc{G}$-module (this exact sequence is related to the \textit{Lyndon-Hochschild-Serre spectral sequence}). In our case, $\mc{G}=\GGauge_M$, $\mc{N}=\{g:M\ra G\;\mr{s.t.}\; g|_\Sigma=1\}$, with the quotient $\mc{G}/\mc{N}\cong \GGauge_\Sigma$; the module is $\mc{A}=\mr{Map}(\Conn_M,S^1)$. In particular, invariants $\mc{A}^\mc{N}$ are the functionals of connections on $M$ which are gauge-invariant w.r.t. gauge transformations \textit{relative to the boundary} (i.e. fixed to $1$ at the boundary). We can view $\psi_k$ as a class in $H^0(\mc{N},\mc{A})$ and $c^k_\Sigma=``\mathbb{T}(\psi_k)"$ as a class in $H^1(\mc{G}/\mc{N},\mc{A}^\mc{N})$.
\end{remark}

Let $\LL_M=S^1\times \Conn_M$ be the trivial circle bundle over $\Conn_M$. We define the action of $\GGauge_M$ on $\LL_M$ by
$$g: (\lambda,A) \mapsto (\lambda\cdot c_\Sigma^k(A|_\Sigma,g|_\Sigma), A^g) $$
with $\lambda\in S^1$.  By (\ref{l5_e2}), $\psi_k$ is a $\GGauge_M$-invariant section of $\LL_M$.

Similarly, on the boundary, we have a trivial bundle $\LL_\Sigma=S^1\times \Conn_\Sigma$ with action of $\GGauge_\Sigma$ defined as
$$g_\Sigma: (\lambda,A_\Sigma) \mapsto (\lambda\cdot c_\Sigma^k(A_\Sigma,g_\Sigma), A_\Sigma^{g_\Sigma}) $$
The $1$-form 
$$\alpha^k_\Sigma=\frac{ik}{2\pi}\underbrace{\int_\Sigma\tr\frac12 A_\Sigma\wedge \delta A_\Sigma}_{\alpha_\Sigma}  \quad \in\;\;  \Omega^1(\Conn_\Sigma,\mathfrak{u}(1))$$ 
defines a $\GGauge_\Sigma$-invariant connection in $\LL_\Sigma$  (here $\mathfrak{u}(1)=i\RR$ is the Lie algebra of $S^1=U(1)$). Its curvature is
$$\omega_\Sigma^k=\frac{ik}{2\pi}\underbrace{\int_\Sigma\tr\frac12 \delta A_\Sigma\wedge \delta A_\Sigma}_{\omega_\Sigma}\quad \in\;\; \Omega^2(\Conn_\Sigma,\mathfrak{u}(1))$$

Exponential of the action $\psi_k$ restricted to flact connections satisfies the following property (instead of beilng locally constant as a function on $\FlatConn_M$ as in the case of $M$ closed):
$$(\delta-\pi^*\alpha^k_\Sigma)\psi_k=0$$
with $\pi:\Conn_M\ra \Conn_\Sigma$ the pullback of connections to the boundary; $\pi^*\alpha^k_\Sigma$ is the pullback of an $S^1$-connection $\alpha^k_\Sigma$ on $\Conn_\Sigma$ to an $S^1$-connection on $\Conn_M$.

\subsubsection{Prequantum line bundle on the moduli space of flat connections on the surface}
Restricting the circle bundle $\LL_\Sigma$ to flat connections and taking the quotient over gauge transformations, we obtain a non-trivial circle bundle $\underline\LL_\Sigma^k$ over the moduli space $\MM_\Sigma$ with connection $\underline\alpha_\Sigma^k$ with curvature $\underline\omega^k_\Sigma=\frac{ik}{2\pi}\underline\omega_\Sigma$ -- a multiple of the standard Atiyah-Bott symplectic structure on the moduli space $\MM_\Sigma$. In fact, $\underline\LL_\Sigma^k=(\underline\LL_\Sigma^1)^{\otimes k}$ (here we implicitly 
identify a circle bundle and the associated complex line bundle $\underline\LL\times_{S^1} \CC$). 
$\underline\LL_\Sigma^1$ is known as the \textit{prequantum line bundle} on the moduli space of flat connections on the surface.

Another point of view on the line bundle $\underline \LL_\Sigma^k$ is as follows. Consider $C_\Sigma$ as a space with $\GGauge_\Sigma$-action with quotient $\MM_\Sigma$. Restriction of the symplectic form $\omega_\Sigma|_{C_\Sigma}$ is a basic form (horizontal and invariant) w.r.t. $\GGauge_\Sigma$ and thus is a pullback of a form $\underline\omega_\Sigma$ on the quotient. 
But $\omega_\Sigma$ before reduction is exact, with primitive $1$-form $\alpha_\Sigma$. The first question is: can we reduce $\alpha_\Sigma$ to a primitive $1$-form for the reduced symplectic structure? The answer is: NO, because $\alpha_\Sigma|_{C_\Sigma}$ is not basic (in particular, not horizontal).\footnote{Also, we could not have succeded in constructing a primitive $1$-form for $\underline{\omega}_\Sigma$ because, being a symplectic structure on a compact manifold (for $G$ compact, $\MM_{\Sigma}$ is also compact), it has to define a nontrivial class in $H^2(\MM_\Sigma)$.} 

The solution is to promote $\alpha_\Sigma$ to a connection $\nabla$ in the trivial circle bundle over $C_\Sigma$, then one can identify the circle fibers along $\GGauge_\Sigma$-orbits on $C_\Sigma$. Locally this identification is consistent because $\nabla$ is flat when restricted to the orbit (since $F_\nabla=\omega_\Sigma$ and orbits are isotropic submanifolds). For the identification to be globally consistent, the holonomy of $\nabla$ on the orbit has to be trivial. This turns out to be true precisely if we normalize the connection $1$-form as $\frac{ik}{2\pi}\alpha_\Sigma=\alpha_\Sigma^k$ with $k$ an integer! The resulting consistent identification of circle fibers along gauge orbits on $C_\Sigma$ yields the circle bundle $\LL_\Sigma^k$ over the moduli space $C_\Sigma/\GGauge_\Sigma=\MM_\Sigma$.

\begin{remark}
The Chern-Weil representative of the first Chern class of $\underline \LL_\Sigma^k$ is 
$\frac{1}{2\pi  i}\omega_\Sigma^k=\frac{k}{4\pi^2}\omega_\Sigma$ - the (normalized) curvature of the connection in $\underline\LL_\Sigma^k$. In particular, this implies that the 2-form $\frac{1}{4\pi^2}\omega_\Sigma$ on the moduli space $\MM_\Sigma$ has integral periods.
\end{remark}

Exponential of the action $\psi_k$ restricted to flat connections, after reduction modulo gauge symmetry, yields a section $\underline\psi_k\in \Gamma(\MM_M,(\pi_*)^*\underline\LL_\Sigma^k)$ which satisfies 
$$(\delta-(\pi_*)^*\underline\alpha_\Sigma^k)\underline\psi_k=0$$
i.e. is horizontal w.r.t. the connection $\underline\LL_\Sigma^k$ pulled back to $\MM_M$ by the map $\pi_*:\MM_M\ra \MM_\Sigma$ sending the gauge class of a connection on $M$ to the gauge class of its restriction to the boundary.

\begin{remark} Existence of a global section $\underline\psi_k$ of the line bundle $(\pi_*)^*\underline \LL_\Sigma^k$ over $\MM_M$ implies, in particular, that the latter is trivial. Put another way, the pullback of the (nontrivial) first Chern class $c_1(\underline\LL_\Sigma^k)\in H^2(\MM_\Sigma)$ by $\pi_*:\MM_M\ra \MM_\Sigma$ is zero.
\end{remark}

\subsubsection{Two exciting formulae 
}

Symplectic volume of the moduli space of flat connections on a surface of genus $h\geq 2$ is given by
\begin{equation}\label{l5_sympVol}
\mr{Vol}(\MM_\Sigma):=\int_{\MM_\Sigma}\frac{(\underline\omega_\Sigma)^{\wedge m}}{m!}=\#Z(G)\cdot (\mr{Vol}(G))^{2h-2}\sum_{R\in\{\mr{irrep\;of\;} G\} } \frac{1}{(\dim R)^{2h-2}}
\end{equation}
where $m:=\frac{1}{2}\dim \MM_\Sigma=\dim G\cdot (h-1)$ and $\#Z(G)$ is the number of elements in the center of $G$; $R$ runs over irreducible representations of $G$ (see \cite{Witten_2Dgauge}).

A related result is the celebrated Verlinde formula for the dimension of the space of holomorphic sections of the line bundle $\underline{\LL}^k_\Sigma$ over $\MM_\Sigma$ (with respect to some a priori chosen complex structure on the surface $\Sigma$ which in turn endows $\MM_\Sigma$ with a complex structure -- and, moreover, makes $\MM_\Sigma$ a K\"ahler manifold). For simplicity, we give the formula for $G=SU(2)$:
\begin{equation}\label{l5_Verlinde}
\dim H^0_{\bar\dd}(\MM_\Sigma,\underline\LL^k_\Sigma)=\left(\frac{k+2}{2}\right)^{h-1}\sum_{j=0}^{k-1}\frac{1}{\left(\sin\frac{\pi (j+1)}{k+2}\right)^{2h-2}}
\end{equation}
This formula gives the dimension of the space of states which \textit{quantum} Chern-Simons theory assigns to the surface $\Sigma$ (see \cite{Witten_CS}). The r.h.s. here is, in fact, a polynomial in $k$ of degree $m=3h-3$ (in case $h\geq 2$), with the coefficient of the leading term given by (\ref{l5_sympVol}). This follows from Riemann-Roch-Hirzebruch formula which gives the following for the dimension of the space of holomorphic sections:
\begin{multline*}
\dim H^0_{\bar\dd}(\MM_\Sigma,\underline\LL^k_\Sigma)=\int_{\MM_\Sigma}  \mr{Td}(\MM_\Sigma)\cdot e^{\frac{k}{2\pi}\underline\omega_\Sigma}\\
=\left(\frac{k}{2\pi}\right)^{m}\mr{Vol}(\MM_\Sigma)+\mbox{polynomial of degree $< m$ in $k$}
\end{multline*}

The sum in (\ref{l5_Verlinde}) runs, secretely, over ``integrable'' irreducible representations of the affine Lie algebra 
$\hat\g$ (with $\g=\mathfrak{su}(2)$ in the case at hand)
at level 
$k$ (resp. 
irreducible representations of the quantum group $SL_q(2)$ with $q=e^{\frac{\pi i}{k+2}}$ a root of unity).

\marginpar{\LARGE{Lecture 6, 09/12/2016.}}
\subsubsection{
Classical field theory as a functor to the symplectic category}
\begin{definition}\footnote{See \cite{Weinstein}.}
Let $(\Phi_1,\omega_1)$ and $(\Phi_2,\omega_2)$ be two symplectic manifolds. A \emph{canonical relation} $L$ between $\Phi_1$ and $\Phi_2$ is a Lagrangian submanifold $L\subset \overline\Phi_1\times \Phi_2$ where $\overline\Phi_1=(\Phi_1,-\omega_1)$ is the \emph{symplectic dual} of $\Phi_1$, i.e. $\Phi_1$ endowed with symplectic structure of opposite sign. The notation is: $L:\Phi_1\raslash\Phi_2$. 
Composition of canonical relations $L:\Phi_1\raslash\Phi_2$ and $L':\Phi_2\raslash\Phi_3$ is defined as the set-theoretic composition of relations:
\begin{multline}\label{l6_comp_of_can_rel}
L'\circ L:=\{(x,z)\in \Phi_1\times \Phi_3\;\;\mr{s.t.}\; \exists y\in \Phi_2 \;\mr{s.t.}\; (x,y)\in L\;\mr{and}\; (y,z)\in L'\}
\\
=P\left((L\times L')\cap (\Phi_1\times \mr{Diag}_{\Phi_2}\times \Phi_3) \right)
\end{multline}
where $\mr{Diag}_{\Phi_2}=\{(y,y)\in \Phi_2\times \Phi_2\}$ -- the diagonal Lagrangian in $\Phi_2\times \overline\Phi_2$ and $P:\overline\Phi_1\times\Phi_2\times\overline\Phi_2\times \Phi_3\ra \overline\Phi_1\times \Phi_3$ is a projection to the outmost factors.
\end{definition}

Composition of canonical relations is guaranteed to be a canonical relation in the context of finite-dimensional symplectic vector spaces. More generally (for symplectic manifolds, possibly infinite-dimensional), the composition is always isotropic but may fail to be Lagrangian, if the intersection in (\ref{l6_comp_of_can_rel}) fails to be transversal. Also, the composition may fail to be smooth.

Thus, we have a symplectic category of symplectic manifolds and canonical relations between them with partially-defined composition. Unit morphisms are the diagonal Lagrangians $\mr{id}_\Phi=\mr{Diag}_\Phi: \Phi\ra \Phi$. The monoidal structure is given by direct products and the monoidal unit is the point (regarded as a symplectic manifold).

For $C\subset (\Phi,\omega)$ a coisotropic submanifold, introduce a special canonical relation $r_C:\Phi\raslash\Phi$ defined as the set of pairs $(x,y)\in C\times C$ such that $x$ and $y$ are on the same leaf of the characterictic distribution on $C$. Note that this relation is an idempotent: $r_C\circ r_C=r_C$ . Also note that for $C=\Phi$, $r_C$ is the identity (diagonal) relation on $\Phi$.

One can formulate an $n$-dimensional classical field theory in the spirit of Atiyah-Segal axiomatics of QFT, as the following association.\footnote{See \cite{CMR2} for an overview of this approach and examples.}
\begin{itemize}
\item 
To an $(n-1)$-manifold $\Sigma$ (possibly with geometric structure), the classical field theory assigns a symplectic manifold $(\Phi_\Sigma,\omega_\Sigma)$ -- the \textit{phase space}.\footnote{The idea of construction of the phase space from variational calculus data (fields and action functional) of the field theory is to first construct $\Phi^\mr{pre}_\Sigma$ as normal $\infty$-jets of fields at $\Sigma$ on some manifold $M$ containing $\Sigma$ as a boundary component. Thus, tautologically, one has a projection $\pi^\mr{pre}:F_M\ra \Phi^\mr{pre}_\Sigma$ -- evaluation of the normal jet of a field at $\Sigma$. By integrating by parts in the variation of action $\delta S_M$, one gets the pre-N\"other $1$-form $\alpha^\mr{pre}_\Sigma\in \Omega^1(\Phi_\Sigma^\mr{pre})$. Setting $\omega^\mr{pre}_\Sigma$, one performs the symplectic reduction by the kernel of $\omega^\mr{pre}_\Sigma$. Phase space $\Phi_\Sigma$ is the result of this reduction. By construction, it comes with a symplectic structure and a projection $\pi: F_M\ra \Phi_\Sigma$.}
\item To an $n$-cobordism $\Sigma_\mr{in}\xra{M}\Sigma_\mr{out}$, the classical field theory assigns a canonical relation
$L_M:\Phi_{\Sigma_\mr{in}} \raslash
\Phi_{\Sigma_\mr{out}}$.\footnote{The idea is to consider the space $EL_M$ of solutions of Euler-Lagrange equations on $M$ (as defined by the bulk term of the variation of action $\delta S_M$), and to construct $L_M:=(\pi_\mr{in}\times \pi_\mr{out})(EL_M)\subset \overline\Phi_{\Sigma_\mr{in}}\times \Phi_{\Sigma_\mr{out}} $ -- the set of boundary values of solutions of Euler-Lagrange equations. \textbf{Warning:} though it is automatic that $L_M$ is isotropic, fact that is Lagrangian has to be proven for individual field theories and there exist (pathological) examples where Lagrangianity fails, e.g. 2-dimensional scalar field on Misner's cylinder \cite{CMwave}.}
\item Composition (gluing) of cobordisms $\Sigma_1\xra{M}\Sigma_2\xra{M'}\Sigma_3$ is mapped to the set-theoretic composition of relations $\Phi_{\Sigma_1}\xraslash{L_M}\Phi_{\Sigma_2}\xraslash{L_{M'}}\Phi_{\Sigma_3}$.
\item Disjoint unions are mapped to direct products.
\item Null $(n-1)$-manifold is mapped to the point as its phase space.
\item A short\footnote{In a topological theory, we can think of a unit cylinder $\Sigma\times [0,1]$ and in a theory e.g. with cobordisms endowed with metric, we should think of taking a limit $\epsilon\ra 0$.} cylinder $\Sigma\times [0,\epsilon]$ is mapped to the relation $r_{C_\Sigma}: \Phi_\Sigma\raslash\Phi_\Sigma$ for some distinguished coisotropic $C_\Sigma\subset \Phi_\Sigma$ -- the \textit{Cauchy subspace}.
\end{itemize}

Thus, from this point of view, a classical field theory, similarly to quantum field theory, is a functor of monoidal categories from the category of $n$-cobordisms (possibly with geometric structure) to the symplectic category. With two corrections:
\begin{itemize}
\item The target category has only partially defined composition. On the other hand, if we know that the space of solutions of Euler-Lagrange equations induces a Lagrangian submanifold in the phase space on the boundary for any spacetime manifold $M$ (which is the case in all but pathological examples), then we know that there is no problem with composition of relations in the image of cobordisms under the given field theory.
\item Units do not go to units (if we deal with a gauge theory; for a non-degenerate/unconstrained theory, we have $C_\Sigma=\Phi_\Sigma$ and then units do go to units). One can then pass to a \textit{reduced} field theory, by replacing phase spaces $\Phi_\Sigma$ with coisotropic reductions $\underline C_\Sigma=:\Phi^\mr{reduced}_\Sigma$ and replacing relations $L_M$ with respective reduced relations $L_M^\mr{reduced}:=\underline L_M:\; \underline C_{\Sigma_\mr{in}}\raslash \underline C_{\Sigma_\mr{out}}$ (pushforwards of $L_M$ along the coisotropic reduction). The reduced theory is a functor to the symplectic category and takes units to units, but there may be a problem with reductions not being smooth manifolds.
\end{itemize}

\begin{example}[Non-degenerate classical mechanics] This is a 1-dimensional classical field theory. A point with positive orientation $\mr{pt}^+$ is mapped to some symplectic manifold $\Phi$ and $\mr{pt}^-$ is mapped to the symplectic dual $\overline \Phi$. An interval $[t_0,t_1]$ (our cobordisms are equipped with Riemannian metric and so have length) is mapped to a relation $L_{[t_0,t_1]}:\Phi\raslash \Phi$. Using the gluing axiom, by the argument similar to Example \ref{l1_exQM} (where we considered quantum mechanics as an example of Segal's axioms), we have that 
$$L_{[t_0,t_1]}=\{(x,y)\subset \Phi\times \Phi\;\;\mr{s.t.}\;y=\mr{Flow}_{t_1-t_0}(X)\circ x\}$$
-- the graph of the flow, in  time $t_1-t_0$, of a vector field $X$ on $\Phi$ preserving symplectic structure. If $\Phi$ is simply connected, $X$ has to be a Hamiltonian vector field, $X=\{H,\bt\}_\omega$ for some Hamiltonian $H\in C^\infty(\RR)$.
On the other hand $L_{[t_0,t_1]}$ is constructed out of the action of the classical field theory (e.g. in the case of second-order Lagrangian, $S[x(\tau)]=\int_{t_0}^{t_1}d\tau \left(\frac{\dot{x}^2}{2m}-U(x(\tau))\right)$) as
$$L_{[t_0,t_1]}=\{(x,y)\in \Phi\times \Phi \;\; |\;\;\exists\; \mr{sol.\;of\; EL\; eq.\;} x(\tau)\;\mr{s.t.}\; x(t_0)=x,\; x(t_1)=y\}$$
In particular, evolution in infinitesimal time relates the Lagrangian density and the Hamiltonian (the relation being the Legendre transform).
\end{example}

\begin{example}[Classical Chern-Simons theory] Classical Chern-Simons theory as we discussed it here is the prototypical example of a functorial classical field theory, with $n=3$, $\Phi_\Sigma=\Conn_\Sigma$, $C_\Sigma=\FlatConn_\Sigma$ and $L_M=\mr{im}(\FlatConn_M\ra \overline\Conn_{\Sigma_\mr{in}}\times \Conn_{\Sigma_\mr{out}})$.
\end{example}

\section{Feynman diagrams}
Here we will discuss how Feynman diagrams arise in the context of finite-dimensional integrals. References: \cite{Etingof,Reshetikhin}.

\subsection{Gauss and Fresnel integrals}
Gauss integral:\footnote{Sometimes also called Poisson integral.}
\begin{equation}\label{l6_e0}
\int_{-\infty}^\infty dx\; e^{-x^2}=\sqrt\pi\quad \mbox{or more generally}\;\; \int_{-\infty}^\infty dx\; e^{-\alpha x^2}=\sqrt\frac\pi\alpha 
\end{equation}
with $\mr{Re}\,\alpha>0$ needed for absolute convergence. Multi-dimensional version:
\begin{equation}\label{l6_e1}
\int_{\RR^n}d^n x\; e^{-Q(x,x)}=\pi^{\frac{n}{2}}(\det Q)^{-\frac12}
\end{equation}
Here $Q(x,x)=\sum_{i,j=1}^n Q_{ij} x_i x_j$ is a \textit{positive-definite} (as necessary for convergence) quadratic form and $\det Q$ stands for the determinant of the matrix $(Q_{ij})$.\footnote{\label{l6_footnote} (\ref{l6_e1}) is proven e.g. by making an orthogonal change of coordinates on $\RR^n$ which diagonalizes $Q$; then the integration variables split and the problem is reduced to a product of 1-dimensional Gaussian integrals.}

Fresnel integral is the oscillating version of Gauss integral:
\begin{equation}\label{l6_e2}
\int_{-\infty}^\infty dx \; e^{ix^2}=\sqrt\pi\cdot e^{\frac{i\pi}{4}},\qquad \int_{-\infty}^\infty dx \; e^{-ix^2}=\sqrt\pi\cdot e^{-\frac{i\pi}{4}} 
\end{equation}
To calculate e.g. the first one, one way is to take the limit $\alpha\ra -i$ in (\ref{l6_e0}). Equivalently, one views it as an integral over the real line in the complex plane $\RR\subset \CC$ and rotates the integration contour counterclockwise $\RR\ra e^{\frac{\pi i}{4}}\cdot \RR \subset \CC $. On the new contour, the integrand becomes the standard Gaussian integrand (not oscillating but decaying). 
Note that we could not have rotated the contour clockwise because then the integral would have diverged.

Fresnel integrals are only conditionally convergent, as opposed to Gaussian integrals which are absolutely convergent.

Multi-dimensional Fresnel integral:
\begin{equation}\label{l6_e3}
\int_{\RR^n}d^n x\; e^{i\, Q(x,x)}=\pi^{\frac{n}{2}}\cdot e^{\frac{\pi i}{4}\mr{sign}\,Q}\cdot |\det Q|^{-\frac12}
\end{equation}
with $Q(x,x)=\sum_{i,j=1}^n Q_{ij}x_i x_j$ a \textit{non-degenerate} quadratic form (not required to be positive-definite); $\mr{sign}\,Q$ is the \textit{signature} of $Q$ -- the number of positive eigenvalues minus the number of negative eigenvalues. (Proven as in footnote \ref{l6_footnote}, by diagonalization of $Q$).

\begin{remark}[On convergence of Fresnel integrals]\label{l6_rem_convergence} Although one-dimensional integral can be made sense of as a limit of integrals with cut-off integration domain, $\lim_{\Lambda\ra\infty} \int_{-\Lambda}^\Lambda dx\; e^{ix^2}$ (the cut-off integral oscillates as a function of $\Lambda$ but the amplitude of oscillation goes to zero as $\Lambda\ra\infty$), in the higher-dimensional case there are problems. E.g. if $Q=x_1^2+\cdots+x_n^2$, then cutting-off the integration domain to a ball of radius $\Lambda$, we obtain
$\int_{||x||^2<\Lambda}d^n x\; e^{i ||x||^2}\propto \int_0^\infty d\Lambda \, \Lambda^{n-1}e^{i\Lambda^2}$ -- here the amplitude of oscillations in $\Lambda$ does not decrease for $n=2$ and actually increases for $n\geq 3$. The solution is to say that the limit $\Lambda\ra \infty$ exists not pointwise, but in the distributional sense, i.e. convolving with a smoothing function $\rho$ (that is, replacing $\lim_{\Lambda_\ra\infty }\int^\Lambda\cdots$ with $\lim_{\Lambda_0\ra\infty} \int d\Lambda\; \rho(\frac{\Lambda}{\Lambda_0}) \int^\Lambda\cdots $). This is equivalent to replacing an abrupt cut-off of the integration domain by ``smeared cut-off'' (e.g. multiplying the integrand by a bump function which realizes the smeared cut-off). A technically convenient way of arranging a smeared cut-off is simply to multiply the integrand by $e^{-\epsilon Q_0(x,x)}$ for some fixed positive-definite $Q_0$, and then take the limit $\epsilon\ra 0$. I.e. the meaning of the integral (\ref{l6_e3}) is:
$$\lim_{\epsilon\ra 0} \int_{\RR^n}d^n x \; e^{i Q(x,x)-\epsilon\, Q_0(x,x)}$$
The integral now is absolutely convergent $\forall \epsilon >0$; the result is independent of $Q_0$ and is equal to the r.h.s. of (\ref{l6_e3}). Note that, for $Q$ of diagonal (Morse) form $Q=\sum_{i=1}^p x_i^2-\sum_{i=p+1}^n x_{i}^2$, our regularization is equivalent to infinitesimally rotating the integration contour for $x_i$ counterclockwise for $i=1,\ldots,p$ and clockwise for $i=p+1,\ldots,n$.
\end{remark}

\subsection{Stationary phase formula}
\begin{theorem}\label{thm: statphase}
Let $X$ be an oriented $n$-manifold, $\mu\in \Omega^n_c(X)$ a top-degree form with compact support, $f\in C^\infty(X)$ a smooth function which has only non-degenerate
critical points $x_0^{(1)},\ldots, x_0^{(m)}$ on $\mr{Supp}\;\mu\subset X$. Then the integral $I(k):=\int_X \mu\; e^{ik f(x)}$ has the following asymptotics at $k\ra\infty$:
\begin{equation}\label{l6_statphase}
I(k)\sim\sum_{x_0\in \{\mr{crit.\;points\;of\;}f\}}e^{ik f(x_0)} \left(\frac{2\pi}{k}\right)^{\frac{n}{2}} |\det f''(x_0)|^{-\frac12}\cdot e^{\frac{\pi i}{4}\mr{sign}\,f''(x_0)}\cdot \mu_{x_0}+ O(k^{-\frac{n}{2}-1})
\end{equation}
Here:
\begin{itemize}
\item We assume that we have chosen, arbitrarily, a coordinate chart $(y_1,\ldots,y_n)$ near each critical point $x_0$.
\item Critical point $x_0$ of $f$ is said to be non-degenerate if the Hessian matrix $f''(x_0)=\left.\frac{\dd^2}{\dd y_i \dd y_j}\right|_{y=0} f$ is non-degenerate. (In particular, a non-degenerate critical point has to be isolated and therefore there can be only finitely many of them on the compact $\mr{Supp}\,\mu$.)
\item $\mu_{x_0}$ is the density of $\mu$ at $x_0$ in local coordinates $y_1,\ldots,y_n$. I.e., if $\mu$ is written in local coordinates as $\mu=\rho(y)dy_1\cdots dy_n$ for some local density $\rho(y)$, then $\mu_{x_0}:=\rho(y=0)$.
\end{itemize}
\end{theorem}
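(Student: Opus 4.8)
The plan is the classical localization argument for oscillatory integrals. Since a non-degenerate critical point is isolated, $f$ has only the finitely many critical points $x_0^{(1)},\dots,x_0^{(m)}$ on the compact set $\mr{Supp}\,\mu$, so I would pick a partition of unity $1=\chi_\infty+\sum_{j=1}^m\chi_j$ near $\mr{Supp}\,\mu$ with each $\chi_j$ supported in a coordinate chart $U_j\ni x_0^{(j)}$ that contains no other critical point (and with $\chi_j\equiv 1$ near $x_0^{(j)}$, after shrinking supports), and with $\mr{Supp}\,\chi_\infty$ containing no critical point of $f$ and covered by finitely many charts. This splits $I(k)$ into $\int_X\chi_\infty\mu\,e^{ikf}$ plus $\sum_j\int_X\chi_j\mu\,e^{ikf}$. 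For the first, non-stationary, piece I would use $df\neq 0$ on $\mr{Supp}\,\chi_\infty$: working in a chart, the first-order operator $L:=\frac{1}{ik\,|\mr{grad}\,f|^2}\sum_i(\dd_i f)\,\dd_i$ satisfies $L\,e^{ikf}=e^{ikf}$, so replacing $e^{ikf}$ by $L^N e^{ikf}$ and integrating by parts $N$ times transfers the derivatives onto the smooth compactly supported density $\chi_\infty\mu$ and extracts $N$ powers of $1/k$; since $|\mr{grad}\,f|^{-2}$ is bounded on the support, this piece is $O(k^{-N})$ for every $N$, hence contributes nothing to (\ref{l6_statphase}) at any finite order.

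For the $j$-th stationary piece I would work in the chart $U_j$, write $\mu=\rho_j(y)\,dy_1\cdots dy_n$ so that $\mu_{x_0^{(j)}}=\rho_j(0)$, and invoke the Morse lemma: there is a diffeomorphism $z=\phi_j(y)$ with $\phi_j(0)=0$ and $f(y)=f(x_0^{(j)})+Q_j(z)$, where $Q_j(z)=\sum_i\epsilon_i z_i^2$, $\epsilon_i=\pm1$. Differentiating this identity twice at the origin gives $f''(x_0^{(j)})=2\,D\phi_j(0)^{T}\mr{diag}(\epsilon_i)D\phi_j(0)$, whence $|\det D\phi_j(0)|=2^{-n/2}|\det f''(x_0^{(j)})|^{1/2}$ and, by Sylvester's law of inertia, $\mr{sign}\,Q_j=\mr{sign}\,f''(x_0^{(j)})$. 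Pushing the density $\chi_j\mu$ forward along $\phi_j$ produces a smooth compactly supported density $\tilde\rho_j(z)\,dz$ with $\tilde\rho_j(0)=\rho_j(0)/|\det D\phi_j(0)|=2^{n/2}\,\mu_{x_0^{(j)}}\,|\det f''(x_0^{(j)})|^{-1/2}$.

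Then I would substitute $z=w/\sqrt k$ and use that $Q_j$ is homogeneous of degree two to write
\[
\int_X\chi_j\mu\,e^{ikf}=e^{ikf(x_0^{(j)})}\,k^{-n/2}\int_{\RR^n}\tilde\rho_j\!\left(w/\sqrt k\right)e^{i\,Q_j(w)}\,dw .
\]
As $k\to\infty$ the amplitude tends to $\tilde\rho_j(0)$ and the integral tends to $\tilde\rho_j(0)\int_{\RR^n}e^{iQ_j(w)}\,dw=\tilde\rho_j(0)\,\pi^{n/2}e^{\frac{\pi i}{4}\mr{sign}\,Q_j}$ by the Fresnel formula (\ref{l6_e3}) (the matrix of $Q_j$ has determinant $\pm1$). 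Inserting the values of $\tilde\rho_j(0)$ and $\mr{sign}\,Q_j$ found above reproduces exactly the $j$-th summand of (\ref{l6_statphase}), and summing over $j$ gives the leading term.

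The step I expect to be the main obstacle is making this last limit precise and peeling off the remainder $O(k^{-n/2-1})$, since $\int\tilde\rho_j(w/\sqrt k)e^{iQ_j(w)}\,dw$ is only conditionally convergent. I would regularize as in Remark \ref{l6_rem_convergence}, inserting $e^{-\varepsilon Q_0(w,w)}$ and letting $\varepsilon\to0$, and to quantify the error I would Taylor-expand $\tilde\rho_j(w/\sqrt k)=\tilde\rho_j(0)+k^{-1/2}\sum_i w_i\,\dd_i\tilde\rho_j(0)+k^{-1}R_k(w)$: the linear term integrates to zero against $e^{iQ_j}$ by parity, while $\int R_k\,e^{iQ_j}$ is bounded uniformly in $k$ by the standard estimate $|\int g\,e^{iQ_j}|\lesssim\sum_{|\alpha|\le n+1}\|\dd^\alpha g\|_{L^1}$ for Schwartz $g$, itself an integration-by-parts argument in $w$ as in the non-stationary step. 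This gives the first correction at relative order $1/k$, i.e.\ absolute error $O(k^{-n/2-1})$, and simultaneously legitimizes exchanging limit and integral. Beyond this bookkeeping, the only genuine inputs are the Morse lemma, the Fresnel integral (\ref{l6_e3}), and the change-of-variables formula.
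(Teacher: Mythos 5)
Your proposal follows the same overall route as the paper's proof: localize with a partition of unity, kill the non-stationary piece by integration by parts, and reduce the stationary pieces to Fresnel integrals via the Morse lemma. The two implementation differences are harmless or minor. For the non-stationary piece you use the transport operator $L=\frac{1}{ik|\mr{grad}f|^2}\sum_i(\dd_i f)\dd_i$ directly in charts, whereas the paper pushes the density forward along the submersion $f$ and reduces to the one-dimensional Lemma \ref{l6_lm1}; both are standard and both give $O(k^{-\infty})$. Your leading-order bookkeeping at a critical point is correct and complete: the identity $f''(x_0)=2\,D\phi^T\mr{diag}(\epsilon)D\phi$, the resulting Jacobian factor $2^{n/2}|\det f''(x_0)|^{-1/2}$, Sylvester for the signature, and the Fresnel evaluation reproduce exactly the summand in (\ref{l6_statphase}); the paper gets the same factors by observing $Q_{ij}=\frac12\dd_i\dd_j f$ in the Morse chart together with the chart-independence of $|\det f''|^{-1/2}\mu_{x_0}$ (Remark \ref{l6_rem_Jac}).

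The one step that does not work as written is your justification of the remainder bound. You want to apply the estimate $|\int g\,e^{iQ}|\lesssim\sum_{|\alpha|\le n+1}\|\dd^\alpha g\|_{L^1}$ to $R_k(w)=k\bigl[\tilde\rho(w/\sqrt k)-\tilde\rho(0)-k^{-1/2}w\cdot\nabla\tilde\rho(0)\bigr]$, but $R_k$ is not Schwartz (the subtracted affine part does not decay), and even after separating that part off, the surviving piece is supported on a ball of radius $\sim\sqrt k$ on which $|R_k|\lesssim|w|^2$, so $\|R_k\|_{L^1}\sim k^{(n+2)/2}$ and more generally the $L^1$ norms of its low-order derivatives grow with $k$. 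So "bounded uniformly in $k$" does not follow from the quoted inequality. This is precisely the point the paper flags as sketchy in the first proof of Corollary \ref{l7_cor} and then repairs in Lemma \ref{l7_lm_afterthought}: there one integrates by parts with the operator $\DD^T=\frac{i}{2}\sum_{j,k}(Q^{-1})_{jk}\frac{\dd}{\dd x_j}\frac{1}{x_k}$ on the punctured space, trading each power of $k$ for two orders of vanishing of the amplitude at the origin, and the error is controlled by the integrability of the resulting singularity $O(x^{N+1-2m})$ at $x=0$ (plus the convergence at infinity in the regularized sense of Remark \ref{l6_rem_convergence}). If you replace your uniform-boundedness claim by this argument (or any equivalent stationary-phase remainder estimate with the correct dependence on the support), the rest of your proof goes through.
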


\begin{remark}\label{l6_rem_Jac}
Note that, although the Hessian $f''(x_0)$ and the density of $\mu$ at a critical point depend on the choice of local coordinates near $x_0$, this dependence cancels out in the r.h.s. of (\ref{l6_statphase}): if we change the coordinate chart $(y_1,\ldots,y_n)\mapsto (y_1',\ldots,y_n')$, then $\det f''(x_0)$ changes by the square of the Jacobian of the transformation at $y=0$  (we assume that charts are centered at $x_0$), and $\mu_{x_0}$ changes by the Jacobian. Thus the product $|\det f''(x_0)|^{-\frac12}\cdot \mu_{x_0}$ is, in fact, invariant.
\end{remark}

\begin{lemma}\label{l6_lm1}
Let $g\in C^\infty_c(\RR)$ a compactly-supported function on $\RR$ and let 
$$I(k):=\int_{-\infty}^\infty dx\; g(x)e^{ikx}$$ Then $I(k)$ decays faster than any power of $k$ as $k\ra \infty$,
$$I(k)\underset{k\ra \infty}{\sim} O(k^{-\infty})$$
In other words, for any $N$ there exists  some $C_N\in \RR$ such that $ |k^N I(k)|\leq C_N$.
\end{lemma}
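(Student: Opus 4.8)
The plan is to prove this by repeated integration by parts, exploiting that $e^{ikx}$ is, up to the factor $ik$, its own antiderivative, while $g$ has compact support so that every boundary term vanishes.

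First I would write $e^{ikx}=\frac{1}{ik}\frac{d}{dx}e^{ikx}$ and integrate by parts once:
$$I(k)=\int_{-\infty}^\infty dx\; g(x)\,\frac{1}{ik}\frac{d}{dx}e^{ikx}=-\frac{1}{ik}\int_{-\infty}^\infty dx\; g'(x)\,e^{ikx},$$
where the boundary contribution $\left.\frac{1}{ik}g(x)e^{ikx}\right|_{-\infty}^{\infty}$ vanishes because $\mr{Supp}\,g$ is compact. Since $g\in C^\infty_c(\RR)$ implies $g'\in C^\infty_c(\RR)$, this step may be iterated, and after $N$ applications one obtains
$$I(k)=\left(\frac{i}{k}\right)^{N}\int_{-\infty}^\infty dx\; g^{(N)}(x)\,e^{ikx}.$$

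Next I would estimate the remaining integral trivially, using $|e^{ikx}|=1$:
$$|I(k)|\le \frac{1}{|k|^{N}}\int_{-\infty}^\infty dx\;|g^{(N)}(x)|=\frac{C_N}{|k|^{N}},\qquad C_N:=\int_{-\infty}^\infty dx\;|g^{(N)}(x)|<\infty,$$
the finiteness of $C_N$ being immediate since $g^{(N)}$ is continuous with compact support. As $N$ is arbitrary, this is exactly the claim $I(k)=O(k^{-\infty})$.

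There is no genuine obstacle here; the only points requiring a word of care are that the boundary terms in the integration by parts truly vanish (which follows at once from compactness of $\mr{Supp}\,g$) and that each successive derivative $g^{(N)}$ still lies in $C^\infty_c(\RR)$, so the induction closes. Alternatively one could simply observe that $I(k)$ is the Fourier transform of $g$ and invoke the fact that the Fourier transform preserves the Schwartz class, but the integration-by-parts argument is self-contained and furnishes the explicit constant $C_N=\|g^{(N)}\|_{L^1(\RR)}$.
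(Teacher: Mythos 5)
Your proof is correct and is essentially identical to the paper's: both integrate by parts $N$ times to move derivatives from $e^{ikx}$ onto $g$ (boundary terms vanishing by compact support) and then bound the result by $C_N=\|g^{(N)}\|_{L^1}$. Nothing to add.
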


\begin{proof}
We have
$$k^N\cdot I(k)=\int_{-\infty}^\infty dx \, g(x)\left(-i\frac{\dd}{\dd x}\right)^N e^{ikx} \underset{\mr{Stokes'}}{=}\int_{-\infty}^\infty dx\, e^{ikx} \left(i\frac{\dd}{\dd x}\right)^N g(x)$$
In the second step we have integrated by parts $N$ times, removing derivatives from the exponential and putting them on $g$.
The integral on the r.h.s. is certainly bounded by $\int_{-\infty}^\infty dx\; |\dd^N g(x)|=: C_N$. This proves the Lemma.
\end{proof}

\begin{lemma} \label{l6_lm2}
Let $g\in C^\infty_c(\RR^n)$ and let $f\in C^\infty(\RR^n)$ with no critical points on $\mr{Supp}\;g\subset \RR^n$. Then $$I(k):=\int_{\RR^n}d^n x\; g(x)e^{ik f(x)}\underset{k\ra\infty}\sim O(k^{-\infty})$$
\end{lemma}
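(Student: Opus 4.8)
The plan is to reduce Lemma \ref{l6_lm2} to the one-dimensional statement of Lemma \ref{l6_lm1} by an integration-by-parts argument analogous to the one used there, but now with the role of the vector field $-i\dd/\dd x$ played by a suitable first-order differential operator adapted to $f$. The key observation is that since $f$ has no critical points on the compact set $\mr{Supp}\;g$, the gradient $\nabla f$ (with respect to the standard Euclidean metric on $\RR^n$, say) is nowhere vanishing there, so we may form the operator
$$
D:=\frac{1}{ik\,|\nabla f|^2}\sum_{j=1}^n \frac{\dd f}{\dd x_j}\frac{\dd}{\dd x_j},
$$
which is well-defined on a neighborhood $U$ of $\mr{Supp}\;g$ (shrink $U$ so that $\nabla f\neq 0$ on $\bar U$, still compact) and which satisfies $D\, e^{ikf(x)}=e^{ikf(x)}$. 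Its formal adjoint $D^\dagger$ (with respect to Lebesgue measure) is again a first-order differential operator with coefficients that are smooth on $U$ and independent of $k$ except for an overall factor $1/k$; explicitly $D^\dagger h = -\frac{1}{ik}\sum_j \dd_j\!\left(\frac{\dd_j f}{|\nabla f|^2} h\right)$.

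The main steps are then as follows. First I would write, for any integer $N\geq 0$,
$$
I(k)=\int_{\RR^n} d^n x\; g(x)\,e^{ikf(x)} = \int_{\RR^n} d^n x\; g(x)\, D^N e^{ikf(x)},
$$
which is legitimate because $g$ is supported in $U$ where $D$ is defined. Next, integrate by parts $N$ times, moving the operator $D$ off the exponential and onto $g$; each integration by parts produces no boundary terms since $g$ has compact support inside $U$, and it replaces $D$ by $D^\dagger$. This yields
$$
I(k)=\int_{\RR^n} d^n x\; e^{ikf(x)}\, (D^\dagger)^N g(x).
$$
Now $(D^\dagger)^N g$ is a compactly supported smooth function (supported in $\mr{Supp}\;g$), and each application of $D^\dagger$ carries one factor of $1/k$, so $(D^\dagger)^N g = k^{-N} h_N(x)$ where $h_N \in C^\infty_c(U)$ is a fixed function (a differential polynomial in $g$ and derivatives of $f$) independent of $k$. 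Therefore
$$
|I(k)| \leq k^{-N}\int_{\RR^n} d^n x\; |h_N(x)| =: C_N\, k^{-N},
$$
and since $N$ was arbitrary this is precisely the assertion $I(k)\sim O(k^{-\infty})$.

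I expect the only real subtlety — and the step to be careful about — is the book-keeping in the integration by parts: one must check that $D^\dagger$ is genuinely a differential operator of order one with smooth coefficients on $U$ (which requires $|\nabla f|^2$ bounded away from zero on the compact $\bar U$, hence the shrinking of the domain), and that iterating it $N$ times keeps everything supported inside $\mr{Supp}\;g$ with exactly $N$ powers of $1/k$ pulled out. None of this is deep; it is the direct $n$-dimensional analogue of the trick in Lemma \ref{l6_lm1}, where $f(x)=x$ makes $D=-i\dd/\dd x$. One could alternatively localize via a partition of unity so that on each small patch one coordinate, say $x_1$, can be taken (after a local change of variables straightening the level sets of $f$) to satisfy $f=x_1$, reducing literally to Lemma \ref{l6_lm1} in the $x_1$ variable with the remaining variables as harmless parameters; this is essentially the same argument repackaged, and either route completes the proof.
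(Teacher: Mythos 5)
Your proof is correct, but it takes a genuinely different route from the one in the notes. You use the standard non-stationary-phase argument: the first-order operator $D=\frac{1}{ik|\nabla f|^2}\sum_j \dd_j f\,\dd_j$ reproduces $e^{ikf}$, and repeated integration by parts dumps arbitrarily many factors of $k^{-1}$ onto the (compactly supported, $k$-independent) amplitude; your formula for the transpose and the remark that one must first shrink to a compact neighborhood on which $|\nabla f|$ is bounded below are both right, and no boundary terms arise. The notes instead observe that $f$ restricted to $\mr{Supp}\,g$ is a submersion, push the density $g\,d^nx$ forward along $f$ to a smooth compactly supported $1$-form $f_*(g\,d^nx)$ on $\RR$, and then invoke the one-dimensional Lemma \ref{l6_lm1} for $\int_\RR e^{iky}f_*(g\,d^nx)$ — i.e. they integrate first over the level sets of $f$ and then over its values. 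The pushforward route is shorter and keeps all the integration by parts confined to the $1$D lemma, at the cost of invoking smoothness of fiber integrals along a submersion; your route is self-contained, is exactly the mechanism the notes themselves adopt later in the ``afterthought'' Lemma \ref{l7_lm_afterthought} (there with the operator $\DD$ adapted to a quadratic phase), and has the advantage of producing explicit constants $C_N$ in terms of derivatives of $f$ and $g$. Your closing remark about straightening level sets by a partition of unity is essentially the coordinate-chart version of the paper's pushforward argument, so you have in effect identified both proofs.
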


\begin{proof}
Since $f$ has no critical points on $\mr{Supp}\;g$, it defines a submersion $f:\mr{Supp}\;g\ra \RR$. Thus, the pushforward (fiber integral) $f_*(d^n x\,g(x))\in \Omega^1_c(\RR)$ is a smooth 1-form on $\RR$. Thus, we can calculate $I(k)$ by first integrating over the level sets of $f$, $f(x)=y$ (the same as computing the pushforward $f_*$) and then integrating over the values $y$ of $f$:
$$I(k)=\int_\RR e^{iky} f_*(g \,d^n x)$$
This integral behaves as $O(k^{-\infty})$ by Lemma \ref{l6_lm1}.
\end{proof}

\begin{lemma} \label{l6_lm3}
Let $g\in C^\infty_c(\RR^n)$ such that $g$ and its derivatives of all orders vanish at $x=0$. Let $Q(x,x)$ be a non-degenerate quadratic form on $\RR^n$. Then:
$$I(k):=\int_{\RR^n}d^n x\; g(x)\, e^{ik Q(x,x)}\underset{k\ra\infty}\sim O(k^{-\infty})$$ 
\end{lemma}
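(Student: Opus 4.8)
The plan is to exploit the hypothesis that $g$ vanishes to infinite order at the origin -- which is the unique critical point of the function $x\mapsto Q(x,x)$ -- by repeated integration by parts against a first-order differential operator $L$ tailored to $Q$, built so that $L\,e^{ikQ(x,x)} = e^{ikQ(x,x)}$. Each integration by parts will then gain one factor of $k^{-1}$, while the infinite-order vanishing of $g$ will absorb the singularity that $L$ has at the origin.

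First I would construct the operator. Identifying $Q$ with its symmetric matrix, one has $\partial_j Q(x,x) = 2(Qx)_j$, and, $Q$ being non-degenerate and hence invertible, $|Qx| \geq c\,|x|$ with $c := \|Q^{-1}\|_{\mathrm{op}}^{-1} > 0$. Set $L := \frac{1}{2ik}\sum_{j=1}^{n}\frac{(Qx)_j}{|Qx|^2}\,\partial_j$; its coefficients equal $(2ik)^{-1}$ times functions that are smooth on $\RR^n\setminus\{0\}$ and positively homogeneous of degree $-1$ in $x$ (so that taking $|\beta|$ derivatives of a coefficient produces something $O(|x|^{-1-|\beta|})$). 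A one-line computation gives $L\,e^{ikQ(x,x)} = e^{ikQ(x,x)}$, hence $L^m e^{ikQ(x,x)} = e^{ikQ(x,x)}$ for every $m\geq 1$.

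Next I would integrate by parts on the region $\{|x| > \epsilon\}$, where everything is smooth: peeling one factor $L$ at a time off $e^{ikQ(x,x)} = L^m e^{ikQ(x,x)}$ and using the divergence theorem transfers the $m$ copies of $L$ onto $g$, turning the integral into $\int_{\{|x|>\epsilon\}} \big((L^\dagger)^m g\big)(x)\,e^{ikQ(x,x)}\,d^nx$ plus $m$ spherical boundary integrals over $\{|x| = \epsilon\}$, where $L^\dagger$ is the formal adjoint of $L$. The technical heart is then an induction on $m$ showing that $(L^\dagger)^m g = k^{-m}\sum_{|\alpha|\leq m} b_\alpha(x)\,\partial^\alpha g(x)$ with each $b_\alpha$ smooth away from $0$ and $|\partial^\beta b_\alpha(x)| \leq C_{\alpha,\beta}\,|x|^{-(2m-|\alpha|)-|\beta|}$: each of the $m$ applications of $\partial_j$ falls either on a coefficient (lowering its homogeneity degree by $2$, keeping $|\alpha|$) or on $g$ (lowering the degree by $1$ but raising $|\alpha|$ by $1$). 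Since $g$ and all its derivatives vanish to infinite order at $0$, Taylor's theorem gives $|\partial^\alpha g(x)| \leq C_{\alpha,M}\,|x|^M$ for every $M$; taking $M$ large shows that $(L^\dagger)^m g$ is bounded (it extends continuously by $0$ at the origin), is supported in $\mathrm{Supp}\,g$, and obeys $\|(L^\dagger)^m g\|_{L^1} \leq \widetilde{C}_m\,k^{-m}$ with $\widetilde{C}_m$ independent of $k$, and simultaneously that each boundary integral is $O(\epsilon^{\,\delta})$ for some $\delta > 0$ and so vanishes as $\epsilon \to 0$. Passing to the limit $\epsilon \to 0$ (dominated convergence on both sides) then gives $I(k) = \int_{\RR^n} \big((L^\dagger)^m g\big)(x)\,e^{ikQ(x,x)}\,d^nx$, whence $|I(k)| \leq \widetilde{C}_m\,k^{-m}$. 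As $m$ is arbitrary, $I(k) = O(k^{-\infty})$.

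I expect the main obstacle to be the bookkeeping in that induction -- checking precisely that iterating $L^\dagger$ produces coefficients singular at the origin only to the controlled order $|x|^{-(2m-|\alpha|)}$, so that they are genuinely dominated by the infinite-order vanishing of $\partial^\alpha g$ (and, in the boundary terms, by the sphere-area factor $\epsilon^{n-1}$). Once this estimate is secured, the vanishing of the boundary contributions and the passage to the limit are routine. One could alternatively split $g$ by a cutoff supported near $0$ and handle the complementary piece directly by Lemma~\ref{l6_lm2}, but since $L$ is singular only at the origin this localisation is not actually needed.
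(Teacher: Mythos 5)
Your proof is correct, but it does not follow the route the paper takes for this lemma. The paper's own proof first treats the positive-definite case by pushing the density $g\,d^nx$ forward along the submersion $Q:\RR^n\setminus\{0\}\to(0,\infty)$, observing that the pushforward has vanishing $\infty$-jet at $y=0$, and then reducing to the one-dimensional Lemma \ref{l6_lm1}; the indefinite case is handled by approximating $g$ by sums of products $g'(x_1,\dots,x_p)g''(x_{p+1},\dots,x_n)$ adapted to a Morse form of $Q$ and checking uniformity of the bounds. Your argument instead builds a first-order operator $L$ with $L\,e^{ikQ}=e^{ikQ}$, whose coefficients are homogeneous of degree $-1$ and singular only at the origin, and trades powers of $k$ for controlled singularities of $(L^\dagger)^m g$ that the infinite-order vanishing of $g$ absorbs. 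This is essentially the ``afterthought'' the author inserts immediately after Corollary \ref{l7_cor} (Lemma \ref{l7_lm_afterthought}), which is explicitly advertised as a cleaner replacement for exactly the reason you identify: it avoids the splitting of coordinates into positive and negative eigenspaces of $Q$ and the density argument. Your version of the operator, with coefficients $(Qx)_j/|Qx|^2$, is in fact tidier than the one written in the paper's afterthought (whose coefficients $1/x_j$ are singular on all coordinate hyperplanes, not just at the origin, and whose normalization as written returns $nk$ rather than $k$), and your homogeneity bookkeeping $|\partial^\beta b_\alpha(x)|\le C|x|^{-(2m-|\alpha|)-|\beta|}$ is the right induction and closes correctly. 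The trade-off is that your route requires the explicit $\epsilon$-excision and boundary-term estimates near the origin, whereas the paper's pushforward argument delegates all the work to the already-proved one-dimensional statement; what you buy is a single uniform argument valid for indefinite $Q$ with no approximation step.
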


\marginpar{\LARGE{Lecture 7, 09/14/2016.}}
\begin{proof}
First consider the case when $Q$ is \textit{positive-definite}. 
Then $Q:\RR^n-\{0\}\ra (0,\infty)$ is a submersion; we can calculate $I(k)$, similarly to the proof of Lemma \ref{l6_lm2}, by integrating first over the level sets of $Q$ and then over values $y$ of $Q$:
$$I(k)=\int_0^\infty e^{ik y} Q_*(d^n x\; g(x))$$
The pushforward $Q_*(d^n x\; g(x))\in \Omega^1_c[0,\infty)$ has vanishing $\infty$-jet at $y=0$ (because of the assumption on $\infty$-jet of $g$ at the origin $x=0$). Thus one can repeat the proof of Lemma \ref{l6_lm1} and no boundary terms at $y=0$ will appear when performing integration by parts multiple times. Thus we obtain $I(k)\underset{k\ra\infty}\sim O(k^{-\infty})$.

For $Q$ not positive-definite, we can assume without loss of generality (by making a linear change of coordinates) that $Q$ has Morse form $Q=\sum_{i=1}^p x_i^2-\sum_{i=p+1}^n x_i^2$. We can present $g(x)$ as a limit of finite sums of functions of form $g'(x_1,\ldots,x_p)\cdot g''(x_{p+1},\ldots,x_n)$ (since $ C^\infty_c(\RR^p)\otimes C^\infty_c(\RR^{n-p})$ is dense in $C^\infty_c(\RR^n)$). For such products we have $\int_{\RR^n} d^n x \;g'\cdot g''\, e^{ik Q(x,x)}=\left(\int_{\RR^p}dx_1\cdots dx_p g'(x_1,\ldots,x_p)e^{ik (x_1^2+\cdots+x_p^2)}\right)\cdot \left(\int_{\RR^{n-p}}dx_{p+1}\cdots dx_n g''(x_{p+1},\ldots,x_n)e^{-ik (x_{p+1}^2+\cdots+x_n^2)}\right)\sim O(k^{-\infty})$ by the result in the positive-definite case. One can check that the bound we get is uniform and one can pass to the limit.

\end{proof}

\begin{corollary}\label{l7_cor}
Let $g\in C^\infty_c(\RR^n)$ and let $Q(x,x)$ be a non-degenerate quadratic form on $\RR^n$. Let 
\begin{equation}\label{l7_e1}
I(k):=\int_{\RR^n}d^n x\;g(x)\; e^{ik Q(x,x)}
\end{equation}
Then:
\begin{enumerate}[(i)]
\item \label{l7_i} $I(k)$ modulo $O(k^{-\infty})$-terms depends only on the $\infty$-jet of $g$ at $x=0$.
\item \label{l7_ii} In particular $I(k)=g(0)\cdot\left(\frac{\pi}{k}\right)^{\frac{n}{2}}|\det Q|^{-\frac12}e^{\frac{\pi i}{4}\mr{sign}\, Q}+O(k^{-\frac{n}{2}-1})$
\end{enumerate}

\end{corollary}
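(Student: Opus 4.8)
The plan is to dispatch part~(\ref{l7_i}) in one line and then spend the work on part~(\ref{l7_ii}). For~(\ref{l7_i}): if $g_1,g_2\in C^\infty_c(\RR^n)$ have the same $\infty$-jet at the origin, then $g_1-g_2\in C^\infty_c(\RR^n)$ has vanishing $\infty$-jet at $0$, so Lemma~\ref{l6_lm3} immediately gives $\int d^nx\,(g_1-g_2)\,e^{ik Q(x,x)}=O(k^{-\infty})$, i.e. the two integrals agree modulo $O(k^{-\infty})$. (The computation for~(\ref{l7_ii}) below will actually re-prove this with a full asymptotic series, so one could present~(\ref{l7_i}) as a corollary of~(\ref{l7_ii}) instead.)

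For~(\ref{l7_ii}) I would pass to the Fourier side, where the oscillatory $x$-integral becomes an honest shifted Gaussian. Writing $g(x)=(2\pi)^{-n}\int d^n\xi\,\hat g(\xi)\,e^{i\langle\xi,x\rangle}$ (legitimate since $g\in C^\infty_c(\RR^n)\subset\mc{S}(\RR^n)$, hence $\hat g\in\mc{S}(\RR^n)$), substituting into $I(k)$ and exchanging the order of integration, one is reduced to $I(k)=(2\pi)^{-n}\int_{\RR^n} d^n\xi\,\hat g(\xi)\,J_k(\xi)$ with $J_k(\xi)=\int_{\RR^n} d^nx\,e^{ik Q(x,x)+i\langle\xi,x\rangle}$. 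I would evaluate $J_k(\xi)$ by completing the square, $k Q(x,x)+\langle\xi,x\rangle=k Q\bigl(x+\tfrac1{2k}Q^{-1}\xi,\,x+\tfrac1{2k}Q^{-1}\xi\bigr)-\tfrac1{4k}\langle\xi,Q^{-1}\xi\rangle$, followed by the Fresnel formula~(\ref{l6_e3}) applied to the quadratic form $k Q$ (so $\mr{sign}\,(kQ)=\mr{sign}\,Q$ and $|\det(kQ)|=k^n|\det Q|$); this yields $J_k(\xi)=\bigl(\tfrac{\pi}{k}\bigr)^{n/2}|\det Q|^{-1/2}e^{\frac{\pi i}{4}\mr{sign}\,Q}\,e^{-\frac{i}{4k}\langle\xi,Q^{-1}\xi\rangle}$.

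Feeding this back, $I(k)=\bigl(\tfrac{\pi}{k}\bigr)^{n/2}|\det Q|^{-1/2}e^{\frac{\pi i}{4}\mr{sign}\,Q}\cdot(2\pi)^{-n}\int d^n\xi\,\hat g(\xi)\,e^{-\frac{i}{4k}\langle\xi,Q^{-1}\xi\rangle}$. Since $\hat g$ is Schwartz I can bound $\bigl|e^{-\frac{i}{4k}\langle\xi,Q^{-1}\xi\rangle}-1\bigr|\le\tfrac{1}{4k}|\langle\xi,Q^{-1}\xi\rangle|$ against the finite integral $\int|\hat g(\xi)|\,|\langle\xi,Q^{-1}\xi\rangle|\,d^n\xi$, so $(2\pi)^{-n}\int\hat g(\xi)\,e^{-\frac{i}{4k}\langle\xi,Q^{-1}\xi\rangle}d^n\xi=(2\pi)^{-n}\int\hat g(\xi)\,d^n\xi+O(k^{-1})=g(0)+O(k^{-1})$, which is exactly the asserted expansion. (Expanding $e^{-\frac{i}{4k}(\cdots)}$ to all orders and recognizing $(2\pi)^{-n}\int\hat g(\xi)\langle\xi,Q^{-1}\xi\rangle^m d^n\xi$ as a constant-coefficient differential operator of order $2m$ applied to $g$ at $0$ produces the full asymptotic expansion — visibly depending only on the jet of $g$ at $0$, which re-proves~(\ref{l7_i}).)

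The step I expect to be the real obstacle is the analytic bookkeeping: the interchange of integrations and the very meaning of the Fresnel integrals must be handled in the regularized sense of Remark~\ref{l6_rem_convergence}. Concretely I would insert $e^{-\epsilon\|x\|^2}$ before swapping, so that $J_k^\epsilon(\xi):=\int e^{ik Q(x,x)-\epsilon\|x\|^2+i\langle\xi,x\rangle}d^nx$ is an ordinary Gaussian integral equal to $\pi^{n/2}\det(\epsilon I-ik Q)^{-1/2}e^{-\frac14\langle\xi,(\epsilon I-ik Q)^{-1}\xi\rangle}$; one checks $\mr{Re}\,\langle\xi,(\epsilon I-ik Q)^{-1}\xi\rangle\ge0$ (diagonalize $Q$), so $|J_k^\epsilon(\xi)|$ is bounded uniformly in $\epsilon\in(0,1]$ and $\xi$, while $J_k^\epsilon(\xi)\to J_k(\xi)$ pointwise; dominated convergence with dominating function $\mr{const}\cdot|\hat g(\xi)|\in L^1$ then legitimizes both the Fubini exchange and the limit $\epsilon\to0$ simultaneously. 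Everything else is routine.
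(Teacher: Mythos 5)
Your argument is correct, and part (\ref{l7_i}) is handled exactly as in the paper (as an immediate consequence of Lemma \ref{l6_lm3}). For part (\ref{l7_ii}), however, you take a genuinely different route. The paper's proof writes $g(x)=g(0)+(x,dg(0))+R(x)$, kills the linear term by parity, and estimates the remainder by the rescaling $x=y/\sqrt{k}$ — a step the text itself flags as sketchy (the interchange of limit and integration is not justified there) and which is later repaired by the ``afterthought'' Lemma \ref{l7_lm_afterthought}, via the differential operator $\mc{D}$ and repeated integration by parts around the punctured origin. You instead pass to the Fourier side, complete the square to get the exact identity $I(k)=\left(\frac{\pi}{k}\right)^{\frac n2}|\det Q|^{-\frac12}e^{\frac{\pi i}{4}\mr{sign}\,Q}\cdot(2\pi)^{-n}\int \hat g(\xi)\,e^{-\frac{i}{4k}\langle\xi,Q^{-1}\xi\rangle}\,d^n\xi$, and read off the leading term and the $O(k^{-1})$ correction from $|e^{-i\theta}-1|\le|\theta|$ and the integrability of $|\hat g(\xi)|\,|\langle\xi,Q^{-1}\xi\rangle|$. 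Your regularization bookkeeping is sound: for $\epsilon>0$ the double integral is absolutely convergent so Fubini applies, $\mr{Re}\,\langle\xi,(\epsilon I-ikQ)^{-1}\xi\rangle\ge0$ follows by diagonalizing $Q$, and $|\det(\epsilon I-ikQ)^{-1/2}|\le\prod_j(k|\lambda_j|)^{-1/2}$ gives a dominating function independent of $\epsilon$. What your approach buys is a clean, fully rigorous proof in one pass that also yields the complete asymptotic expansion $\sum_m \frac{1}{m!}\left(\frac{i}{4k}\right)^m Q^{-1}(\partial,\partial)^m g\big|_{x=0}$ (after Taylor-expanding the exponential with remainder, using that all moments of $\hat g$ are finite) — this is essentially the Fresnel analogue of the heat-operator formula (\ref{l11_e1}) appearing later in the notes, and it makes the jet-dependence of statement (\ref{l7_i}) manifest. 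What it costs is the reliance on Fourier inversion and the Schwartz class, whereas the paper's (repaired) argument is purely local integration by parts; the paper's Lemma \ref{l7_lm_afterthought} also directly handles an arbitrary Schwartz $g$ minus a Taylor polynomial, which is the form reused in the proof of Theorem \ref{thm: statphase}.
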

\begin{proof}
(\ref{l7_i}) is an immediate consequence of Lemma \ref{l6_lm3}. 

For (\ref{l7_ii}), 
write $g(x)=g(0)+(x,dg(0))+R(x)$ -- a constant term, a linear term (which, being an odd function of $x$, vanishes when integrated with $e^{iQ(x,x)}$), and the ``error term'' which has zero of order two at $x=0$. 
Thus, we have 
$$I(k)=g(0)\cdot \int_{\RR^n}d^n x \; e^{ik Q(x,x)}+\underbrace{\int_{\RR^n}d^n x\; R(x) e^{ik Q(x,x)}}_{r(k)}$$ 
The first term on the r.h.s. is the standard Fresnel integral and we need to show that the error $r(k)$ behaves as $O(k^{-\frac{n}{2}-1})$. Write
$$r(k)=\int_{\RR^n}d^n x\;R(x) e^{ik Q(x,x)}=k^{-\frac{n}{2}-1}\int_{\RR^n}d^n y\; k R\left(\frac{y}{\sqrt{k}}\right) e^{iQ(y,y)}$$
Here we made a change $x=\frac{y}{\sqrt{k}}$. Integrand on the r.h.s. has a well-defined limit as $k\ra\infty$ (since $R$ has a zero of order $2$ at the origin) and converges to $e^{iQ(y,y)}$ times some quadratic form in $y$.\footnote{This is a bit sketchy: one has to explain why integration and limit can be interchanged; see a better argument below - Lemma \ref{l7_lm_afterthought}.} Thus $r(k)$ behaves as $k^{-\frac{n}{2}-1}$ times an integral which converges in the sense of Remark \ref{l6_rem_convergence}.
\end{proof}
The general idea is that in the integral (\ref{l7_e1}) one can replace $g$ with a piece of its Taylor series at the origin and the error will be estimated by the contribution of the first discarded term of the Taylor series (or the next one if the discarded term was of odd degree).

\marginpar{\textbf{An afterthought:} better/cleaner way (instead of Lemma \ref{l6_lm3} and Corollary \ref{l7_cor}).}
\begin{lemma}\label{l7_lm_afterthought} Let $g$ be a Schwartz class function on $\RR^n$, let $g_N$ be the Taylor series for $g$ truncated at $N$-th order for arbitrary $N$, so that $h(x):=g(x)-g_N(x)\underset{x\ra 0}\sim O(x^{N+1})$,  and let $Q(x,x)$ be a non-degenerate quadratic form on $\RR^n$. Then 
\begin{equation}\label{l7_e-1}
I(k):=\int_{\RR^n} d^n x\; h(x) e^{ik Q(x,x)} \underset{k\ra \infty}\sim O(k^{-\frac{n}{2}-\left[\frac{N+2}{2}\right]})
\end{equation}
\end{lemma}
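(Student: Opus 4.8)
The plan is to reduce everything to the model stationary-phase integral with the exact quadratic phase $Q(x,x)$, for which a complete asymptotic expansion can be produced by Fourier transform, and then to observe that subtracting the Fresnel integral of the Taylor polynomial $g_N$ cancels precisely the first $\left[\frac{N+2}{2}\right]$ terms of that expansion. First, since $g_N$ is a polynomial, $h=g-g_N$ grows only polynomially at infinity, so $I(k)$ is well-defined via the $\varepsilon$-regularization of Remark \ref{l6_rem_convergence} and splits as
\begin{equation*}
I(k)=J(k)-\int_{\RR^n}d^nx\;g_N(x)\,e^{ikQ(x,x)},\qquad J(k):=\int_{\RR^n}d^nx\;g(x)\,e^{ikQ(x,x)} ,
\end{equation*}
where now $g$ is genuinely Schwartz.

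Next I would compute $J(k)$ via Parseval. Completing the square in the exponent and using the multidimensional Fresnel formula (\ref{l6_e3}) gives $\int_{\RR^n}e^{ikQ(x,x)+ix\cdot\xi}\,d^nx=(\pi/k)^{n/2}e^{\frac{\pi i}{4}\mr{sign}\,Q}|\det Q|^{-1/2}\,e^{-\frac{i}{4k}Q^{-1}(\xi,\xi)}$, so writing $g$ through its Fourier transform and exchanging the order of integration (legitimate after regularization) yields
\begin{equation*}
J(k)=\frac{(\pi/k)^{n/2}e^{\frac{\pi i}{4}\mr{sign}\,Q}}{(2\pi)^n|\det Q|^{1/2}}\int_{\RR^n}\hat g(\xi)\,e^{-\frac{i}{4k}Q^{-1}(\xi,\xi)}\,d\xi .
\end{equation*}
Now set $J:=\left[\frac{N+2}{2}\right]$ and expand the exponential under the integral using the elementary bound $\bigl|e^{-iu}-\sum_{j=0}^{J-1}\frac{(-iu)^j}{j!}\bigr|\le\frac{|u|^J}{J!}$, valid since $u=\frac{1}{4k}Q^{-1}(\xi,\xi)$ is real. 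Because $\hat g$ is Schwartz the remainder contributes $O(k^{-J})$; and the $j$-th term of the sum, using $\widehat{(-\Delta_Q)^j g}(\xi)=Q^{-1}(\xi,\xi)^j\hat g(\xi)$ with $\Delta_Q:=\sum_{a,b}(Q^{-1})_{ab}\dd_a\dd_b$ and Fourier inversion at $0$, equals $\frac{(-i)^j}{(4k)^j j!}(2\pi)^n[(-\Delta_Q)^j g](0)$. Hence
\begin{equation*}
J(k)=\frac{(\pi/k)^{n/2}e^{\frac{\pi i}{4}\mr{sign}\,Q}}{|\det Q|^{1/2}}\left(\sum_{j=0}^{J-1}\frac{(-i)^j}{(4k)^j j!}\,[(-\Delta_Q)^j g](0)+O(k^{-J})\right).
\end{equation*}

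It then remains to show that $\int g_N\,e^{ikQ}\,d^nx$ equals exactly the bracketed finite sum (up to $O(k^{-\infty})$), so that it cancels in $I(k)$, leaving $I(k)=O(k^{-n/2-J})=O(k^{-\frac n2-[\frac{N+2}2]})$. To this end pick $\chi\in C^\infty_c(\RR^n)$ with $\chi\equiv1$ near $0$ and write $g_N=\chi g_N+(1-\chi)g_N$. On $\mr{Supp}(1-\chi)$ the phase $Q(x,x)$ has no critical point, so $\int(1-\chi)g_N\,e^{ikQ}\,d^nx=O(k^{-\infty})$ by iterated integration by parts exactly as in Lemma \ref{l6_lm2} (the polynomial growth of the amplitude is harmless because each integration by parts divides by $|Qx|^2$). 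For the compactly supported, hence Schwartz, remainder $\chi g_N$ the same Parseval computation applies; but the coefficients $[(-\Delta_Q)^j(\chi g_N)](0)$ depend only on the germ of $\chi g_N=g_N$ at $0$, and $[(-\Delta_Q)^j g_N](0)$ equals $[(-\Delta_Q)^j g](0)$ for $2j\le N$ (since $g_N$ is the order-$N$ Taylor polynomial of $g$, and $\Delta_Q^j$ evaluated at $0$ only sees the degree-$2j$ part) and vanishes for $2j>N$ for degree reasons --- while $2j\le N\iff j\le J-1$. Thus the expansion of $\int\chi g_N\,e^{ikQ}$ is the finite sum $\sum_{j=0}^{J-1}\frac{(-i)^j}{(4k)^j j!}[(-\Delta_Q)^j g](0)$ times the same prefactor, with $O(k^{-\infty})$ error, and the subtraction gives the claim.

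The main obstacle, and the step deserving the most care, is the bookkeeping in the last paragraph: matching the stationary-phase coefficients of $J(k)$ --- naturally expressed as $(-\Delta_Q)^j$ applied to the \emph{full} amplitude $g$ at the origin --- with the Fresnel moments of the Taylor polynomial $g_N$. This identification is precisely Wick's lemma for Gaussian/Fresnel integrals, combined with the two trivial observations that differentiating $g_N$ more than $N$ times annihilates it and that $\chi g_N$ agrees with $g$ to order $N$ at $0$. The analytic subtleties (Fourier transform of $e^{ikQ(x,x)}$, Fubini, the $\varepsilon\to0$ limit) are routine given Remark \ref{l6_rem_convergence} and the Fresnel formulae already established.
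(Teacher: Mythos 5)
Your proof is correct, but it takes a genuinely different route from the one in the notes. The notes work entirely in $x$-space: they introduce the singular first-order operator $\mc{D}\propto\sum(Q^{-1})_{jk}\frac{1}{x_j}\frac{\dd}{\dd x_k}$ satisfying $\mc{D}e^{ikQ}\propto k\,e^{ikQ}$, integrate by parts $m$ times, and use the vanishing order of $h$ at the origin to control the resulting $\frac{1}{x}$-singularities; this yields only the weaker bound $O(k^{-[\frac{N+n}{2}]})$, which is then bootstrapped to the stated one by comparing $g_N$ with $g_{N+3}$ and invoking the explicit homogeneity (and parity) of polynomial Fresnel moments. You instead pass to Fourier space \`a la H\"ormander: Parseval turns $\int g\,e^{ikQ}$ into $\int\hat g(\xi)e^{-\frac{i}{4k}Q^{-1}(\xi,\xi)}d\xi$ up to the explicit prefactor, and Taylor-expanding the quadratic exponential produces the \emph{full} stationary-phase expansion with coefficients $\frac{(-i)^j}{(4k)^jj!}[(-\Delta_Q)^jg](0)$; the lemma then follows by checking that the Fresnel integral of $g_N$ reproduces exactly the terms with $2j\le N$, i.e.\ $j\le[\frac{N+2}{2}]-1$, and nothing else. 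Your bookkeeping here is right: $(-\Delta_Q)^j$ evaluated at $0$ sees only the degree-$2j$ Taylor part, terms with $2j>N$ annihilate the degree-$N$ polynomial $g_N$, and $2[\frac{N+2}{2}]>N$ in both parities, so the first surviving term sits precisely at order $k^{-n/2-[\frac{N+2}{2}]}$. What each approach buys: the notes' argument is more elementary (no Fourier inversion, no interchange of an oscillatory double integral — which in your argument does need the $\epsilon$-regularization of Remark \ref{l6_rem_convergence} to justify Fubini, as you correctly flag), while yours is sharper in output, since it delivers the explicit expansion coefficients — which are exactly the Wick-contraction coefficients the course develops later — and avoids the two-step bootstrap entirely.
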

\begin{proof}
Consider the differential operator $\mc{D}=-\frac{i}{2}\sum_{j,k=1}^n  (Q^{-1})_{jk} \frac{1}{x_j}  \frac{\dd}{\dd x_k}$  and its transpose $\mc{D}^T=\frac{i}{2}\sum_{j,k=1}^n (Q^{-1})_{jk} \frac{\dd}{\dd x_j} \frac{1}{x_k}$, acting on functions on $\RR^n$. Operator $\mc{D}$ is constructed so that we have the following property: $\mc{D}\; e^{ik Q(x,x)}=k\cdot e^{ik Q(x,x)}$. Thus, multiplying $I(k)$ by a power of $k$, we have
\begin{equation}\label{l7_e0}
k^m I(k)=\int_{\RR^n} d^n x\; h(x)\, \DD^m e^{ik Q(x,x)}= \int_{\RR^n} d^n x\; e^{ik Q(x,x)} (\DD^{T})^m h(x)
\end{equation}
Where we have integrated by parts $m$ times (we think of point $x=0$ as being punctured out of the integration domain). Note that $(\DD^T)^m h(x)\underset{x\ra 0}\sim O(x^{N+1-2m})$ and thus on the r.h.s. of (\ref{l7_e0}) we get an integrable singularity at the origin iff $N+1-2m>-n$ (e.g. $m=\left[\frac{N+n}{2}\right]$ satisfies this inequality); convergence at infinity holds in the sense of Remark \ref{l6_rem_convergence}. Thus we have proven that $I(k)\sim O(k^{-\left[\frac{N+n}{2}\right]})$. 

This is a slightly weaker estimate than claimed in (\ref{l7_e-1}); one can get the improved estimate considering a truncation of the Taylor series for $g$ three steps further, $g_{N+3}$. Then, by the result that we have proven, 
\begin{equation}\label{l7_e'}
\int_{\RR^n} d^n x\; (g-g_{N+2}) e^{ikQ(x,x)}\sim O(k^{-\left[\frac{N+n+3}{2}\right]})
\end{equation} 
(which is a better or equivalent estimate to the r.h.s. of (\ref{l7_e-1})).
On the other hand $g_N-g_{N+3}$ is a polynomial in $x$ containing monomials of degrees $N+1$, $N+2$ and $N+3$ only. Thus, 
\begin{equation}\label{l7_e''}
\int_{\RR^n}d^n x\; (g_N-g_{N+2})e^{ikQ(x,x)} =C_{N+1}k^{-\frac{n+N+1}{2}}+C_{N+2}k^{-\frac{n+N+2}{2}}+C_{N+3}k^{-\frac{n+N+3}{2}}
\end{equation} 
where the constant $C_{N+j}$ vanishes if $N+j$ is odd for $j=1,2,3$. Thus, (\ref{l7_e'}) and (\ref{l7_e''}) together imply (\ref{l7_e-1}).
\end{proof}
In particular: (\ref{l7_ii}) of Corollary \ref{l7_cor} is the $N=0$ case of (\ref{l7_e-1}).
Also note that Lemma \ref{l6_lm3} is a special case of the new Lemma (for $g$ with vanishing jet at the origin and $N$ arbitrarily large) - here we avoid splitting coordinates into positive and negative eigenspaces of $Q$ (and the painful discussion of approximating $g$ by products) by the trick with the differential operator $\DD$.
\marginpar{\textbf{End of the afterthought.}
}

\begin{proof}[Proof of Theorem \ref{thm: statphase}]
We can assume without loss of generality that $X$ is compact (since we only care about $\mr{Supp}\;g$ anyway which is compact by assumption). Choose a covering $\{U_\alpha\}$ of $X$ by open subsets such that 
\begin{itemize}
\item each $U_\alpha$ contains at most one critical point of $f$,
\item each critical point of $f$ is contained in exactly one $U_\alpha$.
\end{itemize}
Choose a partition of unity $\{\psi_\alpha\in C^\infty(X)\}$ subordinate to the covering $\{U_\alpha\}$, i.e.
\begin{itemize}
\item $\mr{Supp}\; \psi_\alpha\subset U_\alpha$,
\item $\psi_\alpha\geq 0$,
\item $\sum_\alpha \psi_\alpha=1$.
\end{itemize}
Then $I(k)=\sum_\alpha I_\alpha(k)$ with $I_\alpha(k)=\int_{U_\alpha}\mu\psi_\alpha(x)\, e^{ik f(x)}$.
We should consider two case:
\begin{enumerate}[(i)]
\item $U_\alpha$ does not contain critical points of $f$. Then $I_\alpha(k)\sim O(k^{-\infty})$ by Lemma \ref{l6_lm2}.
\item $U_\alpha$ contains a critical point $x_0$ of $f$. By Morse Lemma, we can introduce local coordinates $(y_1,\ldots,y_n)$ on $U_\alpha$ such that $f=f(x_0)+\underbrace{y_1^2+\cdots+y_p^2-y_{p+1}^2-\cdots-y_n^2}_{Q(y,y)}$. Then, by (\ref{l7_ii}) of Corollary \ref{l7_cor} (or by Lemma \ref{l7_lm_afterthought} for $N=0$), we have
\begin{multline*}
I_\alpha(k)=\int_{\RR^n}d^n y\;\rho(y)\;\psi_\alpha\;e^{ikf(x_0)+ik Q(y,y)}\sim \\
\sim \rho(0)e^{ik f(x_0)}\left(\frac{\pi}{k}\right)^{\frac{n}{2}}|\det Q|^{-\frac12} e^{\frac{\pi i}{4}\mr{sign}\,Q}+ O(k^{-\frac{n}{2}-1}) 
\end{multline*}
where $d^n y\,\rho(y)$ is $\mu$ expressed in coordinates $y$. Note that $Q_{ij}=\frac12 \frac{\dd^2}{\dd y_i \dd y_j} f$, thus 
$$I_\alpha(k)\sim \mu_{x_0} e^{ik f(x_0)}\left(\frac{2\pi}{k}\right)^{\frac{n}{2}}|\det f''(x_0)|^{-\frac12}e^{\frac{\pi i}{4}\mr{sign}\,f''(x_0)}+O(k^{-\frac{n}{2}-1})$$
\end{enumerate}
Summing over $\alpha$, we obtain the stationary phase formula for $I(k)$. Note that, by Remark \ref{l6_rem_Jac}, it does not matter that we have chosen the Morse chart around every critical point: the result is independent of this choice.

\end{proof}

\subsection{Gaussian expectation values. Wick's lemma}
Consider normalized expectation values with respect to Gaussian measure
\begin{equation}\label{l7_expectation}
\ll p \gg:= \frac{\int_{\RR^n} d^n x\; e^{-\frac{1}{2} Q(x,x)}\cdot p(x)}{\int_{\RR^n} d^n x\; e^{-\frac{1}{2} Q(x,x)}} 
\end{equation}
with $Q(x,x)=\sum_{i,j}Q_{ij}x_i x_j$ a positive-definite quadratic form on $\RR^n$, for $p(x)$ a polynomial on $\RR^n$.

\begin{definition} For $H$ a finite set with even number of elements we call partitions of $H$ into two-element subsets \emph{perfect matchings} on $H$.
\end{definition}
Note that a perfect matching is the same as an involution $\gamma$ on $H$ with no fixed points. Then the two-element subsets are the orbits of $\gamma$.
\begin{example} On the set $\{1,2,3,4\}$ there exist three different perfect matchings:
$$\{1,2\}\cup\{3,4\},\qquad \{1,3\}\cup \{2,4\},\qquad \{1,4\}\cup \{2,3\}$$
\end{example}
More generally, on the set of order $2m$ there are $(2m-1)!!=1\cdot 3\cdot 5\cdots (2m-1)$ perfect matchings.\footnote{Indeed, the first element of the set has to be matched with one of $2m-1$ other elements, first element among those left has to be matched with one of $(2m-3)$ remaining elements etc.}

The following lemma allows one to calculate the expectation $\ll p \gg$ for any monomial (and hence every polynomial) $p$.
\begin{lemma}[``Wick's lemma'']\footnote{The original Wick's lemma, though a similar statement, was formulated in the context of expressing words constructed out of creation and annihilation operators in terms of normal ordering.}\label{l7_lm_Wick}
\begin{enumerate}[(i)]
\item \label{l7_Wick (i)} $\ll 1\gg=1$.
\item \label{l7_Wick (ii)} $\ll x_{i_1}\cdots x_{i_{2m-1}} \gg=0$.
\item \label{l7_Wick (iii)} $\ll x_i x_j \gg=(Q^{-1})_{ij}$ -- the $(i,j)$-th matrix element of the inverse matrix to the matrix of the quadratic form $Q(x,x)$.
\item \label{l7_Wick (iv)} 
\begin{multline}\label{l7_Wick}
\ll x_{i_1}\cdots x_{i_{2m}}\gg=\\
=\sum_{\mr{perfect\; matchings}\;\{1,\ldots,2m\}=\{a_1,b_1\}\cup \cdots\cup \{a_m,b_m\}} \underbrace{\ll x_{i_{a_1}}x_{i_{b_1}}\gg}_{(Q^{-1})_{i_{a_1} i_{b_1}}}\cdot\; \cdots\ \cdot \underbrace{\ll x_{i_{a_m}} x_{i_{b_m}} \gg}_{(Q^{-1})_{i_{a_m} i_{b_m}}}
\end{multline}
\end{enumerate}
\end{lemma}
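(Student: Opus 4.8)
The plan is to deduce all four parts from a single closed-form computation — the Gaussian integral with a linear source — and then extract moments by differentiation. Parts (i) and (ii) need no real work: (i) is precisely the normalization built into the definition (\ref{l7_expectation}), and for (ii) the substitution $x\mapsto -x$ preserves both the weight $e^{-\frac12 Q(x,x)}$ and the Lebesgue measure $d^n x$ while multiplying a monomial of odd degree by $-1$, so such a moment equals its own negative and hence vanishes.

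For (iii) and (iv) I would introduce the generating function
\begin{equation*}
Z(J):=\frac{\int_{\RR^n} d^n x\; e^{-\frac12 Q(x,x)+\sum_i J_i x_i}}{\int_{\RR^n} d^n x\; e^{-\frac12 Q(x,x)}},\qquad J=(J_1,\dots,J_n)\in\RR^n .
\end{equation*}
Completing the square, i.e.\ substituting $x=y+Q^{-1}J$ (legitimate because $Q$ is invertible, $Q^{-1}$ symmetric, and $d^nx$ translation-invariant), the leftover Gaussian integral in $y$ cancels against the denominator and one gets the closed form
\begin{equation*}
Z(J)=\exp\Big(\tfrac12\textstyle\sum_{a,b}(Q^{-1})_{ab}J_aJ_b\Big).
\end{equation*}
The integrals converge absolutely and are smooth in $J$, so differentiation under the integral sign is justified, giving
\begin{equation*}
\ll x_{i_1}\cdots x_{i_k}\gg=\left.\frac{\dd}{\dd J_{i_1}}\cdots\frac{\dd}{\dd J_{i_k}}\right|_{J=0}Z(J).
\end{equation*}
The final step is to expand $Z(J)=\sum_{m\ge 0}\frac{1}{m!\,2^m}\big(\sum_{a,b}(Q^{-1})_{ab}J_aJ_b\big)^m$ and pick off the coefficient of $J_{i_1}\cdots J_{i_k}$. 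Only the term with $2m=k$ survives at $J=0$ (re-proving (ii)); for $k=2m$, expanding the $m$-th power and applying the $2m$ derivatives produces a sum over the $(2m)!$ bijections between the $2m$ differentiation labels $\{1,\dots,2m\}$ and the $2m$ factors $J_{a_1},J_{b_1},\dots,J_{a_m},J_{b_m}$; summing over the indices $a_j,b_j$ forces, for each bijection, a product $\prod_{j}(Q^{-1})_{i_pi_q}$ where $\{p,q\}$ is the pair of labels routed to the $j$-th quadratic factor.

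By symmetry of $Q^{-1}$ this product depends only on the induced partition of $\{1,\dots,2m\}$ into $m$ two-element blocks, i.e.\ on a perfect matching, and a fixed matching arises from exactly $m!\cdot 2^m$ of the bijections ($m!$ orderings of the blocks, $2^m$ choices within blocks); hence the prefactor $\tfrac1{m!\,2^m}$ cancels and what remains is exactly the sum over perfect matchings in (\ref{l7_Wick}), with $m=1$ giving (iii). The computation is otherwise routine, so the one place demanding care — and the only genuine obstacle — is this combinatorial bookkeeping, i.e.\ verifying the multiplicity $m!\cdot 2^m$ (equivalently $(2m)!=(2m-1)!!\cdot m!\cdot 2^m$) so that it cancels the Taylor coefficient of the exponential. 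As an alternative that sidesteps it, one may induct on $k$: integrating by parts, $0=\int_{\RR^n} d^n x\;\dd_{x_j}\big(e^{-\frac12 Q(x,x)}x_{i_1}\cdots x_{i_k}\big)$ and contracting with $Q^{-1}$ yields the recursion
\begin{equation*}
\ll x_m x_{i_1}\cdots x_{i_k}\gg=\sum_{r=1}^k (Q^{-1})_{m\,i_r}\;\ll x_{i_1}\cdots\widehat{x_{i_r}}\cdots x_{i_k}\gg ,
\end{equation*}
which matches, term by term, the recursive description of perfect matchings (choose the partner of the first element, then match the rest), with (i)–(iii) as base cases; the induction then delivers (iv).
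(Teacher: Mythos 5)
Your proposal is correct and follows essentially the same route as the paper's proof: introduce the source-dependent Gaussian integral, complete the square to get $\exp\bigl(\tfrac12\langle J,Q^{-1}J\rangle\bigr)$, and extract moments by differentiating at $J=0$, with the $m!\,2^m$ multiplicity cancelling the Taylor coefficient. The integration-by-parts recursion you mention as an alternative is a nice supplementary check but not needed.
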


\begin{remark} We can identify perfect matchings on the set $H=\{1,\ldots,2m\}$ with elements of the quotient of the symmetric group $S_{2m}$ of permutations of $H$ by the group of permutations of two-element subsets constituting the partition and transpositions of the elements inside the two-element subsets. In other words, the set of perfect matchings can be presented as $S_{2m}/(S_m\ltimes \ZZ_2^m)$. Thus, in particular, expectation value (\ref{l7_Wick}) can be written as
\begin{equation}\label{l7_Wick2}
\ll x_{i_1}\cdots x_{i_{2m}}\gg=\sum_{\sigma\in S_{2m}/(S_m\ltimes \ZZ_2^m)}\ll x_{i_{\sigma_1}i_{\sigma_2}} \gg\cdot \;\cdots\cdot \ll x_{i_{\sigma_{2m-1}}i_{\sigma_{2m}}} \gg
\end{equation}
\end{remark}

\begin{example}
$$\ll x_i x_j x_k x_l \gg= \ll x_i x_j\gg\cdot \ll x_k x_l\gg+ \ll x_i x_k \gg\cdot \ll x_j x_l\gg+ \ll x_i x_l \gg\cdot \ll x_j x_k \gg$$
Pictorially, the three terms on the r.h.s. can be drawn as follows:
\begin{equation}   \label{l7_Wick_visualization}
\vcenter{\hbox{\input{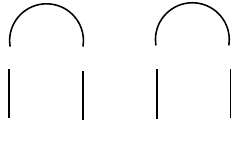tex_t}}}\qquad \qquad\qquad
\vcenter{\hbox{\input{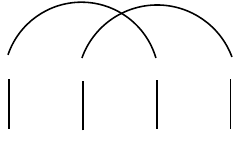tex_t}}}\qquad \qquad \qquad
\vcenter{\hbox{\input{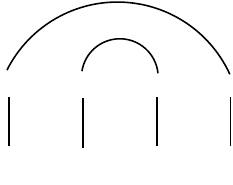tex_t}}}
\end{equation}
\end{example}

\begin{example} From the count of perfect matchings and Wick's formula, we deduce, for 1-dimensional moment of Gaussian measure $dx\,e^{-\frac{x^2}{2}}$, that
$$ \ll x^{2m} \gg= (2m-1)!! $$
or equivalently
$$ \int_{-\infty}^\infty dx\; e^{-\frac{x^2}{2}} x^{2m}=\sqrt{2\pi}\cdot (2m-1)!! $$
\end{example}

\begin{proof}[Proof of Lemma \ref{l7_lm_Wick}]
Item (\ref{l7_Wick (i)}) is obvious, and (\ref{l7_Wick (ii)}) also (integrand in the numerator of (\ref{l7_expectation}) is odd with respect to $x\ra -x$, hence the integral is zero). For (\ref{l7_Wick (iii)}) and (\ref{l7_Wick (iv)}), consider an auxiliary integral
\begin{equation}\label{l7_e2}
W(J):=\int_{\RR^n} d^n x\; e^{-\frac12 Q(x,x)+\langle J,x \rangle}
\end{equation}
with $J\in \RR^n$ the \textit{source}. The integral is easily calculated by completing the expression in the exponential to the full square:
\begin{multline}\label{l7_e3}
W(J)=\int_{\RR^n} d^n x\; \underbrace{e^{-\frac12 Q(x,x)+\langle J,x \rangle - \frac12 \langle J, Q^{-1} J \rangle }}_{e^{-\frac12 Q( x-Q^{-1}J,x-Q^{-1}J ) }}\cdot e^{\frac12 \langle J, Q^{-1} J \rangle }=\\
=e^{\frac12 \langle J, Q^{-1} J \rangle }\cdot \int_{\RR^n} d^n y \; e^{-\frac12 Q(y,y)}=e^{\frac12 \langle J, Q^{-1} J \rangle }\cdot (2\pi)^{\frac{n}{2}}\left(\det Q\right)^{-\frac12}
\end{multline}
Here in the second step we made a shift $x\mapsto y=x-Q^{-1}J$. 

From definition (\ref{l7_e2}), we have
\begin{multline}\label{l7_e4}
\ll x_{i_1}\cdots x_{i_{2m}} \gg=\frac{1}{W(0)}\left|\frac{\dd}{\dd J_{i_1}}\cdots \frac{\dd}{\dd J_{i_{2m}}} W(J)\right|_{J=0} =\\
=\frac{1}{2^m m!}\;\frac{\dd}{\dd J_{i_1}}\cdots \frac{\dd}{\dd J_{i_{2m}}}\; \underbrace{Q^{-1}(J,J)\cdot\; \cdots \; \cdot Q^{-1}(J,J)}_{\left(\sum_{j_1,k_1}Q^{-1}_{j_1 k_1}J_{j_1} J_{k_1}\right)\cdots \left(\sum_{j_m,k_m}Q^{-1}_{j_m k_m}J_{j_m} J_{k_m}\right) }
\end{multline}
Here in the second step we replaced $W(J)$ by $m$-th term of the Taylor series for the exponential in the explicit formula (\ref{l7_e3}) for $W(J)$ (lower terms do not contribute because they are killed by the $2m$ derivatives in the source $J$ and higher terms do not contribute as tehy are killed by setting $J=0$ after taking the derivatives). Then (\ref{l7_Wick (iv)}) follows by evaluating the multiple derivative in the source in (\ref{l7_e4}) by Leibniz rule. Item (\ref{l7_Wick (iii)}) is the trivial $m=1$ case of this computation.

\end{proof}

\begin{remark}\label{l7_rem_abs}
In a slightly more invariant language, replace $\RR^n$ by an abstract finite-dimensional $\RR$-vector space $V$. Our input is a positive-definite quadratic form $Q\in \mr{Sym}^2 V^*$. We are interested in the map $\ll-\gg:\; \mr{Sym} V^*\ra \RR$ sending 
$$p\quad \mapsto \quad  \ll p \gg=\frac{\int_V \mu\; e^{-\frac12 Q}\; p}{\int_V \mu\; e^{-\frac12 Q}}$$ 
with $\mu\in \wedge^\mr{top} V^*$ a fixed constant volume form (irrelevant for the normalized expectation values). Then the Wick's lemma (\ref{l7_Wick2}) can be formulated as
\begin{equation}\label{l7_e5}
\ll \phi_1\odot \cdots\odot \phi_{2m} \gg= \sum_{\sigma\in S_{2m}/(S_m\ltimes \ZZ_2^m)}\langle \sigma\circ (Q^{-1})^{\otimes m},\phi_1\otimes\cdots\otimes \phi_{2m} \rangle
\end{equation}
Here $\phi_1,\ldots,\phi_{2m}\in V^*$ are linear functions on $V$, $\odot$ is the commutative product in $\mr{Sym} V^*$. We understand the inverse to $Q$ as an element in the symmetric square of $V$, $Q^{-1}\in \mr{Sym}^2 V$; $\sigma$ acts on $V^{\otimes 2m}$ by permuting the copies of $V$; the pairing in the r.h.s. is the pairing between $V^{\otimes 2m}$ and $(V^*)^{\otimes 2m}$
\end{remark}

\begin{remark} Another visualization (as opposed to (\ref{l7_Wick_visualization})) of the terms on the r.h.s. of Wick's lemma, corresponding to the presentation (\ref{l7_e5}) is like as follows:
$$\vcenter{\hbox{\input{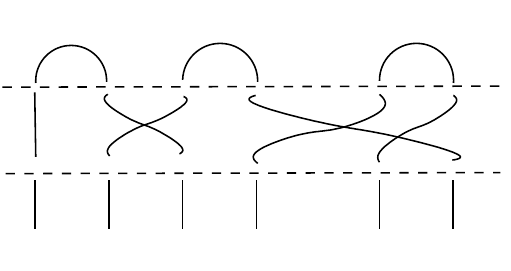tex_t}}}$$
Here the lower strip presents $\phi_1\otimes\cdots\otimes \phi_{2m}\in (V^*)^{\otimes 2m}$, the upper strip presents pairing with $(Q^{-1})^{\otimes m}\in V^{\otimes 2m}$ and middle strip presents the action of $\sigma$ by permuting the $V$-factors (if we read the diagram from top to bottom), or equivalently the action of $\sigma^{-1}$ by permuting $V^*$-factors (if we read te diagram from bottom to top).
\end{remark}

\marginpar{\LARGE{Lecture 8, 09/19/2016.}}
\begin{remark}\label{l8_rem_inv_Gaussian} In the setup of Remark \ref{l7_rem_abs}, the value of the Gaussian integral itself, $\int_V \mu\, e^{-\frac12 Q}$, can be understood as follows (without referring explicitly to the matrix of $Q$ or, in other words, without identifying bilinears on $V$ with endomorphisms). To $Q\in \Sym^2 V^*$, there is an associated sharp map $Q^\#: V\ra V^*$. Raising it to the maximal exterior power, we obtain a map of determinant lines $\wedge^n Q^\#: \wedge^n V\ra \wedge^nV^*$ (with $n=\dim V$) or equivalently, dualizing the domain line and putting it to the right side, 
$\mr{Det}\, Q:= \wedge^nQ^\#\in (\wedge^n V^*)^{\otimes 2}$.\footnote{
Here we implicitly used the identification $(\wedge ^nV)^*\cong \wedge^n V^*$. It is induced by the  pairing $\wedge^n V\otimes  \wedge^n V^*\ra \RR$ which sends $(v_1\wedge\cdots\wedge v_n)\otimes (\theta_1\wedge\cdots\wedge \theta_n)\mapsto \det \Big(\langle  v_i,\theta_j \rangle\Big)_{i,j=1}^n$, where on the r.h.s.$\langle , \rangle$ is the canonical pairing between $V$ and $V^*$.
} Thus, $\mr{Det}\,Q$ in this context is not a number, but an element of the line $(\wedge^n V^*)^{\otimes 2}$. 
(Whenever a basis in $V$ is chosen, we have a trivialization $(\wedge^n V^*)^{\otimes 2}\simeq \RR$, and then $\mr{Det}\,Q$ gets assigned the number value, which coincides with the determinant of the matrix of the bilinear $Q$ in the chosen basis). Note that $\mu^{\otimes 2}$ is a nonzero element of the same line, thus we can form a quotient $\frac{\mr{Det}\,Q}{\mu^{\otimes 2}}\in \RR$. Value of the Gaussian integral is then
$$ \int_V \mu\, e^{-\frac12 Q}=(2\pi)^{\frac{n}{2}} \left(\frac{\mr{Det}\,Q}{\mu^{\otimes 2}}\right)^{-\frac12} $$
\end{remark}

\subsection{A reminder on graphs and graph automorphisms}
\begin{definition}
A graph is the following set of data: 
\begin{itemize}
\item A set $V$ of \textit{vertices}.
\item A set  $HE$ of \textit{half-edges}.
\item A map $i: HE\ra V$ -- \textit{incidence}.
\item A perfect matching $E$ on $HE$, i.e. a partition of $E$ into two-element subsets -- \textit{edges}. Put differently, we have a fixed-point-free involution $\gamma$ on $HE$ and its orbits are the edges.
\end{itemize} 
\end{definition}
We will only consider \textit{finite} graphs, i.e. with $V$ and $HE$ finite.
Here is a picture of a generic graph.
$$\vcenter{\hbox{\begin{picture}(0,0)%
\includegraphics{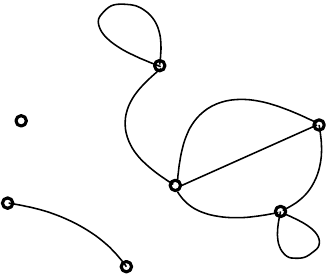}%
\end{picture}%
%
%
\setlength{\unitlength}{1973sp}%
\begingroup\makeatletter\ifx\SetFigFont\undefined%
\gdef\SetFigFont#1#2#3#4#5{%
  \reset@font\fontsize{#1}{#2pt}%
  \fontfamily{#3}\fontseries{#4}\fontshape{#5}%
  \selectfont}%
\fi\endgroup%
\begin{picture}(3136,2614)(1208,-2393)
\end{picture}%
}}$$ 

\begin{definition}
For $v\in V$ a vertex, one calls $i^{-1}(v)\subset HE$ the \textit{star} (or \textit{corolla}) of $v$ and the number of incident halh-edges to the vertex, $\# i^{-1}(v)$, is called the \textit{valency} of $v$.
\end{definition}

\begin{definition}
For two graphs $\Gamma=(V,HE,i,E)$, $\Gamma'=(V',HE',i',E')$, a \emph{graph isomorphism} $\Gamma\stackrel{\sim}{\ra}\Gamma'$ is a pair of bijections $\sigma_{V}:V\simra V'$, $\sigma_{HE}: HE\simra HE'$ commuting with the incidence maps (satisfying $i'\circ \sigma_{HE}=\sigma_V\circ i$) and preserving the partition into edges (i.e. $\gamma'\circ \sigma_{HE}=\sigma_{HE}\circ \gamma$ with $\gamma,\gamma'$ the respective involutions on half-edges).
\end{definition}

\begin{example}
Vertices: $V=\{a,b,c\}$, half-edges: $HE=\{1,1',2,2',3,3'\}$, incidence: 
$$i:\quad \begin{array}{lcl} 1 & \mapsto &  a \\ 1' & \mapsto & a \\ 2 & \mapsto & b \\ 2' & \mapsto & b \\ 3 & \mapsto & c \\ 3' & \mapsto & c  \end{array}$$
\end{example}
Edges: $E=\{1,2'\}\cup \{2,3'\}\cup \{3,1'\}$. Equivalently, the invloution is $\gamma: 1\leftrightarrow 2', 2 \leftrightarrow 3', 3\leftrightarrow 1'$. Here is the picture:
$$\vcenter{\hbox{\input{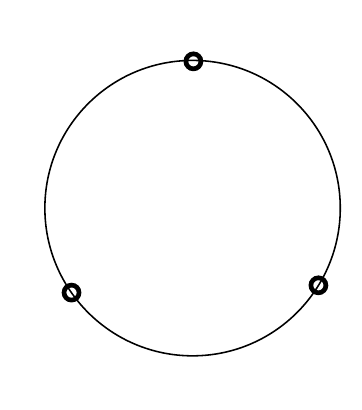tex_t}}}$$ 
Example of an automorphism of this graph:
$$\sigma_V:\;(a,b,c)\mapsto (b,a,c),\quad \sigma_{HE}:\; (1,1',2,2',3,3')\mapsto (2',2,1',1,3',3) $$
(Check explicitly that this pair of permutations commutes with incidence maps and with involutions!) On the picture of the graph above, this automorphism corresponds to reflection w.r.t. the vertical axis.

We will be interested in the group of automorphisms $\mr{Aut}(\Gamma)$ of a graph $\Gamma$.

\begin{example}[Automorphism groups] 
\begin{enumerate}[(i)] 
\item A ``polygon graph'' with $n\geq 3$ vertices and $n$ edges:
$$\vcenter{\hbox{\begin{picture}(0,0)%
\includegraphics{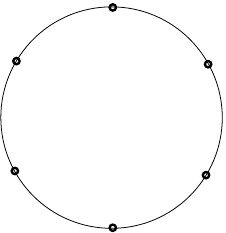}%
\end{picture}%
%
%
\setlength{\unitlength}{987sp}%
\begingroup\makeatletter\ifx\SetFigFont\undefined%
\gdef\SetFigFont#1#2#3#4#5{%
  \reset@font\fontsize{#1}{#2pt}%
  \fontfamily{#3}\fontseries{#4}\fontshape{#5}%
  \selectfont}%
\fi\endgroup%
\begin{picture}(4282,4412)(695,-3965)
\end{picture}%
}}$$ 
Automorphism group: $\mr{Aut}(\Gamma)=\ZZ_2\ltimes \ZZ_n$.
\item ``Theta graph'':
$$\vcenter{\hbox{\begin{picture}(0,0)%
\includegraphics{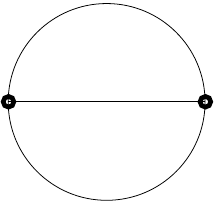}%
\end{picture}%
%
%
\setlength{\unitlength}{1973sp}%
\begingroup\makeatletter\ifx\SetFigFont\undefined%
\gdef\SetFigFont#1#2#3#4#5{%
  \reset@font\fontsize{#1}{#2pt}%
  \fontfamily{#3}\fontseries{#4}\fontshape{#5}%
  \selectfont}%
\fi\endgroup%
\begin{picture}(2052,1908)(2285,-2713)
\end{picture}%
}}$$ 
Automorphism group: $\mr{Aut}(\Gamma)=\ZZ_2\times S_3$.
\item ``Figure-eight graph'':
$$\vcenter{\hbox{\begin{picture}(0,0)%
\includegraphics{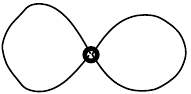}%
\end{picture}%
%
%
\setlength{\unitlength}{1973sp}%
\begingroup\makeatletter\ifx\SetFigFont\undefined%
\gdef\SetFigFont#1#2#3#4#5{%
  \reset@font\fontsize{#1}{#2pt}%
  \fontfamily{#3}\fontseries{#4}\fontshape{#5}%
  \selectfont}%
\fi\endgroup%
\begin{picture}(1806,885)(1741,-1728)
\end{picture}%
}}$$ 
Automorphism group: $\mr{Aut}(\Gamma)=\ZZ_2\ltimes (\ZZ_2\times \ZZ_2)$.
\end{enumerate}
\end{example}

\begin{remark} A graph automorphism has to preserve valencies of vertices, in particular it permutes vertices of any given valency and maps the star of a source vertex to the star of a target vertex (via some permutation). Therefore, for a graph $\Gamma$ which has $V_d$ vertices of valency $d$ for $d=0,\ldots,D$,  the automorphism group can be seen as a subgroup of permutations of vertices for each valency $d$ and permutations of incident half-edges for each vertex:
$$\mr{Aut}(\Gamma)\subset \prod_{d=0}^D S_{V_d}\ltimes S_d^{\times V_d}$$
\end{remark}

\begin{remark} Graphs naturally form a groupoid, with morphisms given by graph isomorphisms. Consider the restriction $\mr{Graph}_{V_0,\ldots,V_D}$ of this groupoid to graphs with number of vertices of valency $d$ fixed to $V_d$ for $d=0,\ldots,D$ (and no vertices of higher valency than $D$). One can realize objects of  $\mr{Graph}_{V_0,\ldots,V_D}$ as all $(2m-1)!!$ (for $2m=\sum_{d=1}^D d\cdot V_d $) perfect matchings on the set of half-edges constituting the given vertex stars. The total group of isomorphisms is then $\prod_{d=0}^D S_{V_d}\ltimes S_d^{\times V_d}$. Thus the groupoid volume of $\mr{Graph}_{V_0,\ldots,V_D}$ is:
\begin{equation}\label{l8_vol}
\mr{Vol}\left(\mr{Graph}_{V_0,\ldots,V_D}\right) =\underbrace{\sum_{\Gamma} \frac{1}{|\mr{Aut}(\Gamma)|}}_{\mr{Vol\;\pi_0 \left(\mr{Graph}_{V_0,\ldots,V_D} \right)}} = \frac{(2m-1)!!}{\prod_{d=0}^D V_d! \cdot d!^{V_d}}
\end{equation}
where $\Gamma$ runs over \textit{isomorphism classes} of graphs; on the r.h.s. the numerator and denominator are the numbers of objects and morphisms of $\mr{Graph}_{V_0,\ldots,V_D}$, respectively.
\end{remark}

\begin{remark}
One can also define graphs as 1-dimensional CW complexes. From this point of view, the automorphism group of $\Gamma$ is $\pi_0$ of the group of cellular homeomorphisms of $\Gamma$ viewed as a CW complex.
\end{remark}

\subsection{Back to integrals: Gaussian expectation value of a product of homogeneous polynomials}
Fix $Q\in \Sym^2 V^*$ a positive-definite quadratic form on $V=\RR^n$. Let $\Psi_a \in \Sym^{d_a} V^*$ for $a=1,\ldots,r$ be a collection of homogeneous polynomials of degrees $d_1,\ldots,d_r$ on $V$. In coordinates, $\Psi_a=\sum_{i_1,\ldots,i_{d_a}=1}^n (\Psi_a)_{i_1\cdots i_{d_a}} x_{i_1}\cdots x_{i_{d_a}}$.  Consider the Gaussian expectation value $\ll \frac{1}{d_1!}\Psi_1\cdots \frac{1}{d_r!}\Psi_r\gg$. Denote $2m=\sum_{a=1}^r d_a$. Also denote $$\mr{Matchings}_{2m}:=S_{2m}/(S_m\ltimes \ZZ_2^m)$$ 
the set of perfect mathcings on $2m$ elements. We have the following:
\begin{multline*}
\ll \frac{1}{d_1!}\Psi_1\cdots \frac{1}{d_r!}\Psi_r\gg = \frac{1}{d_1!\cdots d_r!}\sum_{\sigma\in 
\match_{2m}
} \langle \sigma\circ (Q^{-1})^{\otimes m}, \Psi_1\otimes \cdots\otimes \Psi_r \rangle\\
=\sum_{[\sigma]\in (\prod_{a=1}^r S_{d_a})\backslash\; 
\match_{2m}
} \frac{1}{|\mr{Stab}_{[\sigma]}|} \langle \sigma\circ (Q^{-1})^{\otimes m}, \Psi_1\otimes \cdots\otimes \Psi_r \rangle
\end{multline*}
Here in the first step we have applied the Wick's lemma to calculate the Gaussian expectation value and in the second step we collected similar terms in the sum. In the second sum $[\sigma]$ runs over classes of perfect matchings under the action of $\prod_{a=1}^r S_{d_a}\subset S_{2m}$ (in other words, $[\sigma]$ is a class in the two-sided quotient of the symmetric group, $[\sigma]\in (\prod_{a=1}^r S_{d_a})\backslash\, S_{2m}/(S_m\ltimes \ZZ_2^m)$). This action is not free and has stabilizer subgroups $\mr{Stab}_{[\sigma]}\subset \prod_{a=1}^r S_{d_a}$. Note that the coefficient $\frac{1}{|\mr{Stab}_{[\sigma]}|}$ arises as 
$$\frac{1}{|\mr{Stab}_{[\sigma]}|}=\frac{\#\{\mr{orbit\;of\;}\sigma\; \mr{under}\;S_{d_1}\times\cdots\times S_{d_r}\mr{-action}\}}{|S_{d_1}\times\cdots\times S_{d_r}|}$$
where the denominator is $d_1!\cdots d_r!$.

\begin{example}
Let $\Psi=\sum_{i,j,k,l=1}^n \Psi_{ijkl}\; x_i x_j x_k x_l \in \Sym^4 V^*$ be a quartic polynomial. Then we have
\begin{multline*}
\ll \frac{1}{4!}\Psi\gg = \frac{1}{4!} \sum_{\sigma\in 
\match_4} 
\ll \sigma\circ (Q^{-1})^{\otimes 2}, \Psi  \gg= \\
=\frac{1}{4!}\left( \lan \vcenter{\hbox{\input{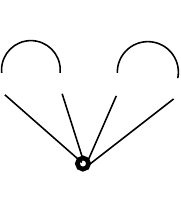tex_t}}} \ran +
 \lan \vcenter{\hbox{\input{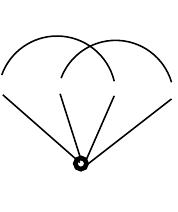tex_t}}} \ran +
  \lan \vcenter{\hbox{\input{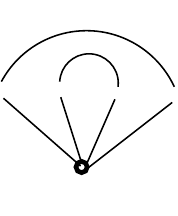tex_t}}} \ran 
\right)
\\= \frac{3}{4!} \left\langle\vcenter{\hbox{\input{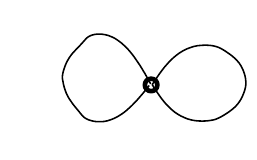tex_t}}}\quad\;\;\right\rangle = \frac{1}{8}\sum_{i,j,k,l=1}^n \Psi_{ijkl}\; (Q^{-1})_{ij} (Q^{-1})_{kl}
\end{multline*}
Here all three matchings give the same contribution to the expectation value (correspondingly, $S_4\backslash \match_4 \ni [\sigma]$ consists of a single class).
\end{example}

\begin{example}
Let $\Psi_1=\sum_{i,j,k=1}^n(\Psi_1)_{ijk}\; x_i x_j x_k ,\;\Psi_2=\sum_{i',j',k'=1}^n (\Psi_2)_{i'j'k'} \; x_{i'} x_{j'} x_{k'}\;\in \Sym^3 V^*$ be two cubic polynomials. Then we have
\begin{multline*}
\ll \frac{1}{3!}\Psi_1 \cdot \frac{1}{3!}\Psi_2 \gg= \frac{1}{3!\, 3!} \sum_{\sigma\in\match_6} \ll \sigma\circ (Q^{-1})^{\otimes 3}, \Psi_1\otimes \Psi_2 \gg=\\
=\frac{1}{3!\, 3!} \left(
\lan \vcenter{\hbox{\input{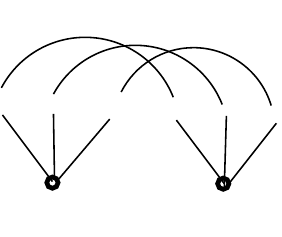tex_t}}} \ran +\mbox{5 similar terms}  +
\lan \vcenter{\hbox{\input{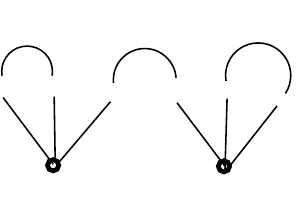tex_t}}} \ran +\mbox{8 similar terms} 
\right)
\\=\frac{6}{3!3!} \lan \vcenter{\hbox{\input{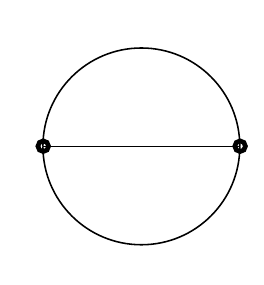tex_t}}}\quad \ran +
\frac{9}{3!3!} \lan \vcenter{\hbox{\input{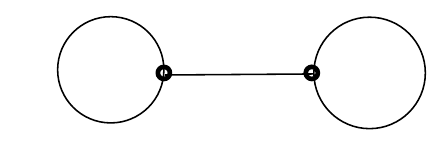tex_t}}}\quad\;\;  \ran
\\
=\frac{1}{6} \sum_{i,j,k,i',j',k'=1}^n (\Psi_1)_{ijk} (\Psi_2)_{i'j'k'} (Q^{-1})_{ii'} (Q^{-1})_{jj'} (Q^{-1})_{kk'}+\\
+\frac14 \sum_{i,j,k,i',j',k'=1}^n (\Psi_1)_{ijk} (\Psi_2)_{i'j'k'} (Q^{-1})_{ii'} (Q^{-1})_{jk} (Q^{-1})_{j'k'}
\end{multline*}
\end{example}
Here $(S_3\times S_3)\backslash \match_6\ni [\sigma]$ consists of two different classes: 
\begin{itemize}
\item one with 6 representatives in $\match_6$ (i.e. with stabilizer subgroup of order $\frac{3!\, 3!}{6}=6$), corresponding to the ``theta graph'';
\item the second with 9 representatives in $\match_6$ (i.e. with stabilizer subgroup of order $\frac{3!\, 3!}{9}=4$), corresponding to the ``dumbbell graph''.
\end{itemize}

\subsection{Perturbed Gaussian integral}\label{sec: perturbed Gaussian integral}
Fix again $Q(x,x)$ a positive-definite quadratic form on $V=\RR^n$. We are interested in the integrals of form
\begin{equation}\label{l8_pert_Gaussian_1}
\int_V d^n x \; e^{-\frac12 Q(x,x)+p(x)}
\end{equation}
with $p$ a small polynomial perturbation of the quadratic form in the exponential. More precisely,  consider the perturbation $p(x)$ of the  form
\begin{equation}\label{l8_polynomial_perturbation}
p(x)=\sum_{d=0}^D\frac{g_d}{d!} P_d(x)
\end{equation}
with $D$ some fixed degree, $P_d=\sum_{i_1,\ldots,i_d=1}^n (P_d)_{i_1\cdots i_d}x_{i_1}\cdots x_{i_d}\in \Sym^d V^*$ a homogeneous polynomial of degree $d$, and $g_0,\ldots,g_D$ -- infinitesmial formal parameters (``coupling constants''). Note that then the exponential of the perturbation $e^{p(x)}$ is a formal power series in the couplings $g_0,\ldots,g_D$ where the coefficient of each monomial $g_0^{v_0}\cdots g_D^{v_D}$  is a finite-degree polynomial in $x$, i.e.
$$e^{p(x)}\quad\in \Sym V^* \otimes \RR[[g_0,\ldots, g_D]]=\Sym V^*[[g_0,\ldots,g_D]]$$

\begin{definition}
We define the \emph{perturbative evaluation} of the integral (\ref{l8_pert_Gaussian_1}) as follows:
\begin{equation}\label{l8_pert_Gaussian_formal}
\int_V^{\mr{pert}} d^n x \; e^{-\frac12 Q(x,x)+p(x)}:= \underbrace{\left(\int_V d^n x\; e^{-\frac12 Q(x,x)}\right)}_{(2\pi)^{\frac{n}{2}}(\det Q)^{-\frac12}}\cdot \ll e^{p(x)} \gg
\end{equation}
where the symbol $\ll e^{p(x)} \gg$ is to be understood as the evaluation on $e^{p(x)}\in \Sym V^*[[g_0,\ldots,g_D]]$ of the Gaussian expectation value $\ll \cdots \gg: \Sym V^* \ra \RR$, extended by linearity to a map $\ll \cdots \gg: \Sym V^*[[g_0,\ldots,g_D]] \ra \RR$.
\end{definition}

\begin{remark}
Perturbative integral (\ref{l8_pert_Gaussian_formal}) is well-defined for any perturbation $p(x)$ of form (\ref{l8_polynomial_perturbation}), while  (\ref{l8_pert_Gaussian_1}) as a measure-theoretic integral may fail to exist for non-zero coupling constants. E.g. the integral 
$$\int_\RR dx\; e^{-\frac{x^2}{2}+\frac{\alpha}{3!} x^3}$$ 
diverges for any non-zero coefficient $\alpha=g_3$ (except for the case of $\alpha\in i\cdot\RR$ purely imaginary), while 
$$\int_\RR dx\; e^{-\frac{x^2}{2}+\frac{\lambda}{4!}  x^4}$$
converges for $\lambda=g_4$ negative (or, more generally, 
for $\mr{Re}\,\lambda\leq 0$
) and diverges for $\lambda$ positive (resp. $\mr{Re}\, \lambda >0$).
\end{remark}

\marginpar{\LARGE{Lecture 9, 09/26/2016.}}

\begin{definition} Let $\Gamma$ be a graph (``Feynman diagram''). Fix a collection of symmetric tensors (the ``Feynman rules''):
\begin{itemize}
\item The ``propagator'' $$\eta=\sum_{i,j=1}^n \eta_{ij}\, e_i\odot e_j \quad \in\Sym^2 V$$
with $\{e_i\}$ the standard basis in $\RR^n$ (or, more abstractly, \emph{a} basis in $V$).
\item ``Vertex functions\footnote{Or, more appropriately, ``vertex tensors''.} for vertices of valency $d$'', 
$$p_d=\sum_{i_1,\ldots ,i_d=1}^n (p_d)_{i_1\cdots i_d}\, x_{i_1}\cdots x_{i_d}\quad \in \Sym^d V^*$$
for $d=0,\ldots,D$; $\{x_i\}$ is the basis in $V^*$ dual to $\{e_i\}$.
\end{itemize}
We define the \emph{Feynman weight} (or the ``value of the Feynman diagram'') of $\Gamma$ as 
$$\frac{1}{|\mr{Aut}(\Gamma)|}\Phi_{\eta; p_0,\ldots,p_D 
}(\Gamma)$$
where $\Phi_{\eta; p_0,\ldots,p_D}(\Gamma)$ is defined as the following \emph{state sum}.
\begin{itemize}
\item We define a \emph{state} $s$ on $\Gamma$ as a decoration of all half-edges of $\Gamma$ by numbers in $\{1,\ldots,n\}$.
\item To a state $s: HE\ra \{1,\ldots,n\}$ we assign a \emph{weight}
$$w_s:= \prod_{\mr{edges}\; e=(h,h')}\eta_{s(h)s(h')}\times \prod_{\mr{vertices}\; v} (p_d)_{s(h_1)\cdots s(h_d)}$$
In the first product, $h,h'$ are the two constituent half-edges of the edge $e$. In the second product, $d$ is the valency of the vertex $v$ and $h_1,\ldots,h_d$ are the half-edges adjacent to $v$.
\item We define $\Phi$ as the sum over states on $\Gamma$:
$$\Phi_{\eta; p_0,\ldots,p_D}(\Gamma):=\sum_{\mr{states}\; s:\, HE\ra \{1,\ldots,n\}} w_s $$
\end{itemize}
\end{definition}

\begin{example} Consider $\Gamma$ the theta-graph; we label the half-edges by $\{A,B,C,D,E,F\}$:
$$\vcenter{\hbox{\input{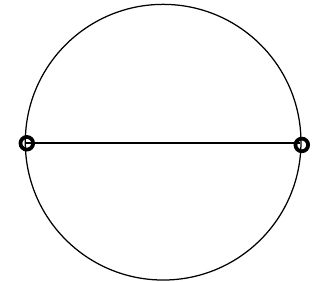tex_t}}}$$
A state $s$ on $\Gamma$ maps half-edges to numbers $s:(A,B,C,D,E,F)\mapsto (i,j,k,i',j',k')$ each of which can take values from $1$ to $n$:
$$\vcenter{\hbox{\input{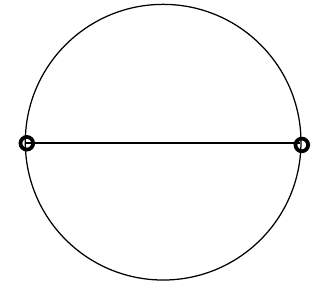tex_t}}}$$
The weight of the state is:
$$w_s= \eta_{ii'} \eta_{jj'}\eta_{kk'}\times (p_3)_{ijk} (p_3)_{i'j'k'}$$
And thus the Feynman value of the theta graph is
$$\frac{1}{12}\sum_{i,j,k,i',j',k'=1}^n \eta_{ii'} \eta_{jj'}\eta_{kk'}\times (p_3)_{ijk} (p_3)_{i'j'k'} $$
\end{example}

\begin{theorem}[Feynman]\label{l9_Feyn_thm}
For $Q$ a positive-definite quadratic form on $V=\RR^n$ and $p(x)=\sum_{d=0}^D \frac{g_d}{d!}P_d(x)$ a polynomial perturbation with homogeneous terms $P_d\in \Sym^d V^*$, the perturbative evaluation of the integral (\ref{l8_pert_Gaussian_1}) is given by the sum over all finite graphs (up to graph isomorphism) of their Feynman weights:
\begin{equation}\label{l9_Feyn}
\int^\mr{pert}_V d^n x \; e^{-\frac12 Q (x,x)+p(x)}=(2\pi)^{\frac{n}{2}}(\det Q)^{-\frac12} \sum_{\mr{graphs}\; \Gamma}\frac{1}{|\mr{Aut}(\Gamma)|} \Phi_{Q^{-1};g_0 P_0,\ldots, g_D P_D}(\Gamma)
\end{equation}
\end{theorem}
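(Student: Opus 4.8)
\emph{Strategy.} The plan is to expand $e^{p(x)}$ as a formal power series in the coupling constants $g_0,\dots,g_D$, apply Wick's lemma (Lemma~\ref{l7_lm_Wick}) termwise to each monomial $\ll \frac{1}{d_1!}P_{d_1}\cdots \frac{1}{d_r!}P_{d_r}\gg$, and then reorganize the resulting sum over perfect matchings into a sum over isomorphism classes of graphs, keeping careful track of the combinatorial factors. The key bookkeeping identity is that a perfect matching on the disjoint union of vertex-stars is \emph{exactly} the data of an edge set on a graph with those vertices, so the sum over matchings in Wick's lemma literally produces a sum over graphs once we decide how to label the vertices.

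\emph{Step 1: expand the exponential.} Write $e^{p(x)} = \sum \prod_{d=0}^D \frac{1}{v_d!}\left(\frac{g_d}{d!}P_d(x)\right)^{v_d}$, the sum over tuples $(v_0,\dots,v_D)$ of nonnegative integers; each coefficient of $g_0^{v_0}\cdots g_D^{v_D}$ is a genuine polynomial in $x$, so $\ll - \gg$ applies. By definition of the perturbative integral, the left side of~(\ref{l9_Feyn}) equals $(2\pi)^{n/2}(\det Q)^{-1/2}$ times $\sum_{(v_d)} \prod_d \frac{g_d^{v_d}}{v_d!\, d!^{v_d}}\ll P_0^{v_0}\cdots P_D^{v_D}\gg$.

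\emph{Step 2: apply Wick and identify matchings with graphs.} Fix $(v_d)$ and set $2m=\sum_d d\cdot v_d$. Label the $v_d$ copies of $P_d$ and, inside the $j$-th copy, label its $d$ variable-slots; this gives a set $HE$ of $2m$ half-edges partitioned into $\sum_d v_d$ stars. By Lemma~\ref{l7_lm_Wick}(iv) in the invariant form~(\ref{l7_e5}), $\ll P_0^{v_0}\cdots P_D^{v_D}\gg = \sum_{\sigma\in\match_{2m}} \langle \sigma\circ(Q^{-1})^{\otimes m},\ \bigotimes P_d\rangle$, and each $\sigma$ is precisely a perfect matching on $HE$, i.e.\ an edge set making $(V,HE,i,E)$ a graph $\Gamma_\sigma$ with $v_d$ vertices of valency $d$. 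The summand is then exactly the state sum $\Phi_{Q^{-1};P_0,\dots,P_D}(\Gamma_\sigma)$ (contracting $\eta=Q^{-1}$ along edges and $p_d=P_d$ at vertices is what $\langle\sigma\circ(Q^{-1})^{\otimes m},\bigotimes P_d\rangle$ computes after expanding the tensor pairing in the standard basis).

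\emph{Step 3: groupoid count.} Now pass from labeled to unlabeled graphs. The group $G:=\prod_d S_{v_d}\ltimes S_d^{\times v_d}$ (relabelings of vertices of each valency and of half-edges within each star) acts on $\match_{2m}$, $\Phi$ is constant on $G$-orbits, and two labeled graphs lie in the same orbit iff they are isomorphic; the stabilizer of $\Gamma_\sigma$ under $G$ is $\mathrm{Aut}(\Gamma_\sigma)$ by the remark preceding~(\ref{l8_vol}). Hence $\sum_{\sigma\in\match_{2m}}\Phi(\Gamma_\sigma) = \sum_{[\Gamma]} \frac{|G|}{|\mathrm{Aut}(\Gamma)|}\Phi(\Gamma)$, the sum over isomorphism classes with $v_d$ vertices of valency $d$. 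Since $|G| = \prod_d v_d!\, d!^{v_d}$, this factor cancels the $\prod_d \frac{1}{v_d!\,d!^{v_d}}$ from Step~1, and the accompanying $\prod_d g_d^{v_d}$ is absorbed into the Feynman rules as $g_d P_d$; summing over all $(v_d)$ assembles the full sum over all finite graphs on the right side of~(\ref{l9_Feyn}).

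\emph{Main obstacle.} The genuine content is not any single estimate but getting the symmetry factors to match on the nose: one must verify that the naive overcounting in Wick's sum (matchings that are related by permuting the identical factors $P_d$ and by permuting slots within each $P_d$, the latter being symmetries of $P_d$ itself) is accounted for \emph{precisely} by $|\mathrm{Aut}(\Gamma)|$ in the denominator, with no residual rational factor. The clean way to see this is the groupoid-cardinality formula~(\ref{l8_vol}): both sides of~(\ref{l9_Feyn}), restricted to fixed $(v_d)$ and before contracting tensors, are sums over the groupoid $\mathrm{Graph}_{v_0,\dots,v_D}$ — the left via ``objects'' (matchings) and the right via ``isomorphism classes weighted by $1/|\mathrm{Aut}|$'' — and these agree tautologically. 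It then remains only to check the tensor contraction $\langle\sigma\circ(Q^{-1})^{\otimes m},\bigotimes P_d\rangle$ really is the state sum $\Phi$, which is a direct unwinding of definitions in the standard basis, and that everything is compatible with the formal-power-series bookkeeping in the $g_d$.
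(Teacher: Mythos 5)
Your proposal is correct and follows essentially the same route as the paper's proof: Taylor-expand $e^{p}$, apply Wick's lemma to each monomial, identify perfect matchings on the disjoint union of vertex stars with edge sets of graphs, and collect terms under the action of the vertex-symmetry group $\prod_d S_{v_d}\ltimes S_d^{\times v_d}$, with the orbit-stabilizer identity (stabilizer of a matching $=$ $\mathrm{Aut}(\Gamma)$) producing exactly the $1/|\mathrm{Aut}(\Gamma)|$ weights. The only cosmetic difference is that the paper phrases the collection of terms via the double coset $\mc{V}_{v_0\cdots v_D}\backslash\,\match_{2m}$ and the factor $|\mathrm{orbit}|/|\mc{V}|=1/|\mathrm{Stab}|$, whereas you cancel $|G|$ against $\prod_d v_d!\,d!^{v_d}$ directly; these are the same computation.
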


\begin{proof}
By definition (\ref{l8_pert_Gaussian_formal}), we need to compute the Gaussian expectation value $\ll e^{p(x)} \gg$. Writing $e^p=\prod_{d=1}^D e^{\frac{g_d}{d!}P_d(x)}$ and expanding each exponential in Taylor series, we obtain
\begin{multline}
\ll e^p \gg=\ll \prod_{d=1}^D e^{\frac{g_d}{d!}P_d(x)} \gg =\sum_{v_0,\ldots,v_D\geq 0} \prod_{d=1}^D
\frac{g_d^{v_d}}{v_d!\, d!^{v_d}} \ll P_0(x)^{v_0}\cdots P_D(x)^{v_D} \gg \\ 
\underset{\mr{Wick's\; lemma}}{=}
\sum_{v_0,\ldots,v_D\geq 0} \frac{g_0^{v_0}\cdots g_D^{v_D}}{|\mc{V}_{v_0\ldots v_D}|} \sum_{\sigma\in \match_{2m}} \lan \sigma\circ (Q^{-1})^{\otimes m}, \otimes_{d=0}^D P_d^{\otimes v_d} \ran
\cdot \lan \sigma\circ (Q^{-1})^{\otimes m}, \otimes_{d=0}^D P_d^{\otimes v_d} \ran\\
\end{multline}
Here we denoted $\mc{V}_{v_0\cdots v_D}=\prod_{d=0}^D S_{v_d}\ltimes (S_d)^{\times v_d}$ -- group of ``vertex symmetries'' which we understand as a subgroup of $S_{2m}$ with $2m=\sum_{d=0}^D d\, v_d$. The picture is that for each $d=0,1,\ldots, D$, we have $v_d$ of $d$-valent stars decorated with $P_d$ (the vertex tensors); thus, in total, we have $2m=\sum_{d=0}^D d\, v_d$ half-edges. Then we attach $m$ edges decorated by $Q^{-1}$ according to all possible perfect matchings $\sigma$ of half-edges. The sum over matchings contains many similar terms, collecting which we get:
\begin{multline*}
\ll e^p \gg=\\
=\sum_{v_0,\ldots,v_D\geq 0} g_0^{v_0}\cdots g_D^{v_D} \sum_{[\sigma] \in \mc{V}_{v_0\cdots v_D}\backslash \match_{2m}} \frac{|\mr{orbit\; of\;}\sigma\;\mr{in}\;\match_{2m}\;\mr{under}\; \mc{V}_{v_0\cdots v_D} |}{|\mc{V}_{v_0\cdots v_D}|} \cdot \\
\cdot \lan \sigma\circ (Q^{-1})^{\otimes m}, \otimes_{d=0}^D P_d^{\otimes v_d} \ran
\end{multline*}
Equivalence classes of matchings 
$$[\sigma] \in \mc{V}_{v_0\cdots v_D}\backslash \match_{2m}= \left(\prod_{d=0}^D S_{v_d}\ltimes (S_d)^{\times v_d}\right) \backslash\, S_{2m}\,/(S_m\ltimes \ZZ_2^{\times m})$$ 
are in bijection with isomorphism classes of graphs with $v_0$ of $0$-valent vertices, \ldots, $v_D$ of $D$-valent vertices; the weight of the class $[\sigma]$ is easily seen to be the Feynman weight of the corresponding graph:
$$ \ll e^p \gg=\sum_{v_0,\ldots,v_D\geq 0} g_0^{v_0}\cdots g_D^{v_D} \sum_{\mr{graphs}\; \Gamma\;\mr{with}\; v_d\;d-\mr{valent\; vertices},\,d=0,\ldots,D} \frac{1}{|\mr{Aut}(\Gamma)|}\Phi_{Q^{-1},\{P_d\}_{d=0}^D}(\Gamma) $$
We can absorb $g_d$-factors into the normalization of vertex tensors, getting the r.h.s. of (\ref{l9_Feyn}).
\end{proof}

\begin{example} The contribution of the following graph 
$$\vcenter{\hbox{\begin{picture}(0,0)%
\includegraphics{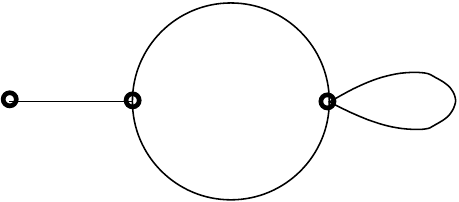}%
\end{picture}%
%
%
\setlength{\unitlength}{1973sp}%
\begingroup\makeatletter\ifx\SetFigFont\undefined%
\gdef\SetFigFont#1#2#3#4#5{%
  \reset@font\fontsize{#1}{#2pt}%
  \fontfamily{#3}\fontseries{#4}\fontshape{#5}%
  \selectfont}%
\fi\endgroup%
\begin{picture}(4395,1920)(1328,-3901)
\end{picture}%
}}$$
to the r.h.s. of (\ref{l9_Feyn})is:
\begin{multline*}
\frac{g_1\, g_3\, g_4}{|\mr{Aut}|}\;\;\Phi\left( \vcenter{\hbox{\input{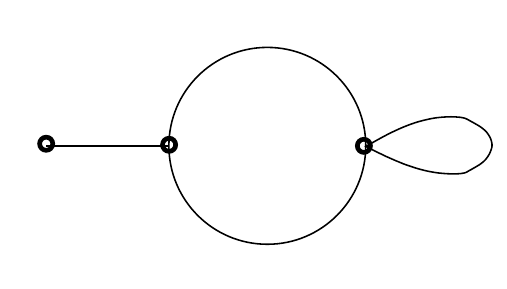tex_t}}}\quad\;\;  \right) =\\
= \frac{g_1\, g_3\, g_4}{4}\sum_{i,j,k,l,m,n,o,p=1}^{n}(Q^{-1})_{kl} (Q^{-1})_{im} (Q^{-1})_{jn} (Q^{-1})_{op}\times (P_3)_{ijk} (P_1)_l (P_4)_{mnop} 
\end{multline*}
\end{example}

\begin{remark}
We can see the sum over graphs in the r.h.s. of (\ref{l9_Feyn}) as the volume of the groupoid of graphs with standard groupoid measure $\frac{1}{|\mr{Aut}(\Gamma)|}$ on objects (graphs) deformed by Feynman rules to $\frac{1}{|\mr{Aut}(\Gamma)|}\Phi_{Q^{-1},\{g_d\, P_d\}}(\Gamma)$.
\end{remark}

\begin{example}\label{l9_exa_quartic_pert}
Consider 
\begin{equation}\label{l9_e1}
I(\lambda)=\int_\RR dx\; e^{\frac{x^2}{2}+\frac{\lambda}{4!} x^4}
\end{equation}
By (\ref{l9_Feyn}), the perturbative evaluation yields the sum over $4$-valent graphs:
\begin{multline} \label{l9_e1.5}
\int_\RR^\mr{pert} dx\; e^{\frac{x^2}{2}+\frac{\lambda}{4!} x^4} =\sqrt{2\pi}  
\sum_{4\mr{-valent\; graphs\;}\Gamma} \frac{\lambda^{\#\mr{vertices}}}{|\mr{Aut}(\Gamma)|}\\
=
\sqrt{2\pi} \left(1+\frac18 \lambda+ \left(\frac{1}{2\cdot 8^2}\lambda^2+\frac{1}{2\cdot 4!}\lambda^2+\frac{1}{16}\lambda^2\right)+\cdots \right)
\end{multline}
The first contributing graphs here are: the empty graph, $\vcenter{\hbox{\begin{picture}(0,0)%
\includegraphics{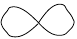}%
\end{picture}%
%
%
\setlength{\unitlength}{987sp}%
\begingroup\makeatletter\ifx\SetFigFont\undefined%
\gdef\SetFigFont#1#2#3#4#5{%
  \reset@font\fontsize{#1}{#2pt}%
  \fontfamily{#3}\fontseries{#4}\fontshape{#5}%
  \selectfont}%
\fi\endgroup%
\begin{picture}(1447,672)(2340,-2505)
\end{picture}%
}}$, $\vcenter{\hbox{\begin{picture}(0,0)%
\includegraphics{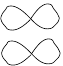}%
\end{picture}%
%
%
\setlength{\unitlength}{789sp}%
\begingroup\makeatletter\ifx\SetFigFont\undefined%
\gdef\SetFigFont#1#2#3#4#5{%
  \reset@font\fontsize{#1}{#2pt}%
  \fontfamily{#3}\fontseries{#4}\fontshape{#5}%
  \selectfont}%
\fi\endgroup%
\begin{picture}(1467,1512)(2340,-3345)
\end{picture}%
}}$, $\vcenter{\hbox{\begin{picture}(0,0)%
\includegraphics{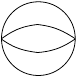}%
\end{picture}%
%
%
\setlength{\unitlength}{987sp}%
\begingroup\makeatletter\ifx\SetFigFont\undefined%
\gdef\SetFigFont#1#2#3#4#5{%
  \reset@font\fontsize{#1}{#2pt}%
  \fontfamily{#3}\fontseries{#4}\fontshape{#5}%
  \selectfont}%
\fi\endgroup%
\begin{picture}(1461,1444)(1157,-2718)
\end{picture}%
}}$, $\vcenter{\hbox{\begin{picture}(0,0)%
\includegraphics{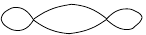}%
\end{picture}%
%
%
\setlength{\unitlength}{987sp}%
\begingroup\makeatletter\ifx\SetFigFont\undefined%
\gdef\SetFigFont#1#2#3#4#5{%
  \reset@font\fontsize{#1}{#2pt}%
  \fontfamily{#3}\fontseries{#4}\fontshape{#5}%
  \selectfont}%
\fi\endgroup%
\begin{picture}(2746,606)(1010,-1595)
\end{picture}%
}}$. Note that, using (\ref{l8_vol}), we can evaluate the total coefficient of $\lambda^n$:
\begin{equation}\label{l9_e2}
 \int_\RR^\mr{pert} dx\; e^{\frac{x^2}{2}+\frac{\lambda}{4!} x^4} =\sqrt{2\pi} \sum_{n=0}^\infty \lambda^n \frac{(4n-1)!!}{n!\, 4!^n} 
\end{equation}
Coefficients of this power series in $\lambda$ grow super-exponentially (roughly, as $n!$), therefore the convergence radius in $\lambda$ is zero! On the other hand, for $\lambda = -\nu <0$ the integral (\ref{l9_e1}) converges, as a usual measure-theoretic integral, to the function
\begin{equation}\label{l9_e3}
\sqrt{\frac{3}{\nu}}\cdot e^{\frac{3}{4\nu}} K_{\frac14}\left(\frac{3}{4\nu}\right)
\end{equation}
where $K_\alpha(x)=\int_0^\infty dt \; e^{-x \cosh t}\cosh(\alpha t)$ is the modified Bessel's function.
The relation between formal power series (\ref{l9_e2}) and the measure-theoretic evaluation(\ref{l9_e3}) is that the former is the \emph{asymptotic series} for the latter at $\lambda=-\nu \ra -0$ (i.e. $\lambda$ approaching zero along the negative half-axis).
\end{example}

\begin{definition}
Let $\phi(z)\in C^\infty(0,\infty)$ a function on the open positive half-line and let $f_n(z)\in C^\infty(0,\infty)$ be a collection of functions for $n=0,1,\ldots$. One says that $\sum_n f_n(z)$ is a \emph{Poincar\'e asymptotic series} for the function $\phi(z)$ at $z=0$ (notation: $\phi(z)\underset{z\ra 0}\sim  \sum_n f_n(z)$) if:
\begin{enumerate}[(i)]
\item $\phi(z)-\sum_{n=0}^N f_n(z)\underset{z\ra 0}\sim O(f_{N+1}(z))$ for any $N\geq 0$ and
\item $f_{n+1}(z)\underset{z\ra 0}\sim  o(f_n(z))$ for any $n\geq 0$, i.e. $\lim_{z\ra +0} \frac{f_{n+1}(z)}{f_n(z)}=0$.
\end{enumerate}

\end{definition}

\marginpar{\LARGE{Lecture 10, 09/28/2016.}}

\subsubsection{Aside: Borel summation}

Introduce an operation which assigns to a power series $f(z)=\sum_{n\geq 0} a_n z^n$ a new power series $\mc{B}f(t):=\sum_{n\geq 0} \frac{a_n}{n!} t^n$. 

We can recover $f(z)$ from $\mc{B}f(t)$ by certain integral transform $\mathbb{T}$ (the Laplace transform, up to a change of variable):
$$ \mathbb{T}(\mc{B} f)(z):=\int_0^\infty dt\; e^{-t} \mc{B}f(tz) \qquad = \sum_{n\geq 0} \frac{a_n}{n!}\underbrace{\int_0^\infty dt\; e^{-t} (tz)^n}_{n! z^n}=f(z)  $$

Note that the map $f(z)\mapsto \mc{B}f(t)$ improves convergence properties: if $f(z)$ has finite convergence radius in $z$, then $\mc{B}f(t)$ is an entire function in $t$.

Borel's summation method amounts to taking a possibly divergent series as $f(z)$ (e.g. with zero convergence radius); then $\mc{B}f(t)$ can still be convergent (possibly, with a finite convergence radius but possessing an analytic continuation). Then one can define $f_\mr{Borel}(z)$ -- the \emph{Borel summation} of $f(z)$, as a function which can be evaluated for nonzero $z$, rather than just a formal power series, as $\mathbb{T}(\mc{B}f)$.

\begin{example}
Consider the power series $f(z)=\sum_{n\geq 0} (-1)^n n! z^n$ -- it clearly has zero convergence radius in $z$. We have $\mc{B}f(t)=\sum_{n\geq 0} (-1)^n t^n$ -- this power series converges to $\frac{1}{1+t}$  with convergence radius $1$ and extends to an analytic function in $t\in\CC\backslash\{-1\}$. Thus, the Borel summation of $f(z)$ is:
$$f_\mr{Borel}(z):=\mathbb{T}\left(\frac{1}{1+t}\right)=\int_0^\infty dt\; e^{-t}\frac{1}{1+tz}=z^{-1}\,e^{z^{-1}} E_1(z^{-1})$$
where $E_1(x)=\int_x^\infty ds\; \frac{e^{-s}}{s}$ is the \emph{exponential integral}.
\end{example}

\textbf{General fact:} Original power series $f(z)$ is the asymptotic series for the Borel summation $f_\mr{Borel}(z)$ at $z\ra 0$.

In application to perturbative integral, the idea is that one may be able to recover the value of the integral at finite value of coupling constants from the perturbation series by means of Borel summation (which is particularly interesting for path integrals where a direct measure theoretic definition at finite coupling constants/Planck constant is not accessible and one only has the perturbative expansion).

If $F(z)$ is a function and $f(z)=\sum_{n\geq 0}a_n z^n$ is the asymptotic series for $F$ at $z\ra 0$ then under some assumptions it is guaranteed that the Borel summation of $f(z)$ gives back $F(z)$ (i.e. the question is when is the function uniquely determined by its asymptotic expansion).

\begin{theorem}[Watson]
Assume that, for some positive constants $R,\varkappa,\epsilon,c$, we have the following:
\begin{itemize}
\item $F(z)$ is holomorphic in the region 
$$\DD:=\{z\in \CC\;|\;\;|z|<R,|\arg(z)|< \varkappa\frac{\pi}{2}+\epsilon\} $$
\item In this region $F(z)$ is ``well approximated'' by its asymptotic series $f(z)$:
$$\left|F(z)-\sum_{n=0}^{N-1} a_n z^n\right|< c^N (\varkappa n)!\; z^N$$
\end{itemize}
Then, in the region $\DD$, we $F(z)$ coincides with Borel summation of its asymptotic series $f(z)=\sum_{n\geq 0}a_n z^n$.
\end{theorem}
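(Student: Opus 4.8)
I focus on the case $\varkappa=1$, for which the transforms $\mc{B}$ and $\mathbb{T}$ above are literally the relevant ones; the general case reduces to it by the substitution $z=w^\varkappa$, which sends $\DD$ to a sector of half-opening $\pi/2+\epsilon/\varkappa>\pi/2$ and the hypothesis to a Gevrey-$1$ bound in $w$ for $G(w):=F(w^\varkappa)$ (with asymptotic expansion supported on multiples of $\varkappa$), so that the $\varkappa=1$ result applied to $G$ identifies $F$ with its $\varkappa$-Borel sum. So assume $\varkappa=1$ and write the hypothesis as $|F(z)-\sum_{n=0}^{N-1}a_nz^n|\le c^N N!\,|z|^N$ for $z\in\DD$, for all $N\ge 0$. \emph{Step 1 (coefficient bounds).} Subtracting the estimates for $N$ and $N+1$ and letting $z\to 0$ inside $\DD$ gives $|a_N|\le c^N N!$; in particular $|a_N/N!|\le c^N$, so the Borel transform $g:=\mc{B}f$, $g(t)=\sum_{n\ge 0}\tfrac{a_n}{n!}t^n$, converges and is holomorphic on the disc $\{|t|<1/c\}$.

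\emph{Step 2 (analytic continuation of $g$; the main obstacle).} The plan is to represent $g$, for $t$ with $\mr{Re}\,t>0$, by the Cauchy-type integral
\[
g(t)=\frac{1}{2\pi i}\int_{\Gamma}F(z)\,e^{t/z}\,\frac{dz}{z},
\]
where $\Gamma$ is a Hankel-type ``lens at the origin'' contour inside $\DD$: out from $z=0$ along $\arg z=\pi/2+\epsilon/2$ to radius $R'<R$, along the arc $|z|=R'$ through the positive real axis, and back to $z=0$ along $\arg z=-(\pi/2+\epsilon/2)$. This is admissible because $F$ is holomorphic on $\Gamma$, and the integrand is integrable at $z=0$ precisely because the two rays have $|\arg z|$ strictly larger than $\pi/2$, so that $|e^{t/z}|=e^{(\mr{Re}\,t)\cos(\arg z)/|z|}$ decays superpolynomially as $z\to 0$ --- this is where the hypothesis that the opening of $\DD$ exceeds $\pi$ is used. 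One must check: (a) for $|t|<1/c$, $\mr{Re}\,t>0$ this integral equals the power series $\mc{B}f(t)$, so it is genuinely the continuation of $g$; this follows by writing $F(z)=\sum_{k<N}a_kz^k+R_N(z)$, evaluating the monomial terms via the classical Hankel formula $\tfrac{1}{2\pi i}\int_\Gamma z^{k-1}e^{t/z}\,dz=\tfrac{t^k}{k!}$ (substitute $w=t/z$), and estimating the remainder integral, where one must invoke \emph{optimal truncation} --- choose $N$ and the contour radius $R'\sim|t|/N$ together, then use Stirling --- to beat the factorial in $|R_N|\le c^N N!|z|^N$; (b) taking $N=0$ (where the hypothesis gives $|F|\le 1$ on $\DD$) and deforming the radius, $|g(t)|\le K\,e^{|t|/R'}$ on a full neighbourhood of $[0,\infty)$. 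I expect (a)--(b) --- keeping the contour inside $\DD$ while controlling its origin end and extracting the uniform exponential bound --- to be the genuinely delicate part.

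\emph{Step 3 (the resummation has the right asymptotics).} Once $g$ is continued to a neighbourhood of $[0,\infty)$ with $|g(t)|\le Ke^{|t|/R'}$, the Borel sum $\mathbb{T}(\mc{B}f)(z)=\int_0^\infty e^{-t}g(tz)\,dt=\tfrac{1}{z}\int_0^{\infty e^{i\arg z}}e^{-s/z}g(s)\,ds=:\widetilde F(z)$ converges and is holomorphic on the sub-region $\DD':=\{z\in\DD:\mr{Re}(e^{-i\psi}/z)>1/R'\text{ for some }|\psi|<\epsilon/2\}$, which near $0$ is a sector of half-opening $\pi/2+\epsilon/2$ (obtained by using the contour $\Gamma$ rotated by $\psi$). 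A routine Watson's-lemma estimate --- insert the first $N$ Taylor terms of $g$ plus the remainder $r_N(s)=O(|s|^N)$, integrate term by term (using $\tfrac1z\int_0^\infty e^{-s/z}s^k\,ds=k!\,z^k$), bound the near-origin remainder by its $O(|s|^N)$ size and the far tail by the $e^{|s|/R'}$ growth --- shows that $\widetilde F$ has the same asymptotic expansion $f$ at $z\to 0$, with a Gevrey-$1$ remainder $|\widetilde F(z)-\sum_{n<N}a_nz^n|\le c_1^N N!\,|z|^N$.

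\emph{Step 4 (uniqueness).} The difference $\Phi:=F-\widetilde F$ is holomorphic on $\DD'$ and, by the two Gevrey-$1$ bounds, has the zero asymptotic series with $|\Phi(z)|\le c_2^N N!\,|z|^N$ for every $N$; optimising over $N$ by Stirling yields $|\Phi(z)|\le C\,e^{-A/|z|}$ on $\DD'$ for some $A>0$. Since $\DD'$ near $0$ contains a disc $\{\mr{Re}(1/z)>1/R''\}$ tangent to the imaginary axis, the substitution $u=1/z$ makes $\Phi(1/u)$ bounded and holomorphic on the half-plane $\{\mr{Re}\,u>1/R''\}$ with $|\Phi(1/u)|\le C e^{-A|u|}$; in particular its boundary values on $\mr{Re}\,u=1/R''$ satisfy $\log|\Phi(1/u)|\le \log C-A|\mr{Im}\,u|$. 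The Phragm\'en--Lindel\"of/Nevanlinna majorant for a bounded holomorphic function on a half-plane bounds $\log|\Phi(1/u)|$ above by a finite linear-at-infinity term plus the Poisson integral of these boundary values, and that Poisson integral diverges to $-\infty$ (because $\int\frac{\mr{Re}\,u-1/R''}{(\mr{Re}\,u-1/R'')^2+(v-v')^2}\,|v'|\,dv'=+\infty$); hence $\Phi\equiv 0$ on $\DD'$. Thus $F=\widetilde F=\mathbb{T}(\mc{B}f)$ wherever the Borel sum is defined, and by the identity theorem $F$ is the holomorphic extension of the Borel sum to all of $\DD$.
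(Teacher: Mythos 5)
The paper states Watson's theorem as a quoted classical fact and gives no proof, so there is nothing internal to compare your argument against; judged on its own, your proposal is the standard Watson/Nevanlinna--Sokal proof and is correct in all essentials. The key mechanisms are all in place and correctly deployed: the Hankel-type representation $g(t)=\frac{1}{2\pi i}\int_\Gamma F(z)e^{t/z}\frac{dz}{z}$ with the rays of $\Gamma$ at angles of modulus strictly greater than $\pi/2$ (which is exactly where the hypothesis that $\DD$ opens wider than $\varkappa\pi$ enters); optimal truncation with $R'\sim |t|/N$, which does give remainder $\lesssim c^N N!\,(|t|/N)^N e^{N}\sim (c|t|)^N\sqrt{N}\to 0$ for $|t|<1/c$, identifying the integral with $\mc{B}f$ there; the $N=0$ case of the hypothesis giving $|F|\le 1$ and hence the growth bound $Ke^{|t|/R'}$; Watson's lemma for the Laplace integral; and the half-plane Nevanlinna argument in the variable $u=1/z$, where the divergence of $\int \frac{A|v|}{1+v^2}\,dv$ is precisely the failure of the log-integrability condition that a nonzero bounded holomorphic function must satisfy. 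Your own flag that Step 2 is the delicate point is fair, but the sketch contains the correct estimates.

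Two remarks. First, Step 4 can be bypassed entirely: once $g$ is continued with the bound $|g(t)|\le Ke^{|t|/R'}$, Fubini applied to the double integral gives, for $0<z<R'$,
\[
\frac1z\int_0^\infty e^{-t/z}g(t)\,dt=\frac{1}{2\pi i}\int_\Gamma \frac{F(\zeta)}{\zeta}\cdot\frac{1/z}{1/z-1/\zeta}\,d\zeta=\frac{1}{2\pi i}\int_\Gamma\frac{F(\zeta)}{\zeta-z}\,d\zeta=F(z),
\]
since $\mr{Re}(1/\zeta-1/z)<0$ along $\Gamma$ and $z$ lies inside $\Gamma$; this is Sokal's route, it produces the identity $F=\mathbb{T}(\mc{B}f)$ on the nose and avoids the Phragm\'en--Lindel\"of machinery (which you would then only need if you wanted the uniqueness statement separately). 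Second, your reduction to $\varkappa=1$ silently replaces the paper's Borel transform $a_n\mapsto a_n/n!$ by the $\varkappa$-Borel transform $a_n\mapsto a_n/\Gamma(\varkappa n+1)$; this is the standard and necessary fix (for $\varkappa>1$ the level-one transform of a series with $|a_n|\le c^n(\varkappa n)!$ has zero radius of convergence, so the theorem as literally stated in the notes only makes sense with this reinterpretation), but it should be said explicitly, and for non-integer $\varkappa$ one should carry out the whole argument in powers of $w=z^{1/\varkappa}$ with $\Gamma(\varkappa n+1)$ rather than citing the integer-power case verbatim.
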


\begin{example}
Function $F(z)=e^{-\frac1z}$ has zero asymptotic series $f(z)=0$ and thus cannot be recovered by Borel summation of $f(z)$. On the other hand, $F(z)$ fails the assumptions of Watson's theorem for any value of $\varkappa$. (Check this!)
\end{example}

\subsubsection{Connected graphs}

It turns out, one can reformulate the r.h.s. of Feynman's formula (\ref{l9_Feyn}) in terms of summation over \textit{connected} graphs only.
\begin{theorem} For a positive-definite quadratic form $Q$ and a polynomial perturbation $p(x)=\sum_{d=0}^D\frac{g_d}{d!}P_d(x)$ as in Theorem \ref{l9_Feyn_thm}, we have
\begin{equation}\label{l10_Feyn_connected}
\int_V^\mr{pert} d^n x\; e^{-\frac12 Q(x,x)+p(x)}=(2\pi)^{\frac{n}{2}}(\det Q)^{-\frac12}\cdot \exp\left( \sum_{\mr{connected\; graphs\;}\gamma} \frac{1}{|\mr{Aut}(\gamma)|}\Phi_{Q^{-1},\{g_d P_d\}}(\gamma)\right)
\end{equation}
\end{theorem}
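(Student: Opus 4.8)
The statement is the standard "connected graphs exponentiate" fact, and the plan is to deduce it directly from Theorem \ref{l9_Feyn_thm} by a combinatorial reorganization of the sum over all graphs. First I would observe that every finite graph $\Gamma$ decomposes uniquely into connected components $\gamma_1,\ldots,\gamma_s$, and the Feynman state sum is multiplicative over components: since the propagator decorations live on edges and the vertex tensors on vertices, and there are no edges between distinct components, one has $\Phi_{Q^{-1},\{g_dP_d\}}(\Gamma)=\prod_{j=1}^s \Phi_{Q^{-1},\{g_dP_d\}}(\gamma_j)$. So the only genuinely combinatorial point is the behavior of the automorphism factor $\frac{1}{|\mathrm{Aut}(\Gamma)|}$ under this decomposition.

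The key step is the automorphism-counting identity. If $\Gamma$ has, for each isomorphism class $[\gamma]$ of connected graph, exactly $n_{[\gamma]}$ components isomorphic to $\gamma$, then $\mathrm{Aut}(\Gamma)$ is the "wreath-type" group $\prod_{[\gamma]} \left(S_{n_{[\gamma]}}\ltimes \mathrm{Aut}(\gamma)^{\times n_{[\gamma]}}\right)$ — an automorphism may permute isomorphic components and independently act by an automorphism on each — so $|\mathrm{Aut}(\Gamma)| = \prod_{[\gamma]} n_{[\gamma]}!\,|\mathrm{Aut}(\gamma)|^{n_{[\gamma]}}$. Therefore
\begin{multline*}
\sum_{\mathrm{graphs}\;\Gamma}\frac{1}{|\mathrm{Aut}(\Gamma)|}\Phi_{Q^{-1},\{g_dP_d\}}(\Gamma)
=\sum_{\{n_{[\gamma]}\}}\prod_{[\gamma]}\frac{1}{n_{[\gamma]}!}\left(\frac{\Phi_{Q^{-1},\{g_dP_d\}}(\gamma)}{|\mathrm{Aut}(\gamma)|}\right)^{n_{[\gamma]}}\\
=\prod_{[\gamma]}\left(\sum_{n\geq 0}\frac{1}{n!}\left(\frac{\Phi_{Q^{-1},\{g_dP_d\}}(\gamma)}{|\mathrm{Aut}(\gamma)|}\right)^{n}\right)
=\exp\left(\sum_{\mathrm{connected}\;\gamma}\frac{1}{|\mathrm{Aut}(\gamma)|}\Phi_{Q^{-1},\{g_dP_d\}}(\gamma)\right),
\end{multline*}
where the sum over $\Gamma$ on the left is over isomorphism classes, the sum over $\{n_{[\gamma]}\}$ is over finitely-supported families of non-negative integers indexed by isomorphism classes of connected graphs, and in the last line $\gamma$ runs over isomorphism classes of connected graphs. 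I would note that all manipulations are legitimate at the level of formal power series in $g_0,\ldots,g_D$: for fixed total vertex count only finitely many graphs contribute, so each coefficient of a monomial $g_0^{v_0}\cdots g_D^{v_D}$ involves only a finite sum, and the rearrangement into a product is the usual identity $\sum_{\text{multisets}} \prod (\cdots) = \prod \exp(\cdots)$ valid in any commutative ring of formal power series. Combining this with Theorem \ref{l9_Feyn_thm} gives \eqref{l10_Feyn_connected}.

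The main obstacle — really the only nontrivial input — is justifying the wreath-product description of $\mathrm{Aut}(\Gamma)$, i.e. that an automorphism of a disjoint union of connected graphs is precisely a permutation of isomorphic connected pieces followed by componentwise automorphisms. This is intuitively clear (an automorphism preserves the connected-component relation, hence permutes components, and must send each component isomorphically onto its image), but I would spell it out: connectedness is a graph-isomorphism invariant, so $\sigma\in\mathrm{Aut}(\Gamma)$ induces a permutation of the set of components; two components in the same $\sigma$-orbit are isomorphic; after composing with a representative permutation we reduce to an automorphism fixing each component setwise, which is exactly an element of $\prod_j \mathrm{Aut}(\gamma_j)$. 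Everything else — multiplicativity of $\Phi$ and the formal-power-series bookkeeping — is routine.
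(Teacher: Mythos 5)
Your proof is correct and follows essentially the same route as the paper's: unique decomposition into connected components, multiplicativity of the state sum $\Phi$, the wreath-product identity $|\mathrm{Aut}(\Gamma)|=\prod_{[\gamma]} n_{[\gamma]}!\,|\mathrm{Aut}(\gamma)|^{n_{[\gamma]}}$, and the resulting rearrangement of the sum over all graphs into the exponential of the sum over connected ones. Your extra care in justifying the automorphism decomposition and the formal-power-series bookkeeping is welcome but does not change the argument.
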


\begin{proof}
Note that any graph $\Gamma$ can be uniquely split into connected components:
\begin{equation}\label{l10_e0}
\Gamma=\gamma_1^{\sqcup r_1}\sqcup\cdots\sqcup \gamma_k^{\sqcup r_k} 
\end{equation}
where $\gamma_1,\ldots,\gamma_k$ are pairwise non-isomorphic connected graphs and $r_1,\ldots, r_k$ are multiplicities with which they appear in the graph $\Gamma$. Automorphisms of $\Gamma$ are generated by automorphisms of individual connected components and permutations of connected components of same isomorphism type:
\begin{equation}\label{l10_e1}
\mr{Aut}(\Gamma)=\prod_{i=1}^k S_{r_i}\ltimes \mr{Aut}(\gamma_i)^{\times r_i}
\end{equation}

Choose some total ordering on the set of isomorphism classes of connected graphs.
Let us calculate $\exp \sum_{\gamma\;\mr{connected}}\frac{1}{|\mr{Aut}(\gamma)|}\Phi(\gamma)$ by expanding the exponential in the Taylor series:
\begin{multline}\label{l10_e2}
\exp \sum_{\gamma\;\mr{connected}}\frac{1}{|\mr{Aut}(\gamma)|}\Phi(\gamma)=\prod_{\gamma\mr{\; connected}}\sum_{r=0}^\infty\frac{1}{|\mr{Aut}(\gamma)|^r\, r!}\Phi(\gamma)^r =\\ 
=\sum_{k=0}^\infty\; \sum_{\gamma_1<\ldots<\gamma_k}\;\sum_{r_1,\ldots,r_k=1}^\infty 
\frac{1}{\prod_{i=1}^k r_i! |\mr{Aut}(\gamma_i)|^{r_i}} \Phi(\gamma_1)^{r_1}\cdots \Phi(\gamma_k)^{r_k} \\
= \sum_{k=0}^\infty\; \sum_{\gamma_1<\ldots<\gamma_k}\;\sum_{r_1,\ldots,r_k=1}^\infty 
\frac{1}{|\mr{Aut}(\Gamma)|}\Phi(\Gamma)
\end{multline}
where in the last step we set $\Gamma:=\gamma_1^{\sqcup r_1}\sqcup\cdots\sqcup \gamma_k^{\sqcup r_k} $ and we used (\ref{l10_e1}) and multiplicativity of Feynman state sum on graphs: $\Phi(\Gamma_1\sqcup \Gamma_2)=\Phi(\Gamma_1)\cdot \Phi(\Gamma_2)$. The sum in the final expression in (\ref{l10_e2}) corresponds simply to summing over all $\Gamma$ (by uniqueness of decomposition (\ref{l10_e0})). Thus, we have proven that
\begin{equation}
\exp \sum_{\gamma\;\mr{connected}}\frac{1}{|\mr{Aut}(\gamma)|}\Phi(\gamma)=
\sum_\Gamma \frac{1}{|\mr{Aut}(\Gamma)|}\Phi(\Gamma)
\end{equation}
which, together with Feynman's formula (\ref{l9_Feyn}) implies (\ref{l10_Feyn_connected}).

\end{proof}

\begin{example} Returning to the Example \ref{l9_exa_quartic_pert}, we can now rewrite (\ref{l9_e1.5}) as a sum over \emph{connected} graphs with $4$-valent vertices:
\begin{multline*}
\int_\RR^\mr{pert} dx\; e^{-\frac12 x^2+\frac{\lambda}{4!} x^4}=\sqrt{2\pi}\cdot \exp\left( \sum_{\gamma\;\mr{connected},\;4\mr{-valent}} \frac{\lambda^{\#\mr{vertices}}}{|\mr{Aut}(\gamma)|} \right)\\
= \sqrt{2\pi} \cdot \exp\left( \frac{\lambda}{8}+\frac{\lambda^2}{2\cdot 4!}+\frac{\lambda^2}{16}+\cdots\right)
\end{multline*}
where the first contributing graphs are $\vcenter{\hbox{}}$, $\vcenter{\hbox{}}$, $\vcenter{\hbox{}}$. Note that the empty graph and $\vcenter{\hbox{}}$ are disconnected and do not contribute here.\footnote{Empty graph is regarded as disconnected: it has zero connected components whereas a connected graph should have one connected component.}

\end{example}

\subsubsection{Introducing the ``Planck constant'' and bookkeeping by Euler characteristic of Feynman graphs}

Consider the integral 
\begin{equation}\label{l10_e3}
\int_V d^n x \; e^{\frac{1}{\hbar} (-\frac12 Q(x,x)+p(x))}
\end{equation}
with $\hbar$ an infinitesimal parameter, $Q$ a positive-definite quadratic form and $p(x)=\sum_{d= 3}^D \frac{1}{d!}P_d(x)$ with $P_d\in \Sym^d V^*$. Note that here, unlike in (\ref{l8_polynomial_perturbation}), we did not scale terms of the perturbation $p(x)$ with coupling constants, however here we only allow at least cubic terms in $p(x)$. We define the perturbative evaluation of (\ref{l10_e3}) by rescaling the integration variable $x=\sqrt\hbar\, y$ which converts it to the perturbative integral of the type defined in (\ref{l8_pert_Gaussian_formal}):
\begin{equation}\label{l10_e4}
\int_V^\mr{pert} d^n x \; e^{\frac{1}{\hbar} (-\frac12 Q(x,x)+p(x))}:= \hbar^{\frac{n}{2}} \int_V^\mr{pert} d^n y\; e^{-\frac12 Q(y,y)+\sum_{d=3}^D \frac{\hbar^{\frac{d}{2}-1}}{d!}P_d(y)}\quad \in \hbar^{\frac{n}{2}}\RR[[\hbar^{\frac12}]]
\end{equation}
Note that, in the integral on the r.h.s., the terms of the perturbation got scaled with ``coupling constants'' $\hbar^{\frac{d}{2}-1}$ -- positive powers of $\hbar$ (as we only allowed terms with $d\geq 3$ in $p(x)$). Moreover, there are finitely many Feynman graphs contributing to each order in $\hbar$.

\begin{lemma}[``Loop expansion'']
We have
\begin{multline}\label{l10_e5}
\int_V^\mr{pert} d^n x \; e^{\frac{1}{\hbar} (-\frac12 Q(x,x)+p(x))}= (2\pi\hbar)^{\frac{n}{2}}(\det Q)^{-\frac12}\sum_{\mr{graphs}\; \Gamma} \frac{\hbar^{-\chi(\Gamma)}}{|\mr{Aut}(\Gamma)|}\Phi_{Q^{-1},\{P_d\}_{d=3}^D}(\Gamma) \\
=(2\pi\hbar)^{\frac{n}{2}}(\det Q)^{-\frac12} \exp\left(\frac{1}{\hbar}\sum_{\gamma\;\mr{connected}} \frac{\hbar^{l(\gamma)}}{|\mr{Aut}(\gamma)|}\Phi_{Q^{-1},\{P_d\}_{d=3}^D}(\gamma)\right)
\end{multline}
where $\chi(\Gamma)$ is the Euler characteristic of the graph and $l(\gamma)=B_1(\gamma)$ is the ``number of loops'' (the first Betti number of a connected graph). Feynman graphs in these expansions are assumed to have valency $\geq 3$ for all vertices (in particular, this implies $l(\gamma)\geq 2$). 
\end{lemma}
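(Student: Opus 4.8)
The statement is a bookkeeping refinement of Feynman's theorem (Theorem \ref{l9_Feyn_thm}) and its connected version, so the plan is to reduce to those results by carefully tracking powers of $\hbar$. First I would start from definition (\ref{l10_e4}), which expresses the left-hand side as $\hbar^{n/2}$ times a perturbative Gaussian integral with coupling constants $g_d = \hbar^{d/2-1}$ for $d = 3, \ldots, D$. Applying Theorem \ref{l9_Feyn_thm} to that integral gives
\begin{equation*}
\int_V^\mr{pert} d^n x\; e^{\frac1\hbar(-\frac12 Q(x,x)+p(x))} = \hbar^{\frac n2}(2\pi)^{\frac n2}(\det Q)^{-\frac12}\sum_{\Gamma}\frac{1}{|\mr{Aut}(\Gamma)|}\,\hbar^{\sum_{d\geq 3}(\frac d2 - 1)V_d(\Gamma)}\;\Phi_{Q^{-1},\{P_d\}}(\Gamma),
\end{equation*}
where $V_d(\Gamma)$ is the number of $d$-valent vertices of $\Gamma$ (here I have pulled the coupling-constant factors back out of the vertex tensors, reversing the last step of the proof of Theorem \ref{l9_Feyn_thm}).

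The crux is then the purely combinatorial identity
\begin{equation*}
\frac n2 + \sum_{d\geq 3}\Bigl(\frac d2 - 1\Bigr)V_d(\Gamma) = \frac n2 - \chi(\Gamma)
\end{equation*}
for every Feynman graph $\Gamma$ contributing to the sum. To see this, write $E(\Gamma)$ for the number of edges and $V(\Gamma) = \sum_d V_d(\Gamma)$ for the number of vertices; then $\chi(\Gamma) = V(\Gamma) - E(\Gamma)$. Since each edge has two half-edges and each $d$-valent vertex has $d$ half-edges, $2E(\Gamma) = \sum_d d\, V_d(\Gamma)$, so $\sum_{d}(\frac d2 - 1)V_d(\Gamma) = E(\Gamma) - V(\Gamma) = -\chi(\Gamma)$. (The $d = 0, 1, 2$ terms are absent by hypothesis, but the identity $\sum_d (\frac d2-1)V_d = E - V$ holds regardless.) Substituting this gives exactly the first line of (\ref{l10_e5}). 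I do not expect any real obstacle here; the only thing to be careful about is that the hypothesis $p(x) = \sum_{d\geq 3}\frac{1}{d!}P_d(x)$ (valency $\geq 3$) guarantees $\hbar^{-\chi(\Gamma)}$ appears only with $-\chi(\Gamma) \geq 0$ growing with graph size — indeed for a connected $\gamma$ with all valencies $\geq 3$ one has $2E \geq 3V$, so $\chi(\gamma) = V - E \leq -V/3 < 0$ unless $\gamma$ is empty — hence only finitely many graphs contribute at each order in $\hbar$, which is what makes the right-hand side a well-defined element of $\hbar^{n/2}\RR[[\hbar^{1/2}]]$; in fact the half-integer powers cancel and one lands in $\hbar^{n/2}\RR[[\hbar]]$.

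For the second line of (\ref{l10_e5}) I would apply the connected-graphs theorem (\ref{l10_Feyn_connected}) in the same way: the exponential of the sum over connected $\gamma$, with each $\gamma$ weighted by $\hbar^{-\chi(\gamma)}\Phi_{Q^{-1},\{P_d\}}(\gamma)/|\mr{Aut}(\gamma)|$, reproduces the sum over all $\Gamma$, because Feynman weights are multiplicative under disjoint union, $-\chi$ is additive under disjoint union, and $\mr{Aut}(\Gamma)$ decomposes as in (\ref{l10_e1}) — this is verbatim the computation (\ref{l10_e2}) with the extra multiplicative factor $\hbar^{-\chi}$ carried along. Finally, for a \emph{connected} graph $\gamma$ one has $\chi(\gamma) = 1 - l(\gamma)$ with $l(\gamma) = b_1(\gamma)$ the first Betti number, so $\hbar^{-\chi(\gamma)} = \hbar^{l(\gamma)-1} = \hbar^{-1}\cdot\hbar^{l(\gamma)}$, and pulling the common $\hbar^{-1}$ out of the sum over connected graphs yields the stated form; the condition $l(\gamma)\geq 2$ follows since a connected graph with all valencies $\geq 3$ and $l(\gamma) \leq 1$ would force a vertex of valency $\leq 2$. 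This completes the proof modulo the routine verifications imported from Theorems \ref{l9_Feyn_thm} and the connected-graph theorem.
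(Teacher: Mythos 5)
Your proposal is correct and follows essentially the same route as the paper's proof: apply Feynman's theorem to the rescaled integral with couplings $\hbar^{d/2-1}$, convert $\sum_v(\mathrm{val}(v)/2-1)$ into $E-V=-\chi(\Gamma)$ via the half-edge count, and use $\chi(\gamma)=1-l(\gamma)$ for connected graphs to get the second line. The extra observations you include (finitely many graphs per order in $\hbar$, and $l(\gamma)\geq 2$) are correct and consistent with the paper's remarks.
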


\begin{proof}
Applying Feynman's formula (\ref{l9_Feyn_thm}) to the r.h.s. of the definition (\ref{l10_e4}), we get the following Feynman wights of graphs:
$$\frac{1}{|\mr{Aut}(\Gamma)|}\Phi_{Q^{-1},\{\hbar^{\frac{d}{2}-1}P_d\}_{d=3}^D}(\Gamma)=\hbar^{\sum_{\mr{vertices}\; v}(\frac{\mr{val(v)}}{2}-1)}\frac{1}{|\mr{Aut}(\Gamma)|}\Phi_{Q^{-1},\{P_d\}_{d=3}^D}(\Gamma) $$
with $ \mr{val}(v)$ the valency of a vertex $v$ of $\Gamma$. Note that $\sum_{\mr{vertices}\; v}\mr{val}(v)=\# HE$ -- the number of half-edges, therefore  
$$\sum_{\mr{vertices}\; v}(\frac{\mr{val(v)}}{2}-1)=\# E - \# V =- \chi(\Gamma)$$
Thus the Feynman weight of a graph is $\hbar^{-\chi(\Gamma)}\frac{1}{|\mr{Aut}(\Gamma)|}\Phi(\Gamma)$ which proves the first equality in (\ref{l10_e5}). For the second equality, we simply notice that, for $\gamma$ connected, $\hbar^{-\chi(\gamma)}=\frac{1}{\hbar}\cdot \hbar^{l(\gamma)}$.

\end{proof}

\begin{remark}
An intuitive way to recover the result (\ref{l10_e5}) is to interpret the normalization of the integrand of l.h.s. of (\ref{l10_e5}) by $\hbar$ as a change of normalization of the quadratic form $Q\mapsto \hbar^{-1}Q$ (and thus $Q^{-1}\mapsto \hbar Q^{-1}$), $p(x)\mapsto \hbar^{-1}p(x)$ with respect to (\ref{l9_Feyn}). Thus, each edge of a graph picks a factor $\hbar$ and each edge picks a factor $\hbar^{-1}$ which results in the value of the entire graph being scaled with the factor $\hbar^{-\chi(\Gamma)}$.

\end{remark}

\begin{remark}
If we allow terms of degree $<3$ in $p(x)$, in the integral (\ref{l10_e4}) (denote it by $I(\hbar)$) there will be infinitely many terms contributing in each order in $\hbar$, also, $I(\hbar)\in \hbar^{\frac{n}{2}}\RR[[\hbar^{-1},\hbar]]$ -- a two-sided formal Laurent series; more precisely, $I(\hbar)\in \hbar^{\frac{n}{2}}\exp\left(\hbar^{-1}\RR[[\hbar]]\right)$.\footnote{Note however that not every power series of form $\sum_{n\geq -1} a_n \hbar^n $  
can be exponentiated to a formal Laurent series -- certain convergence condition needs to hold for $a_n$ for the coefficients of $\exp(\sum_{n\geq -1} a_n \hbar^n )$ 
to be finite.}
\end{remark}

\subsubsection{Expectation values with respect to perturbed Gaussian measure}

We can consider graphs with vertices marked by elements of a set of colors $\mc{C}$. Then we only allow those graph automorphisms which preserve the vertex colors.

Here is the modification of Feynman's Theorem \ref{l9_Feyn_thm} for expectation values w.r.t. perturbed Gaussian measure:

\begin{theorem}\label{l10_thm_perturbed_expectation_values}
Let $Q$ be a positive-definite quadratic form, let $p(x)=\sum_{d=0}^D \frac{g_d}{d!}P_d(x)$ be a polynomial perturbation and let 
$\Psi_j=\sum_{d\geq 0}\frac{1}{d!}\Psi_{j,d}\in \Sym V^*$ for $j=1,\ldots,r$ be a collection of $r$ polynomials (``observables'') with $\Psi_{j,d}$ their respective homogeneous pieces of degree $d$. Then we have:
\begin{enumerate}[(i)]
\item
\begin{multline}\label{l10_e6}
\int_{V}^\mr{pert} d^n x\; e^{-\frac12 Q(x,x)+p(x)}\Psi_1(x)\cdots \Psi_r(x)=\\
=(2\pi)^{\frac{n}{2}} (\det Q)^{-\frac12}\sum_\Gamma \frac{1}{|\mr{Aut}(\Gamma)|} \Phi_{Q^{-1}; \{g_d P_d\},\{\Psi_{j,d}\}_{j=1}^r}(\Gamma)
\end{multline}
where in the r.h.s. we sum over graphs with vertices colored with elements of $\mc{C}=\{0;1,2,\ldots,r\}$ with the condition that vertices of each color $\neq 0$ occur in the graph exactly once (and there are arbitrarily many vertices of color $0$ -- the ``neutral color''). Vertices of color $0$ and valency $d$ are assigned the vertex tensor $g_d P_d$, while a vertex of color $j\in \{1,\ldots, r\}$ and valency $d$ is assigned the vertex tensor $\Psi_{j,d}$.
\end{enumerate}
\item The normalized expectation value of the product of observables w.r.t. the perturbed the Gaussian measure is:
\begin{multline}\label{l10_e7}
\ll\Psi_1\cdots \Psi_r \gg_\mr{pert}:=\frac{\int_{V}^\mr{pert} d^n x\; e^{-\frac12 Q(x,x)+p(x)}\Psi_1(x)\cdots \Psi_r(x)}{\int_{V}^\mr{pert} d^n x\; e^{-\frac12 Q(x,x)+p(x)}}=\\
=\sum_\Gamma \frac{1}{|\mr{Aut}(\Gamma)|} \Phi_{Q^{-1}; \{g_d P_d\},\{\Psi_{j,d}\}_{j=1}^r}(\Gamma)
\end{multline}
where the sum over graphs is as in (\ref{l10_e6}) with additional requirement that each connected component $\Gamma$ should contain at least one vertex of nonzero color. (Thus, $\Gamma$ can have at most $r$ connected components.)
\end{theorem}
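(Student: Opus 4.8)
The plan is to reduce everything to the colored generalization of Feynman's Theorem \ref{l9_Feyn_thm}, which is essentially bookkeeping of which vertices are ``distinguished.'' First I would prove part (i). Writing $\Psi_j = \sum_d \frac{1}{d!}\Psi_{j,d}$ and expanding the product $\Psi_1(x)\cdots\Psi_r(x)\cdot e^{p(x)}$ inside the Gaussian expectation value, I would apply Wick's lemma (Lemma \ref{l7_lm_Wick}) exactly as in the proof of Theorem \ref{l9_Feyn_thm}. The only new feature is that, alongside the $v_d$ indistinguishable $d$-valent stars carrying the vertex tensor $g_d P_d$, we now have, for each $j\in\{1,\ldots,r\}$, precisely one extra star whose valency is not fixed in advance: selecting the degree-$d$ homogeneous piece $\Psi_{j,d}$ of $\Psi_j$ amounts to choosing the valency $d$ of that star (and summing over $d$). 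So in the combinatorial identity $[\sigma]\in\mathcal{V}\backslash\match_{2m}/(S_m\ltimes\ZZ_2^m)\leftrightarrow\text{iso.\ classes of graphs}$, the group $\mathcal{V}$ of vertex symmetries now only permutes the color-$0$ stars among themselves (color-$j$ stars with $j\neq 0$ are unique, hence fixed), and correspondingly $\mr{Aut}(\Gamma)$ is replaced by the subgroup of color-preserving automorphisms. Collecting similar terms in the Wick sum then produces exactly the sum $\sum_\Gamma \frac{1}{|\mr{Aut}(\Gamma)|}\Phi_{Q^{-1};\{g_d P_d\},\{\Psi_{j,d}\}}(\Gamma)$ over graphs colored by $\mathcal{C}=\{0;1,\ldots,r\}$ with each nonzero color occurring once, which is (\ref{l10_e6}) after multiplying by $(2\pi)^{n/2}(\det Q)^{-1/2}$.

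For part (ii), I would use the connected-graph reorganization, following the proof of (\ref{l10_Feyn_connected}). Any colored graph $\Gamma$ of the type in (i) splits uniquely into connected components; a component carries no nonzero-color vertex iff it is a ``vacuum'' component built only from color-$0$ vertices, and the collection of vacuum components is exactly the type of graph summed over in the denominator $\int_V^\mr{pert}$. Since color-preserving automorphisms cannot mix a nonzero-colored component with anything else and cannot permute the (pairwise distinct) nonzero-colored components with each other, the automorphism group factors as $\mr{Aut}(\Gamma)=\mr{Aut}(\Gamma_{\mathrm{vac}})\times\prod(\text{color-preserving autos of the non-vacuum components})$, and $\Phi$ is multiplicative over disjoint union. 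Hence the full sum in (\ref{l10_e6}) factors as $\big(\sum_{\Gamma_{\mathrm{vac}}}\frac{1}{|\mr{Aut}|}\Phi\big)\cdot\big(\sum_{\Gamma': \text{every component nonzero-colored}}\frac{1}{|\mr{Aut}|}\Phi\big)$; the first factor is precisely $\ll e^{p(x)}\gg$, i.e. the denominator divided by $(2\pi)^{n/2}(\det Q)^{-1/2}$, so it cancels in the ratio. What remains is the sum over graphs in which each connected component contains at least one nonzero-colored vertex — and since there are only $r$ such vertices and each component uses at least one, there are at most $r$ components — giving (\ref{l10_e7}).

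The only point requiring care — the ``main obstacle,'' such as it is — is getting the symmetry-factor bookkeeping exactly right once colors are introduced: one must check that in passing from the raw Wick sum over perfect matchings to the sum over isomorphism classes, the stabilizer one quotients by is the group of \emph{color-preserving} graph automorphisms, not the full $\mr{Aut}(\Gamma)$, and that the uniqueness-of-color constraint kills precisely the $S_{v_d}$-type factors that would otherwise permute distinguished vertices. This is the same orbit-counting argument $\frac{1}{|\mr{Stab}_{[\sigma]}|}=\frac{\#\text{orbit}}{|\mathcal{V}|}$ used twice already in the excerpt, specialized to the subgroup fixing the nonzero-colored stars pointwise, so no genuinely new idea is needed; it is purely a matter of carrying the color labels through the argument of Theorems \ref{l9_Feyn_thm} and the connected-graph theorem verbatim.
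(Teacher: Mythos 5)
Your proposal is correct and follows exactly the route the paper intends: the paper dispatches this theorem with the remark that it is a straightforward modification of the proof of Theorem \ref{l9_Feyn_thm}, and your argument supplies precisely that modification — carrying the color labels through the Wick/orbit-counting step so that the stabilizer becomes the group of color-preserving automorphisms, and then factoring out the vacuum (all-color-$0$) components via the connected-graph decomposition to obtain the normalized expectation value. The one subtlety you flag — that the uniqueness of each nonzero color kills the $S_{v_d}$-type permutations of distinguished vertices while the $S_{d_j}$ permutations of their half-edges survive and absorb the $1/d!$ normalizations of $\Psi_{j,d}$ — is indeed the only point requiring care, and you handle it correctly.
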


The proof is a straightforward modification of the proof of Theorem \ref{l9_Feyn_thm}.

\marginpar{\LARGE{Lecture 11, 09/30/2016.}}

\begin{remark} If we normalize the perturbed Gaussian measure in Theorem \ref{l10_thm_perturbed_expectation_values} by a Planck constant, as $e^{\frac{1}{\hbar}(-\frac12 Q(x,x)+p(x))}$, then Feynman graphs will get weighed with $\hbar^{r-\chi(\Gamma)}$. We can interpret the power of $\hbar$ here as minus the Euler characteristic of the graph with vertices marked by nonzero colors removed (but the adjacent edges retained as half-open intervals).

\end{remark}

\subsubsection{Fresnel (oscillatory) version of perturbative integral}

Instead of considering perturbed Gaussian integrals, one can consider perturbed Fresnel integrals in the exact same manner. E.g. Fresnel version of (\ref{l10_e6}), with normalization by Planck constant, is as follows:
\begin{multline}
\int_V^\mr{pert} d^n x\; e^{\frac{i}{\hbar}(\frac12 Q(x,x)+p(x))}\Psi_1(x)\cdots \Psi_r(x)=\\
=(2\pi\hbar)^{\frac{n}{2}}|\det Q|^{-\frac12} e^{\frac{\pi i}{4}\mr{sign}Q}\sum_{\mr{graphs}\;\Gamma} \frac{\hbar^{r-\chi(\Gamma)}}{|\mr{Aut}(\Gamma)|} \Phi_{iQ^{-1}; \{iP_d\}; \{\Psi_{j,d}\}}
\end{multline}
Here $Q$ is a non-degenerate (not necessarily positive-definite) quadratic form and $p(x)=\sum_{d=3}^D \frac{1}{d!}P_d(x)$ a polynomial perturbation. Note that the effect of passing to Fresnel version (i.e. introducing the factor $i$ in the exponential in the integrand) amounts to introducing a factor $i$ in the Feynman rules for edges and vertices of neutral color (and the appearance of phase $e^{\frac{\pi i}{4}\mr{sign}Q}$ which comes from bare Fresnel integral and has nothing to do with perturbation).

\subsubsection{Perturbation expansion via exponential of a second order differential operator}
For a non-degenerate quadratic form $Q(x,x)$, introduce a second order differential operator $Q^{-1}(\frac{\dd}{\dd x},\frac{\dd}{\dd x}):= \sum_{i,j=1}^n (Q^{-1})_{ij} \frac{\dd}{\dd x_i} \frac{\dd}{\dd x_j}$.

One can rewrite perturbation expansion (\ref{l9_Feyn}) as follows:
\begin{equation}\label{l11_e1}
\frac{1}{(2\pi)^{\frac{n}{2}} (\det Q)^{-\frac12}}\int_{V}^\mr{pert} d^n x\; e^{-\frac12 Q(x,x)+p(x)}
=\left. e^{\frac12 Q^{-1}(\frac{\dd}{\dd x},\frac{\dd}{\dd x})}\circ e^{p(x)}\right|_{x=0}
\end{equation}
Here on the l.h.s. both exponentials are to be understood via expanding them in the Taylor series.

This follows from the fact that Wick's lemma can be rewritten as
$$\ll x_{i_1}\cdots x_{i_{2m}}\gg = \frac{1}{2^m m!} \left(Q^{-1}(\frac{\dd}{\dd x},\frac{\dd}{\dd x})\right)^m\circ ( x_{i_1}\cdots x_{i_{2m}}) = \left. e^{\frac12 Q^{-1}(\frac{\dd}{\dd x},\frac{\dd}{\dd x})}\circ ( x_{i_1}\cdots x_{i_{2m}}) \right|_{x=0}   $$
And, consequently, for any $f\in \mr{Sym}V^*$, the Gaussian expectation value can be written as
$$ \ll f(x) \gg=  \left. e^{\frac12 Q^{-1}(\frac{\dd}{\dd x},\frac{\dd}{\dd x})}\circ f(x) \right|_{x=0}$$
Setting $f(x)=e^{p(x)}$, we get (\ref{l11_e1}).

\begin{remark}
Pictorially, the mechanism of producing Feynman graphs from the r.h.s. of (\ref{l11_e1}) is as follows: $e^p$ produces, upon Taylor expansion, collections of stars of vertices (decorated with $g_d P_d$ for a $d$-valent vertex). Applying the operator $e^{\frac12 Q^{-1}(\frac{\dd}{\dd x},\frac{\dd}{\dd x})}$ connects some of the half-edges of those stars by arcs, into edges marked by $Q^{-1}$. Then, setting $x=0$, we kill all pictures where some half-edges were left unpaired, thus retaining only the perfect matchings on all available half-edges.
\end{remark}

\subsection{Stationary phase formula with corrections}

The following version of the stationary phase formula (Theorem \ref{thm: statphase}) explains that formal perturbative integrals we studied in Section \ref{sec: perturbed Gaussian integral} do indeed provide asymptotic expansions for measure-theoretic oscillating integrals in the limit of fast oscillation.

\begin{theorem}\label{thm: stat phase with corrections}
Let $X$ be an $n$-manifold, let $\mu\in \Omega^n_c(X)$ be a compactly supported top-degree form, and let $f\in C^\infty(X)$ be a function with only non-degenerate critical points on $\mr{Supp}\,\mu$. Let $I(\hbar):= \int_{X}\mu\; e^{\frac{i}{\hbar} f}$ -- a smooth complex-valued function on $\hbar\in (0,\infty)$. Then the behavior of $I(\hbar)$ at $\hbar\ra 0$ is given by the following asymptotic series:
\begin{multline}\label{l11_stat_phase_corrections}
I(\hbar)\underset{\hbar\ra 0}\sim \sum_{\mr{crit.\;points\;}x_0\;\mr{of}\; f\mr{\;on\;Supp}\,\mu}
e^{\frac{i}{\hbar}f(x_0)} (2\pi\hbar)^{\frac{n}{2}}|\det f''(x_0)|^{-\frac12}\cdot e^{\frac{\pi i}{4}\mr{sign}\, f''(x_0)}\mu_{x_0}\cdot\\
\cdot \exp \hbar^{-1}\left(\sum_{\gamma\mr{\;conn.\; graphs\; with\; vertices\;of\; val}\geq 3}
\frac{\hbar^{l(\gamma)}}{|\mr{Aut}(\gamma)|}\Phi_{if''(x_0)^{-1};\{i\dd^d f|_{x_0}\}_{d\geq 3}}(\gamma) \right)
\end{multline}
Here we assumed that around every critical point $x_0$ of $f$ on $\mr{Supp}\,\mu$ we have chosen some coordinate chart $(y_1,\ldots,y_n)$ with the property that locally near $x_0$ we have $\mu=d^n y\,\mu_{x_0}$ with $\mu_{x_0}$ a constant. Total $d$-th partial derivative appearing in the Feynman rules on the r.h.s. is understood as a symmetric tensor $\dd^d f|_{x_0}\in \Sym^d V^*$ with components $\left. \frac{\dd}{\dd y_{i_1}}\ldots \frac{\dd}{\dd y_d} f\right|_{y=0}$.

\end{theorem}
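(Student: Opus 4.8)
The plan is to reduce Theorem \ref{thm: stat phase with corrections} to the combination of two things already established in the excerpt: the leading-order stationary phase formula (Theorem \ref{thm: statphase}) and the loop-expansion formula for perturbative Fresnel integrals (the Fresnel version of the ``Loop expansion'' Lemma, equivalently the oscillatory analogue of (\ref{l10_e5})). The whole content is a careful localization-and-rescaling argument, exactly as in the proof of Theorem \ref{thm: statphase}, but now keeping \emph{all} orders in $\hbar$ rather than just the leading term.

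First I would set up the partition of unity: since only $\mr{Supp}\,\mu$ matters, assume $X$ compact, choose a cover $\{U_\alpha\}$ so that each $U_\alpha$ contains at most one critical point and each critical point lies in exactly one $U_\alpha$, and take a subordinate partition of unity $\{\psi_\alpha\}$. Then $I(\hbar)=\sum_\alpha I_\alpha(\hbar)$ with $I_\alpha(\hbar)=\int_{U_\alpha}\mu\,\psi_\alpha\, e^{\frac{i}{\hbar}f}$. For the charts $U_\alpha$ containing no critical point, Lemma \ref{l6_lm2} (applied with $k=1/\hbar$) gives $I_\alpha(\hbar)\sim O(\hbar^\infty)$, so these contribute nothing to the asymptotic series. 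This is the same opening move as in Theorem \ref{thm: statphase}; no new idea is needed.

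The substantive step is the local analysis near a critical point $x_0$. I would \emph{not} use the Morse lemma here (that would destroy the higher derivatives $\dd^d f|_{x_0}$ which appear in the Feynman rules), but instead use the Taylor expansion of $f$ in the chosen chart: $f(y)=f(x_0)+\tfrac12 f''(x_0)(y,y)+\sum_{d\geq 3}\tfrac1{d!}\dd^d f|_{x_0}(y^{\otimes d})+(\text{higher order})$. Rescale $y=\sqrt{\hbar}\,z$; then $\tfrac{i}{\hbar}f(y)=\tfrac{i}{\hbar}f(x_0)+\tfrac{i}{2}f''(x_0)(z,z)+\sum_{d\geq 3}\tfrac{i\,\hbar^{d/2-1}}{d!}\dd^d f|_{x_0}(z^{\otimes d})+\cdots$, and $d^n y=\hbar^{n/2}d^n z$. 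After this rescaling $I_\alpha(\hbar)$ becomes $e^{\frac{i}{\hbar}f(x_0)}\hbar^{n/2}$ times a perturbed Fresnel integral in $z$ of exactly the type treated in the Fresnel/loop-expansion formulas, with quadratic form $Q=f''(x_0)$ and perturbation $p(z)=\sum_{d\geq 3}\tfrac{1}{d!}\dd^d f|_{x_0}(z^{\otimes d})$ scaled by couplings $\hbar^{d/2-1}$. Invoking the Fresnel version of (\ref{l10_e5}) (equivalently (\ref{l9_Feyn}) with the $i$-modified Feynman rules $iQ^{-1}$, $i\dd^d f|_{x_0}$, plus the bare-Fresnel prefactor $e^{\frac{\pi i}{4}\mr{sign}f''(x_0)}|\det f''(x_0)|^{-1/2}$) produces precisely the bracketed exponential-of-connected-graphs in (\ref{l11_stat_phase_corrections}), with $\hbar^{-\chi}=\hbar^{-1}\hbar^{l(\gamma)}$ for connected $\gamma$. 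Summing over $\alpha$ gives the claimed formula, and by Remark \ref{l6_rem_Jac} the result is independent of the chosen charts (the combination $|\det f''(x_0)|^{-1/2}\mu_{x_0}$ is chart-independent, and the graph weights $\Phi_{if''(x_0)^{-1};\{i\dd^d f|_{x_0}\}}(\gamma)$ are too, since a Jacobian change of chart transforms $f''(x_0)^{-1}$ and each $\dd^d f|_{x_0}$ covariantly in a way that cancels in the full contraction, with the leftover determinant factors absorbed into $|\det f''(x_0)|^{-1/2}\mu_{x_0}$).

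The main obstacle is making the rescaling argument rigorous as an \emph{asymptotic} statement: after localizing, the integrand near $x_0$ still contains the smooth cutoff $\psi_\alpha$ and the higher-order remainder of the Taylor expansion of $f$, so one must justify that replacing $f$ by a finite jet and $\mu\psi_\alpha$ by its value at $x_0$ changes the integral only by terms lower than any prescribed order in $\hbar$. This is exactly the phenomenon controlled by Lemma \ref{l6_lm3} and Corollary \ref{l7_cor} (or, more cleanly, Lemma \ref{l7_lm_afterthought}): an integrand whose $\infty$-jet vanishes at the stationary point, integrated against $e^{\frac{i}{\hbar}Q(z,z)}$, contributes $O(\hbar^\infty)$; and truncating the Taylor series of $g$ at order $N$ produces an error of order $\hbar^{n/2+[(N+2)/2]}$. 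So the argument is: fix the order $\hbar^M$ to which one wants accuracy, truncate both the perturbation $p$ and the amplitude $\mu\psi_\alpha$ to a high enough finite order, apply the polynomial (exactly computable) perturbative Fresnel formula to the truncation, and bound the remainder using Lemma \ref{l7_lm_afterthought}; one also needs that the perturbation is genuinely a small (at least cubic) perturbation after rescaling, which is why only finitely many graphs contribute at each order in $\hbar$. Assembling these estimates uniformly over the finitely many critical points and charts, and letting $M\to\infty$, yields the full asymptotic series.
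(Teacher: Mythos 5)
Your proposal is correct and is essentially the intended argument: the paper itself does not write out a proof but delegates to the references \cite{Hoermander, GuilleminSternberg, Zelditch}, and the route you describe (partition of unity plus Lemma \ref{l6_lm2} for non-critical charts, then Taylor expansion rather than the Morse lemma at each critical point, rescaling $y=\sqrt{\hbar}\,z$, and the Fresnel loop-expansion formula with rules $if''(x_0)^{-1}$, $i\dd^d f|_{x_0}$) is exactly the standard proof those references give. You also correctly isolate the only delicate point -- justifying the truncations of $f$ and of $\mu\psi_\alpha$ to finite jets -- and the tools you cite (Corollary \ref{l7_cor}, Lemma \ref{l7_lm_afterthought}, together with Taylor's theorem with remainder applied to $e^{\frac{i}{\hbar}(\tilde f - p_N)}$ so that each discarded term costs a positive power of $\hbar^{1/2}$ per vertex) are the right ones to close it.
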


For the proof, see e.g. \cite{Hoermander, GuilleminSternberg, Zelditch}.

\begin{remark} One can drop the assumption that the density of $\mu$ in the local coordinates $(y_1,\ldots,y_n)$ around a critical point $x_0$ is constant. Let $\mu=\rho(y)\cdot d^n y$ with possibly non-constant $\rho(y)$. Then (\ref{l11_stat_phase_corrections}) becomes
\begin{multline}\label{l11_e2}
I(\hbar)\underset{\hbar\ra 0}\sim \sum_{\mr{crit.\;points\;}x_0\;\mr{of}\; f\mr{\;on\;Supp}\,\mu}
e^{\frac{i}{\hbar}f(x_0)} (2\pi\hbar)^{\frac{n}{2}}|\det f''(x_0)|^{-\frac12}\cdot e^{\frac{\pi i}{4}\mr{sign}\, f''(x_0)}\cdot\\
\cdot \sum_{\Gamma}
\frac{\hbar^{1-\chi(\Gamma)}}{|\mr{Aut}(\Gamma)|}\Phi_{if''(x_0)^{-1};\underbrace{\{i\dd^d f|_{x_0}\}_{d\geq 3}}_{\mr{color}\; 0}; \underbrace{\{\dd^d \rho|_{y=0}\}_{d\geq 0}}_{\mr{color\;}1}}(\Gamma) 
\end{multline}
where the sum on the r.h.s. is over (possibly disconnected) graphs $\Gamma$ with vertices of valency $\geq 3$ colored by neutral color $0$ and a single marked vertex, of arbitrary valency, colored by $1$. 
\end{remark}

\subsubsection{Laplace method}

Laplace method applies to integrals of form $I(\hbar)=\int dx \; e^{-\frac{1}{\hbar}f(x)}$. The idea is that the integrand is concentrated around the minimum $x_0$ of $f$, in the neighborhood of $x_0$ of size $\sim\sqrt\hbar$; in this neighborhood the integrand is well approximated by a Gaussian (given by expanding $f$ at $x_0$ in Taylor series and retaining only the constant and quadratic terms; higher Taylor terms may be accounted for as a perturbation, to obtain higher corrections in powers of $\hbar$).

Simplest version of this asymptotic result is as follows.
\begin{theorem}[Laplace]
Let $f\in C^\infty [a,b]$ be a function on an interval attaining a unique absolute minimum on $[a,b]$ at an interior point $x_0\in (a,b)$, with $f''(x_0)>0$. Let $g\in C^\infty [a,b]$ be another function on the interval with $g(x_0)\neq 0$. Then the integral 
$$I(\hbar):=\int_a^b dx\; g(x) e^{-\frac{1}{\hbar}f(x)}$$
as a smooth function of $\hbar>0$ has the following asymptotics as $\hbar\ra 0$:
\begin{equation}\label{l11_Laplace}
I(\hbar)\underset{\hbar\ra 0}\sim e^{-\frac{1}{\hbar}f(x_0)} \sqrt\frac{2\pi\hbar}{f''(x_0)}\cdot g(x_0)
\end{equation}
\end{theorem}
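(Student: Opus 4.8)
The plan is to reduce the statement to the one-dimensional Gaussian/Laplace estimate by the usual localization-plus-rescaling argument, mirroring the structure of the proof of Theorem \ref{thm: statphase} (the oscillatory stationary phase formula) but now in the decaying (non-oscillatory) regime where convergence is easier. First I would show that the integral is concentrated near $x_0$: since $x_0$ is the unique absolute minimum, for any $\delta>0$ we have $f(x)\geq f(x_0)+c_\delta$ on $[a,b]\setminus(x_0-\delta,x_0+\delta)$ for some $c_\delta>0$, hence $\int_{[a,b]\setminus(x_0-\delta,x_0+\delta)} dx\; g(x)e^{-\frac{1}{\hbar}f(x)}$ is bounded by $(\max|g|)(b-a)\,e^{-\frac{1}{\hbar}(f(x_0)+c_\delta)}$, which is $O(\hbar^\infty)\cdot e^{-\frac{1}{\hbar}f(x_0)}$ and therefore negligible relative to the claimed asymptotics. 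So it suffices to analyze $\int_{x_0-\delta}^{x_0+\delta} dx\; g(x)e^{-\frac{1}{\hbar}f(x)}$ for $\delta$ small.

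Next I would bring the integrand to standard Gaussian form near $x_0$. Shifting coordinates so $x_0=0$ and writing $f(x)=f(x_0)+\tfrac12 f''(x_0)x^2+\til f(x)$ with $\til f(x)=O(x^3)$, I would pull out the overall factor $e^{-\frac{1}{\hbar}f(x_0)}$ and substitute $x=\sqrt{\hbar}\,y$, turning the remaining integral into $\sqrt{\hbar}\int_{-\delta/\sqrt\hbar}^{\delta/\sqrt\hbar} dy\; g(\sqrt\hbar\, y)\,e^{-\frac12 f''(x_0) y^2}\,e^{-\frac{1}{\hbar}\til f(\sqrt\hbar\, y)}$. Since $\til f(\sqrt\hbar\,y)=O(\hbar^{3/2}y^3)$, the last exponential tends to $1$ pointwise as $\hbar\to 0$, and $g(\sqrt\hbar\,y)\to g(x_0)$; after extending the integration domain to all of $\RR$ (again at the cost of an exponentially small error, controlled by the Gaussian tail) the integral converges by dominated convergence to $g(x_0)\int_{-\infty}^\infty dy\; e^{-\frac12 f''(x_0)y^2}=g(x_0)\sqrt{2\pi/f''(x_0)}$. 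This gives exactly the leading term in (\ref{l11_Laplace}). For the dominated-convergence step one needs a uniform Gaussian bound on $e^{-\frac{1}{\hbar}\til f(\sqrt\hbar\,y)}$ on the shrinking domain: for $\delta$ chosen so small that $|\til f(x)|\leq \tfrac14 f''(x_0)x^2$ on $(-\delta,\delta)$, one has $e^{-\frac{1}{\hbar}\til f(\sqrt\hbar\,y)}\leq e^{\frac14 f''(x_0)y^2}$, so the integrand is dominated by $(\max|g|)\,e^{-\frac14 f''(x_0)y^2}$, which is integrable and $\hbar$-independent.

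Alternatively — and this is perhaps cleaner and ties the statement to the perturbative machinery already developed — I would note that (\ref{l11_Laplace}) is the leading term of the Laplace-method analogue of Theorem \ref{thm: stat phase with corrections}: localizing as above and applying the loop expansion of the Lemma in Section \ref{sec: perturbed Gaussian integral} (with $Q=f''(x_0)$, perturbation $p=-\til f$, and the single colour-$1$ observable vertex coming from the Taylor expansion of $g$ at $x_0$) produces the full asymptotic series, whose $\hbar^0$ term inside the prefactor is $g(x_0)$; one then only needs to justify that the formal series is genuinely asymptotic, which is exactly the content of the cited references \cite{Hoermander, GuilleminSternberg, Zelditch}. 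The main obstacle is the rigorous interchange of limit and integral on a domain that shrinks (after rescaling, grows) with $\hbar$; the uniform sub-Gaussian bound from the small-$\delta$ choice handles it, but it must be stated carefully rather than waved through, since without the a priori localization step the naive rescaled integrand is not uniformly dominated.
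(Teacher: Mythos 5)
Your first argument is correct and complete, and it is exactly the localization-plus-$\sqrt\hbar$-rescaling-plus-Gaussian-approximation strategy that the paper sketches informally in the paragraph preceding the theorem (the paper itself states the Laplace theorem without a formal proof); the uniform sub-Gaussian domination via the choice of $\delta$ with $|\til f(x)|\leq \tfrac14 f''(x_0)x^2$ is precisely the point that needs care, and you handle it properly. No gaps.
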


A more general multi-dimensional version, with $\hbar$-corrections is as follows.
\begin{theorem}[Feynman-Laplace]
Let $X$ be a compact $n$-manifold, possibly with boundary, and let $f\in C^\infty(X)$ be a function attaining a unique minimum on $X$ at an interior point $x_0\in \mr{int}(X)$ and assume that the Hessian $f''(x_0)$ is non-degenerate (thus, automatically, positive-definite); also, let $\mu\in \Omega^n(X)$ be a top-degree form. Assume that we have chosen some local coordinates $(y_1,\ldots,y_n)$ near $x_0$ and in these coordinates $\mu=\rho(y)\,d^n y$. Then the integral 
$$I(\hbar):=\int_X \mu \; e^{-\frac{1}{\hbar}f(x)}\qquad \in C^\infty (0,\infty)$$
has the following asymptotic expansion at $\hbar\ra 0$:
\begin{multline}\label{l11_Laplace_corr}
I(\hbar)\underset{\hbar\ra 0}\sim 
e^{-\frac{1}{\hbar}f(x_0)} (2\pi\hbar)^{\frac{n}{2}}(\det f''(x_0))^{-\frac12}
\cdot\\ 
\cdot\sum_{\Gamma}
\frac{\hbar^{1-\chi(\Gamma)}}{|\mr{Aut}(\Gamma)|}\Phi_{f''(x_0)^{-1};\{-\dd^d f|_{x_0}\}_{d\geq 3}; \{\dd^d \rho|_{y=0}\}_{d\geq 0}}(\Gamma) 
\end{multline}
where, as in (\ref{l11_e2}), the sum is over graphs with arbitrarily many vertices of color $0$ and valency $\geq 3$ and a single vertex of color $1$ and arbitrary valency.
\end{theorem}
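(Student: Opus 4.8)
The strategy is to reduce the statement to the Fresnel/stationary-phase-with-corrections result, Theorem \ref{thm: stat phase with corrections}, together with the one-dimensional Laplace asymptotics, and to handle the real-exponential case directly by a rescaling argument exactly parallel to Corollary \ref{l7_cor} and the Loop Expansion Lemma. First I would localize: since $x_0$ is the unique minimum of $f$ on the compact $X$, for any neighborhood $U$ of $x_0$ the contribution of $X\setminus U$ is bounded by $e^{-\frac{1}{\hbar}(f(x_0)+\delta)}$ for some $\delta>0$ (because $f\geq f(x_0)+\delta$ there), hence is $O(\hbar^{\infty})\cdot e^{-\frac{1}{\hbar}f(x_0)}$ relative to the expected answer; so it suffices to compute $\int_U \mu\, e^{-\frac{1}{\hbar}f}$ for $U$ a small coordinate ball around $x_0$, using a cutoff $\psi\in C^\infty_c(U)$ with $\psi\equiv 1$ near $x_0$.

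Next, on $U$ I would write, in the chosen coordinates $(y_1,\dots,y_n)$ centered at $x_0$,
$$ f(y)=f(x_0)+\tfrac12 f''(x_0)(y,y)+\til f(y),\qquad \til f(y)=O(|y|^3), $$
and $\mu=\rho(y)\,d^n y$. Pull out the factor $e^{-\frac{1}{\hbar}f(x_0)}$ and rescale $y=\sqrt{\hbar}\,z$, which converts the integral into
$$ \hbar^{\frac n2}e^{-\frac1\hbar f(x_0)}\int d^n z\;\psi(\sqrt\hbar z)\,\rho(\sqrt\hbar z)\,e^{-\frac12 f''(x_0)(z,z)}\,e^{-\frac1\hbar \til f(\sqrt\hbar z)}. $$
Because $\til f(\sqrt\hbar z)=\hbar^{3/2}\cdot(\text{cubic in }z)+\hbar^{2}\cdot(\text{quartic})+\cdots$, the quantity $\frac1\hbar\til f(\sqrt\hbar z)=\sum_{d\ge3}\frac{\hbar^{d/2-1}}{d!}\,\dd^d f|_{x_0}(z,\dots,z)$ is a polynomial perturbation with coupling constants that are positive powers of $\hbar^{1/2}$; similarly $\rho(\sqrt\hbar z)=\sum_{d\ge 0}\frac{\hbar^{d/2}}{d!}\dd^d\rho|_0(z,\dots,z)$ is a marked (color $1$) vertex. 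This is precisely the setup of the perturbed Gaussian integral of Section \ref{sec: perturbed Gaussian integral} (with Gaussian weight $e^{-\frac12 f''(x_0)(z,z)}$, which is positive-definite since $x_0$ is a strict minimum with nondegenerate Hessian), so expanding $\psi\rho$ in its Taylor series at $0$ and applying Theorem \ref{l9_Feyn_thm} and the Loop Expansion Lemma (and Remark \ref{l6_rem_Jac}, i.e. the identity $\sum_v(\tfrac{\mr{val}(v)}2-1)=\#E-\#V=-\chi$, together with the color-$1$ bookkeeping of the Remark after Theorem \ref{thm: stat phase with corrections}) yields exactly the graph sum on the right-hand side of (\ref{l11_Laplace_corr}), with $Q=f''(x_0)$, edge decoration $Q^{-1}$, and vertex decorations $-\dd^d f|_{x_0}$ (the sign coming from the $e^{-\frac1\hbar\til f}$, i.e.\ $-p$ rather than $+p$) and $\dd^d\rho|_0$.

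The genuine analytic content — and the main obstacle — is justifying that the formal manipulation above is actually an \emph{asymptotic} expansion, i.e.\ that truncating the Taylor series of $\psi\rho$ and of $e^{-\frac1\hbar\til f(\sqrt\hbar z)}$ at finite order produces an error of the claimed order in $\hbar$. This is the analogue of Lemma \ref{l7_lm_afterthought}/Lemma \ref{l6_lm3} but now for the decaying (not oscillating) Gaussian, where it is in fact easier: if $h(z)$ is Schwartz with $h(z)=O(|z|^{N+1})$ at the origin, then $\int d^n z\,h(z)e^{-\frac12 Q(z,z)}$ is a genuine convergent integral and one bounds $\int d^n z\,h(\sqrt\hbar\,z/\sqrt\hbar)\cdots$ by splitting the domain into $|z|\le \hbar^{-1/4}$ and $|z|>\hbar^{-1/4}$; on the inner region one Taylor-expands with remainder, on the outer region the Gaussian decay beats everything, giving $O(\hbar^{(N+1)/2})$. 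Feeding in the cutoff $\psi$ costs nothing since $\psi\equiv1$ near $0$ so $(1-\psi)\rho$ has vanishing $\infty$-jet at the origin and its contribution is $O(\hbar^\infty)$ by the same Gaussian-decay estimate. Alternatively, and more cheaply, one can quote the cited references \cite{Hoermander,GuilleminSternberg,Zelditch} for the Laplace/stationary-phase asymptotics with all lower-order corrections; given Theorem \ref{thm: stat phase with corrections} is already stated with a reference, the honest remaining task is just the identification of the $\hbar$-coefficients with Feynman weights of graphs, which is the combinatorial bookkeeping carried out above.
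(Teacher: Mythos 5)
The paper does not actually supply a proof of this theorem: it is stated immediately after the one-dimensional Laplace theorem without argument, and the parallel oscillatory statement (Theorem \ref{thm: stat phase with corrections}) is explicitly deferred to \cite{Hoermander, GuilleminSternberg, Zelditch}. Your proposal is therefore not competing with an in-paper proof; judged on its own, it is correct and follows the standard route, and it correctly reuses the paper's machinery (localization, rescaling $y=\sqrt\hbar\,z$, Theorem \ref{l9_Feyn_thm}, the loop-expansion bookkeeping $\sum_v(\mathrm{val}(v)/2-1)=-\chi(\Gamma)$, and the color-$1$ marked vertex for $\rho$ contributing the extra $\hbar^{1}$). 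Two small points to tighten. First, in the error estimate the splitting radius $\hbar^{-1/4}$ is not quite right as written: the remainder $\frac1\hbar\til f(\sqrt\hbar z)\sim \hbar^{1/2}|z|^3$ is only uniformly small for $|z|\ll \hbar^{-1/6}$, so either shrink the inner region accordingly or, better, avoid expanding the exponential on the outer region altogether by using the global bound $f(y)-f(x_0)\geq c|y|^2$ on the (small) support of the cutoff, which makes the full integrand bounded by $e^{-c|z|^2}$ there and renders the outer contribution $O(\hbar^\infty)$. Second, your stated error $O(\hbar^{(N+1)/2})$ is correct but can be sharpened by parity (odd moments of the Gaussian vanish), exactly as in the paper's Lemma \ref{l7_lm_afterthought}; this is needed if one wants the truncation error to match the order of the first omitted graph. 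With those adjustments the argument is complete, and it is in fact simpler than the Fresnel case precisely because the Gaussian here is genuinely decaying, as you note.
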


\marginpar{\LARGE{Lecture 12, 10/03/2016.}}

\begin{example}[Stirling's formula with corrections]
Consider $z\ra\infty$ asymptotics of the Euler's Gamma function 
$$\Gamma(z)=\int_0^\infty dt\; t^{z-1} e^{-t}  $$
It is convenient to make a change of the integration variable $t=z\, e^x$, yielding
$$\Gamma(z)=z^z \int_{-\infty}^{\infty} dx\, e^{-z f(x)}$$
with $f(x)=e^x-x$; $f$ has unique absolute minimum at $x=0$ with Taylor expansion $f(x)=1+\frac12 x^2 + \frac{1}{3!}x^3+\cdots$. The asymptotics of this integral at $z\ra\infty$ can evaluated using Laplace's theorem (\ref{l11_Laplace}), with $\hbar:=\frac{1}{z}$:
$$\Gamma(z)\underset{z\ra\infty}\sim z^z e^{-z}\sqrt\frac{2\pi}{z}$$
Using (\ref{l11_Laplace_corr}), we can find corrections to this asymptotics in powers of $\frac{1}{z}$:
$$\Gamma(z)\underset{z\ra\infty}\sim z^z e^{-z}\sqrt\frac{2\pi}{z} \exp\sum_{n=1}^\infty \frac{c_n}{z^n} $$
with $c_n=\sum_{\Gamma}\frac{(-1)^{\#\mr{vertices}}}{|\mr{Aut}(\Gamma)|}$ where the sum goes over connected graphs with $n-1$ loops (all valencies $\geq 3$ allowed). E.g. the first coefficient $c_1$ gets contributions from the  three connected 2-loop graphs: 
$\vcenter{\hbox{}}$, $\vcenter{\hbox{\begin{picture}(0,0)%
\includegraphics{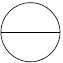}%
\end{picture}%
%
%
\setlength{\unitlength}{789sp}%
\begingroup\makeatletter\ifx\SetFigFont\undefined%
\gdef\SetFigFont#1#2#3#4#5{%
  \reset@font\fontsize{#1}{#2pt}%
  \fontfamily{#3}\fontseries{#4}\fontshape{#5}%
  \selectfont}%
\fi\endgroup%
\begin{picture}(1462,1448)(451,-1775)
\end{picture}%
}}$, $\vcenter{\hbox{\begin{picture}(0,0)%
\includegraphics{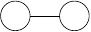}%
\end{picture}%
%
%
\setlength{\unitlength}{789sp}%
\begingroup\makeatletter\ifx\SetFigFont\undefined%
\gdef\SetFigFont#1#2#3#4#5{%
  \reset@font\fontsize{#1}{#2pt}%
  \fontfamily{#3}\fontseries{#4}\fontshape{#5}%
  \selectfont}%
\fi\endgroup%
\begin{picture}(2157,740)(1166,-2129)
\end{picture}%
}}$:
$c_1=-\frac{1}{8}+\frac{1}{12}+\frac{1}{8}=\frac{1}{12}$.\footnote{In fact, as can be obtained independently, e.g., from Euler-Maclaurin formula, $c_n=\frac{B_{n+1}}{n(n+1)}$, with $B_{n+1}$ the $(n+1)$-st Bernoulli number.}
In particular, this implies that the factorial of a large number $n!=n\Gamma(n)$ behaves as
$$n!\underset{n\ra\infty}\sim \sqrt{2\pi n}\; n^n e^{-n} \left(1+\frac{1}{12 n}+O(\frac{1}{n^2})\right)$$
\end{example}

\subsection{Berezin integral}

\subsubsection{Odd vector spaces}
Fix $n\geq 1$. Consider the ``odd $\RR^n$'', denoted as $\Pi \RR^n$ or $\RR^{0|n}$, -- space with anti-commuting\footnote{``Odd'' or ``Grassman'' or ``fermionic'' variables.} coordinates $\theta_1,\ldots,\theta_n$. I.e. $\Pi\RR^n$ is defined by its algebra of functions
$$\mr{Fun}(\Pi\RR^n):=\RR \lan \theta_1,\ldots, \theta_n \ran/\, \theta_i \theta_j=-\theta_j \theta_i$$
More abstractly, for $V$ a vector space over $\RR$, its odd version $\Pi V$ has the algebra of functions 
$$\mr{Fun}(\Pi V)=\wedge^\bt V^*$$
-- the exterior algebra of the dual (viewed as a super-commutative associative algebra), whereas for an even vector space $\mr{Fun}(V)=\widehat{\Sym} V^*$ -- the (completed) symmetric algebra of the dual.

\subsubsection{Integration on the odd line}
Consider the case $n=1$ -- the odd line $\Pi \RR$ with coordinate $\theta$ subject to relation $\theta^2=0$. Functions on $\Pi\RR$ have form $a+b\theta$ with $a,b\in \RR$ arbitrary coefficients. We define the integration map $\int_{\Pi \RR}D\theta \;(\cdots) :\; \mr{Fun}(\Pi \RR)\ra \RR$ by
\begin{equation}\label{l12_Berezin_1D}
\int_{\Pi\RR}D \theta\; (a+b\theta):=b
\end{equation}
I.e. the integration simply picks the coefficient of $\theta$ in the function being integrated. Integration as defined above is uniquely characterized by the following properties:
\begin{itemize}
\item integration maps is $\RR$-linear,
\item ``Stokes' theorem'': $\int_{\Pi \RR}D\theta \;\frac{\dd}{\dd \theta} g(\theta)=0$ for $g(\theta)$ an arbitrary function on $\Pi\RR$.\footnote{Derivatives are defined on $\Pi\RR^n$ in the following way: $\frac{\dd}{\dd\theta_i}$ is an odd derivation of $\mr{Fun}(\Pi\RR^n)$ (i.e. a linear map $\mr{Fun}(\Pi\RR^n)\ra \mr{Fun}(\Pi\RR^n)$ satisfying the Leibniz rule with appropriate sign $\frac{\dd}{\dd\theta_i}(f\cdot g)=(\frac{\dd}{\dd\theta_i}f)\cdot g+(-1)^{|f|}f\cdot (\frac{\dd}{\dd\theta_i}g)$) and defined on generators by $\frac{\dd}{\dd\theta_i}\theta_j=\delta_{ij}$.} This implies that the integral of a constant function has to vanish.
\item Normalization convention: $\int_{\Pi \RR}D\theta\;\theta=1$. 
\end{itemize}

\subsubsection{Integration on the odd vector space}
A function on $\Pi\RR^n$ can be written as 
$$f(\theta_1,\ldots,\theta_n)=\sum_{k=0}^n \;\sum_{1\leq i_1<\cdots < i_k \leq n} f_{i_1\cdots i_k} \theta_{i_1}\cdots \theta_{i_k}$$
with $f_{i_1\cdots i_k}\in \RR$ the coefficients. Berezin integral on $\Pi\RR^n$ is defined as follows:
\begin{equation}\label{l12_Berezin}
\int_{\Pi\RR^n} D\theta_n\,\cdots D\theta_1\;f :=f_{1\cdots n}\qquad =\mbox{coefficient of $\theta_1\cdots\theta_n$ in $f$}
\end{equation}
This definition can be obtained from the definition (\ref{l12_Berezin_1D}) for the 1-dimensional case by formally imposing the Fubini theorem, e.g. for $n=2$ and $f(\theta_1,\theta_2)=f_\varnothing+f_1\theta_1+f_2\theta_2 +f_{12}\theta_1 \theta_2$ we have
$$
\int D\theta_2\, D\theta_1\;f = \int D\theta_2 \underbrace{\left( \int D\theta_1\,f   \right)}_{f_1+f_{12}\theta_2} = f_{12}
$$
Case of general $n$ is treated similarly, by inductively integrating over odd variables $\theta_i$, in the order of increasing $i$.

\begin{remark}
Berezin integral can also be seen as an iterated derivative:
$$\int_{\Pi\RR^n} D\theta_n\cdots D\theta_1 \; f= \left.\frac{\dd}{\dd \theta_n}\cdots \frac{\dd}{\dd \theta_1} f\right|_{\theta=0}$$
\end{remark}

More abstractly, $f\in \mr{Fun}(\Pi V)$ a function on an odd vector space $\Pi V$ (for $V$ of dimension $n$) and $\mu\in \wedge^n V$ a ``Berezinian'' (a replacement of the notion of integration measure or volume form in the context of integration over odd vector spaces), Berezin integral is defined as
$$\int_{\Pi V}\mu\cdot f := \lan \mu,f \ran  $$
-- the pairing between the top component of $f$ in $\wedge^n V^*$ and $\mu\in \wedge^n V$. The pairing between $\wedge^n V$ and $\wedge^n V^*$ is defined by
$$\lan \psi_n\wedge\cdots\wedge \psi_1, \theta_1\wedge\cdots\wedge\theta_n  \ran: = \det \lan \psi_i,\theta_j\ran$$
for $\psi_i\in V$ vectors, $\theta_j\in V^*$ covectors and $\lan\psi_i,\theta_j\ran$ the canonical pairing between $V$ and $V^*$. 

Note that constant volume forms on an even space $V$ are (nonzero) elements of $\wedge^n V^*$ whereas Berezinians are elements of $\wedge^n V$. Note that there is no dual in the second case!

Given a basis $e_1,\ldots,e_n$ in $V$ and the associated dual basis regarded as coordinate functions on the odd space $\theta_1,\ldots,\theta_n\in V^*\subset \mr{Fun}(\Pi V)$, we have a ``coordinate Berezinian'' 
$$\mu=D\theta_n\cdots D\theta_1:= e_n\wedge\cdots\wedge e_1 \quad \in \wedge^n V$$
Note that, if we have a change of coordinates on $\Pi V$, $\theta_i=\sum_j A_{ij}\theta'_j$, the respective coordinate Berezinians are related by 
\begin{equation}
D^n\theta=(\det A)^{-1} D^n\theta'
\end{equation}
where $D^n\theta$ is a shorthand for $D\theta_n\cdots D\theta_1$ and similarly for $D^n\theta'$. Then we have a change of coordinates formula for the Berezin integral:
$$\int_{\Pi V} D^n\theta f(\theta)=\int_{\Pi V} (\det A)^{-1}D^n\theta'\; f(\theta_i=\sum_j A_{ij}\theta'_j)$$
Observe the difference from the case of a change of variables $x_i=\sum_{j}A_{ij} x'_j$ in an integral over an even space:
$$\int_V d^nx\; f(x)=\int_V |\det A|\;d^n x' \; f(x_i=\sum_j A_{ij} x'_j) $$
In even case we have the absolute value of the Jacobian of the transformation,\footnote{In the even case we either think of an integral over an oriented space against a top form, or of an integral over a non-oriented space against a measure (density). A measure transforms with the absolute value of the Jacobian, while a top form transform just with the Jacobian itself -- but then one has to take the change of orientation into account separately.}
whereas in the odd case we have the inverse of the Jacobian, without taking the absolute value.

\subsection{Gaussian integral over an odd vector space}
Let $Q(\theta,\theta)=\sum_{i,j=1^n}Q_{ij}\theta_i \theta_j$ be a quadratic form on $\Pi\RR^n$ with $Q_{ij}$ an anti-symmetric matrix, so that $\frac12 Q(\theta,\theta)=\sum_{i<j} Q_{ij} \theta_i \theta_j$. We assume that $n=2s$ is even. Then we have the following version of Gaussian integral over $\Pi\RR^n$:
\begin{equation}\label{l12_Berezin_Gaussian}
\int_{\Pi \RR^n}D^n\theta\; e^{\frac12 Q(\theta,\theta)}=\frac{1}{2^s s!}\sum_{\sigma\in S_{n}} (-1)^\sigma \prod_{i=1}^s Q_{\sigma_{2i-1}\sigma_{2i}} = \mr{pf}(Q)
\end{equation}
-- the Pfaffian of the anti-symmetric matrix $Q_{ij}$; here $(-1)^\sigma$ is the sign of permutation $\sigma$. We obtain the Pfaffian simply by expanding $e^{\frac12 Q}$ in Taylor series, picking the top monomial in $\theta$-s and evaluating its coefficients (as per definition of Berezin integral (\ref{l12_Berezin})).\footnote{Recall that an alternative definition of Pfaffian is as the coefficient on the r.h.s. of $\frac{1}{s!}(\sum_{i<j}Q_{ij}\theta_i\theta_j)^s=\pf(Q)\cdot\theta_1\cdots\theta_n$, which is precisely what we need to evaluate the Berezin integral (\ref{l12_Berezin_Gaussian}).} Note that, for $n$ odd, the integral on the l.h.s. of (\ref{l12_Berezin_Gaussian}) vanishes identically (the exponential contains only monomials of even degree in $\theta$, hence there is no monomial of top degree).

Recall the basic properties of Pfaffians:
\begin{itemize}
\item $\pf(Q)^2=\det Q$,
\item for $A$ any $n\times n$ matrix,  $\pf(A^T Q A)=\det A\cdot \pf(Q)$,
\item $\pf(Q_1\oplus Q_2)=\pf(Q_1)\cdot \pf(Q_2)$,
\item $\pf(\lambda Q)=\lambda^s \pf(Q)$.
\end{itemize}

\begin{example}
$$\pf\left(  \begin{array}{cccc}
\begin{array}{cc} 0 & a_1 \\ -a_1 & 0 \end{array} & & & \mbox{\Huge 0} \\
& \begin{array}{cc} 0 & a_2 \\ -a_2 & 0 \end{array} & & \\
& & \ddots & \\ 
\mbox{\Huge 0} & & & \begin{array}{cc} 0 & a_s \\ -a_s & 0 \end{array}
\end{array}
\right)\quad =\quad a_1\cdot\ldots \cdot a_s$$
\end{example}

\begin{example}
$$
\pf\left( \begin{array}{cccc} 
0 & a_{12} & a_{13} & a_{14} \\
-a_{12} & 0 & a_{23} & a_{24} \\
-a_{13} & -a_{23} & 0 & a_{34} \\
-a_{14} & -a_{24} & -a_{34} & 0
\end{array}  \right)\quad = \quad
a_{12} a_{34}-a_{13} a_{24} + a_{14} a_{23}
$$
\end{example}

\begin{remark}
Consider a special instance of Berezin Gaussian integral where odd variables come in pairs $\theta_i,\bar\theta_i $ (the bar does not stand for complex conjugation: $\bar\theta_i$ is an independent variable from $\theta_i$):
\begin{equation}\label{l12_Gaussian_theta_thetabar}
\int_{\Pi\RR^n\oplus \Pi\RR^n}(D\theta_n D\bar\theta_n)\cdots (D\theta_1 D\bar\theta_1) \; e^{B(\bar\theta,\theta)}=\det B
\end{equation}
here $B(\bar\theta,\theta)=\sum_{i,j=1}^n B_{ij} \bar \theta_i \theta_j$ where $B_{ij}$ is a matrix which does not have to be symmetric or anti-symmetric. The fact that the integral above is equal to $\det B$ is a simple calculation of the Berezin integral:
$$ \mbox{l.h.s.}= \frac{1}{n!}\sum_{\sigma,\sigma'\in S_n} (-1)^\sigma (-1)^{\sigma'} B_{\sigma_1\sigma'_1}\cdots B_{\sigma_n\sigma'_n}=\sum_{\sigma''\in S_n}(-1)^{\sigma''} B_{1 \sigma''_1}\cdots  B_{n \sigma''_n} = \det B$$
where $\sigma''=\sigma'\cdot \sigma^{-1}$. More invariantly, for endomorphism $B\in \mr{End}(V)\simeq V^*\otimes V\subset \mr{Fun}(\Pi V \oplus \Pi V^*)$, we have
$$\int_{\Pi V\oplus \Pi V^*} \mu^\mr{can}_{\Pi V\oplus \Pi V^*}\; e^B=\det B $$
where $\mu^\mr{can}_{\Pi V\oplus \Pi V^*}$ is the canonical Berezinian on $\Pi V\oplus \Pi V^*$, which, for any choice of coordinates $\theta_1,\ldots,\theta_n$ on $V$ and \emph{dual} coordinates $\bar\theta_1,\ldots,\bar\theta_n$ on $V^*$, takes the form 
$(D\theta_n D\bar\theta_n)\cdots (D\theta_1 D\bar\theta_1)$.
\end{remark}

\subsection{Perturbative integral over a vector superspace}

\subsubsection{``Odd Wick's lemma''}

We have the following version of Wick's lemma for integration over an odd vector space.

\begin{lemma}
Let $V$ be a vector space over $\RR$ of even dimension $n=2s$. Let $Q\in \wedge^2 V^*$ be a non-degenerate anti-symmetic bilinear, viewed as a quadratic form on $\Pi V$, and let $\xi_1,\ldots,\xi_{2m}$ be a collection of elements of $V^*$ (viewed as linear functions on $\Pi V$). Then the expectation value 
$$\ll \xi_1\cdots\xi_{2m} \gg:= \frac{\int_{\Pi V} \mu \; e^{-\frac12 Q} \xi_1\cdots \xi_{2m}}{\int_{\Pi V} \mu \; e^{-\frac12 Q} }$$ 
(here $\mu$ is an arbitrary non-zero Berezinian on $\Pi V$; the expectation value is clearly independent of $\mu$) is equal to the sum over perfect matchings with signs:
\begin{equation}
\ll \xi_1\cdots\xi_{2m} \gg =\sum_{\sigma\in S_{2m}/S_m\ltimes \ZZ_2^m} (-1)^\sigma \lan \sigma \circ (Q^{-1})^{\otimes m}, \xi_1\otimes\cdots\otimes \xi_{2m} \ran
\end{equation}
\end{lemma}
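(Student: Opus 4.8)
The plan is to mimic the proof of the even Wick's lemma (Lemma \ref{l7_lm_Wick}) using a generating function with an odd source, taking care of signs throughout. First I would introduce the ``partition function with source''
\[
W(\beta):=\int_{\Pi V}\mu\; e^{-\frac12 Q(\theta,\theta)+\langle\beta,\theta\rangle}
\]
where $\beta=(\beta_1,\ldots,\beta_n)$ is a tuple of \emph{odd} (anticommuting) auxiliary variables, $\langle\beta,\theta\rangle=\sum_i\beta_i\theta_i$, and $\beta_i$ anticommutes with all $\theta_j$ and with itself. As in the even case, one completes the square: writing $\theta=\eta+Q^{-1}\beta$ (with $Q^{-1}\in\wedge^2 V$ the inverse bilinear, so $Q^{-1}\beta$ denotes the odd vector with components $\sum_j(Q^{-1})_{ij}\beta_j$), one checks by a direct manipulation with signs that $-\frac12 Q(\theta,\theta)+\langle\beta,\theta\rangle=-\frac12 Q(\eta,\eta)+\frac12\langle\beta,Q^{-1}\beta\rangle$, where $\langle\beta,Q^{-1}\beta\rangle=\sum_{i,j}(Q^{-1})_{ij}\beta_i\beta_j$; the shift $\theta\mapsto\eta$ is an additive (unimodular) change of odd coordinates so leaves the Berezin integral invariant. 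Hence
\[
W(\beta)=e^{\frac12\langle\beta,Q^{-1}\beta\rangle}\cdot \mr{pf}(Q),
\]
using the Gaussian Berezin integral (\ref{l12_Berezin_Gaussian}).

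Next I would extract the moments by odd differentiation: from the definition of $W(\beta)$, applying the left odd derivatives $\frac{\dd}{\dd\beta_{i_1}}\cdots\frac{\dd}{\dd\beta_{i_{2m}}}$ and setting $\beta=0$ brings down exactly $\theta_{i_1}\cdots\theta_{i_{2m}}$ from the expansion of $e^{\langle\beta,\theta\rangle}$, up to a sign coming from moving the odd $\beta$'s past each other and past the $\theta$'s; one obtains
\[
\ll \xi_{i_1}\cdots\xi_{i_{2m}}\gg=\frac{1}{W(0)}\,\frac{\dd}{\dd\beta_{i_1}}\cdots\frac{\dd}{\dd\beta_{i_{2m}}}\,W(\beta)\Big|_{\beta=0},
\]
and then, since only the degree-$m$ term of $e^{\frac12\langle\beta,Q^{-1}\beta\rangle}$ survives after $2m$ derivatives and setting $\beta=0$, this reduces to
\[
\frac{1}{2^m m!}\,\frac{\dd}{\dd\beta_{i_1}}\cdots\frac{\dd}{\dd\beta_{i_{2m}}}\Big(\sum_{j,k}(Q^{-1})_{jk}\beta_j\beta_k\Big)^m.
\]
Expanding by the (graded) Leibniz rule and collecting terms gives a sum over ways of pairing up $\{1,\ldots,2m\}$, i.e.\ over $S_{2m}/(S_m\ltimes\ZZ_2^m)$, each contributing $\prod_\ell (Q^{-1})_{i_{a_\ell}i_{b_\ell}}$ times a sign; the sign is precisely $(-1)^\sigma$ for the representative permutation $\sigma$, because each time two odd derivatives are commuted past odd factors a $-1$ appears, and the combinatorics of these transpositions assembles into the sign of the permutation (this is the same bookkeeping that in the even case produced no sign). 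Finally one passes to the abstract formulation replacing $\xi_{i_\ell}$ by arbitrary $\xi_\ell\in V^*$ by multilinearity, and rewrites the result as $\sum_{\sigma\in S_{2m}/(S_m\ltimes\ZZ_2^m)}(-1)^\sigma\langle\sigma\circ(Q^{-1})^{\otimes m},\xi_1\otimes\cdots\otimes\xi_{2m}\rangle$, exactly as in Remark \ref{l7_rem_abs} but with the extra sign.

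The main obstacle I expect is purely the sign bookkeeping: one must fix conventions (left vs.\ right derivatives, the order $D\theta_n\cdots D\theta_1$ in the Berezinian, whether $\beta_i$ sits to the left or right of $\theta_i$ in $\langle\beta,\theta\rangle$) and then verify that completing the square, shifting the integration variable, and differentiating all produce mutually consistent signs, and that the net sign attached to a matching is indeed $(-1)^\sigma$ and independent of the choice of representative $\sigma$ within its coset (this last point uses that transposing two pairs, or flipping the two elements inside a pair, contributes an even permutation, which matches the fact that $(Q^{-1})_{jk}\beta_j\beta_k$ is symmetric under $j\leftrightarrow k$ since $Q^{-1}$ is antisymmetric and $\beta_j\beta_k$ is antisymmetric). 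Everything else is a routine transcription of the even-case argument; I would present the generating-function computation in detail and state the sign verification as a direct check.
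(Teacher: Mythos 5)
Your proposal is correct and takes exactly the route the paper does: the paper's proof is the one-line remark that one introduces an odd source and obtains the moments as derivatives in the source of the Gaussian Berezin integral, which is precisely your generating-function computation with $W(\beta)$, completion of the square, and the unimodular shift. One small phrasing slip in your final paragraph: flipping the two elements inside a pair is an \emph{odd} permutation (a single transposition), not an even one --- but, as you yourself note, the resulting flip of $(-1)^\sigma$ is compensated by the antisymmetry of $Q^{-1}$, so the summand is indeed well-defined on the coset $S_{2m}/(S_m\ltimes \ZZ_2^m)$ and the argument goes through.
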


It is proven by the same technique as the usual Wick's lemma for an even Gaussian integral: one introduces a source $J$ (which is now odd) and obtains the expectation values as derivatives in $J$ of the Gaussian integral modified by the source term.

\marginpar{\LARGE{Lecture 13, 10/05/2016.}}

\begin{example}
Gaussian expectation value of a quartic monomial on $\Pi \RR^n$ is:
$$\ll \theta_i \theta_j \theta_k \theta_l \gg = \ll \theta_i \theta_j \gg\cdot \ll \theta_k \theta_l \gg -
\ll \theta_i \theta_k \gg \cdot \ll \theta_j \theta_l\gg +
\ll \theta_i \theta_l \gg \cdot \ll \theta_j \theta_k \gg$$
where e.g. the sign of the second term in the r.h.s. is $(-1)^{\left(\begin{array}{cccc} i&j&k&l\\ i&k&j&l \end{array}\right)}=-1$. Quadratic expectation values in turn are the matrix element of the inverse of $Q$:  
$$\ll \theta_i \theta_j\gg= (Q^{-1})_{ij}$$
\end{example}

\subsubsection{Perturbative integral over an odd vector space}

Perturbed Gaussian integral over an odd space can be treated similarly to the even case. Let $Q$ be a non-degenerate quadratic form on $\Pi V=\Pi \RR^n$ and let $p(\theta)=\sum_{d=0}^D \frac{g_d}{d!} P_d(\theta)$ be a polynomial perturbation where we allow only even degrees $d$ for the homogeneous components $P_d\in \wedge^d V^*$. Consider the integral
$$I:=\int_{\Pi V} D^n\theta \; e^{-\frac12 Q(\theta,\theta)+p(\theta)}$$
Evaluating it be expanding $e^{p(\theta)}$ in Taylor series and applying Wick's lemma termwise, we obtain:
\begin{multline}\label{l13_e1}
\int_{\Pi V} D^n\theta\;  e^{-\frac12 Q(\theta,\theta)+p(\theta)}=\\
=\pf(-Q)\cdot\sum_{v_0,\ldots,v_D=0}^\infty\sum_{[\sigma]\in (\prod_d S_{v_d}\ltimes S_d^{\times v_d})\backslash\; S_{2m}\; /S_m\ltimes \ZZ_2^m} \frac{1}{\mr{Stab}_{[\sigma]}}(-1)^\sigma \lan \sigma\circ (Q^{-1})^{\otimes m}, \prod_{d=0}^D(g_d P_d)^{\otimes v_d} \ran 
\\
= \pf(-Q)\cdot\sum_\Gamma \frac{1}{|\mr{Aut}(\Gamma)|}\Phi_{Q^{-1};\{g_dP_d\}}(\Gamma )
\end{multline}
Here $2m=\sum_d d\cdot v_d$; the sum runs over all graphs $\Gamma$ with vertices of even valency ranging between $0$ and $D$. Feynman state sum $\Phi(\Gamma)$ for a graph now contains the sign of a permutation $\sigma\in S_{2m}$ representing $\Gamma$.

\begin{remark}
\begin{itemize}
\item Note that (\ref{l13_e1}) is an exact evaluation of a Berezin integral (i.e. the perturbative evaluation and exact evaluation automatically coincide for integrals over finite-dimensional odd vector spaces). 
\item Since sufficiently high powers of $p(\theta)$ vanish identically, the r.h.s. of (\ref{l13_e1}) is a finite-degree polynomial in $g_0,\ldots,g_D$.
\item Graphs with $>n$ half-edges are guaranteed to cancel out on the r.h.s. (note that individual graphs with $\# HE>n$ can be still nonzero, but cancel out once all graphs are summed over).
\item R.h.s. of (\ref{l13_e1}) can be rewritten as
\begin{equation}\label{l13_e2}
\pf(-Q)\cdot \exp\sum_\gamma \frac{1}{|\mr{Aut}(\gamma)|}\Phi_{Q^{-1};\{g_dP_d\}}(\gamma ) 
\end{equation}
where the sum is over connected graphs $\gamma$. Here the sum in the exponential is, generally, not a polynomial in $g_d$ and contributions of connected graphs do not cancel out for graphs of high complexity.
\end{itemize}
\end{remark}

\begin{example}
Here is an example of a weight of a Feynman graph in the r.h.s. of (\ref{l13_e1}):
\begin{multline*}
\frac{1}{48}\Phi\left(\vcenter{\hbox{\input{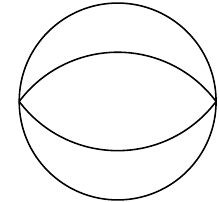tex_t}}}\right)= \\
=\frac{(g_4)^2}{48}\sum_{s_1,\ldots,s_8=1}^n (Q^{-1})_{s_1 s_3} (Q^{-1})_{s_6 s_2} (Q^{-1})_{s_4 s_8} (Q^{-1})_{s_7 s_5} \cdot (P_4)_{s_1 s_6 s_4 s_7} (P_4)_{s_3 s_2 s_8 s_5} \cdot\\
\cdot (-1)^{\tiny \left(\begin{array}{cccccccc} 1 & 3 & 6 & 2 & 4 & 8 & 7 & 5 \\ 1 & 6 & 4 & 7 & 3 & 2 & 8 & 5 \end{array}\right)} 
\end{multline*}
Here we assigned arbitrary labels (from $1$ to $8$) to the half-edges; the sign factor is the sign of the permutation taking the order of labels for edges to the order of labels for the vertices.
\end{example}

\begin{example}
Let $B\in GL(V)$ and $P\in \mr{End}(V)$. Consider the following perturbation of the integral (\ref{l12_Gaussian_theta_thetabar}): 
\begin{equation}\label{l13_e4}
I(\alpha)=\int_{\Pi V\oplus \Pi V^*} \overleftarrow{\prod_{j=1}^n} D\theta_j D\bar\theta_j\; e^{-\sum_{i,j} B_{ij}\bar \theta_i \theta_j+\alpha \sum_{i,j} P_{ij} \bar \theta_i \theta_j} 
\end{equation}
with $\alpha$ a coupling constant. Using (\ref{l13_e1},\ref{l13_e2}) we find that
\begin{equation}\label{l13_e3}
I(\alpha)=\det(-B)\cdot\exp\left(-\sum_{k=1}^\infty\frac{\alpha^k}{k} \mr{tr}\,(B^{-1}P)^k\right)
\end{equation}
Terms in the exponential correspond to \emph{oriented} polygon graphs with $k$ vertices and $k$ edges 
$$\vcenter{\hbox{\begin{picture}(0,0)%
\includegraphics{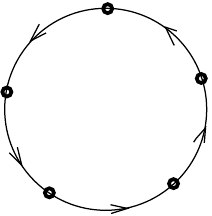}%
\end{picture}%
%
%
\setlength{\unitlength}{1579sp}%
\begingroup\makeatletter\ifx\SetFigFont\undefined%
\gdef\SetFigFont#1#2#3#4#5{%
  \reset@font\fontsize{#1}{#2pt}%
  \fontfamily{#3}\fontseries{#4}\fontshape{#5}%
  \selectfont}%
\fi\endgroup%
\begin{picture}(2501,2563)(991,-3203)
\end{picture}%
}}$$ 
Oriented graphs appear if we label half-edges corresponding to variables $\theta_i$ as \emph{outgoing} and half-edges corresponding to $\bar\theta_i$ as \emph{incoming}. In the sum in the exponential in (\ref{l13_e3}) we can recognize the Taylor expansion of $\log(1-x)$, thus we obtain
$$I(\alpha)=\det(-B)\cdot \exp \tr\log(1-\alpha B^{-1} P)= \det(-B)\cdot \det (1-\alpha B^{-1} P)=\det (-B+\alpha P)$$
which is what we would have obtained if we evaluated (\ref{l13_e4}) directly as a Gaussian integral with quadratic form $B-\alpha P$ rather than treating $\alpha P$ as a perturbation. Note that the series in the exponential in (\ref{l13_e3}) has a finite convergence radius $|\alpha|< \frac{1}{|| B^{-1}P ||}$ where $||A||=\max_\lambda |\lambda|$ with $\lambda$ going over eigenvalues of $A$.
\end{example}

\subsubsection{Perturbative integral over a superspace}

Consider a vector superspace $$\mc{V}=V^e\oplus \Pi V^o$$ 
for $V^e, V^o$ two vector spaces of dimensions $n,m$ (superscripts $e,o$ stand for ``even'', ``odd''), with the algebra of functions $\mr{Fun}(\mc{V}):=C^\infty(V^e)\otimes \wedge^\bt (V^o)^*$. Let $x_1,\ldots,x_n$ be coordinates on $V^e$ and $\theta_1,\ldots,\theta_m$ be coordinates on $\Pi V^o$. Let $Q_e$ be a quadratic form on $V^e$ and $Q_o$ a quadratic form on $\Pi V^o$, and let $p(x,\theta)=\sum_{j,k}\frac{g_{jk}}{j!k!}P_{jk}(x,\theta)$ be a perturbation, with $P_{jk}\in \Sym^j (V^e)^*\otimes \wedge^k (V^o)^*$ the homogeneous parts; degree $k$ here is only allowed to take even values.
Consider the perturbative integral
\begin{equation}
I:=\int_{V^e\oplus \Pi V^o}^\mr{pert} d^n x\; D^m \theta\; e^{-\frac12 Q_e(x,x)-\frac12 Q_0(\theta,\theta)+p(x,\theta)}
\end{equation}
It is understood by formally imposing Fubini theorem: we first integrate over the odd variables and then -- perturbatively -- over even variables. The result is  the following generalization of Feynman's theorem (Theorem \ref{l9_Feyn_thm}) for integration over a superspace:
$$I= (2\pi)^\frac{n}{2}(\det Q_e)^{-\frac12}\pf(-Q_o)\sum_\Gamma \frac{1}{|\mr{Aut}(\Gamma)|}\Phi(\Gamma)$$
Feynman rules for evaluating $\Phi(\Gamma)$ are as follows:
\begin{itemize}
\item Graphs $\Gamma$ are allowed to have half-edges marked as $\vcenter{\hbox{\begin{picture}(0,0)%
\includegraphics{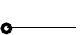}%
\end{picture}%
%
%
\setlength{\unitlength}{1381sp}%
\begingroup\makeatletter\ifx\SetFigFont\undefined%
\gdef\SetFigFont#1#2#3#4#5{%
  \reset@font\fontsize{#1}{#2pt}%
  \fontfamily{#3}\fontseries{#4}\fontshape{#5}%
  \selectfont}%
\fi\endgroup%
\begin{picture}(1060,435)(853,-1618)
\put(1701,-1411){\makebox(0,0)[lb]{\smash{{\SetFigFont{6}{7.2}{\rmdefault}{\mddefault}{\updefault}{\color[rgb]{0,0,0}$e$}%
}}}}
\end{picture}%
}}$ (even) and $\vcenter{\hbox{\begin{picture}(0,0)%
\includegraphics{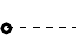}%
\end{picture}%
%
%
\setlength{\unitlength}{1381sp}%
\begingroup\makeatletter\ifx\SetFigFont\undefined%
\gdef\SetFigFont#1#2#3#4#5{%
  \reset@font\fontsize{#1}{#2pt}%
  \fontfamily{#3}\fontseries{#4}\fontshape{#5}%
  \selectfont}%
\fi\endgroup%
\begin{picture}(1060,435)(853,-1618)
\put(1701,-1411){\makebox(0,0)[lb]{\smash{{\SetFigFont{6}{7.2}{\rmdefault}{\mddefault}{\updefault}{\color[rgb]{0,0,0}$o$}%
}}}}
\end{picture}%
}}$ (odd).
\item Edges are pairs of even half-edges $\vcenter{\hbox{\begin{picture}(0,0)%
\includegraphics{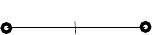}%
\end{picture}%
%
%
\setlength{\unitlength}{1381sp}%
\begingroup\makeatletter\ifx\SetFigFont\undefined%
\gdef\SetFigFont#1#2#3#4#5{%
  \reset@font\fontsize{#1}{#2pt}%
  \fontfamily{#3}\fontseries{#4}\fontshape{#5}%
  \selectfont}%
\fi\endgroup%
\begin{picture}(2084,430)(853,-1623)
\put(1251,-1421){\makebox(0,0)[lb]{\smash{{\SetFigFont{6}{7.2}{\rmdefault}{\mddefault}{\updefault}{\color[rgb]{0,0,0}$e$}%
}}}}
\put(2341,-1431){\makebox(0,0)[lb]{\smash{{\SetFigFont{6}{7.2}{\rmdefault}{\mddefault}{\updefault}{\color[rgb]{0,0,0}$e$}%
}}}}
\end{picture}%
}}$ (assigned $Q_e^{-1}$) or pairs of odd half-edges $\vcenter{\hbox{\begin{picture}(0,0)%
\includegraphics{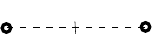}%
\end{picture}%
%
%
\setlength{\unitlength}{1381sp}%
\begingroup\makeatletter\ifx\SetFigFont\undefined%
\gdef\SetFigFont#1#2#3#4#5{%
  \reset@font\fontsize{#1}{#2pt}%
  \fontfamily{#3}\fontseries{#4}\fontshape{#5}%
  \selectfont}%
\fi\endgroup%
\begin{picture}(2084,430)(853,-1623)
\put(1251,-1421){\makebox(0,0)[lb]{\smash{{\SetFigFont{6}{7.2}{\rmdefault}{\mddefault}{\updefault}{\color[rgb]{0,0,0}$o$}%
}}}}
\put(2341,-1431){\makebox(0,0)[lb]{\smash{{\SetFigFont{6}{7.2}{\rmdefault}{\mddefault}{\updefault}{\color[rgb]{0,0,0}$o$}%
}}}}
\end{picture}%
}}$ (assigned $Q_o^{-1}$).
\item Vertices have bi-valency $(j,k)$ -- $j$ adjacent even half-edges and $k$ (an even number) adjacent odd half-edges $\vcenter{\hbox{\input{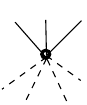tex_t}}}$ (assigned $g_{jk}P_{jk}$).
\end{itemize}

Put another way, a graph $\Gamma$, with $E_e$, $E_o$ the numbers of even/odd half-edges and with $v_{jk}$ the number of vertices of bi-valency $(j,k)$, is identified with the class of a pair of permutations $(\sigma_e,\sigma_o)$ in the double coset
$$\prod_{j,k} S_{v_{jk}}\ltimes (S_j\times S_k)^{v_{jk}}\backslash\quad S_{2E_e}\times S_{2E_o}\quad/(S_{E_e}\ltimes \ZZ_2^{E_e})\times (S_{E_o}\ltimes \ZZ_2^{E_o})$$
Pictorially: 
$$\vcenter{\hbox{\input{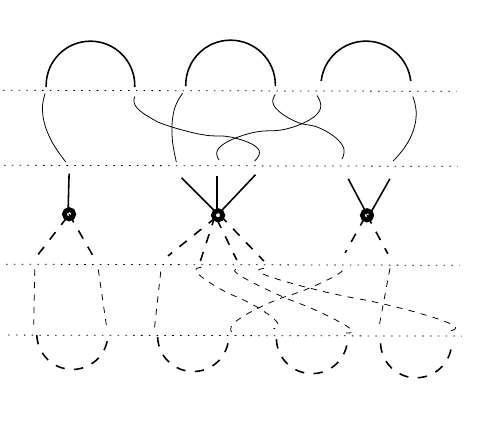tex_t}}}$$
Note that, when defining automorphisms of a graph, we now only allow permutations of half-edges which preserve the parity.
The Feynman state sum of a graph is
\begin{equation}\label{l13_e5}
\Phi(\Gamma)=(-1)^{\sigma_o}\lan \left(\sigma_e\circ (Q^{-1}_e)^{\otimes E_e}\right)\otimes \left(\sigma_o\circ (Q^{-1}_o)^{\otimes E_o}\right), \bigotimes_{j,k} (g_{jk}P_{jk})^{\otimes v_{jk}} \ran
\end{equation}

\begin{example}[``Faux quantum electrodynamics'' integral]
Fix $V\simeq \RR^n,\;U\simeq \RR^m$ two vector spaces. Let $\mc{V}=V\oplus \Pi (U\oplus  U^*)$ with coordinates $x_i, \theta_a, \bar\theta_a$ -- ``photon'', ``electron'' and ``positron'' variables. We also need the following input data:
\begin{itemize}
\item quadratic form $Q_e(x,x)\in \Sym^2 V^*$,
\item quadratic form $Q_o(\bar\theta,\theta) = \lan\bar \theta, \mathfrak{D} \theta \ran$ with $\mathfrak{D}\in GL(U)$ -- ``faux Dirac operator'',
\item a tensor $P(x,\theta,\bar\theta)\in V^*\otimes U^*\otimes U$ -- ``photon-electron interaction''.
\end{itemize}
We then consider the following perturbative integral
\begin{equation}\label{l13_QED_integral}
\int_\mc{V}^\mr{pert} d^nx\; D^m\theta\; D^m\bar{\theta}\; e^{-\frac12 Q_e(x,x)-\lan\bar\theta,\mathfrak{D}\theta \ran+g P(x,\theta,\bar\theta)} = \left(\det\frac{Q_e}{2\pi}\right)^{-\frac12}\cdot \det (-\mathfrak{D})\cdot\sum_\Gamma \frac{1}{|\mr{Aut}(\Gamma)|}\Phi(\Gamma)
\end{equation}
Here $g$ is a coupling constant (``charge of the electron''). Graphs $\Gamma$ in the r.h.s. of (\ref{l13_QED_integral}) have three types of half-edges:
\begin{enumerate}[(i)]
\item $\vcenter{\hbox{\begin{picture}(0,0)%
\includegraphics{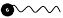}%
\end{picture}%
%
%
\setlength{\unitlength}{1973sp}%
\begingroup\makeatletter\ifx\SetFigFont\undefined%
\gdef\SetFigFont#1#2#3#4#5{%
  \reset@font\fontsize{#1}{#2pt}%
  \fontfamily{#3}\fontseries{#4}\fontshape{#5}%
  \selectfont}%
\fi\endgroup%
\begin{picture}(590,142)(811,-1004)
\end{picture}%
}}$ for ``photon'' variables $x_i$,
\item $\vcenter{\hbox{\begin{picture}(0,0)%
\includegraphics{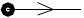}%
\end{picture}%
%
%
\setlength{\unitlength}{1973sp}%
\begingroup\makeatletter\ifx\SetFigFont\undefined%
\gdef\SetFigFont#1#2#3#4#5{%
  \reset@font\fontsize{#1}{#2pt}%
  \fontfamily{#3}\fontseries{#4}\fontshape{#5}%
  \selectfont}%
\fi\endgroup%
\begin{picture}(794,154)(879,-1141)
\end{picture}%
}}$ for ``electron'' variables $\theta_a$,
\item $\vcenter{\hbox{\begin{picture}(0,0)%
\includegraphics{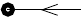}%
\end{picture}%
%
%
\setlength{\unitlength}{1973sp}%
\begingroup\makeatletter\ifx\SetFigFont\undefined%
\gdef\SetFigFont#1#2#3#4#5{%
  \reset@font\fontsize{#1}{#2pt}%
  \fontfamily{#3}\fontseries{#4}\fontshape{#5}%
  \selectfont}%
\fi\endgroup%
\begin{picture}(794,150)(879,-1141)
\end{picture}%
}}$ for ``positron'' variables $\bar\theta_a$
\end{enumerate} 
Admissible edges are: $\vcenter{\hbox{\begin{picture}(0,0)%
\includegraphics{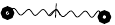}%
\end{picture}%
%
%
\setlength{\unitlength}{1973sp}%
\begingroup\makeatletter\ifx\SetFigFont\undefined%
\gdef\SetFigFont#1#2#3#4#5{%
  \reset@font\fontsize{#1}{#2pt}%
  \fontfamily{#3}\fontseries{#4}\fontshape{#5}%
  \selectfont}%
\fi\endgroup%
\begin{picture}(1074,213)(811,-1042)
\end{picture}%
}}$ (non-oriented, assigned the propagator $Q_e^{-1}$) and  
$\vcenter{\hbox{\begin{picture}(0,0)%
\includegraphics{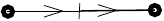}%
\end{picture}%
%
%
\setlength{\unitlength}{1973sp}%
\begingroup\makeatletter\ifx\SetFigFont\undefined%
\gdef\SetFigFont#1#2#3#4#5{%
  \reset@font\fontsize{#1}{#2pt}%
  \fontfamily{#3}\fontseries{#4}\fontshape{#5}%
  \selectfont}%
\fi\endgroup%
\begin{picture}(1560,181)(879,-1153)
\end{picture}%
}}$ (oriented, assigned the propagator $\mathfrak{D}^{-1}$). The only admissible vertex is $\vcenter{\hbox{\begin{picture}(0,0)%
\includegraphics{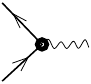}%
\end{picture}%
%
%
\setlength{\unitlength}{1973sp}%
\begingroup\makeatletter\ifx\SetFigFont\undefined%
\gdef\SetFigFont#1#2#3#4#5{%
  \reset@font\fontsize{#1}{#2pt}%
  \fontfamily{#3}\fontseries{#4}\fontshape{#5}%
  \selectfont}%
\fi\endgroup%
\begin{picture}(866,794)(1487,-1898)
\end{picture}%
}}$ (assigned $g\cdot P$).
Typical graph contributing to the r.h.s. of (\ref{l13_QED_integral}) looks like this:
$$\vcenter{\hbox{\begin{picture}(0,0)%
\includegraphics{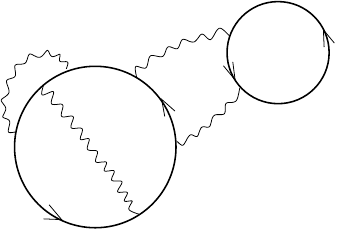}%
\end{picture}%
%
%
\setlength{\unitlength}{1973sp}%
\begingroup\makeatletter\ifx\SetFigFont\undefined%
\gdef\SetFigFont#1#2#3#4#5{%
  \reset@font\fontsize{#1}{#2pt}%
  \fontfamily{#3}\fontseries{#4}\fontshape{#5}%
  \selectfont}%
\fi\endgroup%
\begin{picture}(3221,2200)(985,-3019)
\end{picture}%
}}$$
An admissible $\Gamma$ is always a collection of oriented solid (elctron/positron) cycles arbitrarily interconnected by photon edges. Here is an example of evaluation of a simple admissible graph:
$$ \frac12\; \Phi\left( \vcenter{\hbox{\input{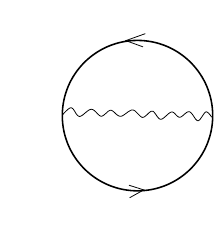tex_t}}}\quad\quad \right) = 
-\frac{g^2}{2}\langle Q_e^{-1},\underbrace{\tr_U( \mathfrak{D}^{-1}P\mathfrak{D}^{-1}P) }_{\in (V^*)^{\otimes 2}} \rangle $$
Here we understand $P$ as an element of $V^*\otimes \mr{End}(U)$ and take compositions of endomorphisms of $U$. The minus sign here is $(-1)^{\sigma_o}$, cf. (\ref{l13_e5}).
\end{example}

\subsection{Digression: the logic of perturbative path integral}

In the case of finite-dimensional integrals of oscillatory type $I(\hbar)=\int_X \mu\; e^{\frac{i}{\hbar}f}$, asymptotics of the measure-theoretic integral (which exists for finite $\hbar$) at $\hbar\ra 0$ is given by the expansion in Feynman diagrams (Theorem \ref{thm: stat phase with corrections}). 

On the other hand a path (functional) integral
\begin{equation}\label{l13_PI}
I(\hbar)=``\int_{\Gamma(M,\sf{Fields})\ni \phi} \mc{D}\phi \; e^{\frac{i}{\hbar}S(\phi)}\;\;"
\end{equation}
with $M$ the spacetime manifold and $\sf{Fields}$ the sheaf of fields on $M$, and with action $S=\frac12 \int_M \lan \phi,\mathfrak{D}\phi \ran +\int_M \LL_\mr{int}(\phi)$ (here $\mathfrak{D}$ is some differential operator), is a heuristic expression which is \emph{defined} as an asymptotic series in $\hbar$ by its expansion in Feynman diagrams,
\begin{equation} \label{l13_PI_pert}
I(\hbar):=(\det \mathfrak{D})^{-\frac12}\cdot \sum_\Gamma \frac{\hbar^{-\chi(\Gamma)}}{|\mr{Aut}(\Gamma)|}\Phi(\Gamma)
\end{equation}
Here $\Phi(\Gamma)$ is given as an integral over $M^{\times V}$ ($V$ is the number of vertices in $\Gamma$) of certain differential form on $M^{\times V}$ (which we view as the space of configurations of $V$ points on $M$) which depends on $\Gamma$ and is constructed in terms of the \emph{propagator} -- the integral kernel of the inverse operator $\mathfrak{D}^{-1}$ assigned to edges and \emph{vertex functions}, read off from $\LL_\mr{int}$, assigned to vertices.  Expansion (\ref{l13_PI_pert}) is obtained by treating (\ref{l13_PI}) following the logic of finite-dimensional perturbed Gaussian integral: one expands $e^{\frac{i}{\hbar}\int_M \LL_\mr{int}(\phi)}$ in Taylor series, thereby producing integrals over configuration spaces of $V$ points on $M$ (with $V$ the term in the Taylor series for the exponential); then one averages individual terms with (Fresnel version of) Gaussian measure $\DD\phi\; e^{\frac{i}{2\hbar}\int_M \lan \phi,\mathfrak{D}\phi \ran }$ using (formally) Wick's lemma.

\subsubsection{Example: scalar theory with $\phi^3$ interaction}
Let $(M,g)$ be a compact Riemannian manifold. Consider the path integral 
\begin{equation}\label{l13_PI_phi3}
I(\hbar)=\int_{C^\infty(M)}\DD\phi \; e^{\frac{i}{\hbar}\int_M \left(\frac12\lan  d\phi,d\phi\ran_{g^{-1}}+\frac{m^2}{2}\phi^2 + \frac{g}{3!}\phi^3\right)d\mr{vol}}
\end{equation}
where $m>0$ is a parameter of the theory -- the ``mass'' (of the field quanta); $g$ is a coupling constant and we treat the $\phi^3$ as perturbing the Gaussian integral. Perturbative evaluation of (\ref{l13_PI_phi3}) yields
\begin{equation}\label{l13_PI_phi3_pert}
I_\mr{pert}(\hbar)={\det}^{-\frac12}(\Delta+m^2)\cdot \sum_\Gamma \frac{\hbar^{-\chi(\Gamma)}}{|\mr{Aut}(\Gamma)|}\Phi(\Gamma)
\end{equation}
where the sum goes over 3-valent graphs $\Gamma$, with 
\begin{equation}\label{l13_phi3_Phi}
\Phi(\Gamma)=g^V\int_{M^{\times V}}d^nx_1\cdots d^nx_V\; \prod_{e=(v_1,v_2)}G(x_{v_1},x_{v_2})
\end{equation}
where $V$ is the number of vertices in $\Gamma$, $dx_i$ stands for the Riemannian volume element on $i$-th copy of $M$, the product goes over edges $e$ of $\Gamma$ and $v_1,v_2$ are the vertices adjacent to the edge; $G(x,y)$ is the Green's function for the differential operator $\Delta+m^2$.

\begin{remark} One can represent the Green's function $G(x,y)$ by Feynman-Kac formula, as an integral over paths on $M$ going from $y$ to $x$. Then $\Phi(\Gamma)$ becomes represented as an integral over the mapping space $\mr{Map}(\Gamma, M)$. Note that this mapping space is fibered over $M^{\times V}$ (by evaluating the map at the vertices of $\Gamma$) and r.h.s. of (\ref{l13_phi3_Phi}) can be viewed as the result of the fiber integral over fibers of $\mr{Map}(\Gamma,M)\ra M^{\times V}$ (i.e. over paths on $M$ representing the edges of $\Gamma$, between vertices fixed at points $x_1,\ldots,x_V$ on $M$).
\end{remark}

\begin{example} The contribution of theta graph to the r.h.s. of (\ref{l13_PI_phi3_pert}) is:
\begin{equation}\label{l13_phi3_theta}
\frac{\hbar}{12}\;\Phi\left(\vcenter{\hbox{\begin{picture}(0,0)%
\includegraphics{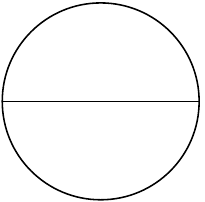}%
\end{picture}%
%
%
\setlength{\unitlength}{1973sp}%
\begingroup\makeatletter\ifx\SetFigFont\undefined%
\gdef\SetFigFont#1#2#3#4#5{%
  \reset@font\fontsize{#1}{#2pt}%
  \fontfamily{#3}\fontseries{#4}\fontshape{#5}%
  \selectfont}%
\fi\endgroup%
\begin{picture}(1934,1920)(924,-3664)
\end{picture}%
}}\right)\quad = \quad \frac{\hbar\, g^2}{12} \int_{M\times M} d^nx\, d^n y\; G(x,y)^3
\end{equation}
And the contribution of the dumbbell graph is:
\begin{equation}\label{l13_phi3_dumbbell}
\frac{\hbar}{8}\;\Phi\left(\vcenter{\hbox{\begin{picture}(0,0)%
\includegraphics{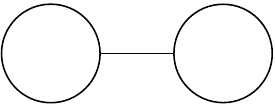}%
\end{picture}%
%
%
\setlength{\unitlength}{1973sp}%
\begingroup\makeatletter\ifx\SetFigFont\undefined%
\gdef\SetFigFont#1#2#3#4#5{%
  \reset@font\fontsize{#1}{#2pt}%
  \fontfamily{#3}\fontseries{#4}\fontshape{#5}%
  \selectfont}%
\fi\endgroup%
\begin{picture}(2629,976)(694,-3192)
\end{picture}%
}}\right)\quad = \quad \frac{\hbar\, g^2}{8} \int_{M\times M} d^nx\, d^ny\; G(x,x)\, G(x,y)\, G(y,y)
\end{equation}
\end{example}

Similarly, one can calculate expectation values, e.g. of products $\prod_{i=1}^m \phi(x_i)$ of the values of the field $\phi$ in several fixed points on $M$, with respect to the perturbed Gaussian measure (the integrand of (\ref{l13_PI_phi3})). The result is again given as a sum over graphs, with several unique marked vertices.

\begin{example} The following Feynman graph gives a contribution to the normalized expectation value (w.r.t. to the perturbed measure) $\ll \phi(x_1) \phi(x_2) \gg_\mr{pert}$:
\begin{equation}\label{l13_phi3_corr}
\frac{\hbar^2}{2}\;\Phi\left(\vcenter{\hbox{\input{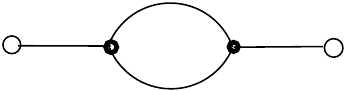tex_t}}}\right)\quad = \quad
\frac{\hbar^2\, g^2}{2}\int_{M\times M} d^nx\,d^ny\; G(x_1,x) \, G(x,y)^2\, G(y,x_2)
\end{equation}
Here the two marked vertices are fixed at points $x_1,x_2$ whereas the unmarked vertices move around and we integrate over their possible positions on $M$.
\end{example}

\subsubsection{Divergencies!} \textbf{Problem:}
Green's function $G(x,y)$ for the operator $\Delta+m^2$ an $n$-dimensional Riemannian manifold $M$ behaves,as the points $x$ and $y$ approach each other, as 
$$G(x,y)\underset{x\ra y}\sim 
\frac{\mr{const}}{|x-y|^{n-2}} 
$$
(Case $n=2$ is special: then $G(x,y)\sim C\cdot\log|x-y|$.) This implies that the integrals over $M^{\times V}$ on the r.h.s. of (\ref{l13_phi3_Phi}) are, typically, 
(depending on $n=\dim M$ and on the combinatorics of $\Gamma$, see examples below) 
divergent: the integrand typically has non-integrable singularities near diagonals of $M^{\times V}$.

\marginpar{\LARGE{Lecture 14, 10/10/2016.}}



\textbf{Examples.}\\
\begin{enumerate}[(i)]
\item for $n=2$  
and $\Gamma$ any graph without ``short loops'' (edges connecting a vertex to itself), there is no divergency.
\item The integrand in (\ref{l13_phi3_theta}) behaves as $\frac{1}{|x-y|^{3n-6}}$ near the diagonal $x=y$; this singularity is non-integrable iff $3n-6\geq n$ or, equivalently, if $n\geq 3$. So, for $M$ of dimension $\geq 3$, theta graph for scalar $\phi^3$ theory is divergent.
\item By a similar argument, graph (\ref{l13_phi3_corr}) diverges iff $2\cdot (n-2)\geq n$  or equivalently $n\geq 4$.
\item For the graph (\ref{l13_phi3_dumbbell}), singularity of $G(x,y)$ on the diagonal $x=y$ is always integrable but evaluations of the propagator at coinciding points $G(x,x)$ and $G(y,y)$, corresponding to short loops of the graph, are ill-defined for $n\geq 2$.
\item Consider the graph
\begin{multline*}
\Phi\left(\vcenter{\hbox{\input{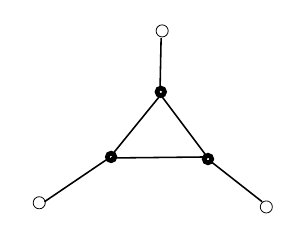tex_t}}}\;\;\;\;\right)\quad =\\ =
g^3\int_{M\times M\times M} d^n x\, d^n y \, d^n z\underbrace{G(x_1,x) G(x_2,y) G(x_3,z) G(x,y) G(y,z) G(z,x)}_\psi
\end{multline*}
contribution to the 3-point correlation function $\ll \phi(x_1)\phi(x_2)\phi(x_3) \gg_\mr{pert}$. The integrand $\psi$ has integrable singularities at all diagonals where \emph{pairs} of points collide. However,near the diagonal $x=y=z$, when $x,y,z$ are within distance of order $r\ra 0$ of each other, we have  $\psi\sim\frac{1}{r^{3(n-2)}}$, and we think of the integral as $\int_{M} d^n x \int_{M\times M}d^n y\, d^n z$. The internal integral over $y,z$ for fixed $x$ diverges iff $3(n-2)\geq 2n$ or equivalently $n\geq 6$.
\end{enumerate}

Generally, one can say whether the graph diverges or not by analyzing the behavior of the integrand at all diagonals. The answer is as follows. Define the \textbf{weight} $w(\Gamma')$ of a graph $\Gamma'$ with $E_{\Gamma'}$ edges and $V_{\Gamma'}$ vertices as
$$w(\Gamma'):=E_{\Gamma'}\cdot (n-2)-(V_{\Gamma'}-1)\cdot n$$
\begin{lemma}\label{l14_lm_divergence}
$\Phi(\Gamma)$ diverges iff the graph $\Gamma$ contains a subgraph $\Gamma'\subset \Gamma$ with non-negative weight $w(\Gamma')\geq 0$.
\end{lemma}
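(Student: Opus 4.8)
The plan is to reduce the question to a purely local power count near the diagonals of $M^{\times V}$ and then to control the overlapping singularities by pulling the integrand back to the Fulton--MacPherson--Axelrod--Singer compactification of the configuration space. Throughout, ``$\Phi(\Gamma)$ diverges'' means that the integral (\ref{l13_phi3_Phi}) fails to converge absolutely. Since the Green's function $G$ of $\Delta+m^2$ on the compact manifold $M$ is smooth away from the diagonal, the integrand $\prod_{e}|G(x_{v_1},x_{v_2})|$ is bounded on any region of $M^{\times V}$ bounded away from all partial diagonals; the only issue is integrability near the loci where some points collide. The one analytic input is the short--distance behaviour, uniform in $M$ by compactness: in geodesic normal coordinates $G(x,y)=c_n|x-y|^{2-n}\bigl(1+O(|x-y|)\bigr)$ with $c_n>0$ for $n\geq 3$, and $G(x,y)=-c_2\log|x-y|+O(1)$ for $n=2$. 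As a first step I will reduce the statement: the weight $w(\Gamma')=E_{\Gamma'}(n-2)-(V_{\Gamma'}-1)n$ is non--decreasing in the edge set of $\Gamma'$ at fixed vertex set, it strictly decreases when an isolated vertex is adjoined, and for a disjoint union $w(\Gamma'_1\sqcup\Gamma'_2)=w(\Gamma'_1)+w(\Gamma'_2)-n$; since $w$ takes integer values, a disconnected graph all of whose components have negative weight has negative weight. Hence there is a subgraph $\Gamma'\subseteq\Gamma$ with $w(\Gamma')\geq 0$ if and only if there is a \emph{connected} induced subgraph $\Gamma|_S$, on a vertex set $S$ with $|S|\geq 2$, or a single vertex carrying a short loop, with $w(\Gamma|_S)\geq 0$; it suffices to prove the Lemma with $\Gamma'$ of this restricted form.

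For the ``if'' direction, suppose first $|S|=1$ with a short loop at $v$; then $G(x_v,x_v)=\lim_{y\to x_v}G(x_v,y)$ is infinite, so $\Phi(\Gamma)=\infty$. If $|S|\geq 2$, set $w:=w(\Gamma|_S)\geq 0$ and localise near the small diagonal $\Delta_S=\{x_i=x_j:i,j\in S\}$. Pick $x_0\in M$, work in normal coordinates, and write $x_i=x_0+r\,u_i$ for $i\in S$, where $r\to 0$ is the overall cluster scale and $(u_i)_{i\in S}$ ranges over a fixed compact piece of the space of relative configurations with pairwise distinct entries; keep the vertices outside $S$ in a small fixed ball away from $x_0$. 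Then $\prod_{i\in S}d^nx_i=d^nx_0\cdot r^{\,n(|S|-1)}\tfrac{dr}{r}\,d\nu(u)$, each of the $E_{\Gamma|_S}$ propagators internal to $S$ contributes $\sim c_n\,r^{2-n}|u_i-u_j|^{2-n}$, and all remaining propagators together with the $x_0$-- and $u$--integrals contribute a finite strictly positive factor. What remains is the bound
\begin{equation*}
\Phi(\Gamma)\;\geq\;\mathrm{const}\cdot\int_0^{\epsilon}\frac{dr}{r}\;r^{\,n(|S|-1)-(n-2)E_{\Gamma|_S}}\;=\;\mathrm{const}\cdot\int_0^{\epsilon}\frac{dr}{r}\;r^{-w},
\end{equation*}
which is $+\infty$ for $w\geq 0$ (logarithmically divergent if $w=0$, power divergent if $w>0$). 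Hence $\Phi(\Gamma)=\infty$.

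For the ``only if'' direction assume every induced subgraph of $\Gamma$ has negative weight (in particular $\Gamma$ has no short loops). Let $\pi:\overline{C}_V(M)\to M^{\times V}$ be the Fulton--MacPherson--Axelrod--Singer compactification of the configuration space (equivalently, the De Concini--Procesi wonderful model for the arrangement of diagonals $\Delta_S$, $|S|\geq 2$): it is a compact manifold with corners whose boundary divisors $D_S$ are indexed by such $S$, and a family $\{D_S:S\in\mathcal N\}$ has nonempty intersection precisely when $\mathcal N$ is \emph{nested} (a laminar family). The key claim, to be established by an explicit computation in ``polar'' coordinates near a corner, is the local normal form: near a point of $\bigcap_{S\in\mathcal N}D_S$, if $\rho_S$ denotes a local defining function of $D_S$, then
\begin{equation*}
\pi^{*}\Bigl(\textstyle\prod_e|G(x_{v_1},x_{v_2})|\cdot d^nx_1\cdots d^nx_V\Bigr)\;=\;\prod_{S\in\mathcal N}\rho_S^{\,-1-w(\Gamma|_S)}\cdot\bigl(\text{a smooth density, bounded and bounded away from $0$}\bigr),
\end{equation*}
where $w(\Gamma|_S)$ is the weight of the subgraph of $\Gamma$ induced on $S$ (one checks this by introducing the diameters $r_S$ of the colliding clusters, passing to the variables $\rho_S$, and telescoping/Abel--summing the contributions of measure, Jacobian, and propagators; the telescoping precisely produces the exponent $-1-w(\Gamma|_S)$). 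Granting this, the integral of $\prod_e|G|$ over a neighborhood of such a corner factors, up to a finite positive constant, as $\prod_{S\in\mathcal N}\int_0^{\epsilon}\rho_S^{\,-1-w(\Gamma|_S)}\,d\rho_S$, each factor of which converges because $w(\Gamma|_S)<0$, i.e.\ $-1-w(\Gamma|_S)>-1$. Covering the compact space $\overline{C}_V(M)$ by finitely many such corner charts together with charts on which $\pi$ is a diffeomorphism onto a region bounded away from all diagonals (where the integrand is bounded), and using a subordinate partition of unity, we conclude $\Phi(\Gamma)<\infty$.

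The main obstacle is precisely the local normal form asserted above: it is the position--space incarnation of Weinberg's power--counting theorem, and it is the one genuinely non--trivial point, because it must rule out conspiracies between ``overlapping'' potential divergences, each of which is individually tamed by the condition $w(\Gamma|_S)<0$. The bookkeeping (which edges collapse at which scale, the Jacobian of the iterated blow--up, the telescoping that converts the per--scale exponents into $-1-w(\Gamma|_S)$) is routine but must be carried out with care, and the compactness of $M$ is what makes all the short--distance estimates uniform. Alternatively one may avoid the wonderful model and prove the same convergence directly by Hepp's sector method: decompose a neighborhood of each diagonal into sectors ordered by which clusters of points are tightest, rescale the innermost cluster, and induct on $V$, the inductive step succeeding exactly because contracting a connected subgraph $\Gamma|_S$ with $w(\Gamma|_S)<0$ yields an absolutely convergent ``inner'' integral and a reduced graph to which the hypothesis applies. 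For a complete treatment of this type of configuration--space estimate see \cite{AS1,AS2}.
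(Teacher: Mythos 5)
The paper states this lemma without proof, so there is nothing to compare your argument against; judged on its own terms, your strategy is the standard and correct one (power counting at each diagonal, organized by the Fulton--MacPherson--Axelrod--Singer compactification), and your ``if'' direction is essentially complete: the scaling $\prod_{i\in S}d^nx_i \sim r^{n(|S|-1)-1}dr$ against $E_{\Gamma|_S}$ factors of $r^{2-n}$ gives exactly $\int_0^\epsilon r^{-1-w}\,dr$, and the lower bound is legitimate because $G>0$ globally (heat-kernel representation of $(\Delta+m^2)^{-1}$), a fact you use implicitly and should state. Two small repairs are needed there. First, your reduction to connected induced subgraphs is false as literally stated: a single vertex with no edges is a subgraph with $w=0$, so ``there exists a subgraph with $w\geq 0$'' would be vacuously true of every graph. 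You must adopt (and say) the convention that $\Gamma'$ ranges over subgraphs containing at least one edge; with that convention your monotonicity and disjoint-union computations do correctly reduce the problem to induced connected subgraphs on $|S|\geq 2$ vertices and to single vertices with short loops. Second, ``bounded away from $0$'' in your normal form is more than you need (and more than is true in general); boundedness suffices for convergence, positivity of $G$ suffices for the divergence lower bound.

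The genuine gap is in the ``only if'' direction: the entire content of that implication is the local normal form $\prod_{S\in\mathcal N}\rho_S^{-1-w(\Gamma|_S)}\times(\text{bounded})$ near a corner of $C_V(M)$ indexed by a nested family $\mathcal N$, and you assert it, sketch it in one parenthesis, and then write ``granting this.'' This is precisely the position-space Weinberg theorem, i.e.\ the non-overlapping of the potential divergences, and it is the only step of the lemma that requires real work: one must actually carry out the iterated rescaling, verify that the exponent of each $\rho_S$ telescopes to $n(|S|-1)-(n-2)E(S)-1$ with $E(S)$ the number of edges internal to $S$ (not some other count that could mix scales), and check that edges joining $S$ to its complement contribute only bounded factors in $\rho_S$. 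Your identification of the correct exponent is right, and either route you name (wonderful-model charts or Hepp sectors with induction on $V$) does close the argument, but as written the proof of the convergence half is an outline resting on an unproved claim rather than a proof. If you intend to cite \cite{AS1,AS2} for this estimate, say so explicitly and precisely; otherwise the corner computation needs to be written out.
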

This lemma applies to scalar theory with arbitrary polynomial interaction $p(\phi)$, not necessarily $\phi^3$ (monomials present in $p(\phi)$ restrict admissible valencies of vertices of contributiong graphs $\Gamma$).

\begin{remark} \label{l14: rem phi3 divergencies}
Consider $\phi^3$ theory on a manifold of dimension $n$. 
\begin{itemize}
\item For $n=3$,
a graph $\Gamma$ diverges iff $\Gamma$ either contains a short loop or contains a theta graph (\ref{l13_phi3_theta}) as a subgraph (a corollary of Lemma \ref{l14_lm_divergence}).
\item More generally, for $n<6$, there is a finite list of subgraphs with non-negative weight.
\item For $\Gamma'\subset \Gamma$, let us call ``leaves'' of $\Gamma'$ the edges connecting vertices of $\Gamma'$ to vertices of $\Gamma$ not belonging to $\Gamma'$.  For $n=6$, the weight of $\Gamma'$ is non-negative, iff the number of leaves of $\Gamma'$ is $\leq 3$. (There are infinitely many such subgraphs.)
\item For $n>6$, there are infinitely many divergent $\Gamma'$ and there is nor restriction on the number of leaves for them.
\end{itemize}
\end{remark}

\subsubsection{Regularization and renormalization}\footnote{Here is a standard easy-going textbook reference (intended for physics students): \cite{Peskin}.}
The logic of dealing with divergencies of Feynman graphs for the path integral is to first introduce a\\
\textbf{Step I: Regularization.} We want to replace the path integral $I(\hbar)$ by a regularized version $I_\epsilon (\hbar)$ with a small parameter $\epsilon$ the \emph{regulator}. Here are some of the ideas of regularization.
\begin{enumerate}[a)]
\item Replace $M\ra M_\epsilon$ -- a lattice or trianglation or cellular decomposition with spacing/typical cell size $\epsilon$. Space of fields $F$ gets replaces by a finite-dimensional space $F_\epsilon$ (modelled on functions on the set vertices or, e.g., cellular cochains of $M_\epsilon$). Action $S$ gets replaced by a finite-difference approximation $S_\epsilon$. Then $I_\epsilon(\hbar)=\int_{F_\epsilon}e^{\frac{i}{\hbar}S_\epsilon}$ is a well-defined finite-dimensional integral. It can be developed in Feynman graphs, $I_\epsilon(\hbar)\propto \sum_\Gamma \Phi_\epsilon(\Gamma)$ with $\Phi_\epsilon(\Gamma)$ the regularized (finite) weights of Feynman graphs.
\item Regularize the Feynman weights of graphs directly (without deriving this regularization from a regularization of the path integral itself), $\Phi(\Gamma)\ra \Phi_\epsilon (\Gamma)$. E.g. regularize the propagator $G(x,y)$ as follows (some of the possible options):
\begin{enumerate}[1.]
\item Proper time cut-off: $G_\epsilon(x,y)=\int_\epsilon^\infty dt\, K(x,y | t)$ with $K(x,y|t)$ the \emph{heat kernel} -- the integral kernel of the operator $e^{-t(\Delta+m^2)}$.
\item Spectral cut-off: $G_\Lambda(x,y)=\sum_{\lambda<\Lambda} \frac{1}{\lambda}\Psi_\lambda(x) \overline{\Psi_\lambda(y)}$ where $\lambda$ runs over eigenvalues of the operator $\Delta+m^2$ (up to $\Lambda$) and $\Psi_\lambda$ are the corresponding eigenfunctions. Here the cut-off $\Lambda=1/\epsilon$ is large rather than small.
\item Momentum cut-off (case of $M=\RR^n$): $G_\Lambda(x,y)=\int_{|k|<\Lambda} d^n k\, \frac{e^{i (k,x-y)}}{k^2+m^2}$ where the integral is over a ball of large radius $\Lambda=\epsilon^{-1}$ in the momentum space $(\RR^n)^*\ni k$.
\item Regularization 
$G_\epsilon(x,y)=\int_0^\infty dt\,t^\epsilon\, K(x,y|t)$, with $\epsilon$ the regulator. The integral over $t$ is convergent for $\mr{Re}(\epsilon)>\frac{n}{2}-1$, and possesses a meromorphic continuation to the entire $\CC\ni \epsilon$; we are interested in the limit $\epsilon\ra 0$ of the continuation.
\end{enumerate}
\end{enumerate}

\begin{remark} The functional determinant in (\ref{l13_PI_phi3_pert}) also has to be regularized, e.g. via zeta-regularization, as $\det_{\zeta\mr{-reg}} (\Delta+m^2):=e^{-\zeta'(0)}$ with $\zeta(s)=\sum_\lambda \lambda^{-s}$ the zeta function of the operator $\Delta+m^2$  ($\lambda$ runs over the eigenvalues and it is implied that we take the analytic continuation of the zeta function to $s=0$).
\end{remark}

Whichever way we go about regularization, we get regularized weights of Feynman graphs $\Phi_\epsilon (\Gamma)$. However, the limit of removing the regulator $\lim_{\epsilon\ra 0}\Phi_\epsilon (\Gamma)$ typically does not exist. To deal with this, we introduce\\
\textbf{Step II: Renormalization.}\\
We replace the action with the renormalized action 
\begin{equation}\label{l14: S renorm}
S(\phi)\ra \til{S}_\epsilon(\phi)=S(\phi)+\sum_i c_i(\epsilon) A_i(\phi)
\end{equation}
where corrections $A_i(\phi)=\int_M d^n x\; \mc{A}_i(\phi)$ are local expressions in the field $\phi$ -- \textit{counterterms}, with coefficients $c_i(\epsilon)$ diverging as $\epsilon^{-k}$ (for some positive $k$) or $\log\epsilon$ as $\epsilon\ra 0$. Replacement (\ref{l14: S renorm}) should be such that when we compute Feynman diagrams for the renormalized action $\til\Phi_\epsilon(\Gamma)$, the limit $\epsilon\ra 0$ exists.\footnote{To be more precise: counterterms in the renormalized action produce new vertices (with $\epsilon$-dependent coefficients) for the Feynman rules. Contributions of graphs containing these new vertices compensate for the divergence, in the limit $\epsilon\ra 0$, of the graphs of original theory.}

Thus, local action $S(\phi)$ is replaced by $\til{S}_\epsilon(\phi)$ with counterterms divergent as the regulator $\epsilon\ra 0$, but the path integral is now perturbatively well-defined: 
$$\lim_{\epsilon\ra 0} \til{I}_\epsilon(\hbar)=:\til{I}(\hbar)$$ 
where l.h.s. is defined by regularized Feynman diagrams for the renormalized action.

In practice, counterterms in (\ref{l14: S renorm}) correspond to the possible divergent subgraphs (cf. Lemma \ref{l14_lm_divergence}) and are introduced in order to compensate for these divergencies.
E.g. in scalar theory with polynomial perturbation $p(\phi)$, one can 
assign to a divergent subgraph $\Gamma'$ of weight $w(\Gamma')\geq 0$ with $d$ leaves the counterterm $
\mc{A}_{\Gamma'}(\phi)=\phi(x)^d$ with coefficient $c_{\Gamma'}(\epsilon)=c_{\Gamma'}\cdot \epsilon^{-w(\Gamma')}$ if the weight $w(\Gamma')>0$  and $c_{\Gamma'}(\epsilon)=c_{\Gamma'}\cdot\log\epsilon$ if $w(\Gamma')=0$ with $c_{\Gamma'}$ a constant.
\begin{remark}
 In particular, by Remark \ref{l14: rem phi3 divergencies}, for $\phi^3$ scalar theory in dimension $<6$, we need finitely many counterterms of form $\phi^d$ for some values of $d\geq 0$ (number of leaves of $\Gamma'$) in (\ref{l14: S renorm}).  In dimension $6$ there are infinitely many divergent subgraphs, but we only need counterterms $\phi^d$ with $0\leq d\leq 3$. In dimension $>6$, we need counterterms of form $\phi^d$ for all $d$. Thus, one says that in dimensions up to $6$, scalar $\phi^3$ theory is \emph{renormalizable} (finitely many counterterms) and in dimensions $>6$ it is \emph{non-renormalizable}.
\end{remark}

\subsubsection{Wilson's picture of renormalization (``Wilson's RG flow'')}\label{sss_Wilson_RG_flow}

In Wilson's picture \cite{Wilson},\footnote{See also \cite{Costello} for an interpretation of Wilson's RG flow via effective BV actions.} one considers the \emph{tower} of spaces of fields $F_\Lambda$ with different values of cut-off $\Lambda$ (originally, the momentum cut-off, though other realizations are possible, see below), equipped with associated actions $S_\Lambda$ ``at cut-off $\Lambda$'' (``Wilson's effective actions''):
\begin{equation}\label{l14 Wilson tower}
\underbrace{F=F_\infty,S}_{\mr{local\; theory}}\cdots  \twoheadrightarrow \;\; \underbrace{F_{\Lambda},S_\Lambda}_{\mr{theory\;at\;finite\;}\Lambda} \;\;\proj \;\; F_{\Lambda'},S_{\Lambda'}\;\; \proj \cdots \proj \;\; \underbrace{F_0, S_0}_{\mr{effective\; theory\; on\; zero-modes}}
\end{equation}
For $\Lambda > \Lambda'$, we have a projection 
\begin{equation}\label{l14: P}
P^{\Lambda\ra \Lambda'}: F_\Lambda\proj F_{\Lambda'}
\end{equation} 
and the actions are related by a pushfroward (fiber integral) $S_{\Lambda'}=P^{\Lambda\ra \Lambda'}_* S_\Lambda$ defined by
\begin{equation}\label{l14_Wilson_eff_action}
e^{\frac{i}{\hbar}S_{\Lambda'}(\phi')}:=\int \DD\til\phi\; e^{\frac{i}{\hbar}S_{\Lambda}(\phi'+\til\phi)}
\end{equation}
where we are integrating over $\til\phi$ in the fiber $\til{F}_{\Lambda,\Lambda'}$ of the projection (\ref{l14: P}) -- ``fields between $\Lambda$ and $\Lambda'$''.

Examples of realizations:
\begin{enumerate}
\item Wilson's original realization. For $M=\RR^n$, take $F_\Lambda$ to be the space of functions of form $\phi(x)=\int_{B_\Lambda\subset (\RR^n)^*} d^n k\; e^{i (k,x)} \psi(k)$ where $B_\Lambda=\{k\in (\RR^n)^*\;\mr{s.t.}\; ||k||\leq\Lambda\}$. I.e. $F_\Lambda$ consists of functions whose Fourier transform is supported inside the ball of radius $\Lambda$ in the momentum space $(\RR^n)^*\ni k$.
Then, for $\Lambda\ra \Lambda'$, pushforward $P^{\Lambda\ra \Lambda'}_*$ corresponds to integrating out fields in a spherical layer $\Lambda'<||k||\leq \Lambda$ in the momentum space.
$$\vcenter{\hbox{\begin{picture}(0,0)%
\includegraphics{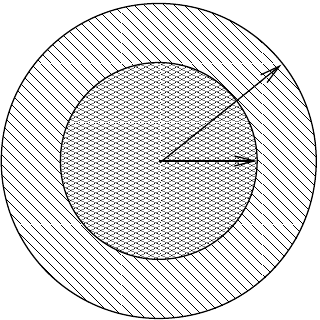}%
\end{picture}%
%
%
\setlength{\unitlength}{1579sp}%
\begingroup\makeatletter\ifx\SetFigFont\undefined%
\gdef\SetFigFont#1#2#3#4#5{%
  \reset@font\fontsize{#1}{#2pt}%
  \fontfamily{#3}\fontseries{#4}\fontshape{#5}%
  \selectfont}%
\fi\endgroup%
\begin{picture}(3810,3808)(222,-3663)
\put(3401,-1061){\makebox(0,0)[lb]{\smash{{\SetFigFont{8}{9.6}{\rmdefault}{\mddefault}{\updefault}{\color[rgb]{0,0,0}$\Lambda$}%
}}}}
\put(2841,-2131){\makebox(0,0)[lb]{\smash{{\SetFigFont{8}{9.6}{\rmdefault}{\mddefault}{\updefault}{\color[rgb]{0,0,0}$\Lambda'$}%
}}}}
\end{picture}%
}}$$ 
Picture of Wilson's ``renormalization group (RG) flow'' amounts to ``flowing'' from theory at large $\Lambda_\mr{big}$ (the cut-off) to theory at small $\Lambda$ by successively integrating out thin spherical layers in the momentum space.
\item For $M$ compact, we can take $F_\Lambda=\mr{Span}_{\lambda\leq \Lambda} \{\Psi_\lambda\}$ -- the span of eigenfunctions of the operator $\Delta+m^2$ with eigenvalues $\lambda\leq \Lambda$.
\item Let $\cdots \succ T_{i+1}\succ T_i \succ\cdots$ be a sequence of CW decompositions of $M$ such that $T_{i+1}$ is a subdivision $T_i$ (the we say that $T_i$ is an \emph{aggregation} of $T_{i+1}$) and mesh (typical size of cells) of $T_i$ decays fast enough as $i\ra\infty$.
$$\vcenter{\hbox{\begin{picture}(0,0)%
\includegraphics{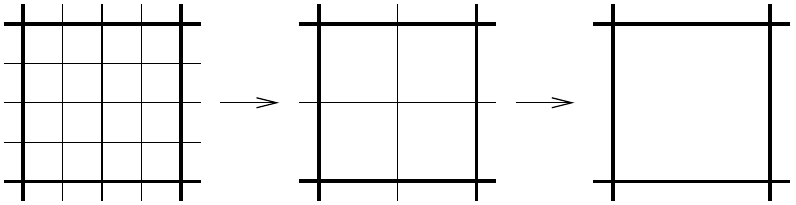}%
\end{picture}%
%
%
\setlength{\unitlength}{1579sp}%
\begingroup\makeatletter\ifx\SetFigFont\undefined%
\gdef\SetFigFont#1#2#3#4#5{%
  \reset@font\fontsize{#1}{#2pt}%
  \fontfamily{#3}\fontseries{#4}\fontshape{#5}%
  \selectfont}%
\fi\endgroup%
\begin{picture}(9522,2453)(193,-2041)
\end{picture}%
}}$$
We can set $F_i=C^0(T_i)$ -- zero-cochains (functions on vertices of $T_i$), and $S_i\in \mr{Fun}(F_i)$ a suitable finite-difference replacement of the action satisfying the compatibility condition w.r.t. aggregations $S_i=P_*(S_{i+1})$.\footnote{See \cite{SimpBF,DiscrBF,CMRcell} for an example; there we need cochains of all degrees in $F_i$.}
\end{enumerate}

\begin{remark}
Pushforwards out of the top tier $F,S$ of the tower (\ref{l14 Wilson tower}) are ill-defined, and it has to be replaced with the asymptotic ``tail'' of the tower $F_{\Lambda_\mr{big}}, S_{\Lambda_\mr{big}}$ with $S_{\Lambda_\mr{big}}(\phi)\underset{\Lambda_\mr{big}\ra\infty }\sim \til{S}_{\Lambda_\mr{big}}(\phi)=S(\phi)+\sum_i c_i(\Lambda_\mr{big})  A_i(\phi)$ the renormalized action (\ref{l14: S renorm}). Then, if e.g. $F_0$ is a point, $S_0$ is given by the sum of connected Feynman diagrams for the renormalized action.
\end{remark}

\marginpar{\LARGE{Lecture 15, 10/12/2016.}}

\section{Batalin-Vilkovisky formalism}

\subsection{Faddeev-Popov construction}
Faddeev-Popov construction appeared in \cite{FaddeevPopov}  as a way to resolve the problem of degeneracy of critical points of the Yang-Mills action, in order to construct the perturbative path integral (Feynman diagrams) for the Yang-Mills theory. The construction in fact applies to a large class of gauge theories. Here we study a finite-dimensional model for this situation.

Let $G$ be a compact Lie group of dimension $m$ acting freely on a finite-dimensional $n$-manifold $X$ with 
\begin{equation}\label{l15_gamma}
\gamma: G\times X\ra X
\end{equation} the action map. Let $\g=\mr{Lie}(G)$ be the Lie algebra of $G$ and assume that we have chosen a basis $\{T_a\}$ in $\g$. Denote by $v_a\in\mathfrak{X}(X)$ the fundamental vector fields on $X$ by which the generators $T_a$ act on $X$. 

Let $S\in C^\infty(X)^G$ be a $G$-invariant function on $X$, 
and let $\mu \in \Omega^n_c(X)^G$ be a $G$-invariant top form with compact support.

We are interested in the integral
\begin{equation}\label{l15_I}
I=\int_X \mu\; e^{\frac{i}{\hbar}S}
\end{equation}
We can rewrite it as the integral over the quotient $X/G$: 
\begin{equation}\label{l15_I via X/G}
I=\mr{Vol}(G)\int_{X/G} \til \mu\; e^{\frac{i}{\hbar}\til S}
\end{equation}
Where $\til S\in C^\infty(X/G)$ is such that 
\begin{equation}\label{l15_S}
S=p^* \til S
\end{equation} where $p:X\ra X/G$ is the quotient map; $\til \mu\in \Omega^{n-m}(X/G)$ is a top form on the quotient constructed in such a way that 
$$\iota_{v_m}\cdots\iota_{v_1}\mu=p^*\til \mu$$ Note that the $(n-m)$-form on the l.h.s. here is \textit{basic} (invariant and horizontal w.r.t $G$-action) and hence is a pullback from the quotient. Note that we can write 
\begin{equation}\label{l15_mu}
\mu = p^*\til\mu \wedge \chi
\end{equation} 
where $\chi\in \Omega^m(X)$ is a (any) form on $X$ with the property that its restrictions to $G$-orbits in $X$ yield the volume form on the orbits induced from Haar measure on $G$ (via the identification of an orbit with $G$ by picking a base point on the orbit). The normalization of $\chi$ should be such that $\iota_{v_m\wedge\cdots \wedge v_1}\chi=1$. Note that (\ref{l15_S},\ref{l15_mu}) together imply (\ref{l15_I via X/G}).

Let $\phi:X\ra \g$ be a $\g$-valued function on $X$ such that:
\begin{itemize}
\item zero is a regular value of $\phi$,
\item $\sigma=\phi^{-1}(0)\subset X$ intersects every $G$-orbit transversally, exactly $N$ times, for some fixed $N\geq 1$.\footnote{Ideally, we would like to have a single intersection, i.e. $N=1$, but typically, for $G$ compact, there are topological obstructions for having a global section of $p:X\ra X/G$ defined as a zero locus of a globally defined function. E.g. for $G=U(1)$, orbits are circles, thus $\phi$ has to have some even number of zeroes on an orbit.} 
\end{itemize}
We think of $\sigma$ as a (local) section of $G$-orbits. We refer to $\sigma$ as the \textbf{gauge-fixing} (and to $\phi$ as the \textit{gauge-fixing function}).

Since $\sigma\subset X$ is an $N$-fold covering of the quotient $X/G$, (\ref{l15_I via X/G}) implies
\begin{equation}\label{l15_e1}
I=\frac{\mr{Vol}(G)}{N} \int_{\sigma}\left.\iota_{v_m\wedge\cdots\wedge v_1}\mu\;\; e^{\frac{i}{\hbar}S} \right|_\sigma =
\frac{\mr{Vol}(G)}{N} \int_X \delta^{(m)}(\phi)\;\;\iota_{v_m\wedge\cdots\wedge v_1}\mu\;\; e^{\frac{i}{\hbar}S}
\end{equation}
Here $\delta^{(m)}(\phi)=\delta(\phi)\cdot \bigwedge_a d\phi^a$ is the distributional $m$-form supported on $\sigma$; $\delta(\phi)=\prod_a \delta(\phi^a(x))$ is the delta-distribution (not a form) supported on $\sigma\subset X$. We can think of $\delta(\phi)$ and $\delta^{(m)}(\phi)$ as the pullbacks by $\phi$ of the standard Dirac delta function and delta form, respectively, centered at the origin in $\g$.

Note that, generally, for $C\subset X$ a $k$-cycle, we have a distributional form $\delta_C^{(n-k)}:\Omega^k(X)\ra\RR$ mapping 
$$\omega\mapsto \int_C \omega|_C=: ``\;\;{\int_X \delta_C^{(n-k)}\wedge \omega}\;\;"$$
Formula (\ref{l15_e1}) is a special case of this, for $C=\sigma$.

\begin{remark}
The delta form $\delta^{(m)}(\phi)$ depends only on zero-locus of $\phi$ and, in particular, does not change under rescaling $\phi\mapsto \lambda\cdot\phi$ with $\lambda\neq 0$ a constant. On the other hand, the delta function $\delta(\phi)$ changes with rescaling of $\phi$, by $\lambda^{-m}$.
\end{remark}

Let $J$ be a function on $X$ such that
\begin{equation}\label{l15_e2}
\bigwedge_a d\phi^a\wedge \iota_{v_m\wedge\cdots\wedge v_1}\mu = J\cdot \mu
\end{equation}
\begin{lemma} The coefficient $J$ in (\ref{l15_e2}) is:
\begin{equation}\label{l15_J}
J(x)={\det}_\g FP(x)
\end{equation}
where 
\begin{equation}\label{l15_FP}
FP(x)=d_x\phi\circ d_{1,x}\gamma:\;\g\ra\g
\end{equation}
is an endomorphism of $\g$ depending on a point $x\in X$; here $d_{1,x}\gamma: \g\ra T_x X$ is the infinitesimal action of $\g$ on $X$ viewed as a derivative of the group action (\ref{l15_gamma}); $d_x\phi: T_xX\ra \g$ is the derivative of $\phi$. In components, we have 
\begin{equation}\label{l15_FP_matrix}
FP(x)^a_b=\lan d\phi^a(x) ,v_b(x)\ran = v^b(\phi_a)|_x
\end{equation}
\end{lemma}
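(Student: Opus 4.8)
The plan is to check (\ref{l15_e2}) pointwise and reduce it to a linear-algebra identity. Fix $x\in X$. Both sides of (\ref{l15_e2}) are elements of the one-dimensional space $\wedge^n T_x^*X$, so it is enough to evaluate them on a single ordered basis of $T_xX$ and compare the numbers obtained. Since $G$ acts freely, the fundamental vectors $v_1(x),\ldots,v_m(x)$ are linearly independent; I would complete them to a basis $(v_1(x),\ldots,v_m(x),e_{m+1},\ldots,e_n)$ of $T_xX$ by choosing any complement of the tangent space to the $G$-orbit through $x$. Evaluating the right-hand side of (\ref{l15_e2}) on this basis gives $J(x)\cdot\mu(v_1,\ldots,v_m,e_{m+1},\ldots,e_n)$, and the value of $\mu$ here is nonzero because we are on a basis.

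For the left-hand side I would use the shuffle expansion of the wedge of the $m$-form $\beta:=d\phi^1\wedge\cdots\wedge d\phi^m$ with the $(n-m)$-form $\gamma:=\iota_{v_m\wedge\cdots\wedge v_1}\mu$. The key observation is that $\gamma$ already has $v_1(x),\ldots,v_m(x)$ contracted into $\mu$, hence $\gamma$ vanishes on any tuple of vectors taken from $(v_1(x),\ldots,v_m(x),e_{m+1},\ldots,e_n)$ that contains even one of the $v_b$'s, since then $\mu$ has a repeated argument. Consequently, of all $\binom{n}{m}$ shuffle terms in $(\beta\wedge\gamma)(v_1,\ldots,v_m,e_{m+1},\ldots,e_n)$, only the single term separating the tuple into $(v_1,\ldots,v_m)$ and $(e_{m+1},\ldots,e_n)$ survives; that term equals $\beta(v_1(x),\ldots,v_m(x))\cdot\gamma(e_{m+1},\ldots,e_n)$. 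Now $\beta(v_1(x),\ldots,v_m(x))=\det\big(\langle d\phi^a(x),v_b(x)\rangle\big)_{a,b=1}^m$ by the definition of the wedge of $1$-forms, while $\gamma(e_{m+1},\ldots,e_n)=\pm\,\mu(v_1,\ldots,v_m,e_{m+1},\ldots,e_n)$, the sign being fixed by the convention chosen for the iterated contraction $\iota_{v_m\wedge\cdots\wedge v_1}$ (and matched to the conventions already used in (\ref{l15_mu})–(\ref{l15_e1})). Comparing the two sides then gives $J(x)=\det\big(\langle d\phi^a(x),v_b(x)\rangle\big)_{a,b}=\det\big(v_b(\phi^a)|_x\big)_{a,b}$.

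It remains to identify this matrix with the endomorphism $FP(x)=d_x\phi\circ d_{1,x}\gamma$ of $\g$, which is immediate from unwinding the definitions: $d_{1,x}\gamma$ sends the basis vector $T_b$ to the fundamental vector $v_b(x)\in T_xX$, and $d_x\phi$ sends $v_b(x)$ to the element of $\g$ whose $a$-th coordinate in the basis $\{T_a\}$ is $\langle d\phi^a(x),v_b(x)\rangle=v_b(\phi^a)|_x$. Hence the matrix of $FP(x)$ in the basis $\{T_a\}$ is exactly $FP(x)^a_b=v_b(\phi^a)|_x$, which is (\ref{l15_FP_matrix}), and therefore $J(x)=\det_\g FP(x)$, which is (\ref{l15_J}).

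The main obstacle is purely bookkeeping: confirming that the overall sign coming from the contraction convention $\iota_{v_m\wedge\cdots\wedge v_1}$ — consistently with the orderings already in play in (\ref{l15_mu}) and (\ref{l15_e1}) — is $+1$ rather than $(-1)^{m(m-1)/2}$; the conceptual input, namely the collapse of the shuffle sum to a single term, is elementary. I would also note that the identity holds at every $x\in X$, not just on the gauge-fixing locus $\sigma=\phi^{-1}(0)$: away from $\sigma$ the determinant $\det_\g FP(x)$ may vanish, but the equality $J(x)=\det_\g FP(x)$ is unconditional.
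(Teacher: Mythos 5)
Your proof is correct, and it is essentially the dual of the paper's argument: the paper picks a cobasis of $T_x^*X$ adapted to the gauge-fixing (the $d\phi^a(x)$ together with a basis of the annihilator of the orbit directions), rewrites $\mu$ in that cobasis and then contracts with $v_m\wedge\cdots\wedge v_1$, whereas you pick a basis of $T_xX$ adapted to the orbit (the $v_b(x)$ completed by a complement) and evaluate both sides on it, collapsing the shuffle sum because $\iota_{v_m\wedge\cdots\wedge v_1}\mu$ annihilates any tuple containing one of the $v_b$. The two computations carry the same content, but yours buys a small simplification: the paper's adapted cobasis exists only when $FP(x)$ is nondegenerate, which forces a separate (trivial) treatment of the degenerate case, while your evaluation is uniform in $x$ and shows directly that both sides vanish together when ${\det}_\g FP(x)=0$. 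Two minor points. First, with the convention already in force in this section, $\iota_{v_m\wedge\cdots\wedge v_1}\mu:=\iota_{v_m}\cdots\iota_{v_1}\mu$, so $(\iota_{v_m}\cdots\iota_{v_1}\mu)(e_{m+1},\ldots,e_n)=\mu(v_1,\ldots,v_m,e_{m+1},\ldots,e_n)$ with no extra sign; your undetermined $\pm$ is $+1$. Second, your claim that $\mu(v_1,\ldots,v_m,e_{m+1},\ldots,e_n)\neq 0$ ``because we are on a basis'' holds only where the compactly supported form $\mu$ is itself nonvanishing; at points with $\mu(x)=0$ both sides of (\ref{l15_e2}) vanish and the identity $J={\det}_\g FP$ holds vacuously, so nothing is lost, but the comparison-of-coefficients step should be restricted to the locus $\{\mu\neq 0\}$ (the paper's normalization of $\mu$ as a wedge of cobasis elements carries the same implicit restriction).
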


One calls $J(x)$ given by (\ref{l15_J}) the \textit{Faddeev-Popov determinant}.

\begin{proof} First note that nondegeneracy of $FP(x)$ is equivalent to $\phi^{-1}(\phi(x))\subset X$ intersecting the $G$-orbit through $X$ transversally. If the intersection is nontransversal, then l.h.s. of  (\ref{l15_e2}) is obviously vanishing ant the statement is trivial. So, we assume that the intersection is transversal, i.e. theat $FP(x)$ is non-degenerate.

Let $V=\mr{im}d_{1,x}\gamma=\mr{Span}(v_a(x))\subset T_xX$ be the tangent space to $G$-orbit through $x$ and let $\mr{Ann}(V)\subset T^*_x X$ be its annihilator in the cotangent space. Let $\alpha_1,\ldots,\alpha_{n-m}$ be a basis in $\mr{Ann}(V)$. We have a basis $(d\phi^1(x),\ldots,d\phi^m(x),\alpha_1,\ldots,\alpha_{n-m})$ in $T^*_x X$ (fact that this is a basis is equivalent to non-degeneracy of $FP(x)$ which we assumed). Without loss of generality (by normalizing $\alpha$s appropriately), we may assume $\mu =\bigwedge_{a=1}^m d\phi^a(x)\wedge \alpha_1\wedge\cdots\wedge \alpha_{n-m}$. Contracting with $v_m\wedge\cdots \wedge v_1$ and using orthogonality of $v$s and $\alpha$s, we have
$$\iota_{v_m\wedge\cdots \wedge v_1}\mu=\left(\sum_{s\in S_m}(-1)^s\prod_{a=1}^m  \lan d\phi^a,v_{s(a)} \ran \right)\alpha_1\wedge\cdots\wedge \alpha_{n-m}={\det}_\g FP(x)\cdot \alpha_1\wedge\cdots\wedge \alpha_{n-m} $$
Wedging with $\bigwedge_{a=1}^m d\phi^a(x)$, we get the statement of the Lemma.
\end{proof}

Thus, we have the following.
\begin{theorem}[Faddeev-Popov]
\begin{equation}\label{l15_FP1}
\int_X \mu\;e^{\frac{i}{\hbar}S} =\frac{\mr{Vol}(G)}{N} \int_X \mu\; \delta(\phi(x))\cdot {\det}_\g FP(x)\cdot e^{\frac{i}{\hbar}S}
\end{equation}
\end{theorem}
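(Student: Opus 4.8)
The plan is to derive (\ref{l15_FP1}) by simply chaining together the two facts already in hand: the reduction of $I$ to an integral over the gauge slice $\sigma=\phi^{-1}(0)$ recorded in (\ref{l15_e1}), and the computation of the Faddeev-Popov determinant recorded in (\ref{l15_e2}) and (\ref{l15_J}). No new idea is needed; the task is to assemble these pieces while being careful with the distributional-form calculus.

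Concretely, I would start from the second equality in (\ref{l15_e1}),
\begin{equation*}
I=\frac{\mr{Vol}(G)}{N}\int_X \delta^{(m)}(\phi)\wedge \iota_{v_m\wedge\cdots\wedge v_1}\mu\;\, e^{\frac{i}{\hbar}S},
\end{equation*}
insert the definition $\delta^{(m)}(\phi)=\delta(\phi)\cdot\bigwedge_{a=1}^m d\phi^a$ with $\delta(\phi)$ a scalar distribution, and move the scalar factors $\delta(\phi)$ and $e^{\frac{i}{\hbar}S}$ outside the wedge product, leaving the $n$-form $\bigwedge_a d\phi^a\wedge\iota_{v_m\wedge\cdots\wedge v_1}\mu$. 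By (\ref{l15_e2}) and the Lemma preceding the theorem, this top-degree form equals $({\det}_\g FP)\cdot\mu$. Substituting, the integrand becomes $\mu\,\delta(\phi)\,({\det}_\g FP)\,e^{\frac{i}{\hbar}S}$, which is precisely the integrand in (\ref{l15_FP1}); the prefactor $\mr{Vol}(G)/N$ is carried over unchanged from (\ref{l15_e1}).

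The points requiring attention — and the only things resembling an obstacle — are bookkeeping rather than conceptual. First, one should make sure the sign and ordering conventions in the contraction $\iota_{v_m\wedge\cdots\wedge v_1}$ and in $\bigwedge_a d\phi^a$ are used identically in (\ref{l15_e1}) and (\ref{l15_e2}), so that the substitution above is literal. Second, one should note that the manipulation is a valid identity of compactly supported distributional $n$-forms: away from $\sigma$ both sides vanish because of $\delta(\phi)$, while on a neighborhood of $\sigma$ transversality of $\sigma$ to the orbits — equivalently, nondegeneracy of $FP(x)$, which is exactly the case in which the Lemma's formula (\ref{l15_J}) is non-trivial — guarantees that $d\phi^1,\dots,d\phi^m$ extend to a coframe, so every object involved is well-defined and pairing against a test function reduces to the classical slice integral already used to prove (\ref{l15_e1}). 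Finally, since $\mu$ is compactly supported all integrals converge and the resulting equality is exact, valid for every $\hbar$ with no asymptotic expansion involved.
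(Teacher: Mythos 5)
Your proposal is correct and follows exactly the route the paper takes: the theorem is stated as an immediate consequence of (\ref{l15_e1}) combined with the Lemma computing $J={\det}_\g FP$ via (\ref{l15_e2}), which is precisely the substitution you perform. The additional remarks on sign conventions and on transversality (nondegeneracy of $FP$) are consistent with the paper's own treatment, where the degenerate case is handled by noting both sides vanish.
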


Next, we would like to deal with integrals of stationary phase type, i.e. with integrands of form $e^{\frac{i}{\hbar}(\cdots)}$. We can achieve that, at the cost of introducing auxiliary integration variables, by using integral presentations for the delta function (as a Fourier transform on the unit) and for the determinant (as a Gaussian integral over odd variables):
\begin{eqnarray}
\delta(\phi(x))&=&  \frac{1}{(2\pi\hbar)^m}\int_{\g^*} d^m\lambda \; e^{\frac{i}{\hbar}\lan \lambda , \phi(x) \ran} \label{l15_delta_as_integral}\\
{\det}_\g FP(x) &=& \left(\frac{\hbar}{i}\right)^m \int_{\Pi(\g\oplus \g^*)} \prod_{a=1}^m (Dc_a D \bar c_a)\; e^{\frac{i}{\hbar}\lan \bar{c}, FP(x) c \ran } \label{l15_detFP_as_integral}
\end{eqnarray}
Here the auxiliary odd variables $c_a$, $\bar c_a$ are called \textit{Faddeev-Popov ghosts}; $\lambda$ is the even \emph{Lagrange multiplier} variable. For brevity, we will denote the odd Berezin measure in 
(\ref{l15_detFP_as_integral}) by $D^m c\; D^m \bar{c}$. Plugging integral presentations (\ref{l15_delta_as_integral},\ref{l15_detFP_as_integral}) into (\ref{l15_FP1}), we obtain the following.
\begin{theorem}[Faddeev-Popov]
\begin{equation}\label{l15_FP2}
\int_X \mu\;e^{\frac{i}{\hbar}S} =\frac{\mr{Vol}(G)}{N\cdot (2\pi i)^m} \int_{X\times \g^*\times \Pi(\g\oplus \g^*)} \mu\; d^m \lambda \; D^m c\; D^m \bar{c}\;\;e^{\frac{i}{\hbar}S_{FP}(x,\lambda,c,\bar{c})}
\end{equation}
where
\begin{equation}\label{l15_S_FP}
S_{FP}(x,\lambda,c,\bar{c})=S(x)+\lan \lambda , \phi(x) \ran + \lan \bar{c} , FP(x) c \ran
\end{equation}
is the \emph{Faddeev-Popov action} associated to the gauge-fixing $\phi$.
\end{theorem}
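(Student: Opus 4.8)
The plan is to obtain (\ref{l15_FP2}) directly from the first Faddeev--Popov identity (\ref{l15_FP1}) by substituting into its right-hand side the two integral presentations (\ref{l15_delta_as_integral}) and (\ref{l15_detFP_as_integral}) for the delta function and for the Faddeev--Popov determinant. The work then splits into three parts: verifying those two presentations, bookkeeping the numerical prefactor, and justifying the reordering of the iterated integral; the last of these is the only genuinely delicate point.

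First I would establish (\ref{l15_delta_as_integral}). This is the $\hbar$-rescaled Fourier representation of the Dirac delta: on $\g\cong\RR^m$ one has $\delta_0=\frac{1}{(2\pi\hbar)^m}\int_{\g^*}d^m\lambda\;e^{\frac{i}{\hbar}\lan\lambda,\cdot\ran}$, checked by pairing both sides with a test function and changing variables $\lambda=\hbar\lambda'$ to reduce to the classical Fourier inversion formula; pulling this back along $\phi$ (which is transverse to $0\in\g$ by assumption, so the pullback of $\delta_0$ makes sense near $\sigma$) gives $\delta(\phi(x))=\frac{1}{(2\pi\hbar)^m}\int_{\g^*}d^m\lambda\;e^{\frac{i}{\hbar}\lan\lambda,\phi(x)\ran}$. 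Next, (\ref{l15_detFP_as_integral}) is immediate from the Berezin Gaussian formula (\ref{l12_Gaussian_theta_thetabar}): for $B\in GL(\g)$ one has $\int_{\Pi(\g\oplus\g^*)}D^mc\,D^m\bar c\;e^{\lan\bar c,Bc\ran}={\det}_\g B$, and applying this with $B=\tfrac{i}{\hbar}FP(x)$ yields $\int D^mc\,D^m\bar c\;e^{\frac{i}{\hbar}\lan\bar c,FP(x)c\ran}=(\tfrac{i}{\hbar})^m{\det}_\g FP(x)$, which rearranges to (\ref{l15_detFP_as_integral}); when $FP(x)$ is degenerate both sides vanish, consistently with (\ref{l15_e2}).

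Substituting these presentations into (\ref{l15_FP1}) and collecting exponentials gives $e^{\frac{i}{\hbar}S}\,e^{\frac{i}{\hbar}\lan\lambda,\phi(x)\ran}\,e^{\frac{i}{\hbar}\lan\bar c,FP(x)c\ran}=e^{\frac{i}{\hbar}S_{FP}(x,\lambda,c,\bar c)}$ with $S_{FP}$ exactly as in (\ref{l15_S_FP}), while the constants combine as $\frac{\mr{Vol}(G)}{N}\cdot\frac{1}{(2\pi\hbar)^m}\cdot(\tfrac{\hbar}{i})^m=\frac{\mr{Vol}(G)}{N(2\pi i)^m}$, which is precisely the prefactor in (\ref{l15_FP2}). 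The remaining --- and I expect hardest --- step is justifying the interchange of integrations: the $\lambda$-integral is only conditionally convergent (of Fresnel type) and the $x$-integral carries a delta-distribution, so Fubini does not apply literally. I would handle this either by reading (\ref{l15_FP2}) as an iterated integral --- integrating over $c,\bar c$, then over $\lambda$, then over $x$, which by the two presentations above reproduces (\ref{l15_FP1}) line by line --- or, in keeping with the rest of these notes, by interpreting every Fresnel integral via the $\epsilon$-regularization of Remark \ref{l6_rem_convergence} and treating the $x$-integral perturbatively, in which case the reordering is true by definition.
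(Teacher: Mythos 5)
Your proposal is correct and follows exactly the paper's route: the paper derives (\ref{l15_FP2}) simply by substituting the presentations (\ref{l15_delta_as_integral}) and (\ref{l15_detFP_as_integral}) into (\ref{l15_FP1}), with the same prefactor bookkeeping $\frac{1}{(2\pi\hbar)^m}\cdot(\hbar/i)^m=(2\pi i)^{-m}$. Your additional remarks on verifying the two presentations and on reading the result as an iterated (or $\epsilon$-regularized) integral only make explicit what the paper leaves implicit.
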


The point of replacing the integral (\ref{l15_I}) with the r.h.s. of (\ref{l15_FP2}) is that the former cannot be calculated, in the asymptotics $\hbar\ra 0$, by stationary phase formula, since the critical points of $S$ are not isolated but rather come in $G$-orbits (hence the Hessian of $S$ at a critical point is always degenerate and one cannot construct Feynman rules in this case). On the other hand the integral in the r.h.s. of (\ref{l15_FP2}) has isolated critical points with non-degenerate Hessians of the extended action $S_{FP}$ and the stationary phase formula is applicable.

\subsubsection{Hessian of $S_{FP}$ in an adapted chart}
Let $x_0$ be a critical point of $S$ lying on a critical $G$-orbit $[x_0]\subset X$ and satisfying $\phi(x_0)=0$. Let $(y_1,\ldots,y_{n-m};z^1,\ldots,z^m)$ be an adapted local coordinate chart on $X$ near $x_0$, such that:
\begin{enumerate}[(i)]
\item $x_0$ is given by $y=z=0$.
\item\label{l15_item_ii} $[x_0]$ is given by $y=0$; moreover, $G$-orbits are locally given by $y=\mr{const}$.
\item Locally $\phi$ is given by $\phi^a=z^a$.
\end{enumerate}
$$\vcenter{\hbox{\input{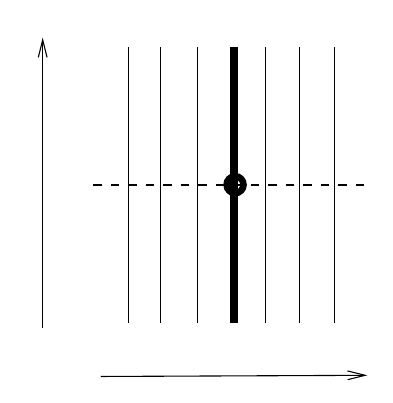tex_t}}}$$
For instance, $G$-invariance of $S$ implies that $S=S(y)$ and $\frac{\dd}{\dd z^a}S=0$.

Hessian of $S$ has the form
$$\dd^2 S|_{x_0}=\left(\begin{array}{c|c} \left.\frac{\dd^2 S}{\dd y_i \dd y_j}\right|_{x_0} & 0 \\\hline 0 & 0 \end{array} \right) $$
where first $(n-m)$ rows/columns correspond to $y_i$ variables and the last $m$ rows/columns correspond to $z^a$ variables. We are assuming that the block $\left.\frac{\dd^2 S}{\dd y_i \dd y_j}\right|_{x_0} $ is non-degenerate, i.e. that all degeneracy of the Hessian of $S$ comes from $G$-invariance. In other words, we assume that $\mr{rank}(\dd^2 S|_{x_0})=n-m$.

The Hessian $\dd^2 S|_{x_0}$ is, obviously, degenerate. However, let us consider
\begin{equation}\label{l15_Hess1}
\underbrace{\left.\dd^2(S+\lan\lambda , \phi(x)\ran )\right|_{x_0,\lambda=0}}_{\in \mr{Sym}^2(T_{x_0}X\oplus \g^*)^*}=
\left(\begin{array}{c|c|c}
\left.\frac{\dd^2 S}{\dd y_i \dd y_j}\right|_{x_0} & 0 & 0 \\ \hline
0 & 0 & \delta^a_b \\ \hline 0 & \delta^b_a & 0
\end{array}
\right)
\end{equation}
Here rows correspond to $y_i, z^a, \lambda_a$ and columns correspond to $y_j,z^b,\lambda_b$. Note that this Hessian is \emph{non-degenerate}! The $z-\lambda$ blocks that appeared because of the new $\lan\lambda , \phi(x)\ran$ term make the matrix non-degenerate.

Next, note that assumption (\ref{l15_item_ii}) above implies that fundamental vector fields $v_a$ locally have the form $v_a=\sum_{b}f^b_{\;\;a}(y,z)\frac{\dd}{\dd z^b}$ with $(f^b_{\;\;a})(y,z)$ a non-degenerate $m\times m$ matrix. Thus, by (\ref{l15_FP_matrix}), we have $FP(x)^a_b=f^a_{\;\;b}(y,z)$. Therefore, the part of the Hessian corresponding to the ghost part of Faddeev-Popov action is:
\begin{equation}\label{l15_Hess2}
\left.\dd^2 \lan \bar{c}, FP(x) c \ran\right|_{x_0,c=\bar{c}=0} = 
\left(\begin{array}{c|c}
0 & - f^b_{\;\; a} \\ \hline
f^a_{\;\;b} & 0
\end{array}
\right)
\end{equation}
where rows correspond to $c^a,\bar{c}_a$ and columns correspond to $c^b$, $\bar{c}_b$.

Assembling (\ref{l15_Hess1}) and (\ref{l15_Hess2}), we get the full Hessian of Faddeev-Popov action
\begin{equation}\label{l15_full_Hessian}
\left.\dd^2 S_{FP} \right|_{x_0,\lambda=c=\bar{c}=0} = \left(\begin{array}{c|c|c||c|c}
\left.\frac{\dd^2 S}{\dd y_i \dd y_j}\right|_{x_0} &  &  &  & \\ \hline
& & \delta^a_b & & \\ \hline
& \delta^b_a & & & \\ \hline\hline
& & & & - f^{b}_{\;\; a} \\ \hline
& & & f^a_{\;\; b} & 
\end{array}
\right)
\end{equation}
with row variables $y_i,z^a,\lambda_a,c^a,\bar{c}_a$ and column variables $y_j,z^b,\lambda_b,c^b,\bar{c}_b$. All the non-filled blocks are zero. From this explicit form it is obvious that the full Hessian of the Faddeev-Popov action is non-degenerate.

\marginpar{\LARGE{Lecture 16, 10/24/2016.}}
\subsubsection{Stationary phase evaluation of Faddeev-Popov integral}
Critical point (Euler-Lagrange) equations for Faddeev-Popov action $S_{FP}(x,\lambda,c,\bar{c})$ (\ref{l15_S_FP}) read:
\begin{eqnarray}
c=\bar{c} &=& 0 \label{l16_EL_1} \\
\frac{\dd}{\dd x_i} S(x)+\lan \lambda,\frac{\dd}{\dd x_i} \phi \ran &=&0  \label{l16_EL_2} \\
\phi(x)&=&0  \label{l16_EL_3} 
\end{eqnarray}
Here (\ref{l16_EL_1}) is equivalent to the Euler-Lagrange equations $\frac{\dd}{\dd c^a}S_{FP}=0$, $\frac{\dd}{\dd \bar{c}_a}S_{FP}=0$, whereas (\ref{l16_EL_2}) corresponds to $\frac{\dd}{\dd x_i}S_{FP}=0$ (where we dropped the term bilinear in $c$ and $\bar{c}$ which is excluded by (\ref{l16_EL_1})); last equation (\ref{l16_EL_3}) is $\frac{\dd}{\dd \lambda_a}S_{FP}=0$.

Note that equations  (\ref{l16_EL_2},\ref{l16_EL_3}) together correspond to the fact that $x$ is a conditional extremum of $S$ restricted to submanifold $\sigma=\phi^{-1}(0)\subset X$ with $\lambda$ the Lagrange multiplier. On the other hand, $G$-invariance of $S$ together with transversality of the local section $\sigma$ and $G$-orbits, implies that a conditional extremum of $S$ on $\sigma$ is in fact a non-conditional extremum (i.e. $dS$ vanishes on the whole tangent space $T_x X$, not just on $T_x\sigma\subset T_x X$). Therefore, (\ref{l16_EL_2}) implies $\lambda=0$. Thus, a critical point of $S_{FP}$ has a form $(x_0,\lambda=c=\bar{c}=0)$ with $x_0$ an intersection point of the critical $G$-orbit of $S(x)$ with the gauge-fixing submanifold $\sigma=\phi^{-1}(0)$.

The Hessian of $S_{FP}$ at a critical point (written without using the adapted chart as in (\ref{l15_full_Hessian})), is
\begin{multline}
\left.\dd^2 S_{FP} \right|_{x_0,\lambda=c=\bar{c}=0} = \\
\left(\begin{array}{c|c||c|c}
\left.\frac{\dd^2 S}{\dd x^2} \right|_{x_0} & (d\phi|_{x_0})^T:\, \g^*\ra T^*_{x_0} X & & \\ \hline
d\phi|_{x_0}:\, T_{x_0}X\ra \g & 0 & & \\
\hline \hline & & & -FP(x_0)^T:\, \g^*\ra\g^* \\ \hline
& & FP(x_0):\, \g \ra\g &
\end{array}
\right)
\end{multline}
with blocks corresponding to variables $x,\lambda,c,\bar{c}$. Its inverse has the following structure:
\begin{equation}\label{l16_Hessian_inverse}
(\left.\dd^2 S_{FP} \right|_{x_0})^{-1} =
\left(\begin{array}{c|c||c|c}
\mc{D} & \beta & & \\
\hline
\beta^T & 0 & & \\
\hline \hline
& & & FP(x_0)^{-1} \\
\hline
& & -FP(x_0)^{-1 T} & 
\end{array}\right)
\end{equation}
Here $\beta: \g\ra T_{x_0}X$ is the section of the projection $d\phi|_{x_0}: T_{x_0}X \ra \g$ constructed as 
$$\beta=d_{1,x_0}\gamma\circ FP(x_0)^{-1}:\,\g\ra T_{x_0}X$$
where $d_{1,x_0}\gamma:\, \g \ra T_{x_0} X$ is, as in (\ref{l15_FP}), the infinitesimal action of the Lie algebra $\g$ on $X$ specialized at the point $x_0$. Thus, $\beta$ and $d\phi|_{x_0}$ together give us a splitting
\begin{equation}\label{l16_splitting}
T_{x_0}X\simeq T_{x_0}\phi^{-1}(0)\oplus \g
\end{equation}
The block $\mc{D}\in \Sym^2 T_{x_0}X$ in  (\ref{l16_Hessian_inverse}) is the image of $\til{\mc{D}}\in \Sym^2 T_{x_0}\phi^{-1}(0)$ under the splitting (\ref{l16_splitting}), where $\til{\mc{D}}$ is the inverse of $\dd^2_{x_0}(\left.S\right|_{\phi^{-1}(0)})$ -- the (invertible) Hessian of $S$ restricted to gauge-fixing submanifold $\phi^{-1}(x_0)$.

We say that $\mc{D}$ is the ``propagator'' or ``Green's function'' for $\dd^2S|_{x_0}$ in the gauge $\phi(x)=0$.

Applying the stationary phase formula to the Faddeev-Popov integral (\ref{l15_FP2}), we obtain the following.

\begin{theorem}[Stationary phase formula for Faddeev-Popov integral]\label{thm: FP stat phase}
\begin{multline}\label{l16_FP_stat_phase}
\int_X e^{\frac{i}{\hbar}S(x)}\mu = \\ 
\frac{\mr{Vol}(G)}{(2\pi i)^m} \sum_{\mr{crit.\;}G\mr{-orbits\;}[x_0]\mr{\;of\;}S
}
(2\pi \hbar)^{\frac{n+m}{2}}\left(\frac{i}{\hbar}\right)^{m}e^{\frac{i}{\hbar}S(x_0)} \left|\det \left.\dd^2_{x_0}S\right|_{\phi^{-1}(0)}\right|^{-\frac12} \cdot {\det}_\g FP(x_0)\cdot e^{\frac{\pi i}{4}\,\mr{sign}\left.\dd^2_{x_0}S\right|_{\phi^{-1}(0)}}\;\times\\
\times \sum_\Gamma \frac{\hbar^{1-\chi(\Gamma)}}{|\mr{Aut}(\Gamma)|}\cdot \Phi(\Gamma)
\end{multline}
Here in the r.h.s. we pick, for every critical $G$-orbit $[x_0]$ of $S$, a single representative $x_0$ -- one intersection point of $[x_0]$ with $\phi^{-1}(0)$. The Feynman rules for calculating $\Phi(\Gamma)$ are as follows.\\
\begin{tabular}{c|c}
Half-edge & field \\
\hline 
$\vcenter{\hbox{\begin{picture}(0,0)%
\includegraphics{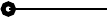}%
\end{picture}%
%
%
\setlength{\unitlength}{1973sp}%
\begingroup\makeatletter\ifx\SetFigFont\undefined%
\gdef\SetFigFont#1#2#3#4#5{%
  \reset@font\fontsize{#1}{#2pt}%
  \fontfamily{#3}\fontseries{#4}\fontshape{#5}%
  \selectfont}%
\fi\endgroup%
\begin{picture}(1053,162)(860,-1132)
\end{picture}%
}}$ & $y_i$\\
$\vcenter{\hbox{\begin{picture}(0,0)%
\includegraphics{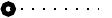}%
\end{picture}%
%
%
\setlength{\unitlength}{1973sp}%
\begingroup\makeatletter\ifx\SetFigFont\undefined%
\gdef\SetFigFont#1#2#3#4#5{%
  \reset@font\fontsize{#1}{#2pt}%
  \fontfamily{#3}\fontseries{#4}\fontshape{#5}%
  \selectfont}%
\fi\endgroup%
\begin{picture}(1053,162)(860,-1132)
\end{picture}%
}}$ & $\lambda_a$ \\
$\vcenter{\hbox{\begin{picture}(0,0)%
\includegraphics{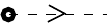}%
\end{picture}%
%
%
\setlength{\unitlength}{1973sp}%
\begingroup\makeatletter\ifx\SetFigFont\undefined%
\gdef\SetFigFont#1#2#3#4#5{%
  \reset@font\fontsize{#1}{#2pt}%
  \fontfamily{#3}\fontseries{#4}\fontshape{#5}%
  \selectfont}%
\fi\endgroup%
\begin{picture}(1053,204)(860,-1143)
\end{picture}%
}}$ & $c^a$ \\
$\vcenter{\hbox{\begin{picture}(0,0)%
\includegraphics{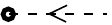}%
\end{picture}%
%
%
\setlength{\unitlength}{1973sp}%
\begingroup\makeatletter\ifx\SetFigFont\undefined%
\gdef\SetFigFont#1#2#3#4#5{%
  \reset@font\fontsize{#1}{#2pt}%
  \fontfamily{#3}\fontseries{#4}\fontshape{#5}%
  \selectfont}%
\fi\endgroup%
\begin{picture}(1053,204)(860,-1147)
\end{picture}%
}}$ & $\bar{c}_a$
\end{tabular}\quad
\begin{tabular}{c|c}
Edge & propagator \\
\hline
$\vcenter{\hbox{\begin{picture}(0,0)%
\includegraphics{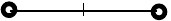}%
\end{picture}%
%
%
\setlength{\unitlength}{1973sp}%
\begingroup\makeatletter\ifx\SetFigFont\undefined%
\gdef\SetFigFont#1#2#3#4#5{%
  \reset@font\fontsize{#1}{#2pt}%
  \fontfamily{#3}\fontseries{#4}\fontshape{#5}%
  \selectfont}%
\fi\endgroup%
\begin{picture}(1614,194)(1324,-1618)
\end{picture}%
}}$ & $i\mc{D}\;\in\Sym^2 T_{x_0}X$\\
$\vcenter{\hbox{\begin{picture}(0,0)%
\includegraphics{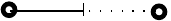}%
\end{picture}%
%
%
\setlength{\unitlength}{1973sp}%
\begingroup\makeatletter\ifx\SetFigFont\undefined%
\gdef\SetFigFont#1#2#3#4#5{%
  \reset@font\fontsize{#1}{#2pt}%
  \fontfamily{#3}\fontseries{#4}\fontshape{#5}%
  \selectfont}%
\fi\endgroup%
\begin{picture}(1614,194)(1324,-1618)
\end{picture}%
}}$ & $i\beta:\;\g \ra T_{x_0}X$ \\
$\vcenter{\hbox{\begin{picture}(0,0)%
\includegraphics{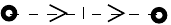}%
\end{picture}%
%
%
\setlength{\unitlength}{1973sp}%
\begingroup\makeatletter\ifx\SetFigFont\undefined%
\gdef\SetFigFont#1#2#3#4#5{%
  \reset@font\fontsize{#1}{#2pt}%
  \fontfamily{#3}\fontseries{#4}\fontshape{#5}%
  \selectfont}%
\fi\endgroup%
\begin{picture}(1614,209)(1324,-1618)
\end{picture}%
}}$ & $iFP(x_0)^{-1}:\;\g\ra\g$
\end{tabular} \\
\begin{tabular}{c|c|c}
Vertex & $y$-valency & vertex tensor \\
\hline
$\vcenter{\hbox{\begin{picture}(0,0)%
\includegraphics{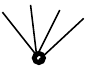}%
\end{picture}%
%
%
\setlength{\unitlength}{1973sp}%
\begingroup\makeatletter\ifx\SetFigFont\undefined%
\gdef\SetFigFont#1#2#3#4#5{%
  \reset@font\fontsize{#1}{#2pt}%
  \fontfamily{#3}\fontseries{#4}\fontshape{#5}%
  \selectfont}%
\fi\endgroup%
\begin{picture}(824,607)(1519,-2316)
\end{picture}%
}}$ & $k\geq 3$ & $\left.i\dd^k S\right|_{x_0}\;\in\Sym^k T_{x_0}^*X$ \\
$\vcenter{\hbox{\begin{picture}(0,0)%
\includegraphics{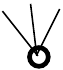}%
\end{picture}%
%
%
\setlength{\unitlength}{1973sp}%
\begingroup\makeatletter\ifx\SetFigFont\undefined%
\gdef\SetFigFont#1#2#3#4#5{%
  \reset@font\fontsize{#1}{#2pt}%
  \fontfamily{#3}\fontseries{#4}\fontshape{#5}%
  \selectfont}%
\fi\endgroup%
\begin{picture}(594,651)(1519,-2360)
\end{picture}%
}}$ & $l\geq 0$ & $\left.i\dd^l \rho\right|_{x_0}\;\in\Sym^l T_{x_0}^*X$ \\
$\vcenter{\hbox{\begin{picture}(0,0)%
\includegraphics{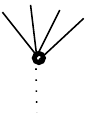}%
\end{picture}%
%
%
\setlength{\unitlength}{1973sp}%
\begingroup\makeatletter\ifx\SetFigFont\undefined%
\gdef\SetFigFont#1#2#3#4#5{%
  \reset@font\fontsize{#1}{#2pt}%
  \fontfamily{#3}\fontseries{#4}\fontshape{#5}%
  \selectfont}%
\fi\endgroup%
\begin{picture}(824,1074)(1519,-2783)
\end{picture}%
}}$ & $j\geq 2$ & $\left.i\dd^j \phi\right|_{x_0}\;\in\Sym^j T_{x_0}^*X\otimes \g$ \\
$\vcenter{\hbox{\begin{picture}(0,0)%
\includegraphics{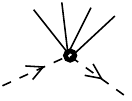}%
\end{picture}%
%
%
\setlength{\unitlength}{1973sp}%
\begingroup\makeatletter\ifx\SetFigFont\undefined%
\gdef\SetFigFont#1#2#3#4#5{%
  \reset@font\fontsize{#1}{#2pt}%
  \fontfamily{#3}\fontseries{#4}\fontshape{#5}%
  \selectfont}%
\fi\endgroup%
\begin{picture}(1254,964)(1219,-2673)
\end{picture}%
}}$ & $q\geq 1$ & $\left.i\dd^q FP\right|_{x_0}\;\in\Sym^q T_{x_0}^*X\otimes \mr{End}(\g)$
\end{tabular}\\
Here we assume that local coordinates $y_i$ on $X$ are introduced near the critical point $x_0$.
``$y$-valency'' refers to the number of adjacent solid ($y$-)half-edges. The second vertex is the marked vertex that should appear in $\Gamma$ exactly once; $\rho$ is the density of the volume form $\mu$ in the local coordinates $y_i$, i.e. $\mu=\rho(y)d^n y$. 
\end{theorem}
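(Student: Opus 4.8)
The plan is to feed the Faddeev--Popov presentation (\ref{l15_FP2}) of the integral into the superspace, Fresnel version of the stationary phase formula with corrections --- Theorem~\ref{thm: stat phase with corrections} together with its Berezin/odd counterpart and the non-constant density variant --- and then to read off the Feynman rules from the Hessian (\ref{l15_full_Hessian}) and its inverse (\ref{l16_Hessian_inverse}), both of which have already been computed above. So the work is essentially assembly and bookkeeping rather than new analysis.

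First I would pin down the critical points of $S_{FP}$. By the Euler--Lagrange equations (\ref{l16_EL_1})--(\ref{l16_EL_3}) one gets $c=\bar c=0$, $\phi(x)=0$, and $dS(x)+\langle\lambda,d\phi(x)\rangle=0$; $G$-invariance of $S$ together with transversality of $\sigma=\phi^{-1}(0)$ to the $G$-orbits forces $\lambda=0$, so the critical locus is the set of points $(x_0,\lambda=c=\bar c=0)$ with $x_0$ ranging over the $N$ intersection points of a critical $G$-orbit of $S$ with $\sigma$. Each is non-degenerate: in the adapted chart the Hessian block-diagonalizes as in (\ref{l15_full_Hessian}) into the even $(y,z,\lambda)$-block (\ref{l15_Hess1}) and the odd $(c,\bar c)$-block (\ref{l15_Hess2}), both invertible once $\left.\dd^2_{x_0}S\right|_{\phi^{-1}(0)}$ is. Since $S$, $\mu$ and $\phi$ are all $G$-invariant, the $N$ intersection points on a given critical orbit contribute equally, so the multiplicity $N$ cancels the factor $1/N$ in (\ref{l15_FP2}), leaving the coefficient $\mathrm{Vol}(G)/(2\pi i)^m$ in front of the sum over critical orbits.

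Next I would extract the bare prefactor. The even integration runs over $n+m$ variables ($x$ and $\lambda$): the stationary phase formula contributes $(2\pi\hbar)^{(n+m)/2}$, the phase $e^{\frac{i}{\hbar}S(x_0)}$, and $|\det|^{-\frac12}e^{\frac{\pi i}{4}\mathrm{sign}}$ of the even Hessian; reading (\ref{l15_Hess1}) in the adapted chart, the $z$--$\lambda$ block is unimodular with vanishing signature, so both the determinant and the signature reduce to those of $\left.\dd^2_{x_0}S\right|_{\phi^{-1}(0)}$. The odd $c,\bar c$-integration is an exact Berezin--Gaussian with quadratic form $\tfrac{i}{\hbar}\langle\bar c,FP(x_0)c\rangle$, which by (\ref{l15_detFP_as_integral}) yields $(i/\hbar)^m\det_\g FP(x_0)$ --- and it is precisely this $(i/\hbar)^m$, played off against the $(2\pi i)^{-m}$ already present, that produces the powers of $i$ and $\hbar$ appearing in the stated prefactor. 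Multiplying everything together reproduces exactly the scalar factor in (\ref{l16_FP_stat_phase}). For the Feynman rules I would Taylor-expand $S_{FP}=S(x)+\langle\lambda,\phi(x)\rangle+\langle\bar c,FP(x)c\rangle$ about $x_0$: the terms $\tfrac1{k!}\dd^kS|_{x_0}$ ($k\ge3$), $\tfrac1{j!}\langle\lambda,\dd^j\phi|_{x_0}\rangle$ ($j\ge2$) and $\tfrac1{q!}\langle\bar c,\dd^qFP|_{x_0}c\rangle$ ($q\ge1$) give the interaction vertices listed, each carrying an $i$ as in the Fresnel conventions, while the density $\rho$ of $\mu$ in the chart supplies the single marked vertex with tensor $\dd^l\rho|_{x_0}$ exactly as in (\ref{l11_e2}); the propagators $i\mathcal{D}$, $i\beta$, $iFP(x_0)^{-1}$ are the corresponding blocks of (\ref{l16_Hessian_inverse}), with no $\lambda$--$\lambda$, $c$--$c$ or $\bar c$--$\bar c$ edges since those blocks vanish. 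The weight $\hbar^{1-\chi(\Gamma)}$ then follows from the loop-expansion lemma, since every interaction vertex has total valency $\geq 3$ (note $j+1\geq3$, $q+2\geq3$), so rescaling all fields by $\sqrt\hbar$ assigns $\hbar$ to each edge and $\hbar^{-1}$ to each vertex.

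The main obstacle I anticipate is not conceptual but the careful matching of the powers of $i$ and $\hbar$ (bare Fresnel, the odd Gaussian $(i/\hbar)^m$, and the overall $(2\pi i)^{-m}$), together with verifying that the $z$--$\lambda$ and $c$--$\bar c$ blocks contribute trivially to $|\det|$ and to the signature. A secondary technical point to dispose of is that the $\lambda$-integral is over the non-compact $\g^*$, so Theorem~\ref{thm: statphase} does not literally apply; this is handled either by integrating out $\lambda$ first (which recovers the delta-function form (\ref{l15_FP1}) and reduces matters to a genuinely compactly supported stationary phase problem on $\phi^{-1}(0)$, after which the $\dd^j\phi$-vertices and the $\beta$, $FP^{-1}$ propagators reappear upon re-expanding) or directly, since the combined even phase is non-degenerate and oscillatory-convergent in the regularized sense of the convergence remark for Fresnel integrals.
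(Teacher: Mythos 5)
Your proposal follows the same route as the paper: the paper's ``proof'' of this theorem consists precisely of the critical-point analysis (equations (\ref{l16_EL_1})--(\ref{l16_EL_3}) forcing $c=\bar c=0$, $\phi(x_0)=0$, $\lambda=0$), the computation of the Hessian of $S_{FP}$ and of its inverse (\ref{l16_Hessian_inverse}) identifying $\mc{D}$, $\beta$, $FP(x_0)^{-1}$, followed by the single sentence ``applying the stationary phase formula to (\ref{l15_FP2}), we obtain the following.'' Your write-up just makes the assembly explicit, and the bookkeeping is right: $(2\pi\hbar)^{\frac{n+m}{2}}$ from the $n+m$ even variables, the $z$--$\lambda$ block of (\ref{l15_Hess1}) contributing $|\det|=1$ and signature $0$, the odd Gaussian giving $(i/\hbar)^m\,{\det}_\g FP(x_0)$ via (\ref{l15_detFP_as_integral}), and the vertices/propagators read off from the Taylor expansion of $S_{FP}$ and the blocks of (\ref{l16_Hessian_inverse}).

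One justification you give is false, although the conclusion it supports is correct. You write that the $N$ intersection points of a critical orbit with $\sigma=\phi^{-1}(0)$ contribute equally ``since $S$, $\mu$ and $\phi$ are all $G$-invariant.'' A gauge-fixing function is never $G$-invariant: if it were, $d\phi$ would annihilate the orbit directions and $FP=d\phi\circ d_{1,x}\gamma$ would vanish identically, so the whole construction would degenerate. The equality of the $N$ contributions (needed to cancel the $1/N$ in (\ref{l15_FP2})) instead follows from locality of the stationary phase expansion applied to the middle expression in (\ref{l15_e1}): near each intersection point, $\sigma$ is a local section of the $G$-orbits, so the local contribution of that point to $\int_\sigma \iota_{v_m\wedge\cdots\wedge v_1}\mu\; e^{\frac{i}{\hbar}S}|_\sigma$ equals the contribution of the single critical point $[x_0]$ of $(\tilde S,\tilde\mu)$ on $X/G$ — the relevant combination of $|\det|^{-\frac12}$ and the density being chart-independent as in Remark \ref{l6_rem_Jac}. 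Hence all $N$ sheets of the covering $\sigma\to X/G$ give the same asymptotic series, which is what you need.
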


\begin{remark}
In the special case when the gauge-fixing $\phi$ is linear in local coordinates $y_i$,\footnote{This is the finite-dimensional model for, e.g., the Lorentz gauge $d^*A=0$ in Yang-Mills theory, see Section \ref{sss: FP Yang-Mills} below} the third vertex above vanishes, and thus $\lambda$-half-edges do not appear in admissible graphs in the r.h.s. of (\ref{l16_FP_stat_phase}) at all. Here is a typical graph $\Gamma$ in such situation:
$$\vcenter{\hbox{\begin{picture}(0,0)%
\includegraphics{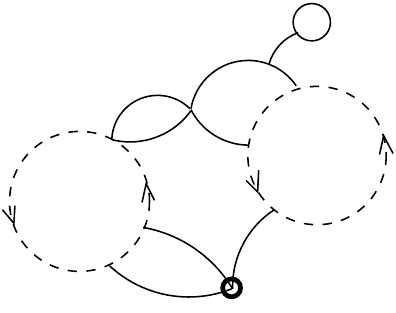}%
\end{picture}%
%
%
\setlength{\unitlength}{1973sp}%
\begingroup\makeatletter\ifx\SetFigFont\undefined%
\gdef\SetFigFont#1#2#3#4#5{%
  \reset@font\fontsize{#1}{#2pt}%
  \fontfamily{#3}\fontseries{#4}\fontshape{#5}%
  \selectfont}%
\fi\endgroup%
\begin{picture}(3794,2981)(1039,-2688)
\put(3411,-2591){\makebox(0,0)[lb]{\smash{{\SetFigFont{8}{9.6}{\rmdefault}{\mddefault}{\updefault}{\color[rgb]{0,0,0}$\rho$-vertex}%
}}}}
\end{picture}%
}}$$
\end{remark}

\begin{remark} Assume that, in addition to $\phi$ being linear in $y_i$, fundamental vector fields have constant coefficients in local coordinates $y_i$ near $x_0$.\footnote{This is the finite-dimensional model for the Lorentz gauge in QED (abelian Yang-Mills theory) and explains why Faddeev-Popov ghosts do not appear in the Feynman diagrams for QED (but do appear in non-abelian Yang-Mills theory).} Then  both third and fourth vertex in the Feynman rules above vanish. In this case  one has only solid $y$-edges in admissible graphs $\Gamma$.
\end{remark}

\begin{remark}
In order to define invariantly (cf. Remark \ref{l8_rem_inv_Gaussian}) the determinant of the restricted Hessian $\det \dd^2_{x_0}S|_{\phi^{-1}(0)}$ appearing in the r.h.s. of (\ref{l16_FP_stat_phase}), we need a volume element on $T_{x_0}\phi^{-1}(0)$, i.e. an element in $\mr{Det}\,T^*_{x_0}\phi^{-1}(0)$.\footnote{Recall that, for $V$ a vector space, the determinant line $\mr{Det}\,V$ is the top exterior power of $V$,  $\mr{Det}\,V=\wedge^{\dim V}V$.} To construct it, we use the short exact sequence $T_{x_0}\phi^{-1}(0)\hra T_{x_0}X \xra{d\phi|_{x_0}}\g$ which induces a canonical isomorphism of determinant lines
$$\mr{Det}\, T^*_{x_0}X\cong \mr{Det}\, \g^*\otimes \mr{Det}\, T^*_{x_0}\phi^{-1}(0)$$
Using it, we can take the (canonically defined) ``ratio'' of $\mu|_{x_0}\in \mr{Det}\, T^*_{x_0}X$ (the volume form on $X$ evaluated at $x_0$) and $\mu_\g\in \mr{Det}\, \g^*$ -- the Lebesgue measure on $\g$, to obtain $\nu=\frac{\mu|_{x_0}}{\mu_\g}\;\in \mr{Det}\, T^*_{x_0}\phi^{-1}(0)$.
\end{remark}

\begin{remark}
In Theorem \ref{thm: FP stat phase}, instead of choosing the gauge-fixing $\phi:X\ra \g$ \emph{globally} on $X$, we can choose individual (\emph{local}) gauge-fixing $\phi_j:U_j\ra \g$ in a tubular neighborhood $U_j$ of $j$-th critical orbit $[x_0^{(j)}]$ of $S$, with $j$ going over 
all critical orbits.
\end{remark}

\subsubsection{Motivating example: Yang-Mills theory}\label{sss: FP Yang-Mills}

For $M$ a Riemannian (or pseudo-Riemannian) manifold, classical Yang-Mills theory on $M$ with structure group $G$ (a compact group with Lie algebra $\g$) has the space of fields 
$$F=\Conn_{M,G}\simeq \Omega^1(M)\otimes \g$$
-- the space of connections in a trivial $G$-bundle on $M$.\footnote{We restrict our discussion to the case of a trivial $G$-bundle for simplicity. This assumption can be relaxed.} The space of fields is acted on by the group of gauge transformations (principal bundle automorphisms), $\GGauge_{M,G}=C^\infty(M,G)$ and the action is given by $A\mapsto A^g=g^{-1}Ag+g^{-1}dg$. Infinitesimally, the Lie algebra of gauge transformations $\mr{gauge}_{M,G}\simeq \Omega^0(M,\g)$ acts by 
\begin{equation}\label{l16_e1}
A\mapsto d_A\alpha = d\alpha +[A,\alpha] \in T_A F
\end{equation}
for $\alpha\in \mr{gauge}_{M,G} $ the generator of the infinitesimal transformation.

Yang-Mills action is given by 
\begin{equation}\label{l16_S_YM}
S_{YM}(A)=\frac12 \int_M \tr F_A\wedge * F_A
\end{equation}
with $F_A=dA+\frac12 [A, A]\in \Omega^2(M,\g)$ the curvature of the connection; $*$ is the Hodge star associated to the metric on $M$; $\tr$ is the trace in the adjoint representation of $\g$.

Volume form $\mu$ on $F$ (thought of the ``Lebesgue measure on the space of connections'') and the Haar measure on $\GGauge_{M,G}$ are parts of the functional integral measure for Yang-Mills theory and are, certainly, problematic. One works around them by considering \emph{perturbative} Faddeev-Popov integral, as given by the Feynman graph expansion in the r.h.s. of (\ref{l16_FP_stat_phase}).

For the gauge-fixing $\phi: \Conn_{M,G}\ra \mr{gauge}$, one of the possible choices is the \emph{Lorentz} gauge, corresponding to 
\begin{equation}\label{l16_e2}
\phi(A)=d^*A
\end{equation}
In this case, Faddeev-Popov endomorphism of $\mr{gauge}$ is:
\begin{equation}\label{l16_e3}
FP(A)=d^* d_A :\; \Omega^0(M,\g)\ra \Omega^0(M,\g)
\end{equation}
-- as follows from (\ref{l16_e1}) and (\ref{l16_e2}).

We are interested in evaluating the perturbative contribution of the gauge orbit of zero connection. The fact that the intersection of $\phi^{-1}(0)$ and the gauge orbit through $A=0$ is transversal at $A=0$ follows from the Hodge decomposition theorem (which implies $\Omega^1(M,\g)=\Omega^1(M,\g)_\mr{exact}\oplus \Omega^1(M,\g)_\mr{coclosed}=\mr{im}(d_{1,A=0}\gamma)\oplus T_{A=0}\phi^{-1}(0)$).

The formal Faddeev-Popov integral for Yang-Mills theory in Lorentz gauge is:
\begin{equation}\label{l16_FP-YM}
Z=\int_{\Conn \oplus \mr{gauge}^*\oplus\Pi(\mr{gauge}\oplus \mr{gauge}^*)}\DD A \,\DD\lambda \, \DD c \,\DD\bar{c}\;\; e^{\frac{i}{\hbar}S_{FP}(A,\lambda,c,\bar{c})}
\end{equation}
with 
\begin{equation}
S_{FP}(A,\lambda,c,\bar{c})=S_{YM}(A)+\int_M \lan \lambda ,d^*A \ran + \int_M \lan \bar{c} ,d^* d_A c \ran
\end{equation}
Here $\lambda\in \Omega^\mr{top}(M,\g^*)$ where the r.h.s. is our model for the dual of the Lie algebra of gauge transformations. Likewise,  $\bar{c}\in \Pi\;\Omega^\mr{top}(M,\g^*)$ and $c\in \Pi\; \Omega^0(M,\g)$.

\marginpar{\LARGE{Lecture 17, 10/26/2016.}}
Feynman rules for perturbative calculation of the Faddev-Popov integral for Yang-Mills theory (\ref{l16_FP-YM}) in the case $M=\RR^{3,1}$ -- the flat Lorentzian space with metric $\eta_{\mu\nu}=\left( \begin{array}{cccc} -1 & & &\\ & 1 & &\\ & & 1 & \\ & & & 1 \end{array} \right)$ -- are as follows.

\begin{tabular}{c|c}
Hald-edge & field \\
\hline 
$\vcenter{\hbox{\begin{picture}(0,0)%
\includegraphics{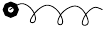}%
\end{picture}%
%
%
\setlength{\unitlength}{1973sp}%
\begingroup\makeatletter\ifx\SetFigFont\undefined%
\gdef\SetFigFont#1#2#3#4#5{%
  \reset@font\fontsize{#1}{#2pt}%
  \fontfamily{#3}\fontseries{#4}\fontshape{#5}%
  \selectfont}%
\fi\endgroup%
\begin{picture}(997,457)(616,-1675)
\put(631,-1581){\makebox(0,0)[lb]{\smash{{\SetFigFont{8}{9.6}{\rmdefault}{\mddefault}{\updefault}{\color[rgb]{0,0,0}$x$}%
}}}}
\put(1131,-1571){\makebox(0,0)[lb]{\smash{{\SetFigFont{8}{9.6}{\rmdefault}{\mddefault}{\updefault}{\color[rgb]{0,0,0}$a,\mu$}%
}}}}
\end{picture}%
}}$ & $A^a_\mu(x)$ \\
$\vcenter{\hbox{\begin{picture}(0,0)%
\includegraphics{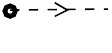}%
\end{picture}%
%
%
\setlength{\unitlength}{1973sp}%
\begingroup\makeatletter\ifx\SetFigFont\undefined%
\gdef\SetFigFont#1#2#3#4#5{%
  \reset@font\fontsize{#1}{#2pt}%
  \fontfamily{#3}\fontseries{#4}\fontshape{#5}%
  \selectfont}%
\fi\endgroup%
\begin{picture}(1057,496)(1346,-2185)
\put(1361,-2081){\makebox(0,0)[lb]{\smash{{\SetFigFont{8}{9.6}{\rmdefault}{\mddefault}{\updefault}{\color[rgb]{0,0,0}$x$}%
}}}}
\put(1961,-2091){\makebox(0,0)[lb]{\smash{{\SetFigFont{8}{9.6}{\rmdefault}{\mddefault}{\updefault}{\color[rgb]{0,0,0}$a$}%
}}}}
\end{picture}%
}}$ & $c^a(x)$ \\
$\vcenter{\hbox{\begin{picture}(0,0)%
\includegraphics{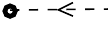}%
\end{picture}%
%
%
\setlength{\unitlength}{1973sp}%
\begingroup\makeatletter\ifx\SetFigFont\undefined%
\gdef\SetFigFont#1#2#3#4#5{%
  \reset@font\fontsize{#1}{#2pt}%
  \fontfamily{#3}\fontseries{#4}\fontshape{#5}%
  \selectfont}%
\fi\endgroup%
\begin{picture}(1057,498)(1346,-2185)
\put(1361,-2081){\makebox(0,0)[lb]{\smash{{\SetFigFont{8}{9.6}{\rmdefault}{\mddefault}{\updefault}{\color[rgb]{0,0,0}$x$}%
}}}}
\put(1961,-2091){\makebox(0,0)[lb]{\smash{{\SetFigFont{8}{9.6}{\rmdefault}{\mddefault}{\updefault}{\color[rgb]{0,0,0}$a$}%
}}}}
\end{picture}%
}}$ & $\bar{c}_a(x)$
\end{tabular}\\
Here $A^a_\mu(x)$ are the local components of the connection evaluated at a point $x$, $A=\sum_{a=1}^{\dim\g}\sum_{\mu=1}^4 T_a A^a_\mu (x) dx^\mu$, with $\{T_a\}$ the chosen basis in $\g$ (which we assume to be orthonormal w.r.t. to the Killing form in $\g$). Likewise, $c^a(x)$ are the components of $c=\sum_{a=1}^{\dim\g} T_a c^a(x)$ and $\bar{c}_a(x)$ are the components of $\bar{c}=\sum_{a=1}^{\dim\g} T_a \bar{c}_a(x) d^4 x$.

\begin{tabular}{c|c}
Edge & propagator \\
\hline
$\vcenter{\hbox{\input{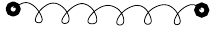tex_t}}}$ & 
$\int \frac{d^4 k}{(2\pi)^4} e^{-i (k,x-y)}\frac{i\delta_{ab}\eta_{\mu\nu}}{k^2+i\epsilon}$ \\
$\vcenter{\hbox{\input{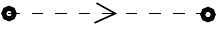tex_t}}}$ & 
$\int \frac{d^4 k}{(2\pi)^4} e^{-i (k,x-y)}\frac{i\delta_{ab}}{k^2+i\epsilon}$
\end{tabular}\\
Here a limit $\epsilon\ra +0$ is implied. This provides a regularization for the propagators which, in pseudo-Riemannian case, are singular on the light-cone $(x-y,x-y)=0$, as opposed to the Riemannian case, where the singularity is just at $x=y$.

\begin{tabular}{c|c}
Vertex & vertex tensor \\ \hline
$\vcenter{\hbox{\input{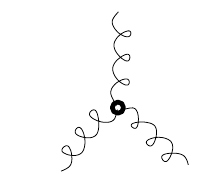tex_t}}}$\qquad \qquad & $f^{abc}\eta^{\mu\nu}\left(i\frac{\dd}{\dd x_\rho}\left( \vcenter{\hbox{\begin{picture}(0,0)%
\includegraphics{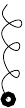}%
\end{picture}%
%
%
\setlength{\unitlength}{1973sp}%
\begingroup\makeatletter\ifx\SetFigFont\undefined%
\gdef\SetFigFont#1#2#3#4#5{%
  \reset@font\fontsize{#1}{#2pt}%
  \fontfamily{#3}\fontseries{#4}\fontshape{#5}%
  \selectfont}%
\fi\endgroup%
\begin{picture}(221,1003)(1830,-1862)
\end{picture}%
}}  \right)-
i\frac{\dd}{\dd x_\rho}\left(  \vcenter{\hbox{\begin{picture}(0,0)%
\includegraphics{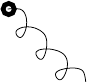}%
\end{picture}%
%
%
\setlength{\unitlength}{1973sp}%
\begingroup\makeatletter\ifx\SetFigFont\undefined%
\gdef\SetFigFont#1#2#3#4#5{%
  \reset@font\fontsize{#1}{#2pt}%
  \fontfamily{#3}\fontseries{#4}\fontshape{#5}%
  \selectfont}%
\fi\endgroup%
\begin{picture}(836,795)(1357,-2243)
\end{picture}%
}}  \right)\right)+\mr{cycl.\;perm.\;of\;}\{(a,\mu),(b,\nu),(c,\rho)\}$\\
$\vcenter{\hbox{\input{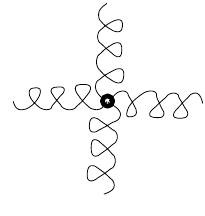tex_t}}}$ & $-i \sum_e f^{abe}f^{cde}(\eta^{\mu\rho}\eta^{\nu\sigma}-\eta^{\mu\sigma}\eta^{\nu\rho})+ \mr{cycl.\;perm.\;of \;}\{(a,\mu),(b,\nu),(c,\rho),(d,\sigma)\}$\\
$\vcenter{\hbox{\input{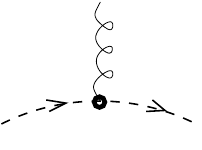tex_t}}}$ & $i f^{abc}\frac{\dd}{\dd x_\mu}\left( \vcenter{\hbox{\begin{picture}(0,0)%
\includegraphics{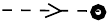}%
\end{picture}%
%
%
\setlength{\unitlength}{1973sp}%
\begingroup\makeatletter\ifx\SetFigFont\undefined%
\gdef\SetFigFont#1#2#3#4#5{%
  \reset@font\fontsize{#1}{#2pt}%
  \fontfamily{#3}\fontseries{#4}\fontshape{#5}%
  \selectfont}%
\fi\endgroup%
\begin{picture}(1026,185)(1419,-2104)
\end{picture}%
}} \right)$
\end{tabular}

These vertices correspond to the cubic and quartic terms $\frac12 \int \tr [A,A]dA$, $\frac18 \int \tr [A,A]\wedge [A,A]$, $\int \lan \bar{c},d^*[A,c] \ran$ in the Taylor expansion in the fields of the Faddev-Popov extension of the Yang-Mills action (\ref{l16_FP-YM}).

One can also enhance the Yang-Mills theory by adding a \emph{matter term} to the action,
$$S_{YM}\mapsto S_{YM}+ \int_M dx \lan\bar\psi, (i\dd\!\!\!/_A+m) \psi\ran $$
Here the new matter field $\psi$ is an \emph{odd} complex Dirac fermion field on $M$ -- a section of 
$E\otimes R$ 
with $E\ra M$ the spinor bundle and $R$ a representation of the structure group $G$. Field $\psi$ has local components $\psi_\alpha^i(x)$ with $i$ the index of spanning the basis of the representation space $R$ and $\alpha$ the spinor index; $\dd\!\!\!/_A=\sum_{\mu,\alpha,\beta,i,j}(\gamma^\mu)_{\alpha\beta}(\delta_{ij}\dd_\mu+(T_a)_{ij}A^a_\mu(x))$ is the Dirac operator, with $\gamma^\mu$ the Dirac gamma-matrices and $(T_a)_{ij}$ the representation matrices of the basis elements $T_a$ of $\g$; $\lan,\ran$ is the inner product of Dirac spinors; $m$ is the \emph{mass} of the fermion.

Adjoining the matter field results in the extension of Feynman rules by new half-edges 
$$\vcenter{\hbox{\begin{picture}(0,0)%
\includegraphics{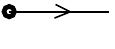}%
\end{picture}%
%
%
\setlength{\unitlength}{1973sp}%
\begingroup\makeatletter\ifx\SetFigFont\undefined%
\gdef\SetFigFont#1#2#3#4#5{%
  \reset@font\fontsize{#1}{#2pt}%
  \fontfamily{#3}\fontseries{#4}\fontshape{#5}%
  \selectfont}%
\fi\endgroup%
\begin{picture}(1067,446)(1566,-2135)
\put(1581,-2041){\makebox(0,0)[lb]{\smash{{\SetFigFont{8}{9.6}{\rmdefault}{\mddefault}{\updefault}{\color[rgb]{0,0,0}$x$}%
}}}}
\put(2231,-2021){\makebox(0,0)[lb]{\smash{{\SetFigFont{8}{9.6}{\rmdefault}{\mddefault}{\updefault}{\color[rgb]{0,0,0}$i,\alpha$}%
}}}}
\end{picture}%
}}\quad \mapsto\;\; \psi_\alpha^i(x),\qquad  \vcenter{\hbox{\begin{picture}(0,0)%
\includegraphics{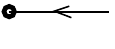}%
\end{picture}%
%
%
\setlength{\unitlength}{1973sp}%
\begingroup\makeatletter\ifx\SetFigFont\undefined%
\gdef\SetFigFont#1#2#3#4#5{%
  \reset@font\fontsize{#1}{#2pt}%
  \fontfamily{#3}\fontseries{#4}\fontshape{#5}%
  \selectfont}%
\fi\endgroup%
\begin{picture}(1067,435)(1566,-2135)
\put(1581,-2041){\makebox(0,0)[lb]{\smash{{\SetFigFont{8}{9.6}{\rmdefault}{\mddefault}{\updefault}{\color[rgb]{0,0,0}$x$}%
}}}}
\put(2231,-2021){\makebox(0,0)[lb]{\smash{{\SetFigFont{8}{9.6}{\rmdefault}{\mddefault}{\updefault}{\color[rgb]{0,0,0}$i,\alpha$}%
}}}}
\end{picture}%
}}\quad\mapsto \; \; \bar\psi^i_\alpha(x) $$ 
The new edge is:
$$ \vcenter{\hbox{\input{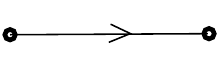tex_t}}}\quad\mapsto\;\; \int \frac{d^4 k}{(2\pi)^4} e^{-i (k,x-y)}\left(\frac{-i}{k\!\!\!/+m}\right)_{\alpha\beta}\delta_{ij} $$
where the dash in $k\!\!\!/:=\sum_\mu k_\mu (\gamma^\mu)_{\alpha\beta}$ stands for contraction with Dirac gamma-matrices. The new vertex is:
$$ \vcenter{\hbox{\input{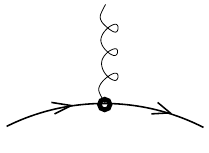tex_t}}}\quad\mapsto\;\; i (\gamma^\mu)_{\alpha\beta}(T_a)_{ij} $$

\begin{remark} Yang-Mills theory for the group $G=SU(3)$ is the theory of the strong interaction (quantum chromodynamics). The Yang-Mills field $A$ corresponds to the \emph{gluon} -- the carrier of the strong interaction and the matter fields $\psi$ correspond to \emph{quarks}. Abelian case $G=U(1)$ corresponds to quantum electrodynamics, with $A$ the photon field and $\psi,\bar\psi$ the electron/positron field. Standard model of particle physics is the Yang-Mills theory with $G=U(1)\times SU(2)\times SU(3)$ (with the factors corresponding to the electromagnetic, weak and strong interactions).
\end{remark}

\begin{remark} Frequently, instead of scaling the Yang-Mills-Faddeev-Popov action in the path integral with $\frac{1}{\hbar}$, as in (\ref{l16_FP-YM}), one sets $\hbar=1$ but scales the Yang-Mills action as $S_{YM}\mapsto \frac{1}{2g^2}\int \tr F_A\wedge * F_A$ (instead of (\ref{l16_S_YM})) with $g$ the \emph{coupling constant} of the strong interaction.\footnote{Or equivalently, by rescaling $A\mapsto g\cdot A$, one has $S_{YM}=\frac12 \int\tr dA\wedge *dA+\frac{g}{2}\int \tr [A,A]\wedge dA+\frac{g^2}{8}\int\tr [A,A]\wedge [A,A]$. In the matter term, if present, the quark-gluon interaction term $\bar\psi A\psi$ also gets rescaled by a factor $g$.} This normalization can be converted back to ours by setting $\hbar=g^2$ and rescaling the auxiliary fields $\lambda,c,\bar{c}$ (and the matter fields $\psi$, $\bar\psi$, if present), by appropriate powers of $g$. Put another way, with the normalization by the coupling constant $g$, Feynman graphs are weighed with $g^{-2\chi(\Gamma)}$ instead of $\hbar^{-\chi(\Gamma)}$.
\end{remark}

\subsection{Elements of supergeometry
}\footnote{A reference for  the basic definitions on supermanifolds and $\ZZ$-graded (super)manifolds: Appendix B in \cite{CMR}.}
\subsubsection{Supermanifolds}
\begin{definition}\label{l17_def_smfd} An $(n|m)$-supermanifold $\MM$ is a sheaf $\OO_\MM$, over a smooth $n$-manifold $M$ (the \emph{body} of $\MM$), of supercommutative algebras locally isomorphic to algebras of form $C^\infty(U)\otimes \wedge^\bt V^*$ with $U\subset M$ open and $V$ a fixed $m$-dimensional vector space.  I.e., there is an atlas on $M$ comprised by open subsets $U_\alpha\subset M$ with chart maps $\phi_\alpha:U_\alpha\ra W=\RR^n$, with isomorphisms of supercommutative algebras $\Phi_\alpha: \OO_\MM(U_\alpha)\ra C^\infty(\phi_\alpha(U))\otimes \wedge^\bt V^*=:\mc{A}_\alpha$.
\end{definition}
Locally a function on $\MM$ is an element of $\mc{A}_\alpha$, i.e., has local form 
$$f|_{U_\alpha}=\sum_k \sum_{1\leq i_1<\cdots< i_k\leq m} f_{i_1\cdots i_k}(x 
)\cdot \theta_{i_1}\cdots \theta_{i_k}$$ 
with $x_1,\ldots,x_n$ the local \emph{even} coordinates on $M$ (pullbacks of the standard coordinates on $\RR^n$ by $\phi_\alpha$) and $\theta_1,\ldots,
\theta_m\in V^*$ the  \emph{odd} (anti-commuting) coordinates on $V$.

\begin{remark}\label{l17_rem_aug}
The augmentation map $\wedge^\bt V^*\ra \RR$ induces a globally well-defined augmentation map 
\begin{equation}\label{l17_aug}
\OO_\MM\ra C^\infty(M)
\end{equation}
\end{remark}

\begin{example}
Let $\mc{V}=V_\even\oplus \Pi V_\odd$ be a super-vector space. We can define an associated supermanifold, also denoted $\mc{V}$, by
$\OO_\mc{V}(U):=C^\infty(U)\otimes \wedge^* V_\odd^*$ for any open $U\subset V_\even$.
\end{example}

\begin{example}[Split supermanifolds]\label{l17_ex_split}
Let $E\ra M$ be a rank $m$ vector bundle over an $n$-manifold $M$. Then we can construct a ``split''  $(n|m)$-supermanifold $\Pi E$ with body $M$ and the structure sheaf $\OO_{\Pi E}=\Gamma(M,\wedge^\bt E^*)$ -- the space of smooth sections, over $M$, of the bundle of supersommutative algebras $\wedge^\bt E^*$.
\end{example}
E.g., for $M$ an $n$-manifolds, we have two distinguished $(n|n)$-supermanifolds, $\Pi TM$ and $\Pi T^*M$, obtained by applying the construction above to the tangent and cotangent bundle of $M$, respectively.

\begin{definition}
A morphism of supermanifolds $\phi:\MM\ra \NN$ consist of the data of:
\begin{itemize}
\item A smooth  map between the bodies $f: M\ra N$,
\item An extension of $f$ to a morphism of sheaves of supercommutative algebras $\phi^*:\OO_\NN\ra \OO_\MM$. In particular, for an open $U\subset \NN$, we have a morphism $\phi^*_U: \OO_\NN(U)\ra \OO_\MM(f^{-1}(U))$ commuting with the augmentation maps (\ref{l17_aug}):
$$\begin{CD}
\OO_\NN(U) @>\phi^*>> \OO_\MM(f^{-1}(U)) \\
@VVV @VVV \\
C^\infty(U) @>f^*>> C^\infty(f^{-1}(U))
\end{CD}$$
\end{itemize}
\end{definition}

\begin{theorem}[Batchelor]
Every smooth supermanifold with body $M$ is (non-canonically) isomorphic to a split-supermanifold $\Pi E$ for some vector bundle $E\ra M$.
\end{theorem}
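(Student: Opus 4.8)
The plan is to reduce a general smooth supermanifold $\MM$ with body $M$ and structure sheaf $\OO_\MM$ to the split model by constructing a global filtration and then splitting it with a partition of unity. First I would introduce the nilpotent ideal sheaf $\mc{J}\subset \OO_\MM$ generated locally by the odd coordinates $\theta_1,\ldots,\theta_m$; this is well-defined globally because the augmentation map (\ref{l17_aug}) is globally defined and $\mc{J}$ is precisely its kernel. Then $\mc{J}^{m+1}=0$, and one obtains a finite filtration $\OO_\MM\supset \mc{J}\supset \mc{J}^2\supset\cdots\supset \mc{J}^m\supset \mc{J}^{m+1}=0$. The associated graded sheaf $\mathrm{gr}\,\OO_\MM=\bigoplus_{k\geq 0}\mc{J}^k/\mc{J}^{k+1}$ is a sheaf of supercommutative $C^\infty(M)$-algebras which, from the local form of $\OO_\MM$, is locally isomorphic to $C^\infty(U)\otimes\wedge^\bt V^*$ with the obvious grading; in particular $\mc{J}/\mc{J}^2$ is locally free of rank $m$ over $C^\infty(M)$, hence is the sheaf of sections of a rank $m$ vector bundle $E^*\ra M$. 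One checks that $\mathrm{gr}\,\OO_\MM\cong\Gamma(M,\wedge^\bt E^*)=\OO_{\Pi E}$ as sheaves of graded algebras, so the split supermanifold $\Pi E$ is the ``associated graded'' of $\MM$.

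The substance of the argument is then to construct an isomorphism $\OO_{\Pi E}\xrightarrow{\sim}\OO_\MM$ splitting the filtration, i.e. a right inverse to the projection $\OO_\MM\proj\mathrm{gr}\,\OO_\MM$ compatible with the algebra structure. I would do this by choosing a locally finite open cover $\{U_\alpha\}$ of $M$ over which $\MM$ trivializes, with local splitting isomorphisms $s_\alpha:\OO_{\Pi E}|_{U_\alpha}\xrightarrow{\sim}\OO_\MM|_{U_\alpha}$ (these exist by Definition \ref{l17_def_smfd}, after identifying $E|_{U_\alpha}$ with the trivial bundle $U_\alpha\times V$). On overlaps $U_\alpha\cap U_\beta$ the transition $s_\beta^{-1}\circ s_\alpha$ is an automorphism of $\OO_{\Pi E}$ restricting to the identity on $\mathrm{gr}$; such automorphisms form a sheaf of (unipotent) groups, and the key point is that the relevant non-abelian cohomology set $H^1(M;\mathrm{Aut}_{\mathrm{gr}})$ is trivial because the group is filtered with quotients that are modules over the soft sheaf $C^\infty(M)$. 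Concretely, I would argue inductively on the filtration degree: assuming the splittings agree modulo $\mc{J}^{k}$, the obstruction to matching them modulo $\mc{J}^{k+1}$ is a \v{C}ech $1$-cocycle valued in a sheaf of $C^\infty(M)$-modules (a bundle of exterior powers of $E^*$), which is fine/soft, so the cocycle is a coboundary and one corrects the local splittings; after finitely many steps ($\mc{J}^{m+1}=0$) the local splittings glue to a global isomorphism.

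Finally I would assemble a global morphism of supermanifolds $\Pi E\ra\MM$ from this sheaf isomorphism together with the identity on bodies, check it is indeed an isomorphism (its inverse is built the same way), and conclude. The main obstacle I anticipate is the inductive gluing step: one must set up the \v{C}ech-theoretic bookkeeping carefully so that each stage's obstruction genuinely lands in a soft sheaf of $C^\infty(M)$-modules, and verify that the successive corrections are compatible (i.e. that correcting at level $k$ does not destroy the agreement already achieved at lower levels). The softness/partition-of-unity input is what makes everything work and is the analogue, in this setting, of the standard fact that smooth vector bundles admit connections and that higher sheaf cohomology of $C^\infty$-modules vanishes; the rest is essentially formal manipulation of the filtration by $\mc{J}$.
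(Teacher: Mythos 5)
The paper states Batchelor's theorem without proof, so there is nothing to compare against; your argument is the standard one (due to Batchelor, also Gaw\c{e}dzki) and is essentially correct. The skeleton is right: the ideal $\mc{J}=\ker(\OO_\MM\ra C^\infty(M))$ is globally defined, $\mc{J}^{m+1}=0$, $\mc{J}/\mc{J}^2$ is a locally free $C^\infty(M)$-module defining the bundle $E^*$, the multiplication map $\wedge^k(\mc{J}/\mc{J}^2)\ra \mc{J}^k/\mc{J}^{k+1}$ is an isomorphism (checked locally), and the splitting of the filtration is obstructed only by nonabelian \v{C}ech $H^1$ with values in the sheaf of automorphisms inducing the identity on the associated graded, which is trivial by the fine-sheaf/partition-of-unity induction you describe. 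Two small points to tighten: first, an algebra automorphism of $\OO_{\Pi E}$ that is the identity on $\mathrm{gr}$ must preserve parity, so it differs from the identity by terms raising the $\mc{J}$-adic degree by at least $2$, and the induction actually proceeds in steps of two (the successive quotients being sheaves of degree $-2k$ derivations valued in $\wedge^{\bullet+2k}E^*$, which are indeed $C^\infty(M)$-modules and hence soft); second, when you "correct the local splittings" at level $k$ you should note that composing $s_\alpha$ with an automorphism congruent to the identity modulo $\mc{J}^{k}$ does not disturb the agreement already achieved modulo $\mc{J}^{k}$, which is exactly the compatibility you flag as a worry and which holds for precisely this reason. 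With those remarks the proof is complete, and the non-canonicity of the isomorphism is visible in the choices of local splittings and coboundaries.
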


\begin{example}\label{l17_ex1} Let us construct a morphism $\phi: \RR^{1|2}\ra \RR^{1|2}$ where the source $\RR^{1|2}$ has even coordinate $x$ and odd coordinates $\theta_1,\theta_2$ and the target $\RR^{1|2}$ has the even coordinate $y$ and odd coordinates $\psi_1,\psi_2$. We define $\phi$ by specifying the pullbacks of the target coordinates:
$$\phi^*: \begin{array}{ccc} y & \mapsto & x+\theta_1\theta_2 \\
\psi_1 & \mapsto & \theta_1 \\
\psi_2 & \mapsto & \theta_2
\end{array}$$
\end{example}

\begin{example} By Remark \ref{l17_rem_aug}, for $\MM$ any supermanifold the inclusion of the body $M\hra \MM$ is a canonically defined morphism of supermanifolds.
\end{example}

\begin{example} A morphism of vector bundles
$$\begin{CD}
E @>\phi_E>> E'\\
@VVV @VVV \\
M @>\phi_M>> M'
\end{CD}$$
induces a map of the corresponding split supermanifolds $\Pi E\ra \Pi E'$. \textbf{Warning:} the converse is not true -- there are morphisms of $\Pi E\ra \Pi E'$ not coming from morphisms of vector bundles! (E.g., the morphism constructed in Example \ref{l17_ex1} does not come from a morphism of vector bundles.)
\end{example}

\begin{definition}\label{l17_def_vect_field}
A vector field $v\in \mathfrak{X}(\MM)$ of parity $|v|\in \{0,1\}$ (with the convention $0$=even, $1$=odd) is a derivation of $\OO_\MM$ of parity $|v|$, i.e., an $\RR$-linear map $v: \OO_\MM\ra \OO_\MM$   satisfying 
\begin{eqnarray}
v(f\cdot g) &=& v(f)\cdot g +(-1)^{|v|\cdot |f|}f\cdot v(g) \\
|v(f)| &= & |v|+|f|\mod 2 \label{l17_e1}
\end{eqnarray}
Vector fields on $\MM$ form a Lie superalgebra with Lie bracket 
\begin{equation}\label{l17_Lie_bracket}
[v,w]:=v\circ w-(-1)^{|v|\cdot |w|}w\circ v
\end{equation}
\end{definition}

\subsubsection{$\ZZ$-graded (super)manifolds}
\begin{definition}[$\ZZ$-graded supermanifold] 
Let $\MM$ be a supermanifold.
Assume that, in terms of Definition \ref{l17_def_smfd},  both $V=\bigoplus_k V_k$ (the odd fiber) and $W=\bigoplus_k W_k$ (the target of even coordinate charts) are $\ZZ$-graded vector spaces (we assume that only finitely many of $V_k$, $W_k$ are nonzero). This grading induces a grading on the polynomial subalgebra $\Sym\; W^* \otimes \wedge V^*$ in $\mc{A}_\alpha$ where linear functions $x^i$ on $V_k$ are prescribed degree $|x^i|=-k$ and linear functions $\theta^\alpha$ on $W_k$ are prescribed degree $|\theta^\alpha|=-k$. If transition maps between the charts $\Phi_\alpha\circ \Phi_\beta^{-1}$ are compatible with this grading, we say that we have a (global) $\ZZ$-grading on $\MM$ or, equivalently, that $\MM$ is a \emph{$\ZZ$-graded supermanifold}.
\end{definition}
Using the grading of local coordinates, we can introduce, locally, a vector field 
\begin{equation}\label{l17_E_local}
\mathbb{E}:=\sum_{i} |x^i|\cdot x^i\frac{\dd}{\dd x^i}+\sum_\alpha |\theta^\alpha|\cdot \theta^\alpha\frac{\dd}{\dd \theta^\alpha}
\end{equation}
The fact that the grading in local charts is compatible with transitions between charts is equivalent to the local expression (\ref{l17_E_local}) gluing to a well-defined vector field $\mathbb{E}$ on $\MM$. It has the name \emph{Euler vector field} and has the property that for $f$ a function on $\MM$ of well-defined degree $|f|$, we have 
$$\mathbb{E}f=|f|\cdot f$$

Unless stated otherwise, we will be making the following simplifying assumption.
\begin{assumption}[Compatibility of $\ZZ$-grading and super-structure]
We assume that $W_k$ can be nonzero only for $k$ even and $V_k$ can be nonzero only for $k$ odd. Then one says that the $\ZZ$-grading and the super-structure on $\MM$ are compatible, or that the $\ZZ_2$-grading (responsible for the Koszul sign in the multiplication of functions) is $\bmod\; 2$ reduction of the $\ZZ$-grading. 
\end{assumption}

Similarly to the Definition \ref{l17_def_vect_field}, we can define a vector field of degree $k$ on a $\ZZ$-graded manifold $\MM$. The degree condition (\ref{l17_e1}) gets replaced by $|v(f)|=k+|f|$.   

\textbf{Notation:} we denote $C^\infty(\MM)_k$ or $(\OO_\MM)_k$ the space of functions of degree $k$ on a $\ZZ$-graded supermanifold.\footnote{We use notations $C^\infty(\MM)$ and $\OO_\MM$ for the algebra of functions on $\MM$ interchangeably.} Likewise, we denote $\mathfrak{X}(\MM)_k$ the space of vector fields of degree $k$.

In particular, $\mathbb{E}\in \mathfrak{X}(\MM)_0$ is a vector field of degree $0$ and for $v\in \mathfrak{X}(\MM)_k$ a vector field of degree $k$, we can probe its degree by looking at its Lie bracket with $\mathbb{E}$:
$$[\mathbb{E},v]=k\cdot v$$

\marginpar{\LARGE{Lecture 18, 10/31/2016.}}

\begin{example}
Let $E_\bt\ra M$ be a graded vector bundle with fibers graded by \emph{odd} integers. Then, similarly to the construction of Example \ref{l17_ex_split}, we can construct a $\ZZ$-graded manifold $\mc{E}$ with body $M$ and with
$$
\OO_\mc{E}
:=\Gamma(M,\Sym^\bt_\mr{gr}E^*)=\Gamma(M,\wedge^\bt E^*)$$
Here $\Sym^\bt_\mr{gr}$ stands for the graded-symmetric algebra of a graded vector bundle (i.e. symmetric algebra of the even part tensored with the exterior algebra of the odd part; the former vanishes in the present example).
\end{example}

\begin{example}\label{l18_ex_T[1]}
 $\MM=T[1]M$ -- the tangent bundle of $M$ with tangent fiber coordinates assigned grading $1$. Locally, we have coordinates $x^i$ in an open $U\subset M$.\footnote{
We adopt the following (standard) convention for shifts of homological degree: if $V^\bt$ is a $\ZZ$-graded vector space, then the degree-shifted vector space  $V[k]$ is defined by $(V[k])^i:=V^{k+i}$. In particular, e.g., for $V$ concentrated in degree zero, $V^{\neq 0}=0$, $V[k]$ is concentrated in degree $-k$.
} The corresponding chart on $T[1]M$ has local base coordinates $x^i$ of degree $0$ and fiber coordinates $\theta^i=``\; dx^i\;"$ of degree $1$. An element of $
\OO_\MM
$ locally has the form $\sum_{k=0}^n \sum_{1\leq i_1<\cdots <i_k\leq n} f_{i_1\cdots i_k}(x)\theta^{i_1}\cdots \theta^{i_k}$. Globally, we have an identification of functions on $T[1]M$ with forms on $M$, $
\OO_\MM
\cong \Omega^\bt(M)$ with the form degree providing the $\ZZ$-grading.
\end{example}

\begin{example}
$\MM=T^*[-1]M$ -- the cotangent bundle of $M$ with cotangent fiber coordinates assigned degree $-1$. Locally, we have base coordinates $x^i$, $\deg x^i=0$ and fiber coordinates $\psi_i=``\; \frac{\dd}{\dd x^i}\;"$, $\deg \psi_i=-1$. An element of $\OO_\MM$ locally has the form
$\sum_{k=0}^n \sum_{1\leq i_1<\cdots< i_k < n} f^{i_1\cdots i_k}\psi_{i_1}\cdots \psi_{i_k}$. Globally, we have an identification of function on $T[-1]M$ with polyvectors with reversed grading: $(\OO_\MM)_{-k}\cong \mc{V}^k(M)=\Gamma(M,\wedge^k TM)$. I.e., a function on $T[-1]M$ of degree $-k$ is the same as a $k$-vector field on $M$.
\end{example}

\subsubsection{Differential graded manifolds (a.k.a. $Q$-manifolds)}

\begin{definition}
For $\MM$ a $\ZZ$-graded supermanifold, one calls a vector field $Q$ on $\MM$ a \emph{cohomological vector field} if 
\begin{itemize}
\item $Q$ has degree $1$,
\item $Q^2=0$ (as a derivation of $\OO_\MM$). Or, equivalently, the Lie bracket of $Q$ with itself vanishes, $[Q,Q]=0$.\footnote{Note that, by (\ref{l17_Lie_bracket}), for an odd vector field, we have $[Q,Q]=2Q^2$. In particular, vanishing of $[Q,Q]$ is not a tautological property, unlike for a bracket of an even vector field with itself.}
\end{itemize}
Then we say that the pair $(\MM,Q)$ is a \emph{differential graded (dg) manifold} or, equivalently, a \emph{$Q$-manifold}.
\end{definition}

\begin{remark}
Note that $Q$ defines a differential on the algebra of functions, $Q: C^\infty(\MM_k)\ra C^\infty(\MM)_{k+1}$, thus endowing $C^\infty(\MM)$ with the structure of a commutative differential graded algebra.
\end{remark}

\begin{remark}[Carchedi-Roytenberg?] Vector fields $\mathbb{E},Q$ satisfy the commutation relations
$$[\mathbb{E},\mathbb{E}]\underset{\mr{tautologically}}{=}0=[Q,Q],\qquad [\mathbb{E},Q]=Q $$
Thus, the pair of vector fields $\mathbb{E},Q$ define an action on $\MM$ of a Lie superalgebra of automorphisms of the odd line $\RR^{0|1}$. This algebra is generated by infinitesimal dilatation $\sf{e}=-\theta\frac{\dd}{\dd \theta}$ and an infinitesimal translation $\sf{q}=\frac{\dd}{\dd \theta}$ (with $\theta$ the odd coordinate on $\RR^{0|1}$), satisfying same super Lie algebra relations as above.
\end{remark}

\begin{example} For $$\MM=T[1]M$$ the degree-shifted tangent bundle of $M$, we have a cohomological vector field $Q$ on $\MM$ corresponding to the de Rham operator $d_M$ on $M$, so that we have
$$\begin{CD}
C^\infty(\MM)_k @>Q>> C^\infty(\MM)_k\\
@|  @| \\
\Omega^k(M) @>d_M>> \Omega^{k+1}(M)
\end{CD}$$
Locally, in terms of local coordinates $(x^i,\theta^i=dx^i)$ (cf. Example \ref{l18_ex_T[1]}), we have 
$$Q=\sum_i \theta^i \frac{\dd}{\dd x^i}$$
This local formula glues, over coordinate charts on $\MM$, to a globally well-defined vector field $Q=d_M\in \mathfrak{X}(\MM)_1$.
\end{example}

\begin{example}
Let $\g$ be a Lie algebra. Consider a graded manifold $$\MM=\g[1]$$ 
with body a point and $C^\infty(\MM)=\wedge^\bt \g^*$. Note that functions on $\MM$ can be identified with Chevalley-Eilenberg cochains on $\g$, $C^\infty(\MM)\cong C^\bt_{CE}(\g)$. We define the cohomological vector field $Q$ on $\MM$ to be the Chevalley-Eilenberg differential $d_{CE}: \wedge^k\g^*\ra \wedge^{k+1}\g^*$, obtained from the dual of the Lie bracket $[,]^*:\g^*\ra \wedge^2 \g^*$ by extension to $\wedge^\bt\g^*$ as a derivation, by Leibniz identity. The property $d_{CE}^2=0$ then corresponds to the Jacobi identity in $\g$. Let $\{T_a\}$ be a basis in $\g$ and $\{\psi^a\}$ be the corresponding degree $1$ coordinates on $\MM$ (the dual basis to $\{T_a\}$); let also $f^c_{ab}$ be the structure constants of $\g$, i.e. $[T_a,T_b]=\sum_c f^c_{ab} T_c$. Then we have
$$Q=d_{CE}=\frac12 \sum_{a,b,c} f^c_{ab} \psi^a \psi^b \frac{\dd}{\dd \psi^c}\qquad \in \mathfrak{X}(\MM)_1$$
\end{example}

\begin{definition}\label{l18_def_L_infty}
An $L_\infty$ algebra is a graded vector space $\g^\bt$ endowed with multi-linear, graded skew-symmetric operations $l_k:\wedge_\mr{gr}^k\g\ra \g$ for each $k\geq 1$, such that:
\begin{itemize}
\item $l_k$ has degree $2-k$,
\item the following quadratic relations hold for each $n\geq 1$:
\begin{equation}\label{l18_L_infty_relations}
\sum_{n=r+s,\;r\geq 0,s\geq 1}\;\; \sum_{\sigma\in \mr{Sh}(r,s)}\pm \; l_{r+1}(x_{\sigma_1},\ldots,x_{\sigma_r},l_s(x_{\sigma_{r+1}},\ldots,x_{\sigma_n}))=0
\end{equation}
for $x_1,\ldots,x_n\in\g^\bt$ any $n$-tuple of vectors. Here $\mr{Sh}(r,s)$ stands for $(r,s)$-shuffles, i.e., permutations of numbers $1,\ldots,n=r+s$, such that $\sigma_1<\cdots < \sigma_r$ and $\sigma_{r+1}<\cdots<\sigma_n$.
\end{itemize}
\end{definition}

In particular, for small values of $n$, relations (\ref{l18_L_infty_relations}) have the following form:
\begin{itemize}
\item $n=1$: $l_1(l_1(x))=0$, i.e. $l_1=:d$ is a differential on $\g^\bt$.
\item $n=2$: $l_1(l_2(x,y))=l_2(l_1(x))+(-1)^{|x|}l_2(x,l_1(y))$ -- Leibniz identity, i.e. $d$ is a derivation of the binary operation $l_2=:[,]$.
\item $n=3$: Jacobi identity \emph{up to homotopy} for $l_2=[,]$, i.e. the Jacobiator equals a commutator (in appropriate sense) 
of a trinary operation $l_3$ with $l_1=d$:
$$[x,[y,z]]-[[x,y],z]-(-1)^{|x|\cdot |y|}[y,[x,z]]=\pm d l_3(x,y,z)\pm l_3(dx,y,z)\pm l_3(x,dy,z)\pm l_3(x,y,dz)$$
\end{itemize}

An alternative definition of an $L_\infty$ algebra is as follows.
\begin{definition}
An $L_\infty$ algebra is a graded vector space $\g^\bt$ together with a \emph{coderivarion}\footnote{Recall that a linear map $\DD:C\ra C$ is a coderivation of a coalgebra $C$ if the co-Leibniz identity holds: $\Delta\circ\DD=(\DD\otimes\mr{id})\circ \Delta+(\mr{id}\otimes \DD)\circ\Delta$, with $\Delta:C\ra C\otimes C$ the coproduct. In particular, if $\delta:A\ra A$ is a derivation of an algebra $A$, then the dual map $\delta^*:A^*\ra A^*$ is a coderivation of the dual coalgebra $C=A^*$.} $\DD$ of the cofree cocommutative coalgebra generated by $\g[1]$, 
$\DD: \Sym^\bt (\g[1]) \ra \Sym^\bt (\g[1])$, satisfying the following:
\begin{itemize}
\item $\DD^2=0$,
\item $p_0\circ \DD=0$ where $p_0: \Sym^\bt (\g[1])\ra \Sym^0 (\g[1])=\RR$ is the counit,
\item $\DD$ has degree $+1$.
\end{itemize}
\end{definition}

\begin{remark}\label{l18_rem_L_infty_from_coder} Coderivation $\DD$ is determined by its projection to (co)generators in $\g[1]$, i.e., by a sequence of maps
\begin{equation}\label{l18_D_proj}
p\circ \DD^{(k)}: \Sym^k (\g[1]) \ra \g[1]
\end{equation}
where $p: \Sym^\bt (\g[1]) \ra \Sym^1 (\g[1])=\g[1]$ is the projection to (co)generators. In (\ref{l18_D_proj}) we restricted the input of $\DD$ to $k$-th symmetric power of $\g[1]$, with $k\geq 1$. One has a tautological \emph{d\'ecalage isomorphism} $\alpha: \Sym^k(\g[1])\ra (\wedge^k \g)[k]$ which sends $\alpha: s(x_1)\odot \cdots \odot s(x_k)\mapsto \pm s^k(x_1\wedge \cdots\wedge x_n)$ for $x_1,\ldots,x_k\in\g$, with $s$ the suspension symbol. The relation of the $L_\infty$ operations $l_k$ from Definition \ref{l18_def_L_infty} with the components of the coderivation (\ref{l18_D_proj}) is via
$$l_k= p\circ \DD^{(k)}\circ \alpha^{-1}: \quad \wedge^k \g \ra \g$$
The quadratic relations on operations correspond to the equation $\DD^2=0$.
\end{remark}

\begin{example}
Let $(\g^\bt,\{l_k\})$ be an $L_\infty$ algebra. Then $(\g^\bt[1],Q=\DD^*)$ is a dg manifold. I.e., we identify the dual of $\Sym^\bt(\g[1])$ with a polynomial subalgebra in $C^\infty(\g[1])$. The dual of the coderivation $\DD$ is a derivation of polynomial functions on $\g[1]$ and thus yields a vector field on $\g[1]$. If $\{T_a\}$ is a basis in $\g$, $\{T^a\}$ the dual basis in $\g^*$, and $\theta^a$ the corresponding coordinates on $\g[1]$, we have
$$Q=\sum_{k=1}^\infty\frac{1}{k!}\sum_{a_1,\ldots,a_k,b}\pm \lan T^b, l_k(T_{a_1},\ldots,T_{a_k})\ran\theta^{a_1}\cdots \theta^{a_k}\frac{\dd}{\dd \theta^b} $$
Introducing a ``generating function for coordinates on $\g[1]$'' (or ``superfield'') $\underline\theta=\sum_a  \theta^a  T_a\in  \Sym^1 (\g[1])^*\otimes \g$, we can write
$$Q=\sum_{k=1}^\infty\frac{1}{k!} \lan l_k(\underline\theta,\ldots,\underline\theta),\frac{\dd}{\dd \underline\theta} \ran $$
where $\frac{\dd}{\dd \underline\theta}:=\sum_a T^a \frac{\dd}{\dd \theta^a} $,  operations $l_k$ act only on elements of $\g$ (the $T^a$s) and $\lan,\ran$ pairs $\g$ with $\g^*$.

The property $Q^2=0$ is equivalent to the quadratic relations (\ref{l18_L_infty_relations}) on operations $\{l_k\}$.
\end{example}

\begin{remark}[From \cite{AKSZ}]
If $(\MM,Q)$ is a dg manifold and $x_0\in M$ a point of the body such that $Q$ vanishes at $x_0$, then the shifted tangent space $\g:=T_{x_0}[-1]\MM$ inherits the structure of $L_\infty$ algebra: Taylor expansion of $Q$ at $x_0$ produces a sequence of elements 
$$Q^{(k)}\in \Sym^k T^*_{x_0}\MM\otimes T_{x_0}\MM=\Sym^k (\g[1])^*\otimes \g[1]$$
which, by the d\'ecalage isomorphism (cf. Remark \ref{l18_rem_L_infty_from_coder}), yield the $L_\infty$ operations $l_k: \wedge^k\g\ra \g$.\footnote{The $L_\infty$ structure induced this way on the shifted tangent space depends on the choice of a local chart near $x_0$. Choosing a different chart induces an isomorphism of $L_\infty$ algebras.}
\end{remark}

\begin{definition}
A \emph{Lie algebroid} is a vector bundle $E\ra M$ with skew-symmetric Lie bracket on sections $[,]:\Gamma(E)\times \Gamma(E)\ra \Gamma(E)$ satisfying Jacobi identity, endowed additionally with the \emph{anchor map} -- a bundle map $\rho: E\ra TM$ (covering the identity map on $M$), such that for $\alpha,\beta\in \Gamma(E)$ and $f\in C^\infty(M)$ the following version of Leibniz identity holds:
\begin{equation}\label{l18_anchor_eq}
[\alpha,f \cdot\beta]=f\cdot[\alpha,\beta]+\rho(\alpha)(f)\cdot \beta
\end{equation}
\end{definition}

\begin{example}[Vaintrob, \cite{Vaintrob}]\label{l18_ex_Lie_algbd} Let $(E\ra M;[,];\rho)$ be a Lie algebroid. Consider the graded manifold $E[1]$ with body $M$ and functions $C^\infty(E[1])=\Gamma(M,\wedge^\bt E^*)$. One can endow $E[1]$ with a cohomological vector field
$Q: \Gamma(M,\wedge^k E^*)\ra \Gamma(M,\wedge^{k+1}E^*)$ defined as follows: for $\psi\in \Gamma(M,\wedge^k E^*)$ and $\alpha_0,\ldots,\alpha_k\in \Gamma(M,E)$, we set
\begin{multline}\label{l18_Li_algbd_Q}
Q\psi(\alpha_0,\ldots,\alpha_k):=\sum_{\sigma\in\mr{Sh}(2,k)}(-1)^\sigma \rho(\alpha_{\sigma_0})\left(\psi(\alpha_{\sigma_1},\ldots,\alpha_{\sigma_k})\right)+\\
+\sum_{\sigma\in \mr{Sh}(2,k-1)}(-1)^{\sigma}\psi\left([\alpha_{\sigma_0},\alpha_{\sigma_1}],\alpha_{\sigma_2},\ldots, \alpha_{\sigma_k}\right)
\end{multline} 
Locally, let $\{x^i\}$ be local coordinates in a neighborhood $U$ on $M$ and $\{e_a\}$ be a basis of sections of $E$ over $U$. In particular, $[e_a,e_b]=\sum_c f^c_{ab}(x)e_c$ with $f^c_{ab}(x)$ the structure constants of the Lie bracket of sections of $E$. The anchor maps $e_a$ to a vector field $\sum_i \rho^i_a(x)\frac{\dd}{\dd x^i}$. On $E[1]$ we have local coordinates $x^i$, $\deg x^i=0$ and $\theta^a$, $\deg \theta^a=1$. The cohomological vector field (\ref{l18_Li_algbd_Q}) locally takes the form
\begin{equation}
Q=\frac12 \sum_{a,b,c} f^c_{ab}(x) \theta^a \theta^b \frac{\dd}{\dd \theta^c} + \sum_{a,i} \theta^a \rho^i_a(x)\frac{\dd}{\dd x^i}
\end{equation}
Equation $Q^2=0$ is equivalent to the structure relations of a Lie algebroid: 
\begin{itemize}
\item the Jacobi identity for sections of $E$, 
\item the condition that the anchor $\rho: \Gamma(M,E)\ra \mathfrak{X}(M)$ is a Lie algebra morphism (which follows from (\ref{l18_anchor_eq})).
\end{itemize}

\end{example}

\begin{example}\label{l18_ex_action_Lie_algbd}
A special case of Example \ref{l18_ex_Lie_algbd} is as follows. Let $G$ be a group acting on a manifold $M$ with $\gamma: G\times M\ra M$ the action. Let $d_{1,x}\gamma:\g\ra T_x M$ be the corresponding infinitesimal action, with $x\in M$. We can construct the \emph{action Lie algebroid}, with $E=\g\times M$ (as a trivial bundle over $M$), with the bracket of sections given by pointwise bracket in $\g$ and with the anchor map $\rho=d_{1,-}\gamma: E\ra TM$ given by the Lie algebra action. The corresponding graded manifold is $E[1]=M\times \g[1]$ with the algebra of functions
$$C^\infty(E[1])=\wedge^\bt \g^* \otimes C^\infty(M)=C^\bt_{CE}(\g,C^\infty(M))$$
-- Chevalley-Eilenberg cochains of $\g$ with coefficients in the module $C^\infty(M)$ with module structure given by $T_a\otimes f\mapsto v_a(f)$ with $v_a$ the fundamental vector fields of $\g$-action and with $f\in C^\infty(M)$ an arbitrary function. The cohomological vector field is the  Chevalley-Eilenberg differential twisted by the module $C^\infty(M)$. Locally on $M$:
$$Q=\sum_{a,b,c} f^c_{ab}\theta^a \theta^b\frac{\dd}{\dd \theta^c}+\sum_{a,i}\theta^a v_a^i(x)\frac{\dd}{\dd x^i}$$
\end{example}

\marginpar{\LARGE{Lecture 19, 11/2/2016.}}
\subsubsection{Integration on supermanifolds}
Let $p:E\ra M$ be a vecor bundle of rank $m$ over an $n$-manifold $M$. Let $\MM=\Pi E$ be the corresponding split $(n|m)$-supermanifold.

We define the Berezin line bundle of the supermanifold $\MM$ as the real line bundle $\mr{Ber}(\MM)=\wedge^n T^*M\otimes \wedge^m E$ over $M=\mr{body}(\MM)$. We call sections of $\mr{Ber}$ the \emph{Berezinians}.

Given a Berezinian $\mu\in \Gamma(M,\mr{Ber}(\MM))$, we have an $\RR$-linear \emph{integration map}
$$\int_\MM \mu\cdot \bt : \;\; C^\infty_c(\MM)\ra \RR$$
defined as follows:
\begin{equation}\label{l19_int}
\int_\MM \mu\;f=\int_M \lan \mu, (f)_m \ran
\end{equation}
where $\lan,\ran$ is the fiberwise pairing between line bundles $\wedge^m E$ and $\wedge^m E^*$; $(f)_m$ is the component of $f\in C^\infty(\Pi E)=\Gamma(M,\wedge^\bt E^*)$ in the top 
exterior power of $E^*$. Note that the integrand on the r.h.s. $\lan \mu, (f)_m \ran$ is a  section of $\wedge^n T^*M$ over $M$, i.e., a top degree form, and thus can be integrated. One can understand the definition (\ref{l19_int}) as doing a standard Berezin integral in odd fibers of $\Pi E$ and then integrating the result over the body in the ordinary (measure-theoretic) sense.

In fact, sections of $\Ber(\MM)$ over $M$ correspond to Berezinians that are \emph{constant in the fiber direction} of $\Pi E\ra M$. More generally, we can consider the super-vector bundle $\til\Ber(\MM)=\Ber(\MM)\otimes \wedge^\bt E^*$ over $M$, such that $\Gamma(M,\til\Ber(\MM))=\Gamma(M,\Ber(\MM))\otimes_{C^\infty(M)}C^\infty(\MM)$. We denote the space of sections $\BER(\MM):=\Gamma(M,\til\Ber(\MM))$. Its elements are the (general) 
Berezinians. By construction, $\BER(\MM)$ is a module over $C^\infty(\MM)$. Note that we can alternatively understand $\BER(\MM)$ as the space of sections of the pullback line bundle $p^*\Ber(\MM)$ over the whole of  $\MM$ rather than just the body $M$ (where $p:\Pi E\ra M$ is the bundle projection).
In the language of general Berezinians, integration (\ref{l19_int}) is simply a map
$$\int_\MM:\quad  \BER(\MM) \ra \RR$$

\begin{remark}
The notion of a Berezinian constant in the fiber direction depends on the splitting of the supermanifold $\MM$, i.e. on a particular identification of it with $\Pi E$ for $E\ra M$ a vector bundle. On the other hand, the general notion of a Berezian (element of $\BER(\MM)$) does not depend on the splitting.
\end{remark}

\begin{remark} Parity-shifted tangent bundle $\MM=\Pi TM$ carries a distinguished Berezinian $\mu_{\Pi TM}$, characterized as follows. For $f\in C^\infty(\Pi TM)\cong \Omega^\bt(M)$ denote $\til{f}$ the corresponding differential form on $M$. Then $\mu_{\Pi TM}$ satisfies
$$\int_{\Pi TM}\mu_{\Pi TM}\cdot f=\int_M \til{f}$$
where on the r.h.s. we have an ordinary integral over $M$ of a differential form. In the local coordinates (cf. Example \ref{l18_ex_T[1]}), we have $\mu_{\Pi TM}=\prod_i (dx^i D\theta^i)\;\in\BER(\Pi TM) $.
\end{remark}

When one considers integration over $\ZZ$-graded manifolds, only the underlying $\ZZ_2$-grading (superstructure) plays role for the integration theory.

\subsubsection{Change of variables formula for integration over supermanifolds}
\begin{definition}
Let $S$ be a supermanifold of parameters and $J\in  \mr{End}(\RR^{n|m})\otimes C^\infty(S)$ an $S$-dependent endomorphism of $\RR^{n|m}$ of block form
$$J=\left(\begin{array}{c|c}
A & B \\ \hline C & D
\end{array}\right)$$
with the blocks 
\begin{multline*}
A\in \left[\mr{End}(\RR^n)\otimes C^\infty(S)\right]_\even,\quad D\in \left[\mr{End}(\RR^m)\otimes C^\infty(S)\right]_\even,\\ 
B\in \left[\mr{Hom}(\RR^m,\RR^n)\otimes C^\infty(S)\right]_\odd,\quad 
C\in \left[\mr{Hom}(\RR^n,\RR^m)\otimes C^\infty(S)\right]_\odd
\end{multline*}
Assume that $D$ is invertible. Then the \emph{superdeterminant} of $J$ is defined as
\begin{equation}
\mr{Sdet}\, \left(\begin{array}{c|c}
A & B \\ \hline C & D
\end{array}\right)
= \det(A-B D^{-1}C)\cdot (\det D)^{-1}\qquad \in C^\infty(S)
\end{equation}
\end{definition}

\begin{remark} Superdeterminant is characterized by the following two properties:
\begin{itemize}
\item Multiplicativity: for $J,K\in \mr{End}(\RR^{n|m})\otimes C^\infty(S)$, we have 
$$\mr{Sdet}(J K)=\mr{Sdet}(J)\cdot \mr{Sdet}(K) $$
where $JK$ is the composition of $J$ and $K$ as endomorphisms of $\RR^{n|m}$.
\item For $j=\left(\begin{array}{c|c}
a & b \\ \hline c & d
\end{array}\right)$ an $S$-dependent endomorphism of $\RR^{n|m}$, we have
$$\mr{Sdet}\;(\mr{id}+\epsilon\cdot j)=1+\epsilon\cdot \mr{Str}\;j + O(\epsilon^2)$$
Here $\mr{Str}\;j=\tr a-\tr d$ is the \emph{supertrace} of $j$.
\end{itemize}
Note that these two properties imply that 
$$\mr{Sdet}\; e^j=e^{\mr{Str}\; j}$$
\end{remark}

\begin{theorem}[Change of variables formula]
Let $\RR^{n|m}_{I}$, $\RR^{n|m}_{II}$ be two copies of the $(n|m)$-dimesnional vector superspace, endowed with coordinates $x^i,\theta^a$ on the first copy and coordinates $y^i,\psi^a$ on the second copy. Let $\phi: \RR^{n|m}_{I}\ra \RR^{n|m}_{II}$ be a smooth map of supermanifolds and $f(y,\psi)\in C^\infty_c(\RR^{n|m}_{II})$ a compactly supported function. Then the integral of $f$ over $\RR^{n|m}_{II}$ against the standard coordinate Berezinian can be expressed as an integral of the pullback of $f$ by $\phi$ as follows:
\begin{multline}\label{l19_change_of_coords}
\int_{\RR^{n|m}_{II}} d^n y \DD^m \psi \; f(y,\psi)=\int_{\RR^{n|m}_I} d^n x \DD^m \theta\;\; \mr{sign}\det \left(\frac{\dd y^i(x,0)}{\dd x^j}\right)\cdot \mr{Sdet}\; \frac{\dd(y,\psi)}{\dd (x,\theta)}\cdot f(y(x,\theta),\psi(x,\theta))
\end{multline}
Here on the r.h.s. 
$$\frac{\dd(y,\psi)}{\dd (x,\theta)}=\left(
\begin{array}{c|c} 
\frac{\dd y^i}{\dd x^j} & \frac{\dd y^i}{\dd \theta^b} \\ \hline
\frac{\dd \psi^a}{\dd x^j} & \frac{\dd \psi^a}{\dd \theta^b}
\end{array}
\right)\qquad \in \mr{End}(\RR^{n|m})\otimes C^\infty(\RR^{n|m}_{I})$$
is the super-matrix of first derivatives of $\phi$. The sign factor in (\ref{l19_change_of_coords}) is the sign of the determinant of the even-even block of the matrix of derivatives.\footnote{It corresponds to the fact that in the change of variables formula for an ordinary integral, the \emph{absolute value} of the Jacobian appears.}
\end{theorem}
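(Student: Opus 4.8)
The plan is to reduce the formula to a short list of elementary coordinate changes and to assemble the general case using multiplicativity of the superdeterminant. Throughout I take $\phi$ to be an isomorphism of supermanifolds (equivalently, an invertible change of coordinates in a neighbourhood of $\mr{supp}\,f$), which is what makes $\mr{Sdet}$ of the Jacobian defined. Two reductions are free. First, both sides of (\ref{l19_change_of_coords}) are $\RR$-linear in $f$, so by a partition of unity on the body it suffices to prove the identity when $\mr{supp}\,f$ lies in an arbitrarily small coordinate ball. Second, if (\ref{l19_change_of_coords}) holds for $\phi$ and for $\phi'$, then it holds for $\phi'\circ\phi$: by the chain rule the super-Jacobian of $\phi'\circ\phi$ is the product of the super-Jacobian of $\phi'$ pulled back by $\phi$ with the super-Jacobian of $\phi$, and $\mr{Sdet}$ is multiplicative; moreover the even--even block of any super-Jacobian, evaluated at $\theta=0$, is the ordinary Jacobian of the underlying body map, so the sign factor $\mr{sign}\det(\dd y^i(x,0)/\dd x^j)$ is also multiplicative under composition. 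One checks along the way that the intermediate test function $(x,\theta)\mapsto \mr{sign}\det(\cdots)\,\mr{Sdet}(\cdots)\,(f\circ\phi')$ is again smooth and compactly supported, $\phi'$ being an isomorphism.

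Next I would factor an arbitrary $\phi$, locally, as $\phi=\phi_{\mr b}\circ\phi_{\mr{lin}}\circ\phi_{\mr{nil}}$, where $\phi_{\mr b}$ acts only on the body, $y^i=b^i(x)$ with $b$ a diffeomorphism and $\psi^a=\theta^a$; $\phi_{\mr{lin}}$ fixes the body and is linear in the odd coordinates, $y^i=x^i$ and $\psi^a=\sum_b A^a_b(x)\theta^b$ with $A(x)$ invertible; and $\phi_{\mr{nil}}$ is unipotent, $y^i=x^i+\nu^i(x,\theta)$, $\psi^a=\theta^a+\rho^a(x,\theta)$ with $\nu^i$ even of $\theta$-degree $\geq 2$ and $\rho^a$ odd of $\theta$-degree $\geq 3$, so that $\phi_{\mr{nil}}$ differs from the identity by nilpotents. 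This is obtained by peeling off first the body map $x\mapsto y^i(x,0)$ and then the odd-linear part $A^a_b(x)=\dd\psi^a/\dd\theta^b|_{\theta=0}$. For $\phi_{\mr b}$, (\ref{l19_change_of_coords}) is the classical change of variables in the body integral, the Berezin integral in the odd fibre being untouched and $\mr{sign}\det(\dd b/\dd x)\cdot\mr{Sdet}=\mr{sign}\det(\dd b/\dd x)\cdot\det(\dd b/\dd x)=|\det(\dd b/\dd x)|$. For $\phi_{\mr{lin}}$, $\mr{Sdet}=(\det A)^{-1}$ and the sign is $1$, so (\ref{l19_change_of_coords}) is exactly the Berezinian transformation rule $D^m\psi=(\det A)^{-1}D^m\theta$ recorded earlier, applied pointwise in the fixed variable $x$. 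The only substantial case is $\phi_{\mr{nil}}$.

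For the unipotent $\phi_{\mr{nil}}$, since there are only finitely many $\theta$'s everything is polynomial in them and $\phi_{\mr{nil}}$ sits in a nilpotent group, so I may write $\phi_{\mr{nil}}=\exp(X)$ for a nilpotent vector field $X$ vanishing on the body; equivalently, I may use the straight-line homotopy $\phi_t:\ y^i_t=x^i+t\nu^i,\ \psi^a_t=\theta^a+t\rho^a$, which is an isomorphism for every $t\in[0,1]$ because the relevant Jacobian blocks have the form $I+t(\text{nilpotent})$. Writing $J_t$ for the super-Jacobian of $\phi_t$, set $R(t):=\int_{\RR^{n|m}_I}d^nx\,D^m\theta\ \mr{Sdet}(J_t)\,(f\circ\phi_t)$, with sign factor $1$. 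I would show $\tfrac{d}{dt}R(t)=0$. Differentiating produces two terms: Liouville's identity gives $\tfrac{d}{dt}\mr{Sdet}(J_t)=\mr{Sdet}(J_t)\cdot(\mr{div}_\mu X)\circ\phi_t$, where $\mr{div}_\mu X=\mr{Str}(\dd X)$ is the divergence of $X$ with respect to the coordinate Berezinian; and the chain rule gives $\tfrac{d}{dt}(f\circ\phi_t)=(Xf)\circ\phi_t$. Using the Leibniz rule $\mr{div}_\mu(fX)=(\mr{div}_\mu X)\,f+Xf$ and the $\phi_t$-invariance of $X$, the $t$-derivative of the integrand becomes the $\mu$-divergence of a compactly supported vector field, whose integral vanishes by the two characterizing properties of the Berezin integral: Stokes' theorem in the even directions and $\int D\theta\,\tfrac{\dd}{\dd\theta}(\cdot)=0$ in the odd ones. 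Hence $R$ is constant, and $R(1)=R(0)$ is the left-hand side of (\ref{l19_change_of_coords}) since $\phi_0=\mr{id}$.

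What remains is routine: the parity and Koszul-sign bookkeeping in the identity $\tfrac{d}{dt}\log\mr{Sdet}(J_t)=\mr{Str}(\dd X\circ\phi_t)$ and in the integration by parts; the verification that $x\mapsto y^i(x,0)$ is a well-defined diffeomorphism, which is where the hypothesis that $\phi$ is an isomorphism enters; and the passage from the small-support local statement back to all of $\RR^{n|m}$, immediate from linearity and multiplicativity. The genuine obstacle is the unipotent case, i.e.\ the Liouville-type identity together with the integration by parts; everything else is assembly.
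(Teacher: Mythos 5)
The paper states this theorem without proof, so there is nothing to compare your argument against; judged on its own, your strategy is the standard one and is essentially correct. The reduction to three elementary factors (a body diffeomorphism, an odd-linear map, a unipotent map) via multiplicativity of $\mr{Sdet}$ and of the sign of the body Jacobian is sound; the first two cases are indeed just the classical change of variables and the Berezinian transformation rule $D^m\theta=(\det A)^{-1}D^m\theta'$ already recorded in the notes, and your reading of the hypothesis --- that $\phi$ must be an isomorphism (or a diffeomorphism onto a neighbourhood of $\mr{supp}\,f$) for $\mr{Sdet}$ of the Jacobian to be defined and the formula to hold without counting preimages --- is the right one.

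The one step you assert too quickly is the heart of the unipotent case. Writing $u_t=\mr{Sdet}(J_t)$ and $\tilde f_t=f\circ\phi_t$, the $t$-derivative of the integrand is $u_t\bigl((\mr{div}\,X)\circ\phi_t\bigr)\tilde f_t+u_t\,X\tilde f_t$, whereas $\mr{div}_\mu(u_t\tilde f_t X)=(\mr{div}_\mu X)\,u_t\tilde f_t+(Xu_t)\,\tilde f_t+u_t\,X\tilde f_t$. The Leibniz rule and $\phi_t$-invariance of $X$ alone do not identify these two expressions: you also need $X(\log u_t)=(\mr{div}\,X)\circ\phi_t-\mr{div}\,X$, which follows from differentiating the cocycle property $u_{t+s}=(u_t\circ\phi_s)\,u_s$ of the Jacobian along the flow. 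With that identity the derivative of the integrand is an exact $\mu$-divergence of a compactly supported vector field and $R'(t)=0$ as claimed, so the gap is fillable but should be filled. A smaller point: the straight-line homotopy $\phi_t=\mr{id}+t(\nu,\rho)$ is generated by a $t$-dependent vector field, so if you want a genuinely time-independent $X$ (as your invariance argument requires) you should take $\phi_t=\exp(tX)$ with $X=\log\phi_{\mr{nil}}$, a finite sum because $\phi_{\mr{nil}}^*-\mr{id}$ raises the $\theta$-degree by at least two.
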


\subsubsection{Divergence of a vector field}
\begin{definition}\label{l19_def_div}
For $v\in \mathfrak{X}$ a vector field on a supermanifold $\MM$ and $\mu\in \BER(\MM)$ a Berezinian, we define the \emph{divergence} $\mr{div}_\mu (v)\in C^\infty(\MM)$ of $v$ with respect to $\mu$ via the property
\begin{equation}\label{l19_div}
\int_\MM \mu \; v(f) = - \int_\MM \mu\;\mr{div}_\mu(v)\cdot f
\end{equation}
for any compactly supported test function $f\in C^\infty_c(\MM)$.
\end{definition}

\begin{example} For $\MM=M$ an ordinary manifold and $\mu$ a volume form, by Stokes' theorem we have
$$0\underset{\mr{Stokes'}}{=}\int_M \LL_v(\mu\, f)=\int_M \mu\; v(f)+\underbrace{(\LL_v \mu)}_{\mu\cdot\mr{div}_\mu(v)}\cdot f$$
where $\LL_v$ is the Lie derivative along $v$. Thus, definition (\ref{l19_div}) is compatible, in the context of ordinary geometry, with the definition of divergence as $\mr{div}_\mu(v)=\frac{\LL_v \mu}{\mu}$. I.e., roughly speaking, the divergence measures how the flow by $v$ changes volumes of subsets of $M$, as measured using $\mu$.
\end{example}

The following is a straightforward consequence of the Definition \ref{l19_def_div}.
\begin{lemma}
Let $\mu,\mu_0$ be two Berezinians on $\MM$ with $\mu=\rho\cdot \mu_0$ where $\rho\in C^\infty(\MM)$ is a nonvanishing function. Then, for $v\in \mathfrak{X}(\MM)$ a vector field, divergences with respect to $\mu$ and $\mu_0$ are related as follows:
\begin{equation}\label{l19_div_change_of_Ber}
\mr{div}_\mu(v)=\mr{div}_{\mu_0}(v)+\underbrace{\frac{1}{\rho}\cdot v(\rho)}_{=v(\log\rho)}
\end{equation}
\end{lemma}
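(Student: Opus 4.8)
The plan is to unwind both sides of the defining identity (\ref{l19_div}) for the two Berezinians $\mu$ and $\mu_0=\rho^{-1}\mu$, and compare. First I would fix an arbitrary compactly supported test function $f\in C^\infty_c(\MM)$ and apply Definition \ref{l19_def_div} directly to $\mu$:
$$\int_\MM \mu\; v(f)=-\int_\MM\mu\;\mr{div}_\mu(v)\cdot f.$$
On the left-hand side I would substitute $\mu=\rho\cdot\mu_0$ and rewrite $\rho\cdot v(f)$ using the Leibniz rule for the derivation $v$: since $v(\rho f)=v(\rho)\cdot f+\rho\cdot v(f)$ (no sign issue because $\rho$ is even, being a nonvanishing function, hence of degree $0$ in the underlying $\ZZ_2$-grading), we get $\rho\cdot v(f)=v(\rho f)-v(\rho)\cdot f$. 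Therefore
$$\int_\MM\mu\;v(f)=\int_\MM\mu_0\;\rho\,v(f)=\int_\MM\mu_0\;v(\rho f)-\int_\MM\mu_0\;v(\rho)\cdot f.$$

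Next I would handle the two terms on the right separately. The first term, $\int_\MM\mu_0\;v(\rho f)$, is exactly the left-hand side of (\ref{l19_div}) for the Berezinian $\mu_0$ with test function $\rho f$ (which is again compactly supported), so by Definition \ref{l19_def_div} it equals $-\int_\MM\mu_0\;\mr{div}_{\mu_0}(v)\cdot\rho f=-\int_\MM\mu\;\mr{div}_{\mu_0}(v)\cdot f$. The second term is already in the desired form: $\int_\MM\mu_0\;v(\rho)\cdot f=\int_\MM\mu\;\frac{1}{\rho}v(\rho)\cdot f$. Combining, the left side of (\ref{l19_div}) for $\mu$ becomes
$$-\int_\MM\mu\;\mr{div}_{\mu_0}(v)\cdot f-\int_\MM\mu\;\tfrac{1}{\rho}v(\rho)\cdot f=-\int_\MM\mu\;\Big(\mr{div}_{\mu_0}(v)+\tfrac{1}{\rho}v(\rho)\Big)\cdot f.$$
Comparing with $-\int_\MM\mu\;\mr{div}_\mu(v)\cdot f$, we conclude
$$\int_\MM\mu\;\big(\mr{div}_\mu(v)-\mr{div}_{\mu_0}(v)-\tfrac{1}{\rho}v(\rho)\big)\cdot f=0$$
for all $f\in C^\infty_c(\MM)$, which forces the bracketed function to vanish identically (the pairing $g\mapsto\int_\MM\mu\;g\cdot f$ is nondegenerate as $f$ ranges over compactly supported functions, since $\mu$ is nowhere-vanishing), giving (\ref{l19_div_change_of_Ber}). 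Finally I would note $\frac{1}{\rho}v(\rho)=v(\log\rho)$, at least locally where a branch of $\log\rho$ exists, which is all that is needed since the identity is pointwise.

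There is no real obstacle here; the only point requiring a word of care is the nondegeneracy step at the end — concluding that a function $g\in C^\infty(\MM)$ with $\int_\MM\mu\;g\cdot f=0$ for all $f\in C^\infty_c(\MM)$ must be zero. On the body $M$ this is the standard fundamental lemma of the calculus of variations applied to the top-form component; in the odd fiber directions it follows because the Berezin integral picks out, after pairing with $f$, every coefficient function of $g$ in the $\wedge^\bt E^*$ expansion (choosing $f$ to isolate a given monomial in the $\theta$'s, up to sign). So the argument is a short formal manipulation of the Leibniz rule plus this density/nondegeneracy remark.
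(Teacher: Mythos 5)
Your proof is correct and is precisely the ``straightforward consequence of the definition'' that the paper has in mind: substitute $\mu=\rho\cdot\mu_0$, apply the Leibniz rule to $v(\rho f)$, invoke the defining property of $\mr{div}_{\mu_0}$ with the compactly supported test function $\rho f$, and conclude by nondegeneracy of the pairing. One cosmetic remark: the absence of a Koszul sign in $v(\rho f)=v(\rho)f+\rho\,v(f)$ comes from $\rho$ being \emph{even} (which is forced if $\mu$ and $\mu_0$ are Berezinians of the same parity), not from $\rho$ being ``of degree $0$'' --- a nonvanishing function need not be concentrated in degree zero.
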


On a general supermanifold $\MM$, using local coordinates $x^i,\theta^a$, assume first that $\mu=\mu_\mr{coord}=d^n x\DD^m\theta$ -- the standard coordinate Berezinian. The vector field can be expressed locally as 
$$v=\sum_i v^i(x,\theta)\frac{\dd}{\dd x^i}+\sum_a v^a(x,\theta)\frac{\dd}{\dd \theta^a}$$
Then the divergence of $v$ is given by the local formula:
\begin{equation}\label{l19_div_locally}
\mr{div}_{\mu_\mr{coord}}v=\sum_i \frac{\dd}{\dd x^i}v^i-(-1)^{|v|}\sum_a \frac{\dd}{\dd \theta^a} v^a
\end{equation}
Note that, using derivatives acting on the left,\footnote{
For $f$ a function of commuting variables $x^i$ and anti-commuting variables $\theta^a$, let $y$ be one of $x$s or $\theta$s. One  denotes the ordinary derivative as $\frac{\ora\dd}{\dd y}f=\frac{\dd}{\dd y} f$ and sets $f\frac{\ola\dd}{\dd y}:=(-1)^{|y|\cdot (|f|+1)} \frac{\dd}{\dd y} f$. In particular $y\frac{\ola \dd}{\dd y}=1$. The idea is that, if $f$ is monomial, in order to calculate $f\frac{\ola\dd}{\dd y}$, if $y$ occurs in $f$, one commutes $y$ to the right in the monomial, using Koszul sign rule, and then $y$ gets killed by the derivative from the right (acting on the left).
} we can simplify the signs:
$$\mr{div}_{\mu_\mr{coord}}(v)=\sum_i v^i\frac{\ola\dd}{\dd x^i}-\sum_a v^a \frac{\ola\dd}{\dd \theta^a}$$

In a more general case one, when $\mu$ is not the coordinate Berezinian, one obtains the local formula by combining (\ref{l19_div_locally}) with (\ref{l19_div_change_of_Ber}).

\subsection{BRST formalism}
BRST formalism arose in \cite{BRS,Tyutin} independently as a cohomological formalism for treating gauge symmetry.

\subsubsection{Classical BRST formalism}
We will call a \emph{classical BRST theory} the following supergeometric data:  
\begin{itemize}
\item A $\ZZ$-graded supermanifold $\FF$ (the ``space of fields''),
\item A cohomological vector field -- a vector field $Q\in \mathfrak{X}(\FF)_1$ satisfying $\boxed{Q^2=0}$ -- the ``BRST operator'' (encoding the data of gauge symmetry),
\item A function $S\in C^\infty(\FF)_0$ --  the ``action'' satisfying $\boxed{Q(S)=0}$ (gauge-invariance property).
\end{itemize}

\begin{example}\label{l19_ex_FP_via_BRST}
Starting from Faddeev-Popov data -- action of a group $G$ on a manifold $X$ and an invariant function $S\in C^\infty(X)^G$, we construct the BRST package as follows:
$\FF=X\times \g[1]$ with 
$$Q=\frac12 \sum_{a,b,c}f^c_{ab} c^a c^b \frac{\dd}{\dd c^c}+\sum_{a,i}c^a v^i_a(x)\frac{\dd}{\dd x^i}$$
with $x^i$ local coordinates on $X$, $c^a$ the  degree $1$ coordinates on $\g[1]$; $v_a$ are the fundamental vector fields of $G$-action on $X$.

In other words, the functions of fields $C^\infty(\FF)=\wedge^\bt \g^*\otimes C^\infty(X)=C^\bt_{CE}(\g,C^\infty(X))$ are the Chevalley-Eilenberg cochains of $\g$ twisted by the module $C^\infty(X)$ with $Q=d_{CE}$ the corresponding Chevalley-Eilenberg differential. Equivalently, $(\FF,Q)$ is the dg manifold associated to the action Lie algebroid for the action of $G$ on $X$ (via the construction of Examples \ref{l18_ex_Lie_algbd}, \ref{l18_ex_action_Lie_algbd}).

Note that $Q^2=0$ is equivalent to the pair of properties: Jacobi identity for the bracket in $\g$ and the condition that the infinitesimal action $\g\ra \mathfrak{X}(X)$ is a Lie algebra homomorphism. The equation $Q(S)=0$ is equivalent to $\g$-invariance of $S$ (cast as $v_a(S)=0$ with $v_a$ the fundamental vector fields).
\end{example}

\subsubsection{Quantum BRST formalism}\label{sss: qBRST}
We define the quantum (finite-dimensional) BRST theory as the data of classical BRST theory $(\FF,Q,S)$ with an additional structure adjoined: a Berezinian $\mu$ on $\FF$ (the finite-dimensional toy model for the functional integral measure), such that the following property holds: 
\begin{equation}\label{l19_div_free}
\boxed{\mr{div}_\mu Q=0}
\end{equation}
-- compatibility of the integration measure on fields with gauge symmetry.

\begin{lemma}\label{l19_lm_int_of_Q_coboundary}
For any $f\in C^\infty_c(\FF)$
\begin{equation}
\int_\FF \mu\; Q(f)=0
\end{equation}
\end{lemma}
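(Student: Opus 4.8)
The statement is an immediate consequence of the defining property of the divergence together with the quantum BRST hypothesis (\ref{l19_div_free}). Recall that by Definition \ref{l19_def_div}, for any vector field $v\in\mathfrak{X}(\FF)$ and any compactly supported test function $g\in C^\infty_c(\FF)$ one has
$$\int_\FF \mu\; v(g) = -\int_\FF \mu\;\mr{div}_\mu(v)\cdot g .$$
The plan is simply to specialize this identity to $v=Q$ and $g=f$, obtaining
$$\int_\FF \mu\; Q(f) = -\int_\FF \mu\;\mr{div}_\mu(Q)\cdot f = 0,$$
where the last equality uses the defining property $\mr{div}_\mu Q = 0$ of a quantum BRST theory, so that the integrand $\mr{div}_\mu(Q)\cdot f$ vanishes identically on $\FF$.

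There is no real obstacle here; the only point worth spelling out is that $Q(f)$ is again compactly supported (since $Q$ is a local operator, hence $\mr{Supp}\,Q(f)\subseteq \mr{Supp}\,f$), so the integral $\int_\FF \mu\; Q(f)$ makes sense as an element of $\RR$ and the identity above applies verbatim. If one wishes to make the argument fully self-contained without invoking Definition \ref{l19_def_div} as a black box, one can instead work in a coordinate atlas, use a partition of unity subordinate to it, and on each chart apply the local formula (\ref{l19_div_locally}) for $\mr{div}_{\mu_\mr{coord}}$ combined with the super-Stokes/Berezin integration-by-parts identities $\int \frac{\dd}{\dd x^i}(\cdots)=0$ and $\int D\theta^a\,\frac{\dd}{\dd\theta^a}(\cdots)=0$; summing the contributions reproduces exactly $-\int_\FF\mu\;\mr{div}_\mu(Q)\cdot f$, which again vanishes by (\ref{l19_div_free}). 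Either way the lemma follows.
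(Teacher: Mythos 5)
Your proof is correct and is essentially the paper's own argument: the paper simply notes that the lemma ``follows immediately from the divergence-free condition (\ref{l19_div_free}) and Definition \ref{l19_def_div},'' which is exactly your specialization $v=Q$, $g=f$. The extra remarks on the support of $Q(f)$ and the local coordinate/partition-of-unity version are harmless elaborations, not a different route.
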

(Follows immediately from divergence-free condition (\ref{l19_div_free}) and the Definition \ref{l19_def_div}.)

\marginpar{\LARGE{Lecture 20, 11/7/2016.}}

A \emph{BRST integral} is an integral of $Q$-cocycle, $I=\int_\FF \mu\; f$ with $Q(f)=0$. By Lemma \ref{l19_lm_int_of_Q_coboundary}, the integral is invariant under shifts of the integrand by a $Q$-coboundary. I.e., the integrand can be considered modulo shifts $f\sim f+Q(g)$ for any $g$. In other words, the BRST integral is a map
$$\int_\FF \mu: \; H_Q(C^\infty(\FF))\ra \RR\;\mbox{(or $\CC$)}$$
assigning numbers to cohomology classes of $Q$. The relevant case for field theory is when the $Q$-cocycles are complex-valued, in which case the BRST integral takes values in $\CC$.

In particular, we are interested in the oscillatory BRST integral 
\begin{equation}\label{l20_shift_by_Q(Psi)}
Z=\int_\FF \mu \; e^{\frac{i}{\hbar}S}=\int_\FF \mu \; e^{\frac{i}{\hbar}(S+Q(\Psi))}
\end{equation}
Here $\Psi\in C^\infty(\FF)_{-1}$ is an arbitrary function generating the shift of the integrand by a $Q$-exact term;\footnote{
The fact that the integrand in the l.h.s. and r.h.s. of (\ref{l20_shift_by_Q(Psi)}) differs by a $Q$-exact term, i.e., that  $e^{\frac{i}{\hbar}(S+Q(\Psi))}-e^{\frac{i}{\hbar}S}=Q(\cdots)$, follows from a simple computation: $e^{X+Q(Y)}-e^X=e^X\sum_{n=1}^\infty \frac{1}{n!}Q(Y)^n=Q\left(e^X Y \Phi(Q(Y))\right)$. Here $X$ (of degree $0$) is assumed to be $Q$-closed and we denoted $\Phi(x):=\frac{e^x-1}{x}=\sum_{n=0}^\infty \frac{1}{(n+1)!}x^n$. Setting $X=\frac{i}{\hbar}S$, $Y=\frac{i}{\hbar}\Psi$, we obtain the statement above.
} 
in this context $\Psi$ is known as the \emph{gauge-fixing fermion}. 

\begin{remark} \emph{Observables} in BRST formalism are $Q$-cocycles $\OO\in C^\infty(\FF)$. Given a collection of observables $\OO_1,\ldots,\OO_N$, one can consider their expectation value (correlation function): 
$$
\lan \OO_1\cdots \OO_N \ran:= \frac{1}{Z}\int_\FF \mu\; \OO_1\cdots \OO_N\; e^{\frac{i}{\hbar}S}= \frac{1}{Z}\int_\FF \mu\; \OO_1\cdots \OO_N\; e^{\frac{i}{\hbar}(S+Q(\Psi))}
$$
The fact that $\OO_i$ are $Q$-cocycles imply that the entire integrand is a $Q$-cocycle, and thus one can again shift $S$ by a $Q$-coboundary.
\end{remark}

\textbf{The idea of gauge-fixing in BRST formalism:} L.h.s. of (\ref{l20_shift_by_Q(Psi)}) typically \emph{perturbatively ill-defined}, i.e., cannot be evaluated (in the aymptotic regime $\hbar\ra 0$) by the stationary phase formula, due to the degeneracy of critical points of $S$ arising from gauge symmetry. On the other hand, the r.h.s. of (\ref{l20_shift_by_Q(Psi)}) is \emph{perturbatively well-defined} (i.e. critical points are non-degenerate and the stationary phase formula is applicable), for a good choice of $\Psi$. So, the r.h.s. of (\ref{l20_shift_by_Q(Psi)}) is the gauge-fixed BRST integral which can be evaluated in terms of Feynman diagrams. By Lemma \ref{l19_lm_int_of_Q_coboundary}, the result is independent on the choice of gauge-fixing fermion $\Psi$ (though the particular Feynman rules for calculating the r.h.s. of (\ref{l20_shift_by_Q(Psi)}) do depend on $\Psi$; the result is independent of $\Psi$ once all contributing graphs are summed over).

\subsubsection{Faddeev-Popov via BRST}\label{sss: FP via BRST}
We start from Faddeev-Popov data: an $n$-manifold $X$ acted on by a compact group $G$, an invariant function $S\in C^\infty(X)^G$, an invariant volume form $\mu_X\in \Omega^n(X)^G$ and a gauge-fixing function $\phi:X\ra \g$ defining a local section of $G$-orbits $\phi^{-1}(0)\subset X$.

\textbf{First attempt.} Set, as in Example \ref{l19_ex_FP_via_BRST}, $\FF=X\times \g[1]$, with the cohomological vector field locally written as
\begin{equation}
Q=\frac12 \sum_{a,b,c}f^c_{ab} c^a c^b \frac{\dd}{\dd c^c}+\sum_{a,i}c^a v^i_a(x)\frac{\dd}{\dd x^i}
\end{equation}
and the Berezinian $\mu=\mu_X\cdot \DD^m c$. Here $\DD^m c=\ola\prod_a \DD c^a$ is the coordinate Berezinian on $\Pi\g$ 
(invariantly, it is the element of $\wedge^m\g$ compatible with the chosen normalization of Haar measure on $G$).

Note that the divergence of $Q$ is
$$\mr{div}_\mu Q=\sum_a c^a \left(\sum_b f^b_{ab} + \mr{div}_{\mu_X}v_a\right)$$
The two terms on the r.h.s. vanish individually because:
\begin{itemize}
\item $\mu_X$ is $G$-invariant and thus fundamental vector fields $v_a$ are divergence-free,
\item the contraction of the structure constants in $\g$, $\sum_b f^b_{ab}=\tr_\g [T_a,-]$ vanishes due to \emph{unimodularity} of $\g$,\footnote{Recall that a Lie algebra $\g$ is called \emph{unimodular} if the matrices of adjoint representation are traceless, $\tr_\g[x,-]=0$ for any $x\in\g$.} which in turn follows from the existence of Haar measure on $G$.
\end{itemize}

\textbf{Problem:}
\begin{enumerate}
\item $\FF$ has coordinates $x^i$ of degree $0$ and $c^a$ of degree $1$, in particular $\FF$ is non-negatively graded. Thus, there is no non-zero element $\Psi\in C^\infty(\FF)_{-1}$ which we would need for gauge-fixing (\ref{l20_shift_by_Q(Psi)}).
\item The integral $\int_\FF \mu\; e^{\frac{i}{\hbar}S(x)}=0$ vanishes because of the integral over $\g[1]$ (note that the integrand is independent of the odd variables $c^a$, hence the Berezin part of the integral vanishes trivially).
\end{enumerate}

\textbf{Solution/second attempt.}
Let us call the quantum BRST package constructed in the first attempt the \emph{minimal} BRST package, $(\FF_\mr{min},Q_\mr{min},\mu_\mr{min})$. We construct the new, non-minimal BRST package by setting: 
\begin{itemize}
\item Non-minimal fields: $\FF:=\FF_\mr{min}\times \FF_\mr{aux}$ where the added auxiliary fields are $\FF_\mr{aux}:=\g^*[-1]\oplus \g^*$ with degree $-1$ coordinates $\bar{c}_a$ and degree $0$ coordinates $\lambda_a$. Thus, we have the following local coordinates on $\FF=X\times\g[1]\times \g^*[-1]\times \g^*$:\\
\begin{tabular}{c|c}
coordinates & degree \\ \hline
$x^i$ on $X$ & 0 \\
$c^a$ on $\g[1]$ & 1 \\
$\bar{c}_a$ on $\g^*[-1]$ & -1 \\
$\lambda_a$ on $\g^*$ & 0
\end{tabular}
\item Non-minimal cohomological vector field: $Q:=Q_\mr{min}+Q_\mr{aux}$ with $Q_\mr{aux}=\sum_a \lambda_a\frac{\dd}{\dd\bar{c}_a}$. The added term can be regarded as a de Rham vector field on $\FF_\mr{aux}=T[1]\g^*[-1]$.\footnote{Note that 
the complex $C^\infty(\FF_\mr{aux}),Q_\mr{aux}$ has the cohomology of a point. Thus, complexes $C^\infty(\FF),Q$ and $C^\infty(\FF_\mr{min}),Q_\mr{min}$ are quasi-isomorphic.}
\item The non-minimal Berezinian $\mu=\mu_X\cdot \DD^m c\cdot \DD^m \bar{c}\cdot d^m\lambda$.
\end{itemize}

The integral 
\begin{equation}\label{l20_non-gauge-fixed}
\int_\FF \mu \; e^{\frac{i}{\hbar}S}
\end{equation}
contains a $0\cdot\infty$ indeterminacy: $0$ comes from the Berezin integral over the odd variables $c,\bar{c}$ of the integrand independent on them; $\infty$ comes from the integral over the even variable $\lambda$ of the integrand independent on $\lambda$.

However, let us replace the ill-defined integral $\int_\FF \mu \; e^{\frac{i}{\hbar}S}$ by a gauge-fixed integral 
\begin{equation}\label{l20_gauge-fixed}
I=\int_\FF \mu \; e^{\frac{i}{\hbar}(S+Q(\Psi_\phi))}
\end{equation} 
with the gauge-fixing fermion
\begin{equation}\label{l20_Psi_phi}
\Psi_\phi:=\lan \bar{c},\phi(x) \ran\quad \in C^\infty(\FF)_{-1}
\end{equation}
Note that this implies
$$Q(\Psi_\phi)=\lan \lambda, \phi(x) \ran+ \lan \bar{c}, FP(x) c \ran$$
Thus, the gauge-fixed integral (\ref{l20_gauge-fixed}) is precisely the Faddeev-Popov integral (\ref{l15_FP2}). In particular, the integral (\ref{l20_gauge-fixed})
\begin{enumerate}[(a)]
\item \label{l20_a} exists (converges) and is equal to $\frac{(2\pi i)^m}{\mr{Vol}(G)}\int_X \mu_X\;e^{\frac{i}{\hbar}S}$,
\item\label{l20_b} is invariant, by Lemma \ref{l19_lm_int_of_Q_coboundary}, under changes of the gauge-fixing fermion $\Psi_\phi$, and in particular invariant under changes of $\phi:X\ra \g$.
\end{enumerate}
\begin{remark} Note that the comparison of the ill-defined integral (\ref{l20_non-gauge-fixed}) with the gauge-fixed integral (\ref{l20_gauge-fixed}) 
is outside of the assumptions of the Lemma \ref{l19_lm_int_of_Q_coboundary}: the difference of the integrands is $Q$-exact but not compactly supported (in particular, in the direction of the Lagrange multiplier variables $\lambda_a$). This is why in this case the gauge-fixing (\ref{l20_non-gauge-fixed})$\ra$(\ref{l20_gauge-fixed}) is simultaneously a regularization of the ill-defined integral (\ref{l20_non-gauge-fixed}), rather than being an equality of two well-defined integrals as in (\ref{l20_shift_by_Q(Psi)}). Change of the gauge-fixing in (\ref{l20_b}) also leads, generally, to a non-compactly supported $Q$-exact shift of the integrand. However, as long as the integrals converge, (\ref{l20_shift_by_Q(Psi)}) still applies (in particular, we can deform the gauge-fixing $\phi:X\ra \g$ in a 1-parameter family $\phi_t$, $t\in [0,1]$, as long as $\phi^{-1}_t(0)\subset X$ is a local transversal section of $G$-orbits for all $t$).
\end{remark}

\begin{remark} One can employ more general gauge-fixing fermions than (\ref{l20_Psi_phi}). E.g.
\begin{itemize}
\item $\Psi=\lan \bar{c},\phi(x) \ran+ \frac{\varkappa}{2} (\bar{c},\lambda) $ with $(,)$ a non-degenerate pairing on $\g^*$ (e.g. the dual Killing form) and $\varkappa\in \RR$ a parameter of gauge-fixing. Then $Q(\Psi)=Q(\Psi_\phi)+\frac{\varkappa}{2}(\lambda,\lambda)$. Then one can take the Gaussian integral over $\lambda$ in $\int_\FF \mu\; e^{\frac{i}{\hbar}(S+Q(\Psi))}$. The result is a perturbatively well-defined integral over $X\times \g[1]\times \g^*[-1]$.\footnote{
E.g. in the case of Yang-Mills theory in Lorentz gauge, we have (writing only the quadratic in the fields part of the gauge-fixed action $S+Q(\Psi)$; we do not write the ghost term): $\int_M \tr \frac12 dA\wedge * dA+ \lambda\wedge d^*A+\frac{\varkappa}{2}\lambda\wedge *\lambda 
$. After integrating out the field $\lambda$, we obtain $\int_M \tr \frac12 A (d^*d-\frac{1}{\varkappa}d d^*)A$. In particular, taking $\varkappa=-1$, we obtain the standard Hodge-de Rham Laplacian as the $A-A$ part of the Hessian.
}
\item We can allow $\Psi$ to contain monomials of higher degree in $c$ and $\bar{c}$. This leads to new vertices in the Feynman rules for the gauge-fixed integral. E.g., if $\Psi$ contains a term $\propto \bar{c}\bar{c} c$ and thus $Q(\Psi)$ contains a term $\propto \bar{c}\bar{c}cc$ leading to the new vertex $$\vcenter{\hbox{\input{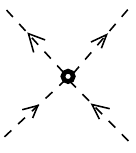tex_t}}}$$
\end{itemize} 
\end{remark}

\begin{remark}
Due to freeness of the $G$-action on $X$, the BRST cohomology $H^\bt_Q\cong H^\bt_{Q_\mr{min}}$ is concentrated in degree zero and $H^0_Q\cong C^\infty(X)^G\cong C^\infty(X/G)$. In this sense, one may say that $(\FF,Q)$ is a \emph{resolution} of the quotient $X/G$.
\end{remark}

\begin{remark}\label{l20_rem_Lie_algbd_symmetry_via_BRST} In the construction of Faddeev-Popov setup cast withing the BRST framework, one can replace the symmetry given by a group action on $X$ by a symmetry given by an (injective) Lie algebroid $E\ra TX$. In this case the infinitesimal symmetry is given by an integrable distribution $\mr{im}(E)\subset TX$ on $X$ and gauge orbits are replaced by the leaves of the foliation on $X$ induced by this distribution via Frobenius theorem. In this case $\FF_\mr{min}=E[1]$ with the corresponding cohomological vector field (see Example \ref{l18_ex_Lie_algbd}). The full space of BRST fields is $\FF=E[1]\oplus E^*[-1]\oplus E^*$ (as a Whitney sum of graded vector bundles over $X$) with the homologically trivial extension of $Q$ to the auxiliary fields.  If the foliation is globally well-behaved (induces a fibration of $X$ over a smooth quotient $X/E$), then, similarly to (\ref{l20_a}) above, one has a comparison theorem \cite{MR09} asserting that the BRST integral equals 
$$(2\pi i)^{\mr{rk}(E)})\int_X \frac{\mu_X}{\mr{Vol}(\lambda_x)}e^{\frac{i}{\hbar}S}$$
where $\mr{Vol}(\lambda_x)$ is the volume of the leaf of the foliation passing through the point of integration  $x\in X$.
\end{remark}

\subsubsection{Remark: reducible symmetries and higher ghosts}\label{sss: higher ghosts}
BRST formalism can be applied to the case when the $G=G^1$ acts on $X$ with a stabilizer $G^2$ -- in this case, in addition to the ghosts of degree (ghost number) $1$ associated to the Lie algebra $\g=\g^1$, one introduces \emph{higher ghosts} for the Lie algebra $\g^2$. It may happen that it is convenient (in order to be compatible with locality structure on the underlying spacetime manifold) to have $G^2$ over-parameterizing the stabilizer of the $G=G^1$-action (i.e. different elements of $G^2$ may correspond to the same element in the stabilizer), then one introduces a second stabilizer $G^3$ and, respectively, new higher ghosts of degree $3$. This process can be iterated further. An example of this situation is the ``$p$-form electrodynamics'' -- a field theory on a Riemannian manifold $M$ with classical fields $X=\Omega^p(M)\ni \alpha$ and action $S=\frac12 \int_M d\alpha\wedge *d\alpha$. We have gauge-symmetry $\alpha\mapsto \alpha+d\beta$ with $\beta\in \Omega^{p-1}(M)=:G=G^1$. Clearly, $G^1$ acts on $X$ non-freely. In particular $\beta\sim \beta+d\gamma$, with $\gamma\in \Omega^{p-2}=:G^2$, correspond to the same gauge transformation\footnote{
Note that $G^2$ fails to parameterize the entire stabilizer of the $G$-action, if de Rham cohomology $H^{d-1}(M)\neq 0$. One solution is to twist the differential forms on $M$ by an acyclic local system. Another way is to allow this discrepancy. It will result in BRST cohomology not being concentrated just in degree zero (however, the degree nonzero cohomology will have \emph{finite rank}).
} etc. We have a tower of (infinitesimal) symmetry
$$ \underbrace{\Omega^0(M)}_{\g^{p}}\ra  \cdots\ra \underbrace{\Omega^{p-2}(M)}_{\g^2}\ra \underbrace{\Omega^{p-1}(M)}_{\g^1}\ra \underbrace{\Omega^p(M)}_{T_\alpha X}$$
-- a truncation of de Rham complex. The minimal BRST resolution in this case is $\FF=\bigoplus_{k=0}^p \Omega^{p-k}(M)[k] \;\ni (\alpha,c^{(1)},\ldots, c^{(p)})$ where field $c^{(k)}\in \Omega^{p-k}(M)$ is the $k$-th ghost (and has ghost number $k$).

\marginpar{\LARGE{Lecture 21, 11/9/2016.}}

\subsection{Odd-symplectic manifolds}(Main reference: \cite{Schwarz}.)

\subsubsection{Differential forms on super (graded) manifolds}
Let $E\ra M$ be a vector bundle and $\Pi E$ the corresponding split supermanifold. Then one defines the space of $p$-forms on $\Pi E$ as
$$\Omega^p(\Pi E)=\Gamma(M,\bigoplus_{j=0}^p \wedge^\bt E^*\otimes \wedge^j T^*M \otimes \Sym^{p-j}E^*)$$
Here the bundle of $p$-forms on the r.h.s. is split according to the base/fiber bi-degree $(j,p-j)$.

More generally, for $\MM$ a supermanifold, one can define $\Omega^\bt(\MM)$ as functions on the parity-shifted tangent bundle $\Pi T\MM$. If $(x^i,\theta^a)$ are local even and odd coordinates on $\MM$, then $\Pi T\MM$ has local coordinates $x^i,\theta^a$ (on the base of the tangent bundle) and $dx^i, d\theta^a$ (on the fiber). Here $x^i, d\theta^a$ are even and $\theta^a, dx^i$ are odd. Also, one prescribes form degree  (or de Rham degree) $0$ to the base coordinates and form degree $1$ to the fiber coordinates. Transition maps between charts on $\Pi T\MM$ are written naturally in terms of transition maps between the underlying charts on $\MM$.

For $\MM$ a $\ZZ$-graded supermanifold, differential forms $\Omega^\bt(\MM)$ have the following three natural gradings:
\begin{enumerate}
\item Form (de Rham) degree $\deg_\mr{dR}$.
\item Internal degree (also called ``grade'') $\mr{gr}$, coming from $\ZZ$-grading of coordinates on $\MM$. In particular, grades of $x^i$ and $dx^i$ are the same, and similarly for $\theta^a$ and $d\theta^a$.
\item Total degree $\deg_\mr{dR}+\mr{gr}$.
\end{enumerate}
By convention, it is the parity of the total degree that governs the signs. 

In $\ZZ$-graded context, we will use notation $\Omega^p(\MM)_k$ for $p$-forms of grade $k$.

\begin{example} $p$-forms on the odd line $\RR^{0|1}$ are functions $f(\theta,x)$ of an odd variable $\theta$ (the coordinate on $\RR^{0|1}$) and even variable $x=d\theta$, which are of degree $p$ in $x$. I.e., $\Omega^p(\RR^{0|1})=\{x^p(a+b\cdot \theta)\;|\;a,b\in\RR\}$. In particular, unlike forms on an ordinary $n$-manifold, whose degree is bounded above by $n$, there are differential forms of arbitrarily large degree $p\geq 0$ on $\RR^{0|1}$!
\end{example}

\subsubsection{Odd-symplectic supermanifolds} 
Let $\MM$ be a supermanifold. 
\begin{definition}
An odd-symplectic structure on $\MM$ is a $2$-form $\omega$ on $\MM$ which is:
\begin{itemize}
\item closed, $d\omega=0$;
\item odd, i.e., in local coordinates $x^i,\theta^a$ on $\MM$ (with $x^i$ even and $\theta^a$ odd) has the form $\sum_{i,a} \omega_{ia}(x,\theta)dx^i\wedge d\theta^a$, with $(\omega_{ia}(x,\theta))$ a matrix of local functions on $\MM$;
\item is \emph{non-degenerate}, i.e., the matrix of coefficients $(\omega_{ia}(x,\theta))$ is invertible.
\end{itemize}
A supermanifold $\MM$ endowed with an odd-symplectic structure $\omega$ is called an \emph{odd-symplectic supermanifold}.
\end{definition}

Note that it follows from non-degeneracy of $\omega$ that the dimension of $\MM$ is $(n|n)$ for some $n$. 

We survey the main properties of odd-symplectic supermanifolds and Lagrangian submanifolds in them without giving proofs. For proofs and details, see \cite{Schwarz}. 

\begin{theorem}[Schwarz, \cite{Schwarz}]\label{l21_thm_Darboux}
Let $(\MM,\omega)$ be an odd-symplectic manifold with body $M$.
\begin{enumerate}[(i)]
\item\label{l21_Darboux_i} In the neighborhood of any point of $M$, one can find local 
coordinates $(x^i,\xi_i)$ on $\MM$, such that $\omega=\sum_i dx^i\wedge d\xi_i$.
\item\label{l21_Darboux_ii} There exists a (global) symplectomorphism\footnote{I.e. an invertible map of supermanifolds, such that the pullback along it intertwines the symplectic forms.} $\phi:(\MM,\omega)\ra (\Pi T^* M,\omega_\mr{stand})$ where $\omega_\mr{stand}$ is the standard (odd-)symplectic structure on the (odd) cotangent bundle, locally written as $\omega_\mr{stand}=\sum_i dx^i\wedge d\xi_i$.
\end{enumerate}
\end{theorem}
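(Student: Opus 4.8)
The plan is to prove the two parts of Theorem~\ref{l21_thm_Darboux} in order, with part (\ref{l21_Darboux_i}) (a Darboux-type normal form) being the technical heart and part (\ref{l21_Darboux_ii}) following as a more or less formal globalization.

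For part (\ref{l21_Darboux_i}), I would run the classical Moser homotopy-method argument adapted to the odd-symplectic setting. First, fix a point $p$ of the body $M$ and choose any local coordinate system $(x^i,\xi_i)$ on $\MM$ in which, \emph{at the point $p$ only}, the odd $2$-form $\omega$ has the constant form $\omega_0 := \sum_i dx^i\wedge d\xi_i$; this is just linear algebra on the tangent space $T_p\MM$ together with the observation (already noted after the definition) that non-degeneracy of an odd $2$-form forces the dimension to be $(n|n)$, so the even coordinates $x^i$ and odd coordinates $\xi_i$ pair up. Then set $\omega_t := \omega_0 + t(\omega - \omega_0)$ for $t\in[0,1]$. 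On a small enough neighborhood of $p$, every $\omega_t$ is non-degenerate (non-degeneracy is an open condition and $\omega_0,\omega$ agree at $p$), closed, and odd. Since $\omega - \omega_0$ is closed and vanishes at $p$, the (super) Poincar\'e lemma gives a $1$-form $\alpha$ with $d\alpha = \omega - \omega_0$ and $\alpha|_p = 0$; here one must be slightly careful because differential forms on a supermanifold can have arbitrarily high de Rham degree, but the Poincar\'e lemma still holds locally on split charts, so this is safe. Define the $t$-dependent vector field $v_t$ by the equation $\iota_{v_t}\omega_t = -\alpha$, which is solvable and unique because $\omega_t$ is non-degenerate; $v_t$ is odd (since $\omega_t$ is odd and $\alpha$ is even-degree in the relevant bookkeeping) and vanishes at $p$, so its flow $\Phi_t$ is defined near $p$ for $t\in[0,1]$ and fixes $p$. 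The Lie-derivative computation $\frac{d}{dt}\Phi_t^*\omega_t = \Phi_t^*(\mathcal L_{v_t}\omega_t + \dot\omega_t) = \Phi_t^*(d\iota_{v_t}\omega_t + (\omega-\omega_0)) = \Phi_t^*(-d\alpha + d\alpha) = 0$ shows $\Phi_1^*\omega = \omega_0$, and then the coordinates pulled back along $\Phi_1$ are the desired Darboux coordinates.

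For part (\ref{l21_Darboux_ii}), the goal is to upgrade the local statement to a global symplectomorphism onto $(\Pi T^*M,\omega_{\mathrm{stand}})$. I would first invoke Batchelor's theorem (stated in the excerpt) to write $\MM \cong \Pi E$ for some rank-$n$ vector bundle $E\to M$; combined with the fact (from part (i)) that the even coordinates can always be taken to be coordinates on $M$ and the odd ones to be fiber-linear, one argues that odd-symplectic non-degeneracy forces the odd directions to be dual to the even ones, i.e.\ $E \cong T^*M$, so $\MM$ is diffeomorphic (as a supermanifold) to $\Pi T^*M$. This gives a diffeomorphism $\psi$ under which $\psi^*\omega_{\mathrm{stand}}$ and $\omega$ are two odd-symplectic structures on $\MM$ that agree in cohomology (both are exact, since on $\Pi T^*M$ every odd $2$-form of this type is $d$ of the tautological $1$-form, and more generally $H^2$ considerations degenerate here). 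Then one applies a global Moser argument along the path $\omega_t = \psi^*\omega_{\mathrm{stand}} + t(\omega - \psi^*\omega_{\mathrm{stand}})$: writing $\omega - \psi^*\omega_{\mathrm{stand}} = d\beta$ globally, defining $v_t$ by $\iota_{v_t}\omega_t = -\beta$, and integrating the flow. One technical point to address is completeness of the flow; as in the ordinary case this is handled either by a partition-of-unity/properness argument or, since everything here is essentially affine in the cotangent fibers, by checking the flow preserves a suitable fibration and integrates explicitly. Composing $\psi$ with the resulting flow at time $1$ yields the claimed $\phi$.

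The main obstacle I anticipate is not the Moser machinery itself — that is structurally identical to the even case — but making the \emph{supergeometric} ingredients airtight: the Poincar\'e lemma for forms of unbounded de Rham degree on a supermanifold, the correct parity and total-degree bookkeeping so that $\iota_{v_t}\omega_t = -\alpha$ really has an odd solution $v_t$ whose flow is well-defined, and the passage from Batchelor's (non-canonical) splitting to the specific identification $E\cong T^*M$ forced by non-degeneracy of $\omega$. Since the statement is attributed to Schwarz and the excerpt explicitly says proofs are omitted (``see \cite{Schwarz}''), I would in practice cite \cite{Schwarz} for the details and only sketch the Moser argument as above, flagging the degree conventions as the spot where care is needed.
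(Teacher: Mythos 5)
The paper offers no proof of this theorem to compare against: it explicitly says these properties of odd-symplectic manifolds are surveyed without proof and defers entirely to Schwarz. So your write-up is a self-contained alternative rather than a reconstruction, and the Moser-method route you choose is a legitimate and by now standard way to prove both parts; the supergeometric ingredients you invoke (local Poincar\'e lemma on a split chart, Batchelor's theorem, the identification $E\cong T^*M$ forced by non-degeneracy of the odd pairing along the body) are all correct and available.

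Two points do need repair. First, the parity of $v_t$: with the total-degree conventions of these notes ($dx^i$ odd, $d\xi_i$ even), a primitive $\alpha$ of the odd $2$-form $\omega-\omega_0$ has \emph{even} total degree, and since $\iota_{v}$ shifts total degree by $|v|+1$, the equation $\iota_{v_t}\omega_t=-\alpha$ forces $v_t$ to be an \emph{even} vector field. This is not a cosmetic issue: an odd vector field has no flow by a real time parameter, so if your parity count were right the Moser argument would not run at all; fortunately the correct count gives an even $v_t$ and the argument survives. Second, in part (\ref{l21_Darboux_ii}) you check that $\omega$ and $\psi^*\omega_{\mathrm{stand}}$ are both exact but never address why the interpolation $\omega_t$ stays non-degenerate \emph{globally} --- unlike the local case you cannot shrink the domain, and two odd-symplectic forms inducing different pairings $T_pM\otimes E_p\ra\RR$ along the body can have a degenerate convex combination. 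The fix is implicit in your own construction: if the identification $E\cong T^*M$ is taken to be the one induced by $\omega$ itself at the body, then $\omega-\psi^*\omega_{\mathrm{stand}}$ vanishes on the body, so its coefficients lie in the nilpotent ideal generated by the odd coordinates; a matrix invertible modulo nilpotents is invertible, so every $\omega_t$ is non-degenerate, and the Moser vector field has nilpotent coefficients, whence its time-$1$ flow is automatically globally defined (polynomial in the odd variables, identity on the body). It is this nilpotency observation, rather than ``affineness in the cotangent fibers,'' that simultaneously closes the non-degeneracy and completeness issues and is the genuinely supergeometric step that has no analogue in the even Moser argument.
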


Here (\ref{l21_Darboux_i}) is the analog of Darboux theorem in odd-symplectic case. As in the ordinary symplectic geometry, one calls local coordinates $(x^i,\xi_i)$ such that $\omega=\sum_i dx^i \wedge d\xi_i$ the \emph{Darboux coordinates}.  The global statement (\ref{l21_Darboux_ii}) is very much unlike the situation of ordinary symplectic geometry: it says that, up to symplectomorphism, all odd-symplectic manifolds are (odd) cotangent bundles.

\begin{definition} A submanifold $\LL$ of an odd-symplectic manifold $(\MM,\omega)$ is called \emph{Lagrangian} if it \emph{maximally isotropic} in $\MM$, i.e., if
\begin{itemize}
\item $\LL$ is isotropic: $\omega|_\LL=0$,
\item $\LL$ is not a proper submanifold of another isotropic submanifold of $\MM$.
\end{itemize}
\end{definition}

A Lagrangian $\LL$ in an $(n|n)$-dimensional odd-symplectic manifold $\MM$ has dimension $(k|n-k)$ for some $0\leq k \leq n$.

\begin{example}[``Conormal Lagrangian'']\label{l21_ex_conormal}
Given a $k$-dimensional submanifold $C$ in an (ordinary) $n$-manifold $M$, we can construct a $(k|n-k)$-dimensional Lagrangian $\LL_C\subset \Pi T^* M$ (with $\Pi T^* M$ equipped with the standard symplectic structure of the cotangent bundle). The Lagrangian $\LL_C$ is constructed as the \emph{odd conormal bundle} (conormal bundle\footnote{
Recall that, for $C\subset M$, the conormal bundle $N^*C\subset (T^*M)|_C$ has the fiber $N^*_x C:= \mr{Ann}(T_x C)=\{\alpha\in T^*_x M\;\mr{s.t.}\; \lan \alpha,v \ran=0 \; \forall v\in T_x C\}$ over a point $x\in C$. Here $\mr{Ann}$ stands for \emph{annihilator} (of the subspace $T_x C\subset T_x M$).
} with reversed parity of conormal fibers) of $C$: 
$$\LL_C=\Pi N^*C \quad \subset \Pi T^* M$$
\end{example}

The following theorem is a direct analog, in odd-symplectic context, of Weinstein's tubular neighborhood theorem in the context of ordinary symplectic manifolds.

\begin{theorem}[Tubular neighborhood theorem in odd-symplectic context, \cite{Schwarz}]
Given a Lagrangian $\LL$ in an odd-symplectic manifold $\MM$, there exists 
\begin{itemize}
\item a tubular neighborhood $U\subset \MM$ (with projection $p:U\ra \LL$) of the Lagrangian $\LL\subset \MM$,
\item a tubular neighborhood $U_0\subset \Pi T^*\LL$ (with projection $p_0: U_0\ra \LL$) of the zero-section $\LL_0\simeq \LL$ of the parity-shifted cotangent bundle $\Pi T^*\LL$ (endowed with the standard odd-symplectic structure of the cotangent bundle),
\item a symplectomorphism $\phi: U \stackrel\sim\ra U_0$,
\end{itemize}
such that $\phi$ sends the Lagrangian $\LL\subset \MM$ to the zero-section of $\Pi T^*\LL$ and intertwines the projections $p$, $p_0$.
\end{theorem}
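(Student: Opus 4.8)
The plan is to mimic, step by step, the standard proof of Weinstein's Lagrangian tubular neighborhood theorem, transplanting each ingredient into the odd-symplectic category. The statement to prove is the tubular neighborhood theorem for a Lagrangian $\LL$ in an odd-symplectic manifold $\MM$: there are tubular neighborhoods $U\subset\MM$ of $\LL$ and $U_0\subset\Pi T^*\LL$ of the zero-section, a symplectomorphism $\phi:U\simra U_0$ carrying $\LL$ to the zero-section and intertwining the bundle projections. I would reduce this to two pieces: first, a \emph{linear/formal} normal-form statement identifying, along $\LL$, the symplectic normal bundle of $\LL$ in $\MM$ with $\Pi T^*\LL$; second, a \emph{Moser-type} deformation argument promoting the first-order identification to an actual symplectomorphism on a neighborhood.

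First I would establish the linear model. Over a point $x\in\LL$, isotropy of $\LL$ together with non-degeneracy of $\omega$ gives a canonical identification of the normal space $T_x\MM/T_x\LL$ with the parity-shift of $(T_x\LL)^*$ via $v\mapsto \iota_v\omega|_{T_x\LL}$ (this is exactly the reason a Lagrangian in an $(n|n)$-manifold has dimension $(k|n-k)$: the normal directions are dual, hence parity-reversed, to $\LL$'s own directions). Globalizing over $\LL$, the normal bundle $N\LL$ is canonically isomorphic to $\Pi T^*\LL$. Now choose any tubular-neighborhood retraction $\MM\supset U\xrightarrow{r}\LL$ (which exists by Batchelor's theorem, since $U$ may be taken split, reducing to the ordinary tubular neighborhood theorem on the body together with a vector-bundle normal-form in the odd fibers); using the exponential-type identification of $N\LL$ with a neighborhood, and the canonical odd-symplectic form $\omega_0:=\omega_{\mathrm{stand}}$ on $\Pi T^*\LL$, I obtain a diffeomorphism $\psi:U\simra U_0\subset\Pi T^*\LL$ which restricts to the identity on $\LL$ and whose derivative along $\LL$ intertwines $\omega$ with $\omega_0$. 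Thus $\omega_1:=\psi^*\omega_0$ and $\omega$ are two odd-symplectic forms on $U$ agreeing to first order along $\LL$.

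Next I would run the Moser trick. Set $\omega_t:=(1-t)\omega+t\,\omega_1$ for $t\in[0,1]$; since $\omega$ and $\omega_1$ agree along $\LL$ (both values and first derivatives), $\omega_t$ is non-degenerate on a possibly smaller tubular neighborhood of $\LL$, uniformly in $t$, and each $\omega_t$ is closed and odd. By the odd Poincar\'e lemma relative to $\LL$ (contracting $U_0$ onto $\LL$ along the cotangent fibers), the closed odd $2$-form $\omega_1-\omega$, which vanishes on $\LL$, is exact: $\omega_1-\omega=d\beta$ with $\beta$ an odd $1$-form vanishing to first order along $\LL$. Using non-degeneracy of $\omega_t$, solve $\iota_{X_t}\omega_t=-\beta$ for a time-dependent vector field $X_t$ (of appropriate degree) vanishing along $\LL$; integrating its flow $\Phi_t$ (which fixes $\LL$ pointwise and exists on a neighborhood for $t\in[0,1]$ because $X_t$ vanishes on $\LL$) gives $\Phi_t^*\omega_t=\omega$, and $\phi:=\psi\circ\Phi_1$ is the desired symplectomorphism from a neighborhood $U$ of $\LL$ to $U_0\subset\Pi T^*\LL$ sending $\LL$ to the zero-section. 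Shrinking $U_0$ and composing with the obvious fiberwise rescaling if necessary arranges that $\phi$ intertwines the projections $p$ and $p_0$.

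The main obstacle I anticipate is purely bookkeeping-in-the-super-category rather than conceptual: making sure that ``non-degeneracy on a tubular neighborhood'' and ``flow exists up to $t=1$'' are justified, since the fibers of $\Pi T^*\LL$ include even nilpotent directions and, as noted in the excerpt, differential forms on a supermanifold can have arbitrarily high degree, so one cannot argue by naive dimension counts; the correct statement is that all the relevant conditions (invertibility of the coefficient matrix of $\omega_t$, solvability for $X_t$) hold on the body by the classical arguments and then extend over the nilpotent fiber directions automatically because invertibility is an open condition detected modulo nilpotents. The other point requiring care is the relative odd Poincar\'e lemma with the correct degree and parity conventions (recall that by convention it is the parity of the total degree that governs signs), but this follows from the standard fiberwise homotopy operator on $\Pi T^*\LL$ exactly as in Theorem~\ref{l21_thm_Darboux}. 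Everything else is a faithful transcription of Weinstein's proof, and I would cite \cite{Schwarz} for the details of the Darboux-type and normal-form inputs already quoted above.
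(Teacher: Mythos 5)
The paper itself gives no proof of this statement -- it explicitly defers to \cite{Schwarz} -- so there is nothing internal to compare against; your proposal must stand on its own, and in outline it does. The Weinstein-plus-Moser strategy transplants correctly: the odd pairing $v\mapsto \iota_v\omega|_{T_x\LL}$ does identify $N\LL\cong \Pi T^*\LL$ canonically, the super tubular neighborhood theorem (via Batchelor plus the classical statement on the body) is available, the relative Poincar\'e lemma holds fiberwise on $\Pi T^*\LL$, and the Moser vector field $X_t$ defined by $\iota_{X_t}\omega_t=-\beta$ is \emph{even} of internal degree $0$ (odd form contracted against odd form), so it genuinely integrates to a flow fixing $\LL$ pointwise. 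Two points deserve more than the assertion you give them. First, the claim that $\psi^*\omega_0$ and $\omega$ agree at points of $\LL$ requires that the complement of $T\LL$ in $T\MM|_\LL$ furnished by your tubular neighborhood be \emph{Lagrangian}, since the fibers of $\Pi T^*\LL$ are Lagrangian for $\omega_0$; in the classical proof this is arranged with a compatible almost complex structure, which is unavailable here, so you should instead correct an arbitrary complement $C\cong\Pi T^*\LL$ by the shear $c\mapsto c-\frac12(\omega^\flat)^{-1}\bigl(\iota_c\omega|_C\bigr)$ valued in $T\LL$, which kills the $C\otimes C$ component of $\omega$ and works verbatim in the odd setting. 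Second, the final ``fiberwise rescaling'' to intertwine the projections is unnecessary and slightly misleading: the projection $p$ on $U\subset\MM$ is not given in advance, so once $\phi$ is built you simply \emph{define} $p:=p_0\circ\phi$, and the intertwining is tautological. With these repairs the argument is complete; it is worth noting that Schwarz's own route is different in flavor -- he first proves the global statement $\MM\cong\Pi T^*M$ and classifies Lagrangians up to global symplectomorphism (the content of the classification theorem quoted just below in the text), from which the local normal form near $\LL$ falls out -- whereas your local Moser argument is more elementary and does not presuppose the global Darboux theorem.
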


The tubular neighborhood theorem above states, essentially, that in the neighborhood of a Lagrangian, the ambient odd-symplectic manifold always looks like (is locally symplectomorphic to) the odd cotangent bundle of the Lagrangian.

\begin{example}[``Graph Lagrangian'']\label{l21_ex_graph} Let $\NN$ be a $(k|n-k)$-supermanifold and $\Psi\in C^\infty(\NN)_\mr{odd}$ an odd function. One has a Lagrangian 
\begin{equation}
\NN_\Psi:=\mr{graph}(d\Psi)\subset \Pi T^* \NN
\end{equation}
Note that $\Pi T^*\NN$ has dimension $(n|n)$. If $X^\alpha$ are the local coordinates on $\NN$ (some of $X^\alpha$s are even and some are odd), we have coordinates $(X^\alpha,\Xi_\alpha)$ on $\Pi T^*\NN$ with parity of the cotangent fiber coordinate $\Xi_\alpha$ opposite to the parity of $X^\alpha$. Then the submanifold $\NN_\Psi$ is given by
$$\Xi_\alpha(X) = \frac{\dd}{\dd X^\alpha}\Psi(X)$$
For $\Psi=0$, $\NN_\Psi$ is the zero-section of $\Pi T^*\NN$. For $\Psi$ nonzero, we get a deformation of the zero-section in the cotangent bundle, given as a graph of the exact 1-form $d\Psi$ on the base.
\end{example}

\begin{theorem}[Classification of Lagrangians, \cite{Schwarz}]\label{l21_thm_Lagr}
\begin{enumerate}[(a)]
\item\label{l21_Lagr_a} Given a Lagrangian $\LL$ in an odd-symplectic manifold $\MM$ with body $M$, there exists a submanifold $C\subset M$ and a symplectomorphism $\phi:\MM\stackrel\sim\ra \Pi T^*M$ such that $\phi$ maps $\LL\subset \MM$ to $\LL_C=\Pi N^* C\subset  \Pi T^* M$ (cf. Example \ref{l21_ex_conormal}). 
\item\label{l21_Lagr_b} A Lagrangian $\LL$ in $\Pi T^* M$ can be obtained from a Lagrangian of the standard form $\LL_C=\Pi N^*C$ for some $C\subset M$, as a graph of $d\Psi$ for some $\Psi\in C^\infty(\LL_C)_\mr{odd}$ (cf. Example \ref{l21_ex_graph}). Here we use the tubular neighborhood theorem to identify $\Pi T^*M$ in the neighborhood of $\LL_C$ with $\Pi T^* \LL_C$.
\end{enumerate}
\end{theorem}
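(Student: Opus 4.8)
The statement to prove is Theorem \ref{l21_thm_Lagr}: part (\ref{l21_Lagr_a}) says any Lagrangian $\LL$ in an odd-symplectic manifold is, up to symplectomorphism of the ambient manifold, a conormal Lagrangian $\LL_C = \Pi N^* C$; part (\ref{l21_Lagr_b}) says every Lagrangian in $\Pi T^* M$ is obtained from such a standard $\LL_C$ by a graph-of-$d\Psi$ deformation. Since the excerpt says Schwarz's proofs are omitted, I will present a structured plan following \cite{Schwarz}.

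\smallskip

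For part (\ref{l21_Lagr_a}), the plan is to combine the two tubular neighborhood theorems already stated. First I would apply the odd-symplectic tubular neighborhood theorem to $\LL \subset \MM$, obtaining a symplectomorphism from a neighborhood of $\LL$ onto a neighborhood of the zero-section in $\Pi T^* \LL$. Then, by Theorem \ref{l21_thm_Darboux}(\ref{l21_Darboux_ii}), $\MM$ itself is globally symplectomorphic to $\Pi T^* M$ where $M$ is the body. The key reduction is then the following: given a $(k|n-k)$-dimensional Lagrangian $\LL$ in $\Pi T^* M$, one identifies its body $C := \mathrm{body}(\LL) \subset M$ (a $k$-dimensional submanifold — this uses that the even part of the Lagrangian has dimension $k$ and sits over $M$), and shows that $\LL$ and $\LL_C = \Pi N^* C$ have symplectomorphic tubular neighborhoods, both being symplectomorphic to $\Pi T^* \LL \cong \Pi T^* C \oplus (\text{odd bundle})$. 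The remaining work is to promote this local-near-$\LL$ symplectomorphism to a global symplectomorphism of $\Pi T^* M$; here I would invoke Theorem \ref{l21_thm_Darboux}(\ref{l21_Darboux_ii}) again together with a Moser-type argument (interpolating between $\omega$ and the pulled-back standard form along a path of odd-symplectic forms agreeing near $\LL$), noting that in odd-symplectic geometry the relevant Moser flow is unobstructed because $H^1$ of the relevant de Rham-type complex behaves well — this is essentially the content of Schwarz's global Darboux theorem and is why the statement is "surprisingly global."

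\smallskip

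For part (\ref{l21_Lagr_b}), I would argue as follows. Start with a Lagrangian $\LL \subset \Pi T^* M$ and let $C = \mathrm{body}(\LL)$, so $\LL_C = \Pi N^* C$ is the "standard" Lagrangian with the same body. Using the odd-symplectic tubular neighborhood theorem applied to $\LL_C$, identify a neighborhood of $\LL_C$ in $\Pi T^* M$ with a neighborhood of the zero-section in $\Pi T^* \LL_C$. Under this identification, $\LL$ becomes a Lagrangian submanifold of $\Pi T^* \LL_C$ that still projects isomorphically onto $\LL_C$ (it is a section of the cotangent projection, since it is $C^1$-close to the zero-section and of the complementary fiber dimension). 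A Lagrangian section of a parity-shifted cotangent bundle is precisely the graph of a closed $1$-form, and since $\Omega^\bullet(\LL_C)$ has trivial first cohomology in the relevant sense — or more simply, because we may work in a contractible tubular neighborhood — this closed $1$-form is exact, say equal to $d\Psi$ for some $\Psi \in C^\infty(\LL_C)_\mathrm{odd}$ (odd because the cotangent fibers of $\Pi T^* \LL_C$ have parity opposite to the base, matching Example \ref{l21_ex_graph}). This exhibits $\LL = \mathrm{graph}(d\Psi)$ as claimed.

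\smallskip

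The main obstacle I anticipate is the globalization step in part (\ref{l21_Lagr_a}): passing from "symplectomorphic near $\LL$" to "symplectomorphic as odd-symplectic manifolds taking $\LL$ to $\LL_C$." This requires the odd-symplectic analog of Moser's trick, and the nontrivial input is that the obstruction group vanishes — that every closed odd $2$-form which vanishes near $\LL$ is $d$ of a $1$-form also vanishing near $\LL$, uniformly enough to integrate the time-dependent vector field. In the odd-symplectic setting this is cleaner than in the even case (the "symplectic form is exact" phenomenon, Theorem \ref{l21_thm_Darboux}(\ref{l21_Darboux_ii})), but writing it carefully requires controlling the flow of the Moser vector field on a possibly non-compact supermanifold; I would either restrict to a tubular neighborhood and then patch, or cite \cite{Schwarz} directly for this technical point. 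The parts involving "a Lagrangian section is a graph of a closed form" and "closed implies exact on a tubular neighborhood" are routine and I would not belabor them.
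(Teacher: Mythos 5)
The paper itself gives no proof of this theorem: it explicitly states that the properties of odd-symplectic manifolds and their Lagrangians are surveyed ``without giving proofs'' and refers to \cite{Schwarz}, so there is no in-paper argument to compare yours against. Your overall strategy --- establish (\ref{l21_Lagr_b}) as a ``Lagrangian section of $\Pi T^*\LL_C$ is the graph of a closed, hence exact, $1$-form'' statement, and obtain (\ref{l21_Lagr_a}) from the global Darboux theorem together with a normalization moving the graph back to the zero section --- is the standard route and is essentially Schwarz's.

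However, two of the justifications you give for key steps are wrong as stated, even though the conclusions are correct. (i) In (\ref{l21_Lagr_b}) you claim $\LL$ projects isomorphically onto $\LL_C$ because it is ``$C^1$-close to the zero-section''; no such hypothesis is available, since $\LL$ is an arbitrary Lagrangian. The correct reason is that the Lagrangian condition forces $\LL$ to be \emph{tangent} to $\LL_C$ along their common body: at $x\in C=\mathrm{body}(\LL)$ the even part of $T_x\LL$ is $T_xC$, and isotropy plus maximality force the odd part to equal $\mathrm{Ann}(T_xC)=N^*_xC$, so $T_x\LL=T_x\LL_C$; because the odd directions are formal, tangency along the body together with the inverse function theorem for supermanifolds already makes the projection $\LL\ra\LL_C$ an isomorphism. (This linear-algebra computation is also the missing ingredient in your part (\ref{l21_Lagr_a}), where you need $\LL\cong\LL_C$ as supermanifolds before you can assert that their tubular neighborhoods are symplectomorphic by a map matching $\LL$ with $\LL_C$.) (ii) ``Closed implies exact because we may work in a contractible tubular neighborhood'' fails: the $1$-form lives on $\LL_C$ itself, whose body $C$ is an arbitrary submanifold, and $H^1_{\mathrm{dR}}(\LL_C)\cong H^1_{\mathrm{dR}}(C)$ need not vanish. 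What saves the argument is parity: that cohomology is represented by ordinary $1$-forms on $C$, which have odd total parity, whereas the $1$-form $\sum_\alpha\Xi_\alpha(X)\,dX^\alpha$ determined by a Lagrangian section is even (which is exactly why its primitive $\Psi$ comes out odd); the obstruction therefore lies in the even part of $H^1_{\mathrm{dR}}(\LL_C)$, which vanishes. Finally, for the globalization in (\ref{l21_Lagr_a}) you do not need a Moser argument: once (\ref{l21_Lagr_b}) is in place, the time-one Hamiltonian flow of $-\pi^*\Psi$ on $\Pi T^*\LL_C$ is the fiberwise translation carrying $\mathrm{graph}(d\Psi)$ to the zero section, and cutting $\Psi$ off outside a small neighborhood of $C$ (which moves neither $\LL$ nor $\LL_C$, both of which have body $C$) extends it to a global symplectomorphism of $\Pi T^*M$.
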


\subsubsection{Odd-symplectic manifolds with a compatible Berezinian. BV Laplacian.}

\begin{definition}\label{l21_def_SP} For $(\MM,\omega)$ an $(n|n)$-dimensional odd-symplectic manifold, a Berezinian $\mu$ on $\MM$ is called \emph{compatible} with $\omega$, if there exists an atlas of Darboux charts $(x^i,\xi_i)$ on $\MM$ such that locally $\mu=d^n x \DD^n\xi$ is the coordinate Berezinian in all charts of the atlas. (We will call the Darboux charts with this property the \emph{special} Darboux charts.) Note that, in particular, this implies that the transition functions between charts are \emph{unimodular}: $\mr{Sdet}\;\frac{\dd(x_\beta,\xi_\beta)}{\dd(x_\alpha,\xi_\alpha)}=1$. In the terminology of Schwarz \cite{Schwarz}, an odd-symplectic manifold $(\MM,\omega)$ endowed with a compatible Berezinian $\mu$ is called an \emph{$SP$-manifold} (where ``$P$-structure'' refers to the odd-symplectic form and ``$S$-structure'' refers to the Berezinian).
\end{definition}

For $(\MM,\omega,\mu)$ an odd-symplectic manifold with a compatible Berezinian, one introduces the odd second order operator $\Delta: C^\infty(\MM)\ra C^\infty(\MM)$, the \emph{Batalin-Vilkovisky Laplacian}, defined locally, in the special Darboux charts of the Definition \ref{l21_def_SP}, as
\begin{equation}\label{l21_Delta}
\Delta=\sum_i \frac{\dd}{\dd x^i} \frac{\dd}{\dd \xi_i}
\end{equation}
Unimodularity of transition functions implies that $\Delta$ is a globally well-defined operator. Also, the BV Laplacian squares to zero,
\begin{equation}
\Delta^2=0
\end{equation}
This follows from a local computation $\Delta^2=\sum_{ij} \frac{\dd}{\dd x^i}\frac{\dd}{\dd x^j} \frac{\dd}{\dd \xi_i}\frac{\dd}{\dd \xi_j}$. Note that the summand changes sign under the transposition $(i,j)\mapsto (j,i)$, therefore the sum over $i,j$ vanishes.

Another way to define the BV Laplacian is as follows. Let $(\MM,\omega)$ be an odd-symplectic manifold and $\mu$ -- \emph{any} Berezinian on $\MM$. Define the operator $\Delta_\mu: C^\infty(\MM)\ra C^\infty(\MM)$ by setting
\begin{equation}\label{l21_Delta_mu}
\Delta_\mu(f):=\frac12 \mr{div}_\mu X_f
\end{equation} 
where $X_f\in \mathfrak{X}(\MM)$ is the \emph{Hamiltonian vector field} generated by the Hamiltonian $f$, defined by the equation 
$$\iota_{X_f}\omega=d f$$
For $f,g\in C^\infty(\MM)$, one defines the \emph{odd Poisson bracket} (also known as the \emph{anti-bracket}), similarly to Poisson bracket in ordinary symplectic geometry, as
\begin{equation}\label{l21_Poisson1}
\{f,g\}:=X_f(g)
\end{equation}
Locally, in a Darboux chart $(x^i,\xi_i)$ on $\MM$, assuming that the Berezinian has local form $\mu=\rho(x,\xi)\;d^n x \;\DD^n\xi$ with $\rho$ a local density function, we have:
\begin{equation}
\Delta_\mu (f)= \sum_i \frac{\dd}{\dd x^i}\frac{\dd}{\dd \xi_i} f+\frac12 \{\log\rho,f\}
\end{equation}
And the local form of the odd Poisson bracket is:
\begin{equation}\label{l21_Poisson2}
\{f,g\}=\sum_i f\left( \frac{\ola \dd}{\dd x^i}\frac{\ora \dd}{\dd \xi_i} - \frac{\ola \dd}{\dd \xi_i}\frac{\ora \dd}{\dd x^i} \right) g
\end{equation}

The BV Laplacian $\Delta_\mu$, as defined by (\ref{l21_Delta_mu}), does not automatically square to zero. Rather, it squares to zero, $\Delta_\mu^2=0$ if and only if the Berezinian $\mu$ is compatible with $(\MM,\omega)$, in the sense of Definition \ref{l21_def_SP}. And in this case, $\Delta_\mu$ coincides with the BV operator (\ref{l21_Delta}) defined in the special Darboux charts.

\begin{remark}\label{l21_rem_compatible_Ber} A straightforward local computation shows that the operator (\ref{l21_Delta_mu}) squares to zero iff $\sum_i\frac{\dd}{\dd x^i}\frac{\dd}{\dd \xi_i}\sqrt\rho=0$. This is also turns out to be the necessary and sufficient condition for a special local Darboux chart to exist.
\end{remark}

\subsubsection{BV integrals. Stokes' theorem for BV integrals.}

Note that, for $M$ an $n$-manifold, the Berezin line bundle of the odd cotangent bundle $\mr{Ber}(\Pi T^*M)$, as a line bundle over $M$, is canonically identified  with the tensor square of the bundle of volume forms on $M$, i.e.,  
\begin{equation}\label{l21_Ber_0}
\mr{Ber}(\Pi T^*M)\cong(\wedge^n T^*M)^{\otimes 2}
\end{equation} Similarly, for $\NN$ a supermanifold, one has 
\begin{equation}\label{l21_Ber}
\mr{Ber}(\Pi T^*\NN)|_\NN \cong \mr{Ber}(\NN)^{\otimes 2} 
\end{equation}
Here we understand $\mr{Ber}(\NN)$ as a line bundle over $\NN$ and the l.h.s. is a pullback of a line bundle over $\Pi T^* \NN$ to $\NN$. 

In particular, (\ref{l21_Ber}) implies that there is a canonical map sending Berezinians $\mu$ on $\Pi T^*\NN$ to Berezinians ``$\sqrt{\mu|_\NN}$'' on $\NN$. Locally, if $X^\alpha$ are local coordinates on $\NN$ (of even and odd parity), and $(X^i,\Xi_i)$ the respective Darboux coordinates on $\Pi T^*\NN$, a Berezinian $\mu=\rho(X,\Xi)\; \DD X\, \DD\Xi$  on $\Pi T^*\NN $ is mapped to a Berezinian $\sqrt{\mu|_\NN}:=\sqrt{\rho(X,0)}\; \DD X$ on $\NN$.

For $(\MM,\omega,\mu)$ an odd-symplectic manifold with a compatible Berezinian, a \emph{BV integral} is an integral of the form
\begin{equation}
\int_{\LL\subset \MM} f\;\sqrt{\mu|_\LL}
\end{equation}
with $\LL$ a Lagrangian submanifold of $\MM$ and $f\in C^\infty(\MM)$ a function satisfying $\Delta_\mu f=0$.

\begin{theorem}[Stokes'theorem for BV integrals, Batalin-Vilkovisky-Schwarz, \cite{Schwarz}]\label{l21_thm_BV_Stokes}
Let $(\MM,\omega,\mu)$ be an odd-symplectic manifold with compact\footnote{Compactness condition can be dropped, but then one has to request that the integrals converge.} body endowed with a compatible Berezinian.
\begin{enumerate}[(i)]
\item For any $g\in C^\infty(\MM)$ and $\LL\subset \MM$ a Lagrangian submanifold, one has
\begin{equation}\label{l21_Stokes_i}
\int_\LL \Delta_\mu g\;\sqrt{\mu|_\LL}=0
\end{equation}
\item\label{l21_thm_BV_Stokes_ii} Let $\LL$ and $\LL'$ be two Lagrangian submanifolds whose bodies are homologous cycles in the body of $\MM$ and let $f\in C^\infty(\MM)$ be a function satisfying $\Delta_\mu f=0$. Then the following holds:
\begin{equation}\label{l21_Stokes_ii}
\int_{\LL} f\;\sqrt{\mu|_\LL} = \int_{\LL'} f\;\sqrt{\mu|_{\LL'}}
\end{equation}
\end{enumerate}
\end{theorem}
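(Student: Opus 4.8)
The plan is to prove part (i) by reducing it to a local computation in suitably adapted Darboux coordinates, and then to deduce part (ii) from (i) by a deformation argument.

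For (i): By \cite{Schwarz}, which provides a joint refinement of the odd Darboux theorem \ref{l21_thm_Darboux}(i) and the classification of Lagrangians \ref{l21_thm_Lagr}(a), one can cover a neighbourhood of $\LL$ in $\MM$ by \emph{special} Darboux charts with coordinates $(x^1,\dots,x^n,\xi_1,\dots,\xi_n)$ in which simultaneously $\omega=\sum_i dx^i\wedge d\xi_i$, the compatible Berezinian is the coordinate one $\mu=d^nx\,\DD^n\xi$ (so $\Delta_\mu=\sum_i\frac{\dd}{\dd x^i}\frac{\dd}{\dd\xi_i}$, as in (\ref{l21_Delta})), and $\LL$ is cut out by $\xi_i=0$ for $i\le k$ and $x^i=0$ for $i>k$; then $\LL$ carries the coordinates $(x^1,\dots,x^k,\xi_{k+1},\dots,\xi_n)$ and $\sqrt{\mu|_\LL}$ is the associated coordinate Berezinian on $\LL$. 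Choosing a partition of unity on $\LL$ subordinate to such a cover, extending its members to functions on a neighbourhood of $\LL$ whose sum equals $1$ near $\LL$, each compactly supported inside one chart, and using that $\Delta_\mu$ is a differential operator evaluated only on $\LL$, one reduces (i) to the case of $g$ supported in a single adapted chart. There $\Delta_\mu g|_\LL=\sum_{i\le k}\frac{\dd}{\dd x^i}\big((\frac{\dd}{\dd\xi_i}g)|_\LL\big)+\sum_{i>k}\frac{\dd}{\dd\xi_i}\big((\frac{\dd}{\dd x^i}g)|_\LL\big)$ (each outer derivative commuting with restriction since $x^i$, resp. $\xi_i$, is independent of the coordinates set to zero); the $i\le k$ terms are total $x^i$-derivatives of functions on $\LL$, so integrate to $0$, and the $i>k$ terms are $\frac{\dd}{\dd\xi_i}$-derivatives with respect to an odd coordinate on $\LL$, so have vanishing Berezin integral. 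Summing over $i$ and over the partition of unity gives (\ref{l21_Stokes_i}).

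For (ii): First I would prove a local deformation lemma. If $\LL'$ lies in a tubular neighbourhood of $\LL$, then by the odd tubular neighbourhood theorem of \cite{Schwarz} (upgraded to carry $\mu$ to a compatible Berezinian, so that special Darboux coordinates $(X^\alpha,\Xi_\alpha)$ with $\mu=\DD X\,\DD\Xi$ and $\LL=\{\Xi=0\}$ are available) together with \ref{l21_thm_Lagr}(b), $\LL'$ is a graph Lagrangian $\LL_\Psi=\mr{graph}(d\Psi)$ for an odd function $\Psi\in C^\infty(\LL)$ (Example \ref{l21_ex_graph}). Interpolating by $\LL_{t\Psi}=\{\Xi=t\,\frac{\dd}{\dd X}\Psi\}$, the shift $\Xi\mapsto\Xi-t\,\frac{\dd}{\dd X}\Psi$ has unit superdeterminant, so $\int_{\LL_{t\Psi}}f\,\sqrt{\mu|_{\LL_{t\Psi}}}$ equals the integral over $\LL$ of $f(X,t\,\frac{\dd}{\dd X}\Psi(X))$ against the coordinate Berezinian in the $X$-coordinates. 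Differentiating in $t$, integrating by parts in $X^\alpha$, using $\Delta_\mu f=\sum_\alpha\frac{\dd}{\dd X^\alpha}\frac{\dd}{\dd\Xi_\alpha}f=0$, and the vanishing of $\sum_{\alpha,\beta}(\frac{\dd}{\dd X^\alpha}\frac{\dd}{\dd X^\beta}\Psi)(\frac{\dd}{\dd\Xi_\beta}\frac{\dd}{\dd\Xi_\alpha}f)$ by the Koszul-sign antisymmetry responsible for $\Delta_\mu^2=0$, one obtains $\frac{d}{dt}\int_{\LL_{t\Psi}}f\,\sqrt{\mu|_{\LL_{t\Psi}}}=0$ (equivalently this derivative is $\int_\LL\Delta_\mu(\cdots)\sqrt{\mu|_\LL}$, which vanishes by (i)). Then I would globalize: given $\LL,\LL'$ with homologous bodies, bring each into conormal form $\Pi N^*C$, $\Pi N^*C'$ via \ref{l21_thm_Lagr}(a), so that it suffices to show $\int_{\Pi N^*C}f\,\sqrt{\mu|}$ depends only on $[C]\in H_\bullet(M)$ and on $[f]\in H(\Delta_\mu)$; under the identification $\mr{Ber}(\Pi T^*M)\cong(\wedge^n T^*M)^{\otimes2}$ of (\ref{l21_Ber_0}), $\Delta_\mu$ becomes (via a volume form) the de Rham differential on forms and the integral becomes the pairing $\langle[C],[\omega_f]\rangle$, manifestly an invariant. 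Chaining the elementary graph-Lagrangian moves then yields (\ref{l21_Stokes_ii}).

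The step I expect to be the main obstacle is the global deformation argument in (ii): making precise that the homology hypothesis on the bodies is exactly what lets one connect $\LL$ to $\LL'$ through the elementary moves together with the normal-form reductions, and checking that all symplectomorphisms involved can be chosen to preserve the compatible Berezinian $\mu$ (i.e.\ realized within special Darboux / $SP$-charts, cf.\ Remark \ref{l21_rem_compatible_Ber}). These are precisely the points where one needs Schwarz's structural results on $SP$-manifolds rather than the bare Darboux and Lagrangian-classification statements, and in a full write-up the bulk of the effort would go into that patching and into the sign bookkeeping in the deformation lemma; I would cite \cite{Schwarz} for the refined normal forms.
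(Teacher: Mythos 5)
Your argument is correct, and part (ii) is essentially the paper's proof: the same infinitesimal computation $\frac{d}{dt}\int_{\LL_{t\Psi}}f\,\sqrt{\mu|_{\LL_{t\Psi}}}=\int_{\LL}\Delta_\mu(f\Psi)\sqrt{\mu|_\LL}=0$ for graph deformations, combined with the reduction to conormal Lagrangians and the observation that for $\LL_C=\Pi N^*C$ the BV integral becomes the de Rham pairing $\langle [C],[\omega_f]\rangle$, which only sees the homology class of the body. Where you genuinely diverge is part (i): the paper first passes globally to $\Pi T^*M$ and introduces the odd fiberwise Fourier transform $OFT$ (with the density factor $\sqrt{\rho}$ built in), which converts $\Delta_\mu$ into the de Rham operator and the BV integral over $\LL_C$ into $\int_C\widetilde{g}$, so that (\ref{l21_Stokes_i}) is literally the ordinary Stokes theorem; you instead do a local computation in charts adapted to $\LL$, with a partition of unity, showing term by term that $\Delta_\mu g|_\LL$ is a sum of total even derivatives (killed by compactness) and odd derivatives (killed by Berezin integration). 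Your route is more elementary and avoids the global identification $\MM\cong\Pi T^*M$, but it pays for this by requiring a stronger normal form than the theorems you cite actually state: you need Darboux charts that are simultaneously \emph{special} for $\mu$ (coordinate Berezinian) \emph{and} adapted to $\LL$ (conormal form), whereas the paper only needs the bare Darboux and Lagrangian-classification statements because the $OFT$ carries along an arbitrary compatible density $\rho$ and uses compatibility only through $\Delta\sqrt{\rho}=0$. You correctly flag this as the point needing Schwarz's refined $SP$-structure results; alternatively, you could keep $\rho$ general in your chart, include the extra term $\frac12\{\log\rho,g\}$ in $\Delta_\mu$ and the factor $\sqrt{\rho|_\LL}$ in the restricted half-density, and verify the cancellation directly --- slightly more bookkeeping, but it closes the gap without appealing to the stronger normal form.
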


\begin{proof}[Idea of proof]
By (\ref{l21_Darboux_ii}) of Theorem \ref{l21_thm_Darboux}, without loss of generality we can assume $\MM=\Pi T^* M$ for $M$ an ordinary $n$-manifold. One introduces the \emph{odd (fiberwise) Fourier transform}  
$$OFT: C^\infty(\Pi T^* M) \stackrel\sim\ra C^\infty(\Pi TM)$$
In local coordinates $(x^i,\xi_i)$ on $\Pi T^*M$ and $(x^i,\theta^i)$ on $\Pi TM$, assuming that the Berezinian $\mu$ has form $\mu=\rho(x,\xi)\, d^n x\, \DD^n\xi$, the odd Fourier transform acts as follows:
$$ f(x,\xi)\mapsto \til f(x,\theta):=\int_{\Pi T^*_x M} \sqrt{\rho(x,\xi)}\;\DD^n\xi\; e^{\lan \theta,\xi\ran} f(x,\xi)  $$
For example, in the special case when $\mu=\nu^{\otimes 2}$ for $\nu\in \Omega^n(M)$ a top form, the odd Fourier transform simply maps polyvectors $\alpha\in \mc{V}^\bt(M)\cong C^\infty(\Pi T^* M) $ to differential forms $\iota_\alpha \nu\in \Omega^{n-\bt}(M)\cong C^\infty(\Pi TM)$ via contraction with the top form $\nu$.

The odd Fourier transform maps the BV Laplacian $\Delta_\mu$ on $C^\infty(\Pi T^*M)$ to the de Rham operator $d$ on $\Omega^\bt(M)\cong C^\infty(\Pi TM)$, i.e., $OFT\circ \Delta_\mu=d\circ OFT$.

Consider $\LL=\LL_C=\Pi N^* C$ the Lagrangian of Example \ref{l21_ex_conormal}, for $C\subset M$ a closed submanifold. Then the BV integral is
$$\int_{\LL_C}f\;\sqrt{\mu|_\LL}= \int_C \til f$$
where on the r.h.s. we have an integral of a differential form $\til{f}=OFT(f)$ on $M$ over the submanifold $C\subset M$.

Restricting to Lagrangians of form $\LL_C$, (\ref{l21_Stokes_i}) and (\ref{l21_Stokes_ii}) follow from the usual Stokes' theorem on $M$:
$$\int_{\LL_C} \Delta_\mu g\; \sqrt{\mu|_\LL}=\int_C d \til{g} =0$$
and
$$\int_{\LL_{C'}}f\; \sqrt{\mu|_{\LL_{C'}}}- \int_{\LL_{C}}f\; \sqrt{\mu|_{\LL_{C}}}=\left(\int_{C'}-\int_C\right) \til f= \int_D \til f $$
where $D\subset M$ is a submanifold with boundary $\dd D=C'-C$; to apply Stokes' theorem here, we used that $\til f$ is a closed form on $M$ which follows from the assumption $\Delta_\mu f=0$.

For general Lagrangians in $\Pi T^*M$, one can reduce to the case of Lagrangians of form $\LL_C$ using (\ref{l21_Lagr_a}) of Theorem \ref{l21_thm_Lagr} for (\ref{l21_Stokes_i}). For (\ref{l21_Stokes_ii}), one reduces to Lagrangians of form $\LL_C$ using (\ref{l21_Lagr_b}) of Theorem \ref{l21_thm_Lagr} together with the following calculation. Let $\LL_t$ be a smooth family of Lagrangians in $\MM$ with $t\in [0,1]$ a parameter, such that $\LL_{t+\epsilon}=\mr{graph}(\epsilon\cdot d\Psi_t+ O(\epsilon^2)
)$ (cf. Example \ref{l21_ex_graph}) for $\Psi_t\in C^\infty(\LL_t)$. Then, for $f\in C^\infty(\MM)$ satisfying $\Delta_\mu f=0$, one has 
$$\frac{d}{dt} \int_{\LL_t} f\;\sqrt{\mu|_{\LL_t}}=\int_{\LL_t}\Delta(f\cdot \Psi_t)\;\sqrt{\mu|_{\LL_t}}\qquad =0$$
which vanishes by (\ref{l21_Stokes_i}). Thus, we can take $\LL_t$ to be a family connecting a given Lagrangian $\LL\subset \Pi T^* M$ with a Lagrangian of form $\LL_C$. Such a family exists by (\ref{l21_Lagr_b}) of Theorem \ref{l21_thm_Lagr} and the value of the BV integral is constant along this family by the calculation above.

\end{proof}

\begin{remark} In this Subsection we were focusing on the case of supermanifolds. In the setting of $\ZZ$-graded supermanifolds, the convention is that an odd-symplectic form $\omega$ has internal degree (grade) $-1$, so that the odd Poisson bracket and the BV Laplacian $\Delta$ have degree $+1$.
\end{remark}

\begin{definition}\label{l21_def_Lagr_homotopy} We say that two Lagrangians $\LL$ and $\LL'$ in an odd-symplectic manifold $(\FF,\omega)$ are \emph{homotopic as Lagrangians} (or \emph{Lagrangian-homotopic}) if there exists a smooth family $\LL_t$, $t\in[0,1]$, of Lagrangians in $(\FF,\omega)$ (the \emph{Lagrangian homotopy}) connecting $\LL$ and $\LL'$. Then we denote $\LL\sim \LL'$. 
\end{definition}

\begin{remark} More generally, since the main reason we are interested in homotopic Lagrangians is because they yield same values for the BV integral of a $\Delta$-cocycle, one can replace notion of homotopy of Lagrangians above by the (weaker) equivalence relation of (\ref{l21_thm_BV_Stokes_ii}) of Theorem \ref{l21_thm_BV_Stokes} -- the condition of having homologous bodies.
\end{remark}

\marginpar{\LARGE{Lecture 22, 11/14/2016.}}

\subsection{Algebraic picture: BV algebras. Master equation and canonical transformations of its solutions}
\subsubsection{BV algebras}
BV algebras are an algebraic counterpart of odd-symplectic manifolds with a compatible Berezinian. (And Gerstenhaber algebras are the counterpart of odd-symplectic manifolds without a distinguished Berezinian.)

\begin{definition}
A \emph{BV algebra} is a unital commutative graded algebra $\mc{A}^\bt,\cdot$ over $\RR$ (the dot stands for the graded-commutative product) endowed with 
\begin{itemize}
\item A degree $+1$ Poisson bracket $\{-,-\}:\mc{A}^j\otimes \mc{A}^k\ra \mc{A}^{j+k+1}$ satisfying
\begin{itemize}
\item skew-symmetry:\footnote{The mnemonic rule for signs is that the \emph{comma} in $\{-,-\}$ carries degree $+1$, and one accounts for that via the Koszul sign rule when pulling graded objects to the left/right slot of the Poisson bracket.} $\{x,y\}=-(-1)^{(|x|+1)\,(|y|+1)} \{y,x\}$,
\item Leibniz identity (in first and second slot): 
\begin{equation}\label{l22_Poisson_Leibniz}
\{x,yz\}=\{x,y\}z+(-1)^{(|x|+1)\,|y|} y\{x,z\},\qquad \{xy,z\}=x\{y,z\}+(-1)^{|y|\, (|z|+1)}\{x,z\}y
\end{equation}
\item Jacobi identity:
$$ \{x,\{y,z\}\}=\{\{x,y\},z\}+(-1)^{(|x|+1)\, (|y|+1)} \{y,\{x,z\}\} $$
\end{itemize}
\item In addition, $\mc{A}^\bt$ should carry a BV Laplacian -- an  $\RR$-linear map $\Delta:\mc{A}^j\ra \mc{A}^{j+1}$ satisfying
\begin{itemize}
\item $\Delta^2=0$,
\item $\Delta(1)=0$ (with $1$ the unit in $\mc{A}^\bt$),
\item second order Leibniz identity
\begin{equation}\label{l22_7-term-rel}
\Delta(xyz)\pm \Delta(xy)z\pm \Delta(xz)y\pm \Delta(yz)x\pm \Delta(x)yz\pm \Delta(y)xz\pm \Delta(z)xy=0 
\end{equation}
\item Poisson bracket arises as the ``defect'' of the first order Leibniz identity for $\Delta$:
\begin{equation}\label{l22_Delta_Leibniz}
\Delta (x y)=\Delta x\cdot y + (-1)^{|x|}x\cdot \Delta y+(-1)^{|x|}\{x,y\}
\end{equation}
\end{itemize}
\end{itemize}
\end{definition}

\begin{remark}
\begin{enumerate}
\item The  defining relations of a BV algebra given above are interdependent. E.g., the second order Leibniz identity (\ref{l22_7-term-rel}) for $\Delta$ follows from (\ref{l22_Delta_Leibniz}) and the fact that $\{,\}$ is a bi-derivation of the commutative product (\ref{l22_Poisson_Leibniz}).
\item Forgetting $\Delta$, the structure $(\mc{A},\cdot,\{,\})$ is the structure of a \emph{Gerstenhaber algebra} (or ``degree $+1$ Poisson algebra'', or  ``$\mc{P}_0$ algebra'').
\item Forgetting the commutative product and shifting the grading on $\mc{A}$ by $1$, we get a differential graded Lie algebra $\mc{A}[1],\{,\},\Delta$. The fact that $\Delta$ is a derivation of $\{,\}$, i.e. that 
\begin{equation}\label{l22_Delta_Poisson_Leibniz}
\Delta \{x,y\}=\{\Delta x,y\}+ (-1)^{|x|+1}\{x,\Delta y\}
\end{equation}
is a consequence of the relations of a BV algebra. 
\end{enumerate}
\end{remark}

\begin{example}\label{l22_ex_polyvect} Let $M$ be an $n$-manifold and $\nu$ a volume form on $M$. We construct 
\begin{itemize}
\item $\mc{A}^{-j}:=\mc{V}^j(M)=\Gamma(M,\wedge^j TM)$ -- polyvector fields on $M$ with reverse grading. The graded-commutative product on $\mc{A}$ is the wedge product of polyvectors. 
\item The Poisson bracket $\{,\}:=[,]_{NS}:\mc{V}^k\otimes \mc{V}^k\ra\mc{V}^{j+k-1}$ is the Nijenhuis-Schouten bracket of polyvectors (the Lie bracket of vector fields extended to polyvector fields via Leibniz identity). 
\item The BV Laplacian is the divergence w.r.t. top form $\nu$, $\Delta=\mr{div}_\nu: \mc{V}^j\ra \mc{V}^{j-1}$. For $j=1$ this is the ordinary divergence of a vector field, and one extends to polyvectors ($j>1$) by imposing the relation (\ref{l22_Delta_Leibniz}).
\end{itemize}
\end{example}

\begin{example}[Main example]\label{l22_ex_odd-symp} Let $(\MM,\omega,\mu)$ be an odd-symplectic $\ZZ$-graded supermanifold with a compatible Berezinian. 
\begin{itemize}
\item We set $\mc{A}^\bt:=C^\infty(\MM)_\bt$ as a commutative graded algebra. 
\item We set $\{,\}$ to be the degree $+1$ Poisson bracket (\ref{l21_Poisson1},\ref{l21_Poisson2}) induced by the odd-symplectic form $\omega$,  $\{f,g\}=X_f(g)$. 
\item The BV Laplacian is defined to be the standard BV Laplacian (\ref{l21_Delta_mu}) on an odd-symplectic manifold with a compatible Berezinian, $\Delta_\mu f=\frac12 \mr{div}_\mu X_f$.
\end{itemize}
\end{example}

Note that Example \ref{l22_ex_polyvect} is a special case of the Example \ref{l22_ex_odd-symp}, corresponding to $\MM=T^*[-1]M$ with the standard symplectic structure of the cotangent bundle, and with $\mu=\nu^{\otimes 2}$, cf. (\ref{l21_Ber_0}).

\subsubsection{Classical and quantum master equation}

Given a BV algebra $(\mc{A}^\bt,\cdot,\{,\},\Delta)$, we say that an element $S\in \mc{A}^0$ satisfies \emph{classical master equation} (CME) if
\begin{equation}\label{l22_CME}
\{S,S\}=0
\end{equation}
Note that, unlike in an ordinary Poisson algebra, this equation is not tautological: $\{S,S\}$ does not vanish automatically by skew-symmetry of the Poisson bracket $\{,\}$.

Given a solution $S$ of classical master equation, one can construct $Q:=\{S,\bt\}\;\in \mr{Der}_1 \mc{A}^\bt$ -- a degree $1$ derivation which, as a consequence of (\ref{l22_CME}), satisfies $Q^2=0$. In the case of $\mc{A}^\bt$ being the algebra of functions on an odd-symplectic manifold $\MM$ (Example \ref{l22_ex_odd-symp}), the derivation $Q\in \mathfrak{X}(\MM)_1$ is a cohomological vector field on $\MM$ arising as the Hamiltonian vector field with Hamiltonian $S\in C^\infty(\MM)_0$ solving the classical master equation.

An element $S=S^{(0)}+(-i\hbar) S^{(1)}+(-i\hbar)^2 S^{(2)}+\cdots\;\; \in \mc{A}^0[[-i\hbar]]$, with $\hbar$ a formal parameter, is said to satisfy the \emph{quantum master equation} (QME) if the following holds
\begin{equation}\label{l22_QME_MC}
\frac12 \{S,S\}-i\hbar \Delta S =0 
\end{equation}

In the case when $\hbar$ can be inverted (e.g. if $S$ as a series in $\hbar$ has nonzero convergence radius and thus $\hbar$ can be taken to be finite), quantum master equation (\ref{l22_QME_MC}) can be equivalently written\footnote{
This can be seen, e.g., from the following calculation. For $x\in \mc{A}^0$, we have $\Delta\, x^n= n x^{n-1}\Delta x+ \frac{n(n-1)}{2}x^{n-2}\{x,x\}$ (proven by induction in $n$ using (\ref{l22_Delta_Leibniz})). Therefore, $\Delta\, e^x=\Delta \left(\sum_{n=0}^\infty \frac{x^n}{n!}\right)=(\Delta x+\frac12 \{x,x\})e^x$. Substituting $x=\frac{i}{\hbar}S$, we obtain $\Delta\, e^{\frac{i}{\hbar}S}=(-i\hbar)^{-2}\left(\frac12 \{S,S\}-i\hbar\Delta\, S\right)e^{\frac{i}{\hbar}S}$. This proves equivalence of (\ref{l22_QME_MC}) and (\ref{l22_QME_exp_form}).
} as 
\begin{equation}\label{l22_QME_exp_form}
\Delta\; e^{\frac{i}{\hbar}S}=0
\end{equation}

In terms of the coefficients $S^{(0)}, S^{(1)},\ldots$ of the expansion of $S$ in powers of $-i\hbar$, the quantum master equation (\ref{l22_QME_MC}) is equivalent to a sequence of equations:
\begin{eqnarray}
\{S^{(0)},S^{(0)}\} &=& 0 \\
\{S^{(0)},S^{(1)}\} + \Delta S^{(0)} &=& 0 \label{l22_QME_e2} \\
\{S^{(0)},S^{(2)}\}+\frac12 \{S^{(1)},S^{(1)}\}+\Delta S^{(1)} &=& 0 \label{l22_QME_e3}
\end{eqnarray}
etc. In particular, the leading term $S^{(0)}$ of the $\hbar$-expansion of a solution of QME satisfies the classical master equation. 

Given a solution $S^{(0)}$ of CME one may ask whether it can be extended to a solution of QME by adding terms proportional to powers of $\hbar$. 
Then, to find the first $\hbar$-correction, we need to solve (\ref{l22_QME_e2}). It is solvable iff the class of $\Delta S^{(0)}$ in degree $1$ cohomology of $Q=\{S^{(0)},\bt \}$ vanishes (note that $\Delta S^{(0)}$ is automatically $Q$-closed, as follows from CME for $S^{(0)}$ and from (\ref{l22_Delta_Poisson_Leibniz})). If $\Delta S^{(0)}$ is indeed $Q$-exact, we can choose the primitive $-S^{(1)}$ which solves (\ref{l22_QME_e2}) and gives the first $\hbar$-correction. Next, we look for the second correction, quadratic in $\hbar$. Equation (\ref{l22_QME_e3}) is solvable for $S^{(2)}$ iff the class of $\frac12 \{S^{(1)},S^{(1)}\}+\Delta S^{(1)}$ in $H^1_Q$ vanishes (again, this element is automatically $Q$-closed). And this process goes on: at each step we have a possible obstruction in $H^1_Q$; if the obstruction vanishes, we can construct the next term in $\hbar$-expansion of the solution of QME. If the obstructions at all steps vanish, we can construct the full extension of $S^{(0)}$ to a solution of QME by incorporating the appropriate corrections in powers of $\hbar$.

\subsubsection{Canonical transformations}
\begin{definition}\label{l22_def_can_transf}
Given two solutions of QME, $S,S'\in \mc{A}^0[[-i\hbar]]$, we say that $S$ and $S'$ are \emph{equivalent} (notation: $S\sim S'$) 
if there exists a \emph{canonical BV transformation} -- a family $S_t\in \mc{A}^0[[-i\hbar]]$, $R_t\in \mc{A}^{-1}[[-i\hbar]]$ parameterized by $t\in [0,1]$, such that
$S_0=S$ and $S_1=S'$, and the following equation holds:
\begin{equation}\label{l22_can_transf}
\frac{d}{dt}S_t =\{S_t,R_t\}-i\hbar \Delta R_t
\end{equation}
$R_t$ is called the \emph{generator} of the canonical BV transformation.
\end{definition}

\begin{remark}\label{l22_rem_can_transf}
Equation (\ref{l22_can_transf}) together with the fact $S=S_0$ satisfies QME implies that $S_t$ satisfies QME 
\begin{equation}\label{l22_QME_S_t}
\frac12 \{S_t,S_t\}-i\hbar\Delta S_t=0
\end{equation}
Indeed, taking the derivative in $t$ of (\ref{l22_QME_S_t}), we get $\delta_t (\frac{d}{dt}S_t)=0$ where the $t$-dependent differential $\delta_t:=\{S_t,\bt\}-i\hbar\Delta$ squares to zero due to the QME on $S_t$. On the other hand, (\ref{l22_can_transf}) reads $\frac{d}{dt}S_t=\delta_t (R_t)$ (i.e. improves $\delta_t$-closedness of $\frac{d}{dt}S_t$ to $\delta_t$-exactness). In particular, (\ref{l22_can_transf}) implies that $\frac{d}{dt}$ of the QME (\ref{l22_QME_S_t}) vanishes at time $t$ if QME is known to hold at time $t$. Therefore QME  for $S_t$ implies that QME for $S_{t+\epsilon}$ is satisfied up to $O(\epsilon^2)$. And this implies (via subdividing shift $t\ra t+\epsilon$ into $N$ shifts of length $\epsilon/N$ and taking the limit $N\ra\infty$) that, in fact, if (\ref{l22_QME_S_t}) holds at any time $t$ and (\ref{l22_can_transf}) holds for all times, then (\ref{l22_QME_S_t}) holds for all times.
\end{remark}

\begin{remark}
Equations (\ref{l22_can_transf},\ref{l22_QME_S_t}) together imply that
$$\frac{d}{dt}\, e^{\frac{i}{\hbar}S_t}=\Delta\left(-i\hbar\; e^{\frac{i}{\hbar}S_t}R_t\right) $$
Thus, in particular, if $S\sim S'$, the difference of the exponentials is $\Delta$-exact:
$$e^{\frac{i}{\hbar}S'}- e^{\frac{i}{\hbar}S}=\Delta (\cdots) $$
where $\cdots=-i\hbar\int_0^1 dt\; R_t \, e^{\frac{i}{\hbar}S_t}$.
\end{remark}

\begin{remark}
Equation (\ref{l22_can_transf},\ref{l22_QME_S_t}) together can be packaged as a single ``extended quantum master equation'' 
$$(dt\wedge \frac{d}{dt}-i\hbar\,\Delta)\;e^{\frac{i}{\hbar}\sigma}=0$$
on an element of total degree zero $\sigma=S_t+dt\, R_t\;\in \Omega^\bt([0,1])\otimes \mc{A}^\bt[[-i\hbar]]$ in non-homogeneous forms on the interval $[0,1]$ with coefficients in $\mc{A}^\bt[[-i\hbar]]$.
\end{remark}

\subsection{Half-densities on odd-symplectic manifolds. Canonical BV Laplacian. Integral forms}
\subsubsection{Half-densities on odd-symplectic manifolds 
}

\begin{definition}
A \emph{density $\rho$ of weight $\varkappa\in \RR$} (or a \emph{$\varkappa$-density}) on a supermanifold $\MM$, covered by an atlas of coordinate charts $U_\alpha$ with local coordinates $(x^i_{(\alpha)}, \theta^a_{\alpha})$, 
is a collection of locally defined functions $\rho_{(\alpha)}(x_{(\alpha)},\theta_{(\alpha)})$ satisfying the following transformation rule on the overlaps $U_\alpha\cap U_\beta$:
\begin{equation}\label{l22_density_transformation}
\rho_{(\alpha)}(x_{(\alpha)},\theta_{(\alpha)})= \rho_{(\beta)}(x_{(\beta)},\theta_{(\beta)})\cdot \left| \mr{Sdet}\frac{\dd (x_{(\alpha)},\theta_{(\alpha)})}{\dd (x_{(\beta)},\theta_{(\beta)})}\right|^\varkappa 
\end{equation}
We denote the space of (smooth) $\varkappa$-densities on $\MM$ by $\mr{Dens}^\varkappa(\MM)$.
\end{definition}

We are interested in the case of densities of weight $\varkappa=1/2$ (or \emph{half-densities}) on an odd-symplectic manifold $(\MM,\omega)$. We assume that the body $M$ of $\MM$ is oriented (and thus the odd fiber of $\Pi T^* M\simeq \MM$ is also oriented) and the atlas agrees with the orientation, and hence the Jacobians of the transition functions are positive.

We write a half-density on $(\MM,\omega)$ locally, in a Darboux chart $(x^i,\xi_i)$ as 
$$\rho=\rho(x,\xi)\cdot d^{\frac12}x\, \DD^{\frac12}\xi$$
where $d^{\frac12}x\, \DD^{\frac12}\xi$ is a locally defined symbol (coordinate half-density associated to the local coordinates $(x^i,\xi_i)$) satisfying the transformation property
$$d^{\frac12}x\, \DD^{\frac12}\xi=\left(\mr{Sdet}\,\frac{\dd(x,\xi)}{\dd(x',\xi')}\right)^{\frac12} d^{\frac12}x'\, \DD^{\frac12}\xi'$$
This rule is equivalent to the transformation rule (\ref{l22_density_transformation}) with $\kappa=\frac12$ for the coefficient functions: $\rho(x,\xi)\mapsto \rho(x',\xi')=\rho(x,\xi)\cdot \left(\mr{Sdet}\,\frac{\dd(x,\xi)}{\dd(x',\xi')}\right)^{-\frac12} $.

One can view half-densities as sections of the (tensor) square root of the Berezin line bundle:
$$\mr{Dens}^{\frac12}\MM=\Gamma(\MM,\mr{Ber}(\MM)^{\otimes \frac12})$$

\begin{remark}[Manin, \cite{Manin}]\label{l22_rem_Manin}
For $\mc{V}$ an $(k|n-k)$-dimensional vector superspace, one can consider the space of constant (coordinate-independent) Berezinians, $\mr{BER}_\mr{const}(\mc{V})=\mr{Det}\,\Pi\mc{V}=\wedge^n \mc{V}_\even^*\otimes \wedge^m \mc{V}_\odd$. For $(\mc{W},\omega)$ an odd-symplectic $(n|n)$-dimensional vector superspace, and $\mc{V}=\LL\subset \mc{W}$ a Lagrangian subspace, the space of constant half-densities on $\mc{W}$ is canonically isomorphic to the space of constant Berezinians on $\LL$,
\begin{equation}\label{l22_HDens_via_Lagrangians}
\mr{Dens}^{\frac12}_\mr{const}(\mc{W})\cong \mr{BER}_\mr{const}(\LL)
\end{equation}
via the map  
$$\mr{BER}_\mr{const}(\LL)\ni \quad \nu\;\mapsto \; (\nu^{\otimes 2})^{\otimes \frac12}\quad \in \mr{Dens}^{\frac12}_\mr{const}(\mc{W})\cong \mr{BER}_\mr{const}^{\otimes \frac12}(\mc{W})$$ 
where $\nu^{\otimes 2}\in \mr{BER}_\mr{const}(\mc{W})\cong \mr{BER}_\mr{const}(\Pi T^* \LL)\cong \mr{BER}_\mr{const}(\LL)^{\otimes 2}$.\footnote{
The crucial linear algebra observation here, formulated in terms of determinant lines of vector superspaces, is that $\mr{Det}(\mc{V}\oplus \Pi \mc{V}^*)\cong
\mr{Det}(\mc{V})^{\otimes 2}
$, cf. (\ref{l21_Ber_0},\ref{l21_Ber}).
} Thus, one can understand constant a half-densities on an odd-symplectic space $(\mc{W},\omega)$ as a Berezinian on any Lagrangian subspace $\LL\subset \mc{W}$, or, since one has isomorphisms (\ref{l22_HDens_via_Lagrangians}), as a coherent system of Berezinians on all Lagrangian subspaces of $(\mc{W},\omega)$. Or, equivalently, as an equivalence class of pairs $(\LL,\mu_\LL)$ of a Lagrangian $\LL\subset \mc{W}$ and a constant Berezinian on $\LL$.
\end{remark}

\begin{example}
Consider odd-symplectic  $(1|1)$-superspace $\mc{W}=\Pi T^*\RR=\RR^{1|1}$ with Darboux coordinates $x,\xi$. The constant half-density $\rho=d^{\frac12}x \,\DD^{\frac12}\xi$ on $\RR^{1|1}$ induces the Berezinian (volume form) $dx$ on  the Lagrangian $\RR^1\subset \RR^{1|1}$ and the Berezinian $\DD\xi$ on the Lagrangian $\RR^{0|1}\subset \RR^{1|1}$.
\end{example}

\begin{remark}[\v{S}evera, \cite{Severa}]\label{l22_rem_Severa_1}
Given an odd-symplectic $(n|n)$-supermanifold $(\MM,\omega)$, one can consider the operator $\omega\wedge: \Omega^p(\MM)_k\ra \Omega^{p+2}(\MM)_{k-1}$ as a differential on the space of forms on $\MM$ (note that it does indeed square to zero since $\omega\wedge\omega=0$). Then the cohomology $H^\bt_{\omega\wedge}(\Omega(\MM))$ is canonically isomorphic to the space of half-densities on $\MM$. Locally, in Darboux coordinates $(x^i,\xi_i)$ on $\MM$, cohomology classes in $H^\bt_{\omega\wedge}(\Omega(\MM))$ have canonical representatives of form 
\begin{equation}\label{l22_Severa_cocycle}
\rho(x,\xi)\,dx^1\wedge\cdots\wedge dx^n \in \Omega^n(\MM)
\end{equation} 
which correspond to the half-densities $\rho(x,\xi)\, \prod_{i=1}^n d^{\frac12}x^i\,\DD^{\frac12}\xi_i$ (with the same coefficient $\rho(x,\xi)$) via Remark \ref{l22_rem_Manin}.
\end{remark}

\subsubsection{Canonical BV Laplacian on half-densities}
Let $(\MM,\omega)$ be an odd-symplectic manifold. One can define (Khudaverdian, \cite{Khudaverdian}) the \emph{canonical BV Laplacian} on half-densities,  $\Delta: \mr{Dens}^{\frac12}\MM\ra \mr{Dens}^{\frac12}\MM$, locally given in a Darboux chart by
\begin{equation}
\Delta:\quad \rho(x,\xi)\,d^{\frac12}x\,\DD^{\frac12}\xi\;\mapsto \; \left(\sum_i \frac{\dd}{\dd x^i}\frac{\dd}{\dd \xi_i}\rho(x,\xi)\right)\,d^{\frac12}x\,\DD^{\frac12}\xi 
\end{equation}
The nontrivial check \cite{Khudaverdian} is that the formula above defines a globally well-defined operator on half-densities. 

Note that the operator $\Delta$ on half-densities does not rely on a choice of a Berezinian on $\MM$, unlike the Schwarz's BV Laplacian (\ref{l21_Delta_mu}) $\Delta_\mu$ on functions on $\MM$.

Given a compatible Berezinian $\mu$ on $(\MM,\omega)$, one has the associated ``reference'' half-density $\sqrt\mu\in \mr{Dens}^{\frac12}(\MM)$, multiplication by which induces an isomorphism
$$C^\infty(\MM)\xra{\cdot \sqrt{\mu}} \mr{Dens}^{\frac12}(\MM)$$
This isomorphism intertwines the operators $\Delta_\mu$ and $\Delta$. I.e., for $f\in C^\infty(\MM)$ one has 
$$\Delta (\sqrt\mu\cdot f)=\sqrt\mu\cdot \Delta_\mu(f)$$

\begin{remark}
Note that, for $\mu$ an incompatible Berezinian, one can also introduce an operator $\til\Delta_\mu: f\mapsto \frac{1}{\sqrt\mu}\Delta (f\sqrt\mu)$ which will be, generally, different from Schwarz's BV Laplacian $\Delta_\mu$ as defined by (\ref{l21_Delta_mu}). More precisely, $\til\Delta_\mu=\Delta_\mu+\frac{1}{\sqrt\mu}\Delta (\sqrt\mu)\cdot$ (the last term is a multiplication operator). Operator $\til\Delta_\mu$ always squares to zero, but $\til \Delta_\mu (1)\neq 0$ for an incompatible Berezinian, whereas one always has $\Delta_\mu(1)=0$ but $\Delta_\mu^2\neq 0$ for an incompatible Berezinian. For a compatible Berezinian, we have $\til\Delta_\mu=\Delta_\mu$. 
Indeed, a Berezinian is compatible iff $\Delta\sqrt\mu=0$, cf. Remark \ref{l21_rem_compatible_Ber}.
\end{remark}

\begin{remark}[\v{S}evera, \cite{Severa}] One can also construct the canonical BV Laplacian $\Delta$ on half-densities by considering the spectral sequence calculating the cohomology of the total differential $d+\omega\wedge$ of the bi-complex $\Omega^\bt(\MM)$ with differentials $\omega\wedge$ and $d$ (de Rham operator on $\MM$). Cohomology of $\omega\wedge$ yields the space of half-densities on $\MM$ (cf. Remark \ref{l22_rem_Severa_1}). BV Laplacian arises on the third sheet $E_3$ of the spectral sequence as the induced differential $\Delta=d (\omega\wedge)^{-1}d$ on $H^\bt_{\omega\wedge}(\Omega(\MM))\cong \mr{Dens}^{\frac12}\MM$.\footnote{
In particular, consider the action of the operator $d(\omega\wedge)^{-1}d$ on the cocycle of form (\ref{l22_Severa_cocycle}): $\rho(x,\xi) dx^1\cdots dx^n\xra{d}\sum_i \frac{\dd}{\dd \xi_i}\rho(x,\xi)\; d\xi_i\, dx^1\cdots dx^n\xra{(\omega\wedge)^{-1}} (-1)^{|\rho|+1}\sum_i (-1)^{i-1}\frac{\dd}{\dd \xi_i}\rho(x,\xi)\; dx^1\cdots \widehat{dx^i}\cdots dx^n\xra{d}\left(\sum_i \frac{\dd}{\dd x^i}\frac{\dd}{\dd \xi_i}\rho(x,\xi)\right)\; dx^1\cdots dx^n$.

} (First sheet $E_1$ is $\Omega^\bt(\MM),\omega\wedge$ and second sheet $E_2$ is $H^\bt_{\omega\wedge}(\Omega(\MM))$ with zero differential.)
\end{remark}

For  $(\MM,\omega)$ and $\LL\subset \MM$ a Lagrangian submanifold, there is a well-defined restriction operation
$$\mr{Dens}^{\frac12}\MM\ra \BER(\LL)$$
cf. (\ref{l21_Ber}) and Remark \ref{l22_rem_Manin}. If a $(X^\alpha,\Xi_\alpha)$ is a Darboux chart on $\MM$ in which $\LL$ is given by $\Xi=0$, the map above sends $\rho(X,\Xi)\,\DD^{\frac12}X\,\DD^{\frac12}\Xi\mapsto \rho(X,0)\, \DD X$.

Thus, in terms of half-densities, a BV integral is an integral of form 
$$\int_{\LL\subset \MM}\alpha:=\int_{\LL\subset \MM}\alpha|_\LL$$
with $\LL$ a Lagrangian submanifold and $\alpha$ a $\Delta$-closed half-density. The BV-Stokes' theorem (Theorem \ref{l21_thm_BV_Stokes}) in this language states that:
\begin{enumerate}[(i)]
\item $\int_\LL \Delta\beta=0$, for any $\beta\in \mr{Dens}^{\frac12}(\MM)$
\item $\int_{\LL}\alpha= \int_{\LL'}\alpha$  for $\alpha\in \mr{Dens}^{\frac12}(\MM)$ satisfying $\Delta\alpha=0$ 
and $\LL\sim \LL'$ two Langrangians with homologous bodies.
\end{enumerate}

\begin{remark}[Canonical transformation as an action of a symplectic flow on half-densities]
In the setting of half-densities, a canonical transformation of solutions of quantum master equation (Definition \ref{l22_def_can_transf}) admits the following interpretation. Let $(\MM,\mu,\omega)$ be an odd-symplectic manifold with a compatible Berezinian. A canonical transformation (\ref{l22_can_transf},\ref{l22_QME_S_t}) can be viewed as a family of $\Delta$-closed half-densities on $\MM$ of form $\rho_t=\mu^{\frac12}e^{\frac{i}{\hbar}S_t}$ ($\Delta\rho_t=0$ is equivalent to the quantum master equation  (\ref{l22_QME_S_t})), such that for any $0\leq t_0< t_1\leq 1$, one has $\rho_{t_1}=(\Phi_{t_0,t_1})_* \rho_{t_0}$. Here $\Phi_{t_0,t_1}:\MM\stackrel\sim\ra \MM$ is the symplectomorphism arising as the flow, from time $t_0$ to time $t_1$, of the $t$-dependent Hamiltonian vector field $\{R_t,\bt\}\in \mathfrak{X}(\MM)_0$; $(\Phi_{t_0,t_1})_*$ stands for the pushforward of a half-density by the symplectomorphism. In this sense, the first term on the r.h.s. of (\ref{l22_can_transf}) corresponds to the transformation of the function $S_t$ by the Hamiltonian vector field $\{R_t,\bt\}$, whereas the second term compensates for the transformation of the reference half-density $\mu^{\frac12}$ under the infinitesimal flow by $\{R_t,\bt\}$.
\end{remark}

\subsubsection{Integral forms}

\begin{definition}[Manin, \cite{Manin}]
An \emph{integral form} on a supermanifold $\NN$ is a a half-density on $\Pi T^* \NN$ (with the standard symplectic structure of the cotangent bundle). We denote the space of integral forms on $\NN$ by $\mr{Int}(\NN):=\mr{Dens}^{\frac12}(\Pi T^*\NN)$. Given an integral form $\alpha$ on $\NN$, its integral over a submanifold $C\subset \NN$ is defined as 
\begin{equation} \label{l22_integration_of_int_forms}
\int_{C\subset \NN} \alpha := \int_{\Pi N^* C\subset \Pi T^* \NN} \til\alpha 
\end{equation}
-- the integral of the corresponding half-density $\til\alpha$ over the conormal Lagrangian $\LL_C=\Pi N^* C$ (Example \ref{l21_ex_conormal}) in the odd cotangent bundle $\Pi T^* \NN$.
\end{definition}
Integral forms on $\NN$ generalize the notion of Berezinians on $\NN$ (in particular, $\BER(\NN)\subset \mr{Int}(\NN)$). Whereas a Berezinian can be integrated over whole of $\NN$, an integral form can be integrated over an arbitrary submanifold $C\subset \NN$ (integrating a full Berezinian over a proper submanifold yields zero). Whereas $\BER(\NN)$ is a torsor over functions $C^\infty(\NN)$,  $\mr{Int}(\NN)$ is a torsor over polyvectors $\mc{V}^\bt(\NN)=C^\infty(\Pi T^* \NN)$. Put another way, one has
$$ \mr{Int}(\NN)=  \mc{V}^\bt(\NN) \otimes_{C^\infty(\NN)} \BER(\NN) $$

\begin{example} For $\NN=M$ an ordinary $n$-manifold, 
\begin{equation}\label{l22_Int(M)}
\mr{Int}(M)=
 \mc{V}^\bt(M)\otimes_{C^\infty(M)}\Omega^n(M)= \Omega^{n-\bt}(M)
\end{equation}
is the space of differential forms on $M$, where non-top degree forms arise as contractions of a top form with a polyvector. Integration of integral forms over submanifolds (\ref{l22_integration_of_int_forms}) over submanifolds yields in this case an integral of a differential form over a submanifold $C\subset M$. Canonical BV Laplacian $\Delta$ on integral forms (viewed as half-densities on $\Pi T^*M$) under the identification (\ref{l22_Int(M)}) with differential forms becomes the de Rham operator on $M$.
\end{example}

\begin{example}[Integral forms on the odd line] Consider integral forms on the odd line $\NN=\RR^{0|1}$. Let $\theta$ be the odd coordinate on $\RR^{0|1}$ and $Y$ the even fiber coordinate on $\Pi T^* \RR^{0|1}$. Then we the general integral forms on $\RR^{0|1}$ have the following form: 
\begin{equation}\label{l22_int_form_on_odd_line}
\mr{Int}(\RR^{0|1})\quad \ni \alpha=f(Y,\theta)\cdot \mu^{\frac12}=(f_0(Y)+f_1(Y)\; \theta)\cdot \mu^{\frac12} 
\end{equation}
with $f_0,f_1$ functions of $Y$. Here $\mu^{\frac12}=d^{\frac12}Y\,\DD^{\frac12}\theta$ is the standard coordinate half-density. By Remark \ref{l22_rem_Manin}, $\mu^{\frac12}$ is a class represented by equivalent pairs $(\RR^{0|1}\subset \Pi T^*\RR^{0|1}, \DD\theta)$ and $(\RR^1\subset \Pi T^*\RR^{0|1}, dY)$. Berezinians or $\RR^{0|1}$ correspond to the case $f_0(Y)=0$. An integral form (\ref{l22_int_form_on_odd_line}) is $\Delta$-closed iff $f_1(Y)$ is a constant function of $Y$. An integral form (\ref{l22_int_form_on_odd_line}) is $\Delta$-exact iff $f_1=0$ and $\int_{\RR} f_0(Y)\,dY=0$.
Supermanifold $\RR^{0|1}$ has two nonempty submanifolds: $\{0\}\subset \RR^{0|1}$ and $\RR^{0|1}\subset \RR^{0|1}$. Integral of an integral form $\alpha$ over these two submanifolds is, according to the definition (\ref{l22_integration_of_int_forms}), respectively,
$$\int_{\{0\}\subset \RR^{0|1}} \alpha = \int_{\RR^{1}\subset \Pi T^* \RR^{0|1}} f_0(Y)\, dY,\qquad
\int_{\RR^{0|1}} \alpha = \int_{\RR^{0|1}\subset \Pi T^* \RR^{0|1}} f_1(0) \theta \, \DD \theta = f_1(0)
$$
\end{example}

\marginpar{\LARGE{Lecture 23, 11/16/2016.}}

\subsection{Fiber BV integrals}\footnote{References: \cite{DiscrBF,CM,CMRpert}.}\label{ss: fiber BV int}

Let 
$(\FF',\omega')$, $(\FF'',\omega'')$ be two odd-symplectic manifolds and 
\begin{equation}\label{l23_F'_times_F''}
\FF=\FF'\times \FF''
\end{equation}
their direct product with the direct sum symplectic structure $\omega=\omega'\oplus \omega''$ (or, more pedantically, $\omega = \omega'\otimes 1+1\otimes \omega'' \in \Omega(\FF')\otimes \Omega(\FF'')\subset \Omega(\FF)$). Denote $P: \FF\ra \FF'$ the projection to the first factor in (\ref{l23_F'_times_F''}).

For $\LL\subset \FF''$ a Lagrangian submanifold, we denote
\begin{equation}\label{l23_P_*}
P_*^{(\LL)}=\int_{\LL\subset\FF''}\; :\quad \mr{Dens}^{\frac12}\FF\ra \mr{Dens}^{\frac12}\FF'
\end{equation}
the \emph{fiber BV integral} -- the fiber integral, parameterized by points $x'$ of $\FF'$, over a Lagrangian $\LL_{x'}\subset 
P^{-1}(x')$ -- a copy of $\LL\subset\FF''$ placed over $x'$.  
$$\vcenter{\hbox{\input{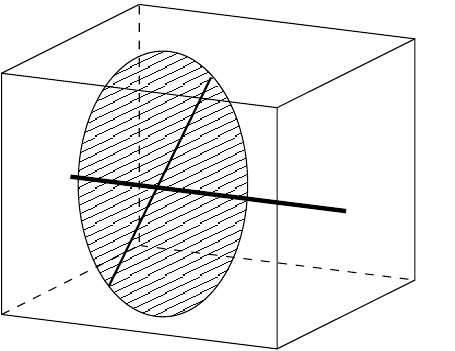tex_t}}}$$
In particular, (\ref{l23_P_*}) is an $\RR$-linear map which sends $\phi=\phi'\otimes \phi''\in \HDens\FF'\otimes \HDens\FF''\subset \HDens\FF$ to $P_*^{(\LL)}\phi=\phi'\cdot\int_{\LL\subset\FF''}\phi''$. In other words, the map (\ref{l23_P_*}) is the full (ordinary) BV integral on $\FF''$ tensored with identity on $\FF'$:
$$ \PP_*^{(\LL)}:\quad \HDens\FF\cong \HDens\FF'\widehat\otimes \HDens\FF'' \xra{\mr{id}_{\FF'}\otimes\int_{\LL\subset \FF''}} \HDens\FF'  $$
We also call the map $P_*^{(\LL)}$ the \emph{BV pushforward} (of half-densities, along the odd-symplectic fibration $P:\FF\ra\FF'$).

\begin{theorem}[Stokes' theorem for fiber BV integrals
]\label{l23_thm_fiber_Stokes}

\begin{enumerate}[(i)]
\item $P_*^{(\LL)}$ is a chain map intertwining the canonical BV Laplacians $\Delta$ on $\FF$ and $\Delta'$ on $\FF'$:
\begin{equation}
\Delta' P_*^{(\LL)} = P_*^{(\LL)} \Delta
\end{equation}
\item Let $\LL\sim \til\LL$ be two homotopic Lagrangians (cf. Definition \ref{l21_def_Lagr_homotopy}) in $\FF''$, and let $\phi\in \HDens\FF$ be a half-density such that $\Delta\phi=0$. Then
\begin{equation}\label{l23_fiber_Stokes_ii}
P_*^{(\til\LL)}\phi- P_*^{(\LL)}\phi = \Delta'(\cdots)
\end{equation}
More precisely, if $\til\LL=\mr{graph}(\epsilon\cdot d\Psi
)$ is an infinitesimal Lagrangian homotopy with generator $\Psi\in C^\infty(\LL)_{-1}$ (cf. Example \ref{l21_ex_graph}), then one can write the primitive on the r.h.s. of (\ref{l23_fiber_Stokes_ii}) explicitly in terms of the generator $\Psi$: 
\begin{equation}
(\cdots)=\epsilon\cdot P_*^{(\LL)}(\Psi\cdot \phi)
\end{equation}
\end{enumerate}
\end{theorem}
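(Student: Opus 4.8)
The plan is to reduce Theorem~\ref{l23_thm_fiber_Stokes} to the ordinary BV Stokes' theorem (Theorem~\ref{l21_thm_BV_Stokes}) by exploiting the product structure $\FF=\FF'\times\FF''$ and the fact that $P_*^{(\LL)}=\mr{id}_{\FF'}\otimes\int_{\LL\subset\FF''}$ on factorized half-densities. First I would set up the local picture: pick a special Darboux chart $(x'^i,\xi'_i)$ on $\FF'$ and $(x''^a,\xi''_a)$ on $\FF''$, so that the canonical BV Laplacian on $\FF$ splits as $\Delta=\Delta'+\Delta''$, acting on the first and second group of coordinates respectively, and the half-density symbol factorizes. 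Since both operators and the pushforward are $\RR$-linear and continuous, and since $\HDens\FF'\widehat\otimes\HDens\FF''$ is dense in $\HDens\FF$, it suffices to verify everything on factorized half-densities $\phi=\phi'\otimes\phi''$, and then by bilinearity on finite sums $\phi=\sum_k \phi'_k\otimes\phi''_k$.

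For part (i): on a factorized half-density $\phi=\phi'\otimes\phi''$ we compute $\Delta\phi=(\Delta'\phi')\otimes\phi''+(-1)^{|\phi'|}\phi'\otimes(\Delta''\phi'')$ (the sign coming from the Koszul convention when $\Delta''$ passes $\phi'$). Applying $P_*^{(\LL)}=\mr{id}_{\FF'}\otimes\int_{\LL\subset\FF''}$ gives $(\Delta'\phi')\cdot\int_\LL\phi'' + (-1)^{|\phi'|}\phi'\cdot\int_\LL\Delta''\phi''$, and the second term vanishes by part~(i) of Theorem~\ref{l21_thm_BV_Stokes} applied on $\FF''$. On the other side, $\Delta' P_*^{(\LL)}\phi = \Delta'\big(\phi'\cdot\int_\LL\phi''\big)=(\Delta'\phi')\cdot\int_\LL\phi''$, since $\int_\LL\phi''$ is a number (well, an element of $\RR$ or $\CC$), giving the claimed identity; extend to general $\phi$ by linearity.

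For part (ii): given $\phi=\phi'\otimes\phi''$ with $\Delta\phi=0$, I would not in general have $\Delta''\phi''=0$ for fixed $\phi'$, so the cleanest route is to work with the explicit infinitesimal version. For an infinitesimal Lagrangian homotopy $\til\LL=\mr{graph}(\epsilon\cdot d\Psi)$ with $\Psi\in C^\infty(\LL)_{-1}$, the variation-of-the-BV-integral formula from the proof of Theorem~\ref{l21_thm_BV_Stokes} gives, on $\FF''$, that $\int_{\til\LL}\phi''-\int_{\LL}\phi''=\epsilon\int_\LL \Delta''(\Psi\cdot\phi'')$ whenever $\Delta''\phi''=0$; the subtlety is that here $\phi''$ is only ``half-closed'' ($\Delta\phi=0$, not $\Delta''\phi''=0$). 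I would handle this by keeping $\phi$ un-factorized: the restriction-to-$\til\LL$ and restriction-to-$\LL$ operations differ (to first order in $\epsilon$) by the operator that, on the fiber $P^{-1}(x')$, is exactly the one appearing in the proof of the BV Stokes theorem, so $P_*^{(\til\LL)}\phi-P_*^{(\LL)}\phi=\epsilon\cdot P_*^{(\LL)}\big(\Delta(\Psi\cdot\phi)\big)$, where now $\Psi$ is viewed as a degree $-1$ function on $\FF$ pulled back from $\LL$ and $\Delta$ is the full Laplacian (the $\Delta'$-part of $\Delta(\Psi\phi)$ contributes a $\Delta'$-exact piece after pushforward, consistent with the statement). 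Then using $\Delta\phi=0$ and $\Delta(\Psi\phi)=(\Delta\Psi)\phi\pm\Psi\Delta\phi\pm\{\Psi,\phi\}$-type expansion together with part~(i), I would rewrite $P_*^{(\LL)}\Delta(\Psi\phi)=\Delta'P_*^{(\LL)}(\Psi\phi)$, which is manifestly $\Delta'$-exact with the claimed primitive $\epsilon\cdot P_*^{(\LL)}(\Psi\cdot\phi)$. The general (finite) homotopy follows by subdividing $[0,1]$, integrating the infinitesimal formula, and taking the limit, exactly as in the proof of Theorem~\ref{l21_thm_BV_Stokes}.

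The main obstacle I anticipate is bookkeeping of signs and of the precise meaning of ``restriction of a half-density on $\FF$ to the Lagrangian $\LL_{x'}$ in the fiber'' — one must be careful that the fiberwise restriction commutes appropriately with the $\FF'$-directions and that the canonical half-density formalism (Remark~\ref{l22_rem_Manin}, and the identification $\mr{Ber}(\Pi T^*\NN)|_\NN\cong\mr{Ber}(\NN)^{\otimes2}$) is applied consistently to the product. Once the statement is phrased so that $P_*^{(\LL)}$ is literally $\mr{id}_{\FF'}\widehat\otimes\int_{\LL\subset\FF''}$, both parts are essentially immediate corollaries of the ordinary BV Stokes' theorem, so the real work is purely in making the reduction rigorous rather than in any new geometric input.
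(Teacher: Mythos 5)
Your reduction to a fiberwise application of Theorem \ref{l21_thm_BV_Stokes}, via the splitting $\Delta=\Delta'+\Delta''$ in a product Darboux chart and the identification $P_*^{(\LL)}=\mr{id}_{\FF'}\otimes\int_{\LL\subset\FF''}$, is exactly the intended argument (the paper states this theorem without proof, deferring to the cited references), and your part (i) is complete. In part (ii) the only loose joint is asserting that the first-order variation is $\epsilon\, P_*^{(\LL)}(\Delta(\Psi\phi))$ with the \emph{full} Laplacian: the exact first-order variation operator coming from restricting to $\mr{graph}(\epsilon\, d\Psi)$ is $\Delta''(\Psi\,\cdot)\mp\Psi\,\Delta''(\cdot)$ with the \emph{fiber} Laplacian only, and one recovers your formula by combining $P_*^{(\LL)}\circ\Delta''=0$ (part (i) of Theorem \ref{l21_thm_BV_Stokes} applied fiberwise), $\Delta'(\Psi\phi)=\pm\Psi\,\Delta'\phi$ (since $\Psi$ is constant along $\FF'$), and $\Delta''\phi=-\Delta'\phi$ from the hypothesis $\Delta\phi=0$ --- which is precisely what your ``$\Delta(\Psi\phi)=\cdots$ expansion'' step amounts to, so the argument closes once those three identities are written out.
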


Next, assume that odd-symplectic manifolds $(\FF',\omega')$, $(\FF'',\omega'')$ are equipped with compatible Berezinians $\mu',\mu''$. Then $\mu=\mu'\cdot\mu''$ is a compatible Berezinian on the direct product $\FF=\FF'\times \FF''$.

\begin{definition}
Let $S\in C^\infty(\FF)_0[[\hbar]]$ be a solution of quantum master equation on $\FF$, i.e.  $\Delta_\mu e^{\frac{i}{\hbar}S}=0 \Leftrightarrow \frac12 \{S,S\}-i\hbar \Delta_\mu S=0$. Then we call $S'\in C^\infty(\FF')_0[[\hbar]]$ the \emph{effective BV action} for $S$ induced on $\FF'$, if 
\begin{equation}\label{l23_S'_via_fiber_BV_int}
\mu'^{\frac12}\,e^{\frac{i}{\hbar}S'}=P_*^{(\LL)}\left( \mu^{\frac12}\,e^{\frac{i}{\hbar}S} \right) 
\end{equation}
By an abuse of notations, we will write $S'=P_*S$ for the effective BV action. Or, if we want to emphasize the role of the Lagrangian, $S'=P_*^{(\LL)}S$.
\end{definition}
Definition above is a realization, in the context of BV formalism, of the idea Wilson's effective action (\ref{l14_Wilson_eff_action}) of Section \ref{sss_Wilson_RG_flow}. 

The following is a corollary of Theorem \ref{l23_thm_fiber_Stokes}.
\begin{corollary}\label{l23_BV_Stokes'_cor}
\begin{enumerate}[(i)]
\item\label{l23_BV_Stokes'_cor_i} If $S$ is a solution of QME on $\FF$ then the effective action $S'$ induced on $\FF'$ via the fiber BV integral (\ref{l23_S'_via_fiber_BV_int}) satisfies QME on $\FF'$.
\item\label{l23_BV_Stokes'_cor_ii} If $S$ is a solution of QME on $\FF$ and $\LL\sim\til\LL$ are two homotopic Lagrangians in $\FF''$, the corresponding effective actions $S'=P_*^{(\LL)}S$ and $\til S'=P_*^{(\til \LL)}S$ are related by a canonical transformation, $S'\sim \til S'$.
\item\label{l23_BV_Stokes'_cor_iii} Assume that $S\sim \til{S}$ are two solutions of QME on $\FF$ are related by a canonical transformation. Then the respective effective actions (defined using the same Lagrangian $\LL\subset\FF''$) are related by a canonical transformation of $\FF'$.
\end{enumerate}
\end{corollary}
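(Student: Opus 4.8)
All three parts run on a single engine, Theorem~\ref{l23_thm_fiber_Stokes}, together with the dictionary that, $\mu=\mu'\mu''$ and $\mu'$ being compatible Berezinians, multiplication by the reference half-densities turns Schwarz's Laplacians $\Delta_\mu,\Delta_{\mu'}$ on functions into the canonical Laplacians $\Delta,\Delta'$ on half-densities, so that the QME for $S$ becomes $\Delta\bigl(\mu^{\frac12}e^{\frac{i}{\hbar}S}\bigr)=0$ and likewise on $\FF'$ (cf.\ (\ref{l22_QME_exp_form})). The plan is to push $\mu^{\frac12}e^{\frac{i}{\hbar}S}$ and its one-parameter deformations through $P_*^{(\LL)}$ and read the output back through this dictionary. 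Part~(\ref{l23_BV_Stokes'_cor_i}) is then immediate: by the definition (\ref{l23_S'_via_fiber_BV_int}) one has $\mu'^{\frac12}e^{\frac{i}{\hbar}S'}=P_*^{(\LL)}\bigl(\mu^{\frac12}e^{\frac{i}{\hbar}S}\bigr)$, and applying the chain-map identity $\Delta'P_*^{(\LL)}=P_*^{(\LL)}\Delta$ of Theorem~\ref{l23_thm_fiber_Stokes}(i) to the $\Delta$-closed half-density $\mu^{\frac12}e^{\frac{i}{\hbar}S}$ gives $\Delta'\bigl(\mu'^{\frac12}e^{\frac{i}{\hbar}S'}\bigr)=0$, i.e.\ the QME for $S'$. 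The only extra point is that the pushed-forward half-density really has the shape $\mu'^{\frac12}\cdot\exp\bigl(\tfrac{i}{\hbar}\cdot(\text{function of degree }0)\bigr)$, so that $S'$ exists; this holds because $\mu'^{\frac12}$ is nowhere vanishing and the fiber BV integral is understood perturbatively, as a formal power series in $\hbar$.

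For part~(\ref{l23_BV_Stokes'_cor_ii}), choose a Lagrangian homotopy $\LL_t$ in $\FF''$ (cf.\ Definition~\ref{l21_def_Lagr_homotopy}) from $\LL_0=\LL$ to $\LL_1=\til\LL$, with infinitesimal generators $\Psi_t\in C^\infty(\LL_t)_{-1}$, and set $S'_t:=P_*^{(\LL_t)}S$, which is defined and $t$-smooth by part~(\ref{l23_BV_Stokes'_cor_i}). Differentiating $\mu'^{\frac12}e^{\frac{i}{\hbar}S'_t}=P_*^{(\LL_t)}\bigl(\mu^{\frac12}e^{\frac{i}{\hbar}S}\bigr)$ and using the explicit homotopy formula of Theorem~\ref{l23_thm_fiber_Stokes}(ii) for the $\Delta$-closed half-density $\phi=\mu^{\frac12}e^{\frac{i}{\hbar}S}$ yields
\[ \frac{d}{dt}\bigl(\mu'^{\frac12}e^{\frac{i}{\hbar}S'_t}\bigr)=\Delta'\bigl(P_*^{(\LL_t)}(\Psi_t\,\mu^{\frac12}e^{\frac{i}{\hbar}S})\bigr). \]
Dividing by the invertible half-density $\mu'^{\frac12}e^{\frac{i}{\hbar}S'_t}$, define $R'_t\in C^\infty(\FF')_{-1}[[\hbar]]$ by $-i\hbar\,\mu'^{\frac12}e^{\frac{i}{\hbar}S'_t}R'_t:=P_*^{(\LL_t)}\bigl(\Psi_t\,\mu^{\frac12}e^{\frac{i}{\hbar}S}\bigr)$; then the displayed identity becomes $\tfrac{d}{dt}\bigl(\mu'^{\frac12}e^{\frac{i}{\hbar}S'_t}\bigr)=\Delta'\bigl(-i\hbar\,\mu'^{\frac12}e^{\frac{i}{\hbar}S'_t}R'_t\bigr)$, which is the half-density form recorded in the Remark following Definition~\ref{l22_def_can_transf}; since each $S'_t$ already solves QME, this is equivalent to equation (\ref{l22_can_transf}) on $\FF'$. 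Hence $S'=S'_0\sim S'_1=\til S'$.

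For part~(\ref{l23_BV_Stokes'_cor_iii}), let $(S_t,R_t)$ be a canonical transformation on $\FF$ with $S_0=S$, $S_1=\til S$; by Remark~\ref{l22_rem_can_transf} every $S_t$ solves QME, so $S'_t:=P_*^{(\LL)}S_t$ is defined by part~(\ref{l23_BV_Stokes'_cor_i}). Differentiating $\mu'^{\frac12}e^{\frac{i}{\hbar}S'_t}=P_*^{(\LL)}\bigl(\mu^{\frac12}e^{\frac{i}{\hbar}S_t}\bigr)$, substituting the half-density identity $\tfrac{d}{dt}\bigl(\mu^{\frac12}e^{\frac{i}{\hbar}S_t}\bigr)=\Delta\bigl(-i\hbar\,\mu^{\frac12}e^{\frac{i}{\hbar}S_t}R_t\bigr)$ of the Remark following Definition~\ref{l22_def_can_transf}, and commuting $P_*^{(\LL)}$ past $\Delta$ by Theorem~\ref{l23_thm_fiber_Stokes}(i) gives
\[ \frac{d}{dt}\bigl(\mu'^{\frac12}e^{\frac{i}{\hbar}S'_t}\bigr)=\Delta'\,P_*^{(\LL)}\bigl(-i\hbar\,\mu^{\frac12}e^{\frac{i}{\hbar}S_t}R_t\bigr). \]
Defining $R'_t\in C^\infty(\FF')_{-1}[[\hbar]]$ by $-i\hbar\,\mu'^{\frac12}e^{\frac{i}{\hbar}S'_t}R'_t:=P_*^{(\LL)}\bigl(-i\hbar\,\mu^{\frac12}e^{\frac{i}{\hbar}S_t}R_t\bigr)$ (again dividing by the invertible reference half-density) exhibits $(S'_t,R'_t)$ as a canonical transformation on $\FF'$, so $P_*^{(\LL)}S\sim P_*^{(\LL)}\til S$.

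The substantive input is Theorem~\ref{l23_thm_fiber_Stokes}; granting it, everything above is bookkeeping. I expect the only genuine obstacle to be the recurring ``division by the reference half-density'': one must know that $P_*^{(\LL)}$ maps $\mu^{\frac12}\cdot C^\infty(\FF)_0[[\hbar]]$ into $\mu'^{\frac12}\cdot C^\infty(\FF')_0[[\hbar]]$ and $\mu^{\frac12}e^{\frac{i}{\hbar}S}\cdot C^\infty(\FF)_{-1}[[\hbar]]$ into $\mu'^{\frac12}e^{\frac{i}{\hbar}S'}\cdot C^\infty(\FF')_{-1}[[\hbar]]$, so that $S'$ and the generators $R'_t$ are genuinely well-defined formal power series of the correct internal degree. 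In the finite-dimensional/perturbative setting this is automatic, since the fiber BV integral is a Feynman-diagram sum of exactly that shape and $\mu'^{\frac12}$, $e^{\frac{i}{\hbar}S'}$ are invertible; more generally one invokes the convergence hypotheses already built into Theorem~\ref{l23_thm_fiber_Stokes}. The remaining care is notational: tracking the parity/internal-degree shifts so that $R'_t$ lands in degree $-1$, and keeping the $\hbar$-power and sign conventions of the half-density form of (\ref{l22_can_transf}) consistent throughout.
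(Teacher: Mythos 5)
Your proposal is correct and follows exactly the route the paper intends: the paper states the corollary as an immediate consequence of Theorem~\ref{l23_thm_fiber_Stokes} without writing out details, and your elaboration (recasting QME and canonical transformations as $\Delta$-closedness/$\Delta$-exactness of the half-densities $\mu^{\frac12}e^{\frac{i}{\hbar}S}$, then pushing these through the chain-map and Lagrangian-homotopy identities and dividing by the invertible reference half-density to extract $S'$ and $R'_t$) is precisely the intended argument. Your closing caveats about degree bookkeeping and the perturbative well-definedness of the division are the right ones and are indeed automatic in the finite-dimensional setting the paper works in.
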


Therefore, the BV pushforward $P_*$ defines a map 
$$ \mr{SolQME}(\FF)/\sim  \qquad \xra{P_*^{[\LL]}}\qquad \mr{SolQME}(\FF')/\sim \quad $$
sending classes of solutions of QME on $\FF$ modulo canonical tranformations to classes of solutions of QME on $\FF'$ modulo canonical tranformations, and the map depends on a class $[\LL]$ of Lagrangians in $\FF''$ modulo Lagrangian homotopy.

\subsection{Batalin-Vilkovisky formalism}

\subsubsection{Classical BV formalism}\label{sec: classBV}
We call a \emph{classical BV theory} the following package of supergeometric data:
\begin{itemize}
\item A $\ZZ$-graded supermanifold $\FF$ (the space of BV fields),
\item an odd-symplectic structure $\omega\in \Omega^2(\FF)_{-1}$ (the BV 2-form),
\item a function $S\in C^\infty(\FF)_0$ (the BV action, or \emph{master action}) satisfying the classical master equation $\{S,S\}=0$.
\end{itemize}
Note that the Hamiltonian vector field on $\FF$ generated by $S$, 
$$Q:=X_S=\{S,\bt\}\in \mathfrak{X}(\FF)_1$$ (the BRST operator), squares to zero 
due to the CME. 

Also, note that $Q$ is compatible with the odd-symplectic form: 
$$L_Q\omega=0$$ 
(with $L_Q$ the Lie derivative along $Q$), which follows from $\iota_Q\omega=dS$ (the condition that $Q$ is a Hamiltonian vector field generated by $S$).

\begin{definition}\label{def_Ham_dg_mfd} A \emph{Hamiltonian dg manifold} of degree $k$ is:
\begin{itemize}
\item a dg manifold $(\MM,Q)$,
\item a symplectic form of grade (internal degree) $k$, $\omega\in \Omega^2(\MM)_k$,
\item a Hamiltonian $H\in C^\infty(\MM)_{k+1}$ satisfying $\{H,H\}_{\omega}=0$ with $\{-,-\}$ the Poisson bracket of degree $-k$ on $C^\infty(\MM)$ associated to $\omega$. 
\end{itemize} 
In particular, the Hamiltonian vector field $Q=X_H\in \mathfrak{X}(\MM)_1$ is cohomological.
\end{definition}

Case $k=-1$ of the definition above corresponds to a classical BV theory. Case $k=0$ emerges in the BFV (Batalin-Fradkin-Vilkovisky) formalism -- the Hamiltonian counterpart of the BV formalism (and also plays an important role in symplectic geometry, in the problem of describing coisotropic reductions, see \cite{Schaetz}). Case $k=n-1$ for $n\geq 0$ arises as the target structure for $n$-dimensional AKSZ sigma models \cite{AKSZ}.

\begin{example}[A BRST system in BV formalism, classically] \label{l23_ex_BRST_in_BV_cl}
Given a classical BRST package $(\FF_\BRST,Q_\BRST,S_\BRST)$, we construct the following BV package:
\begin{itemize}
\item The space of BV fields is constructed as a (shifted) cotangent bundle 
$$\FF_\BV=T^*[-1]\FF_\BRST$$ 
with $\omega_\BV$ the standard symplectic structure of the cotangent bundle.
\item The BV action is
\begin{equation}\label{l23_S_BV}
S_\BV=p^*S_\BRST+\til{Q_\BRST}
\end{equation}
Here $p: \FF_\BV\ra \FF_\BRST$ is the projection to the base of the cotangent bundle and $\til{Q_\BRST}$ is the lifting of the vector field $Q_\BRST$ on the base of the cotangent bundle to a function on the total space linear in the fibers.\footnote{Note that, generally, to $\alpha \in \mc{V}^p(M)$ a $p$-polyvector field, one can associate a function $\til\alpha\in C^\infty(T^*[-1]M)$ of degree $p$ in fiber coordinates. Here one can replace $M$ by a general $\ZZ$-graded manifold, and in particular by $\FF_\BRST$.}
\item The cohomological vector field on the total space has the form 
$$Q_\BV=X_{p^* S_\BRST}+Q_\BRST^\mr{cot.\;lift}$$
where the first term is the Hamiltonian vector field generated by the first term in (\ref{l23_S_BV}) and  $Q_\BRST^\mr{cot.\;lift}$ is the cotangent lift of a vector field $Q_\BRST$ on the base of the cotangent bundle to a vector field on the total space.
\end{itemize}
If $\Phi^\alpha$ are local coordinates on $\FF_\BRST$, then $\FF_\BV$ has corresponding Darboux coordinates $(\Phi^\alpha,\Phi^+_\alpha)$, where the fiber coordinates $\Phi^+_\alpha$ are called \emph{anti-fields} (as opposed to $\Phi^\alpha$ which are called \emph{fields}). The odd-symplectic structure is: 
$$\omega_\BV=\sum_\alpha d\Phi^\alpha\wedge d\Phi^+_\alpha$$ 
The BV action is:
\begin{equation}\label{l23_S_BV_for_BRST}
S_\BV(\Phi,\Phi^+)=S_\BRST(\Phi)+\sum_\alpha Q^\alpha_\BRST(\Phi)\cdot \Phi^+_\alpha
\end{equation}
where $Q_\BRST^\alpha=L_{Q_\BRST} \Phi^\alpha$ are the components of $Q_\BRST$ (i.e., $Q_\BRST=\sum_\alpha Q_\BRST^\alpha(\Phi)\frac{\dd}{\dd \Phi^\alpha}$). The BRST operator on the BV fields (the cohomological vector field)  is:
\begin{multline*}
Q_\BV=\sum_\alpha \left(S_\BRST(\Phi)\frac{\ola\dd }{\dd \Phi^\alpha}\right)\frac{\dd}{\dd \Phi^+_\alpha}+\\ 
+
\sum_\alpha Q^\alpha(\Phi)\frac{\dd}{\dd\Phi^\alpha}+\sum_{\alpha,\beta}\pm \Phi_\alpha^+\left(\frac{\dd }{\dd \Phi^\beta}Q^\alpha(\Phi)\right)\frac{\dd}{\dd \Phi^+_\beta}
\end{multline*}
\end{example}

\subsubsection{Quantum BV formalism}
We define a quantum (finite-dimensional) BV theory as the following package of data.
\begin{itemize}
\item A $\ZZ$-graded manifold $\FF$ of BV fields,
\item an odd-symplectic structure $\omega\in \Omega^2(\FF)_{-1}$ (the BV 2-form),
\item a Berezinian $\mu\in \BER(\FF)$ compatible with $\omega$ (the integration measure on BV fields),
\item a master action $S=S^{(0)}-i\hbar\, S^{(1)}+(-i\hbar)^2S^{(2)}+\cdots \in C^\infty(\FF)_0[[-i\hbar]]$ satisfying the quantum master equation
$$\frac12 \{S,S\}-i\hbar\Delta_\mu S=0\quad\Leftrightarrow \quad \Delta_\mu e^{\frac{i}{\hbar}S}=0$$ 
\end{itemize}

\begin{remark} 
Unlike in the classical case, the vector field $X_S$ does not automatically square to zero (since $S$ satisfies QME rather than CME). However, one can define the second order operator 
$$\delta_S=\{S,\bt\}-i\hbar\,\Delta = -i\hbar e^{-\frac{i}{\hbar}S}\Delta\left(e^{\frac{i}{\hbar}S}\cdot \bt\right)$$ 
which squares to zero due to QME (also note that the second equality above uses QME) and serves as a quantum replacement for the BRST operator in BV formalism. (We have encountered this operator before, in Remark \ref{l22_rem_can_transf}.) Note also that $\delta_S\bmod\hbar = X_{S^{(0)}}=:Q$ is the classical BRST operator associated to the classical part of the master action $S$, and it does square to zero.
\end{remark}

\textbf{Idea of gauge-fixing in BV formalism.} The partition function, as defined by a BV integral over a Lagrangian $\LL\subset \FF$
$$Z=\int_{\LL\subset \FF}\sqrt\mu\;e^{\frac{i}{\hbar}S}$$
does not change under the Lagrangian homotopy $\LL_0\sim \LL_1$ (smooth deformation staying in the class of Lagrangians, cf. Definition \ref{l21_def_Lagr_homotopy}) by Theorem \ref{l21_thm_BV_Stokes}, since the integrand is $\Delta$-closed. If it happens that $S$ has degenerate critical points on a Lagrangian $\LL_0$, we use the freedom to deform $\LL_0$ to another Lagrangian $\LL_1$ in such a way that $S$ has non-degenerate critical points on $\LL_1$ and the integral can be calculated by the stationary phase formula. Thus, the gauge-fixing in BV formalism is the choice of the Lagrangian submanifold in $\FF$.

One can also study observables in BV formalism. One says that $\OO\in C^\infty(\FF)[[\hbar]]$ is a (quantum) BV observable, if 
$\delta_S \OO=0$
is satisfied. The expectation value of an observable is the BV integral of form 
$$\lan \OO \ran = \frac{1}{Z}\int_{\LL\subset \FF}\sqrt\mu\; e^{\frac{i}{\hbar}S}\OO$$
Equation $\delta_S\OO=0$ is a way to express gauge-invariance of the observable in BV formalism, and guarantees that the integrand above is $\Delta$-closed and hence one can deform $\LL$ in the class of Lagrangians, thereby applying the gauge-fixing strategy as above and converting the integral to the form where it can be calculated by the stationary phase formula.

\begin{remark}
Note that, since $\delta_S$ is not a derivation, a product of observables in BV formalism is not necessarily an observable. (Though, one can 
correct the naive product to a $\delta_S$-cocycle using homological perturbation theory.) However, in the context of local field theory, a product of observables with disjoint support is indeed an observable (e.g. the product of Wilson loop observables in Chern-Simons theory for several non-intersecting loops is an observable).
\end{remark}

\begin{example}[A quantum BRST system in BV formalism]\label{l23_ex_BRST_in_BV_q}
Let $(\FF_\BRST,Q_\BRST,S_\BRST,\mu_\BRST)$ be a quantum BRST package (cf. Section \ref{sss: qBRST}). We define $\FF_\BV,\omega_\BV,S_\BV$ as in the Example \ref{l23_ex_BRST_in_BV_cl}. For the Berezinian on the cotangent bundle we set $\mu_\BV=\mu_\BRST^{\otimes 2}$ (using (\ref{l21_Ber_0})). Note that, since the BV action (\ref{l23_S_BV}) does not depend on $\hbar$, the quantum master equations splits into two equations: $\{S_\BV,S_\BV\}=0$ (the CME) and $\Delta_{\mu_\BV}S_\BV=0$. The CME is satisfied due to the classical BRST relations $Q_\BRST^2=0$, $Q_\BRST (S_\BRST)=0$, while equation $\Delta_{\mu_\BV}S_\BV=0$ follows from the relation $\mr{div}\;Q_\BRST=0$ for the quantum BRST package.

Consider the gauge-fixing, within BV framework, for such a system coming from a BRST package. Denote $\LL_0$ the zero-section of $\FF_\BV=T^*[-1]\FF_\BRST$ and let $\LL_\Psi=\mr{graph}(d\Psi)\subset T^*[-1]\FF_\BRST$ be the graph Lagrangian, for $\Psi=\Psi(\Phi)\in C^\infty(\FF_\BRST)_{-1}$. 
$$\vcenter{\hbox{\input{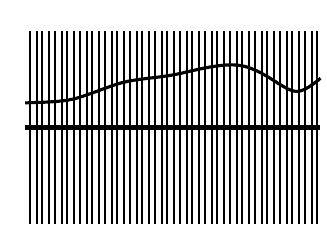tex_t}}}$$
We use $\Phi^\alpha$ for local coordinates on $\FF_\BRST$ (and we assume for simplicity that $\mu_\BRST=\DD\Phi$ locally) and $\Phi^+_\alpha$ for the corresponding fiber coordinates on $T^*[-1]\FF_\BRST$. Then gauge-fixing consists in the replacement
\begin{equation}\label{l23_BRST_BV_gf}
\int_{\LL_0\subset T^*[-1]\FF_\BRST} \sqrt{\mu_\BV}\;e^{\frac{i}{\hbar}S_\BV(\Phi,\Phi^+)}  \qquad \mapsto \qquad \int_{\LL_\Psi\subset T^*[-1]\FF_\BRST} \sqrt{\mu_\BV}\;e^{\frac{i}{\hbar}S_\BV(\Phi,\Phi^+)} 
\end{equation}
Since $S_\BV$ on the zero-section reduces to $S_\BRST$, the l.h.s. of (\ref{l23_BRST_BV_gf}) reduces to $\int_{\FF_\BRST}\DD\Phi\;e^{\frac{i}{\hbar}S_\BRST(\Phi)}$. On the other hand, to evaluate the r.h.s. of (\ref{l23_BRST_BV_gf}), we note that $S_\BV$ restricted to the Lagrangian $\LL_\Psi$ is $S_\BV(\Phi^\alpha,\Phi^+_\alpha=\frac{\dd}{\dd \Phi^\alpha}\Psi)=S_\BRST+Q_\BRST(\Psi)$. Therefore, r.h.s. of (\ref{l23_BRST_BV_gf}) reads $\int_{\FF_\BRST}\DD\Phi\; e^{\frac{i}{\hbar}(S_\BRST(\Phi)+Q_\BRST(\Psi))}$.
Thus, BV gauge-fixing, performing the Lagrangian homotopy $\LL_0\mapsto \LL_\Psi$ precisely corresponds to the gauge-fixing procedure of BRST formalism (\ref{l20_shift_by_Q(Psi)}), shifting the BRST action by a $Q_\BRST$-coboundary.
\end{example}

\marginpar{\LARGE{Lecture 24, 11/21/2016.}}

\subsubsection{Faddeev-Popov via BV}\label{sss: FP via BV}
Starting with Faddeev-Popov data -- an $n$-manifold $X$ acted on by a compact group $G$, an invariant function\footnote{We put a subscript $cl$ now to distinguish the classical action from its BV counterpart.} 
$S_{cl}\in C^\infty(X)^G$ and the invariant integration measure $\mu_X\in \Omega^n(X)^G$ -- we construct the associated BRST package as in Section \ref{sss: FP via BRST} and then construct the BV package as the $-1$-shifted cotangent bundle (Examples \ref{l23_ex_BRST_in_BV_cl}, \ref{l23_ex_BRST_in_BV_q}).  The minimal BRST package (cf. Section \ref{sss: FP via BRST}) associated to the Faddeed-Popov data yields, by applying the construction of Example \ref{l23_ex_BRST_in_BV_cl}, the ``minimal'' space of BV fields 
\begin{equation*}
\FF_\mr{min}=T^*[-1](X\times \g[1])=T^*[-1]X\times \g[1]\times \g^*[-2]
\end{equation*}
If $x^i$ are the local coordinates on $X$ and assuming we chose a basis $\{T_a\}$ in $\g$, we have the following coordinates on $\FF_\mr{min}$ (we also indicate the ghost number defining the $\ZZ$-grading on $\FF_\mr{min}$):\\ 
\begin{tabular}{c|c|c}
coordinates & name & degree (ghost number) \\ \hline
$x^i$ on $X$ & classical fields & 0 \\
$c^a$ on $\g[1]$ & ghosts &  1 \\
$x^+_i$ on fibers of $T^*[-1]X$ & anti-fields (for classical fields) & -1 \\
$c^+_a$ on $\g^*[-2]$ & anti-ghosts & -2 \\ \hline
\end{tabular}

Applying the construction of Example \ref{l23_ex_BRST_in_BV_cl}, we get:
\begin{itemize}
\item The BV 2-form -- the canonical odd-symplectic form of the shifted cotangent bundle:
$$\omega_\mr{min}=\sum_i dx^i\wedge dx^+_i+\sum_a dc^a\wedge dc^+_a$$
\item The BV action (\ref{l23_S_BV}):
\begin{equation}\label{l24_S_FP_BV_min}
S_\mr{min}=  S_{cl}(x)+\sum_{i,a} c^a v^i_a(x)x^+_i + \frac12 \sum_{a,b,c} f^c_{ab}\; c^b c^c c^+_a
\end{equation}
\item Assuming that the volume form on $\mu_X$ locally has the form $\mu_X=\rho(x)\,d^nx$ with $\rho$ a local density function, the Berezinian on $\FF_\mr{min}$ obtained by the construction of Example \ref{l23_ex_BRST_in_BV_q} takes the local form
$$\mu_\mr{min}=\rho(x)^2\cdot d^nx\cdot\DD^n x^+\cdot\DD^m c\cdot d^m c^+ $$
\end{itemize}

Passing to the non-minimal BRST model for the Faddeev-Popov data (cf. Section \ref{sss: FP via BRST}), on the level of the BV package corresponds to extending the space of minimal BV fields by auxiliary fields:
\begin{equation}\label{l24_min_to_full_BV_for_FP}
\FF_\mr{min}\quad\mapsto\quad \FF=\FF_\mr{min}\times T^*[-1](\g^*[-1]\oplus \g^*)
\end{equation}
The auxiliary fields (coordinates on the second term in the r.h.s. above) are:
\begin{tabular}{c|c}
coordinates & degree (ghost number) \\ \hline
$\lambda_a$ on $\g^*$ & $0$ \\
$\bar{c}_a$ on $\g^*[-1]$ & $-1$ \\
$\lambda^{+a}$ on the fiber of $T^*[-1]\g^*$ & $-1$ \\
$\bar{c}^{+a}$ on the fiber of $T^*[-1]\g^*[-1]$ & $0$\\ \hline
\end{tabular}

In the non-minimal version, the minimal BV package gets extended as follows:
\begin{itemize}
\item The BV 2-form:
$$\omega=\omega_\mr{min}+\sum_a d\lambda_a\wedge d\lambda^{+a}+\sum_a d\bar{c}_a\wedge d\bar{c}^{+a}$$
\item 
The BV action: 
\begin{equation}\label{l24_S_FP_BV_full}
S=S_\mr{min}+\sum_a \lambda_a \bar{c}^{+a} 
\end{equation}
\item The Berezinian:
$$\mu= \mu_\mr{min}\cdot d^m\lambda\cdot \DD^m\lambda^+\cdot \DD^m \bar{c}\cdot d^m \bar{c}^+$$
\end{itemize}

\textbf{Faddeev-Popov gauge-fixing}, corresponding to the zero-section of a map $\phi:X\ra\g$, can be described within BV formalism in the following two equivalent ways:
\begin{enumerate}[I.]
\item In non-minimal BV package, we can choose the gauge-fixing Lagrangian of graph type
$$ \LL=\mr{graph}(d\Psi) \;\;\subset\;\; \FF $$
with $$\Psi=\lan \bar{c},\phi(x) \ran$$
Here $\LL$ is locally given by
$$\LL:\qquad \left\{\begin{array}{l}
x,c,\lambda,\bar{c} \;\;\mbox{are free}\\
x^+=(d_x\phi)^T(\bar{c}) \\
c^+=0 \\
\lambda^+=0\\
\bar{c}^+=\phi(x)
\end{array}\right.
$$
Note that the restriction of the full BV action (\ref{l24_S_FP_BV_full}) to $\LL$ yields precisely the Faddeev-Popov action (\ref{l15_S_FP}), and thus the BV integral 
\begin{equation}\label{l24_FP_int_via_BV_non-min}
\int_{\LL\subset\FF} \sqrt\mu \;e^{\frac{i}{\hbar}S}
\end{equation} 
yields the r.h.s. of (\ref{l15_FP2}).
\item In the minimal BV package, we can take the gauge-fixing Lagrangian of conormal type $$\LL_\mr{min}=N^*[-1]C\;\;\subset\FF_\mr{min}$$ 
with $$C=\phi^{-1}(0)\times \g[1]\;\;\subset X\times \g[1] $$
Locally, this Lagrangian 
is described as follows:
$$\LL_\mr{min}:\qquad \left\{\begin{array}{l}  
c \;\; \mbox{free} \\
x\;\;\mbox{such that}\;\phi(x)=0 \\
x^+=(d_x\phi)^T(\bar{c})\;\;\mbox{for some}\; \bar{c} \\
c^+=0
\end{array}\right.$$
Note that here the second Faddeev-Popov ghost $\bar{c}$ emerges as a coordinate on the Lagrangian $\LL_\mr{min}$. Classical field $x$ on $\LL_\mr{min}$ is subject to the gauge-fixing constraint $\phi(x)=0$; one way to impose it for describing the integral over $\LL_\mr{min}$ is by introducing the Lagrange multiplier field $\lambda$, imposing the constraint. In this way, one can see that the BV integral 
\begin{equation}\label{l24_FP_int_via_BV_min}
\int_{\LL_\mr{min}\subset \FF_\mr{min}}\sqrt{\mu_\mr{min}}\; e^{\frac{i}{\hbar}S_\mr{min}}
\end{equation}
does indeed yield the r.h.s. of (\ref{l15_FP2}). Put another way, the connection between non-minimal and minimal realization here is that, integrating the Lagrange multiplier field $\lambda$ in (\ref{l24_FP_int_via_BV_non-min}) yields (\ref{l24_FP_int_via_BV_min}).
\end{enumerate}


\subsubsection{BV for gauge symmetry given by a non-integrable distribution}\label{sss: BV for non-integrable distribution}
The power of BV formalism, setting it aside from Faddeev-Popov construction and BRST formalism, is that it can be used to treat gauge symmetry not given by a group action on the space of classical fields (and, more generally, not given by a Lie algebroid -- this case can be treated by BRST, cf. Remark \ref{l20_rem_Lie_algbd_symmetry_via_BRST}), but rather given by a possibly non-integrable distribution on the space of classical fields. 

In the most general setting, the data of classical gauge system is: 
\begin{itemize}
\item A manifold $X$ (the space of classical fields).
\item A (generally, non-integrable) tangential distribution $E$ on $X$, defining the infinitesimal gauge transformations of classical fields. I.e., we have a subbundle of the tangent bundle, $E\subset TX$, which we can think of as a standalone bundle $E\ra X$ together with an injective bundle map $\rho: E\hra TX$. 
\item An $E$-invariant function $S_{cl}$ on $X$ -- the classical action. An additional assumption is that $E$ is integrable on the critical locus of $S_{cl}$, i.e., on zero locus $X_\mr{crit}\subset X$ of $dS_{cl}$. Note that $E$ is automatically tangent to $X_\mr{crit}$, due to $E$-invariance of $S_{cl}$. (And $E$ is allowed to be non-integrable away from $X_\mr{crit}$.)
\end{itemize}

Then one tries to construct the associated BV package (here we only discuss the classical part of the BV package), with the space of BV fields given as 
\begin{equation}\label{l24_F=T^*[-1]E[1]}
\FF=T^*[-1] E[1]
\end{equation}
-- the shifted cotangent bundle of $E[1]$ (the graded manifold obtained by degree-shifting the fibers of the vector bundle $E\ra X$).
Then one tries to construct a function $S\in C^\infty(\FF)_0$ -- the BV action, satisfying the classical master equation
\begin{equation}\label{l24_CME}
\{S,S\}=0
\end{equation}
and two compatibility properties with the classical gauge system we started with:
\begin{enumerate}
\item The restriction of $S$ to $E[1]$ (the zero-section of the shifted cotangent bundle \ref{l24_F=T^*[-1]E[1]}) is $S_{cl}$.\footnote{More pedantically, we should say that the restriction of $S$ to the zero-section is $\pi^* S_{cl}$ where $\pi: E[1]\ra X$ is the mapping of graded manifolds induced from the bundle map $\pi:E\ra X$.}
\item $S$ is compatible with the inclusion $\rho: E\hra TX$ in the following sense. If $\{e_a\}$ is a basis of local sections of $E\ra X$ and $v_a=\rho(e_a)=\sum_i v_a^i(x)\frac{\dd}{\dd x^i}$ are the corresponding local vector fields on $X$, then the 
term linear in ghosts $c^a$ (fiber coordinates of $E[1]$) in $S$ is $\sum_{a,i} c^a v^i_a(x)x^+_i=\lan \rho(c),x^+ \ran$. Here $x^+_i$ are the fiber coordinates on $T^*[-1]X$ corresponding to local coordinates $x^i$ on the base. This compatibility condition can be spelled without resorting to local coordinates: $S$ must have the form $S=S_{cl}+Y$ for some function $Y$ on $\FF$, with the Hamiltonian vector field $\{Y,\bt\}$ tangential to the zero-section $E[1]$ of (\ref{l24_F=T^*[-1]E[1]}). Denoting by $q\in \mathfrak{X}(E[1])_1$ the induced vector field on the zero-section, the requirement is that the Lie brackets $[q,u]$ of $q$ with vertical (w.r.t. the bundle projection $\pi:E[1]\ra X$) vector fields $u\in \mathfrak{X}(E[1])_{-1}$ are projectable to $X$ and project precisely to the sections of the distribution $E$ (i.e. to the image of $\rho_*:\Gamma(E)\ra \mathfrak{X}(X)$).
\end{enumerate}

Thus, $S$ necessarily has the form 
\begin{equation}\label{l24_S=S_cl+v}
S=S_{cl}(x)+\sum_{a,i}c^a v^i_a(x) x^+_i+\cdots
\end{equation}
where $\cdots$ are terms of degree at least $2$ in the ghosts $c^a$ and also depending on anti-fields $x^+_i, c^+_a$. These terms are to be added in such a way that $S$ satisfies the classical master equation (\ref{l24_CME}).

\begin{remark}\label{l24_rem_crit}
Note that the zero locus of the cohomological vector field $Q=\{S,\bt\}$ restricted to $X$ (viewed as the degree zero part of $\FF$) is precisely the critical locus $X_\mr{crit}$ of $S_{cl}$, which is in turn the same as the space of solutions of Euler-Lagrange equations associated with the classical action $S_{cl}$.
\end{remark}

\begin{remark}
Restricting to a tubular neighborhood $U$ of the critical locus $X_\mr{crit}$ in $X$, one can be more specific about the ansatz (\ref{l24_S=S_cl+v}) and write
\begin{equation}\label{l24_S_tub_nbhd}
S=S_{cl}(x)+\sum_{a,i}c^a v^i_a(x) x^+_i+\sum_{a,b,c}\frac12 f^a_{bc}(\s{p}(x))\;c^b c^c c^+_a  +\cdots
\end{equation}
Here $\s{p}:U\proj X_\mr{crit}$ is the tubular neighborhood projection, $f^a_{bc}$ are the structure constants of the Lie algebroid given by the restriction of $E$ to $X_\mr{crit}$. The correction terms $\cdots$ in (\ref{l24_S_tub_nbhd}) are terms of degree $\geq 2$ in anti-fields $x^+_i,c^+_a$. Deforming the projection $\s{p}:U\proj X_\mr{crit}$ is equivalent to a canonical BV transformation\footnote{Here we mean the \textit{classical}, i.e. $\bmod\; \hbar$, part of the full (quantum) canonical BV transformation (\ref{l22_can_transf}).} of $S$ (the pullback of $S$ by a specific symplectomorphism of $\FF$).
\end{remark}

\begin{remark}
Throughout this subsection, we are using a simplifying assumption that the gauge symmetry is given by an injective map of vector bundles $\rho:E\hra TX$. There are interesting cases when it is natural to parameterize the gauge transformations by a vector bundle $E$ mapping into $TX$ non-injectively. In this case, the space of BV fields will incorporate higher ghosts corresponding to the kernel of $E\ra TX$ (or even further terms of a resolution of $E\ra X$ by a complex of vector bundles $\cdots \ra E\ra TX$), cf. Section \ref{sss: higher ghosts}, and the anti-fields for the higher ghosts.
\end{remark}

\begin{example}\label{l24_ex_action_up_to_homotopy} Let $X$ be a manifold with a function $S_{cl}\in C^\infty(X)$ and let  
\begin{equation}\label{l24_g_to_Vect(X)}
v:\g\ra \mathfrak{X}(X)
\end{equation}
be a linear map from a Lie algebra $\g$ to the vector fields on $X$, mapping the generators $T_a$ of $\g$ to vector fields $v_a$ on $X$. Assume that $S_{cl}$ satisfies 
\begin{equation}\label{l24_v(S)=0}
v_a(S_{cl})=0
\end{equation} 
for all $a$ (in particular, this implies that $v_a$ are tangent to the critical locus $X_\mr{crit}$ of $S_{cl}$). Also, assume that (\ref{l24_g_to_Vect(X)}) gives a strict action of $\g$ on $X_\mr{crit}$ but only an action-up-to-homotopy away from $X_\mr{crit}$. Explicitly, we request that
\begin{equation}\label{l24_Lie_hom_up_to_homotopy}
[v(\alpha),v(\beta)]=v([\alpha,\beta]_\g)+[A(\alpha,\beta),S_{cl}]_{NS}
\end{equation}
for any $\alpha,\beta\in\g$; $[,]_\g$ is the Lie bracket in $\g$ and $[,]_{NS}$ is the Nijenhuis-Schouten bracket of polyvectors on $X$. In (\ref{l24_Lie_hom_up_to_homotopy}), the rightmost term is the ``defect'' of the Lie algebra homomorphism property of the map (\ref{l24_g_to_Vect(X)}), with 
\begin{equation}\label{l24_A}
A: \wedge^2\g\ra \mc{V}^2(X)
\end{equation} 
the homotopy (given by some bivector on $X$, depending skew-symmetrically on a pair of Lie algebra elements).
In this case, the BV package is given by $\FF=T^*[-1](X\times \g[1])$ with the BV action
\begin{equation}\label{l24_S_Lie_up_to_hom}
S=S_{cl}(x)+\sum_{a,i}c^a v_a^i(x) x^+_i+\sum_{a,b,c}\frac12 f^c_{ab}c^a c^b c^+_c + \sum_{a,b,i,j}\frac14 A^{ij}_{ab}(x)c^a c^b x^+_i x^+_j
\end{equation}
Here $f^c_{ab}$ are the structure constants of $\g$ in the basis $\{T_a\}$ and $A^{ij}_{ab}(x)$ are the components of the homotopy (\ref{l24_A}) in the local coordinates $x^i$ on $X$, i.e. $A(T_a,T_b)=\sum_{i,j}A^{ij}_{ab}\frac{\dd}{\dd x^i}\wedge \frac{\dd}{\dd x^j}$. 
Note that, due to (\ref{l24_v(S)=0},\ref{l24_Lie_hom_up_to_homotopy}), in the expression $\{S,S\}$, terms proportional to $c$ and $ccx^+$ vanish. If certain higher coherence relations are fulfilled for the data $(S,v,A)$, then we have the classical master equation $\{S,S\}=0$. Otherwise, we may get terms proportional to $cccc^+,cccx^+x^+,ccccx^+x^+x^+$ in $\{S,S\}$ and then we would need to add higher degree terms in ghosts to (\ref{l24_S_Lie_up_to_hom}) in order to correct for that error and have the classical master equation.
\end{example}

\begin{remark}\label{l24_rem_antifield_grading}
Generally, terms in the BV action $S$ have a natural grading by polynomial degree in anti-fields:
\begin{itemize}
\item The term independent of anti-fields is the classical action.
\item The terms linear in anti-fields are the data of gauge symmetry (the action of gauge transformations on classical fields and the algebra of gauge transformations, which is closed under commutators on $X_\mr{crit}$).
\item The terms of degree $\geq 2$ in anti-fields correspond to non-integrability of gauge symmetry (and are the homotopies correcting for this non-integrability).
\end{itemize}
\end{remark}


\begin{example}
Assume that we have a graded manifold $f$ with a function $s\in C^\infty(f)_0$ and a vector field $q\in\mathfrak{X}(f)_1$, such that $q(s)=0$. Instead of asking that $q^2=0$ (then the triple $\FF_\BRST=f,S_\BRST=s,Q_\BRST=q$ would have been a classical BRST package), we ask that $q^2=-[s,a]$ for some bivector field $a\in \mc{V}^2(f)_2$. (The bracket $[,]$ is the Nijenhuis-Schouten bracket of polyvectors on $f$.) Then we construct the BV package by deforming the construction of Example \ref{l23_ex_BRST_in_BV_cl} as follows:
\begin{eqnarray}
\FF &=& T^*[-1] f \nonumber \\
S(\Phi,\Phi^+) &=& s(\Phi)+\widetilde{q}+\til{a} \label{l24_s+q+a}
\end{eqnarray}
Where $\Phi$ are the coordinates on $f$ (fields) and $\Phi^+$ are the fiber coordinates on $T^*[-1]f$ (anti-fields); $\til{q}$ is the lifting of $q$ to a function on $T^*[-1]f$, linear in $\Phi^+$; $\til{a}$ is the lifting of $a$ to a function on $T^*[-1]f$ quadratic in $\Phi^+$. If the ``higher coherencies'' $[q,a]=0, [a,a]=0$ are observed, then (\ref{l24_s+q+a}) satisfies the classical master equation $\{S,S\}=0$. Otherwise, one should add corrections to (\ref{l24_s+q+a}) corresponding to polyvectors of degree $\geq 3$ on $f$, which correct for the error in the classical master equation. E.g., if $[q,a]=-[s,b]$ for $b\in\mc{V}^3(f)_3$, one should add $\til{b}$ (a term cubic in anti-fields) to the r.h.s. of (\ref{l24_g_to_Vect(X)}), and so on.
\end{example}

\begin{example}[System with no gauge symmetry cast in BV and the Koszul complex]
Consider the case of a classical system defined by an $n$-manifold $X$ endowed with a function $S_{cl}\in C^\infty(X)$ with no gauge symmetry. In this case, the space of BV fields (\ref{l24_F=T^*[-1]E[1]}) is simply $\FF=T^*[-1]X$, the BV action is $S=S_{cl}$ (the classical action pulled back from $X$ to $T^*[-1]X$), classical master equation $\{S,S\}=0$ holds trivially (since $S$ is constant in the fiber direction of $T^*[-1]X$). If $x^i$ are the local coordinates on $X$ and $x^+_i$ the corresponding fiber coordinates on $T^*[-1]X$, the cohomological vector field $Q=\{S,\bt\}$ on $\FF$ locally has the form 
\begin{equation}\label{l24_Q_Koszul}
Q=\sum_i \frac{\dd S_{cl}}{\dd x^i}\;\frac{\dd}{\dd x^+_i}
\end{equation}
The zero locus of $Q$ is the space of solutions of the Euler-Lagrange equation $dS_{cl}=0$ which is the same as the critical locus $X_\mr{crit}\subset X$ of $S_{cl}$ (this is a trivial 
case of the Remark \ref{l24_rem_crit}). The complex $C^\infty(\FF)$ with differential $Q$ is the Koszul complex\footnote{
Recall that, generally, given a ring $R$, a rank $n$ free module $A$ and an $R$-module morphism $e:A\ra R$, one can construct the Koszul chain complex $K_n\ra\cdots \ra K_2\ra K_1=A\xra{e} R$ with $K_i=\wedge^i A$ and with the boundary operator $\delta_\mr{Koszul}:K_i\ra K_{i-1}$ mapping $a_1\wedge\cdots\wedge a_i\mapsto \sum_{j=1}^i (-1)^{j-1} e(a_j)\cdot a_1\wedge\cdots \widehat{a_j}\cdots\wedge a_i$. Here we prefer to use tho cohomological grading and thus denote $K_i$ as $K^{-i}$.
}
\begin{equation}\label{l24_Koszul_complex}
\underbrace{\Gamma(\wedge^n TX)}_{K^{-n}}\ra  \cdots \ra \underbrace{\Gamma(\wedge^2 TX)}_{K^{-2}} \ra \underbrace{\Gamma(TX)}_{K^{-1}} \xra{\lan dS_{cl},\bt \ran} \underbrace{C^\infty(X)}_{K^0} 
\end{equation}
Locally, in a coordinate neighborhood $U\subset X$, it has the form $K^{-\bt}_U=C^\infty(U)\otimes \wedge^\bt \RR^n=C^\infty(U)[x^+_1,\ldots,x^+_n]$ with differential (\ref{l24_Q_Koszul}). In particular, $K^{-\bt}_U$ is a free supercommutative algebra, obtained from $C^\infty(U)$ by adjoining the free anti-commuting variables $x^+_1,\ldots,x^+_n$ (the anti-fields or the Koszul generators, cf. \cite{Stasheff}), placed in degree $-1$. The value of the Koszul differential on the generator $x^+_i$ is the respective partial derivative of the action, $x^+_i\mapsto \frac{\dd S_{cl}}{\dd x^i}$ (and then the differential is extended to the entire Koszul complex as a derivation, by Leibniz identity). The cohomology of the complex (\ref{l24_Koszul_complex}) has the form
\begin{equation}
H^{-k}_Q=\left\{\begin{array}{cl}
C^\infty(X_\mr{crit}) & \mr{for}\; k=0, \\
0 & \mr{for}\; k<0
\end{array}
\right.
\end{equation}
under the assumption that $dS_{cl}$ has maximal rank everywhere on $X_\mr{crit}$, i.e., if the intersection of $\mr{graph}(dS_{cl})\subset T^*X$ with the zero-section of $T^*X$ is transversal (which is related to the assumption that $S_{cl}$ does not have gauge symmetry).
\end{example}

\begin{example}[P. M., \cite{M13}]\label{l24_exa_contact} One idea of constructing a small-dimensional example of gauge symmetry given by a non-integrable distribution $E$ on $X$ with a (non-constant) invariant function $S_{cl}$ is as follows. Take $X$ to be a bundle over the base $\RR$ (parameterized by a coordinate $t$), with fiber $M$ and consider a family of contact structures $\alpha_t$ on $M$ depending on $t$, such that for $t=0$ the maximally non-integrable distribution $E_t$ on $M$ corresponding to the contact structure $\alpha_t$ degenerates into  an integrable distribution $E_0$. Then the family $\{E_t\}$ yields a distribution on $X$ (a subbundle of the vertical tangent bundle on $X\ra \RR$), and for $S_{cl}$ one can take any function of $t$ whose sole extremum is at $t=0$. 

A simple exmplicit example of this situation is the following: $M=\RR^3$ with coordinates $x,y,z$ with a $t$-dependent contact 1-form $\alpha_t=dz-ty\,dx$ which fails the contact property $\alpha\wedge d\alpha\neq 0$ if and only if $t=0$. The corresponding rank $2$ distribution $E_t=\ker\alpha_t$ on $\RR^3$ is spanned by vector fields $\dd_y,\dd_x+ty\, \dd_z$. Thus, we obtain a gauge system with $X=\RR^3\times \RR$ (with coordinates $x,y,z,t$), with distribution 
$$E=\mr{Span}(\dd_y,\dd_x+ty\, \dd_z)\;\;\subset TX$$
We can take $S_{cl}=\frac{t^2}{2}$ as a simplest choice. The critical locus is $X_\mr{crit}=
\RR^3\times \{0\}\subset X$ -- the fiber of $X$ over $t=0$. In the BV formulation, we have $\FF=T^*[-1]E[1]$ with the following coordinates.\\
\begin{tabular}{c|c}
coordinates & degree (ghost number) \\ \hline
$x,y,z,t$ on $X$ & $0$ \\
$c^1,c^2$ on the fiber of $E[1]$ & $1$ \\
$x^+,y^+,z^+,t^+$ on the fiber of $T^*[-1]X$ & $-1$ \\
$c^{+}_1,c^+_2$ on the cotangent fiber of the fiber of $E[1]$ & $-2$\\ \hline
\end{tabular}

The corresponding BV action is:
\begin{equation}\label{l24_S_contact}
S=\frac{t^2}{2}+c^1 \, y^++c^2 (x^++ty\, z^+)+ c^1 c^2\, t^+ z^+
\end{equation}
Note the last term here, quadratic in anti-fields, which corresponds to non-integrability of the distribution $E$, as per Remark \ref{l24_rem_antifield_grading}.  If we choose $S_{cl}$ to be given by some other function $f(t)$ whose sole extremum is at $t=0$, the last term in (\ref{l24_S_contact}) will get rescaled with the factor $\frac{1}{f''(0)}$.
\end{example}

\begin{remark} The Example \ref{l24_exa_contact} can be pushed further, to the quantum BV setting. Then in (\ref{l24_S_contact}) it is convenient to make the coordinate $z$ periodic, i.e., $M=\RR^2\times S^1$ instead of $M=\RR^3$. Then, choosing the gauge-fixing Lagrangian $\LL\subset \FF$ given by $t^+=z^+=c_1^+=c_2^+=0,x=x_0,y=y_0$ (for some fixed $x_0,y_0$), the BV integral $\int_{\LL\subset\FF} e^{\frac{i}{\hbar}S}$ is convergent and can be regarded, morally, as a way to make sense of the volume of the non-Hausdorff quotient $X/E$ (cf. Remark \ref{l20_rem_Lie_algbd_symmetry_via_BRST}).\footnote{Note that $E$ integrates to a rank $2$ foliation on $X_\mr{crit}=M\times\{0\}$, with leaf space $S^1$. On the other hand, $E$ is a maximally non-integrable distibution on each $M\times \{t\}$ for each $t\neq 0$, i.e. one can connect any two points on $M\times\{t\}$ by a path tangential to $E$. Thus, each $M\times\{t\}$ for each $t\neq 0$ constitutes a rank $3$ ``leaf'' of $E$. In this sense, $X/E$, viewed as points of $X$ modulo the equivalence relation given by the possibility to connect two points by a path tangent to $E$, is $\RR$ (parameterized by $t$), with the point $t=0$ thickened to $X_\mr{crit}/E\simeq S^1$. It is a non-Hausdorff space.}
\end{remark}

\begin{example}[Felder-Kazhdan, Example 6.7 in \cite{FelderKazhdan}]
Let $\pi: X\ra Y$ be a vector bundle with base $Y$ endowed with inner product $g$ on the fibers and with a (possibly non-flat) connection $\nabla$ preserving the inner product (i.e. $d\, g(u,v)=g(\nabla u,v)+g(u,\nabla v)$ for any sections $u,v\in\Gamma(Y,X)$). Then we define a classical gauge system on $X$ by setting $S_{cl}(x)=\frac12 g(x,x)$ (the quadratic form on fibers determined by $g$), and setting $E$ to be the horizontal distribution on the total space of the bundle $X\ra Y$ determined by $\nabla$ (viewed as an Ehresmann connection). Note that $S_{cl}$ is $E$-invariant, since $\nabla$ is compatible with the inner product on fibers, and 
$E$ is integrable if and only if $\nabla$ is flat.

The space of BV fields here is 
\begin{equation}\label{l24_FK_example_F}
\FF=T^*[-1](\pi^*T[1]Y)
\end{equation} 
To write down the BV action, let us introduce the local coordinates $y^\mu$ on $Y$ and $v^i$ in the fiber of $\pi:X\ra Y$, corresponding to a basis of sections $\{e_i\}$. Then $g$ has the local components $g_{ij}(y)=g(e_i,e_j)$ and we have $S_{cl}=\sum_{i,j}\frac12 g_{ij}(y)v^i v^j$. Let 
$$A=\sum_\mu A_{\mu\;j}^{\;\,i}(y)\,e_i\otimes e^j\cdot dy^\mu\quad \in \Omega^1(Y,\mr{End}(X))$$ 
be the local connection 1-form of $\nabla$ (i.e., $\nabla$ acts on sections by $\sum_i v^i e_i\mapsto \sum_{i,\mu}(\frac{\dd}{\dd y^\mu} v^i+\sum_j A_{\mu\; j}^{\;\,i}(y)\, v^j)\,e_i\, dy^\mu$
and the respective  horizontal distribution is $E=\mr{Span}\left\{\frac{\dd}{\dd y^\mu}+\sum_{i,j} A_{\mu\;j}^{\;\, i}(y)\, v^j\frac{\dd}{\dd v^i}\right\}$), with curvature $2$-form 
$$
F_\nabla=dA+\frac12[A,A]=\sum_{i,j,\mu,\nu} F_{\mu\nu\;j}^{\;\;\; i}(y)\,e_i\otimes e^j\cdot  dy^\mu\wedge dy^\nu\quad \in \Omega^2(Y,\mr{End}(X))
$$
Then the BV action on $\FF$ takes the form $S=S_{cl}+S_\nabla+S_F$ -- the classical action plus the term associated to the horizontal distribution corresponding to $\nabla$ plus the term associated to the curvature. More explicitly:
\begin{equation}\label{l24_FK_example_SS}
S=\sum_{i,j}\frac12 g_{ij}(y)v^i v^j+\sum_\mu c^\mu \left(y^+_\mu +\sum_{i,j }A_{\mu\;j}^{\;\, i}(y)\, v^j v^+_i\right)+\frac14 \sum_{i,j,k,\mu,\nu} (g^{-1}(y))^{jk}F_{\mu\nu\; k}^{\;\;\; i}(y)\, c^\mu c^\nu v^+_i v^+_j 
\end{equation}
Here the ghosts $c^\mu$ are the fiber coordinates on $T[1]Y$; $y^+_\mu,v^+_i,c^+_\mu$ are the fiber coordinates on the shifted cotangent bundle (\ref{l24_FK_example_F}) corresponding to the coordinates $y^\mu,v^i,c^\mu$ on $\pi^*T[1]Y$. First two terms in the BV action (\ref{l24_FK_example_SS}) correspond to the first two terms in (\ref{l24_S=S_cl+v}) and the third term in (\ref{l24_FK_example_SS}) is quadratic in the anti-fields and is the correction necessary for $S$ to satisfy $\{S,S\}=0$. Note that in the case when $\nabla$ is flat, the last term in (\ref{l24_FK_example_SS}) vanishes and then $S$ satisfies the ansatz (\ref{l23_S_BV_for_BRST}) and thus corresponds to a BRST theory via the construction of Example \ref{l23_ex_BRST_in_BV_cl}. On the other hand, the case of $\nabla$ non-flat cannot be treated without using the BV formalism. Finally, note that in this example, the critical locus $X_\mr{crit}=Y$ -- the zero-section of $\pi:X\ra Y$ and the ``Euler-Lagrange moduli space'' -- the quotient of $X_\mr{crit}$ by the gauge transformations $E$ -- is the discrete set $\pi_0(Y)$.
\end{example}

\subsubsection{Felder-Kazhdan existence-uniqueness result for solutions of the classical master equation}
In \cite{FelderKazhdan}, Felder and Kazhdan considered the following setup for the BV formalism, within the context of algebraic (rather than differential) geometry. For $X$ a non-singular affine variety over a field $\s{k}$ of characteristic zero, endowed with a regular function $S_{cl}$ on $X$, Felder and Kazhdan consider the following version of the ``BV problem''. One wants to construct, starting from the data $(X,S_{cl})$, a $-1$-symplectic $\ZZ$-graded variety $\FF$ of form $T^*[-1]\s{f}$ for some non-negatively graded variety $\s{f}=\FF_\BRST$, together with a regular function $S$ on $\FF$, such that:
\begin{enumerate}[(a)]
\item The support of $\FF$ is $X$, i.e., $\FF$ is a graded vector bundle over $X$.
\item The restriction to the support $S|_X$ is $S_{cl}$.
\item\label{l24_KF_CME} $S$ solves the classical master equation $\{S,S\}=0$.
\item\label{l24_KF_neg_degree_cohom} The cohomology sheaf of the sheaf of non-positively graded complexes $\OO_\FF/I_\FF, \underline{Q} 
$ \emph{vanishes in negative degrees}.
Here $\OO_\FF$ is the structure sheaf of $\FF$ over the base $X$, $I_\FF$ is the ideal generated by elements of positive degree in $\OO_\FF$. The derivation $Q=\{S,\bt\}$ preserves the ideal $I_\FF$ and thus induces a derivation $\underline{Q}$ on the quotient $\OO_\FF/I_\FF$ which still squares to zero, due to (\ref{l24_KF_CME}).
\end{enumerate}
Then, in the terminology of \cite{FelderKazhdan}, $(\FF,S)$ is a \emph{BV variety with support $(X,S_{cl})$}.

Two BV varieties $(\FF,S)$ and $(\FF',S')$ with the same support $(X,S_{cl})$ are said to be \emph{equivalent} if there exists a Poisson isomorphism (i.e. an isomorphism of sheaves of graded commutative algebras preserving the degree $1$ Poisson bracket) $\phi: \FF\ra\FF'$ inducing identity on the support $X\xra{\mr{id}}X$ and such that $S=\pi^*S'$.

Two BV varieties $(\FF,S)$ and $(\FF',S')$ with the same support $(X,S_{cl})$ are said to be \emph{stably equivalent} if they become equivalent after taking a direct product (possibly on both sides) with BV varieties with trivial support $(X=\mr{point},S_{cl}=0)$.\footnote{
Note that 
the transition $(\FF_\mr{min},S_\mr{min})\mapsto (\FF,S)$ between the minimal and non-minimal BV packages associated to Faddeev-Popov data in Subsection \ref{sss: FP via BV} is
an example of stable equivalence. In this example we think of the space of auxiliary fields $T^*[-1](\g^*[-1]\oplus \g^*)$, appearing in (\ref{l24_min_to_full_BV_for_FP}), as a graded variety with support a point (despite the fact that there are auxiliary fields of degree zero -- we consider them as vertical coordinates on the graded vector bundle over the support rather than geometric coordinates on the support).
}

\begin{theorem}[Felder-Kazhdan, \cite{FelderKazhdan}]
Let $S_{cl}$ be a regular function on a non-singular affine variety $X$. Then:
\begin{enumerate}
\item\label{l24_KFthm_a} There exists a BV variety $(\FF,S)$ with support $(X,S_0)$, such that $\FF=T^*[-1]\s{f}$ for some non-negatively graded variety $\s{f}$. 
\item\label{l24_KFthm_b} The BV variety $(\FF,S)$ satisfying the condition of (\ref{l24_KFthm_a}) is unique up to stable equivalence.
\item The BRST cohomology $H^\bt_Q(\OO_\FF)$, as a sheaf over $X$, is uniquely determined by the data $(X,S_{cl})$.
The BRST cohomology sheaf is supported on the critical locus of $S_{cl}$ and vanishes in negative degrees.
\item Zeroth BRST cohomology has the form
\begin{equation}\label{l24_KF_H0}
H^0_Q\simeq  J(S_{cl})^{L(S_{cl})}
\end{equation}
where $J(S_{cl})$ is the Jacobian ring (viewed as a sheaf of local rings over $X$) -- $\OO_X$ modulo the ideal generated by partial derivatives of $S_{cl}$, or equivalently the cokernel of $TX\xra{dS_{cl}}\OO_X$. $L(S_{cl})$ in (\ref{l24_KF_H0}) are the ``infinitesimal symmetries'' -- the kernel of $TX\xra{dS_{cl}}\OO_X$.
\end{enumerate}
\end{theorem}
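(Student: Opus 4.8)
The plan is to follow the homological route to the BV complex: a Koszul--Tate/Tate resolution of the Jacobian ring, followed by an obstruction-theoretic extension of its differential to a solution of the classical master equation, all carried out in the affine algebraic setting, with uniqueness and the cohomology computation falling out of the same homological package. Concretely, starting from $(X,S_{cl})$ I would first build the Tate resolution of the Jacobian ring $J(S_{cl})=\OO_X/(dS_{cl})$ as a non-positively graded commutative differential graded $\OO_X$-algebra: adjoin antifields $x^+_i$ in degree $-1$ dual to a basis of $T^*X$ with Koszul differential $x^+_i\mapsto \dd_i S_{cl}$; if the degree $-1$ cohomology survives (its classes being exactly the Noether identities, dual to the infinitesimal symmetries $L(S_{cl})$), adjoin degree $-2$ generators killing them, and iterate in every negative degree. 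Assigning to each Koszul--Tate generator of degree $-1-k$ with $k\ge 1$ a dual ``ghost'' of degree $k$ defines a non-negatively graded variety $\s{f}$ supported on $X$, and by fiat $\FF:=T^*[-1]\s{f}$ carries the canonical $-1$-symplectic form pairing each Koszul--Tate generator with its ghost; this already produces a variety of the cotangent shape required in (\ref{l24_KFthm_a}).

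Next I would solve the classical master equation by filtration by antifield number, writing $S=S_{cl}+S_1+S_2+\cdots$ with $S_k$ of antifield-number $k$ and $S_1$ chosen (starting from $\sum_{a,i}c^a v^i_a(x)\,x^+_i$, the symmetry action) so that the antifield-number-one part of $Q=\{S,-\}$ restricts on the antifield sector to the Koszul--Tate differential $\delta$. Then $\{S,S\}=0$ becomes a recursion $\delta S_{k+1}=(\text{expression in }S_{\le k})$; the right-hand side is $\delta$-closed by graded Jacobi together with the lower-order relations, and it lies in a degree where $\delta$ is acyclic by construction of the Tate resolution, so a primitive $S_{k+1}$ exists and one iterates. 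By construction $S|_X=S_{cl}$, and $\OO_\FF/I_\FF$ with the induced differential $\underline Q$ is the Tate resolution truncated in the ghost directions, hence has vanishing cohomology in negative degrees, which is (\ref{l24_KF_neg_degree_cohom}).

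For uniqueness up to stable equivalence, given two BV varieties with common support I would first stabilize by adjoining contractible pairs (BV data with support a point) so the underlying graded bundles become abstractly isomorphic, then build a Poisson isomorphism $\phi$ with $\phi^*S'=S$ order by order in degree, the obstruction at each step again killed by acyclicity of the Koszul--Tate part; this gives (\ref{l24_KFthm_b}), and hence $H^\bt_Q(\OO_\FF)$ depends only on $(X,S_{cl})$. That the cohomology sheaf is supported on $\mathrm{crit}(S_{cl})$ I would obtain by localizing away from the critical locus, where $dS_{cl}$ is a nonvanishing section, the Koszul differential is already acyclic, and a spectral sequence filtering by ghost number collapses the whole complex; the same spectral sequence, together with the vanishing just noted for $\OO_\FF/I_\FF$, yields vanishing of the full $H^{<0}_Q$. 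Finally, for $H^0$ I would run the ghost-number spectral sequence: its first page is $J(S_{cl})$ in bidegree zero tensored with the Chevalley--Eilenberg complex of $L(S_{cl})$ acting on $J(S_{cl})$, and taking the ghost cohomology in total degree zero gives the invariants $J(S_{cl})^{L(S_{cl})}$, i.e. (\ref{l24_KF_H0}); non-singularity of $X$ and characteristic zero guarantee convergence and that the local constructions glue.

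The hard part will be the obstruction arguments in the two extension steps (the master equation, and the analogous step for $\phi$). Checking that the obstruction is a $\delta$-cocycle is routine bookkeeping with graded Jacobi, but identifying precisely the cohomology group it lives in and showing it vanishes because of the acyclicity engineered into the Tate resolution --- uniformly over $X$ and compatibly with the coherence and finiteness constraints of the algebraic rather than $C^\infty$ setting, where partitions of unity are unavailable --- is the delicate point, and is essentially the technical heart of Felder--Kazhdan's proof.
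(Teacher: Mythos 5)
Your plan follows essentially the same route as the paper's sketch (and the underlying Felder--Kazhdan argument): build the Koszul--Tate resolution of $TX\xra{dS_{cl}}\OO_X$ by successively adjoining generators to kill negative-degree cohomology, adjoin their Darboux conjugates to obtain $\FF=T^*[-1]\s{f}$, and then extend $S_{cl}$ to a solution of the classical master equation order by order in antifield number, with the obstructions killed by the acyclicity engineered into the resolution. You also correctly identify the step the paper itself defers to an ``auxiliary theorem'' of Felder--Kazhdan --- the existence of $S$ inducing the prescribed Koszul--Tate differential on $\OO_\FF/I_\FF$ --- as the technical heart of the matter.
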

The r.h.s. of (\ref{l24_KF_H0}) should be understood 
as a GIT replacement for the space of functions on the quotient of the critical locus of $S_{cl}$ by the infinitesimal gauge symmetries.

The idea of construction of the BV variety $(\FF,S)$ with given support $(X,S_{cl})$ is as follows:
\begin{enumerate}[(i)]
\item\label{l24_KF_construction_1} First, one constructs the complex  
\begin{equation}\label{l24_Koszul-Tate}
(\OO_\FF/I_\FF,\underline{Q}) \quad =\quad \cdots\ra \underbrace{TX}_{\deg=-1}\xra{dS_{cl}}\underbrace{\OO_X}_{\deg=0}
\end{equation}
as a Koszul-Tate resolution of the sheaf map $TX\xra{dS_{cl}}\OO_X$. I.e., locally on $X$, using local coordinates $x^i$, one does the following:
\begin{itemize}
\item One adjoins to $O_X$ the Koszul generators $x^+_i$ of degree (ghost number $-1$) with Koszul differential $\underline{Q}^{(1)}: x^+_i\mapsto \frac{\dd S_{cl}}{\dd x^i}$ (extended as a derivation on $\OO_X[x^+_1,\ldots,x^+_n]$).
\item One adjoins Tate generators $c^+_a$ in degree $-2$ in order to kill the Koszul cohomology classes in degree $-1$ which arose on the previous stage. 
\item One continues the process, introducing generators of degree $-k-1$ in order to kill the cohomology classes appearing in the previous stage in degree $-k$.
\end{itemize}
Eventually, we obtain 
the complex (\ref{l24_Koszul-Tate}) whose cohomology in negative degree vanishes by construction.
\item\label{l24_KF_construction_2} To construct the entire $-1$-symplectic variety $\FF$, we adjoin the ``conjugates'' of the Tate generators: for each Tate generator of degree $-k\leq -2$ we adjoin its Darboux conjugate (in the sense of the symplectic structure of $\FF$) in degree $k-1\geq 1$. 
\item An auxiliary theorem of Felder-Kazhdan establishes that one can indeed construct a function $S$ on $\FF$ with the property that it induces, via $Q=\{S,\bt\}$, the Koszul-Tate coboundary map $\underline{Q}$ on $\OO_\FF/I_\FF$ in (\ref{l24_Koszul-Tate}) that we constructed in (\ref{l24_KF_construction_1}) by successive elimination of negative-degree cocycles.
\end{enumerate}

\begin{remark}
Note that the Felder-Kazhdan's approach is quite different from our setup in Section \ref{sss: BV for non-integrable distribution}: in the latter we used the classical action \emph{and the data of infinitesimal gauge symmetry} as input. In Felder-Kazhdan approach, one uses only the classical action as the input and the symmetries are recovered indirectly -- as a ``reflection'' (conjugation) of degree $-2$ Tate generators, as in  (\ref{l24_KF_construction_2}) above.
\end{remark}

\begin{remark}
Felder-Kazhdan's condition (\ref{l24_KF_neg_degree_cohom}) of vanishing of the cohomology of $\underline{Q}$ in negative degree is 
an extremely strong condition. It leads to having to add an infinite tower of higher Tate generators/higher ghost fields even for certain simple finite-dimensional examples of $(X,S_{cl})$.\footnote{See, e.g., Example 6.8 in \cite{FelderKazhdan}, with $X$ the affine plane with coordinates $x,y$ and with $S_{cl}=\frac{1}{4}(x^2+y^2-1)^2$.} This acyclicity condition is in fact typicaly spoiled in field theoretic examples: there one has an additional requirement that the resolution (\ref{l24_Koszul-Tate}) should be compatible with locality on the underlying space-time manifold $M$ (in this setting, $X$ is itself the space of sections of a sheaf $\mathbb{F}^0$ over $M$ and one would like the space of Koszul-Tate generators in any degree  $-k$ to be the space of sections of some sheaf  $\mathbb{F}^{-k}$ over $M$). This requirement is typically incompatible with acyclicity: one should rather allow the Koszul-Tate complex (\ref{l24_Koszul-Tate}) to have \emph{small} (but possibly nonzero) cohomology in negative degrees. See also \cite{Getzler_spinning} for a 1-dimensional field theory example where one gets 
 large (infinite-dimensional) BRST cohomology spaces in negative degrees.
\end{remark}

\marginpar{\LARGE{Lecture 25, 11/28/2016.}}
\subsection{AKSZ sigma models}
in \cite{AKSZ}, Alexandrov, Kontsevich, Schwarz and Zaboronsky proposed a construction of a solution of the classical master equation on the mapping space between two supermanifolds (the source and the target) endowed with appropriate supergeometric data. This construction, 
referred to as the AKSZ construction, 
produces (for appropriate choices of the source/target data) the BV action functionals for many known topological field theories, including, in particular, the Chern-Simons theory, Poisson sigma model, $BF$ theory and more. Restrictions to special gauge-fixing Lagrangians also yield the  Witten's $A$ and $B$ models, Rozansky-Witten theory and more. 

One virtue of the AKSZ construction is that it is a very natural geometric construction, giving a new supergeometric perspective to known topological field theories (with Chern-Simons theory being the main motivating example), as well as producing new examples. 
Another virtue is that it produces right away the BV action for a topological field theory, thus 
solving/circumventing the potentially 
hard problem of constructing the BV action from the classical action and its symmetries.

\subsubsection{AKSZ construction} The input of the construction is the source and target data.

\begin{itemize}
\item \textbf{Source data.} A closed oriented $n$-dimensional manifold $M$ -- the source manifold (or the spacetime).

\item \textbf{Target data.} A Hamiltonian dg manifold (cf. Definition \ref{def_Ham_dg_mfd}) of degree $n-1$,
i.e., the data of:
\begin{itemize}
\item a dg manifold $(\NN,Q_\NN)$,
\item a symplectic form of internal degree $n-1$, $\omega_\NN\in \Omega^2(\NN)_{n-1}$,
\item a Hamiltonian of internal degree $n$, $\Theta_\NN\in C^\infty(\NN)_n$, generating the cohomological vector field $Q_\NN$ as its associated Hamiltonian vector field, and satisfying $\{\Theta_\NN,\Theta_\NN\}_{\omega_\NN}=0$.
\end{itemize}
Additionally, we require that $\omega_\NN$ is exact, with $\alpha_\NN\in \Omega^1(\NN)_{n-1}$ a fixed primitive $1$-form, i.e., $\omega_\NN=\delta\alpha_\NN$.\footnote{
Note that (cf. \cite{Roytenberg}), since $\omega_\NN$ is closed (being a symplectic form), there is, assuming $n\neq 1$, a distinguished primitive coming from the contraction with the Euler vector field on $\NN$: $\delta\iota_\mathbb{E} \omega_\NN=L_\mathbb{E} \omega_\NN=(n-1) \omega_\NN$. Thus, $\omega_\NN=\delta\left(\frac{1}{n-1}\iota_\mathbb{E} \omega_\NN\right)$ and so we may set $\alpha_\NN=\frac{1}{n-1}\iota_\mathbb{E} \omega_\NN$.
}
\end{itemize}

\textbf{A notation/terminology convention.} We will be using $\delta$ for de Rham operator on (forms on) the target and on the space of fields and will reserve $d$ for the de Rham operator on (forms on) the source. Also, we will be using the terms degree, internal degree or the ghost number interchangeably for the $\ZZ$-grading on functions (and forms etc.) on the target and on the mapping space, and will denote it by $\left|- \right|$. 

One constructs the space of fields of the model as the mapping space between graded supermanifolds:\footnote{\label{l25_footnote_mappings_morphisms}
For details on the mapping spaces in the category of graded supermanifolds, we refer to \cite{CattaneoSchaetz} and \cite{CMR}, Appendix B.2.
In particular, for $\MM,\NN$ two graded supermanifolds, one first introduces the space of \textit{morphisms} (or \textit{degree preserving mappings}):
a morphism $\phi: \MM\ra\NN$ is a morphism of sheaves of graded commutative algebras of functions (in particular, we have a map of bodies as smooth manifolds $\phi_0:M\ra N$ and for each open set in the body $U\subset N$ one has a morphism of commutative graded algebras $\phi_U^*:\left.\OO_\NN\right|_U\ra \left.\OO_\MM\right|_{\phi_0^{-1}(U)}$). The space of morphisms, denoted $\mr{Mor}(\MM,\NN)$ or $\mr{Map}_0(\MM,\NN)$, can be endowed with the structure of an infinite-dimensional (Fr\'echet) manifold. In the category of graded manifolds, the space of morphisms fails the adjunction property: $\mr{Mor}(\LL,\mr{Mor}(\MM,\NN))\neq \mr{Mor}(\LL\times \MM,\NN)$ for $\LL,\MM,\NN$ a general triple of graded manifolds (in fact, it fails even in the category of graded vector spaces).
However, if one extends from morphisms to all, possibly non-degree preserving, mappings, the corresponding $\mr{Hom}$ space -- the full mapping space -- satisfies the adjunction property: $\mr{Map}(\LL,\mr{Map}(\MM,\NN))=\mr{Map}(\LL\times\MM,\NN)$. Restricting to degree zero, we have the property $\mr{Mor}(\LL,\mr{Map}(\MM,\NN))=\mr{Mor}(\LL\times\MM,\NN)$. The latter property characterizes uniquely the mapping space $\mr{Map}(\MM,\NN)$ by probing it with 
morphisms of any test graded manifold $\LL$ into it. In other words, it is a description of the mapping space via the functor of points (a.k.a. the \textit{$S$-point formalism}). In the special case when $\NN$ is a graded vector space, one has $\mr{Map}(\MM,\NN)=C^\infty(\MM)\otimes \NN$ -- a graded infinite-dimensional manifold, whereas its degree zero part $\mr{Mor}(\MM,\NN)=[C^\infty(\MM)\otimes\NN]_0$  is an ordinary (non-graded) infinite-dimensional manifold. For $\NN$ a general graded manifold, mappings whose range fits in a single coordinate chart can be described by this construction (in particular, this applies to mappings which are infinitesimally close to constant mappings, which are of particular importance in perturbative quantization). 
More care (correct gluing of local pictures) is required to descibe mappings whose body is a non-trivial morphism $\phi$, whose range does not fit in a single coordinate chart on $\NN$ -- this is of particular importance for evaluating the perturbative contributions of instantons in AKSZ theories. 
%
}
\begin{equation}\label{l25_AKSZ_mapping_space}
\FF=\mr{Map}(T[1]M,\NN)
\end{equation}
We first focus on the local description of the AKSZ construction and then give a global (coordinate-invariant) description.

\textbf{Local description of the construction.}
Locally, if $x^a$ are local coordinates on $\NN$ and $u^i$ are local coordinates on $M$, with $\theta^i=du^i$ the corresponding degree $1$ fiber coordinates on $T[1]M$, one introduces the \textit{superfield} $X^a$ associated to the target coordinate $x^a$ as 
\begin{equation}\label{l25_superfield}
X^a=\sum_{k=0}^n\;\; \sum_{1\leq i_1<\cdots < i_k \leq n} X^a_{i_1\cdots i_k}(u)\;\theta^{i_1}\cdots \theta^{i_k}
\end{equation}
Here the 
coefficient functions $X^a_{i_1\cdots i_k}\in C^\infty(M)$ can be seen as \textit{coordinates} on the mapping space (\ref{l25_AKSZ_mapping_space}) of degree $|X^a_{i_1\cdots i_k}|=|x^a|-k$, where $|x^a|$ is the degree of $x^a$ as a coordinate on the target. Functions $X^a_{i_1\cdots i_k}(u)$ have to satisfy certain natural transition rule on the overlaps of coordinate charts on the target (and transform as local coefficients of a $k$-form on the source on the overlaps of source coordinate charts).

In the special case when the target $\NN$ is a graded vector space, one can cover it by a single coordinate chart and identify the mapping space with the space of $\NN$-valued nonhomogeneous forms on the source, 
$$\FF=\Omega^\bt(M)\otimes \NN$$

Assume that the target symplectic form is locally written as 
$$\omega_\NN=\frac12 \sum_{a,b} \omega_{ab}(x)\, \delta x^a \wedge \delta x^b = \delta\underbrace{(\sum_a \alpha_a(x)\, \delta x^a)}_{\alpha_\NN}$$
Then one constructs the odd-symplectic form (the BV 2-form) on the mapping space as
\begin{equation}\label{l25_omega}
\omega=\frac12 \sum_{a,b} \int_M \omega_{ab}(X)\cdot \delta X^a\wedge \delta X^b \qquad \in \Omega^2(\FF)_{-1}
\end{equation}
Here one understands the superfield $X^a$ as a coordinate function on $\FF$ taking values in forms on $M$. Thus, in (\ref{l25_omega}) the integrand is a $2$-form on $\FF$ valued in forms on $M$; we pick the top-degree part of the integrand w.r.t. the form degree on $M$ and integrate. Expression $\omega_{ab}(X)$ means, informally, ``substitute the superfield $X^c$ instead of the target coordinate $x^c$ into the expression for $\omega_{ab}(x)$''. (We will discuss a more invariant description below.)

The BV action of the model is defined as follows:
\begin{equation}\label{l25_AKSZ_action}
S
=\int_M \;\; \sum_a \alpha_a(X)\, dX^a + 
 \Theta_\NN (X) \qquad \in C^\infty(\FF)_0
\end{equation}
Here $d=\sum_i \theta^i \frac{\dd}{\dd u^i}$ is the de Rham differential on the source manifold $M$.
Note that the action (\ref{l25_AKSZ_action} is manifestly split into two parts -- the source (or ``kinetic'') term and the target (or ``interaction'') term.

The cohomological vector field on $\FF$ can be written in terms of variational derivatives w.r.t. the superfield components:
$$Q=\int_M (dX^a)_{i_1\cdots i_k}
(u)\frac{\delta}{\delta X^a_{i_1\cdots i_k}
(u)}+Q_\NN(X(u))^a\frac{\delta}{\delta X^a(u)}\qquad \in \mathfrak{X}(\FF)_1$$

\textbf{Global (coordinate-free) description of the construction.}
In the mapping space (\ref{l25_AKSZ_mapping_space}) the target $\NN$ is a dg manifold with a cohomological vector field $Q_\NN$; the source $T[1]M$ is also a dg manifold, with cohomological vector field defined by the de Rham operator $d=d_M$ on $M$. They can be viewed as infinitesimal diffeomorphisms (parameterized by an odd flow parameter) of the target and source of the mapping space, and can be lifted, via left and right action on mappings, respectively, to two infinitesimal diffeomorphisms of the mapping space. The latter correspond to two cohomological vector fields, $d_M^\mr{lifted}, Q_\NN^\mr{lifted} \;\;\in \mathfrak{X}(\FF)_1$ -- the lifts of $d_M$ and $Q_\NN$ to the mapping space via the left/right action on mappings. We set the total cohomological vector field on $\FF$ to be the sum of those, 
\begin{equation}\label{l25_Q_invar}
Q=d_M^\mr{lifted}+Q_\NN^\mr{lifted}
\end{equation}
It is manifestly split into a source part and a target part.

Consider the following diagram:
\begin{equation} \label{l25_transgression_CD}
\begin{CD}
T[1]M\times \FF @>\mr{ev}>> \NN \\
@V\pi VV \\
\FF
\end{CD}
\end{equation}
where $\mr{ev}:\; T[1]M\times \mr{Map}(T[1]M,\NN)\ra \NN$ is the evaluation mapping and $\pi: T[1]M\times \FF\ra \FF$ is the projection to the second factor. Given a $p$-form on the target $\phi\in \Omega^p(\NN)_j$, we can pull it back to a form $\mr{ev}^*\phi$ on $T[1]M \times \FF$, of which we pick the bi-degree $(0,p)$ component according to the bi-grading of the de Rham complex of a direct product (i.e. the component of $\mr{ev}^*\phi$ in $\Omega^0(T[1]M)\widehat\otimes \Omega^p(\FF)$). Then we integrate the result against the canonical Berezinian on $T[1]M$ (in other words, we use the identification $C^\infty(T[1]M)\cong \Omega^\bt(M)$ and integrate the result as a differential form on $M$, picking its top-degree component); we denote this integration procedure by $\pi_*$ (viewed as a pushforward along the projection in (\ref{l25_transgression_CD})). Thus, we have a mapping (the \textbf{transgression map}) sending forms on the target to forms on the mapping space:
\begin{equation}
\mathbb{T}=\pi_* \mr{ev}^*:\quad \Omega^p(\NN)_j \ra \Omega^p(\FF)_{j-n}
\end{equation}
Note that the trangression $\mathbb{T}$ preserves the de Rham degree of a form but changes its internal grading by $-n$ (since the integration $\pi_*$ is nontrivial on top forms on $M$, identified with elements of $C^\infty(T[1]M)_n$, and maps them to numbers, thus dropping the internal degree by $n$).  

Note that the superfield (\ref{l25_superfield}) is simply the pullback of a target coordinate function $x^a$ by the evaluation: 
$$X^a=\mr{ev}^*x^a \qquad \in\quad  C^\infty(T[1]M)\widehat\otimes C^\infty(\FF)
\cong \Omega^\bt(M)\widehat\otimes C^\infty(\FF)$$
Written in local coordinates, the  transgression acts on a form $\phi$ on the target as follows: 
$$\phi=\sum_{a_1,\ldots,a_p}\phi_{a_1\cdots a_p}(x) \, \delta x^{a_1}\wedge\cdots \wedge \delta x^{a_p}\quad 
\stackrel{\mathbb{T}}{\mapsto} 
\underbrace{\int_M}_{p_*} \underbrace{\sum_{a_1,\ldots, a_p} \phi_{a_1\cdots a_p}(X)\, \delta X^{a_1}\wedge\cdots \wedge \delta X^{a_p}}_{\mr{ev}^* \phi}
$$
In particular, the coefficients $\phi_{\cdots }(X)$ on the right hand side are the pullbacks of the coefficients $\phi_{\cdots }(x)$ by the evaluation, which can be informally understood as the substitution of the superfields $X^a$ instead of target coordinate functions $x^a$ in $\phi_{\cdots}(x)$. Note that $\mr{ev}^*$ is a homomorphism of commutative dg algebras and thus is defined by its action on generators $\mr{ev}^*:\; x^a\mapsto X^a$.

Using the transgression map, one defines the BV symplectic structure on $\FF$ as the transgression of the target symplectic form:
\begin{equation}\label{l25_omega_invar}
\omega=\mathbb{T}\omega_\NN\quad \in \Omega^2(\FF)_{-1}
\end{equation}
Note that the internal degree of $\omega_{\NN}$ was $(n-1)$ but became $-1$ after the transgression to the mapping space. The action is defined as follows:
\begin{equation}\label{l25_S_invar}
S=\iota_{d_M^\mr{lifted}}\mathbb{T}\alpha_\NN + \mathbb{T}\Theta_\NN\qquad \in C^\infty(\FF)_0
\end{equation}
The first and second terms here correspond to the first and second terms in the local expression (\ref{l25_AKSZ_action}). Note that $\mathbb{T}\alpha_\NN= \int_M\sum_a \alpha_a(X)\,\delta X^a \;\in \Omega^1(\FF)_{-1}$. Contracting this expression with the vector field $d_M^\mr{lifted}$ amounts, in practice, to replacing $\delta X^a$ with $d X^a$, thus leading to the first term in (\ref{l25_AKSZ_action}).

\begin{theorem}
The data $(\FF,\omega,Q,S)$ as defined above by (\ref{l25_AKSZ_mapping_space},\ref{l25_omega},\ref{l25_Q_invar},\ref{l25_S_invar}) satisfies the axioms of a classical BV theory (cf. Section \ref{sec: classBV}). In particular:
\begin{enumerate}[(i)]
\item\label{l25_i} $\omega$ is odd-symplectic,
\item\label{l25_ii} $S$ satisfies the classical master equation $\{S,S\}=0$,
\item\label{l25_iii} $Q$ is the Hamiltonian vector field for $S$,
\item\label{l25_iv} $Q^2=0$.
\end{enumerate}
\end{theorem}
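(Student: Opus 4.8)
The plan is to verify the four items by reducing everything to properties of the transgression map $\mathbb{T}=\pi_*\mr{ev}^*$ together with the known structure on the target, and by the natural splitting of all relevant objects into a source part and a target part.

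\textbf{Step 1: odd-symplecticity of $\omega$.} I would first check that $\omega=\mathbb{T}\omega_\NN$ is closed: since $\mr{ev}^*$ commutes with $\delta$ and $\pi_*$ is the integration over the compact source $M$ which also commutes with $\delta$ on $\FF$ (Fubini/Stokes, no boundary as $M$ is closed), we get $\delta\omega=\mathbb{T}(\delta\omega_\NN)=0$. The internal degree bookkeeping gives $|\omega|=|\omega_\NN|-n=(n-1)-n=-1$, so $\omega$ is odd. Non-degeneracy is the one genuinely analytic point: in the local superfield description $\omega=\frac12\int_M \omega_{ab}(X)\,\delta X^a\wedge\delta X^b$, and one pairs the component $X^a_{i_1\cdots i_k}$ of a superfield with the component $X^b_{j_1\cdots j_{n-k}}$ of a superfield via the (nondegenerate) matrix $\omega_{ab}$ together with the Poincaré pairing of forms of complementary degree on $M$; since $M$ is oriented and closed, this pairing on form-components is perfect, and $\omega_{ab}$ is invertible, so $\omega$ is weakly non-degenerate (which is the appropriate notion in this infinite-dimensional Fréchet setting, cf. the discussion of $\omega_\Sigma$ in the Chern--Simons section). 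This establishes (\ref{l25_i}).

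\textbf{Step 2: $Q$ is Hamiltonian for $S$, i.e. $\iota_Q\omega=\delta S$.} Here I would use the source/target splitting $Q=d_M^{\mr{lifted}}+Q_\NN^{\mr{lifted}}$ and $S=\iota_{d_M^{\mr{lifted}}}\mathbb{T}\alpha_\NN+\mathbb{T}\Theta_\NN$. For the target part: $Q_\NN^{\mr{lifted}}$ is (the lift of) the Hamiltonian vector field of $\Theta_\NN$ on $\NN$, so using naturality of transgression one gets $\iota_{Q_\NN^{\mr{lifted}}}\mathbb{T}\omega_\NN=\mathbb{T}(\iota_{Q_\NN}\omega_\NN)=\mathbb{T}(\delta\Theta_\NN)=\delta\,\mathbb{T}\Theta_\NN$. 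For the source part: $d_M^{\mr{lifted}}$ preserves $\mr{ev}^*$ of any target form (it acts only on the $T[1]M$ slot, through the de Rham vector field) and $\mathbb{T}\omega_\NN=\delta\,\mathbb{T}\alpha_\NN$, so a Cartan-calculus computation $\iota_{d_M^{\mr{lifted}}}\delta\,\mathbb{T}\alpha_\NN = L_{d_M^{\mr{lifted}}}\mathbb{T}\alpha_\NN - \delta\,\iota_{d_M^{\mr{lifted}}}\mathbb{T}\alpha_\NN$, combined with $L_{d_M^{\mr{lifted}}}\mathbb{T}\alpha_\NN=\mathbb{T}(L_{d_M}\alpha_\NN)$-type identities and the fact that $d_M$ is itself a "de Rham" differential so that its Lie derivative on the transgressed $1$-form integrates to zero over closed $M$, yields $\iota_{d_M^{\mr{lifted}}}\mathbb{T}\omega_\NN=-\delta\,\iota_{d_M^{\mr{lifted}}}\mathbb{T}\alpha_\NN$; a sign check (the kinetic term in (\ref{l25_AKSZ_action}) is $\int_M\alpha_a(X)\,dX^a$, which is $\iota_{d_M^{\mr{lifted}}}\mathbb{T}\alpha_\NN$ up to the convention for contracting $\delta X^a\mapsto dX^a$) reconciles the sign. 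Adding the two pieces gives $\iota_Q\omega=\delta S$, which is (\ref{l25_iii}).

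\textbf{Step 3: CME and $Q^2=0$.} Given Step 2, (\ref{l25_ii}) and (\ref{l25_iv}) are essentially equivalent: $\{S,S\}=\iota_Q\iota_Q\omega=\iota_Q\delta S$, and $Q^2=\tfrac12[Q,Q]=X_{\{S,S\}/2}$ by the correspondence between Hamiltonians and Hamiltonian vector fields on the (weakly) symplectic $\FF$, so it suffices to show $\{S,S\}=0$, i.e. $Q(S)=0$. I would again split: $Q(S)=(d_M^{\mr{lifted}}+Q_\NN^{\mr{lifted}})(S)$. The target term contributes $\mathbb{T}(Q_\NN\Theta_\NN)=\mathbb{T}(\tfrac12\{\Theta_\NN,\Theta_\NN\}_{\omega_\NN})=0$ by the target master equation, plus a cross term $d_M^{\mr{lifted}}(\mathbb{T}\Theta_\NN)$ which by naturality equals $\mathbb{T}$ of a total-derivative density on $M$ hence integrates to zero; similarly $Q_\NN^{\mr{lifted}}$ applied to the kinetic term produces $\int_M$ of $d(\,\cdot\,)$ which vanishes on closed $M$; and $d_M^{\mr{lifted}}$ applied to the kinetic term vanishes because $d_M^2=0$ on the source. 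So all four terms of $Q(S)$ vanish, giving (\ref{l25_ii}) and (\ref{l25_iv}).

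\textbf{Main obstacle.} The calculations themselves are short once phrased via $\mathbb{T}$; the real care goes into two places. First, making "$\omega_{ab}(X)$" and the naturality statements $\mr{ev}^*\circ(\text{target operation})=(\text{lifted operation})\circ\mr{ev}^*$ precise requires the functor-of-points / $S$-point description of the mapping space (footnote \ref{l25_footnote_mappings_morphisms}), and in the global-target case one must worry about mappings whose image does not lie in a single chart — I would either restrict to targets that are graded vector spaces (where $\FF=\Omega^\bt(M)\otimes\NN$ and everything is transparent) or invoke the gluing of local pictures and note that $\omega$, $S$ are built from globally defined objects ($\omega_\NN$, $\Theta_\NN$, $\alpha_\NN$ up to the exact-ness choice) so the local formulas patch. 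Second, weak non-degeneracy of $\omega$ in the Fréchet setting: I would state it in the same "injective sharp map" sense used earlier for the Chern--Simons phase space rather than claim a strict symplectomorphism onto the dual, and note that this is all that is needed for the Hamiltonian/Poisson formalism to make sense here. That infinite-dimensional subtlety, together with the sign conventions relating $\delta$ on $\FF$, $d$ on $M$, and the contraction $\delta X^a\mapsto dX^a$, is where I expect to spend most of the effort.
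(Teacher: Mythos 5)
Your proposal is correct and follows essentially the same route as the paper: closedness of $\omega$ via $\mathbb{T}$ being a chain map, non-degeneracy by inspection from that of $\omega_\NN$, the Cartan-calculus computation of $\iota_Q\omega=\delta S$ using that $L_{d_M^{\mr{lifted}}}$ of any transgressed object integrates to a total derivative, and $Q^2=0$ from the automatic (anti)commutation of the two lifted cohomological vector fields. The only minor divergence is in (ii): you verify $Q(S)=0$ directly by the four-term source/target splitting, whereas the paper deduces the classical master equation more abstractly from $Q^2=0$ together with $Q=X_S$ — both are valid and equally short.
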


\begin{proof} Note that, using Stokes' theorem on $M$, we obtain that $\mathbb{T}$ is a chain map, i.e., $\mathbb{T}\delta\phi=\delta \mathbb{T}\phi$ for any form $\phi$ on the target. Thus, e.g., $\delta \omega =\delta\mathbb{T}\omega_\NN=\mathbb{T}\delta\omega_\NN=0$, i.e., $\omega$ is closed. The fact that $\omega$ is weakly non-degenerate follows by inspection of (\ref{l25_omega}) from non-degeneracy of $\omega_\NN$. 

The fact $Q$ squares to zero follows from the construction (\ref{l25_Q_invar}): it is built out of the liftings of two cohomological vector fields on the source and the target -- the liftings automatically square to zero individually and also they automatically anti-commute with each other.

Let us check that $Q$ is the Hamiltonian vector field for $S$:
\begin{multline}
\iota_Q \omega = \iota_{d_M^\mr{lifted}}\mathbb{T}\omega_\NN + \iota_{Q_\NN^\mr{lifted}}\mathbb{T} \omega_\NN=
\iota_{d_M^\mr{lifted}}\delta\mathbb{T} \alpha_\NN + \mathbb{T}\iota_{Q_\NN}\omega_\NN=\\
=
\underbrace{L_{d_M^\mr{lifted}} \mathbb{T} \alpha_\NN}_0+\delta \iota_{d_M^\mr{lifted}}\alpha_\NN+\delta \mathbb{T}\Theta_\NN = \delta S
\end{multline}
Here we used the observation that the Lie derivative along $d_M^\mr{lifted}$ of any transgressed object $\mathbb{T}(\cdots)$ is the integral over $M$ of a total derivative and thus vanishes by Stokes' theorem on $M$.

The classical master equation follows from $Q^2=0$ and the fact that $S$ is the Hamiltonian for $Q$.
\end{proof}

\begin{remark}
A more general form of the AKSZ construction allows the source to be any dg manifold $\MM$ with a cohomological vector field $Q_\MM$, endowed with a compatible Berezinian $\mu_\MM$ of degree $-n$ (linked to the degrees of the target symplectic form and Hamiltonian), i.e. defining an integration map $\int_\MM \mu_\MM \cdot (\cdots): \; C^\infty(\MM)_n\ra \RR$. The discussion above goes through, changing everywhere $T[1]M$ to $\MM$ (in particular, the space of fields becomes the mapping space $\FF=\mr{Map}(\MM,\NN)$), changing $d_M$ to $Q_\MM$ and replacing the integration of forms on $M$ with integration of functions on $\MM$ w.r.t. $\mu_\MM$. Source of the form $T[1]M, d_M$ with canonical Berezinian corresponding to integration of forms on $M$ is the main class of examples and leads to topological (diffeomorphism invariant) AKSZ sigma models. However, one can study other interesting examples. E.g., 
for $M$ a closed complex manifold of (complex) dimension $n$ endowed with a holomorphic volume form $v\in \Omega^{n,0}(M)$, 
one can set $\MM=T^{0,1}[1]M$ -- the shifted anti-holomorphic tangent bundle of $M$. 
Here $C^\infty(\MM)$ is identified with the Dolbeaux complex $\Omega^{0,\bt}(M)$ of $M$ and the cohomological vector field on $\MM$ corresponds to the Dolbeault differential $\bar\partial$ on $M$. The integration on $\MM$ in this case is defined by pairing with the holomorphic volume form $v$. The AKSZ construction with such a source leads to ``holomorphic'' models depending on the complex structure on $M$.
\end{remark}

\begin{remark}\label{l25_rem_AKSZ_to_cl}
An AKSZ theory can be regarded as a BV extension of a certain classical gauge system. The data of the latter can be read off from the AKSZ theory, by expanding the BV action in the homogeneous components of fields, according to ghost degree. In particular, the classical fields are the \textit{degree preserving} mappings (or \textit{morphisms}, see footnote \ref{l25_footnote_mappings_morphisms}),
$F_{cl}=\mr{Map}_0(T[1]M,\NN)$ inside the space of all mappings (this is the body, or the ghost number zero part of the entire mapping space $\FF=\mr{Map}(T[1]M,\NN)$). The classical action $S_{cl}$ is the BV action $S$, as given by the AKSZ construction (\ref{l25_AKSZ_action},\ref{l25_S_invar}), restricted to ghost number zero fields, $S_{cl}=S|_{X^a\ra X_{cl}^a}$ where $X_{cl}^a$ is the $\gh=0$ part of the superfield $X^a$ which is the $k=|x^a|$ term in the sum (\ref{l25_superfield}) if $0\leq |x^a|\leq n$ and $X^a_{cl}=0$ otherwise (thus the classical field corresponding to a coordinate $x^a$ on the target is a $k$-form field on the source $M$, with $k=|x^a|$, whereas the whole superfield is a nonhomogeneous form on the source).\footnote{\label{l25_footnote_AKSZ_BV_from_cl}
Miraculously, under the assumption that 
$\NN$ is nonnegatively graded (which implies that degrees of coordinates on $\NN$ are in the  range $0\leq |x^a|\leq n-1$ due to the existence of an $(n-1)$-symplectic structure $\omega_\NN$)
the converse is true: one can obtain the BV action of the theory $S$ from the classical action $S_{cl}$ by substituting superfields instead of the classical fields, $S=\left. S_{cl}\right|_{X_{cl}^a\ra X^a}$, i.e., substituting nonhomogeneous forms instead of fields given by forms of fixed degree. This is a special feature of AKSZ theories and doesn't hold for general gauge theories.} Generators of infinitesimal gauge transformations correspond to degree $1$ part of the mapping space (mappings shifting degree by $1$), $\mr{Map}_1(T[1]M,\NN)$. Gauge symmetry transformations are read off of the terms in the BV action which 
are of form 
$\int_M X_{\gh=-1}X_{\gh=1}(\cdots)$
i.e. 
are bilinear in $\gh=1$ fields (ghosts) and  $\gh=-1$ fields (anti-fields for classical fields), 
with $(\cdots)$ depending only on $X_{cl}=X_{\gh=0}$. Structure constants of the gauge symmetry algebra are read off from the terms of form $\int_M X_{\gh=-2}X_{\gh=1}X_{\gh=1}(\cdots)$ in the BV action.
\end{remark}

\begin{remark}
The ghost degree zero part of the mapping space $\mr{Map}(T[1]M,\NN)$ contains an important submanifold $\mr{Map}_{0,dg}(T[1]M,\NN)$ given by \textit{morphisms of dg manifolds} from $T[1]M$ to $\NN$, i.e. degree zero maps intertwining the source and target cohomological vector fields (in other words, morphisms of sheaves of commutative \textit{differential} graded algebras $\phi_0:M\ra N$, $\phi^*:\left.\OO_\NN\right|_U\ra \left. \OO_{T[1]M}\right|_{\phi_0^{-1}(U)}$). It is the degree zero part of the vanishing locus of the total cohomological vector field $Q$ on the mapping space (\ref{l25_Q_invar}).\footnote{More precisely, with our sign conventions, the zero locus of $Q$ in degree zero is given by \textit{anti-dg morphisms}, i.e., degree-preserving maps $\phi:T[1]M\ra \NN$ satisfying $\phi^*Q_\NN=-d_M\phi^* :\;\; C^\infty(\NN)\ra C^\infty(T[1]M)$} In terms of the corresponding classical gauge system, it is the set of solutions of Euler-Lagrange equations defined by the classical action $S_{cl}$ of Remark \ref{l25_rem_AKSZ_to_cl}.
\end{remark}

\subsubsection{Example: Chern-Simons theory}
Chern-Simons theory in BV formalism is an 
instance of an AKSZ theory (and the original motivating example), with $n=3$, see \cite{AKSZ}. 

Fix $M$ a closed oriented $3$-manifold and $\g$ a Lie algebra with a nondegenerate ad-invariant pairing $\lan,\ran$. For simplicity, we assume that 
the pairing is the Killing form
$\lan x,y \ran=\tr xy$ with the trace taken in the adjoint representation of $\g$. 

We define the AKSZ target as the degree-shifted Lie algebra $\NN=\g[1]$. Let $\{T^a\}$ be an orthonormal basis in $\g$. Associated to it are the degree (ghost number) $1$ coordinate functions $\psi^a$ on $\g[1]$; it is also convenient to introduce an element 
\begin{equation}\label{l25_psi}
\psi=\sum_a \psi^a T^a \;\;\in C^\infty(\g[1])\otimes\g
\end{equation} -- a generating function for coordinates on $\g[1]$, or a universal $\g$-valued coordinate on $\g[1]$. The structure of an exact Hamiltonian dg manifold of degree $2$ on $\NN=\g[1]$ is defined as follows.
\begin{itemize}
\item The cohomological vector field is 
$$Q_\NN=\frac12\sum_{a,b,c} f^{abc} \psi^b\psi^c \frac{\dd}{\dd \psi^a} = \frac12 \lan [\psi,\psi] ,\frac{\dd}{\dd \psi}\ran$$
Here $f^{abc}$ are the structure constants of the Lie bracket in $\g$ in the basis $T^a$. Note that $C^\infty(\NN)=\wedge^\bt \g^*=C^\bt_{CE}(\g)$ is the Chevalley-Eilenberg complex of the Lie algebra $\g$ and $Q_\NN=d_{CE}$ is the Chevalley-Eilenberg differential.
\item The degree $2$ symplectic structure is 
$$\omega_\NN=\frac12 \sum_{a} \delta \psi^a\wedge \delta \psi^a =\frac12 \lan\delta\psi,\delta\psi\ran\quad\in\Omega^2(\NN)_2$$
It is exact, with a distinguished primitive
$$\alpha_\NN=\frac12 \sum_{a}  \psi^a\wedge \delta \psi^a =\frac12 \lan \psi,\delta\psi\ran \quad \in\Omega^1(\NN)_2$$
\item The degree $3$ Hamiltonian is
$$\Theta_\NN=\frac16 \sum_{a,b,c} f^{abc}\psi^a\psi^b\psi^c=\frac16\lan\psi,[\psi,\psi]\ran\quad \in C^\infty(\NN)_3$$
\end{itemize}
Note that the equation $\{\Theta_\NN,\Theta_\NN\}=0$ follows from the fact that $\Theta$ is a Chevalley-Eilenberg cocycle.

The space of AKSZ fields (\ref{l25_AKSZ_mapping_space}) is the mapping space
$$\FF=\mr{Map}(T[1]M,\g[1]) 
$$
which can be conveniently identified with the graded vector space of $\g$-valued nonhomogeneous differential forms on $M$ with degree shifted by $1$:
$$\FF= \Omega^\bt(M)\otimes \g[1]$$
We parameterize fields by the total superfield (the superfield corresponding to the universal coordinate $\psi$, as in (\ref{l25_psi}), on the target)
$$\mc{A}=\sum_a T^a \mc{A}^a = \mc{A}^{(0)}+\mc{A}^{(1)}+\mc{A}^{(2)}+\mc{A}^{(3)}$$
where $\mc{A}^a$ is the superfield corresponding to the coordinate $\psi^a$ on the target; $\mc{A}^{(k)}$ is the component of $\mc{A}$ which is a $k$-form on $M$. Comparing to the notations of Section \ref{sss: FP via BV}, we have the following:
\begin{itemize}
\item The de Rham degree $0$, ghost number $1$ component $\mc{A}^{(0)}=c\;\;\in \Omega^0(M,\g)$ is the ghost, corresponding to the generators of the infinitesimal gauge symmetry in Chern-Simons theory (the infinitesimal automorphisms of the principal bundle over $M$). 
\item The de Rham degree $1$, ghost number $0$ component $\mc{A}^{(1)}=A\;\;\in \Omega^1(M,\g)$ is the classical connection field.
\item  The de Rham degree $2$, ghost number $-1$ component $\mc{A}^{(2)}=A^+\;\;\in \Omega^2(M,\g)$ is the anti-field for the classical field.
\item The de Rham degree $3$, ghost number $-2$ component $\mc{A}^{(3)}=c^+\;\;\in \Omega^3(M,\g)$ is the anti-field for the ghost.
\end{itemize}

The BV symplectic structure on fields is
$$\omega=\frac12\int\tr \delta \mc{A}\wedge \delta \mc{A}\;\;= \int\tr \left(\delta A\wedge \delta A^++\delta c\wedge \delta c^+\right)  \quad\in \Omega^2(\FF)_{-1}$$
-- we are writing both the expression in terms of the superfield, as given by the AKSZ construction, and its expansion in terms of de Rham/ghost number homogeneous components.
The action (\ref{l25_AKSZ_action}) becomes
\begin{equation}\label{l25_AKSZ_CS}
S=\int_M\tr \left(\frac12 \mc{A}\wedge d\mc{A}+\frac16 \mc{A}\wedge [\mc{A},\mc{A}]\right) 
\end{equation}
Expanded in homogenenous components, it yields:
\begin{equation}
S=\int_M \tr \left(\frac12 A\wedge dA +\frac16 A\wedge [A,A] \right)+\int_M\tr A^+ \wedge (dc+[A,c]) + \int_M\tr \frac12 c^+[c,c]
\end{equation}
Note that the three terms on the r.h.s. here correspond to the three terms in (\ref{l24_S_FP_BV_min}) -- the classical action, the term corrseponding to the infinitesimal gauge transformations of classical fields and the term associated to the Lie algebra structure of infinitesimal gauge symmetry.

Note also that the AKSZ construction gives the BV action (\ref{l25_AKSZ_CS}) for Chern-Simons theory which has the same form as the classical action of Chern-Simons theory where one substitutes the superfield in place of the classical field. In fact, this property holds universally for  AKSZ theories.

The cohomological vector field (the BRST operator) is:
\begin{multline*}
Q=\int_M \lan d\mc{A}+\frac12 [\mc{A},\mc{A}],\frac{\delta}{\delta\mc{A}}\ran \\
=\int_M \lan \frac12 [c,c],\frac{\delta}{\delta c} \ran +
 \lan d_A c,\frac{\delta}{\delta A} \ran + \lan F_A+[c,A^+],\frac{\delta}{\delta A^+} \ran
 + \lan d_A A^+ + [c,c^+] ,\frac{\delta}{\delta c^+} \ran 
\end{multline*}
where $F_A=dA+\frac12[A,A]$ is the curvature and $d_A=d+[A,-]$ is the covariant derivative.
I.e., $Q$ acts on fields as
$$Qc=\frac12[c,c],\quad QA=d_A c,\quad QA^+=F_A+[c,A^+],\quad Qc^+=d_A A^++[c,c^+]$$

Lorentz gauge-fixing in Chern-Simons theory corresponds to fixing an arbitrary Riemannian metric $g$ on $M$ and choosing the Lagrangian\footnote{
To be precise, for $\LL_g$  defined by the condition $d^*\mc{A}=0$ to be Lagrangian, we must consider forms on $M$ twisted by an acyclic flat connection, as in \cite{AS1,AS2}. Otherwise, for a non-acyclic (e.g. trivial) flat connection, $\LL_g$ is not Lagrangian in the entire $\FF$, however it is Lagrangian in the complement of the harmonic forms and can be used as a gauge-fixing Lagrangian for the fiber BV integral, yielding the effective BV action on harmonic forms, see \cite{CM}.
} $\LL_g\subset \FF$ in the space of fields given by the condition $d^*\mc{A}=0$ (with $d^*=-*d*$ the Hodge conjugate of de Rham operator)  -- i.e. all homogeneous components of the superfield $\mc{A}$ must be coclosed. One can present the space of fields as $\FF\cong T^*[-1] (\Omega^0(M,\g)[1]\oplus \Omega^1(M,\g))$. Then $\LL_g$ can be identified with the conormal bundle $N^*[-1]C$ for 
$$C=\{c+A\;|\; c\;\mbox{any},\; A\;\;\mbox{s.t.}\; d^*A=0\}\;\;\subset \Omega^0(M,\g)[1]\oplus \Omega^1(M,\g)$$

\subsubsection{Example: Poisson sigma model}
This is an example of an AKSZ theory with $n=2$. Let 
$\Sigma$ be a closed oriented surface (the source manifold). 
Let $(N,\pi)$ be a Poisson manifold, i.e. a manifold $N$ endowed with a bivector $\pi\in \Gamma(N,\wedge^2 TN)
$ satisfying the Poisson property $[\pi,\pi]_{NS}=0$ where $[-,-]_{NS}$ is the Nijunhuis-Schouten bracket of polyvectors.

\textbf{Classical picture.} Classically, fields of Poisson sigma model (originally introduced in \cite{SchallerStrobl, Ikeda}) are budle maps from $T\Sigma$ to $T^*N$; we denote the base map by $X:\Sigma\ra N$ and the fiber map by $\eta$:
\begin{equation}\label{l25_X_eta_comm_diag}
\begin{CD}\\
T\Sigma @>\eta>> T^*N \\
@VVV @VVV \\
\Sigma @>X>> N
\end{CD}
\end{equation}
Here the vertical maps are the bundle projections. Note that $\eta$ can be viewed as a section, over $\Sigma$, of the bundle $T^*\Sigma\otimes X^*T^*N$. The classical action of the Poisson sigma model is:
\begin{equation}\label{l25_S_PSM_cl}
S_{cl}(X,\eta)=\int_\Sigma \lan \eta\stackrel{\wedge}{,}dX \ran + \frac12 \lan \pi(X), \eta\wedge\eta \ran
\end{equation}
Here we understand that $\eta\in \Omega^1(\Sigma,X^*T^*N)$, $dX\in \Omega^1(\Sigma, X^*TN)$, thus in the first term we canonically pair the tangent and cotangent fibers to $N$ and wedge two $1$-forms on $\Sigma$, to obtain a number-valued $2$-form on $\Sigma$ which can be integrated. In the second term we understand that $\pi(X)=X^*\pi\in\Gamma(\Sigma,\wedge^2X^*TN)$ which can be paired to $\eta\wedge\eta \in \Omega^2(\Sigma,\wedge^2 X^* T^*N)$ to a number-valued $2$-form on $\Sigma$ which again can be integrated over $\Sigma$.

If $x^i$ are local coordinates on $N$ and $\pi=\sum_{i,j} \pi^{ij}(x) \dd_i\wedge \dd_j$ locally with $\pi^{ij}(x)$ the coefficient functions, then the action (\ref{l25_S_PSM_cl}) can be written as 
$$S_{cl}(X,\eta)=\int_\Sigma \sum_i\eta_i \wedge dX^i+\sum_{i,j}\frac12 \pi^{ij}(X)\eta_i\wedge \eta_j$$
Where  $\{\eta_i\}$ is a collection of $1$-forms on $\Sigma$ -- components of $\eta=\sum_{i}\eta_i dx^i$.

Euler-Lagrange equations corresponding to the action (\ref{l25_S_PSM_cl}) read:
\begin{equation}\label{l25_PSM_EL}
dX+\lan \pi(X),\eta \ran =0, \qquad d\eta+\frac12 \lan \dd \pi(X),\eta\wedge \eta\ran=0
\end{equation}
Or, more explicitly, 
$dX^i + \sum_j\pi^{ij}(X)\eta_j=0 ,\;\; d\eta_i + \sum_{j,k}\frac12 \dd_i \pi^{jk}(X)\eta_j\wedge \eta_k=0$.
These equations are equivalent to the condition that the bundle map (\ref{l25_X_eta_comm_diag}) is a morphism of Lie algebroids, where $T\Sigma$ carries the tautological Lie algebroid structure and the Lie algebroid structure on $T^*N$ is defined by the Poisson bivector $\pi$.

The action (\ref{l25_S_PSM_cl}) is invariant under the infinitesimal gauge transformations
{
\begin{equation} \label{l25_PSM_gauge_transf}
X\mapsto X+\epsilon \underbrace{\lan\pi(X), b\ran}_{\delta_b X}, 
\qquad \eta \mapsto \eta+\epsilon \underbrace{(db+\lan \dd\pi(X),\eta\wedge b \ran)}_{\delta_b \eta}
\end{equation}
}
or, more explicitly, 
$X^i\mapsto X^i+ \epsilon\sum_j\pi^{ij}(X)b_j,\quad \eta_i\mapsto \eta_i+\epsilon \left(db_i+\sum_{j,k}\dd_i\pi^{jk}(X)\eta_j b_k\right)$.
Here $b=\sum_i b_i dx^i\in \Gamma(\Sigma, X^*T^*N)$ is the generator of the gauge transformation.

\begin{remark} The commutator of two transformations of type (\ref{l25_PSM_gauge_transf}) is not a transformation of the same type, but it is one ``on shell'', i.e., modulo Eular-Lagrange equations (\ref{l25_PSM_EL}). Explicitly (see \cite{CF}):
\begin{align}
[\delta_b,\delta_{b'}]X^i& = \delta_{[b,b']}X^i \\
[\delta_b,\delta_{b'}]\eta_i &=  
\delta_{[b,b']}\eta_i+ \sum_{k,r,s}\dd_i\dd_k \pi^{rs}(X)b_r b'_s (dX^k+\sum_j\pi^{kj}(X)\eta_j) \label{l25_PSM_non-integrability}
\end{align}
with $\delta_bX$, $\delta_b\eta$ as in  (\ref{l25_PSM_gauge_transf}). Here the Lie bracket of two generators is defined by 
\begin{equation}\label{l25_PSM_gauge_alg}
[b,b'] =  \lan \dd\pi(X), b\wedge b' \ran
\end{equation}
or, explicitly, $[b,b']_i=\sum_{j,k}\dd_i\pi^{jk}(X)b_jb'_k$. 
Thus, the gauge symmetry here is given by a non-integrable distribution on the space of classical fields, which is only integrable on the space of solutions of Euler-Lagrange equations.\footnote{Unless components of $\pi$ are at most linear in coordinates; if they are, the distribution defined by (\ref{l25_PSM_gauge_transf}) is integrable everywhere, since the defect of integrability in (\ref{l25_PSM_non-integrability}) is proportional to the second derivative of the Poisson bivector.} The gauge symmetry is given by a Lie algebroid $E$ over $\mr{Map}(\Sigma,N)$ whose fiber over a map $X:\Sigma\ra N$ is the space of sections $\Gamma(\Sigma,X^*T^*N)$, with the anchor given by $b\mapsto \delta_b X\in T_X \mr{Map}(\Sigma,N)$ (as in the first formula in (\ref{l25_PSM_gauge_transf})) and with the bracket of sections defined by (\ref{l25_PSM_gauge_alg}). The algebroid $E$ acts on the entire space of classical fields; however, this is not a strict action, but rather an action up-to-homotopy. In this regard, Poisson sigma model is structurally similar to the Example \ref{l24_ex_action_up_to_homotopy}, replacing a Lie algebra by a Lie algebroid.
\end{remark}

\textbf{Poisson sigma model as an AKSZ theory.} 
BV extension of the Poisson sigma model can be constructed as an AKSZ theory with $n=2$, with the target $$\NN=T^*[1]N$$
-- the degree-shifted cotangent bundle of the Poisson manifold. To write explicit formulas, we use local coordinates $x^i$ on $N$ and corresponding degree $1$ coordinates $p_i$ on the odd cotangent fibers of $T^*[1]N$. The structure of an exact degree $1$ Hamiltonian dg manifold on $\NN$ is defined as follows.  
\begin{itemize}
\item The symplectic structure on $\NN$ is the canonical symplectic structure of the cotangent bundle, and for the distinguished primitive we choose the canonical Liouvulle $1$-form of the cotangent bundle. Locally: 
$$\omega_\NN=\sum_i \delta p_i\wedge \delta x^i\quad \in \Omega^2(\NN)_1,\qquad
\alpha_\NN=\sum_i p_i \delta x^i \quad \in \Omega^1(\NN)_1$$
\item The degree $2$ Hamiltonian is the lifting of $\pi$ to a function $\Theta=\widetilde\pi$ on $T^*[1]N$ quadratic in fibers. Explicitly: 
$$\Theta_\NN=\frac12 \sum_{i,j} \pi^{ij}(x)p_i p_j\quad \in C^\infty(\NN)_2$$
\item The cohomological vector field $Q_\NN$ corresponds to the Poisson-Lichnerowicz differential $[\pi,-]_{NS}$ on polyvector fields $\mc{V}^\bt(N)$ under the identification $C^\infty(T^*[1]N)\cong \mc{V}^\bt(N)$. Explicitly,
$$Q_\NN=\sum_{i,j} \pi^{ij}(x)p_i \frac{\dd}{\dd x^j}+\sum_{i,j,k} \frac12 \dd_k \pi^{ij}(x) p_i p_j \frac{\dd}{\dd p_k}$$
\end{itemize}

The AKSZ space of fields of the model is the mapping space
$$\FF=\mr{Map}(T[1]\Sigma,T^*[1]N)$$
The mappings are parameterized by two superfields
\begin{equation}\label{l25_PSM_superfields}
\til X=X^{(0)}+X^{(1)}+X^{(2)},\qquad  \til\eta = \eta^{(0)}+ \eta^{(1)}+ \eta^{(2)}
\end{equation}
where $X^{(0)}=X: \Sigma\ra N$ is the base map (assigned ghost number $0$) and $X^{(k)}\in \Omega^k(\Sigma, X^*TN)$ for $k=1,2$, with ghost number $\gh\, X^{(k)}=-k$. Field component $\eta^{(k)}\in \Omega^k(\Sigma,X^*T^*N)$ for $k=0,1,2$, with ghost number $\gh\, \eta^{(k)}=1-k$. In particular, $\eta^{(1)}=\eta$ is the classical $\eta$-fields, the fiber map $T\Sigma\ra T^*N$ covering the base map $X$.

In the notations of \cite{CF}, one denotes the component $\eta^{(0)}$ by $\beta$ (it is the ghost corresponding to the generator $b$ of the infinitesimal gauge symmetry (\ref{l25_PSM_gauge_transf})) and denotes the anti-field (the conjugate field, w.r.t. the BV 2-form)  for the field $\Phi\in\{X,\eta,\beta\}$ by $\Phi^+$. Thus, one has the following notations for the homogeneous field components:
$$\widetilde{X}=X+\eta^++\beta^+,\qquad \til\eta = \beta+\eta + X^+$$
(the terms here correspond one-to-one to terms in 
(\ref{l25_PSM_superfields}); the order is preserved).
Or, arranged against de Rham degree and ghost number:\\
\begin{tabular}{c|ccc}
& $\mr{deg}=0$ & $\mr{deg}=1$ & $\mr{deg} =2$ \\
\hline
$\gh=-2$ & & & $\beta^+$ \\
$\gh=-1$ & & $\eta^+$ & $X^+$ \\
$\gh=0$ & $X$ & $\eta$ & \\
$\gh=1$ & $\beta$  & &
\end{tabular}

The odd-symplectic form (\ref{l25_omega},\ref{l25_omega_invar}) on the space of BV fields is:
\begin{multline*}
\omega=\sum_i\int_\Sigma \delta \til\eta_i \wedge \delta \til X^i  = \\ 
= \sum_i\int_\Sigma \delta X^i\wedge\delta X^+_i+ \delta \eta_i\wedge \delta \eta^+_i + \delta \beta_i\wedge \delta \beta^{+i}\qquad \in \Omega^{2}(\FF)_{-1}
\end{multline*}
The BV action (\ref{l25_AKSZ_action}) is:
\begin{equation}\label{l25_S_PSM_AKSZ}
S
= \int_\Sigma \sum_i \til\eta_i \wedge d \til X^i + \sum_{i,j}\frac12 \pi^{ij}(\til X)\til\eta_i\wedge \til\eta_j
\end{equation}
Expanding this expression in homogeneous field components, one obtains:
\begin{multline}\label{l25_S_PSM_AKSZ_expanded}
S=\int_\Sigma \left(\sum_i \eta_i\wedge dX^i +\sum_{i,j}\frac12\pi^{ij}(X)\eta_j\wedge \eta_k \right) +\\
+
\int_\Sigma \sum_{i,j} X^+_i \pi^{ij}(X)\beta_j+
\int_\Sigma \sum_i \eta^{+i} \wedge \left(d\beta_i+\sum_{j,k}\dd_i\pi^{jk}(X)\eta_j\beta_k\right)
+\int_\Sigma \sum_{i,j,k} \frac12 \beta^+_i \dd_i \pi^{jk}(X)\beta_j\beta_k+\\
+\int_\Sigma \sum_{i,j,k,l} \frac14 \eta^{+i}\wedge\eta^{+j}\;\dd_i\dd_j\pi^{kl}(X)\;\beta_k\beta_l
\end{multline}
Here we have the following five terms on the r.h.s.:
\begin{itemize}
\item The first term is the classical action (\ref{l25_S_PSM_cl}).
\item Second and third term correspond to the gauge transformations of classical fields (\ref{l25_PSM_gauge_transf}). 
\item Fourth term corresponds to the Lie brackets (\ref{l25_PSM_gauge_alg}). 
\item Fifth term is the homotopy for the defect of integrability of the gauge symmetry as a distribution on the space of classical fields (\ref{l25_PSM_non-integrability}). 
\end{itemize}
Thus, the entire BV action has the structure similar to  the ansatz 
(\ref{l24_S_Lie_up_to_hom}) of the toy Example \ref{l24_ex_action_up_to_homotopy}. By a miracle of AKSZ construction, this whole structrure is developed from the BV action (\ref{l25_S_PSM_AKSZ}) which coincides, formally, with the original classical action, but with the classical fields $X,\eta$ replaced by the AKSZ superfields $\til X,\til\eta$ (cf. footnote \ref{l25_footnote_AKSZ_BV_from_cl}).

The appearance of a term quadratic in antifields in the BV action (\ref{l25_S_PSM_AKSZ_expanded}) is a sign that we are dealing with a theory where the gauge symmetry is given by a non-integrable distribution (cf. Remark \ref{l24_rem_antifield_grading}).

\marginpar{\LARGE{Lecture 26, 11/30/2016.}}

\subsubsection{Example: $BF$ theory}
The $BF$ theory is a close relative of Chern-Simons theory.\footnote{In particular, $2$-dimensional $BF$ theory arises, on one hand, as the dimensional reduction of Chern-Simons theory on a manifold of type $\Sigma\times S^1$ with $\Sigma$ a surface. On the other hand, it is the zero area limit of $2$-dimensional Yang-Mills theory \cite{Migdal, Witten_2Dgauge}. $BF$ theory in dimension $3$ is the same as Chern-Simons theory with structure Lie algebra $\g\ltimes \g^*$ (where one equips $\g^*$ with zero bracket and brackets between $\g$ and $\g^*$ are given by coadjoint action; the inner product is 
built out of the
canonical pairing between $\g$ and $\g^*$ summands).  Abelian $BF$ theory in arbitrary dimension was first introduced in \cite{SchwarzBF}; the main result of this paper is that its path integral quantization yields the Ray-Singer torsion.} It is a rare example of a topological field theory which exists in any dimension.
Let $\g$ be a finite-dimensional Lie algebra\footnote{Unlike Chern-Simons theory, one does not need a non-degenerate ad-invariant pairing on $\g$ to formulate the $BF$  theory. However, for the BV quantization to go through (for the quantum master equation), one needs to require that the Lie algebra $\g$ is \textit{unimodular}, i.e., $\tr_\g [x,-]=0$ for any $x\in\g$. In particular, this property is satisfied automatically if $\g$ is the Lie algebra of a \textit{compact} group $G$.} corresponding to a Lie group $G$ and $M$ a closed oriented $n$-manifold for any $n\geq 2$.

\textbf{Classical picture.} Classically, the theory is defined by the action
\begin{equation}\label{l26_BF_S_cl}
S_{cl}=\int_M \lan B \stackrel{\wedge}{,} F_A 
\ran
\end{equation}
(And hence the name ``$BF$'' theory.)
Here the classical fields are: 
\begin{itemize}
\item The connection $A$ in the trivial $G$-bundle on $M$, i.e. $A\in \Omega^1(M,\g)$.
\item An $(n-2)$-form  $B\in \Omega^{n-2}(M,\g^*)$.
\end{itemize}
In (\ref{l26_BF_S_cl}), $F_A=dA+\frac12 [A,A]\in \Omega^2(M,\g)$ is the curvature of the connection $A$; the brackets $\lan,\ran$ stand for the canonical pairing between $\g$ and $\g^*$.

\begin{remark}
In a more general setup, one fixes a (possibly, non-trivial) $G$-bundle $\PP$ over $M$.
Then the field $A$ is a principal connection,  $A\in \mr{Conn}(\PP)$ (note that its curvature $F_A\in \Omega^2(M,\mr{ad}(\PP))$ is a $2$-form with coefficients in the adjoint bundle $\mr{ad}(\PP)$)  and $B\in \Omega^{n-2}(M,\mr{ad}^*(\PP))$. Thus, the expression (\ref{l26_BF_S_cl}) still makes sense.
\end{remark}

Euler-Lagrange equations read:
\begin{eqnarray}
F_A &=& 0 \\
d_A B &=& 0
\end{eqnarray}
Here $d_A B= (d+\mr{ad}^*_A)B 
$ is the 
de Rham differential of $B$ twisted by the connection $A$. 
Thus, a solution of Euler-Lagrange equations is a pair $(A,B)$ with $A$ a \textit{flat} connection and $B$ a horizontal (covariantly constant) $(n-2)$-form.

Action (\ref{l26_BF_S_cl}) is invariant under the following two types of gauge transformations:
\begin{align}\label{l26_BF_gauge_transf_1}
(A,B)&\mapsto& (A^g,B^g)=&\left(\;\; g^{-1}A g+g^{-1}dg\;\; ,\;\; \mr{Ad}^*_{g^{-1}} (B)\;\; \right) \\
\label{l26_BF_gauge_transf_2}
(A,B)&\mapsto& (A^t,B^t)=&\left(\;\;  A \;\; ,\;\; B+d_A t   \;\;\right)
\end{align}
with $g:M\ra G$ and $t\in \Omega^{n-3}(M,\g^*)$ the generators of the gauge transformations. Thus, the gauge symmetry is given by the action of the group $\mc{G}=\mr{Map}(M,G)\ltimes \Omega^{n-3}(M,\g^*)$ on the space of classical fields $F_{cl}=\Omega^1(M,\g)\oplus \Omega^{n-2}(M,\g^*)$. The group $\mc{G}$ acts on the classical fields with nontrivial stabilizers. In particular, if $(A,B)$ is a solution of Euler-Lagrange equations (in fact, we just need $A$ to be flat for this), 
elements $(g,t)$ and $(\til g,\til t)$ of $\mc{G}$ act in the same way on $(A,B)$ if 
\begin{equation}\label{l26_BF_gauge_step2}
(\til g,\til t)=(g,t+d_A t')
\end{equation}
with generator $t'\in \Omega^{n-4}(M,\g^*)=: \mc{G}_{2}$ (we view $\mc{G}_2$ as an abelian Lie group). In particular, the stabilizer of a solution of Euler-Lagrange equations $(A,B)\in F_{cl}$ under $\mc{G}$-action contains the orbit of the unit of $\mc{G}$ under the shifts $\mc{G}_{2}\ra \mc{G}$  given by  (\ref{l26_BF_gauge_step2}). Furthermore, shifts (\ref{l26_BF_gauge_step2}) are the same if $\til t'=t'+d_A t''$ with $t''\in \Omega^{n-5}(M,\g^*)=: \mc{G}_{3}$, etc. Thus, over the space $EL\subset F_{cl}$ of solutions of Euler-Lagrange equations, one has the ``tower of reducibility'' of gauge symmetry (cf. Section \ref{sss: higher ghosts}):
\begin{equation}\label{l26_BF_tower}
\mc{G}_{n-2}\acts \cdots \acts\mc{G}_2\acts \mc{G}=\mc{G}_1\acts EL 
\end{equation}
where,  for $k\geq 2$, groups  $\mc{G}_k=\Omega^{n-2-k}(M,\g^*)$ are abelian and act on  $\mc{G}_{k-1}$ via shifts 
\begin{equation}\label{l26_BF_shifts}
t_{k-1}\mapsto t_{k-1}+d_A t_k
\end{equation} 
The group $\mc{G}_1:=\mc{G}=\mr{Map}(M,G)\ltimes\Omega^{n-3}(M,\g^*)$ is non-abelian.

The infinitesimal picture is that the gauge symmetry is given by a Lie algebroid 
\begin{equation}\label{l26_BF_gauge_sym_infin}
\underbrace{\Omega^{0}(M,\g)\oplus \Omega^{n-3}(M,\g^*) }_{\mr{Lie}(\mc{G})}\ra T F_{cl}
\end{equation}
over $F_{cl}$. The anchor corresponds to the infinitesimal version of gauge transformations (\ref{l26_BF_gauge_transf_1},\ref{l26_BF_gauge_transf_2}):
\begin{align}
(A,B)&\mapsto&( A+\epsilon\; d_A \gamma, B-\epsilon\; \mr{ad}^*_\gamma(B) ) \label{l26_BF_infin_gauge_transf_1}\\
(A,B)&\mapsto &(A,B+\epsilon\; d_A t) \label{l26_BF_infin_gauge_transf_2}
\end{align}
with infinitesimal generators $(\gamma,t)\in \Omega^0(M,\g)\oplus \Omega^{n-3}(M,\g^*)$.
The anchor has large stabilizers over points of $EL\subset F_{cl}$. 
Over $EL$, one has a resolution of (\ref{l26_BF_gauge_sym_infin}) by an exact sequence of vector bundles over $EL$:
\begin{equation}\label{l26_BF_tower_infin}
\underbrace{\Omega^0(M,g^*)}_{\mr{Lie}(\mc{G}_{n-2})}\xra{d_A} 
\cdots \xra{d_A} \underbrace{\Omega^{n-3}(M,g^*)}_{\mr{Lie}(\mc{G}_{2})} \xra{(0,d_A)} \underbrace{\Omega^{0}(M,\g)\oplus \Omega^{n-3}(M,\g^*) }_{\mr{Lie}(\mc{G})}\ra T\;EL
\end{equation}

\textbf{$BF$ as an AKSZ theory.} We set the AKSZ target to be $\NN=\g[1]\oplus \g^*[n-2]$. If $\{T_a\}$ is a basis in $\g$, and $\{T^a\}$ the dual basis in $g^*$, we have the corresponding coordinates $\psi^a$ and $\xi_a$ on $\NN$ of degrees $|\psi^a|=1$, $|\xi_a|=n-2$; we also denote 
$$\psi=\sum_a \psi^a T_a\in \mr{Fun}_1(\NN,\g),\qquad \xi=\sum_a \xi_a T^a\in \mr{Fun}_{n-2}(\NN,\g^*)$$ 
the generating functions for coordinates on $\NN$. We denote the structure constants of $\g$ in the basis $\{T_a\}$ by $f^a_{bc}$, i.e., $[T_b,T_c]=\sum_a f^a_{bc}T_a$. We fix the following target data:
\begin{itemize}
\item The target symplectic form 
$$\omega_\NN=\lan \delta \xi\stackrel{\wedge}{,}\delta\psi \ran=\sum_{a}\delta\xi_a\wedge \delta\psi^a\qquad \in \Omega^2(\NN)_{n-1}$$
with the distinguished primitive $1$-form
$$\alpha_\NN=\lan \xi,\delta \psi\ran =\sum_a \xi_a \delta\psi^a\qquad \in \Omega^1(\NN)_{n-1}$$
\item The target cohomological vector field $Q_\NN$ is the Chevalley-Eilenberg differential in $C^\infty(\NN)\cong C^\bt_{CE}(\g,\mr{Sym}^\bt(\g^*[n-2]))$ -- the cochains of the Lie algebra $\g$ with coefficients in the module given by the symmetric powers of the (degree shifted) coadjoint module. Explicitly:
\begin{multline*}
Q_\NN=\lan \frac12 [\psi,\psi],\frac{\dd}{\dd \psi} \ran+\lan \mr{ad}^*_\psi(\xi),\frac{\dd}{\dd\xi} \ran=\\
=\sum_{a,b,c} \frac12 f^a_{bc} \psi^b\psi^c\frac{\dd}{\dd\psi^a}-\sum_{a,b,c}f^a_{bc} \psi^b\xi_a\frac{\dd}{\dd\xi_c}\qquad \in \mathfrak{X}(\NN)_1
\end{multline*}
\item The target Hamiltonian is:
\begin{equation}
\Theta_\NN=\frac12 \lan \xi,[\psi,\psi] \ran = \sum_{a,b,c}\frac12 f^a_{bc}\; \xi_a \psi^b\psi^c\qquad \in C^\infty(\NN)_{n}
\end{equation}
The equation $\{\Theta_\NN,\Theta_\NN\}_{\omega_\NN}=0$ holds by virtue of the Jacobi identity for the structure constants of $\g$.
\end{itemize}

The AKSZ space of fields is the mapping space
\begin{equation}
\FF=\mr{Map}(T[1]M,\g[1]\oplus \g^*[n-2]) \cong \Omega^\bt(M,\g)[1]\oplus \Omega^\bt(M,\g^*)[n-2]
\end{equation}
It is parameterized by two superfields, $\mc{A}$ and $\mc{B}$, corresponding to the target universal coordinates $\psi$ and $\xi$ via pullback by the evaluation map $\mr{ev}: T[1]M\times\FF\ra \NN$:
\begin{equation}\label{l26_BF_superfields}
\mc{A}=\mr{ev}^*\psi= \mc{A}^{(0)}+\cdots + \mc{A}^{(n)},\qquad \mc{B}=\mr{ev}^*\xi=\mc{B}^{(0)}+\cdots + \mc{B}^{(n)}
\end{equation}
Here $\mc{A}^{(k)}$ is a $k$-form on $M$ valued in $\g$ and has the ghost number $\mr{gh}(\mc{A}^{(k)})=1-k$; $\mc{B}^{(k)}$ is a $k$-form on $M$ valued in $\g^*$ and has the ghost number $\mr{gh}(\mc{B}^{(k)})=n-2-k$. 

Comparing to the classical gauge theory picture of $BF$ theory discussed above, we have the following interpretation of the components of the superfields $\mc{A},\mc{B}$:
\begin{itemize}
\item Fields of ghost number zero, $A=\mc{A}^{(1)}\in \Omega^1(M,\g)$ and $B=\mc{B}^{(n-2)}\in \Omega^{n-2}(M,\g^*)$ are the classical fields of the theory, 
as in (\ref{l26_BF_S_cl}).
\item Fields of ghost number one, $c=\mc{A}^{(0)}\in\Omega^{0}(M,\g)$ and $\tau_1=\mc{B}^{(n-3)}\in\Omega^{(n-3)}(M,\g^*)$ are the ghosts for the gauge transformations (\ref{l26_BF_infin_gauge_transf_1},\ref{l26_BF_infin_gauge_transf_2}).
\item Field of ghost number two, $\tau_2=\mc{B}^{(n-4)}\in \Omega^{n-4}(M,\g^*)$ -- the higher ghost corresponding to the transformations (\ref{l26_BF_gauge_step2}), etc. For every $j$ up to $(n-2)$ we have a $j$-th higher ghost $\tau_j=\mc{B}^{(n-2-j)}\in \Omega^{n-2-j}(M,\g^*)$, of ghost degree $j$, corresponding to the $j$-th step, $\mc{G}_j$, of the tower (\ref{l26_BF_tower},\ref{l26_BF_tower_infin}).
\item For each $\Phi$ a classical field or a (higher) ghost, of de Rham degree $k$ and ghost number $j$, we have a corresponding anti-field $\Phi^+$ of de Rham degree $n-k$ and ghost number $-1-j$.
\end{itemize}
According to this interpretation, 
the homogeneous components of the superfields (\ref{l26_BF_superfields}) are:
\begin{equation}
\mc{A}=c+A+B^++\tau_1^++\cdots + \tau_{n-2}^+, \qquad \mc{B}=\tau_{n-2}	+\cdots+\tau_1+B+A^++c^+
\end{equation}
The respective de Rham degrees and ghost numbers are:

\begin{tabular}{|l|cccccc|}\hline
components of $\mc{A}$: & $c$ & $A$ & $B^+$ & $\tau_1^+$ & $\cdots$ & $\tau_{n-2}^+$\\
de Rham degree: & $0$ & $1$ & $2$ & $3$ & $\cdots$ & $n$ \\
ghost number: & $1$ & $0$ & $-1$ & $-2$ & $\cdots$ & $1-n$
\end{tabular}\\ \\
\begin{tabular}{|l|cccccc|}\hline \hline
components of $\mc{B}$: & $\tau_{n-2}$ & $\cdots$ & $\tau_1$ & $B$ & $A^+$ & $c^+$ \\
de Rham degree: & $0$ & $\cdots$ & $n-3$ & $n-2$ & $n-1$ & $n$ \\
ghost number: & $n-2$ & $\cdots$ & $1$ & $0$ & $-1$ & $-2$ \\ \hline
\end{tabular}

Note that for all components of $\mc{A}$, the sum of the de Rham degree and the ghost number is $\deg+\gh=1$. For the components of 
$\mc{B}$ we have $\deg+\gh=n-2$.

The odd-symplectic form on the space of BV fields is:
\begin{multline*} \omega=\int_M \lan \delta \mc{B}\stackrel{\wedge}{,}\delta\mc{A}\ran \\
=\int_M \lan \delta A \stackrel{\wedge}{,} \delta A^+\ran + 
\lan \delta B \stackrel{\wedge}{,} \delta B^+\ran+
\lan \delta c \stackrel{\wedge}{,} \delta c^+\ran+
\lan \delta \tau_1 \stackrel{\wedge}{,} \delta \tau_1^+\ran+ \cdots+
\lan \delta \tau_n \stackrel{\wedge}{,} \delta \tau_n^+\ran
\end{multline*}
And the BV action (\ref{l25_AKSZ_action}) is:
\begin{equation}\label{l26_S_BF_AKSZ}
S=\int_M \lan \mc{B} \stackrel{\wedge}{,} d\mc{A}+\frac12 [\mc{A},\mc{A}]\ran
\end{equation}
Expanding it in the homogeneous components of the AKSZ superfields, we obtain
\begin{multline}\label{l26_BF_AKSZ_S_expanded}
S
= \int_M \lan B \stackrel{\wedge}{,} F_A 
 \ran + 
\lan A^+ \stackrel{\wedge}{,} d_A c \ran + \lan B^+ \stackrel{\wedge}{,}(-1)^n\mr{ad}^*_c (B)+d_A\tau_1  \ran+ \\ +
\lan \tau_1^+\stackrel{\wedge}{,} d_A \tau_2 \ran+ \cdots \lan \tau_{n-3}^+\stackrel{\wedge}{,}d_A \tau_{n-2} \ran +  \\ +
\lan c^+\stackrel{\wedge}{,} \frac12 [c,c] \ran + (-1)^n \lan \tau_1^+ \stackrel{\wedge}{,} \mr{ad}^*_c \tau_1 \ran + \cdots 
+(-1)^n \lan \tau_{n-2}^+ \stackrel{\wedge}{,} \mr{ad}^*_c \tau_{n-2} \ran+\\
+\mbox{terms quadratic in anti-fields}
\end{multline}
Here we see the classical action, terms corresponding to infinitesimal gauge transformation of fields (\ref{l26_BF_infin_gauge_transf_1},\ref{l26_BF_infin_gauge_transf_2}) and the ``higher gauge transformations'' -- the shifts  (\ref{l26_BF_shifts}); terms corresponding to the Lie algebra structure on $\mr{Lie}(\mc{G})$ and to the action of $\mr{Lie}(\mc{G})$ on $\mr{Lie}(\mc{G}_j)$. 
Terms quadratic in anti-fields (e.g. the term $\int_M \lan \tau_2 \stackrel{\wedge}{,} [B^+,B^+]\ran$) appear 
for the $BF$ theory in dimension $n\geq 4$. Their appearance is related to the fact that the stabilizers of the gauge group action become large over solutions of Euler-Lagrange equations and are small or trivial over other classical field configurations. 

It is instructive to write out the action (\ref{l26_BF_AKSZ_S_expanded}) in low dimensions explicitly:
\begin{itemize}
\item In dimension $n=2$,  we have 
$$S=\int_M \lan B, F_A \ran + \lan A^+ \stackrel{\wedge}{,} d_A c \ran + \lan B^+, \mr{ad}^*_c(B) \ran+\frac12 \lan c^+,[c,c]\ran$$
This is the case arising as the zero area limit of $2$-dimensional Yang-Mills theory. In this dimension, classical $B$ field is a scalar, the superfields are  $\mc{A}=c+A+B^+$, $\mc{B}=B+A^++c^+$. Here higher ghosts $\tau_{\geq 2}$ are absent, and even the first ghost $\tau_1$ is absent, for degree reasons.  
The BV action satisfies the ansatz (\ref{l24_S_FP_BV_min}) for the gauge Lie algebra $\mr{Lie}(\mc{G})=\Omega^0(M,\g)$ acting on $F_{cl}=\Omega^1(M,\g)\oplus \Omega^0(M,\g^*)$.
\item In dimension $n=3$, we have
$$ 
S=\int_M \lan B\stackrel{\wedge}{,} F_A\ran + \lan A^+ \stackrel{\wedge}{,} d_A c \ran + \lan B^+\stackrel{\wedge}{,} -\mr{ad}^*_c(B)+d_A\tau_1 \ran+\frac12 \lan c^+,[c,c]\ran - \lan \tau_1^+,\mr{ad}^*_c\tau_1 \ran
$$
Here the superfields are $\mc{A}=c+A+B^++\tau_1^+$, $\mc{B}=\tau_1+B+A^++c^+$. In comparison to the 2-dimensional case, the ghost $\tau_1$ appears, but we still don't have higher ghosts $\tau_{\geq 2}$. This BV action again satisfies the ansatz (\ref{l24_S_FP_BV_min}) for the gauge Lie algebra $\mr{Lie}(\mc{G})=\Omega^0(M,\g)\oplus \Omega^0(M,\g^*)$ acting on $F_{cl}=\Omega^1(M,\g)\oplus \Omega^1(M,\g^*)$. Three-dimensional $BF$ theory is special in that it is the Chern-Simons theory with structure group $\g\oplus \g^*$. Also, for $\g$ a quadratic Lie algebra, it possesses a very interesting deformation by including a term $B\wedge B\wedge B$ in the action (see \cite{CCFM,CMR}) -- it 
can be constructed as the deformation of the target AKSZ Hamiltonian $\Theta$ by the term $\pm \frac16 (\xi,[\xi,\xi])$ where one uses the inner product on $\g$ to identify $\g^*\simeq \g$. The resulting theory, and especially its quantization, depends strongly on the sign of the $B^3$ term.
\item In dimension $n=4$, we have
\begin{multline*}
S=\int_M \lan B\stackrel{\wedge}{,} F_A\ran + \lan A^+\stackrel{\wedge}{,} d_A c \ran + \lan B^+\stackrel{\wedge}{,} \mr{ad}^*_c(B)+d_A\tau_1 \ran + \lan \tau_1^+ \stackrel{\wedge}{,} d_A \tau_2 \ran +\\
+
\frac12 \lan c^+,[c,c]\ran + \lan \tau_1^+\stackrel{\wedge}{,}\mr{ad}^*_c\tau_1 \ran + \lan \tau_2^+,\mr{ad}^*_c\tau_2 \ran
+ \frac12 \lan \tau_2 ,[B^+,B^+] \ran
\end{multline*}
In this dimension, the  gauge symmetry tower (\ref{l26_BF_tower}) attains the second stage $\mc{G}_2$. Thus, in comparison with the three-dimensional situation,  wenow have the higher ghost $\tau_2$. Also, the BV action contains a term quadratic in anti-fields and is not anymore of the form (\ref{l24_S_FP_BV_min}). 
\end{itemize}
In particular, we see that the $BF$ theory can be treated by Faddeev-Popov in dimensions $n=2,3$, but Batalin-Vilkovisky formalism becomes essential for its treatment in dimensions $n\geq 4$, due to the fact that the gauge symmetry -- the tower (\ref{l26_BF_tower} -- becomes more complicated.

We required from the beginning that $n\geq 2$, because otherwise one or both classical fields $A,B$ vanish by degree reasons and $S_{cl}$ is identically zero; a related point is that for $n<2$, the AKSZ target $\NN=\g[1]\oplus \g[n-2]$ is not a nonnegatively graded manifold (cf. footnote \ref{l25_footnote_AKSZ_BV_from_cl}). Nevertheless, one can consider $BF$ theory in its AKSZ formulation in dimensions $n=0,1$:
\begin{itemize}
\item For $n=1$, the superfields are: $\mc{A}=c+A$, $\mc{B}=A^++c^+$. In particular, the classical $B$ field is absent (i.e. vanishes identically) for degree reason -- it should have beern $(-1)$-form. The BV action (\ref{l26_BF_AKSZ_S_expanded}) is:
$$
S=\int_M \lan A^+,d_A c \ran + \frac12 \lan c^+,[c,c] \ran
$$
In this case we have the space of classical fields $F_{cl}=\Omega^1(M,\g)$ with zero classical action $S_{cl}=0$, acted on by the Lie algebra $\Omega^0(M,\g)$. So, the BV action does satisfy the ansatz (\ref{l24_S_FP_BV_min}) with first term vanishing (i.e. the two terms we do have in $S$ correspond to the action of gauge transformations on connection fields and to the Lie algebra of gauge transformations).
\item In the most degenerate case $n=0$ (assume for simplicity that $M$ is a single point), we have $\mc{A}=c$, $\mc{B}=c^+$. Both classical fields $A,B$ are absent (vanish) for degree reasons. Thus, classically, $F_{cl}=\{0\}$ is a point endowed with the action of $\Omega^0(M,\g)=\g$. The action (\ref{l26_BF_AKSZ_S_expanded}) is:
$$S= \frac12 \lan c^+,[c,c]\ran$$
It again satisfies the ansatz (\ref{l24_S_FP_BV_min}), now with first two terms vanishing and only the term corresponding to the Lie algebra of gauge symmetry surviving.
\end{itemize}

\begin{remark} 
Alongside the version of the $BF$ theory presented above, where the fields are differential forms on $M$ (of fixed degrees in the classical setting and non-homogeneous forms in AKSZ/BV setting), one can consider the ``canonical'' variant of the $BF$ theory (cf. e.g. \cite{CR,CMRcell}). Here one replaces the form-valued field $B$ (respectively, superfield $\mc{B}$) with the de Rham current (i.e. a linear functional on differential forms, or a distributional form) 
$$\mc{B}^\#=\int_M\lan \mc{B}\stackrel{\wedge}{,} - \ran\quad \in (\Omega^\bt(M))^*\otimes \g^* $$
In particular, the classical $B$ field becomes a de Rham $2$-current $B^\#$ (so that it pairs naturally to the curvature $2$-form). The action of the theory in this setting is $S_{cl}=\lan B^\#,F_A \ran$ (for the classical action) or 
\begin{equation}\label{l26_S_canonical}
S=\lan \mc{B}^\#,d\mc{A}+\frac12[\mc{A},\mc{A}]\ran
\end{equation} (the BV action), with $\lan,\ran$ the canonical pairing between the currents and differential forms. Note that, since the action is linear in $\mc{B}^\#$, one doesn't run into the problem of regularizing products of currents. 
Also note that in the canonical setting the space of fields is, by construction, canonically identified with the shifted cotangent bundle 
$$\FF^\mr{can}=T^*[-1]\left(\Omega^\bt(M,\g)[1]\right) \quad = \Omega^\bt(M,\g)[1]\oplus (\Omega^\bt(M,\g))^*[-2]$$
with BV $2$-form being the standard symplectic structure of the cotangent bundle. The bizarre feature of this setup is that the field $\mc{B}^\#$ has the ``wrong'' functoriality: instead of being able to restrict it to submanifolds of $M$, one can extend it from submanifolds.
\end{remark}

\section{Applications}

\subsection{Cellular $BF$ theory}
The combinatorial construction of (non-abelian) $BF$ theory on cochains of a triangulation of a manifold, via BV pushforward from differential forms, was developed in \cite{SimpBF,DiscrBF} and developed further in \cite{CMRcell} (in particular, in the latter work, the theory is extended to general CW decompositions). 

This construction
associates to any triangulation\footnote{Or a more general CW decomposition; for simplicity, we talk about triangulations here,} $X$ of a manifold $M$ a BV package $(\FF_X,S_X)$, where fields are pairs of a cochain and a chain of $X$, 
and the action $S_X$ satisfies the quantum master equation, in such a way that if $X'$ is a subdivision of $X$ (then we say that $X$ is an \textit{aggregation} of $X'$), then the action $S_X$ is the BV pushforward of $S_{X'}$ along the odd-symplectic fibration $\FF_{X'}\ra \FF_X$. In other words, the action on a sparser complex $X$ can be obtained from the action on a denser complex $X'$ by integrating out (in the BV integral sense) the redundant fields. Also, the cellular action $S_X$ converges, in an appropriate sense, in the limit of a dense refinement of $X$, to the usual BV action of the $BF$ theory on differential forms on $M$ (\ref{l26_S_BF_AKSZ}).  

The idea here is that, having such a consistent system of BV packages for different triangulations of $M$, one can replace the problem of calculating the partition function or expectation values of observables of the $BF$ theory via path integral by calculating respective quantities within the cellular theory, by a finite-dimensional integral. By consistency with cellular aggregations, the result will be independent of $X$ and can serve as a consistent replacement (in a sense, a measure-theoretic definition) for the path integral.

\subsubsection{Abstract $BF$ theory associated to a dgLa}\label{sss: abstr BF theory}
One can associate to a differential graded Lie algebra $V^\bt,d,[-,-]$, subject to the \textit{unimodularity} condition $\mr{Str}_V[x,-]=0$ for any $x\in V$, the following BV package (the ``abstract $BF$ theory'' in the terminology of \cite{SimpBF,DiscrBF}):
\begin{itemize}
\item The space of fields: 
\begin{equation}\label{l26_abstr_BF_space_of_fields}
\FF=T^*[-1]V[1]=V[1]\oplus V^*[-2]
\end{equation} 
Let $e_a$ be a basis in $V$ and $e^a$ the dual basis in $V^*$. The superfields  parameterizing $\FF$ are: 
\begin{equation}\label{l26_abstr_BF_superfields}
A=\sum_a e_a A^a
, \qquad B=\sum_a B_a e^a
\end{equation}
Here $A^a$ is a coordinate on $V[1]$ corresponding to $e_a$ and $B_a$ is a coordinate on $V^*[-2]$ corresponding to $e_a$. The ghost numbers are $\gh (A^a)=1-|a|$, $\gh (B_a)=-2+|a|$ where $|a|$ is the degree (according to the grading on $V^\bt$) of the basis vector $e_a$. Note that $A\in V\otimes (V[1])^*\subset \mr{Fun}(\FF,V)$ corresponds to the degree-shifted identity map $\mr{id}: V[1]\ra V$. Similarly, $B\in (V^*[-2])^*\otimes V^*\subset \mr{Fun}(\FF,V^*)$ corresponds to the degree shifted identity $\mr{id}:V^*[-2]\ra V^*$. We understand the superfields $A,B$ as the generating functions for coordinates $A_a,B^a$ on $\FF$, valued in $V$ and $V^*$,  respectively: $(A,B):\FF\ra V\oplus V^*$.
\item The odd-symplectic form $\omega$ on $\FF$ is constructed out of the canonical pairing $\lan,\ran$ between $V$ and $V^*$:
\begin{equation}\label{l26_abstr_BF_omega}
\omega=\lan \delta B \stackrel{\wedge}{,}\delta A\ran = \sum_a (-1)^{|a|}\delta B^a \wedge \delta A_a\qquad \in \Omega^2(\FF)_{-1}
\end{equation}
\item The BV action is built out of the dgLa operations $d, [-,-]$ on $V$:
\begin{equation} \label{l26_S_abstr_BF}
S=\lan B, dA+\frac12 [A,A]\ran \quad = \sum_{a,b}  d^a_{b} B_a  A^b + \sum_{a,b,c} (-1)^{(|b|+1)|c|} \frac12 f^a_{bc} B_a A^b A^c \qquad \in C^\infty(\FF)_0
\end{equation}
where $d^a_b\lan e^a, d e_b \ran$ are the structure constants of the differential $d$ and $f^a_{bc}=\lan e^a,[e_b,e_c] \ran$ are the structure constants of the Lie bracket $[-,-]$ on $V$.
\item We fix the Berezinian on $\FF$ to be the coordinate Berezinian 
\begin{equation}\label{l26_abstr_BF_mu}
\mu=c\cdot \prod_a \DD A^a \DD B_a
\end{equation}
Here $c\in \mathbb{C}$ is a 
scaling factor which we leave unspecified for the moment.\footnote{In the geometric context of CW decompositions, consistency with CW aggregations impose certain conditions on the scaling factor $c$ and ultimately lead to the appearance of a certain power $\hbar$ and a $\bmod 8$ complex phase in $c$ (which leads to a $\bmod 16$ phase in the expression $\mu^{1/2}e^{\frac{i}{\hbar}S}$ and ultimately in the partition function), see \cite{CMRcell} and the Remark \ref{l27_rem_xi} below.} This Berezinian is compatible with the odd-symplectic structure $\omega$.
\end{itemize}
The pair $\omega,\mu$ induces the BV Laplacian 
\begin{equation}
\Delta=\lan \frac{\delta}{\delta A},\frac{\delta}{\delta B} \ran \quad = \sum_a (-1)^{|a|}\lan \frac{\delta}{\delta A^a},\frac{\delta}{\delta B_a} \ran
\end{equation}
acting on functions on $\FF$. Action (\ref{l26_S_abstr_BF}) satisfies the quantum master equation
\begin{equation}\label{l26_abstr_BF_QME}
\Delta e^{\frac{i}{\hbar}S} = 0 \quad \Leftrightarrow\quad \frac12\{S,S\}-i\hbar \Delta S=0
\end{equation}
Note that 
$$ \frac12\{S,S\}-i\hbar \Delta S= \lan B, d(dA)+ \left(\frac12 d[A,A]+[A,dA]\right)+\frac12 [A,[A,A]] \ran -i\hbar (\underbrace{\mr{Str}_V d}_{=0} +\mr{Str}_V [A,-]) $$
Thus, the quantum master equation, by inspecting the terms of orders $BA, BAA, BAAA$ and $\hbar A$ in the r.h.s. above, is equivalent to the four identities on the structure operations $d,[-,-]$ in $V$:
\begin{enumerate}[(i)]
\item $d^2=0$,
\item Leibniz identity
\item Jacobi identity
\item unimodularity property $\mr{Str}[x,-]=0$. (Note that $\mr{Str}d=0$ is automatic for degree reasons.)
\end{enumerate}
Thus one can say that the action $S$ as defined by (\ref{l26_S_abstr_BF}) is the generating structure for the structure constants of the structure operations in $V$, whereas the quantum master equation (\ref{l26_abstr_BF_QME}) is the generating function for the structure relations in $V$.

\subsubsection{Effective action induced on a subcomplex}\label{sss: eff action for abstr BF theory}
Let $V$ be a dgLa as before and let $\iota:V'\hra V$ be a subcomplex of the complex $V^\bt,d$ (with no compatibility condition with the bracket $[-,-]$), such that the inclusion induces an isomorphism on cohomology $\iota_*: H^\bt(V')\stackrel{\sim}{\ra} H^\bt(V)$
-- in this case we call $V'$ a deformation retract of $V$. Next, promote $\iota $ to a package of ``induction data'' $(\iota,p,K)$ from $V$ to $V'$.

\begin{definition}\label{def: ind data}
For $V$, $V'$ two cochain complexes, we call a triple of maps $\iota: V'\hra V$, $p:V\proj V'$, $K:V^\bt\ra V^{\bt-1}$ a \emph{(set of) induction data from $V$ to $V'$}, or the \emph{HPT\footnote{For ``homological perturbation theory''} data} if the following properties hold:
\begin{enumerate}
\item $\iota$ and $p$ are chain maps, $p$ is a left inverse for $\iota$, i.e., $p\circ\iota=\mr{id}_{V'}$;
\item $K$ is a chain homotopy between the identity on $V$ and the projection to $V'$, i.e., $dK+Kd=\mr{id}-\iota\circ  p$.
\item $K$ satisfies additionally $K\iota=0$, $pK=0$ , $K^2=0$.
\end{enumerate}
We denote such a triple by $V\stackrel{(\iota,p,K)}{\rightsquigarrow} V'$.
\end{definition}

For $V,V'$ two cochain complexes, the induction data $V\stackrel{(\iota,p,K)}{\rightsquigarrow} V'$ exists if and only if $V'$ is a deformation retract of $V$. If it is, the space of all possible induction data, inducing a fixed isomorphism on cohomology $H^\bt(V')\stackrel{\sim}{\ra}H^\bt(V)$, is \textit{contractible}.

Induction data $V\stackrel{(\iota,p,K)}{\rightsquigarrow} V'$ induces the splitting 
\begin{equation}\label{l26_V=V'+V''}
V=\iota(V')\oplus V''
\end{equation}
with $V''=\ker p$, -- a splitting of $V$ into subcomplexes -- the image of the retract $V'$ and an \emph{acyclic}\footnote{Acyclicity follows from existence of the chain homotopy $K$: for any $x''\in V''$ a cocycle, we have $x''=(dK+Kd)x''=d(Kx'')$. Thus, a cocycle in $V''$ is automatically a coboundary.} complement $V''$. Moreover, we have the following refinement of this splitting -- the (abstract) Hodge decomposition:
\begin{equation}\label{l26_Hodge}
V=\iota(V')\oplus \underbrace{d(V'') \oplus K(V'')}_{V''}
\end{equation}
I.e. any element of $V$ splits into a part coming from the retract $V'$, a $d$-exact element in $V''$ and a $K$-exact element of $V''$. 
A special case of this is the ordinary Hodge-de Rham decomposition, splitting the space of differential forms on a compact manifold into harmonic forms, exact forms and coexact forms.

We build the spaces of fields $\FF,\FF',\FF''$ associated to the complexes $V,V',V''$ by the doubling construction (\ref{l26_abstr_BF_space_of_fields}), i.e.,
$$\FF=T^*[-1]V[1],\quad \FF'=T^*[-1]V',\quad \FF''=T^*[-1]V''[1]$$ 
Then the splitting (\ref{l26_V=V'+V''}) induces a splitting of the spaces of fields,\footnote{More pedantically, we should write $\FF=(\iota\oplus p^*)\FF'\;\oplus \FF''$, taking care of the way $\FF'$ is embedded into $\FF$.}
\begin{equation}\label{l26_F=F'+F''}
\FF=\FF'\oplus \FF''
\end{equation}
This splitting is compatible with the odd-symplectic forms: $\omega=\omega'\oplus \omega''$. Furthermore, we have a distinguished Lagrangian subspace $\LL\subset \FF''$, constructed out of the data of the Hodge decomposition (\ref{l26_Hodge}) as the conormal bundle to the last term of the decomposition in $\FF''=T^*[-1]V''[1]$:
\begin{equation}\label{l26_L}
\LL=N^*[-1]\left(\mr{im}(K)[1]\right)=\mr{im}(K)[1]\oplus \mr{im}(K^*)[-2]\qquad \subset\quad  T^*[-1]V''[1]=\FF''
\end{equation}
where $K^*:(V')^*\ra (V')^*$ is the linear dual of $K$. 

Note that the splitting (\ref{l26_F=F'+F''}) and the Lagrangian (\ref{l26_L}) are precisely the gauge-fixing data needed to define the BV pushforward from $\FF$ to $\FF'$ (see Section \ref{ss: fiber BV int}). We construct the effective theory (the induced BV package\footnote{ The BV package we have in mind is
$(\FF',\omega',S',\mu')$ with standard $\omega'$ and $\mu'$, as  in (\ref{l26_abstr_BF_omega},\ref{l26_abstr_BF_mu}), with $V$ replaced by $V'$, and with nontrivial $S'$, constructed by the BV pushforward from $V$.
})
on $\FF'$ with the action $S'\in C^\infty(\FF')_0[[\hbar]]$ defined by the fiber BV integral
\begin{equation}\label{l26_fiber_BV_int}
e^{\frac{i}{\hbar}S'}\sqrt{\mu'}=\int_{\LL\subset \FF''} e^{\frac{i}{\hbar}S}\sqrt\mu
\end{equation}

Calculating this integral by the stationary phase formula yields the following Feynman diagram expansion for $S'$:
\begin{equation}\label{l26_S'}
S'(A',B';\hbar)=\sum_{\Gamma}\frac{(-i\hbar)^l}{|\mr{Aut}(\Gamma)|}\;\Phi(\Gamma)
\end{equation}
Here $\Gamma$ runs over connected oriented graphs with leaves, where all internal vertices are trivalent, with $2$ incoming and $1$ outgoing half-edges; $l\in\{0,1\}$ is the number of loops in $\Gamma$. Note that the restrictions on orientations at vertices of $\Gamma$ imply that a connected $\Gamma$ has either zero loops (i.e. $\Gamma$ is a binary rooted tree) or one loop (i.e. $\Gamma$ is an oriented cycle with several binary trees with roots placed on the cycle). Feynman weights $\Phi(\Gamma)$ are polynomials in $A',B'$ (the superfields for $\FF'$, constructed as in (\ref{l26_abstr_BF_superfields})) and are constructed by the following Feynman rules.
\begin{itemize}
\item For $\Gamma$ a binary rooted tree $\vcenter{\hbox{\begin{picture}(0,0)%
\includegraphics{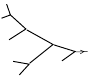}%
\end{picture}%
%
%
\setlength{\unitlength}{1184sp}%
\begingroup\makeatletter\ifx\SetFigFont\undefined%
\gdef\SetFigFont#1#2#3#4#5{%
  \reset@font\fontsize{#1}{#2pt}%
  \fontfamily{#3}\fontseries{#4}\fontshape{#5}%
  \selectfont}%
\fi\endgroup%
\begin{picture}(1414,1176)(992,-1493)
\end{picture}%
}}$, we decorate the leaves with $\iota(A')$, internal vertices with $[-,-]$, internal edges with $-K$ and the root with $\lan B',p(-)\ran$. Thus, we read the rooted tree as an iterated operation with inputs on the leaves, proceeding from the leaves to the root we compute the commutators in the vertices and minus the chain homotopy on the edges, and in the very end we pair the result read off at the root with $B'$. For example, for the graph
$$ \Gamma=\vcenter{\hbox{\begin{picture}(0,0)%
\includegraphics{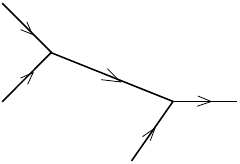}%
\end{picture}%
%
%
\setlength{\unitlength}{1973sp}%
\begingroup\makeatletter\ifx\SetFigFont\undefined%
\gdef\SetFigFont#1#2#3#4#5{%
  \reset@font\fontsize{#1}{#2pt}%
  \fontfamily{#3}\fontseries{#4}\fontshape{#5}%
  \selectfont}%
\fi\endgroup%
\begin{picture}(2298,1558)(1640,-2101)
\end{picture}%
}}$$ 
we decorate it as follows
$$\vcenter{\hbox{\input{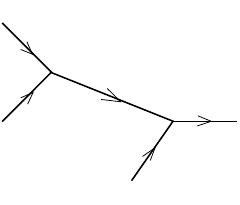tex_t}}}$$
Thus, for the Feynman weight we have
$$ \Phi\left(\vcenter{\hbox{}}\right) = \lan B', p[\iota(A'),-K[\iota(A'),\iota(A')]] \ran $$
\item For $\Gamma$ a one-loop graph $\vcenter{\hbox{\begin{picture}(0,0)%
\includegraphics{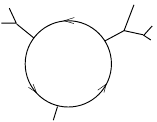}%
\end{picture}%
%
%
\setlength{\unitlength}{1184sp}%
\begingroup\makeatletter\ifx\SetFigFont\undefined%
\gdef\SetFigFont#1#2#3#4#5{%
  \reset@font\fontsize{#1}{#2pt}%
  \fontfamily{#3}\fontseries{#4}\fontshape{#5}%
  \selectfont}%
\fi\endgroup%
\begin{picture}(2459,1889)(1749,-2933)
\end{picture}%
}}$, we put the decorations as above on leaves, edges and vertices, and we cut the loop at an arbitrary point and compute the supertrace over $V$ of the endomorphism of $V$, depending parametrically on $A'$, as constructed using the Feynman rules for the trees. E.g., for 
$$\Gamma=\vcenter{\hbox{\begin{picture}(0,0)%
\includegraphics{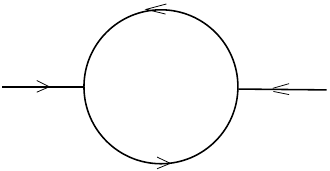}%
\end{picture}%
%
%
\setlength{\unitlength}{1973sp}%
\begingroup\makeatletter\ifx\SetFigFont\undefined%
\gdef\SetFigFont#1#2#3#4#5{%
  \reset@font\fontsize{#1}{#2pt}%
  \fontfamily{#3}\fontseries{#4}\fontshape{#5}%
  \selectfont}%
\fi\endgroup%
\begin{picture}(3156,1606)(1052,-2803)
\end{picture}%
}} $$
we decorate it as 
$$ \vcenter{\hbox{\input{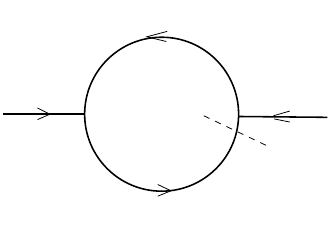tex_t}}} $$
where the dashed line indicates the place where we cut the loop (the resulting expression does not depend on the place of the cut by the cyclic property of the supertrace). The corresponding Feynman weight is:
$$\Phi\left(  \vcenter{\hbox{}} \right)= \mr{Str}\left( -K[\iota(A'),-K[\iota(A'),\bullet]]\right)$$
where the bullet stands for the argument of the endomorphism we are computing the supertrace of.
\end{itemize}

\begin{remark}
Note that if $V'\hra V$ happens to be a dg Lie subalgebra (not just a subcomplex), then the Feynman weights of all the tree diagrams vanish (since $K$ of a commutator of images of two elements of $V'$ vanishes for $V'$ a subalgebra), except for the only two diagrams not containing any internal edges, 
\begin{equation}\label{l26_simple_diagrams}
\vcenter{\hbox{\input{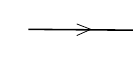tex_t}}}\qquad \qquad,\qquad  \vcenter{\hbox{\begin{picture}(0,0)%
\includegraphics{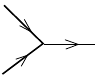}%
\end{picture}%
%
%
\setlength{\unitlength}{1973sp}%
\begingroup\makeatletter\ifx\SetFigFont\undefined%
\gdef\SetFigFont#1#2#3#4#5{%
  \reset@font\fontsize{#1}{#2pt}%
  \fontfamily{#3}\fontseries{#4}\fontshape{#5}%
  \selectfont}%
\fi\endgroup%
\begin{picture}(937,704)(1629,-1741)
\end{picture}%
}}
\end{equation}
The first diagram here requires a special decoration $d$ for the edge connecting the two leaves; thus its Feynman weight is $\lan B', dA'\ran$; the weight of the second diagram, by the rules above, is $\lan B',[A',A']\ran$. Thus, in the case when $V'$ is a dg Lie subalgebra, the $\bmod\hbar$ part of the effective action (\ref{l26_S'}) coincides with the action of abstract $BF$ theory (\ref{l26_S_abstr_BF}) associated to the dgLa $(V',d,[-,-])$. When $V'$ is not a subalgebra, contributions of higher trees to $S'$ are generally nontrivial and measure, in a sense, the failure of the subcomplex $\iota(V')\subset V$ to be closed under commutators.
\end{remark}

The action (\ref{l26_S'}) satisfies the following ansatz (called a $BF_\infty$ theory in \cite{DiscrBF}), generalizing the abstract $BF$ ansatz (\ref{l26_S_abstr_BF}):
\begin{equation}\label{l26_S_BF_infty}
S'(A',B';\hbar)=\sum_{n\geq 1} \frac{1}{n!} \lan B',l_n(A',\ldots,A')\ran -i\hbar\sum_{n\geq 2} \frac{1}{n!}q_n(A',\ldots,A')
\end{equation}
where $l_n:\wedge^n V'\ra V'$ are graded anti-symmetric $n$-linear operations of degree $2-n$ (the $L_\infty$ operations on $V'$) and $q_n:\wedge^n V'\ra \RR$ are graded anti-symmetric $n$-linear operations of degree $-n$ valued in numbers (the ``unimodular'' or ``quantum'' operations on $V'$). Quantum master equation for the action (\ref{l26_S_BF_infty}), which is satisfied automatically by construction (\ref{l26_fiber_BV_int}), by virtue of the BV-Stokes' theorem for the BV pushforward (item (\ref{l23_BV_Stokes'_cor_i}) of Corollary \ref{l23_BV_Stokes'_cor}), 
is equivalent to a collection of (nonhomogeneous) quadratic relations on the operations $\{l_n,q_n\}$:
\begin{align}\label{l26_uL_inf_relations_1}
\sum_{r+s=n}\frac{1}{r!s!}l_{r+1}(\underbrace{A',\ldots,A'}_r,l_s(\underbrace{A',\ldots,A'}_s))=0,\\
\label{l26_uL_inf_relations_2}
\frac{1}{n!}\mr{Str}\;l_{n+1}(\underbrace{A',\ldots,A'}_n,\bt)+\sum_{r+s=n}\frac{1}{r!s!}q_{r+1}(\underbrace{A',\ldots,A'}_r,l_s(\underbrace{A',\ldots,A'}_s))=0
\end{align}
For each $n\geq 1$. 
Or, pictorially, the relations are: 
$$
\sum_{r+s=n} 
\vcenter{\hbox{\begin{picture}(0,0)%
\includegraphics{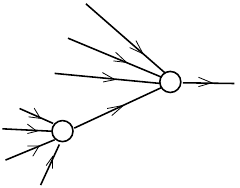}%
\end{picture}%
%
%
\setlength{\unitlength}{1973sp}%
\begingroup\makeatletter\ifx\SetFigFont\undefined%
\gdef\SetFigFont#1#2#3#4#5{%
  \reset@font\fontsize{#1}{#2pt}%
  \fontfamily{#3}\fontseries{#4}\fontshape{#5}%
  \selectfont}%
\fi\endgroup%
\begin{picture}(2271,1784)(1562,-2813)
\put(2251,-2514){\makebox(0,0)[lb]{\smash{{\SetFigFont{8}{9.6}{\rmdefault}{\mddefault}{\updefault}{\color[rgb]{0,0,0}$l_s$}%
}}}}
\put(3211,-1606){\makebox(0,0)[lb]{\smash{{\SetFigFont{8}{9.6}{\rmdefault}{\mddefault}{\updefault}{\color[rgb]{0,0,0}$l_{r+1}$}%
}}}}
\end{picture}%
}} =0\quad ,
\qquad
\vcenter{\hbox{\begin{picture}(0,0)%
\includegraphics{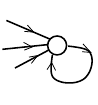}%
\end{picture}%
%
%
\setlength{\unitlength}{1973sp}%
\begingroup\makeatletter\ifx\SetFigFont\undefined%
\gdef\SetFigFont#1#2#3#4#5{%
  \reset@font\fontsize{#1}{#2pt}%
  \fontfamily{#3}\fontseries{#4}\fontshape{#5}%
  \selectfont}%
\fi\endgroup%
\begin{picture}(905,967)(2769,-2566)
\put(3579,-2491){\makebox(0,0)[lb]{\smash{{\SetFigFont{6}{7.2}{\rmdefault}{\mddefault}{\updefault}{\color[rgb]{0,0,0}$\mr{Str}$}%
}}}}
\put(3301,-1831){\makebox(0,0)[lb]{\smash{{\SetFigFont{8}{9.6}{\rmdefault}{\mddefault}{\updefault}{\color[rgb]{0,0,0}$l_{n+1}$}%
}}}}
\end{picture}%
}} \;\;+\sum_{r+s=n}\vcenter{\hbox{\begin{picture}(0,0)%
\includegraphics{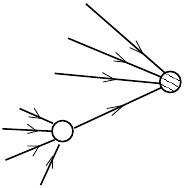}%
\end{picture}%
%
%
\setlength{\unitlength}{1973sp}%
\begingroup\makeatletter\ifx\SetFigFont\undefined%
\gdef\SetFigFont#1#2#3#4#5{%
  \reset@font\fontsize{#1}{#2pt}%
  \fontfamily{#3}\fontseries{#4}\fontshape{#5}%
  \selectfont}%
\fi\endgroup%
\begin{picture}(1750,1784)(1562,-2813)
\put(2251,-2514){\makebox(0,0)[lb]{\smash{{\SetFigFont{8}{9.6}{\rmdefault}{\mddefault}{\updefault}{\color[rgb]{0,0,0}$l_s$}%
}}}}
\put(3211,-1614){\makebox(0,0)[lb]{\smash{{\SetFigFont{8}{9.6}{\rmdefault}{\mddefault}{\updefault}{\color[rgb]{0,0,0}$q_{r+1}$}%
}}}}
\end{picture}%
}} =0
$$
where we put the elements $x_1,\ldots,x_n\in V'$ as inputs on the incoming leaves and skew-symmetrize over the ways to distribute them between the leaves.

\begin{definition} A graded vector space $V'$ endowed with polylinear operations $\{l_n,q_n\}$ satisfying the relations (\ref{l26_uL_inf_relations_1},\ref{l26_uL_inf_relations_2}) is called a \emph{quantum $L_\infty$ algebra} (terminology of \cite{SimpBF,DiscrBF}) or \emph{unimodular $L_\infty$ algebra} (terminology introduced in \cite{Granaker}).\footnote{Equivalently, the structure of a unimodular $L_\infty$ algebra on a graded vector space $V'$ can be summarized by saying that the shifted space $V'[1]$ is endowed with a cohomological vector $Q$ vanishing at the origin and a Berezinian $\mu=\rho\cdot \mu_\mr{coord}$ preserved by the Lie derivative along $Q$ (equivalently, the divergence of $Q$ w.r.t. $\mu$ vanishes); $Q$ and $\rho$ can be given by formal power series in coordinates. The relation to the previous definition is that the Taylor expansion of $Q$ at the origin yields operations $l_n$ and the density of the Berezinian $\mu$ is $\rho=\exp(\sum_n\frac{1}{n!}q_n)\in \mr{Fun}(V'[1])$. Relations $Q^2=0$, $\mr{div}_\mu Q=0$ expand into the relations on operations (\ref{l26_uL_inf_relations_1},\ref{l26_uL_inf_relations_2}).}
\end{definition}
A unimodular $L_\infty$ algebra is in particular an $L_\infty $ algebra (by forgetting the quantum operations $q_n$). Thus a unimodular $L_\infty$ algebra is a certain enrichment of the classical structure of an $L_\infty$ algebra. Also, a unimodular dg Lie algebra $(V,d,[-,-])$ is a special case of a unimodular $L_\infty$ algebra (with all operations except $l_1,l_2$ vanishing).

Relations (\ref{l26_uL_inf_relations_1}) are the homotopy Jacobi identities of the $L_\infty$ algebra $V',\{l_n\}$ and the non-homogeneous relations (\ref{l26_uL_inf_relations_2}) are the homotopy unimodularity identities.

Notice that, by BV-Stokes' theorem (item (\ref{l23_BV_Stokes'_cor_iii}) of Corollary \ref{l23_BV_Stokes'_cor}), a change of the induction data $(\iota,p,K)$ and the associated change of gauge-fixing for the BV pushforward (\ref{l26_fiber_BV_int}) induce a change of the action $S'$ by a canonical transformation $S'\mapsto S'+\{S',R'\}-i\hbar R'$ with $R'$ satisfying the ansatz similar to (\ref{l26_S_BF_infty}).\footnote{More explicitly, $R'$ has the form $R'=\sum_{n}\frac{1}{n!}\lan B',\lambda_n(A',\ldots,A') \ran-i\hbar \sum_n \frac{1}{n!}\tau_n(A',\ldots,A')$ where the coefficient polylinear maps $\lambda_n:\wedge^n V'\ra V'$, $\tau_n:\wedge^n V'\ra\RR$ are of degrees $1-n$ and $-1-n$, respectively.} 
This canonical transformation can be interpreted, via the correspondence (\ref{l26_S_BF_infty}) of solutions of QME of $BF_\infty$ type and unimodular $L_\infty$ algebras, as an isomorphism (in the appropriate sense) of unimodular $L_\infty$ algebras $(V',\{l_n,q_n\})\sim (V',\{\til l_n,\til q_n\})$.

We have the following diagram:
$$
\begin{CD}
\mbox{u. dgLa }(V,d,[-,-]) @>>> \left(\FF=T^*[-1]V[1],\;\; S\mbox{ built out of }d,[-,-] \right)\\
@V\mbox{homotopy transfer}VV @VV\mbox{BV pushforward}V \\
uL_\infty\mbox{ algebra }(V',\{l_n\},\{q_n\}) @<<< \left(\FF'=T^*[-1]V'[1],\;\; S'\mbox{ built out of }\{l_n\},\{q_n\}\right)
\end{CD}
$$
Here in the left column we have algebraic structures and in the right column we have BV packages. Horizontally, going from left to right, we associate a BV package to an algebraic structure or, going right-to-left, we read off the operations of the algebraic structure from the Taylor expansion of the action. The vertical arrow on the right is the BV pushforward/calculation of the effective BV action. Starting from a unimodular dgLA $V,d,[-,-]$, associating to it an abstract $BF$ theory, calculating the effective BV action on a subcomplex $V'$ and then reading off the operations of the unimodular $L_\infty$ structure on $V'$ from the Taylor coefficients of the effective action, we obtain an appropriate notion of homotopy transfer of a unimodular dgLa to a subcomplex. The transferred operations are given by the Feynman diagrams appearing in the stationary phase expansion of $S'$ (understood as polylinear operations on $V'$). In particular, operations $l_n$ are given by sums over binary rooted trees with $n$ leaves, and this formula coincides with the Lie version of the Kontsevich-Soibelman homotopy transfer formula for $A_\infty$ algebras \cite{KS}.

\begin{remark}\label{l26_rem_induction_for_BF_infty}
Note that instead of starting with a unimodular dgLa structure on $V$, we could start with a unimodular $L_\infty$ structure $(V,\{l_n^V\},\{q_n^V\})$. Then we associate a BV package to it with the action given by (\ref{l26_S_BF_infty}) and induce the effective action on a subcomplex via a BV pushforward. In this case, we have more Feynman diagrams for $S'$ -- we have vertices with $n\geq2$ inputs and one output decorated by $l_n^V$ and vertices with $n\geq 2$ inputs and no outputs decorated by $q_n^V$. This still results in the connected diagrams being at most one-loop. More precisely, now there are three types of diagrams for $S'$: 
\begin{enumerate}
\item rooted trees (not necessarily binary), 
\item diagrams consisting of a cycle with several trees rooted in it, 
\item trees with root replaced by a vertex decorated with a quantum operation $q_n^V$.
\end{enumerate}
All the vertices, except for the root in the last case are decorated with the classical operations $l_n^V$.
The resulting effective action $S'$ again satisfies the $BF_\infty$ ansatz (thus, the $BF_\infty$ ansatz reproduces itself under BV pushforwards); one can read off from it the induced (homotopy transferred) structure of a unimodular $L_\infty$ algebra on $V'$.
\end{remark}

\marginpar{\LARGE{Lecture 27, 12/05/2016.}}
\subsubsection{Geometric situation}
Fix a manifold $M$ with endowed a triangulation $X$ and fix a unimodular Lie algebra of coefficients $\g$. 

Non-abelian $BF$ theory on $M$ (most directly, in its canonical version (\ref{l26_S_canonical})) can be viewed as the abstract $BF$ theory associated to the dg Lie algebra of $\g$-valued differential forms on $M$, $V=\Omega^\bt(M,\g)=\Omega^\bt(M)\otimes\g$, with differential $d: \alpha\otimes x\mapsto d\alpha\otimes x$ and the Lie bracket $[\alpha\otimes x,\beta\otimes y]=(\alpha\wedge\beta)\otimes [x,y]_\g$ for $\alpha,\beta\in \Omega^\bt(M)$ and $x,y\in\g$. 

We want to apply the construction of effective action of Section \ref{sss: eff action for abstr BF theory} for the complex of $\g$-valued cellular cochains of the triangulation $V'=C^\bt(X,\g)$, viewed as a subcomplex of $V$.

\textbf{Induction data.}
The inclusion $\iota:V'\hra V$ is the extension by $\g$-linearity of the realization of the cochains of triangulation by Whitney elementary forms \cite{Whitney}, $\iota: C^\bt(X)\xra{\sim}\Omega_\mr{Whitney}(M,X)\subset \Omega^\bt(M)$. Explicitly, for $\Delta^N$ the standard $N$-simplex with barycentric coordinates $t_0,\ldots,t_N\geq 0$ subject to $t_0+\cdots+t_N=1$, one assigns to the $k$-dimensional face $[i_0\cdots i_k]$ (the convex hull of 
vertices $\{v_{i_0},\ldots,v_{i_k}\}$; 
here $v_i=(0,\ldots,t_i=1,\ldots,0)$ is the $i$-th vertex of the simplex) the elementary Whitney form 
\begin{equation}\label{l27_Whitney}
\chi_{i_0\cdots i_k}: = \sum_{j=0}^k (-1)^j t_{i_j} dt_{i_0}\wedge\cdots \widehat{dt_{i_j}}\cdots \wedge dt_{i_k}\quad \in \Omega^k(\Delta^N)
\end{equation}
Then we glue elementary Whitney forms over the simplices of $X$: for $\sigma$ a $k$-simplex of $X$, we construct a piecewise-linear $k$-form $\chi_\sigma$ on $M$, supported on the star of $\sigma$. It is defined by $\left.\chi_\sigma\right|_{\sigma'}=\chi_ \sigma^{(\sigma')}$ for any simplex $\sigma'$ of $X$ containing $\sigma$ as a sub-simplex (a face of arbitrary codimension). Here $\chi_ \sigma^{(\sigma')}$ is the elementary form on $\sigma'$ associated to its face $\sigma$ by formula  (\ref{l27_Whitney}). The $\RR$-span of the forms $\chi_\sigma$ is a subcomplex in the space of piecewise-linear continuous\footnote{They are continuous in the sense that the pullbacks to the simplices $\sigma$ are well-defined, however the components of the form in the normal direction to a simplex may have a discontinuity.} forms on $M$, which we call $\Omega^\bt_\mr{Whitney}(M,X)$. This complex is isomorphic to $C^\bt(X)$, with the isomorphism given by the map 
\begin{equation}\label{l27_Whitney_iota}
\iota: e_\sigma \mapsto \chi_\sigma
\end{equation}
where $e_\sigma$ is the basis cochain associated to the simplex $\sigma$. In particular, the map (\ref{l27_Whitney_iota}) intertwines the cellular coboundary operator in $C^\bt(X)$ and the de Rham operator on forms.

We also have a projection $p:V\proj V'$ coming from the extension by $\g$-linearity of the Poinar\'e integration map $p:\Omega^\bt(M)\ra C^\bt(X)$ which integrates a form over the simplices of $X$:
\begin{equation}
p: \alpha\mapsto \sum_{\sigma\subset X} \left(\int_\sigma \alpha\right) \cdot e_\sigma
\end{equation}
for any $\alpha\in \Omega^\bt(M)$. It is easy to see that $p$ is a chain map and that $p\circ \iota=\mr{id}_{C^\bt(X)}$. The latter property amounts to the fact that the form (\ref{l27_Whitney}) is a volume form of total volume $1$ on the face $[i_0\cdots i_k]$ and integrates to zero on all other faces of $\Delta^N$ (in fact, it vanishes identically on all faces not containing the face $[i_0\cdots i_k]$).

Finally, we have a chain homotopy $K:V^\bt\ra V^{\bt-1}$ coming by extension by $\g$-linearity of the Dupont's chain homotopy $\Omega^\bt(M)\ra \Omega^{\bt-1}(M)$ \cite{Dupont,Getzler}. The latter is glued out of Dupont's operators on standard simplices. On $\Delta^N$, one first defines the dilation (or ``homothety'') maps
\begin{align}
h_j:\quad  &[0,1]\times \Delta^N &\ra & \qquad\Delta^N
\\ \nonumber & (u;t_0,\ldots,t_N) &\mapsto &\qquad (ut_0,\ldots,1-u+u t_j,\ldots,u t_N)
\end{align}
Map $h_j$ 
pulls the points of $\Delta^N$ towards $j$-th vertex $v_j$ by the factor $u$: if $u=0$, all points go to $v_j$, whereas if $u=1$, $h_j$ does not move the points. Next, one defines the maps $$\phi_j=\pi_*h_j^*:\Omega^\bt(\Delta^N)\ra \Omega^{\bt-1}(\Delta^N)$$ 
Here $\pi: [0,1]\times \Delta^N \ra \Delta^N$ is the projection to the second factor and $\pi_*$ stands for fiber integration. Map $\phi_j$ is the chain homotopy between the identity and evaluation at $j$-th vertex for the forms on $\Delta^N$.
Finally, one defines the operator
\begin{align}
K^{(\Delta^N)}:\quad & \Omega^\bt(\Delta^N) &\ra & \qquad\Omega^{\bt-1}(\Delta^N) \\ \nonumber
& \alpha & \mapsto & \qquad \sum_{k=0}^{N-1}\quad  \sum_{0\leq i_0<\cdots < i_k\leq N} \chi_{i_0\cdots i_k}\wedge h_{i_k}\cdots h_{i_0}\alpha
\end{align}
Next one assembles this operators on individual simplices of $X$ into the operator $K:\Omega^\bt(M)\ra \Omega^{\bt-1}(M)$ by setting $\left. (K\alpha)\right|_\sigma = K^{(\sigma)}(\alpha|_\sigma)$. The resulting operator satisfies the properties 
\begin{equation}
dK+Kd=\mr{id}-\iota\circ p,\quad K^2=0,\quad K\iota=0,\quad pK=0
\end{equation}
See \cite{Getzler} for the proof on $\Delta^N$ (the respective properties on $M$ follow from consistency of Dupont's operator with simplicial face maps).

\textbf{Effective theory on a triangulation.}
Thus, we have an induction data triple (cf. Definition \ref{def: ind data})
$$V=\Omega^\bt(M,\g)\stackrel{(\iota,p,K)}{\rightsquigarrow} V'=C^\bt(X,\g)$$
given by Whitney elementary forms, Poincar\'e integration map and Dupont's chain homotopy operator. This data allows us to define the effective theory, induced from the continuum $BF$ theory (regarded as the BV package associated to the dgLa $\Omega^\bt(V,\g)$ via the construction of Section \ref{sss: abstr BF theory}) on the space 
$$\FF_X=\FF'=C^\bt(X,\g)[1]\oplus C_{-\bt}(X,\g^*)[-2]$$
parameterized by the superfields
$$A_X=\sum_{\sigma\subset X} e_\sigma A^\sigma,\qquad B_X=\sum_{\sigma\subset X} B_\sigma e^\sigma $$
Here $A_\sigma$ is a $\g$-valued coordinate on (the cochain part of) $\FF_X$ of ghost number $1-|\sigma|$ and $B_\sigma$ is a $\g^*$-valued coordinate on (the chain part of) $\FF_X$ of ghost number $-2+|\sigma|$; we denote by $|\sigma|$ the dimension of the simplex $\sigma$.

The effective action $S_X$ is defined by the BV pushforward construction (\ref{l26_fiber_BV_int}), which is in this case a functional integral. Its stationary phase evaluation leads to the expansion (\ref{l26_S'}). Due to the way the gauge-fixing data $(\iota,p,K)$ is assembled local induction data for the simplices of $X$, the values of Feynman diagrams contributing to $S_X$ also split into local contributions of individual simplices. This leads to the splitting of $S_X$ into \textit{universal local building blocks} $\bar{S}_\sigma$ corresponding to the simplices $\sigma$ of $X$:
\begin{equation}\label{l27_S_X_locality}
S_X(A_X,B_X)=\sum_{\sigma\subset X} \bar{S}_\sigma(\left. A_X\right|_\sigma, B_\sigma)
\end{equation}
Here $\left. A_X\right|_\sigma=\sum_{\sigma'\subset \sigma} e_{\sigma'}A^{\sigma'}$. The local building blocks $\bar{S}_\sigma$ depend only on the dimension of the simplex $\sigma$ and do not depend on the combinatorics of $X$ beyond the simplex $\sigma$ itself.
To find the building block for an $N$-simplex, one computes the Feynman diagrams of Section \ref{sss: eff action for abstr BF theory} for a standard simplex $\Delta^N$, using the Whitney/Poincar\'e/Dupont induction data on $\Delta^N$, and subtracts the contributions of positive-codimension faces of $\Delta^N$, setting 
$$\bar{S}_{\Delta^N}:=S_{\Delta^N}-\sum_{\sigma\subset \Delta^N,\sigma\neq \Delta^N}\bar{S}_\sigma$$
Here we understand that we work by induction in the simplex dimension $N$.

\textbf{Local building blocks.}
\begin{itemize}
\item For a $0$-simplex $[0]$ (we understand $0$ as the vertex label), we have 
\begin{equation}\label{l27_Sbar_vertex}
\bar{S}_{[0]}=\frac12 \lan B_{[0]},[A^{[0]},A^{[0]}] \ran
\end{equation}
Here the field component $A^{[0]}$ takes values in $\g$ and $B_{[0]}$ is in $\g^*$, and the ghost numbers are: $\gh\,A^{[0]}=1$, $\gh\,B_{[0]}=-2$. The only contributing diagram is $\vcenter{\hbox{}}$.
\item For a $1$-simplex $[01]$ 
$$\vcenter{\hbox{\input{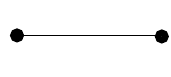tex_t}}}$$
(with $0$ and $1$ the labels of the two vertices and $[01]$ the label of the top cell), we have
\begin{multline}
\bar{S}_{[01]}=\lan B_{[01]}, \left[A^{[01]},\frac{A^{[0]}+A^{[1]}}{2}\right]+\mathbb{F}([A^{[01]},\bt])\circ (A^{[1]}-A^{[0]}) \ran-\\
-i\hbar\; \tr_\g \mathbb{G}([A^{[01]},\bt])
\end{multline}
Here we introduced two functions 
\begin{align*}
\mathbb{F}(x)=\frac{x}{2}\coth\frac{x}{2}=\sum_{k=0}^\infty \frac{B_{2k}}{(2k)!}x^{2k}=1+\frac{x^2}{12}-\frac{x^4}{720}+\cdots,
\\ 
\mathbb{G}(x)=\log\left(\frac{2}{x}\sinh\frac{x}{2}\right)=\sum_{k=1}^\infty \frac{B_{2k}}{(2k)\cdot (2k)!}x^{2k}=\frac{x^2}{2\cdot 12}-\frac{x^4}{4\cdot 720}+\cdots
\end{align*}
where $B_0=1,B_1=-\frac12,B_2=\frac16,B_3=0,B_4=-\frac{1}{30},\ldots$ are Bernoulli numbers. Here the contributing diagrams are ``branches'' with $2k+1$ leaves and ``wheels'' with $2k$ leaves
\begin{equation}\label{l27_branch_and_wheel}
\vcenter{\hbox{\begin{picture}(0,0)%
\includegraphics{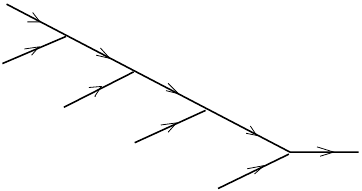}%
\end{picture}%
%
%
\setlength{\unitlength}{1973sp}%
\begingroup\makeatletter\ifx\SetFigFont\undefined%
\gdef\SetFigFont#1#2#3#4#5{%
  \reset@font\fontsize{#1}{#2pt}%
  \fontfamily{#3}\fontseries{#4}\fontshape{#5}%
  \selectfont}%
\fi\endgroup%
\begin{picture}(3464,1814)(1169,-2833)
\end{picture}%
}},\qquad \vcenter{\hbox{\begin{picture}(0,0)%
\includegraphics{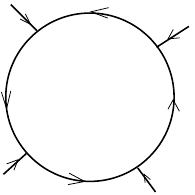}%
\end{picture}%
%
%
\setlength{\unitlength}{1973sp}%
\begingroup\makeatletter\ifx\SetFigFont\undefined%
\gdef\SetFigFont#1#2#3#4#5{%
  \reset@font\fontsize{#1}{#2pt}%
  \fontfamily{#3}\fontseries{#4}\fontshape{#5}%
  \selectfont}%
\fi\endgroup%
\begin{picture}(1834,1844)(2209,-3183)
\end{picture}%
}}
\end{equation}
plus the two simple diagrams (\ref{l26_simple_diagrams}). Explicit computation of these diagrams \cite{SimpBF,DiscrBF}, using Dupont's operator on $1$-simplex, leads to the result above with the Bernoulli numbers.
\item For a simplex  $\Delta^N$ of general dimension the building block has the following structure:
\begin{multline}\label{l27_Sbar_ansatz}
\bar{S}_{\Delta^N}=\sum_{k\geq 1}\sum_{\Gamma_0}\sum_{e_1,\ldots,e_k\subset \Delta^N} \frac{1}{|\mr{Aut}(\Gamma_0)|} C_{\Gamma_0;e_1, \ldots, e_k} \lan B_{\Delta^N},\mr{Jacobi}_{\Gamma_0}(A^{e_1},\ldots,A^{e_k})\ran -\\
-i\hbar \;\sum_{k\geq 2}\sum_{\Gamma_1}\sum_{e_1,\ldots,e_k\subset \Delta^N} \frac{1}{|\mr{Aut}(\Gamma_1)|} C_{\Gamma_1;e_1, \ldots, e_k}\; \mr{Jacobi}_{\Gamma_1}(A^{e_1},\ldots,A^{e_k})
\end{multline}
Here we are summing over binary rooted trees $\Gamma_0$ with $k\geq 1$ leaves colored by sub-simplices $e_1,\ldots,e_k$ of $\Delta^N$ and over trivalent 1-loop graphs (of the type  $\vcenter{\hbox{}}$, i.e., a collection of binary trees with roots on the cycle), with $k\geq 2$ leaves also colored by sub-simplices $e_1,\ldots,e_k$ of $\Delta^N$. For a tree $\Gamma_0$, $\mr{Jacobi}_{\Gamma_0}:\g^{\otimes k}\ra \g$ stands for the nested commutator in $\g$ with nesting prescribed by the combinatorics of the tree $\Gamma_0$. Likewise, $\mr{Jacobi}_{\Gamma_1}:\g^{\otimes k}\ra \RR$ stands for the trace over $\g$ of a nested commutator prescribed by the graph $\Gamma_1$ (e.g., for a wheel with $k$ spikes, as in (\ref{l27_branch_and_wheel}), we have $\mr{Jacobi}_{\Gamma_1}(A^{e_1},\ldots,A^{e_k})=\tr_\g [A^{e_1},[A^{e_2},\cdots [A^{e_k},\bt]\cdots]]$). Coefficients $C_{\Gamma_0;e_1,\ldots,e_k},C_{\Gamma_1;e_1,\ldots,e_k}$ are certain real\footnote{
Structure constants for trees $\Gamma_0$ are rational by construction. 
Also, all structure constants can be made rational within the approach of \cite{CMRcell}, see Remark \ref{l27_rem_CMRcell},
 i.e. by applying a canonical transformation with a local generator to the action $S_X$. 
} structure constants depending on the dimension of the simplex $N$, on the graph $\Gamma$ and on the combinatorics of the $k$-tuple of sub-simplices $e_1,\ldots,e_k$ of $\Delta^N$. For example, from explicit computation of Feynman diagrams \cite{SimpBF,DiscrBF}, we have
$$
C\left(\vcenter{\hbox{\input{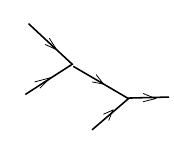tex_t}}}\right) =\left\{\begin{array}{l}
\pm \frac{|e_1|!\,|e_2|!\,|e_3|!}{(|e_1|+|e_2|+1)\cdot (N+2)!} \\ 0
\end{array}\right.
,\qquad
C\left(\vcenter{\hbox{\begin{picture}(0,0)%
\includegraphics{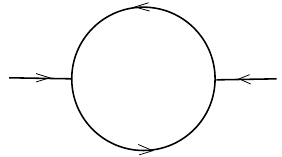}%
\end{picture}%
%
%
\setlength{\unitlength}{1973sp}%
\begingroup\makeatletter\ifx\SetFigFont\undefined%
\gdef\SetFigFont#1#2#3#4#5{%
  \reset@font\fontsize{#1}{#2pt}%
  \fontfamily{#3}\fontseries{#4}\fontshape{#5}%
  \selectfont}%
\fi\endgroup%
\begin{picture}(2677,1484)(1936,-3233)
\put(4441,-2721){\makebox(0,0)[lb]{\smash{{\SetFigFont{7}{8.4}{\rmdefault}{\mddefault}{\updefault}{\color[rgb]{0,0,0}$e_2$}%
}}}}
\put(1951,-2761){\makebox(0,0)[lb]{\smash{{\SetFigFont{7}{8.4}{\rmdefault}{\mddefault}{\updefault}{\color[rgb]{0,0,0}$e_1$}%
}}}}
\end{picture}%
}}\right)=\left\{\begin{array}{l}
\pm \frac{1}{(N+1)^2 (N+2)} \\ 0
\end{array}\right.
$$
where the sign and the vanishing/nonvanishing of the structure constant depends on the combinatorics of the tuple of sub-simplices in $\Delta^N$.
\end{itemize}

\begin{remark}
One can assemble the building blocks (\ref{l27_Sbar_ansatz}) using (\ref{l27_S_X_locality}) for $X$ any simplicial complex (not necessarily a triangulation of a manifold) and the result will be a solution of quantum master equation. This corresponds to replacing the de Rham algebra $\Omega^\bt(M,\g)$ in the construction of the theory from which we induce the action on $X$, with Sullivan's cdga of piecewise-polynomial forms on $X$, tensored with $\g$. We call the BV package $(\FF_X, S_X)$ the simplicial (or cellular) $BF$ theory.
\end{remark}

\begin{remark}\label{l27_rem_CMRcell}
Direct computation of Feynman diagrams for a simplex $\Delta^N$ with $N\geq 2$ is rather difficult, especially in the case of 1-loop diagrams (in particular, the latter are infinite-dimensional supertraces which need be regularized appropriately). One shortcut is to compute only the tree diagrams directly and derive the values of 1-loop diagrams by imposing the quantum master equation on $S_{\Delta^N}$. An altogether different approach was developed in \cite{CMRcell}: rather than obtaining the theory on  a triangulation $X$ as a BV pushforward from the continuum theory, one may ask whether a collection of structure constants $C$ exists such that the building blocks defined by the formula (\ref{l27_Sbar_ansatz}), when summed up over simplices of an arbitrary simplicial complex $X$ as in (\ref{l27_S_X_locality}), yield a solution $S_X$ of the quantum master equation. The theorem of \cite{CMRcell} (Theorem 8.6) ensures that the answer is positive\footnote{The construction is by induction in skeleta of $X$. For a zero-skeleton, we have a solution of quantum master equation given as a sum of terms (\ref{l27_Sbar_vertex}). Going by induction from $\mr{sk}_{k-1}X$ to $\mr{sk}_k X$, we extend the solution of QME on $(k-1)$-skeleton to the $k$-skeleton by obstruction theory: we first add the term bilinear in $A$ and $B$, corresponding to the cellular coboundary operator going from $(k-1)$-cells to the newly adjoined $k$-cells; then we successively correct for the error in the QME by adding higher and higher correction terms to the action.
} and, in the appropriate sense, unique (up to canonical transformations with generators satisfying 
an ansatz similar to (\ref{l27_S_X_locality},\ref{l27_Sbar_ansatz})).\footnote{To have uniqueness up to canonical transformations, one imposes two initial ``conditions'' on the building blocks: the building blocks for $0$-cells are fixed to be given by (\ref{l27_Sbar_vertex}); the quadratic part of $S_X\bmod \hbar$ must be $\lan B_X, dA_X \ran$.} Moreover, in this approach Dupont's chain homotopy and Whitney forms are not needed and one may allow $X$ to be an arbitrary regular cellular complex, with no restriction on cells being simplices. In this case the structure constants $C$ of the building block $\bar{S}_e$ for $e\subset X$ a cell can be chosen to depend only on the combinatorial type of the cell (i.e. on the combinatorial type of the cellular decomposition of the closure $\bar{e}$ induced from $X$).
Also, in this approach, for a fixed Riemannian manifold $M$ endowed with a sequence of cellular decompositions $X_n$ of maximal cell diameter going to zero as $n\ra\infty$, the cellular actions $S_{X_n}$ converge to the continuum action (\ref{l26_S_BF_AKSZ},\ref{l26_S_canonical}) in the limit $n\ra\infty$.
\end{remark}

\textbf{Effective theory on cohomology.}
Having constructed the cellular theory $(\FF_X,S_X)$ on a triangulation/cellular decomposition $X$ of $M$, we can proceed further and induce the effective theory corresponding to the cohomology $V'=H^\bt(M,\g)$ viewed as a subcomplex in $V=C^\bt(X,\g)$, using again the construction of Section \ref{sss: eff action for abstr BF theory}, Remark \ref{l26_rem_induction_for_BF_infty}. The result is a BV package $(\FF_{H^\bt},S_{H^\bt})$ (which we call the effective $BF$ theory ``on cohomology'' of $M$) with $\FF_{H^\bt}=H^\bt(M,\g)[1]\oplus H_{-\bt}(M,\g^*)[-2]$ and with $S_{H^\bt}$ satisfying the ansatz (\ref{l26_S_BF_infty}):
$$S_{H^\bt}=\sum_{n\geq 2} \frac{1}{n!}\lan b, l_n^{H^\bt}(a,\ldots,a)\ran-i\hbar\sum_{n\geq 2}\frac{1}{n!}q_n^{H^\bt}(a,\ldots,a)$$
where we denoted $a,b$ the superfields (\ref{l26_abstr_BF_superfields}) for the theory on the level of cohomology. 
Here the operations $l_n^{H^\bt}:\wedge^n H^\bt(M,\g)\ra H^\bt(M,\g)$ and $q_n^{H^\bt}:\wedge^n H^\bt(M,\g)\ra \RR$ are constructed using the Feynman diagram expansion (\ref{l26_S'}) out of the cellular action $S_X$ and endow the cohomology $H^\bt(M,\g)$ with the structure of a unimodular $L_\infty$ algebra (with zero differential $l_1=0$). In particular:
\begin{itemize}
\item Operations $l^{H^\bt}_n$ are the Lie-Massey operations on $\g$-valued cohomology of $M$.\footnote{
E.g., for $\alpha,\beta$ two closed forms on $M$ and $x,y\in\g$ we have $l_2^{H^\bt}(x\otimes[\alpha],y\otimes[\beta])=[x,y]\otimes [\alpha\wedge \beta]$ -- the usual cup product of the cohomology classes $[\alpha],[\beta]$, tensored with the Lie bracket for $\g$-coefficients. For $\alpha,\beta,\gamma$ a triple of closed forms and $x,y,z\in\g$, we have $l_3^{H^\bt}(x\otimes[\alpha],y\otimes [\beta],z\otimes [\gamma])=\pm[[x,y],z]\otimes p(-K(\alpha\wedge \beta)\wedge\gamma)+\cdots$ where $\cdots$ are two similar terms which come from simultaneous cyclic permutations of $x,y,z$ and $\alpha,\beta,\gamma$. Here $p,K$ are maps from an arbitrarily chosen induction triple $\Omega^\bt(M)\stackrel{(\iota,p,K)}{\rightsquigarrow} H^\bt(M)$ -- in particular one may infer such a triple from Hodge decomposition of forms on $M$ associated to a choice of Riemannian metric. In case when classes $[\alpha],[\beta],[\gamma]$ have pairwise vanishing cup products, we have $l_3^{H^\bt}(x\otimes[\alpha],y\otimes [\beta],z\otimes [\gamma])=\left[ [[x,y],x]\otimes \pm d^{-1}(\alpha\wedge\beta)\wedge\gamma+\cdots \right]$ and the expression in brackets on the r.h.s. is a closed form, and thus it is meaningful to take its cohomology class. The original reference on Lie-Massey brackets is \cite{Retakh}.
}
In the case of $M$ simply connected, they contain the complete information on the rational homotopy type of $M$. For $M$ non-simply connected, these operations define, via the deformation-theoretic model (as the homotopy Maurer-Cartan set)
$$\frac{\{a\in \epsilon H^1(M,\g)[[\epsilon]]\;\;|\;\; \sum_{k\geq 2}\frac{1}{k!}l_k^{H^\bt}(a,\ldots,a)=0\}}{a\sim a+\sum_{k\geq 1}\frac{1}{k!}l_{k+1}^{H^\bt}(a,\ldots,a,\beta),\;\; \beta\in 
H^0(M,\g)[[\epsilon]]}$$
for the formal neighborhood of zero connection on the moduli space of flat connections 
$$
\MM_{M,G}\cong \mr{Hom}(\pi_1(M),G)/G$$
\item Operations $q_n^{H^\bt}$ contain information on the singular behavior of Reidemeister-Ray-Singer torsion $\tau(M,\nabla)$ for a flat $G$-connection $\nabla$ in the formal neighborhood of zero connection in $
\MM_{M,G}$.
\end{itemize}

The unimodular $L_\infty$ structure on cohomology $H^\bt(M,\g)$ is a stronger invariant than just its classical $L_\infty$ part.
\begin{example}
For the circle $S^1$ and the Klein bottle $\mathsf{KB}$, the cohomology (with coefficients in $\RR$ or in $\g$) is isomorphic, with $H^0(-,\g)=\g$ and $H^1(-,\g)=\g$. Moreover (choosing the coefficients to be $\g$), they are isomorphic as graded Lie algebras and as $L_\infty$ algebras (operations $l_{\geq 3}^{H^\bt}$ vanish). However, they are distinguished by the quantum operations $q_n^{H^\bt}$ on cohomology. In terms of the effective actions on cohomology, we have:

\begin{tabular}{c||c}
circle & Klein Bottle \\ \hline
$S_{H^\bt}= \lan B_0, \frac12[A^0,A^0] \ran+\lan B_1,[A^1,A^0] \ran-$ &  $S_{H^\bt}= \lan B_0, \frac12[A^0,A^0] \ran+\lan B_1,[A^1,A^0] \ran-$\\
$-i\hbar\;\tr_\g \log\frac{\sinh\frac{\mr{ad}(A^1)}{2}}{\frac{\mr{ad}(A^1)}{2}}$ & $-i\hbar\;\tr_\g \log\left(\frac{\mr{ad}(A^1)}{2}\coth \frac{\mr{ad}(A^1)}{2} \right)^{-1}$
\end{tabular}
where indices $0$ and $1$ pertain to the generator of zeroth and first cohomology of the circle/Klein bottle.
\end{example}

If $X$ is a triangulation (or a more general cellular decomposition) of $M$ and $X'$ is a cellular subdivision of $X$ -- in this case we say that $X$ is an \emph{aggregation} of $X'$  -- then one can construct the BV pushforward of the cellular action $S_{X'}$ on $\FF_{X'}$ to $\FF_X$. The result is a canonical transformation of the cellular action $S_X$. Thus, cellular actions 
$S_X$ viewed modulo canonical transformations are compatible with BV pushforwards along cellular aggregations. And, in turn, from every $X$ one can induce the effective action on cohomology $S_{H^\bt}$ which, when viewed modulo canonical transformations, is independent of $X$ and independent on the induction data $C^\bt(X,\g)\stackrel{(\iota,p,K)}{\rightsquigarrow} H^\bt(M,\g)$ necessary to define the gauge-fixing for the BV pushforward to cohomology.

Schematically, for a fixed manifold $M$, we have the following picture of different realizations of $BF$ theory:
$$
\vcenter{\hbox{\input{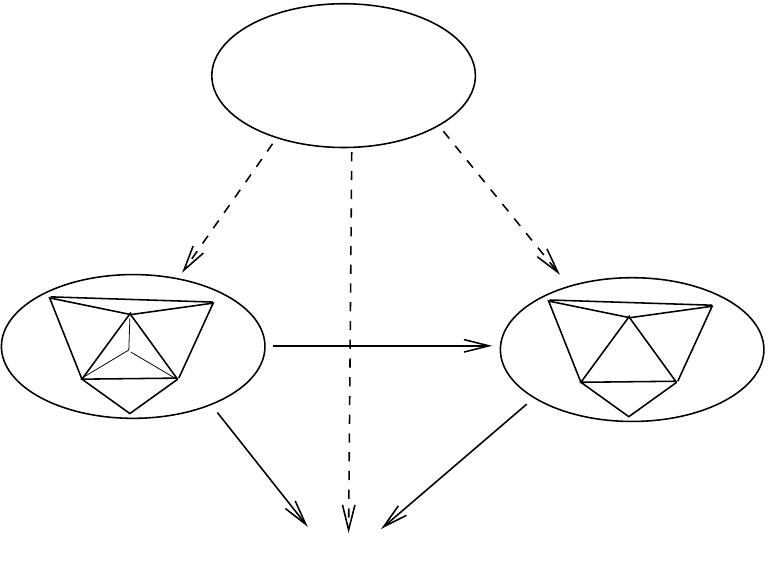tex_t}}}
$$
Here we have three levels:
\begin{enumerate}
\item The ``upper'' level -- the continuum $BF$ theory on $M$, with infinite-dimensional space of fields, modelled on forms on $M$.
\item The ``middle'' level -- cellular $BF$ theories associated to cellular decompositions $X$ of $M$, with finite-dimensional fields modelled on cellular cochains. Cellular actions $S_X$ are local -- given as a sum over cells of local contributions. Considered modulo canonical transformations, cellular realizations $(\FF_X,S_X)$ are related by BV pushforwards along cellular aggregations.
\item The ``lower'' level -- the theory $(\FF_{H^\bt},S_{H^\bt})$ induced on cohomology of $M$. On this level we obtain the invariants of $M$, while the locality disappears completely.
\end{enumerate}
In the picture above the dashed arrows represent path integrals  (infinite-dimensional BV pushforwards) while the solid arrows represent finite-dimensional BV pushforwards. 

One motivation for cellular $BF$ theory is that it replaces the path integral computation of the invariants induced on cohomology (from the continuum level) by a coherent system of finite-dimensional integral formulae for same invariants (BV pushforwards $X\ra H^\bt$ for all possible $X$). Thus, ultimately, one may forget entirely about the continuum level in the picture above.

\subsubsection{Remarks}
\begin{remark}\label{l27_rem_xi}
One can  fine tune the normalization of the reference Berezinian (\ref{l26_abstr_BF_mu}) $\mu_X$ for the cellular $BF$ theory in such a way that the expression 
$$Z_X=\mu_X^{1/2}e^{\frac{i}{\hbar}S_X}\qquad \in \mr{Dens}^{1/2}(\FF_X)$$ reproduces itself under BV pushforward cellular aggregations, modulo $\Delta$-exact terms: $(\mc{P}_{X'\ra X})_*Z_{X'}=Z_X+\Delta_X(\cdots)$. This property is achieved if we set
$$\mu_X^{1/2}=
\prod_{\sigma\subset X} \xi_{|\sigma|}\cdot|\DD A^\sigma|^{\frac12} |\DD B_\sigma|^{\frac12}$$
where 
$$\xi_k:=(2\pi\hbar)^{-\frac14 +\frac12 k (-1)^{k-1}} (e^{-\frac{\pi i}{2}}\hbar)^{\frac14 +\frac12 k (-1)^{k-1}}$$
Thus, the reference half-density on cellular fields is is the product of the standard half-densities (built out of the reference volume element on $\g$) for individual cells, rescaled using local complex factors $\xi_k$. This nontrivial local rescaling of the standard cellular half-density can be thought of as a baby version of renormalization. Then the BV pushforward to cohomology yields 
\begin{equation}\label{l27_Z_H}
Z_{H^\bt}:=(\mc{P}_{X\ra H^\bt})_*Z_X= \xi_{H^\bt} \tau(M,\g) e^{\frac{i}{\hbar}S_{H^\bt}} 
\end{equation}
where $\xi_{H^\bt}=\prod_{k=0}^{\dim M}(\xi_k)^{\dim H^k(M,\g)}\;\;\in\CC$ and $\tau(M,\g)\in \mr{Dens}^{1/2}_\mr{const}(\FF_{H^\bt})\cong \mr{Det}\, H^\bt(M,\g)/\{\pm 1\}$ is the Reidemeister torsion of $M$ endowed with trivial connection. 
We call the expression (\ref{l27_Z_H}) the partition function.
\end{remark}

\begin{remark}
One can twist the differential forms on $M$, cochains on $X$ and the cohomology by a background $G$-local system $E$ (equivalently, a flat $G$-bundle). The entire story goes through.\footnote{In particular the local building blocks (\ref{l27_Sbar_ansatz}) do not depend on the local system and they are assembled into the cellular action $S_X$ again as in (\ref{l27_S_X_locality}), however now $\left.A_X\right|_{\sigma}$ is defined using the data of the local system, as $\sum_{\sigma'\subset \sigma} e_{\sigma'}\mr{Ad}_{E(\sigma\leftarrow\sigma')}A^{\sigma'}$. Here we assume that the local system is trivialized over the barycenters 
of the cells/simplices of $X$ and $E(\sigma\leftarrow\sigma')\in G$ is the parallel transport between the barycenters of $\sigma'$ and $\sigma$.
}
In particular in the partition function $Z_{H^\bt}$, $\tau$ becomes the Reidemeister torsion for a possibly non-trivial local system $E$ and $\xi_{H^\bt}$ is expressed in terms of the Betti numbers of $M$ twisted by $E$.
\end{remark}

\begin{remark} In the setting of arbitrary regular CW complexes (Remark \ref{l27_rem_CMRcell}), a fundamental property of the cellular theory is simple-homotopy invariance, which in particular implies invariance w.r.t. to cellular aggregations (which can be presented as special simple-homotopy equivalences). The fundamental observation \cite{CMRcell} is that if $X$ an \emph{elementary collapse} of $Y$, then the BV pushforward of $S_Y$ to $X$, along the collapse, yields a canonical transformation of $S_X$. 
$$
\vcenter{\hbox{\begin{picture}(0,0)%
\includegraphics{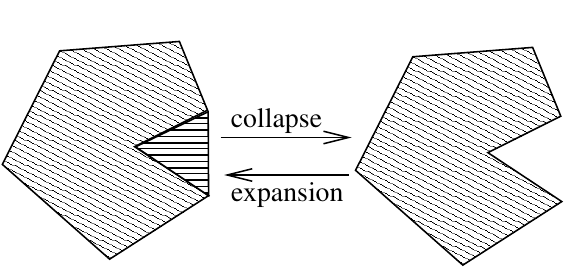}%
\end{picture}%
%
%
\setlength{\unitlength}{1973sp}%
\begingroup\makeatletter\ifx\SetFigFont\undefined%
\gdef\SetFigFont#1#2#3#4#5{%
  \reset@font\fontsize{#1}{#2pt}%
  \fontfamily{#3}\fontseries{#4}\fontshape{#5}%
  \selectfont}%
\fi\endgroup%
\begin{picture}(5415,2559)(1948,-3438)
\put(6234,-1164){\makebox(0,0)[lb]{\smash{{\SetFigFont{8}{9.6}{\rmdefault}{\mddefault}{\updefault}{\color[rgb]{0,0,0}$X$}%
}}}}
\put(2656,-1111){\makebox(0,0)[lb]{\smash{{\SetFigFont{8}{9.6}{\rmdefault}{\mddefault}{\updefault}{\color[rgb]{0,0,0}$Y$}%
}}}}
\end{picture}%
}}
$$
This immediately implies that if $X$ and $X'$ are simple-homotopy equivalent CW complexes (i.e. can be connected by a sequence of collapses and their inverses -- expansions), then the actions induced from $X$ and $X'$ on cohomology agree up to a canonical transformation.
\end{remark}

\marginpar{\LARGE{Lecture 28, 12/07/2016.}}
\subsection{Perturbative Chern-Simons theory 
}
We give a brief discussion of the perturbative Chern-Simons theory after Witten \cite{Witten_CS} (one-loop partition function) 
and Axelrod-Singer \cite{AS1,AS2} (higher loop corrections).

Fix $M$ a closed connected oriented $3$-manifold and $G$ a compact simple, simply connected Lie group.
Recall the Chern-Simons action functional
$$S_{CS}(A)=\int_M\tr\;\frac12 A\wedge dA+\frac16 A\wedge [A,A]$$
on $G$-connections 
$A\in \mr{Conn}_{M,G}\simeq \Omega^1(M,\g)$.

\textbf{Idea (Witten, \cite{Witten_CS}):} Consider the path integral over the space of connections
\begin{equation}\label{l28_CS_PI}
Z(M,G;k)=\int_{\mr{Conn}_{M,G}}\DD A\;\; e^{\frac{ik}{2\pi}S_{CS}(A)}
\end{equation}
with $k\in\ZZ$ the ``level''.\footnote{One requests integrality of $k$ for the integrand in (\ref{l28_CS_PI}) to be invariant under gauge transformations of connections, see Section \ref{sss: CS gauge invariance}. For the perturbative treatment of the Chern-Simons path integral, integrality of $k$ is irrelevant.} One expects the path integral to yield an invariant of the $3$-manifold $M$ under orientation preserving diffeomorphisms, since the integrand is manifestly independent on geometric structures on $M$ besides the orientation.

\subsubsection{Perturbative contribution of an acyclic flat connection: one-loop part.}\label{sss: CS one-loop}
We want to evaluate the perturbative contribution of the (gauge orbit of) a flat connection $A_0$ (cf. Theorem \ref{thm: FP stat phase}; here instead of using Faddeev-Popov formalism for gauge-fixing, we employ AKSZ-BV formalism, which is equivalent to Faddeev-Popov in the case of Chern-Simons theory). Assume that the flat connection $A_0$ is \emph{acyclic}.\footnote{I.e. that 
the cohomology of the twisted de Rham operator $d_{A_0}$ acting on $\Omega^\bt(M,\g)$ vanishes in all degrees.
Or, equivalently, that the gauge class $[A_0]\in\MM_{M,G}$ is an isolated point in the moduli space of flat connections and corresponds to an irreducible local system. 
To treat the perturbation theory around a non-acyclic flat connection, one needs the machinery of BV pushforwards, see \cite{CM}.
} We have
\begin{equation}\label{l28_Z_pert_0}
Z_\mr{pert}(M,G;\hbar;A_0)=\int_{\Omega^1(M,\g)}^{\mr{gauge-fixed,\,perturbative\;at\;}a=0}\DD a\;e^{\frac{i}{\hbar}S_{CS}(A_0+a)}\\
\end{equation}
Here $\hbar$ is related to the level by $\hbar=\frac{2\pi}{k}$; $a$ is the fluctuation of the field $A$ around the fixed flat connection $A_0$. Chern-Simons action evaluated on a connection close to $A_0$ yields
$$S_{CS}(A_0+a)=S_{CS}(A_0)+\int_M \tr\; \frac12 a\wedge d_{A_0} a+\frac16 a\wedge [a,a]$$
-- the sum of a constant and a term which looks like the usual Chern-Simons action evaluated on the fluctuation, but with the de Rham operator twisted by $A_0$, denote it $S_{CS}(A_0;a)$. One can extend the classical action $S_{CS}(A_0;a)$ on $\Omega^1(M,\g)\ni a$ to a BV action, using the $A_0$-twisted version of the AKSZ construction, which yields the master action
$$S_\mr{CS-AKSZ}(A_0;\til{a})=\int_M \tr\; \frac12 \til{a}\wedge d_{A_0} \til{a}+\frac16 \til{a}\wedge [\til{a},\til{a}]$$
on nonhomogeneous forms $\Omega^\bt(M,\g)[1]\ni \til{a}$. Thus, continuing (\ref{l28_Z_pert_0}), we have the BV integral
\begin{equation}\label{l28_Z_per_1}
Z_\mr{pert}(M,G;\hbar;A_0)=e^{\frac{i}{\hbar}S_{CS}(A_0)}\int_{\LL\subset \Omega^\bt(M,\g)[1]}^\mr{perturbative}\DD\til{a}\; e^{\frac{i}{\hbar}S_\mr{CS-AKSZ}(A_0;\til{a})}
\end{equation}
where we choose the gauge-fixing Lagrangian $\LL$ to be given by the ``Lorentz gauge'' $d^*\til{a}=0$, with $d^*$ the Hodge dual of the de Rham operator, corresponding to some fixed metric $g$ on $M$. Finally, perturbative (stationary phase) evaluation of the BV integral on the r.h.s. yields (cf. Theorem \ref{thm: FP stat phase}):
\begin{equation}\label{l28_Z_pert_expansion}
Z_\mr{pert}(M,G;\hbar;A_0)=\underbrace{e^{\frac{i}{\hbar}S_{CS}(A_0)}}_{Z^{0\mr{-loop}}}\cdot \underbrace{\tau(M,A_0)^{\frac12}\cdot e^{\frac{\pi i}{4}\psi(A_0,g)}}_{Z^{1\mr{-loop}}}\cdot \underbrace{\exp\left( \sum_\Gamma \frac{(i\hbar)^{-\chi(\Gamma)}}{|\mr{Aut}(\Gamma)|}
\Phi(\Gamma)\right)}_{Z^{\geq 2\mr{-loop}}}
\end{equation}
Here 
$$\tau(M,A_0)=\prod_{p=1}^3 ({\det}_{\Omega^p(M,\g)}\Delta^{(p)}_{A_0})^{-\frac{p(-1)^p}{2}}\qquad \in \RR_{> 0}$$
is the Ray-Singer torsion; $\Delta^{(p)}=d_{A_0}d^*_{A_0}+d^*_{A_0}d_{A_0}$ is the Laplacian acting on $p$-forms, twisted by $A_0$. Another object appearing in the r.h.s. of (\ref{l28_Z_pert_expansion}) is $\psi(A_0,g)$ -- the Atiyah-Patodi-Singer eta invariant of the Dirac operator $L_-:= *d_{A_0}+d_{A_0}*$ acting on $\Omega^\mr{odd}(M,\g)=\Omega^1(M,\g)\oplus \Omega^3(M,\g)$.\footnote{Recall that for a self-adjoint elliptic operator $\OO:\HH\ra\HH$ on a compact manifold, its APS eta invariant is the zeta-regularized signature, $\eta=\lim_{s\ra 0}\sum_\lambda \mr{sign}(\lambda)\cdot |\lambda|^{-s}$ where the sum goes over the eigenvalues $\lambda$ of $\OO$.} 

\begin{remark}
It is instructive to write the r.h.s. of (\ref{l28_Z_per_1}) as a Faddeev-Popov integral: 
\begin{multline}\label{l28_CS_FP}
e^{\frac{i}{\hbar}S_{CS}(A_0)}\int\DD a\,\DD c\,\DD\bar{c}\,\DD\lambda\; e^{\frac{i}{\hbar}\int_M\tr\, (\frac12 a\wedge d_{A_0} a+\frac16 a\wedge [a,a])+\lambda\, d^*_{A_0}a+ \bar{c}\, d_{A_0}^*d_{A_0+a}c}\\
=e^{\frac{i}{\hbar}S_{CS}(A_0)}\int\DD a\,\DD c\,\DD\bar{c}\,\DD\lambda\; e^{\frac{i}{\hbar} 
\left(\frac12 \lan a+\lambda, L_- (a+\lambda) \ran+\lan *\bar{c}, \Delta_{A_0}^{(0)} c\ran + \int_M\tr \,(\frac16 a\wedge [a,a]+d_{A_0}^*\bar{c}\wedge[a,c])\right )}
\end{multline}
where we understand the fields $a,c,\bar{c},\lambda$ as $\g$-valued forms of degrees $1,0,3,3$, respectively, of ghost number $0,1,-1,0$, respectively; $\lan \alpha,\beta \ran:=\int_M\tr \,\alpha\wedge *\beta$ is the Hodge inner product of $\g$-valued forms.
Computing the Gaussian part of the result, we obtain (cf. Theorem \ref{thm: FP stat phase})
$$ 
e^{\frac{i}{\hbar}S_{CS}(A_0)}\cdot (\det L_-)^{-\frac 12} \cdot \det \Delta_{A_0}^{(0)} \cdot e^{\frac{\pi i}{4}\mr{sign}L_-}\cdot (\mbox{Feynman diagrams})
$$
Here the determinants and the signature are understood in the sense of zeta-regularization. Thus we precisely reproduced the formula (\ref{l28_Z_pert_expansion}). Note that the square root of Ray-Singer torsion can indeed be written as $\tau^{1/2}=(\det L_-)^{-1/2}\det \Delta^{(0)}_{A_0}$.
\end{remark}

\textbf{Cancelling metric dependence of 1-loop partition function (at the cost of introducing framing dependence).}
Ray-Singer torsion is independent of the choice of Riemannian metric $g$ that we used for the gauge-fixing. However, the eta invariant $\psi(A_0,g)$ does depend on the metric. This dependence can be unerstood from  Atiyah-Patodi-Singer index theorem which implies \cite{Witten_CS} that the combination
\begin{equation}\label{l28_psi_plus_Sgrav}
\frac{\pi i}{4}\psi(A_0,g)+\frac{i\dim G}{24}\cdot \frac{1}{2\pi}S_\mr{grav}(g,\phi)
\end{equation}
is a topological invariant (i.e. is independent on $g$). Here the ``gravitational Chern-Simons'' term $S_\mr{grav}$ is the Chern-Simons action evaluated on the Levi-\v{C}ivita connection associated to the metric $g$. To make sense of the latter, one needs to fix a trivialization (``framing'') $\phi:TM\ra M\times \RR^3$ of the tangent bundle of $M$, considered 
up to homotopy of trivializations. A known fact that all connected  orientable $3$-manifolds have trivializable tangent bundle, i.e. framings do exist. Moreover, the set of framings forms a $\ZZ$-torsor. In particular, there is a local operation on $M$ (changing the trivialization in a ball cut out of $M$) changing the framing by ``one unit''. Changing the framing $\phi$ by $n$ units changes $S_\mr{grav}$ by a multiple of $n$:
\begin{equation}\label{l28_Sgrav_change_under_shift_of_framing}
\frac{1}{2\pi}S_\mr{grav}(g,\phi+n)=\frac{1}{2\pi} S_\mr{grav}(g,\phi)+2\pi\cdot n
\end{equation}

Denote by $Z^{\leq 1\mr{-loop}}$ the right hand side of (\ref{l28_Z_pert_expansion}) without the last term -- we ignore  for the moment the contribution of Feynman graphs with $\geq 2$ loops. Performing a change of definition of partition function (a baby version of renormalization) 
\begin{equation}
Z^{\leq 1\mr{-loop}}\qquad \longrightarrow\qquad \til{Z}^{\leq 1\mr{-loop}}:=Z^{\leq 1\mr{-loop}}\cdot e^{\frac{i\dim G}{24}\cdot\frac{1}{2\pi}S_\mr{grav}(g,\phi)}
\end{equation}
we obtain an expression where dependence on the metric is cancelled (due to metric independence of (\ref{l28_psi_plus_Sgrav})), at the cost of introducing a dependence on framing $\phi$. The latter is under control via (\ref{l28_Sgrav_change_under_shift_of_framing}). In particular, shift of the framing $\phi\mapsto \phi+1$ induces the change 
$$\til{Z}^{\leq 1\mr{-loop}}\mapsto \til{Z}^{\leq 1\mr{-loop}}\cdot e^{\frac{2\pi i\dim G}{24}}$$

\begin{remark} Another consequence of Atiyah-Patodi-Singer theorem controls the dependence of $\psi(A_0,g)$ on the flat connection $A_0$ via
\begin{equation}
\frac{\pi i}{4}\psi(A_0,g)=\frac{\pi i}{4}\dim G\,\psi_0(g)+i\,c_2(G)\frac{1}{2\pi}S_{CS}(A_0)
\end{equation}
where $\psi_0(g)$ is the eta invariant of the operator $*d+d*$ on $\Omega^\mr{odd}(M)$ (without the twist by $A_0$) and $c_2(G)$ is the value of the quadratic Casimir of $G$ in adjoint representation (the dual Coxeter number of the Lie algebra $\g$). E.g., $c_2(SU(N))=N$. Therefore, one can write
$$\til{Z}^{\leq 1\mr{-loop}}=e^{\frac{i (k+c_2(G))}{2\pi}S_{CS}(A_0)}\tau(M,A_0)^{\frac12}e^{\frac{\pi i}{4}\dim G\,\psi_0(g)}\cdot e^{\frac{i}{24}\frac{\dim G}{2\pi}S_\mr{grav}(g,\phi)}$$
note that $A_0$-dependence of the eta invariant got absorbed into the shift of the level $k$ of Chern-Simons theory by $c_2(G)$.
\end{remark}

\subsubsection{Higher loop corrections, after Axelrod-Singer}
Now we proceed to the corrections to the perturbative Chern-Simons path integral in powers of $\hbar$ -- the last term in (\ref{l28_Z_pert_expansion}). Consider the operator 
\begin{equation}
\mathbb{K}=d^*_{A_0}\Delta_{A_0}^{-1}:\quad \Omega^\bt(M,\g)\ra \Omega^{\bt-1}(M,\g)
\end{equation}
-- the chain contraction for the twisted de Rham complex $\Omega^\bt(M,\g),d_{A_0}$ arising from Hodge-de Rham theory. It is an integral operator with certain integral kernel  
$$\eta\in \Omega^2(M\times M\backslash\diag,\g\otimes \g)$$
given by a $\g\otimes\g$-valued $2$-form on $M\times M$ which is singular at the diagonal and smooth away from the diagonal. The form $\eta$ is the propagator for Chern-Simons theory (or ``Green's function'' or ``parametrix''). The integral kernel property relating $\mathbb{K}$ and $\eta$ reads 
$$ \mathbb{K}\alpha=\int_{M_{(2)}}\tr\,\eta_{12}\wedge \alpha_{(2)}=(p_1)_*(\tr\,\eta\wedge p_2^* \alpha) $$
for $\alpha\in \Omega^\bt(M,\g)$ a test form and $p_1,p_2$ the projections from $M\times M$ to the first and second factor (subscripts $1,2$ in the middle formula above also relate to the fist and second factor in $M\times M$); $(p_1)_*$ is the fiber integral.

The crucial observation is that the singularity of $\eta$ on the diagonal of $M\times M$ is relatively mild: $\eta$ extends to a smooth form on the compactification $C_2(M)$ given by a differential geometric blow-up of the diagonal of $M\times M$ which replaces the diagonal with its tangent sphere bundle, $\diag\ra ST\diag$. Chain homotopy property $d_{A_0}\mathbb{K}+\mathbb{K}d_{A_0}=\mr{id}$ implies that 
\begin{enumerate}[(i)]
\item $d\eta=0$ with $d$ the de Rham operator on $C_2(M)$ twisted by $A_0$ on both copies of $M$;
\item fixing a point $y\in M$ and integrating the propagator over a $2$-sphere in $M$ surrounding $y$, we get $\int_{S^2\ni x}\eta(x,y)=1$. Put another way, the pushforward (fiber integral) of $\eta$ restricted to $\dd C_2(M)=ST\diag$ along the projection $ST\diag\ra \diag$ is $1$ (as the constant function on $\diag$).
\end{enumerate}

\textbf{Fulton-MacPherson-Axelrod-Singer compactified configuration spaces.} For $n\geq 2$, we have the \emph{open} configuration space of $n$ distinct ordered points on $M$, 
$$C_n^\mr{open}(M)=\{(x_1,\ldots,x_n)\in M^n\;\;|\;\; x_i\neq x_j\;\; \mr{for\; any\;}i\neq j\}\qquad = M^n\backslash \cup_S \diag_S$$
where $S=\{i_1,\ldots,i_k\}$ runs over subsets of $\{1,\ldots,n\}$ with $k\geq 2$ elements and $\diag_S=\{(x_1,\ldots,x_n)\;\;|\;\; x_{i_1}=\cdots=x_{i_k}\}\subset M^n$ is the diagonal in $M^n$ corresponding to $S$. Fulton-MacPherson-Axelrod-Singer compactification 
of the open configuration space $C_n^\mr{open}(M)$ consists in replacing all diagonals $\diag_S$ in $M^n$ with their unit normal bundles $N\diag_S/\RR_+$ (the differential geometric blow-ups of the diagonals; we also denote them $\mr{Bl}(\diag_S)$). We denote this compactification $C_n(M)$. In particular, in $C_n(M)$, the situation when $k\geq 2$ points collapse together at a single point gets endowed with the ``zoomed-in'' picture, containing the information on the relative positions of the collapsing points, modulo scaling.
$$
\vcenter{\hbox{\begin{picture}(0,0)%
\includegraphics{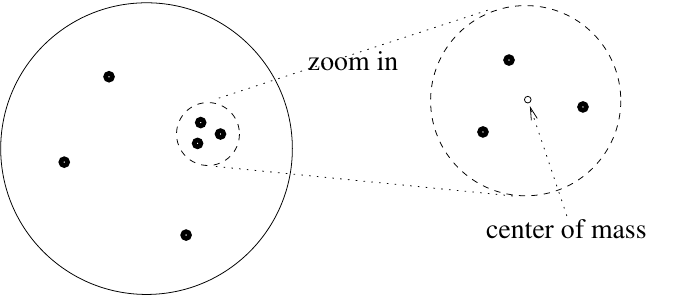}%
\end{picture}%
%
%
\setlength{\unitlength}{1973sp}%
\begingroup\makeatletter\ifx\SetFigFont\undefined%
\gdef\SetFigFont#1#2#3#4#5{%
  \reset@font\fontsize{#1}{#2pt}%
  \fontfamily{#3}\fontseries{#4}\fontshape{#5}%
  \selectfont}%
\fi\endgroup%
\begin{picture}(6672,2815)(993,-4178)
\put(1031,-1591){\makebox(0,0)[lb]{\smash{{\SetFigFont{8}{9.6}{\rmdefault}{\mddefault}{\updefault}{\color[rgb]{0,0,0}$M$}%
}}}}
\end{picture}%
}}
$$
In particular, to every subset $S\subset \{1,\ldots,n\}$ with $|S|\geq 2$ elements, there corresponds a boundary stratum of the compactified configuration space,
\begin{equation}\label{l28_bdry_stratum}
\dd_S C_n(M)=\mr{Bl}(\diag_S)\simeq C_{n-|S|+1}(M)\times \underline{C}_{|S|}(\RR^3)
\end{equation}
where the \emph{reduced} configuration space $\underline{C}_{k}(\RR^3)=C_k(\RR^3)/\RR_+\ltimes \RR^3$ appearing in the r.h.s. is the space of configurations of $k$ points modulo translations and rescalings (note that one can mod out translations by fixing the center of mass of the configuration to be at the origin of $\RR^3$).\footnote{\label{l28_footnote_pi_S}
Another way to phrase (\ref{l28_bdry_stratum}) is to note that there is a canonical projection $\pi_S:\dd_S C_{n}(M)\ra C_{n-|S|+1}(M)$ which identifies all the points from $S$ with the point of collapse $x_0\in M$; the fiber of $\pi_S$ is the reduced configuration space of $S$ points on the tangent space $T_{x_0}M$, which can be identified with $\underline{C}_{|S|}(\RR^3)$ if the trivialization of $TM$ is fixed. Thus $\pi_S$ is a trivial fiber bundle with fiber $\underline{C}_{|S|}(\RR^3)$, but its trivialization depends on the trivialization of $TM$.}
Boundary strata corresponding to collapses of pairs of points are called the ``principal'' boundary strata, while collapses of $|S|\geq 3$ points correspond to the ``hidden'' boundary strata. The space $C_n(M)$ is a smooth $3n$-dimensional manifold with corners; the corners are described by nested collapses of points on $M$.\footnote{It is instructive to check directly the dimension of the r.h.s. of (\ref{l28_bdry_stratum}) -- it is $3(n-|S|+1)+(3|S|-3-1)=3n-1$. Thus, it has the correct dimension to be a codimension $1$ stratum of $C_n(M)$.}

\textbf{Feynman diagrams.} In the last term of (\ref{l28_Z_pert_expansion}), we sum over connected $3$-valent graphs $\Gamma$ without ``short loops'' -- edges connecting a vertex to itself (whereas double/triple edges between distinct vertices are allowed). Here are the first contributing graphs (all admissible graphs with $2$ and $3$ loops):
$$
\vcenter{\hbox{\begin{picture}(0,0)%
\includegraphics{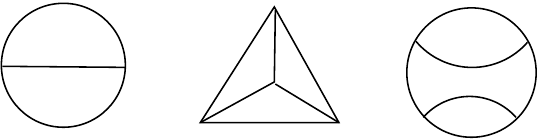}%
\end{picture}%
%
%
\setlength{\unitlength}{1973sp}%
\begingroup\makeatletter\ifx\SetFigFont\undefined%
\gdef\SetFigFont#1#2#3#4#5{%
  \reset@font\fontsize{#1}{#2pt}%
  \fontfamily{#3}\fontseries{#4}\fontshape{#5}%
  \selectfont}%
\fi\endgroup%
\begin{picture}(5164,1316)(926,-2947)
\end{picture}%
}}
$$
The weight of a graph $\Gamma$ is given by the integral over the Fulton-MacPherson-Axelrod-Singer configuration space of $V$ points on $M$, where $V$ is the number of vertices:
\begin{equation}\label{l28_conf_space_int}
\Phi(\Gamma)=\int_{C_V(M)}\langle\bigwedge_{\mr{edges\,}(v_1,v_2)}\pi^*_{v_1v_2}\eta \;\;,\;\; \sigma_\Gamma\circ \bigotimes_{\mr{vertices}}f\rangle 
\end{equation}
Here $\pi_{v_1v_2}:C_V(M)\ra C_2(M)$ is the map which forgets all points except the two; $f\in (\g^*)^{\otimes 3}$ is the structure tensor of the Lie algebra $\g$ and $\sigma_\Gamma$ is the permutation of half-edges corresponding to the graph $\Gamma$; $\lan,\ran$ is the canonical pairing between $\g^{\otimes 2E}$ and $(\g^*)^{\otimes 3V}$ (note that in a $3$-valent graph the number of vertices and the number of edges are related by $2E=3V$).

\textbf{Finiteness.} The crucial point is that the values of Feynman diagrams are given by integrals (\ref{l28_conf_space_int}) of smooth forms over compact manifolds (with corners), hence they are finite numbers and there are no divergencies in the perturbative quantum theory. This property is ultimately due to the tame short-distance behavior of the propagator -- the Hodge-theoretic homotopy inverse of the de Rham operator.

\textbf{Dependence on the metric $g$.} The propagator $\eta$ and thus the Feynman weights $\Phi(\Gamma)$ depend on the chosen metric $g$ on $M$; it turns out that, when summed over all admissible graphs, as in (\ref{l28_Z_pert_expansion}), the dependence on the metric 
$g$ \emph{almost} cancels out.

Let $g_t$, for $t\in [0,1]$, be a smooth family of Riemannian metrics on $M$ and let $\eta_t$ be the respective family of propagators. An explicit calculation of the variation of the propagator with respect to a variation of metric yields a coboundary
\begin{equation}\label{l28_d_t_eta}
\frac{\dd}{\dd t}\eta_t=d\xi_t
\end{equation}
with $d$ the total de Rham operator on $\Omega^2(C_2(M),\g\otimes \g)$ twisted by $A_0$ and with $\xi_t\in \Omega^1(C_2(M),\g\otimes \g)$ a certain $t$-dependent $1$-form. We can assemble $\eta_t$ and $\xi_t$ into a composite object 
$$\til\eta_t=\eta_t+dt\cdot\xi_t \quad \in \Omega^2([0,1]\times C_2(M),\g\otimes\g)$$
which, by virtue of (\ref{l28_d_t_eta}), satisfies $d_\mr{tot}\til\eta_t=0$ with $d_\mr{tot}=dt\frac{\dd}{\dd t}+d$ the total de Rham operator, including the de Rham operator on the  interval parameterizing the family. Next, we can calculate the variation of the higher-loop part of the perturbative partition function (\ref{l28_Z_pert_expansion}) as follows:
\begin{multline}
dt\wedge \frac{\dd}{\dd t}\;Z^{\geq 2\mr{-loop}}_t=\sum_\Gamma\frac{(i\hbar)^{-\chi(\Gamma)}}{|\mr{Aut}(\Gamma)|}\quad dt\wedge\frac{\dd}{\dd t}\;\Phi_t(\Gamma)\\
=
\sum_\Gamma\frac{(i\hbar)^{-\chi(\Gamma)}}{|\mr{Aut}(\Gamma)|}\quad \int_{C_V(M)}d\;\langle \bigwedge_{\mr{edges}}\pi^*_{v_1v_2}\til\eta_t\;,\;\sigma_\Gamma\circ\bigotimes_\mr{vertices}f \rangle \\
\underset{\mr{Stokes'}}{=}
\sum_\Gamma\quad \sum_{S\subset \{1,\ldots,V\},\;|S|\geq 2}\frac{(i\hbar)^{-\chi(\Gamma)}}{|\mr{Aut}(\Gamma)|}\quad \int_{\dd_S C_V(M)}\langle \bigwedge_{\mr{edges}}\pi^*_{v_1v_2}\til\eta_t\;,\;\sigma_\Gamma\circ\bigotimes_\mr{vertices}f \rangle
\end{multline}
Here the subscript $t$ in $Z_t,\Phi_t(\Gamma)$ reminds that we use the propagator $\eta_t$; $\Gamma$ runs over all (possibly, disconnected) trivalent graphs. On the r.h.s. we have the following terms:
\begin{enumerate}
\item \label{l28_collapses_i} Collapses of pairs of vertices, $|S|=2$. If the two collapsing vertices are not connected by an edge in $\Gamma$, the contribution vanishes by degree reasons: the integrand is non-singular as the two points collide, therefore it is a pullback of a smooth form on the base of the fiber bundle $\pi_S:C_{V}(M)\ra C_{V-1}(M)$ (cf. footnote \ref{l28_footnote_pi_S}) and hence is not a full-degree form on the total space.
On the other hand, contributions of collapses of pairs of vertices connected by an edge (``collapses of edges'')
cancel out when summed over trivalent graphs $\Gamma$ due to 
the ``IHX relation'' -- vanishing of the sum of contributions of any three graphs for which the stars of the collapsing edge have the following form (and the rest of the graph is the same): 
$$
\vcenter{\hbox{\begin{picture}(0,0)%
\includegraphics{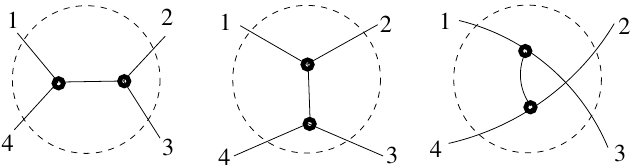}%
\end{picture}%
%
%
\setlength{\unitlength}{1973sp}%
\begingroup\makeatletter\ifx\SetFigFont\undefined%
\gdef\SetFigFont#1#2#3#4#5{%
  \reset@font\fontsize{#1}{#2pt}%
  \fontfamily{#3}\fontseries{#4}\fontshape{#5}%
  \selectfont}%
\fi\endgroup%
\begin{picture}(6114,1595)(2246,-3306)
\end{picture}%
}}
$$
This is in turn a consequence of 
Jacobi identity in $\g$.
\item \label{l28_collapses_ii}
Collapses of $\geq 3$ vertices, except the case when the collapsed subgraph of $\Gamma$ is an entire connected component of $\Gamma$. 
Their contribution vanishes by Kontsevich's vanishing lemmata \cite{Kontsevich_Feynman}. 
\item \label{l28_collapses_iii} Collapse of all vertices of a connected subgraph of $\Gamma$ at one point on $M$. The contribution of this boundary stratum can be computed and can be written in the form 
\begin{equation}
Z_t^{\geq 2\mr{-loop}}\cdot  dt\wedge \frac{\dd}{\dd t}\;\left(c_\Gamma(\g)\cdot S_\mr{grav}(g_t,\phi)\right)
\end{equation}
where $c_\Gamma(\g)$ is a universal numeric constant depending only on the combinatorics of the graph $\Gamma$ and on the Lie algebra $\g$.
\end{enumerate}
This calculation implies that there exists a power series $c(\hbar)\in \hbar \RR[[i\hbar]]$ with universal coefficients depending only on $\g$, such that the ``renormalized'' higher-loop part of the partition function
\begin{equation}
\til Z^{\geq 2\mr{-loop}}:=Z^{\geq 2\mr{-loop}}\cdot e^{i c(\hbar)S_\mr{grav}(g,\phi)}
\end{equation}
is independent of the metric $g$ (but depends on the framing $\phi$). Finally, putting this together with the discussion of Section \ref{sss: CS one-loop}, we obtain that the full renormalized answer of perturbative Chern-Simons theory (near an acyclic flat connection $A_0$) is:
\begin{multline}
\til Z_\mr{pert}(M,G,\phi;\hbar;A_0)=
\\=e^{\frac{i}{\hbar} S_{CS}(A_0)}\cdot \tau(M,A_0)^{\frac12}\cdot e^{\frac{\pi i}{4}\psi(A_0,g)}\cdot \exp\left(\sum_\Gamma\frac{(i\hbar)^{-\chi(\Gamma)}}{|\mr{Aut}(\Gamma)|}\Phi(\Gamma)\right)\cdot e^{i c'(\hbar) S_\mr{grav}(g,\phi)}
\end{multline}
with $c'(\hbar):=\frac{2\pi}{24}\dim G+c(\hbar)$. It is independent on the metric but is dependent on the framing of $M$. Dependence of the partition function of Chern-Simons theory on framing is a nontrivial effect of quantization.

\begin{remark} One can consider a finite-dimensional BV integral, based on a cyclic dgLa, modelling the Chern-Simons path integral, see \cite{CM} (the construction is a cyclic version of the abstract $BF$ theory associated to dgLa and its BV pushforward to a subcomplex, cf. Sections \ref{sss: abstr BF theory}, \ref{sss: eff action for abstr BF theory}). Independence of the result on the gauge-fixing is a consequence of BV-Stokes' theorem, but can also be proven directly on the level of Feynman diagrams. The resulting diagrammatic proof is an algebraic version of the argument above for the true Chern-Simons theory; however, we see only contractions of edges of Feynman graphs, as in (\ref{l28_collapses_i}) above, which again cancel out due to graph combinatorics and Jacobi identity. No analogs of situations (\ref{l28_collapses_ii},\ref{l28_collapses_iii}) show up in this finite dimensional/algebraic context. Therefore, one may say that our naive initial expectation that the Chern-Simons path integral should be independent of the Riemannian metric by BV-Stokes' theorem (which is only a theorem for finite-dimensional BV integrals) corresponds to the cancellation of contributions of principal boundary strata of configuration spaces to the variation of the path integral with the metric. Contributions of hidden boundary strata are an essentially infinite-dimensional field-theoretic phenomenon, which in the Chern-Simons case leads to the dependence of the result on the framing.
\end{remark}

\begin{remark} 
BV formalism is not necessary to construct the perturbative Chern-Simons path integral (\ref{l28_Z_pert_expansion}) -- one can obtain it purely using Faddeev-Popov method for gauge-fixing (unlike, e.g. $BF$ theory in dimension $\geq 4$ or the Poisson sigma model, where one is forced to use BV). However, BV (and, in particular, AKSZ) perspective is helpful. E.g. the simple form of Feynman weights (\ref{l28_conf_space_int}) corresponds to the AKSZ origin of the theory. If we would split the propagator $\eta$, viewed as a singular $2$-form on $M\times M$, into components $\eta^{2,0}+\eta^{1,1}+\eta^{0,2}$ according to the bi-degree of forms on the product, the Feynman weight (\ref{l28_conf_space_int}) of the graph $\Gamma$ splits into parts corresponding to decorating the half-edges of $\Gamma$ by degrees $\in \{0,1,2\}$ in such a way that the sum of the two degrees for every edge is $2$. In this way we get the perturbative expansion of Faddeev-Popov integral (\ref{l28_CS_FP}), with half-edges of degrees $0,1,2$ corresponding to the fields $c,a,d^*_{A_0}\bar{c}$, respectively. This was the perspective of the original papers \cite{AS1,AS2} and from that point of view the fact that Feynman diagrams coming from Faddeev-Popov construction assemble into the compact form (\ref{l28_conf_space_int}) looks like a miracle. BV formalism and, in particular, AKSZ construction of Chern-Simons theory explains this miracle.
\end{remark}

\begin{remark} 
One can construct the perturbative path integral for Chern-Simons theory around a non-acyclic connection using the technology of BV pushforwards, see \cite{CM}. The result, expressed as a sum over trivalent Feynman graphs with leaves decorated by cohomology classes of $M$, yields a volume element on the moduli space of flat connections on $M$; the total volume of the moduli space with respect to this volume element is the total Chern-Simons partition function. 
In this setting a Stokes' theorem argument on the compactified configuration space of points on $M$ is used to prove two statements: that a change of gauge-fixing induces a canonical transformation of the effective action on cohomology (which determines in turn the volume element on the moduli space), and that the effective action satisfies the quantum master equation. Generally, 
Stokes' theorem arguments on the configuration space, dealing directly with values of Feynman diagrams, are a refined field-theoretic version of the BV-Stokes' theorem, suited for perturbative AKSZ theories. In particular, the same technology works in the Poisson sigma model \cite{Kontsevich,CF}. Configuration space formalism 
can also be modified to allow source manifolds with boundary \cite{CMRpert} and pairs of manifold and a submanifold, corresponding to enrichment of an AKSZ theory with observables supported on submanifolds, cf. \cite{CR,AKSZobs}.
\end{remark}

\thebibliography{99}
\bibitem{AKSZ} M. Alexandrov,  M. Kontsevich, A. S. Schwarz, O. Zaboronsky, \textit{The geometry of the master equation and topological quantum field theory,}  Int.J.Mod.Phys. A 12, 07 (1997) 1405--1429; arXiv:hep-th/9502010.

\bibitem{AS1} S. Axelrod, I. M. Singer, \textit{Chern-Simons Perturbation Theory,}  Perspectives in mathematical
physics, 17--49, Conf. Proc. Lecture Notes Math. Phys., III, Int. Press, Cambridge, MA
(1994); arXiv:hep-th/9110056.

\bibitem{AS2} S. Axelrod, I. M. Singer, \textit{Chern-Simons Perturbation Theory II,}  J.Diff.Geom. 39 (1994) 173--213; arXiv:hep-th/9304087.

\bibitem{BV1} I. A. Batalin, G. A. Vilkovisky, \textit{Gauge algebra and quantization,} Phys. Lett. B 102, 1 (1981)
27--31.

\bibitem{BV2} I. A. Batalin, G. A. Vilkovisky, \textit{Quantization of gauge theories with linearly dependent gen-
erators,} Phys. Rev. D 28, 10 (1983) 2567--2582.

\bibitem{BRS} 
C. Becchi, A. Rouet, R. Stora, \textit{Renormalization of gauge theories,} Ann. Phys. 98, 2 (1976)
287--321.

\bibitem{CCFM} A. S. Cattaneo, P. Cotta-Ramusino, J. Fr\"ohlich, M. Martellini, \textit{Topological BF theories in 3 and 4 dimensions,} J. Math. Phys. 36.11 (1995) 6137--6160; 	arXiv:hep-th/9505027.

\bibitem{CF} A. S. Cattaneo, G. Felder, \textit{A path integral approach to the Kontsevich quantization formula,} Comm. Math. Phys. 212 (2000) 591--611; 	arXiv:math/9902090 [math.QA].

\bibitem{CM} A. S. Cattaneo, P. Mnev, \textit{Remarks on Chern-Simons invariants,}  Comm. Math. Phys. 293 (2010) 803--836;  arXiv:0811.2045 [math.QA].

\bibitem{CMR}  A. S. Cattaneo, P. Mnev, N. Reshetikhin, \textit{Classical BV theories on manifolds with boundary,} Comm. Math. Phys. 332.2 (2014) 535--603; arXiv:1201.0290 [math-ph].

\bibitem{CMR2} A. S. Cattaneo, P. Mnev, N. Reshetikhin, \textit{Classical and quantum Lagrangian field theories with boundary,} PoS (CORFU2011) 044; arXiv:1207.0239 [math-ph].

\bibitem{CMwave} A. S. Cattaneo, P. Mnev, \textit{Wave relations,} Comm. Math. Phys 332.3 (2014) 1083--1111; arXiv:1308.5592 [math-ph].

\bibitem{CMRpert}  A. S. Cattaneo, P. Mnev, N. Reshetikhin, \textit{Perturbative quantum gauge theories on manifolds with boundary,}  arXiv:1507.01221 [math-ph].

\bibitem{CMR_ICMP}  A. S. Cattaneo, P. Mnev, N. Reshetikhin, \textit{Perturbative BV theories with Segal-like gluing,}  arXiv:1602.00741 [math-ph].

\bibitem{CMRcell} A. S. Cattaneo, P. Mnev, N. Reshetikhin, 
\textit{A cellular topological quantum field theory,}  arXiv:1701.05874 [math.AT].

\bibitem{CR} A. S. Cattaneo, C. Rossi, \textit{Wilson surfaces and higher dimensional knot invariants,} Comm.
Math. Phys. 256.3 (2005) 513--537; 	arXiv:math-ph/0210037.

\bibitem{CattaneoSchaetz} A. S. Cattaneo, F. Sch\"atz, \textit{Introduction to supergeometry,} Rev. Math. Phys. 23.06 (2011) 669--690; 	arXiv:1011.3401 [math-ph].

\bibitem{Costello} K. Costello, \textit{Renormalization and effective field theory,} Vol. 170, AMS (2011).

\bibitem{CostelloGwilliam} K. Costello, O. Gwilliam, ``Factorization algebras in quantum field theory. Vol. 1,'' Cambridge University Press (2016).

\bibitem{Dupont} J. Dupont, \textit{Curvature and characteristic classes,} Springer-Verlag (1978).

\bibitem{Etingof} P. Etingof, \textit{Mathematical ideas and notions of quantum field theory,} http://www-math.mit.edu/~etingof/lect.ps (2002).

\bibitem{FaddeevPopov}  L. D. Faddeev,  V. N. Popov, ``Feynman diagrams for the Yang-Mills field,'' Phys. Lett. B 25.1 (1967) 29--30.

\bibitem{FelderKazhdan} G. Felder, D. Kazhdan, with an appendix by T. M. Schlank, \textit{The classical master equation,} Perspectives in representation theory (2014); arXiv:1212.1631 [math.AG].

\bibitem{Fiorenza} D. Fiorenza, \textit{An introduction to the Batalin-Vilkovisky formalism,} arXiv:math/0402057.

\bibitem{Getzler} E. Getzler, \textit{Lie theory for nilpotent $L_\infty$ algebras,} Ann. Math. (2009) 271--301; 	arXiv:math/0404003 [math.AT].

\bibitem{Getzler_spinning} E. Getzler, \textit{The Batalin-Vilkovisky formalism of the spinning particle,}  JHEP 2016.6 (2016) 1--17; arXiv:1511.02135 [math-ph].

\bibitem{Granaker} J. Gran{\aa}ker, \textit{Unimodular L-infinity algebras,} arXiv:0803.1763 [math.QA].

\bibitem{GuilleminSternberg} V. Guillemin, Sh. Sternberg, \textit{Geometric asymptotics,}  No. 14. AMS (1990).

\bibitem{Hoermander} L. H\"ormander, \textit{Linear partial differential operators,} Vol. 116. Springer (2013).

\bibitem{Ikeda} N. Ikeda, \textit{Two-dimensional gravity and nonlinear gauge theory,} Ann. Phys. 235.2 (1994) 435--464; 	arXiv:hep-th/9312059.

\bibitem{Khudaverdian} H. M. Khudaverdian, \textit{Semidensities on odd symplectic supermanifolds,} Comm. Math. Phys. 247.2 (2004) 353--390; 	arXiv:math/0012256 [math.DG].

\bibitem{Kontsevich_Feynman} M. Kontsevich, \textit{Feynman diagrams and low-dimensional topology,} First European Congress
of Mathematics, Paris 1992, Volume II, Progress in Mathematics 120 (Birkh\"auser, 1994), 97--121;  http://www.ihes.fr/~maxim/TEXTS/Feynman diagrams and low-dimensional topology.pdf.

\bibitem{Kontsevich} M. Kontsevich, \textit{Deformation quantization of Poisson manifolds,} Lett. Math. Phys. 66.3 (2003) 157--216; 	arXiv:q-alg/9709040.

\bibitem{KS} M. Kontsevich, Y. Soibelman, \textit{Homological mirror symmetry and torus fibrations,} in
Symplectic geometry and mirror symmetry, Seoul 2000, (World Sci. Publ., River Edge, NJ,
2001), 203--263; arXiv:math/0011041.

\bibitem{Losev} A. Losev, \textit{From Berezin integral to Batalin-Vilkovisky formalism: a mathematical physicist's point of view}, in ``Felix Berezin: the life and death of the mastermind of supermathematics'', World Scientific (2007); http://homepages.spa.umn.edu/~shifman/Berezin/Losev/Losev.pdf.

\bibitem{Manin} Yu. I. Manin, \textit{Gauge fields and complex geometry,} Moscow Izdatel Nauka (1984). 

\bibitem{Migdal} A. A. Migdal, \textit{Recursion equations in gauge field theories,} Soviet Journal of Experimental and Theoretical Physics 42 (1976) 413.

\bibitem{SimpBF} P. Mnev, \textit{Notes on simplicial $BF$ theory,}  Moscow Math. J. 9.2 (2009) 371--410; arXiv:hep-th/0610326.
\bibitem{DiscrBF} P. Mnev,
\textit{Discrete $BF$ theory,} Ph.D. dissertation; arXiv:0809.1160 [hep-th].

\bibitem{MR09} P. Mnev, N. Reshetikhin, \textit{Faddeev-Popov theorem for BRST integrals,} unpublished discussions, 
2009.

\bibitem{AKSZobs} P. Mnev, \textit{A construction of observables for AKSZ sigma models,} Lett. Math. Phys. 105.12 (2015) 1735--1783; 	arXiv:1212.5751 [math-ph].

\bibitem{M13} P. Mnev, \textit{Finite-dimensional BV integrals,} unpublished draft, 2013.

\bibitem{Peskin} M. Peskin, D. Schroeder, \textit{An introduction to quantum field theory,} (1995).  

\bibitem{Reshetikhin} N. Reshetikhin, \textit{Lectures on quantization of gauge systems,} Proceedings of the Summer School ``New paths towards quantum gravity'', Holbaek, Denmark; B. Booss-Bavnbek, G. Esposito and M. Lesch, eds. Springer, Berlin  (2010) 3--58; arXiv:1008.1411 [math-ph].

\bibitem{Retakh} V. S. Retakh, \textit{Lie-Massey brackets and $n$-homotopically multiplicative maps of differential graded Lie algebras,} J. Pure and Applied Algebra 89.1--2 (1993) 217--229.

\bibitem{Roytenberg}  D. Roytenberg, \textit{AKSZ-BV Formalism and Courant Algebroid-induced Topological Field Theories,} Lett. Math. Phys. 79 (2007) 143--159; 	arXiv:hep-th/0608150.

\bibitem{Schaetz} F. Sch\"atz, \textit{BFV-complex and higher homotopy structures,} Comm. Math. Phys. 286.2 (2009) 399--443; arXiv:math/0611912 [math.QA].

\bibitem{SchallerStrobl} P. Schaller, T. Strobl, \textit{Poisson structure induced (topological) field theories,} Mod. Phys. Lett. A 9.33 (1994) 3129--3136; 	arXiv:hep-th/9405110.

\bibitem{SchwarzBF} A. S. Schwarz, \textit{The partition function of a degenerate quadratic functional and Ray-Singer invariants,} Lett. Math. Phys. 2.3 (1978) 247--252.

\bibitem{Schwarz} A. S. Schwarz, \textit{Geometry of Batalin-Vilkovisky quantization,}  Comm. Math. Phys.  155 (1993) 249--260; arXiv:hep-th/9205088.

\bibitem{Severa} P. \v{S}evera, \textit{On the origin of the BV operator on odd symplectic supermanifolds,} Lett. Math. Phys. 78.1 (2006) 55--59; 	arXiv:math/0506331 [math.DG].

\bibitem{Stasheff} J. Stasheff, \textit{The (secret?) homological algebra of the Batalin-Vilkovisky approach,}  Contemporary  mathematics 219 (1998) 195--210; arXiv:hep-th/9712157.

\bibitem{Tyutin} I. V. Tyutin, ``Gauge invariance in field theory and statistical physics in operator formalism,'' Lebedev Physics Institute preprint 39 (1975); arXiv:0812.0580.

\bibitem{Vaintrob} A. Yu. Vaintrob,  ``Lie algebroids and homological vector fields,'' Russian Mathematical Surveys 52.2 (1997) 428--429.

\bibitem{Weinstein} A. Weinstein, \textit{Symplectic categories,} arXiv:0911.4133 [math.SG].

\bibitem{Whitney} H. Whitney, \textit{Geometric integration theory,} 1957.

\bibitem{Wilson} K. G. Wilson, J. Kogut, \textit{The renormalization group and the $\epsilon$ expansion,} Physics Reports 12.2 (1974) 75--199.

\bibitem{Witten_CS} E. Witten, \textit{Quantum field theory and the Jones polynomial,} Comm. Math. Phys. 121, 3 (1989) 351--399

\bibitem{Witten_2Dgauge} E. Witten, \textit{On quantum gauge theories in two dimensions,} Comm. Math. Phys. 141 (1991) 153--209.

\bibitem{Zelditch} S. Zelditch, \textit{Method of stationary phase,} \begin{verbatim}https://pcmi.ias.edu/files/zelditchStationary%20phase2.pdf\end{verbatim}.

\end{document}